
\newif\ifabstract
\abstracttrue
\newif\iffull
\ifabstract \fullfalse \else \fulltrue \fi

\documentclass[11pt,letterpaper]{article}
\usepackage{amsfonts}
\usepackage{amssymb}
\usepackage{amstext}
\usepackage{amsmath}
\usepackage{xspace}
\usepackage{ntheorem}
\usepackage{graphicx}

\usepackage{url}
\usepackage{graphics}
\usepackage{colordvi}
\usepackage[colorlinks,linkcolor=black,citecolor=black,filecolor=blue,urlcolor=blue]{hyperref}
\usepackage{hyperref}
\usepackage{subfigure}
\usepackage{xcolor}
\usepackage{cleveref}

\usepackage{thmtools}
\usepackage{thm-restate}
\usepackage{enumitem}

\textheight 9.3in \advance \topmargin by -1.0in \textwidth 6.7in
\advance \oddsidemargin by -0.8in
\newcommand{\myparskip}{3pt}
\parskip \myparskip

\newcommand{\val}{\operatorname{val}}

\newcommand\tO{\ensuremath{\tilde O}}

\newcommand{\td}{\tilde d}
\newcommand{\hd}{\operatorname{hd}}
\newcommand{\ld}{\operatorname{ld}}
\newcommand{\Vhd}{V^{\operatorname{hd}}}
\newcommand{\Vld}{V^{\operatorname{ld}}}

\newcommand{\Jld}{J^{\operatorname{ld}}}
\newcommand{\Nld}{N^{\operatorname{ld}}}
\newcommand{\Nhd}{N^{\operatorname{hd}}}


\newcommand{\algterm}{\ensuremath{\operatorname{AlgSelectTerminals}}\xspace}
\newcommand{\algpairs}{\ensuremath{\operatorname{AlgSelectPairs}}\xspace}

\newcommand{\alglocal}{\ensuremath{\operatorname{AlgLocalFlow}}\xspace}

\newcommand{\algtrip}{\ensuremath{\operatorname{AlgSelectTriple}}\xspace}
\newcommand{\algspec}{\ensuremath{\operatorname{AlgSpecial}}\xspace}

\newcommand{\algselectterm}{\ensuremath{\operatorname{AlgSelectTerminals}}\xspace}
\newcommand{\alg}{\ensuremath{\operatorname{Alg}}\xspace}

\interfootnotelinepenalty=10000






\newcommand{\ceil}[1]{\ensuremath{\left\lceil#1\right\rceil}}
\newcommand{\floor}[1]{\ensuremath{\left\lfloor#1\right\rfloor}}

\newcommand{\band}{\wedge}

\newcommand{\event}{{\cal{E}}}


\newcommand{\polylog}{\mathrm{polylog}}

\newcommand{\opt}{\mathsf{OPT}}

\newcommand{\set}[1]{\left\{ #1 \right\}}

\newcommand{\iset}{{\mathcal{I}}}
\newcommand{\pset}{{\mathcal{P}}}

\newcommand{\bset}{{\mathcal{B}}}
\newcommand{\aset}{{\mathcal{A}}}

\newcommand{\dset}{{\mathcal{D}}}


\newcommand{\be}{\begin{enumerate}}
	\newcommand{\ee}{\end{enumerate}}
\newcommand{\bd}{\begin{description}}
	\newcommand{\ed}{\end{description}}
\newcommand{\bi}{\begin{itemize}}
	\newcommand{\ei}{\end{itemize}}

\newtheorem{theorem}{Theorem}[section]

\newtheorem{lemma}[theorem]{Lemma}
\newtheorem{observation}[theorem]{Observation}
\newtheorem{corollary}[theorem]{Corollary}
\newtheorem{claim}[theorem]{Claim}

\newtheorem{definition}[theorem]{Definition}

\newenvironment{proof}{\par \smallskip{\bf Proof:}}{\hfill\stopproof}
\def\stopproof{\square}
\def\square{\vbox{\hrule height.2pt\hbox{\vrule width.2pt height5pt \kern5pt
			\vrule width.2pt} \hrule height.2pt}}

\newenvironment{proofof}[1]{\noindent{\bf Proof of #1.}}%
{\hfill\stopproof}


\newenvironment{prog}[1]{
	\begin{minipage}{5.8 in}
		\begin{center}
			{\sc #1}
		\end{center}
	}
	{
	\end{minipage}
}



\renewcommand{\phi}{\varphi}
\newcommand{\eps}{\epsilon}

\newcommand{\half}{\ensuremath{\frac{1}{2}}}

\newcommand{\poly}{\operatorname{poly}}
\newcommand{\dist}{\mbox{\sf dist}}

\newcommand{\reals}{{\mathbb R}}

\newcommand{\expect}[2][]{\text{\bf E}_{#1}\left [#2\right]}
\newcommand{\prob}[2][]{\text{\bf Pr}_{#1}\left [#2\right]}

\newenvironment{properties}[2][0]
{
\begin{enumerate} \setcounter{enumi}{#1}}{\end{enumerate}}

\newcommand{\edel}{E^{\operatorname{del}}}

\setlength{\parskip}{2mm} \setlength{\parindent}{0mm}




\newcommand{\vol}{\operatorname{Vol}}

\newcommand{\AFMC}{{\sf AFMC}\xspace}

\newcommand{\wmax}{W_{\mbox{\textup{\footnotesize{max}}}}}

\newcommand{\OUT}{\mathsf{OUT}}
\newcommand{\IN}{\mathsf{IN}}
\newcommand{\reg}{\operatorname{reg}}
\newcommand{\spec}{\operatorname{spec}}
\newcommand{\inn}{\operatorname{in}}
\newcommand{\out}{\operatorname{out}}

\newcommand{\oracle}{\operatorname{oracle}}




\makeatletter
\newcommand{\customlabel}[2]{%
	\protected@write \@auxout {}{\string \newlabel {#1}{{#2}{\thepage}{#2}{#1}{}} }%
	\hypertarget{#1}{}
}
\makeatother








\makeatletter
\def\ifempty#1{%
	\def\@tmp@a{#1}%
	\ifx\@tmp@a\@empty%
}

\newtheoremstyle{ams-theorem}%
{\item[\hskip\labelsep \theorem@headerfont ##1\ ##2\theorem@separator]}%
{\item[\hskip\labelsep {\theorem@headerfont ##1\ ##2}{\normalfont\ (##3)}{\theorem@headerfont
		\theorem@separator}]}
\newtheoremstyle{ams-restatedtheorem}
{\item[\hskip\labelsep \theorem@headerfont ##1\ ##2\theorem@separator]}%
{\item[\hskip\labelsep {\theorem@headerfont ##1\ ##2}{\normalfont\ (##3)}{\theorem@headerfont
		\theorem@separator}]}
\newtheoremstyle{nonumberams-restatedtheorem}%
{\item[\theorem@headerfont \hskip\labelsep ##1\theorem@separator]}%
{\item[\hskip\labelsep \theorem@headerfont ##3\theorem@separator]}%

%


\begin{document}
	
\begin{titlepage}
	
	\title{Faster Algorithms for Global Minimum Vertex-Cut in Directed Graphs \footnote{A preliminary version is to appear in SODA 2026.}}

	\author{Julia Chuzhoy\thanks{Toyota Technological Institute at Chicago. Email: {\tt cjulia@ttic.edu}. Supported in part by NSF grant CCF-2402283 and NSF HDR TRIPODS award 2216899.}\and Ron Mosenzon\thanks{Toyota Technological Institute at Chicago. Email: {\tt ron.mosenzon@ttic.edu}. Supported in part by NSF grant CCF-2402283.} \and Ohad Trabelsi\thanks{University of Haifa, Israel. Email: {\tt otrabelsi@cs.haifa.ac.il}. Work mostly done while at Toyota Technological Institute at
Chicago.}}
	\date{}
	\maketitle
	\pagenumbering{gobble}
	
		\begin{abstract}
	
We study the directed global minimum vertex-cut problem: 
given a directed vertex-weighted graph $G$, compute a vertex-cut $(L,S,R)$ in $G$ of minimum value, which is defined to be the total weight of all vertices in $S$. The problem, together with its edge-based variant, is one of the most basic in graph theory and algorithms, and has been studied extensively.
The fastest currently known algorithm for directed global minimum vertex-cut (Henzinger, Rao and Gabow, FOCS 1996 and
J. Algorithms 2000) has running time $\tO(mn)$, where $m$ and $n$ denote the number of edges and vertices in the input graph, respectively. A long line of work over the past decades led to faster algorithms for other main versions of the problem, including the  undirected edge-based setting (Karger, STOC 1996 and J. ACM 2000),  directed edge-based setting (Cen et al., FOCS 2021), and  undirected vertex-based setting (Chuzhoy and Trabelsi, STOC 2025). However, for the vertex-based version in directed graphs, the 29 year-old $\tO(mn)$-time algorithm of  Henzinger, Rao and Gabow remains the state of the art to this day, in all edge-density regimes.

In this paper we break the $\Theta(mn)$ running time barrier for the first time, by 
providing a randomized algorithm  for directed global minimum vertex-cut, with running time $O\left(mn^{0.976}\cdot \poly\log W\right )$ where $W$ is the ratio of largest to smallest vertex weight. 
Our algorithm can also be viewed as improving and significantly simplifying the recent randomized $O(mn^{0.99+o(1)}\cdot\poly\log W)$-time algorithm for the undirected version of the problem (Chuzhoy and Trabelsi, STOC 2025).

Additionally, we provide a randomized  $O\left ( \min\set{m^{1+o(1)}\cdot k,n^{2+o(1)}}\right )$-time algorithm for the unweighted version of directed global minimum vertex-cut,
 where $k$ is the value of the optimal solution.
 The best previous algorithm for the problem achieved running time
 $\Tilde{O}\left (\min\set{k^2 \cdot m, mn^{11/12+o(1)}, n^{2+o(1)}}\right )$  (Forster et al., SODA 2020, Li et al., STOC 2021). Our result  almost matches the best current
 running time of $\tilde O\left(\min\set{k\cdot m,n^2}\right )$ for the directed unweighted edge-based version of the problem, due to (Gabow, STOC 1991 and J. Comput. Syst. Scie., 95) and (Chekuri, Quanrud, ICALP 2021).
\end{abstract}

	\newpage
\end{titlepage}

\tableofcontents

\newpage
\pagenumbering{arabic}

\section{Introduction}

Given a directed graph $G=(V,E)$ with integral weights $0\leq w(v)\leq W$ on its vertices $v\in V$, a \emph{vertex-cut} is a partition $(L,S,R)$ of $V$ into three disjoint subsets with $L,R\neq \emptyset$, such that no edge connects a vertex of $L$ to a vertex of $R$ (but edges connecting vertices of $R$ to vertices of $L$ are allowed). The \emph{value} of the cut is $w(S)$ -- the total weight of all vertices in $S$. In the global minimum vertex-cut problem, the goal is to compute a vertex-cut  of minimum value in the input directed vertex-weighted graph $G$. We also consider the \emph{unweighted} version of the problem, where the weights of all vertices are unit, and the value of the vertex-cut $(L,S,R)$ is $|S|$. In this paper we explore the design of fast algorithms for directed global minimum vertex-cut in both the unweighted and the general weighted settings.
A closely related and extensively studied variant of the problem is based on edge-cuts. In this variant, the weights are on the graph's edges instead of vertices, and an edge-cut is a bipartition $(L,R)$ of $V(G)$ into non-empty subsets. The \emph{value} of the cut is the total weight of all edges connecting vertices of $L$ to vertices of $R$ in the weighted version of the problem, or just the number $|E_G(L,R)|$ of such edges in the unweighted version. In the global minimum edge-cut problem, the goal is to compute an edge-cut of minimum value in the given input graph. Throughout, we use $m$ and $n$ to denote the number of edges and vertices, respectively, in the input graph $G$, and $W$ to denote the maximum vertex weight.

Edge- and vertex-cuts in graphs, together with the closely related notion of flows, are among the most fundamental graph-theoretic objects, that arise routinely in algorithms and structural results involving graphs. The unweighted version of global min-cut is closely related to another central graph-theoretic notion: vertex-connectivity (or edge-connectivity in the case of edge-cuts).  
It is then not surprising that the design of fast algorithms for global minimum cut has been studied extensively over many decades. 
 In fact one of the most well-known results in the area of graph optimization is the edge contraction based algorithm  of Karger and Stein~\cite{KS93}
for the edge version of the problem in undirected graphs.

Clearly, the value of the global minimum cut in a graph is equal to the smallest value, over all pairs $(s,t)$ of its vertices, of the minimum $s$-$t$ cut, in both the edge- and the vertex-versions of the problem. A central paradigm in designing efficient algorithms for global minimum cut is to identify a small number of vertex pairs $(s,t)$, so that computing a minimum $s$-$t$ cut for each such pair is sufficient in order to find the global min-cut. In order to optimize the running time, algorithms for computing a minimum cut separating a subset $S$ of vertices of $G$ from some vertex $t$, that we refer to as the minimum $S$-$t$ cut, are also sometimes utilized. The classical algorithm of Hao and Orlin~\cite{HO94} for global minimum edge-cut in directed graphs provides an ingenious implementation of this paradigm. Their algorithm uses a variant of the Push-Relabel scheme of~\cite{GT88} in order to compute a minimum $S$-$t$ cut for some initial set $S$ of vertices and a vertex $t\not\in S$. The inner state of the Push-Relabel algorithm is then exploited in order to identify a new set $S'$ of vertices and a new vertex $t'\not\in S'$, for which the minimum $S'$-$t'$ cut can be computed efficiently by building on the current inner state of the Push-Relabel algorithm. After $O(n)$ such iterations, the algorithm is guaranteed to terminate with a global minimum edge-cut, in time $\tO(mn)$.
Henzinger et al.~\cite{HRG00} extended this method to the vertex-weighted version of the problem in directed graphs, obtaining a randomized $\tilde O(mn)$-time algorithm for directed global minimum vertex-cut. This remains the fastest current algorithm for the problem in every edge-density regime, although its running time was very recently almost matched, to within an $n^{o(1)}$ factor, by a deterministic algorithm~\cite{JNS25}.

For the edge version of the problem, significantly faster algorithms have been known for some time. In undirected graphs,  Karger~\cite{Kar00}, improving upon the algorithm of~\cite{KS93} mentioned above, provided a randomized algorithm with close to the best possible running time of $\tO(m)$; a recent sequence of works~\cite{HRW17,KawT19,Li21,HLRW24} has led to a deterministic algorithm with a similar running time. In directed graphs, the best current algorithm for global minimum edge-cut, due to  Cen et al.~\cite{CLN21} solves the problem in the time of $\tO\left (\min\set{\frac{n}{m^{1/3}},\sqrt{n}}\right )$ applications of Maximum $s$-$t$ Flow, that, combined with the recent almost-linear time algorithm for the latter problem by \cite{CKLP22,BCP23}, yields a randomized algorithm with $O\left (\min\set{\frac{n}{m^{1/3}},\sqrt{n}}\cdot m^{1+o(1)}\right )$ running time for directed edge-weighted global minimum cut.
The edge version of the problem was also studied extensively in directed unweighted graphs \cite{ET75,Sch79,MS89,Gabow95,CQ21},
with the best current algorithms achieving running time $\tilde O(mk)$ where $k$ is the value of the optimal solution~\cite{Gabow95} and  $\tilde O(n^2)$ \cite{CQ21}.

Despite receiving a significant amount of attention starting from as early as the late 1960's
\cite{Kle69,Pod73,ET75,LLW88}, the progress on the vertex version of the problem has been slower, and the state of the art results are significantly weaker. One exception is the {\bf unweighted undirected}  version, where a recent series of results~\cite{NSY19,FNY20} culminated with an $O\left(m^{1+o(1)}\right)$-time randomized algorithm~\cite{LNP21}; this bound relies on the recent almost-linear time algorithm for directed Maximum $s$-$t$ Flow~\cite{CKLP22,BCP23}. A very recent work of \cite{JNS25} provides a deterministic algorithm for this version of the problem, with running time $O(m^{1+o(1)}\cdot k)$, where $k$ is the size of the global min-cut.
For {\bf weighted undirected} graphs, the $\Theta(mn)$ barrier of~\cite{HRG00} on the running time was broken only recently by \cite{CT24}, who obtained a randomized algorithm with running time $O(\min\{mn^{0.99+o(1)},m^{3/2+o(1)}\}\cdot\poly\log W)$.
For {\bf unweighted directed} graphs, the best current algorithm, due to~\cite{FNY20,LNP21}, runs in time $\Tilde{O}\left (\min\set{k^2 \cdot m, mn^{11/12+o(1)},n^{2+o(1)}}\right )$, when combined with the almost-linear time algorithm for Maximum $s$-$t$ Flow of~\cite{CKLP22,BCP23}. For {\bf weighted directed} graphs, the 29 year-old randomized algorithm of~\cite{HRG00} with $\tilde O(mn)$ running time remains the fastest known to this day in every edge-density regime; a very recent result of ~\cite{JNS25} provided a deterministic algorithm with an almost matching running time $O(mn^{1+o(1)})$. 

The directed vertex-cut version of the problem appears to be especially difficult, because some of the powerful techniques that were  pivotal to obtaining fast algorithms in other settings
 no longer seem to apply in this setting. This includes, for example, the \emph{tree packing} method of Tutte and Nash-Williams~\cite{Tut61, Nash61}, that is used by most algorithms for the edge-cut version, and the \emph{Isolating Cuts} lemma of \cite{LP20,AKT21_stoc}, that is used in algorithms for undirected vertex-weighted graphs. Additionally, a standard reduction from directed vertex-capacitated to directed edge-capacitated graphs via \emph{split graphs} (see the definition in Section \ref{subsec: split graph}), that works in the context of Maximum $s$-$t$ Flow, does not appear to work in the context of global minimum vertex-cut (see the discussion in the introduction of \cite{CT24}), and so it is unclear whether the recent faster algorithms for directed global minimum edge-cut of \cite{CLN21} can be leveraged for the vertex-cut version.

In this paper we break the $\Theta(mn)$ running time barrier for directed global minimum vertex-cut, by designing a randomized algorithm with running time $O\left(m n^{0.976}\cdot \poly\log W\right)$. Our algorithm can also be viewed as improving and simplifying the recent $O\left (\min\{mn^{0.99+o(1)},m^{3/2+o(1)}\}\cdot\poly\log W\right )$-time algorithm of \cite{CT24} for undirected global minimum vertex-cut.
Additionally, we provide a new randomized $O\left ( \min\set{m^{1+o(1)}\cdot k,n^{2+o(1)}}\right )$-time algorithm for the unweighted global minimum vertex-cut in directed graphs, where $k$ is the value of the optimal solution. This running time almost matches the $\tilde O\left(\min\set{k\cdot m,n^2}\right )$-time algorithm for the unweighted edge version of the problem in directed graphs of \cite{Gabow95,CQ21}. The best previous algorithm for the unweighted directed global minimum vertex-cut achieved running time $\Tilde{O}(\min\{k^2 \cdot m, mn^{11/12+o(1)}, n^{2+o(1)}\})$~\cite{FNY20,LNP21}.

In order to put our results for directed vertex-weighted graphs into context, we provide a very high-level overview of the recent $O(mn^{0.99+o(1)}\cdot\poly\log W)$-time algorithm of \cite{CT24} for the undirected version of the problem. At a very high level, \cite{CT24} provide two simple algorithms for two special cases of the problem respectively. The first special case is where there exists an optimal vertex-cut $(L,S,R)$ that is roughly balanced (that is, $|L|,|R|\geq n^{\eps}$ for some small constant $\eps$).
The algorithm for this special case is very simple and works as is in directed graphs; we employ it as well in this special case.
The second special case is where every global minimum vertex-cut is unbalanced but the graph is not too dense (e.g. $m\leq n^{1.98}$). Their algorithm for this special case is also simple, but it crucially relies on the powerful Isolating Cuts lemma of  \cite{LP20,AKT21_stoc}, that does not have known analogues in directed graphs. One of the main difficulties in designing a fast algorithm for directed global minimum vertex-cut is overcoming this obstacle, and we do so in this paper.
Finally, \cite{CT24} provide a very technically involved algorithm for the remaining special case, where the input graph is very dense, and every optimal vertex-cut is unbalanced. We provide an algorithm for this special case in directed graphs, that we believe is significantly simpler, and that achieves a faster running time. 
Overall, our technical contribution to the weighted version of the problem is twofold. First, we provide a randomized algorithm with running time $O\left(m n^{11/12+o(1)}\cdot d^{1/12}\cdot \poly\log W\right )$, where $d$ is the average vertex degree. This algorithm allows us to handle graphs that are not too dense, a task that was relatively easy for the undirected setting, but is much more challenging in directed graphs. This algorithm is completely new. Second, we provide an $O\left(n^{2.677}\cdot \poly\log W\right )$-time algorithm for directed global minimum vertex-cut, that can be used in the case where the input graph is dense. This algorithm builds on and generalizes the algorithm of \cite{CT24}  to the directed setting, while also improving its running time and significantly simplifying it.

\paragraph{Independent work.}
Independently from our work, \cite{FJMY} showed a simple reduction from directed to undirected vertex connectivity in dense graphs. In particular, their results lead to a much simpler $n^{2+o(1)}$-time algorithm for the directed unweighted Global Minimum Vertex-Cut, than the algorithm of \cite{LNP21}. They also obtain a subcubic-time algorithm, with running time $O(n^{2.99+o(1)})$ for weighted directed Global Minimum Vertex-Cut by combining this reduction with the algorithm of \cite{CT24} for undirected Global Minimum Vertex-Cut.

\paragraph{Other related work.}
A thorough and extensive overview of past results for global min-cut and the related techniques can be found in \cite{CT24}. We only mention here several additional recent developments on directed vertex-weighted global minimum cut, that include a $(1+\varepsilon)$-approximation algorithm with running time $\tO(n^2/\varepsilon^2)$ by~\cite{CLN21}, and an exact algorithm  with pseudo-polynomial running time $\tO(\opt \cdot n \cdot \sum_{u\in V} w(u))$ by ~\cite{CQ21}.

\subsection{Our Results and Techniques}

Note that it is possible that a directed graph $G$ contains no vertex-cuts, for example, if the out-degree of every vertex in $G$ is $n-1$. Given a graph $G$, we can check this in time $O(|E(G)|)$. Therefore, in the statements of our results, we assume that the input graph $G$ has a valid vertex-cut.

Our main result is a new algorithm for the global minimum vertex-cut problem in directed graphs, that is summarized in the following theorem.

\begin{theorem}\label{thm: main: weighted}
	There is a randomized algorithm, that, given a simple directed $n$-vertex and $m$-edge graph $G$ with integral weights $0\leq w(v)\leq W$ on its vertices $v\in V(G)$ that contains some vertex-cut, computes  a vertex-cut $(L',S',R')$ in $G$, such that, with probability at least $\half$,  $(L',S',R')$ is a global minimum vertex-cut. The running time of the algorithm is: 
	$$O\left(\min\set{m n^{11/12+o(1)}\cdot d^{1/12}, n^{2.677}}\cdot (\log W)^{O(1)}\right ) \leq O\left(m n^{0.976}\cdot (\log W)^{O(1)}\right),$$
	
	where $d$ is the average vertex degree in $G$.
\end{theorem}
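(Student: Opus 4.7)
The plan is to partition the problem space into two regimes based on the structure of some (unknown) optimal vertex-cut $(L^*, S^*, R^*)$ in $G$: a \emph{balanced} regime, where $\min\{|L^*|, |R^*|\} \geq n^{\eps}$ for a suitable small constant $\eps > 0$, and an \emph{unbalanced} regime, where one of $|L^*|$, $|R^*|$ is smaller than $n^{\eps}$. Since we do not know in advance which regime holds, we run algorithms for each regime and return the vertex-cut of smallest value found by any of them. The final probability-$\tfrac{1}{2}$ guarantee is obtained by standard amplification of each subroutine.

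For the balanced regime, I would follow the direct sampling approach of \cite{CT24}: sample roughly $\tilde O(n^{1-2\eps})$ pairs $(s,t)$ of vertices independently and uniformly at random. With high probability at least one sampled pair satisfies $s \in L^*$ and $t \in R^*$, and for this pair the minimum $s$-$t$ vertex-cut has value $\opt$. Each minimum $s$-$t$ vertex-cut is computed via a single Maximum $s$-$t$ Flow call on the standard split graph, using the almost-linear-time max-flow algorithm of \cite{CKLP22, BCP23}, at cost $m^{1+o(1)}\cdot\polylog W$. This step transfers to directed graphs with essentially no modification, and its total cost stays well below the target bound provided that $\eps$ is a sufficiently small positive constant.

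The unbalanced regime is where the main technical difficulty lies, precisely because the Isolating Cuts lemma of \cite{LP20, AKT21_stoc}, which is central to the undirected algorithm of \cite{CT24}, has no known analogue for directed graphs. I would handle this regime by running two algorithms in parallel and keeping the better solution. The first is a new algorithm with running time $O\!\left(m n^{11/12+o(1)} \cdot d^{1/12} \cdot \polylog W\right)$, which is most efficient for sparse and moderately dense graphs; the idea is to sample a seed vertex inside the small side of the cut and then perform carefully parametrized \emph{local} max-flow computations in the split graph, whose total cost is amortized against $|L^*\cup S^*|\le 2n^{\eps}$ via a pseudo-cut style argument. The second is an $O\!\left(n^{2.677}\cdot\polylog W\right)$-time algorithm, most efficient for dense graphs, which generalizes the dense-case algorithm of \cite{CT24} to the directed setting, while simplifying it and improving its exponent.

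Given these three sub-algorithms, the theorem follows by taking the minimum of their running times. The two bounds for the unbalanced case cross at roughly $m \approx n^{1.7}$, and using $d = 2m/n$ one checks that $\min\{m n^{11/12+o(1)} d^{1/12},\, n^{2.677}\}\le m n^{0.976}\cdot\polylog W$ in every density regime; the value of $\eps$ is then fixed small enough that the balanced-case sampling cost is absorbed into this bound. The hardest step will be the design of the first unbalanced algorithm: it requires adapting local-flow and pseudo-cut machinery to the directed split graph while preserving amortization against the small side of the cut. Overcoming this obstacle is the main technical novelty required to push past the $\Theta(mn)$ barrier in the directed vertex-weighted setting, and will be the principal focus of the formal proof.
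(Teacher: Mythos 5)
Your top-level decomposition coincides exactly with the paper's: an easy "balanced" case handled by random sampling of pairs plus almost-linear max-flow calls, and an "unbalanced" case handled by two algorithms run in parallel — one with running time $O\left(mn^{11/12+o(1)}\cdot d^{1/12}\cdot\polylog W\right)$ for non-dense graphs and one with running time $O\left(n^{2.677}\cdot \polylog W\right)$ for dense graphs — returning the smallest-value cut found. So at the level of structure this is the same route the paper takes, including the way the two unbalanced bounds are combined and compared against $mn^{0.976}$.

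The gap is that the two unbalanced-case algorithms, which you correctly identify as the hardest step, are only asserted, and they constitute essentially the entire content of the theorem; your sketch of the sparse one also contains an inaccuracy that matters. If "unbalanced" only means $|L^*|<n^{\eps}$, then $|S^*|$ is \emph{not} bounded by $O(n^{\eps})$ (it is only bounded by $\vol(L^*)$, which can be as large as $n^{1+\eps}$), so amortizing local flow computations "against $|L^*\cup S^*|\le 2n^{\eps}$" does not go through. The paper avoids this by folding the case $\vol_G(L^*)\geq d\cdot n^{\eps}$ into the easy sampling case as well (\Cref{thm: alg for special}), after which one only gets $|S^*|<d\cdot n^{\eps}$, and even then the sparse directed algorithm requires genuinely new machinery to replace the unavailable Isolating Cuts lemma: a guessed critical threshold $\tau$ for the weights of low-degree vertices of $S$, the $\beta$- and $\sigma$-sets with the notion of suspicious vertices, a case split on the size of $\hat S=S\cap \Vld\cap V^{\geq\tau'}$, and a local flow augmentation procedure producing a "promising" vertex set in the small-$\hat S$ case. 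Similarly, the dense-case algorithm is not a routine generalization of \cite{CT24}: it is built around well-behaving cuts refined over $O(\log(nW))$ scales, a subgraph-oracle-based sparsification, a new approximate furthest min-cut subroutine, and a local-flow step, none of which is indicated in your proposal. As written, the proposal is a correct restatement of the paper's high-level plan, but it does not yet contain a proof of either of the two theorems on which the statement rests.
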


The best previous randomized algorithm for the problem achieved running time $\tilde O(mn)$~\cite{HRG00}, and a deterministic  algorithm with an almost matching $O(mn^{1+o(1)})$ running time was recently provided by~\cite{JNS25}. We also note that the best current algorithm for the undirected version of the problem has running time  $O(\min\{mn^{0.99+o(1)},m^{3/2+o(1)}\}\cdot\poly\log W)$ \cite{CT24}; our result can be viewed as improving and simplifying this algorithm.

Our second result is a new algorithm for the unweighted directed global minimum vertex-cut problem, that is summarized in the following theorem.

\begin{theorem}\label{thm: main: unweighted}
	There is a randomized algorithm, that, given a simple directed unweighted $n$-vertex and $m$-edge graph $G$ that contains some vertex-cut, computes a vertex-cut $(L',S',R')$ in $G$, such that, with probability at least $\half$,  $(L',S',R')$ is a global minimum vertex-cut.
	 The expected running time of the algorithm is: $$O\left ( \min\set{m^{1+o(1)}\cdot k,n^{2+o(1)}}\right ),$$ where $k$ is the value of the global minimum vertex-cut in $G$.
\end{theorem}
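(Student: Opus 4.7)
The $n^{2+o(1)}$ bound is already achievable via the prior algorithm of~\cite{LNP21} adapted to the directed setting, so it suffices to design an algorithm with running time $O(m^{1+o(1)} \cdot k)$ and run the two in parallel, returning the cut of smaller value. For the latter algorithm, we guess the value $k$ of the optimum by exponential search in powers of two (at an $O(\log n)$ overhead). Fix a guess $k$ and an optimal vertex-cut $(L, S, R)$ with $|S| = k$ and, without loss of generality, $|R| \geq |L|$; then $|R| \geq (n-k)/2 \geq n/3$ whenever $k \leq n/3$ (the complementary regime $k > n/3$ is covered by the second algorithm, since then $m^{1+o(1)} \cdot k \geq n \cdot m^{1+o(1)} \geq n^{2+o(1)}$).

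\textbf{Balanced vs.\ unbalanced.} We split on whether $|L| \geq n/(c k \log n)$ for an appropriate constant $c$ (the \emph{balanced} case) or $|L| < n/(c k \log n)$ (the \emph{unbalanced} case). In the balanced case, a uniformly random pair $(s,t)$ of vertices lies in $L \times R$ with probability at least $(|L|/n)(|R|/n) = \Omega(1/(k \log n))$, so $O(k \log^2 n)$ random pairs include such a pair with high probability. For each sampled pair we compute the minimum $s$-$t$ vertex-cut by combining the split-graph reduction (see Section~\ref{subsec: split graph}) with the almost-linear-time max-$s$-$t$-flow algorithm of~\cite{CKLP22,BCP23}, in $m^{1+o(1)}$ time per pair, for total running time $m^{1+o(1)} \cdot k$. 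In the unbalanced case, we employ a directed analog of the LocalVC primitive used in the undirected setting by~\cite{FNY20,LNP21}: we sample $O(k \log n)$ random vertices $t$, at least one of which lies in the large side $R$ with high probability, and from each such $t$ we perform a bounded-volume local exploration \emph{on the reverse graph}, seeking a small set $L$ whose out-neighborhood in $G$ has at most $k$ vertices outside $L$. Tuning the volume budget so that each invocation runs in $m^{1+o(1)}$ time yields $m^{1+o(1)} \cdot k$ overall.

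\textbf{Main obstacle.} The chief difficulty is handling the unbalanced case in the directed setting: the isolating-cut technique of~\cite{LP20,AKT21_stoc}, which underpins the analogous undirected result, does not extend to directed graphs. We overcome this by designing a directed LocalVC-style routine that exploits the one-sidedness of a directed cut $(L, S, R)$ (no edges from $L$ to $R$, while edges from $R$ to $L$ are allowed). This asymmetry forces the exploration to be carried out on the reverse graph and to use volume bounds governed by the in-degree of $L$ in the reverse graph, equivalently the out-degree of $L$ in $G$. The key technical claim is that a single invocation takes $m^{1+o(1)}$ time and succeeds with probability $\Omega(1/k)$ whenever an unbalanced optimal cut exists, so that $O(k \log n)$ invocations suffice to locate it with high probability.
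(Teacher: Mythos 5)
There is a genuine gap, and it sits exactly where the theorem's difficulty lies. Your balanced case (random pairs plus $m^{1+o(1)}$ max-flow via the split graph) is fine but is the easy part. For the unbalanced case you invoke a ``directed LocalVC'' primitive whose existence is precisely what would need to be proved, and the parameters you assert for it do not add up. First, seeding the exploration from a random vertex $t$ in the \emph{large} side $R$ and exploring the reverse graph is not a local procedure in any useful sense: the set of vertices that can reach $t$ lives (mostly) in $R$, whose volume can be $\Theta(m)$, so a volume budget small enough to give $m^{1+o(1)}$ per call cannot in general reach, let alone certify, the set $L$. Any consistent local-cut accounting has to seed from the \emph{small} side: the probability of hitting $L$ is $|L|/n$ (or $\vol^+(L)/m$ when sampling proportionally to out-degree), not $\Omega(1/k)$, so one needs roughly $m/\vol^+(L)$ invocations with per-invocation budget tied to $\vol^+(L)$ and $k$ — and making that product come out to $m^{1+o(1)}k$ for all regimes (in particular when $\vol^+(L)\ll k$) is exactly the obstacle that kept the previous best directed bound at $\tilde O(k^2 m)$ \cite{FNY20}. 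If a routine with your stated guarantees were available off the shelf, the $m^{1+o(1)}k$ bound would have predated this paper. So the ``key technical claim'' at the end of your proposal is not a claim one can defer; it is the theorem.

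The paper takes a different route that avoids local exploration from seeds in $L$ altogether (see \Cref{thm: main: inner unweighted}). It samples terminals and anti-terminals with probability proportional to out-degree (plus one random vertex $t^*$, which lands in $R$ with constant probability since $|S|\le \hat k\le n/1000$), builds a hierarchical partition of the terminals into $O(\log n)$ levels of batches, and for each active batch $B$ maintains a vertex set $A_B$ of out-volume $O(\hat k\, m\log n/N_i)$ such that $L\subseteq A_B$ whenever $B$ contains a terminal of $L$; the set is refined from level to level via one min vertex-cut computation in the small derived graph $G^{|A_B}$, and the anti-terminals are what guarantee — via a union bound over ``bad tripartitions'' (\Cref{claim: avoiding bad cuts}) — that every cut arising in this process has small out-volume. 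At the last level each batch is a single terminal and one max-flow per singleton batch, on graphs of total size $\tilde O(m\hat k)$, finishes the job. This gradual-commitment mechanism, not a local-cut primitive, is what delivers the $m^{1+o(1)}\hat k$ bound; your proposal would need either this machinery or a genuinely new local routine with a full analysis of its running time and success probability.
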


The best previous algorithms for the unweighted version of the problem achieve running times $\tilde O(k^2m)$ \cite{FNY20}  and 
$O(\min\{mn^{11/12+o(1)}, n^{2+o(1)}\})$ \cite{LNP21} (the latter runtime bound relies on the recent almost linear time algorithm for maximum $s$-$t$ flow by \cite{CKLP22,BCP23}). Both these algorithms are randomized. 
We note that the running time of our algorithm almost matches the $\tilde O\left(\min\set{k\cdot m,n^2}\right )$-time bound of the algorithm for the unweighted edge version of the problem in directed graphs of \cite{Gabow95,CQ21}.

We now provide a high-level overview of our techniques, starting with the algorithm for the unweighted version of the problem, which is simpler.

\subsection*{Unweighted Directed Graphs: Proof of \Cref{thm: main: unweighted}}
We assume that we are given as input a directed unweighted $m$-edge and $n$-vertex graph $G$, that contains a vertex-cut, and we denote by $k$ the value of the global minimum vertex-cut in $G$. Our main technical subroutine is an algorithm that, given a ``guess'' $1\leq \hat k\leq n/10$ on the value of the global minimum vertex-cut in $G$, produces some vertex-cut $(\hat L,\hat S,\hat R)$ in $G$, in time $O\left (m^{1+o(1)}\cdot \hat k\right )$. The subroutine further guarantees that, if $\frac{\hat k}{2}\leq k\leq \hat k$, then, with a sufficiently high probability, $(\hat L,\hat S,\hat R)$ is a global minimum vertex-cut in $G$. We execute this subroutine in parallel for $O(\log n)$ ``guesses'' $\hat k$ that are integral powers of $2$ between $1$ and $n/10$, and combine it with the $O\left(n^{2+o(1)}\right )$-time randomized algorithm of \cite{LNP21} in a natural way, in order to complete the proof of \Cref{thm: main: unweighted}. From now on we focus on the description of this subroutine.

For the sake of the analysis, we fix an arbitrary global minimum vertex-cut $(L,S,R)$ in $G$ that we refer to as the \emph{distinguished cut}. We can assume without loss of generality that $|L|\leq |R|$ holds, since otherwise we can instead focus on solving the problem on the reversed graph $\overline{G}$, that is obtained from $G$ by reversing the direction of all its edges. We also assume without loss of generality that we are given a guess $\lambda$ on $\vol^+(L)$, that is an itegeral power of $2$, such that $\frac{\lambda}{2}\leq \vol^+(L)\leq \lambda$ holds; here, $\vol^+(L)$ is  the total out-degree of all vertices in $L$.  In fact, our algorithm will try all such values of $\lambda$ and then return the smallest-value vertex-cut computed using any such value $\lambda$.

A natural approach for solving the problem, that was also employed by~\cite{LNP21} in the context of undirected graphs, is as follows. We compute a set $T\subseteq V(G)$ of vertices called \emph{terminals}, by adding every vertex $v\in V(G)$ to $T$ independently with probability roughly $\frac{\deg^+(v)}{\lambda}$, where $\deg^+(v)$ is the out-degree of $v$. We also select an additional vertex $t\in V(G)$ uniformly at random. We say that a good event $\event$ happens if all of the following hold: (i) $t\in R$; (ii) $T\cap L\neq \emptyset$; and (iii) $|T|\leq \tilde O(m/\lambda)$. It is not hard to show that $\prob{\event}\geq \Omega\left(\frac{1}{\log n}\right )$; we assume for simplicity in the remainder of the exposition that Event $\event$ has happened. For every vertex $v\in T$, we can then compute the value $c_v$ of the minimum $v$-$t$ vertex-cut in $G$. Let $v^*\in T$ be the vertex for which $c_{v^*}$ is minimized. Then we  compute the minimum $v^*$-$t$ vertex-cut in $G$ using the almost linear-time algorithm for maximum $s$-$t$ flow of  \cite{CKLP22,BCP23}, and output this vertex-cut as the algorithm's outcome. Since we have assumed that $t\in R$ and that $T\cap L\neq \emptyset$, it is easy to see that the resulting cut must be a global minimum vertex-cut. The main challenge in employing this approach is that $|T|$ may be quite large, especially if $\vol^+(L)$ is small. Therefore, in order to achieve a running time that is below $\Theta(mn)$, we cannot afford to spend much time in order to compute the value of the minimum $v$-$t$ vertex-cut in $G$ for every vertex $v\in T$; we may need to perform each such computation in time that is significantly lower than $m$ -- the size of the input graph $G$. In order to overcome this difficulty, \cite{LNP21} design an ingenious and very efficient algorithm that, for every vertex $v\in T$, computes a \emph{sparsified} subgraph $G_v\subseteq G$, such that, if $v\in L$ and $t\in R$, then the value of the minimum $v$-$t$ cut in $G_v$ is preserved, and otherwise it does not decrease. Computing the minimum $v$-$t$ cut in each resulting graph $G_v$ is then sufficient in order to complete the algorithm. Unfortunately, their sparsification procedure appears tailored to the undirected setting, and it is not clear how to generalize it to directed graphs. To summarize, the main difficulty with employing the above approach in directed graphs is that, since $|T|$ may be very large, it is not clear how to compute the value of the minimum $v$-$t$ vertex-cut for all vertices $v\in T$ efficiently.

Our main idea for overcoming this difficulty can be thought of as \emph{committing to the terminals gradually}. Specifically, we use a parameter $z=O(\log n)$, and we compute a hierarchical partition $(\bset_0,\bset_1,\ldots,\bset_z)$ of the terminals. We ensure that $\bset_0=\set{T}$, and, for all $1\leq i\le z$, $\bset_i$ is       a partition of the set $T$ of terminals into roughly $2^i$ subsets of roughly equal cardinality; we refer to the subsets  $B\in \bset_i$ as \emph{level-$i$ batches}. We also ensure that every level-$z$ batch $B\in \bset_z$ contains exactly one terminal, and that, for all $1\leq i\leq z$, partition $\bset_i$ is a refinement of the partition $\bset_{i-1}$ of $T$, so that every level-$i$ batch is contained in some level-$(i-1)$ batch. At a high level, our algorithm consists of $z$ phases. For all $1\leq i\leq z$, in the $i$th phase, we compute, for every level-$i$ batch $
B\in \bset_i$, a ''sparsified'' graph $G_B$, that has the following properties. First, $V(G_B)\subseteq V(G)$, $t\in V(G_B)$,  and $|E(G_B)|\leq \tilde O\left(\frac{m\cdot \hat k}{2^i}\right )$ must hold. Additionally, we ensure that, for every vertex-cut $(L',S',R')$ in $G_B$ with $t\in R'$, $(L',S',V(G)\setminus (L'\cup S'))$ is a valid vertex-cut in $G$. Finally, we ensure that, if $B\cap L\neq \emptyset$, then $L\cup S\subseteq V(G_B)$, and, moreover, $(L,S,V(G_B)\setminus(L\cup S))$ is a valid vertex-cut in $G_B$. 

Recall that every level-$z$ batch $B\in \bset_z$ contains a single terminal. Since we have assumed that $T\cap L\neq \emptyset$, there must be a level-$z$ batch $B^*\in \bset_z$, whose unique terminal $v^*\in B^*$ lies in $L$. We are then guaranteed that there is a vertex-cut $(L',S',R')$ in $G_{B^*}$ with $v^*\in L'$ and $t\in R'$, whose value is $k$; this is precisely the cut $(L,S,V(G_B)\setminus(L\cup S))$. For every level-$z$ batch $B\in \bset_z$, we compute a minimum vertex-cut in $G_B$ separating the unique terminal of $B$ from $t$, and we let $(L'',S'',R'')$ be the smallest-value resulting cut, whose value must be $k$ from the above discussion. We then output the vertex-cut $(L'',S'',V(G)\setminus(L''\cup S''))$ in $G$ as the algorithm's outcome. Note that our algorithm ensures that, for every level $0\leq i\leq z$, the total size of all graphs in $\set{G_B\mid B\in \bset_i}$ is bounded by $\tilde O(m\cdot \hat k)$. In particular, we can perform a single $s$-$t$ vertex-cut computation in every graph $G_B$ for $B\in \bset_z$, in total time 
$ O\left (m^{1+o(1)}\cdot \hat k\right )$.

We now provide additional technical details of our algorithm. Recall that there is a single level-$0$ batch of terminals, $B=T$. We let the corresponding graph $G_B$ be $G$. It is immediate to verify that this graph has all required properties. Consider now some level $1\le i<z$, and some level-$i$ batch $B\in \bset_i$. Assume that we have computed the corresponding graph $G_B$ with the required properties. For every level-$(i+1)$ batch $B'\subseteq B$, we construct the corresponding graph $G_{B'}$ as follows. First, we carefully compute a vertex-cut $(L',S',R')$ in $G_B$ with $t\in R'$, such that $|S'|\leq O\left (\frac{\hat k\cdot |T|}{2^i}\right )$, and, moreover, if $L\cap B'\neq \emptyset$, then $L\subseteq L'$ holds. This cut is computed by performing a single $s$-$t$ vertex-cut computation in a graph $\hat G_B$ that is obtained by a slight modification of the graph $G_B$. We then let the new graph $G_{B'}$ contain the vertex set $L'\cup N^+_G(L')\cup\set{t}$. For every vertex $v\in L'$, we include all edges of $\delta^+_G(v)$ in $G_{B'}$, and for every vertex $v'\in N^+_G(L')$, we include the edge $(v',t)$. Recall that, if $B'\cap L\ne \emptyset$, then our algorithm guarantees that $L\subseteq L'$ must hold. In this case, it is immediate to verify that $L\cup S\subseteq V(G_B)$, and, moreover, that $(L,S,V(G_B)\setminus(L\cup S))$ is a valid vertex-cut in $G_{B'}$. It is also not hard to see that, for every vertex-cut $(L'',S'',R'')$ in $G_{B'}$ with $t\in R''$,  $(L'',S'',V(G)\setminus(L''\cup S''))$ is a valid vertex-cut in $G$. The main challenge however is to ensure that $|E(G_{B'})|$ is sufficiently small; since $|E(G_{B'})|=\Theta(\vol^+_G(L'))$, equivalently, it is enough to ensure that $\vol^+_G(L')$ is sufficiently small.

In order to overcome this difficulty, at the beginning of the algorithm, in addition to selecting the set $T$ of terminals, we also select a set $T'\subseteq V(G)$ of vertices that we refer to as \emph{anti-terminals}, using a similar procedure. We say that a good event $\event'$ happens if Event $\event$ happened, and, additionally, $T'\cap L=\emptyset$. It is not hard to show that $\prob{\event'}\geq \Omega(1/\poly\log n)$. Whenever we process a level-$i$ batch $B$ and its level-$(i+1)$ child-batch $B'\subseteq B$ as described above, we will ensure that the vertex-cut $(L',S',R')$ in $G_B$ is selected so that $L'\cap T'=\emptyset$. 
Recall that the cut $(L',S',R')$ defines a valid vertex-cut $(L',S',R'')$ in graph $G$, where $R''=V(G)\setminus (L'\cup S')$.
The key to the analysis of our algorithm is to show that, for every vertex-cut $(L',S',R'')$ in $G$ that may arise via this process, $\vol^+_G(L')$ must be sufficiently small. This property follows from the selection of the set $T'$ of the anti-terminals, and since, for each vertex-cut $(L',S',R'')$ in $G$ that may arise over the course of the algorithm, $L'\cap T'=\emptyset$ must hold. 

\subsection*{Weighted Graphs: Proof of \Cref{thm: main: weighted}}

We now provide a high-level description of our algorithm for weighted graphs, that is used in the proof of \Cref{thm: main: weighted}.
Recall that we are given as input a directed $n$-vertex and $m$-edge graph $G=(V,E)$ with integral weights $0\leq w(v)\leq W$ on its vertices $v\in V$; in fact, using a simple standard transformation described in \Cref{subsec: positive weights}, we can assume that $w(v)\geq 1$ for all $v\in V$, which we do in the remainder of this exposition. For a subset $X\subseteq V$ of vertices, we denote by $w(X)=\sum_{v\in X}w(v)$ its \emph{weight}, and by $\vol(X)=\sum_{v\in X}\deg_G(v)$ its \emph{volume}. We also denote by $\vol^+(X)$ the total out-degree of all vertices in $X$. As before, we can assume without loss of generality that there is a global minimum vertex-cut $(L,S,R)$ in $G$ with $w(L)\leq w(R)$; if this is not the case, we can equivalently solve the problem on the reversed graph $\overline{G}$, obtained from $G$ by reversing the directions of all is edges.
As before, for the sake of the analysis, we fix  a single global minimum vertex-cut $(L,S,R)$, that we refer to as the \emph{distinguished cut} in the remainder of this exposition, to be any minimum vertex-cut with $w(L)\leq w(R)$. 
Throughout, we denote by $d$ the average vertex degree in $G$ and by $\opt$ the value of the global minimum vertex-cut in $G$.
 We design two separate algorithms: one, with running time $O\left(m n^{11/12+o(1)}\cdot d^{1/12}\cdot \poly\log W\right )$, that will be employed when the input graph $G$ is not too dense, and one with running time $O\left( n^{2.677}\cdot \poly\log W\right )$ that will be used in dense graphs. The proof of \Cref{thm: main: weighted} follows immediately by combining these two algorithms. We now describe each of these two algorithms in turn.

\subsection*{Weighted Non-Dense Graphs}

We now describe our  $O\left(m n^{11/12+o(1)}\cdot d^{1/12}\cdot \poly\log W\right )$-time algorithm, which is most useful when the input graph $G$ is not too dense.
We note that \cite{CT24} provide a rather simple algorithm for the undirected setting that beats the $\Theta(mn)$ running time barrier if the input graph is not too dense. Their algorithm relies on the Isolating Cuts lemma of \cite{LP20,AKT21_stoc}. Unfortunately, this technique is not known to extend to directed graph, and so designing an algorithm with running time below $\Theta(mn)$ for directed non-dense graphs appears to be a challenge in its own right.

We use a parameter $0\leq \eps\leq 1$, that will be set to $\frac{1}{12}-\frac{\log d}{12\log n}$. As in much of previous work, we design a separate algorithm to deal with several simple special cases, that we discuss next.

\paragraph{Simple special cases.}
Consider the distinguished min-cut $(L,S,R)$. We say that it is \emph{balanced} if  either $|L|\geq n^{\eps}$ or $\vol(L)\geq d\cdot n^{\eps}$ hold, and we say that it is \emph{unbalanced} otherwise. For the case where cut $(L,S,R)$ is balanced, it is easy to obtain a randomized algorithm with  $O(mn^{1-\eps+o(1)}\cdot \log W)$ running time, using techniques similar to those employed in previous work, such as, e.g. \cite{CT24}. For example, assume that  $|L|\geq n^{\eps}$, and consider selecting a vertex $s$ from $V(G)$ uniformly at random, and then selecting another vertex $t$ from $V(G)\setminus \left (\set{s}\cup N^+(s)\right )$ with probability proportional to its weight. Then it is not hard to show that, with probability at least $\Omega\left(\frac{1}{n^{1-\eps}}\right )$, $s\in L$ and $t\in R$ must hold. We can then compute a minimum $s$-$t$ vertex-cut in $G$ using the almost linear-time algorithm of \cite{CKLP22,BCP23}. By repeating this process $\Theta(n^{1-\eps}\cdot \log n)$ times and returning the smallest-value cut among the resulting vertex-cuts, we are guaranteed to obtain a global minimum vertex-cut with probability at least $0.99$; the running time of this algorithm is $O(mn^{1-\eps+o(1)}\cdot \log W)$. In the case where 
$\vol(L)\geq d\cdot n^{\eps}$, we can employ a similar algorithm, except that we select the vertex $s$ slightly differently: we first select an edge $e$ from $E(G)$ uniformly at random, and then let $s$ be its random endpoint. The remainder of the algorithm remains unchanged, and is guaranteed to return a global minimum vertex-cut with probability at least $0.99$, in time $O(mn^{1-\eps+o(1)}\cdot \log W)$.  To summarize, the special case where the distinguished min-cut is balanced is easy to solve using known techniques, so from now on we focus on the case where $(L,S,R)$ is unbalanced, so $|L|\leq n^{\eps}$ and  $\vol(L)\leq d\cdot n^{\eps}$ hold. Since every vertex of $S$ must have a neighbor in $L$ (as the vertex-cut $(L,S,R)$ could be improved otherwise), we get that $|S|\leq d\cdot n^{\eps}$ must hold as well in this case. It will be convenient for us to assume that the algorithm is given an approximate estimate $\lambda$ (to within factor $2$) on $|L|$. In our actual algorithm we will try all $O(\log n)$ such possible estimates that are integral powers of $2$. Clearly, it is enough to ensure that the vertex-cut that the algorithm returns is optimal with a sufficiently high probability, if the estimate on $|L|$ that it receives is approximately correct. Therefore, we assume from now on that the algorithm is given a value $\lambda$ with $\lambda/2\leq |L|\leq \lambda$, and that the distinguished cut is unbalanced, so $|L|\leq n^{\eps}$, $\vol(L)\leq n^{\eps}\cdot d$ and $|S|\leq n^{\eps}\cdot d$ hold. 
We say that a vertex $v\in V$ is a \emph{low-degree} vertex if $\deg(v)\leq d\cdot n^{\eps}$, and we say that it is a \emph{high-degree} vertex otherwise. Note that, from the above discussion, $L$ may  not contain high-degree vertices. 
 We now describe the main tools that our algorithm uses, that include a critical threshold, $\beta$-sets, $\sigma$-sets, and suspicious vertices.

\paragraph{Critical threshold.}
One of the central notions that our algorithm uses is that of a \emph{critical threshold}. Consider some value $\tau\geq 1$ that is an integral power of $2$. We say that $\tau$ is the \emph{critical threshold} for the distinguished cut $(L,S,R)$, if every {\bf low-degree} vertex $v\in S$ has $w(v)\leq \tau$, and, moreover, $\tau$ is the smallest integral power of $2$ with this property. It will be convenient for us to assume that our algorithm is given as input the value $\tau$ of the critical threshold for $(L,S,R)$. In fact our algorithm will try all $O(\log(W))$ such values, and it is sufficient for us to ensure that the vertex-cut that the algorithm returns is optimal (with a sufficiently high probability), if the guessed value $\tau$ is indeed equal to the critical threshold. Therefore, we assume from now on that the critical threshold $\tau$ is known to the algorithm.

\paragraph{$\beta$-sets.}
The second main tool that we use is vertex sets $\beta(v)$ for all $v\in V(G)$, that are constructed as follows. Consider a vertex $v\in V(G)$, and consider performing a DFS search in $G$ starting from $v$, while only exploring low-degree vertices whose weight is greater than $\tau$. Once the DFS search discovers $|L|+1$ vertices, we terminate it (unless it terminated earlier), and we let $\beta(v)$ be the set of all vertices that the DFS has discovered, including $v$. A central observation is that, if $v\in L$, then $\beta(v)\subseteq L$ must hold. Indeed, otherwise, $\beta(v)$ must contain a vertex of $S$. But since all vertices in $\beta(v)\setminus \set{v}$ are low-degree vertices of weight greater than $\tau$, from the definition of the critical threshold this is impossible. Therefore, if $v\in L$, then $\beta(v)\subseteq L$ and $N^+(\beta(v))\subseteq L\cup S$ must hold. We will use this fact later.

\paragraph{$\sigma$-sets and suspicious vertices.} For every vertex $u\in V(G)$, we then define the vertex set $\sigma(u)=\set{v\in V(G)\mid u\in \beta(v)}$. Note that, from the definition, $\sum_{v\in V(G)}|\beta(v)|\leq 2n|L|$, so on average, $|\sigma(u)|\leq 2|L|$ must hold. We say that a vertex $u$ is \emph{suspicious}, if $|\sigma(u)|>2|L|\cdot n^{\eps}$, and we denote by $U$ the set of all suspicious vertices in $G$; it is easy to verify that $|U|\leq n^{1-\eps}$. Intuitively, the algorithm that we present below for weighted non-dense graphs only works efficiently if no vertex of $L$ is suspicious. However, the special case where $U\cap L\neq \emptyset$ is easy to deal with, via an algorithm that is very similar to the one that we used for the case where the cut $(L,S,R)$ is balanced: we consider all suspicious vertices one by one. For each such vertex 
  $s\in U$, we select a vertex $t$ from $V(G)$ using a simple random process that ensures that, if $s\in L$, then, with a constant probability, $t\in R$ must hold. We then compute a minimum $s$-$t$ vertex-cut in $G$ using the almost linear-time algorithm of \cite{CKLP22,BCP23}. Finally, we return the smallest-value vertex-cut among all resulting cuts. Like in the algorithm for the case where the cut $(L,S,R)$ is balanced, we are guaranteed to obtain a global minimum vertex-cut with a sufficiently  high probability, in time $O(mn^{1-\eps+o(1)}\cdot \log W)$. Therefore, we assume from now on that no vertex of $L$ is suspicious. In particular, if $v\in L$, then no vertex of $\beta(v)$ may be suspicious, since, as observed already, $\beta(v)\subseteq L$ must hold.

We are now ready to describe our algorithm for the weighted non-dense setting. 
We select a set $\Gamma$ of $\Theta(n/|L|)$ pairs of vertices of $G$ using a simple randomized procedure that ensures that, with a sufficiently high probability, there is some pair $(x^*,y^*)\in \Gamma$, that we refer to as the \emph{distinguished pair}, with $x^*\in L$ and $y^*\in R$. We can also ensure that, for every pair $(x,y)\in \Gamma$, $x$ is a low-degree vertex and $(x,y)\not\in E(G)$.
In the remainder of the algorithm, our goal is to compute, for every pair $(x,y)\in \Gamma$, a value $c_{x,y}$ that is at least as high as the value of the minimum $x$-$y$ vertex-cut in $G$; additionally, if $(x,y)\in \Gamma$ is the distinguished pair, then we need to ensure that $c_{x,y}$ is equal to the value  of the minimum $x$-$y$ vertex-cut in $G$. Once we compute the values $c_{x,y}$ for all pairs $(x,y)\in \Gamma$, we will select a pair $(x',y')\in \Gamma$ for which the value $c_{x',y'}$ is the smallest, and compute the minimum $x'$-$y'$ vertex-cut in $G$, that is guaranteed to be a global minimum vertex-cut with a sufficiently high probability.

 Consider now any such pair $(x,y)\in \Gamma$. In order to compute the value $c_{x,y}$ with the desired properties, our main strategy is to compute a relatively small subset $A_{x,y}\subseteq V(G)$ of vertices, such that all vertices in $A_{x,y}$ are low-degree, $x\in A_{x,y}$ and $y\not\in A_{x,y}\cup N^+(A_{x,y})$. Additionally, we will ensure that, if $(x,y)$ is the distinguished pair, then $L\subseteq A_{x,y}$ holds. Assume first that we can indeed compute the vertex set $A_{x,y}$ with these properties, and let $A'_{x,y}\subseteq V(G)$ contain all vertices $v\in V(G)\setminus A_{x,y}$, such that there is an edge connecting some vertex of $A_{x,y}$ to $v$ in $G$. Note that, if $(x,y)$ is the distinguished pair, then $L\cup S\subseteq A_{x,y}\cup A'_{x,y}$. 
 Consider now the graph $\hat G_{x,y}$, whose vertex set is $A_{x,y}\cup A'_{x,y}\cup \set{\tilde t}$, and the weights of the vertices in $A_{x,y}\cup A'_{x,y}$ remain the same as in $G$. For every vertex $v\in A_{x,y}$, we include in $\hat G_{x,y}$ all edges leaving $v$ in $G$. For every vertex $u\in A'_{x,y}$, we include the edge $(u,\tilde t)$ in $\hat G_{x,y}$.
Since $|A_{x,y}|$ is relatively small, and all vertices in $A_{x,y}$ are low-degree vertices, $|E(\hat G)|$ is relatively small as well. We can then afford to compute a minimum $x$-$\tilde t$ vertex-cut $(\hat L,\hat S,\hat R)$ in $\hat G_{x,y}$. It is not hard to verify that the tripartition $(L',S',R')$ of vertices of $G$, where $L'=\hat L$, $S'=\hat S$ and $R'=V(G)\setminus (L'\cup S')$, defines a valid $x$-$y$ vertex-cut in $\hat G_{x,y}$. Moreover,  if $(x,y)$ is the distinguished pair and $L\subseteq A_{x,y}$ holds, then $(L',S',R')$ must be a global minimum vertex-cut. We then return the value $c_{x,y}=w(\hat S)$.
The main challenge with this approach is to compute the desired set $A_{x,y}$ of vertices efficiently. We now describe our algorithm for doing so, for a fixed pair $(x,y)\in \Gamma$ of vertices. Throughout, we use a parameter $\tau'=\frac{\tau}{64|L|^2}$, and we denote by $\hat S\subseteq S$ the set containing all low-degree vertices $v\in S$ with $w(v)\geq \tau'$ (that is not known to the algorithm).
We design two different algorithms for computing the set $A_{x,y}$ of vertices, that separately deal with the cases where $|\hat S|$ is large or small.

\paragraph{Case 1: $|\hat S|\geq 2^{12}|L|^3$.}
For every vertex $v\in V(G)$, we define a set $J(v)\subseteq V(G)$, that contains all low-degree vertices of $N^+_G(\beta(v))$ whose weight is at least $\tau'$. Recall that we have established that, if $v\in L$, then $N^+_G(\beta(v))\subseteq L\cup S$. It is then not hard to see that $J(v)\subseteq L\cup \hat S$, and, moreover, $J(v)$ must contain the vast majority of the vertices of $\hat S$. We then let $A_{x,y}$ contain all low-degree vertices $v\in V(G)$, for which $|J(v)\triangle J(x)|\leq 256|L|^3$ holds. We show that, if $x\in L$, then $L\subseteq A_{x,y}$ must hold. We provide an efficient randomized algorithm for constructing the set $A_{x,y}$ of vertices, that essentially samples $O(\log n)$ vertices from $J(x)$ uniformly at random, and, for each such sampled vertex $v$, adds to $A_{x,y}$ all low-degree vertices in the sets $\set{\sigma(u)\mid u\in N^-_G(v)\setminus U}$. In order to ensure that the algorithm is efficient and that the resulting set $A_{x,y}$ of vertices is small, we crucially rely on the assumption that no vertex of $L$ is suspicious. 

This provides an efficient algorithm for constructing the set $A_{x,y}$ of vertices with the desired properties in the case where $|\hat S|\geq 2^{12}\cdot |L|^3$. The main challenge in designing an algorithm for the case where $|\hat S|$ is small is that, if we use the same strategy as above, we can no longer guarantee that $|A_{x,y}|$ is small. Specifically, if $|\hat S|$ is sufficiently large (like in Case 1), then the vast majority of the vertices of $J(x)$ must lie in $\hat S$, and, from the definition of the set $A_{x,y}$ of vertices, for every vertex $v\in A_{x,y}$, some vertex in $J(v)$ must lie in $\hat S$ (and in fact a large fraction of the vertices of $\hat S$ must lie in $J(v)$). This allows us to bound the cardinality of $A_{x,y}$ by the total in-degree of all vertices in $\hat S$ times $O(n^{\eps}\cdot |L|)$ (the bound on  $|\sigma(u)|$ for non-suspicious vertices $u$). Assume now that $|\hat S|$ is small, for example, that $|\hat S|\ll |L|$. In this case, the vast majority of the vertices of $J(x)$ may lie in $L$. In order to ensure that $L\subseteq A_{x,y}$, we need to allow $A_{x,y}$ to contain all vertices $v\in V(G)$ with $|J(v)\triangle J(x)|\leq 2|L|$, and in particular $A_{x,y}$ may contain vertices $v$ with $J(v)\cap J(x)=\emptyset$. But then we can no longer guarantee that, for every vertex $v\in A_{x,y}$, $J(v)\cap \hat S\neq\emptyset$, and so we no longer have a way to bound $|A_{x,y}|$ appropriately. In view of these difficulties, we employ a completely different strategy that exploits the fact that $|\hat S|$ is small directly.

\paragraph{Case 2: $|\hat S|< 2^{12}|L|^3$.}
Our algorithm for this case is more involved. We use a parameter $\hat \tau=\frac{\tau}{8|L|}$, and slightly modify the definition of the sets $\beta(v)$ of vertices: for every vertex $v\in V(G)$, in order to construct the set $\beta(v)$, we perform the DFS search in graph $G$ starting from $v$ as before, except that now we only explore low-degree vertices of weight greater than $\hat \tau$ (instead of $\tau$). 
For every vertex $u\in V(G)$, we define the vertex set $\sigma(u)=\set{v\in V(G)\mid u\in \beta(v)}$ exactly as before (except that we use the new definition of the vertex sets $\beta(v)$), and we define the set $U=\set{u\in V(G)\mid |\sigma(u)|\geq 2n^{\eps}\cdot |L|}$ of suspicious vertices exactly as before. The special case where some vertex in $L$ is suspicious is also dealt with exactly as before, so we assume from now on that $U\cap L\neq \emptyset$.

We introduce a new notion of a \emph{promising vertex set}, that is defined as follows. A set $Z\subseteq V(G)$ is a promising vertex set, if every vertex in $Z$ is a low-degree vertex, and, additionally, every low-degree vertex $v\in S$ of weight $w(v)\geq \hat \tau$ lies in $Z$. 

Consider now some vertex $v\in L$. With the new modified definition of the vertex set $\beta(v)$, we are no longer guaranteed that $\beta(v)\subseteq L$. However, we show that $\beta(v)$ has another crucial property that our algorithm can exploit: if $Z$ is any  promising vertex set, then $\beta(v)$ must contain a low-degree non-suspicious vertex $u'$ that is connected with an edge to some vertex of $Z$.

The key subroutine of our algorithm constructs a relatively small subset $Z$ of vertices of $G$, such that, if $(x,y)$ is the distinguished pair, then, with a sufficiently high probability, $Z$ is a promising vertex set. The desired set $A_{x,y}$ of vertices is then obtained by including, for every vertex $u\in Z$, for every low-degree non-suspicious in-neighbor $u'\in N^-(u)$, all vertices of $\sigma(u')$ in $A_{x,y}$. From the above discussion, if $Z$ is a promising vertex set, then $L\subseteq A_{x,y}$ must hold.

Lastly, we need to provide an algorithm for our key subroutine, that computes a relatively small set $Z\subseteq V(G)$ of vertices, such that, if $(x,y)$ is the distinguished pair, then, with a sufficiently high probability, $Z$ is a promising vertex set. In order to construct such a vertex set $Z$, we employ the local flow augmentation technique, that was first introduced in~\cite{CHILP17} and then further refined in~\cite{FNY20,CQ21,CT24}.

In order to utilize the local flow augmentation technique, we use the ``split graph'' $G'$ of the input graph $G$, that can be thought of as the edge-capacitated version of $G$: the set $V(G')$ of its vertices contains two copies of every vertex $v\in V(G)$, an \emph{out-copy}, denoted by $v^{\out}$ and an \emph{in-copy}, denoted by $v^{\inn}$. For every vertex $v\in V(G)$, graph $G'$ contains an edge $e_v=(v^{\inn},v^{\out})$ of capacity $c(e_v)=w(v)$, that we refer to as the \emph{special edge representing $v$}. Additionally, for every edge $e=(u,v)\in E(G)$, graph $G'$ contains an edge $(u^{\out},v^{\inn})$ of capacity $\wmax=\Theta(n \cdot W)$, that we refer to as the \emph{regular edge representing $e$}. Note that the value of the minimum $x$-$y$ vertex-cut in $G$ is equal to the value of the minimum $x^{\out}$-$y^{\inn}$ edge-cut in $G'$, which, in turn, from the Max-Flow / Min-Cut theorem, is equal to the value of the maximum $x^{\out}$-$y^{\inn}$ flow in $G'$.

We use a graph $G'_{x,y}$, that is initially identical to $G'$, and we denote $x^{\out}$ by $s$ and $y^{\out}$ by $t$. Recall that, if $v\in V(G)$ is a high-degree vertex, then $v$ may not lie in $L$. Therefore, for every high-degree vertex $v\in V(G)$, we can add a ''shortcut'' edge $(v^{\out},t)$ of capacity $\infty$ to $G'_{x,y}$, without increasing the value of the maximum $s$-$t$ flow in $G'_{x,y}$. We then let $G^h\subseteq G'_{x,y}$ be the graph that is obtained from $G'_{x,y}$ by deleting from it the copies of all vertices $v\in V(G)$ with $w(v)<\tau'$ (recall that $\tau'=\frac{\tau}{64|L|^2}$). Let $\opt_{x,y}$ denote the value of the maximum $s$-$t$ flow in graph $G^h$. 

Our algorithm starts by computing an initial $s$-$t$ flow $f$ in $G^h$, whose value is close to $\opt_{x,y}$: $\val(f)\geq \opt_{x,y}-2|\hat S|\cdot \tau$, such that,  for every vertex $v\in \hat S$, $f(v^{\inn},v^{\out})=0$ holds, via a simple procedure. Then we gradually augment the flow $f$ via the local flow augmentation paradigm, over the course of $O(|L|^2\cdot |\hat S|)$ iterations, where in every iteration, we send $\tau'$ flow units over a single augmenting flow-path in the residual flow network $H$ of $G^h$. The main difference from the standard local flow augmentation algorithm  is that we only employ augmenting paths of length at most $16|L|$, and we terminate the algorithm once the total number of vertices in $\set{v^{\out}\mid v\in V(G)}$ reachable from $s$ via paths of length at most $16|L|$ in the residual flow network $H$ becomes sufficiently small. We then add to $Z$ every low-degree vertex $v\in V(G)$, such that either (i) $v^{\out}$ is reachable from $s$ via a path of length at most $16|L|$ in the residual flow network $H$ obtained at the end of the local flow augmentation algorithm; or (ii) edge $(v^{\inn},v^{\out})$ lied on at least one of the augmenting paths employed by the local flow augmentation algorithm. 

We prove that, if $(x,y)$ is the distinguished pair, then, with  a sufficiently high probability, the resulting set $Z$ of vertices must be promising. The intuitive reason for this is as follows. Assume that $(x,y)$ is the distinguished pair, and let $v\in S$ be a low-degree vertex with $w(v)\geq \hat \tau$. Then $v\in \hat S$ must hold, so, in the initial flow $f$, $f(v^{\inn},v^{\out})=0$ must hold. If the local flow augmentation algorithm ever sent flow on the edge $(v^{\inn},v^{\out})$, then $v$ is added to $Z$, since then the edge $(v^{\inn},v^{\out})$ must have lied on one of the augmenting paths. Otherwise, at the end of the local flow augmentation algorithm, $f(v^{\inn},v^{\out})=0$ holds, and, from the properties of Maximum Flow, since $v\in S$, we must be able to send at least $w(v)\geq \hat \tau$ flow units from $s$ to $v^{\inn}$, in the residual flow network $\tilde H$ of $G'_{x,y}$ with respect to the current flow $f$, such that this new flow only utilizes vertices of $G'_{x,y}$ that are out-copies of the vertices of $L$ or in-copies of the vertices of $L \cup S$. It is then not hard to show that the residual flow network $H$ of $G^h$ with respect to the current flow $f$ must contain an $s$-$v^{\inn}$ path of length at most $16|L|-1$, and so $v\in Z$ must hold.
   To summarize, in case where $|\hat S|$ is small, we use the local flow augmentation technique to compute a relatively small set $Z$ of vertices of $G$, such that, if $(x,y)$ is the distinguished pair, then, with a sufficiently high probability, $Z$ is a promising vertex set. We then compute the vertex set $A_{x,y}\subseteq V(G)$ with the desired properties from $Z$ as described above.

\subsection*{Weighted Dense Graphs}
We now provide a high-level overview of our $O\left(n^{2.677} \cdot \poly\log W\right )$-time algorithm for the weighted version of the problem, that is most useful in relatively dense graphs. We fix a distinguished cut $(L,S,R)$ in the input vertex-weighted graph $G$ exactly as before, and, as before, we assume that we are given  an approximate estimate $\lambda$ (to within factor 2) of $|L|$. As before, we can assume that $|L|$ is not too large, so, for example, $|L|\leq \sqrt{n}$, since otherwise we can use the algorithm for easy special cases, where the distinguished cut is balanced, that was described above.

As in our algorithm for the non-dense setting, we employ a simple randomized procedure to compute a collection $\Gamma$ of roughly $\Theta\left (\frac{n}{|L|}\right )$ pairs of vertices of $G$, such that, with a sufficiently high probability, there exists some pair $(x^*,y^*) \in \Gamma$ (that we refer to as the \emph{distinguished pair}) with $x^* \in L$ and $y^* \in R$. We can also ensure that, for every pair $(x,y)\in \Gamma$, $(x,y)\not\in E(G)$. As before, the key part of our algorithm is a subroutine, that, given a pair $(x,y)\in \Gamma$, computes an $x$-$y$ vertex-cut $(L_{x,y},S_{x,y},R_{x,y})$ in $G$, such that, if $(x,y)$ is the distinguished pair, then, with a sufficiently high probability, $(L_{x,y},S_{x,y},R_{x,y})$ is a minimum $x$-$y$ vertex-cut; we refer to this subroutine as  the \emph{main subroutine}. Our algorithm applies the main subroutine to every pair $(x,y)\in \Gamma$, and then outputs the smallest-value vertex-cut $(L_{x,y},S_{x,y},R_{x,y})$ from among the resulting cuts. In the remainder of this overview we focus on the description of the main subroutine.

Recall that $|\Gamma| \approx \Theta\left ( \frac{n}{|L|}\right )$, and that our goal is to design an algorithm for global minimum vertex-cut with running time $O\left(n^{2.677} \cdot \poly\log W\right )$. 
Since the main subroutine  is applied to every vertex pair in $\Gamma$ separately,
 we need to ensure that its running time is bounded by $O\left(n^{1.677} \cdot |L| \cdot \poly\log W\right)$ --- the running time that may be sublinear in $m$ if $|L|$ is small. In particular, the main subroutine may not even be able to read the entire graph, and this is one of the main challenges in designing it.

At a high level, the main subroutine can be thought of as an adaptation of a similar subroutine of \cite{CT24} to directed graphs (see Theorem 3.16 in \cite{CT24}).
While it is generally not difficult to extend their subroutine to the directed setting, our main contribution in the dense regime lies in significantly simplifying their algorithm and its analysis, while also obtaining a faster running time. Next, we provide a high-level overview of the procedure of \cite{CT24} that is analogous to our main subroutine, and then describe our implementation of the main subroutine.

\paragraph{The Main Subroutine of \cite{CT24}.}
We start by defining a new notion, that was implicitly used in \cite{CT24} and in several previous works, that we call \emph{shortcut flow}.
Consider the input graph $G$ and the given pair $(x,y)\in \Gamma$ of its vertices. We say that a collection $E'$ of pairs of vertices of $G$ is a collection of \emph{shortcut edges} if, for every pair $(u,v)\in E'$, $v=y$ holds. We say that it is a \emph{valid collection of shortcut edges}, if, for every pair $(u,y)\in E'$, $u\not\in L$ (recall that $(L,S,R)$ is the distinguished cut we fixed in $G$). Observe that, if $(x,y)\in \Gamma$ is the distinguished pair, and $E'$ is a valid collection of shortcut edges, then $(L,S,R)$ remains a valid $x$-$y$ vertex-cut in the graph $G \cup E'$ that is obtained from $G$ by inserting the edges of $E'$. An \emph{$x$-$y$ shortcut flow} is a pair $(E',f)$, where $E'$ is a set of {shortcut edges}, and $f$ is an $x$-$y$ vertex-capacitated flow in $G \cup E'$.
We say that a shortcut flow $(E',f)$ is \emph{valid} if the set $E'$ of shortcut edges is valid.


%
%
%

At a high level, the main subroutine of \cite{CT24}, given the input graph $G$ and the pair $(x,y)\in \Gamma$ of its vertices,
gradually constructs a shortcut $x$-$y$ flow $(E',f)$ in $G$, over the course of  $O(\log(Wn))$ phases. Initially, $E'=\emptyset$ and $f=0$. Then, in every phase, new edges may be added to $E'$, and the flow $f$ is augmented, so that its value becomes closer and closer to the value of the maximum $x$-$y$ flow in $G$, which, in turn, is equal to the value of the minimum $x$-$y$ vertex-cut.
The subroutine guarantees that, if the vertex pair $(x,y)\in \Gamma$ is the distinguished pair, then, with high probability, $E'$ remains a valid collection of shortcut edges throughout the algorithm, and, additionally, the gap $w(S)-\val(f)$ decreases by a constant factor from phase to phase. Therefore, after $O(\log(Wn))$ phases, $w(S)-\val(f)<\half$ holds, and so the value $w(S)$ can be computed  by rounding the value of the flow $f$ up to the nearest integer.

While the above description conceptually matches the high-level structure of the subroutine of \cite{CT24}, it is imprecise and  omits important detail. 
Specifically, the algorithm of  \cite{CT24}, in order to gradually augment the flow $f$, uses a standard paradigm of computing augmenting paths in the residual flow network, and then augmenting the flow $f$ via these paths. This technique inherently only works with edge-capacitated, rather than vertex-capacitated graphs. 
In order to circumvent this difficulty, \cite{CT24} do not directly use the shortcut flow as defined above, and instead use an analogous object defined with respect to the split graph $G'$ of $G$.
Each phase of the subroutine of \cite{CT24} is then executed on the residual network $G'_f$ of the split graph $G'$ with respect to the current flow $f$, which leads to a rather complicated algorithm, due to the complex structure of the graph $G'_f$.

\paragraph{Our approach: well-behaving cuts.}
From the above description, the algorithm of \cite{CT24} implementing the main subroutine can be thought of as gradually augmenting a shortcut $x$-$y$ flow $f$ from phase to phase, until its value becomes very close to the optimal one. 
In contrast, our algorithm uses a different object in order to make progress from phase to phase, that we refer to as a  \emph{well-behaving cut}. Consider an $x$-$y$ vertex-cut $(L',S',R')$ in $G$, and a parameter $M$ that is an integral power of $2$ (where possibly $M<1$).
 We say that the cut $(L',S',R')$ is \emph{well-behaving for scale $M$}, if all of the following hold: (i) $\vol^+(L')\leq \frac{n^2}{\gamma}$ (here, $\gamma=n^c$ for a constant $0<c<1$ is a parameter that we choose in order to optimize the algorithm's running time); (ii) $w(S')\leq w(S)+10nM$; and (iii) the set $L' \cup S'$ contains all but at most $|L|\cdot \gamma$ vertices of $L \cup S$ whose weight is at least $M$. One can think of the scale parameter $M$ as measuring the quality of the well-behaving cut. 
  Our algorithm for the main subroutine starts with the $x$-$y$ cut $(L_0,S_0,R_0)$ in $G$ where $L_0=\set{x}$, $R_0=\set{y}$, and $S_0=V(G)\setminus\set{x,y}$; it is not hard to see that this cut must be well-behaving for a scale $M_0=10W$. It then performs $O(\log(Mn))$ phases, where, in the $i$th phase, it must produce an $x$-$y$ vertex-cut $(L_i,S_i,R_i)$ that is well-behaving for scale $M_i=\frac{M_0}{2^i}$. Let $z=O(\log(nW))$ be the smallest integer for which $M_z\leq \frac{1}{20n}$. We terminate the algorithm after $z$ phases are completed, and output the vertex-cut $(L_z,S_z,R_z)$ that was computed in Phase $z$. Since  $w(S_z)-w(S)<10nM_z<1$ holds, and, since all vertex weights are integral, the cut $(L_z,S_z,R_z)$ must be optimal.
  
We note that the algorithm of \cite{CT24} produces an intermediate object in every phase, that they refer to as the ``final cut''. This object is an edge-cut in the residual flow network $G'_f$ of the split-graph $G'$ with respect to the current flow $f$, 
and it has some useful properties, whose description is somewhat complex and is omitted here. This cut
is used by \cite{CT24} in order to sparsify the graph $G'_f$, so that the flow augmentation can be computed efficiently. However, it is not hard to show that one can also derive a well-behaving $x$-$y$ vertex-cut in $G$ from their ``final cut''. Therefore, while the algorithm of   \cite{CT24} is not directly presented in this way, one can view it as following the high-level approach of computing better and better well-behaving cuts. The main drawback of their algorithm is that it explicitly focuses on maintaining an 
analogue of the shortcut $x$-$y$ flow in the split-graph $G'$ of $G$, that is gradually augmented over the course of the entire algorithm. As such, each phase of their algorithm needs to work with the corresponding residual flow network of $G'$ with respect to this flow, whose structure is quite complex, adding an extra level of complexity to the algorithm and its analysis. In contrast, our algorithm focuses directly on obtaining better and better  well-behaving cuts in the input graph $G$. While some steps involved in computing improved well-behaving cuts do require computing an $x$-$y$ flow in the split graph $G'$, this use of the split graph is limited to rather simple subroutines, and the resulting flow is discarded once the improved well-behaving cut is computed. Therefore, the majority of our algorithm works directly with the vertex-capacitated input graph $G$, which greatly simplifies the algorithm and its analysis. The main ingredient of our subroutine is then an algorithm for implementing a single phase, that we describe next.

\paragraph{Implementing a single phase.} We now describe our algorithm for the $i$th phase, for $i\geq 1$.
We assume that, at the beginning of the phase, we are given an $x$-$y$ vertex-cut $(L_{i-1},S_{i-1},R_{i-1})$ in $G$.
 For brevity, we say that Condition (C) holds, if the vertex pair $(x,y)\in \Gamma$ is distinguished, and the cut  $(L_{i-1},S_{i-1},R_{i-1})$  is well-behaving for scale $M_{i-1}$.
 Our goal is to produce an $x$-$y$ vertex-cut $(L_{i},S_{i},R_{i})$ in $G$ with the following property: if Condition (C) holds, then, with a sufficiently high probability, cut $(L_{i},S_{i},R_{i})$ must be well-behaving for scale $M_i=\frac{M_{i-1}}2$.
The algorithm for the $i$th phase consists of three steps. In Step 1, we exploit the vertex-cut $(L_{i-1},S_{i-1},R_{i-1})$  in order to compute a valid shortcut  $x$-$y$  flow $(E',f)$ in $G$, whose value is close to the value of the maximum $x$-$y$ flow.
In Step 2, the shortcut flow $(E',f)$ is used in order to compute a set $A\subseteq V(G)$ of vertices with $\vol^+_G(A)\leq \frac{n^2}{\gamma}$. 
We say that the vertex set $A$ is \emph{promising}, if
there exists an $x$-$y$ vertex-cut  $(\hat{L},\hat{S},\hat{R})$  in $G$ with $\hat L\subseteq A$, that is well-behaving for scale $M_i$; in fact, we require that this cut satisfies even slightly stronger properties. Our algorithm for Step 2 guarantees that, if Condition (C) holds, then the vertex set $A$ that it computes is promising.  We note however that the algorithm for Step 2 does not compute the cut $(\hat{L},\hat{S},\hat{R})$  explicitly; instead it only explicitly computes the vertex set $A$ and only guarantees the existence of the cut $(\hat{L},\hat{S},\hat{R})$  with the above properties, if Condition (C) holds.
In the third and the last step, we exploit the vertex set $A$ in order to compute the vertex-cut $(L_i,S_i,R_i)$, such that, if Condition (C) holds, then cut $(L_i,S_i,R_i)$ is well-behaving for scale $M_i$.
Intuitively, the third step can be thought of as "approximately recovering" the vertex-cut $(\hat{L},\hat{S},\hat{R})$ that is guaranteed to exist by the definition of the promising set -- since $(\hat{L},\hat{S},\hat{R})$ satisfies somewhat stronger properties than those required by a well-behaving cut, an "approximation" of $(\hat{L},\hat{S},\hat{R})$ must also be a well-behaving cut.
We note that, at a very high level, our algorithm for recovering a well-behaving cut from a shortcut flow can be viewed as somewhat similar to the approach of \cite{CT24}, who obtain their "final cut" (an analogue of a well-behaving cut) via an intermediate object, that can be shown to implicitly encode a promising vertex set; this object is a subgraph of $G'_f$ that they denote by $J$.
However, both our algorithms for computing the promising vertex set from a shortcut flow, and for computing a well-behaving cut from a promising set, are faster and significantly simpler than the methods used by \cite{CT24} for the analogous tasks.

\paragraph{Organization.}

We start with preliminaries in \Cref{sec: prelims}. We describe algorithms for several easy special cases in \Cref{sec: special}, and  introduce some key tools that we use in our algorithms for both the unweighted and the weighted settings in \Cref{sec: tools}. In  \Cref{sec: alg: unweigthed}, we present our algorithm for unweighted graphs, proving \Cref{thm: main: unweighted}.
We present our $O\left(m n^{11/12+o(1)}\cdot d^{1/12} \cdot \poly\log W\right )$-time algorithm for the weighted setting in \Cref{sec: weighted non-dense}, and our $O\left(n^{2.677} \cdot \poly\log W\right )$-time algorithm in \Cref{sec: weighted dense}, where we also complete the proof of \Cref{thm: main: weighted} by combining these two algorithms.

\section{Preliminaries}
\label{sec: prelims}

All logarithms in this paper are to the base of $2$. We use the $\tilde O(\cdot)$ notation to hide $(\log n)^{O(1)}$ factors, where $n$ is the number of vertices in the input graph. For an integer $k>0$, we use $[k]$ to denote the set $\set{1,\ldots,k}$ of integers.

\subsection{Graph-Theoretic Notation}

Suppose we are given a directed graph $G=(V,E)$ with weights $w(v)\geq 0$ on its vertices $v\in V$. For a vertex $v\in V$, the set of its \emph{out-neighbors} is $N^+_G(v)=\set{u\in V(G)\mid (v,u)\in E(G)}$, and the set of its \emph{in-neighbors} is 
$N^-_G(v)=\set{u\in V(G)\mid (u,v)\in E(G)}$. We denote by $N_G(v)=N_G^+(v)\cup N_G^-(v)$ the set of all neighbors of $v$. We also denote by $\deg^+_G(v)=|N^+_G(v)|$, $\deg^-_G(v)=|N^-_G(v)|$ and $\deg_G(v)=\deg^+_G(v)+\deg^-_G(v)$ the out-degree, the in-degree, and the total degree of $v$, respectively. 
Additionally, we denote by $\delta^-_G(v)=\set{(u,v)\in E(G)\mid u\in V(G)}$ and $\delta^+_G(v)=\set{(v,u)\in E(G)\mid u\in V(G)}$ the sets of all incoming and outgoing edges of $v$, respectively. 
Given a subset $X\subseteq V$ of vertices, its \emph{volume} is $\vol(X)=\sum_{v\in X}\deg_G(v)$, and its \emph{weight} is $w(X)=\sum_{v\in X}w(v)$. Given a value $\tau\in \reals$, we denote by $X^{\geq \tau}=\set{v\in X\mid w(v)\geq \tau}$ the set of all vertices of $X$ whose weight is at least $\tau$.
We denote  the \emph{out-volume of $X$} by $\vol^+_G(X)=\sum_{v \in X}\deg^+_G(v)$.
We also denote by $N_G^+(X)=\left(\bigcup_{x\in X}N_G^+(x)\right )\setminus X$, and by $N_G^-(X)=\left(\bigcup_{x\in X}N_G^-(x)\right )\setminus X$ the sets of all out-neighbors and all in-neighbors of the set $X$, respectively, excluding the vertices that lie in $X$.
Given two disjoint vertex sets $A,B\subseteq V(G)$, we denote by $E_G(A,B)=\set{(u,v)\in E(G)\mid u\in A,v\in B}$.
Given a graph $G=(V,E)$ and a set $E'\subseteq V\times V$ of pairs of its vertices, we use $G \cup E'$ to denote the graph obtained by inserting the edges of $E' \setminus E(G)$ into $G$.
We may omit the subscript $G$ in the above notations when the graph $G$ is clear from context.

For a path $P \subseteq G$, its \emph{length} is $|E(P)|$ -- the number of edges on $P$.
For a pair $s,t \in V(G)$ of vertices, the \emph{distance} from $s$ to $t$, denoted $\dist_G(s,t)$, is the length of the shortest path connecting $s$ to $t$; if no such path exists, then $\dist_G(s,t)=\infty$. 
Given a vertex $s \in V(G)$ and a subset $T \subseteq V(G)$ of vertices, the distance from $s$ to $T$ is $\dist_G(s,T) = \min_{t \in T} \{\dist_G(s,t)\}$.
We may omit the subscript $G$ in the above notations when it is clear from the context.

\paragraph{Adjacency-list representation of graphs.}
Some of our algorithms utilize adjacency-list representations of graphs, that we recap here. Suppose we are given a directed graph $G$ with capacities $c(e)$ on its edges $e\in E(G)$ and weights $w(v)$ on its vertices $v\in V(G)$. The adjacency-list representation of $G$ stores, for every vertex $v\in V(G)$, its weight $w(v)$, its in-degree $\deg^-_G(v)$, and its out-degree $\deg^+_G(v)$.
Additionally, it stores a doubly-linked list $\IN(v)$, listing all edges $e\in \delta^-_G(v)$, together with the capacity $c(e)$ of each such edge. It also stores the edges of $\delta^-_G(v)$ in a separate efficient search structure, such as a binary search tree, that allows to look up an edge $(v,u)$ by its endpoint $u$. Similarly, it stores a doubly-linked list $\OUT(v)$ of all edges in $\delta^+_G(v)$, together with the capacity $c(e)$ of each such edge. The edges of $\delta^+_G(v)$ are also stored in an efficient search structure, where an edge $(u,v)$ can be looked up by its endpoint $u$.
If the edge capacities in $G$ are undefined then we assume that they are unit, and similarly, if vertex weights in $G$ are undefined then we assume that they are unit.

\paragraph{Vertex-Cuts in Directed Graphs and Global Minimum Vertex-Cut.}
Given a directed graph $G$ with weights $w(v)\geq 0$ on its vertices $v\in V(G)$, a \emph{vertex-cut} in $G$ is a partition $(L,S,R)$ of $V(G)$ with $L,R\neq \emptyset$, so that no edge of $G$ connects a vertex of $L$ to a vertex of $R$ (but edges connecting vertices of $R$ to vertices of $L$ are allowed). The \emph{value} of the cut is the total weight $w(S)=\sum_{v\in S}w(v)$ of the vertices of $S$.   A \emph{global minimum vertex-cut} is a vertex-cut $(L,S,R)$ of minimum value. Given an input graph $G$ with weights $w(v)\geq 0$ on its vertices, we denote by $\opt$ the value of the global minimum vertex-cut in $G$.

Notice that we can assume without loss of generality that there is a global minimum vertex-cut $(L,S,R)$ with $w(L)\leq w(R)$; if this is not the case, then we can consider the graph $\overline{G}$ that is obtained from $G$ by reversing the direction of all its edges. Clearly, if $(A,B,C)$ is a vertex-cut in $G$, then $(C,B,A)$ is a vertex-cut in $\overline{G}$ of the same value, and vice versa. We can now focus on computing a minimum vertex-cut in $\overline{G}$.
Therefore, from now on we assume that there is a global minimum vertex-cut $(L,S,R)$ in $G$ with $w(L)\leq w(R)$, and we will use the term ``global minimum vertex-cut'' to only include such cuts (in other words, a cut $(L',S',R')$ with $w(L')>w(R')$ will not be considered a global minimum vertex-cut regardless of its value $w(S')$).

Consider now a directed vertex-weighted graph $G$ and a pair $s,t$ of its vertices. We say that a vertex-cut $(L,S,R)$ \emph{separates} $s$ from $t$, or that $(L,S,R)$ is an \emph{$s$-$t$ vertex-cut}, if $s\in L$ and $t\in R$ holds. We say that it is a \emph{minimum $s$-$t$ vertex-cut}, if the value of the cut is minimized among all such cuts.
Notice that, if there is an edge connecting $s$ and $t$ in $G$, then $G$ does not contain a vertex-cut separating $s$ from $t$; in such a case, we say that the value of the minimum $s$-$t$ vertex-cut is infinite, and that the cut is undefined.

\paragraph{Induced Tripartitions and Vertex-Cuts.}
Let $G$ be a directed graph, and let $A\subseteq V(G)$ be a non-empty subset of its vertices. Consider the tri-partition $(L,S,R)$ of the vertices of $G$, where $L=A$, $S=N^+_G(A)$, and $R=V(G)\setminus(L\cup S)$. We say that  $(L,S,R)$ is the \emph{tripartition of $V(G)$ induced by $A$}, or that $A$ \emph{induces} the tripartition $(L,S,R)$. Note that, if $A\cup N^+_G(A)\neq V(G)$, then $(L,S,R)$ is also valid vertex-cut in $G$. In such a case, we may say that $(L,S,R)$ is the \emph{vertex-cut induced by $A$}, and that $A$ \emph{induces} the vertex-cut $(L,S,R)$. When the set $A$ consists of a single vertex $x$, we may refer to $\left(x,N^+_G(x),V(G)\setminus\left(\set{x}\cup N^+_G(x)\right )\right )$ as the \emph{tripartition induced by $x$}, and, if it is a valid vertex-cut, then we may refer to it as the \emph{vertex-cut induced by $x$}. 
Note that if, for every vertex $v\in V(G)$, $N^+_G(v)=V(G)\setminus\set{v}$, then $G$ does not have a valid vertex-cut. Otherwise, every vertex $v\in V(G)$ with $\deg^+_G(v)<|V(G)|-1$ induces a valid vertex-cut in $G$.

\paragraph{Flows in vertex-weighted graphs.}
Suppose we are given a directed graph $G$ with weights $w(v)\geq 0$ on its vertices $v\in V(G)$, and two special vertices $s$ and $t$. An \emph{$s$-$t$ flow} in $G$ is an assignment of a flow value $f(e)\geq 0$ to every edge $e\in E(G)\setminus\left(\delta^-(s)\cup \delta^+(t)\right )$, such that, for every vertex $v\in V(G)\setminus\set{s,t}$, $\sum_{e\in \delta^+(v)}f(e)=\sum_{e\in \delta^-(v)}f(e)$ and $\sum_{e\in \delta^+(v)}f(e)\leq w(v)$ hold. The \emph{value} of the flow is $\val(f)=\sum_{e\in \delta^+(s)}f(e)$.
In our algorithms, we always assume that a flow $f$ in a graph $G$ is given by listing a collection  $E^f\subseteq E(G)$ that contains all edges $e\in E(G)$ with $f(e)>0$, together with the flow value $f(e)$ for each edge $e\in E^f$.

A \emph{flow-path decomposition} of the $s$-$t$ flow $f$ consists of a collection $\pset$ of simple $s$-$t$ paths, and a value $f(P)>0$ for every path $P\in \pset$, such that $\sum_{P\in \pset}f(P)=\val(P)$ and, for every edge $e\in E(G)$, $\sum_{\stackrel{P\in \pset:}{e\in E(P)}}f(P)\leq f(e)$. We sometimes refer to the paths in $\pset$ as \emph{flow-paths}.
Recall that, from the Max-flow / Min-cut theorem, the value of the maximum $s$-$t$ flow is equal to the value of the minimum $s$-$t$ vertex-cut in any vertex-weighted graph.

\paragraph{Edge-cuts in directed graphs.} 
Let $G=(V,E)$ be a directed graph with capacities $c(e)\geq 0$ on its edges $e\in E$. Given a subset $E'\subseteq E$ of edges, we denote by $c(E')=\sum_{e\in E'}c(e)$ the \emph{total capacity of the edges in $E'$}. An \emph{edge-cut} in $G$ is a partition $(X,Y)$ of the vertices of $G$ into two disjoint non-empty subsets, and the \emph{value} of the cut is $c(E_G(X,Y))$, that we also denote by $c_G(X,Y)$. We say that a cut $(X,Y)$ is a \emph{global minimum edge-cut for $G$}, if it is an edge-cut in $G$ of minimum value. Given a pair $s,t$ of vertices of $G$, we say that an edge-cut $(X,Y)$ is an \emph{$s$-$t$ edge-cut}, or that it is an \emph{edge-cut that separates $s$ from $t$}, if $s\in X$ and $t\in Y$ holds.
We say that it is a \emph{minimum $s$-$t$ edge-cut} if it is an $s$-$t$ edge-cut whose value is minimal among all $s$-$t$ edge-cuts.
We define edge-cuts separating a vertex $s$ from a subset $T\subseteq V\setminus \set{s}$ of vertices, and edge-cuts separating disjoint subsets $S,T\subseteq V(G)$ of vertices similarly.

\paragraph{Flows in edge-capacitated directed graphs.}
Let $G=(V,E)$ be a directed graph with capacities $c(e)\geq 0$ on its edges $e\in E$, and let $s,t\in V$ be a pair of vertices of $G$. An \emph{$s$-$t$ flow} in $G$ is an assignment of flow value $0\leq f(e)\leq c(e)$ to every edge $e\in E$, such that, for every vertex $v\in V\setminus\set{s,t}$, the following \emph{flow conservation constraint} holds: $\sum_{e\in \delta^+(v)}f(e)=\sum_{e\in \delta^-(v)}f(e)$. We also require that, if $e\in \delta^-(s)\cup \delta^+(t)$, then $f(e)=0$. 
For a value $M>0$, we say that the flow $f$ is \emph{$M$-integral}, if, for every edge $e\in E$, $f(e)$ is an integral multiple of $M$ (note that it is possible that $M<1$ under this definition).
The \emph{value} of the flow is $\val(f)=\sum_{e\in \delta^+(s)}f(e)$. As in vertex-capacitated flows, we always assume that a flow $f$ in a graph $G$ is given by listing a collection  $E^f\subseteq E(G)$ that contains all edges $e\in E(G)$ with $f(e)>0$, together with the flow value $f(e)$ for each edge $e\in E^f$.

%

\paragraph{Residual Flow Network.}
Let $G=(V,E)$ be a directed graph with capacities $c(e)\geq 0$ on its edges $e\in E$, let $s,t\in V$ be a pair of vertices of $G$, and let $f$ be an $s$-$t$ flow in $G$.
The residual flow network of $G$ with respect to $f$, denoted $G_f$, is an edge-capacitated graph with $V(G_f)=V(G)$. The set $E(G_f)$ of edges is partitioned into two subsets: a set of \emph{forward} edges and a set of \emph{backward edges}. The set of forward edges of $G_f$ contains, for every edge $(u,v) \in E(G)$ with $f(u,v) < c(u,v)$, the edge $(u,v)$ of capacity $c_{G_f}(u,v)=c(u,v)-f(u,v)$. The set of backward edges of $G_f$ contains, for every edge $(u,v) \in E(G)$ with $f(u,v) > 0$, the edge $(v,u)$ of capacity $c_{G_f}(v,u)=f(u,v)$.
For an edge $(u,v) \in G_f$, its capacity in $G_f$ is called the \emph{residual capacity of $(u,v)$}.
The following is a basic observation about the residual network $G_f$; similar observations were used in numerous previous works. For example, \cite{BBR25} proved a similar statement (see Theorem 10 of \cite{BBR25}), though their formulation is less general as it only covers the case where $f$ is a maximum $s$-$t$ flow.
\begin{observation}\label{fact : residual-capacity-of-edge-cut}
	Let $G=(V,E)$ be a directed edge-capacitated graph, let $s,t \in V$ be a pair of vertices of $G$, and let $f$ be an $s$-$t$ flow in $G$. Consider the residual network $H=G_f$. For every edge $e\in E$, we denote its capacity in $G$ by $c_G(e)$, and for every edge $e\in E(H)$, we denote its capacity in $H$ by $c_H(e)$.
	Then, for every $s$-$t$ edge-cut $(X,Y)$ in $G$,  $c_{H}(X,Y) = c_G(X,Y) - \val(f)$.
\end{observation}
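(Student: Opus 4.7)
The plan is to decompose the edges of $H$ crossing the cut $(X,Y)$ according to whether they are forward or backward copies of edges of $G$, and then apply the standard flow-conservation identity for $s$-$t$ cuts. For a pair of disjoint subsets $A, B \subseteq V$, let $f(A,B) = \sum_{(u,v) \in E_G(A,B)} f(u,v)$ denote the total flow sent across $E_G(A,B)$.

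First I would classify every edge $(a,b) \in E(H)$ with $a \in X$ and $b \in Y$. By definition of the residual network, each such edge arises in exactly one of two ways: either it is a forward edge corresponding to some edge $(a,b) \in E(G)$ with $a \in X, b \in Y$ and $f(a,b) < c_G(a,b)$, contributing residual capacity $c_G(a,b) - f(a,b)$; or it is a backward edge corresponding to some edge $(b,a) \in E(G)$ with $b \in Y, a \in X$ and $f(b,a) > 0$, contributing residual capacity $f(b,a)$. Collecting all contributions (and noting that edges of $E_G(X,Y)$ with $f = c_G$ or edges of $E_G(Y,X)$ with $f = 0$ simply contribute $0$, so we may sum over the full sets $E_G(X,Y)$ and $E_G(Y,X)$), we obtain
\begin{equation*}
c_H(X,Y) \;=\; \sum_{(u,v) \in E_G(X,Y)}\bigl(c_G(u,v) - f(u,v)\bigr) \;+\; \sum_{(u,v) \in E_G(Y,X)} f(u,v) \;=\; c_G(X,Y) - f(X,Y) + f(Y,X).
\end{equation*}

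Next I would invoke the standard fact that, for any $s$-$t$ flow $f$ in $G$ and any $s$-$t$ edge-cut $(X,Y)$, the net flow crossing the cut equals the value of the flow, i.e. $\val(f) = f(X,Y) - f(Y,X)$. This is obtained by summing the flow-conservation constraint over all vertices of $X \setminus \{s\}$ (each of which contributes $0$) together with the identity $\val(f) = \sum_{e \in \delta^+(s)} f(e) - \sum_{e \in \delta^-(s)} f(e)$; since $\delta^-(s)$ carries no flow (by convention, $f(e) = 0$ for $e \in \delta^-(s)$) and every edge with both endpoints in $X$ cancels out, only edges with one endpoint in $X$ and the other in $Y$ remain, giving exactly $f(X,Y) - f(Y,X)$.

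Substituting this identity into the displayed equation yields $c_H(X,Y) = c_G(X,Y) - \val(f)$, as claimed. There is no real obstacle here; the only subtlety is being careful that every residual edge crossing from $X$ to $Y$ is counted exactly once in the decomposition, which is immediate from the definition of $G_f$ given in the paper.
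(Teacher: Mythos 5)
Your proposal is correct and follows essentially the same route as the paper's own proof: split the residual edges in $E_H(X,Y)$ into forward copies of $E_G(X,Y)$ and backward copies of $E_G(Y,X)$, sum their residual capacities, and then apply the standard identity $\val(f)=f(X,Y)-f(Y,X)$, which the paper simply cites as well known while you sketch its derivation.
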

\begin{proof}
	By the definition of the residual flow network, the total residual capacity of forward edges in $E_H(X,Y)$ is $\sum_{e \in E_G(X,Y)} (c_G(e) - f(e))$, and the total residual capacity of backwards edges in $E_H(X,Y)$ is $\sum_{e \in E_G(Y,X)} f(e)$.
	Therefore,
	\begin{align*}
		c_{H}(X,Y)
		&= \sum_{e \in E_G(X,Y)} (c_G(e) - f(e)) + \sum_{e \in E_G(Y,X)} f(e)\\
		&= c_G(X,Y) - \left(\sum_{e \in E_G(X,Y)} f(e) - \sum_{e \in E_G(Y,X)} f(e)\right).
	\end{align*}
	The observation follows using the well-known fact that $\val(f)=\left(\sum_{e \in E_G(X,Y)} f(e) - \sum_{e \in E_G(Y,X)} f(e)\right)$.
\end{proof}

\subsection{Graph Derived via a Vertex Set} 
\label{subsec: derived subgraphs}

One of the tools that we use in order to sparsify graphs is a graph that is derived from the input graph $G$ via some vertex set $A\subseteq V(G)$, that we define next. We note that such graphs were used in past work, though to the best of our knowledge this notion was never formalized. 

\begin{definition}[Graph derived via vertex set]\label{def: defined graph} Let $G$ be a directed graph with weights $w(v)\geq 0$ on its vertices $v\in V(G)$, and let $A\subseteq V(G)$ be a non-empty set of vertices. 
	The \emph{graph derived from $G$ via $A$}, denoted by $G^{|A}$ is defined as follows. The vertex set of $G^{|A}$ is $A\cup N^+_G(A)\cup \set{t}$, where $t$ is a new vertex that does not lie in $G$. The weight of the vertex $t$ is set to $\max_{v\in V(G)}\set{w(v)}$, while the weights of all other vertices remain the same as in $G$. For simplicity, for every vertex $v\in V(G^{|A})$, its weight in $G^{|A}$ is still denoted by $w(v)$. The set of edges of $G^{|A}$ is defined as follows. First, we include, for every vertex $v\in A$, all edges of $\delta^+_G(A)$ in $G^{|A}$. Additionally, for every vertex $u\in N^+_G(A)$, we include the edge $(u,t)$.
\end{definition}

In the following claim, we summarize the central properties of the graph $G^{|A}$.

\begin{claim}\label{claim: properties of defined graph}
	Let  $G$ be a directed graph with weights $w(v)\geq 0$ on its vertices $v\in V(G)$, and let $A\subseteq V(G)$ be a non-empty set of vertices. Consider the graph $G^{|A}$ derived from $G$ via $A$. Then $|E(G^{|A})|\leq 2\vol^+_G(A)$ must hold, and graph $G^{|A}$ can be constructed in time $O(\vol^+_G(A))$ given access to the adjacency-list representation of $G$ and the vertex set $A$. Moreover, if $(L,S,R)$ is a vertex-cut in $G$ that is induced by the set $L$ of vertices with $L\subseteq A$, then $(L,S,V(G^{|A})\setminus (L\cup S))$ is a valid vertex-cut in $G^{|A}$.
\end{claim}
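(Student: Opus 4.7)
The plan is to establish the three assertions of the claim in order: first the edge count bound, then the construction time, then the validity of the induced vertex-cut. These are all essentially bookkeeping, so the main "obstacle" is only to make sure no vertex of $L$ accidentally acquires a new outgoing edge to $t$ in $G^{|A}$; everything else is immediate from the definition of $G^{|A}$.

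For the edge-count bound, I would partition $E(G^{|A})$ into two parts according to the definition: the edges inherited from $G$, and the new edges of the form $(u,t)$ with $u\in N^+_G(A)$. The inherited edges form exactly $\bigcup_{v\in A}\delta^+_G(v)=\delta^+_G(A)$, which has cardinality $\vol^+_G(A)$. For the shortcut edges $(u,t)$, I would note that $|N^+_G(A)|\le\vol^+_G(A)$, since every out-neighbor of $A$ is the head of at least one edge in $\delta^+_G(A)$. Adding the two contributions yields the bound $|E(G^{|A})|\le 2\vol^+_G(A)$.

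For the construction time, I would describe the obvious algorithm: iterate over $v\in A$, traverse the adjacency list $\OUT(v)$ to copy each edge $e\in\delta^+_G(v)$ into $G^{|A}$, and, while doing so, mark each head $u$ of such an edge as a member of a running set $N^+_G(A)$ (handling duplicates via a hash table or via marking bits on $V(G)$). After this single pass, add one edge $(u,t)$ for each marked $u$. The total work is proportional to $\sum_{v\in A}\deg^+_G(v)=\vol^+_G(A)$, which, together with the bound on $|N^+_G(A)|$ from the previous paragraph, gives the stated $O(\vol^+_G(A))$ time.

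For the cut statement, let $(L,S,R)$ be the vertex-cut in $G$ induced by $L$, so $S=N^+_G(L)$ and $R=V(G)\setminus(L\cup S)$, and let $R'=V(G^{|A})\setminus(L\cup S)$. I would check the three defining properties of a vertex-cut. Non-emptiness of $L$ is inherited from $G$; non-emptiness of $R'$ follows because $t\in V(G^{|A})$ is a fresh vertex not contained in $L\cup S$. To verify that no edge of $G^{|A}$ goes from $L$ to $R'$, I would argue separately for the two types of edges. Any edge of $G^{|A}$ originating at a vertex $v\in L\subseteq A$ that was inherited from $G$ lies in $\delta^+_G(v)\subseteq\delta^+_G(L)$, and hence has its head in $N^+_G(L)=S$, not in $R'$. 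Any shortcut edge has the form $(u,t)$ with $u\in N^+_G(A)$, and by the definition $N^+_G(A)\cap A=\emptyset$, so $u\notin A\supseteq L$; thus no shortcut edge originates in $L$. This exhausts all edges leaving $L$ in $G^{|A}$ and completes the proof.
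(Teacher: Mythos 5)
Your proof is correct and follows essentially the same route as the paper's: bound the two edge types ($\delta^+_G(A)$ plus one shortcut per vertex of $N^+_G(A)$), note the linear-time construction, and verify the cut by distinguishing inherited edges from shortcut edges leaving $L$ (the paper phrases this as a contradiction, you argue it directly). One tiny imprecision: an inherited edge leaving $v\in L$ may have its head in $L$ rather than in $N^+_G(L)=S$, and one should also note $S=N^+_G(L)\subseteq A\cup N^+_G(A)\subseteq V(G^{|A})$ so that $(L,S,R')$ is indeed a tripartition of $V(G^{|A})$, but both points are immediate and the argument is unaffected.
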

\begin{proof}
	For convenience, in this proof we denote $G^{|A}$ by $\hat G$. Since $|N^+_G(A)|\leq \vol^+_G(A)$, and since every edge $e\in \hat G$ either lies in $\bigcup_{a\in A}\delta^+_G(A)$, or connects a vertex of $N^+_G(A)$ to $t$, it is immediate to verfiy that $|E(\hat G)|\le 2\vol^+_G(A)$. It is also immediate to verify that graph  $G^{|A}$ can be constructed in time $O(\vol^+_G(A))$ given access to the adjacency-list representation of $G$ and the vertex set $A$.
	
	Consider now any vertex-cut   $(L,S,R)$ in $G$ that is induced by some vertex set $L$, such that $L\subseteq A$ holds. Since $S=N_G^+(L)\subseteq A\cup N_G^+(A)$ by the definition of induced cuts, we get that $S\subseteq V(\hat G)$. Consider now the tri-partition $(L,S,R')$ of $V(\hat G)$, where $R'=V(\hat G)\setminus (L\cup S)$. Notice that $t\in R'$, so $R'\neq \emptyset$. Moreover, no edge of $\hat G$ may connect a vertex of $L$ to a vertex of $R'$. Indeed, assume otherwise, and let $(u,v)\in E(\hat G)$ be an edge with $u\in L$ and $v\in R'$. Assume first that $v\in R$. Then edge $(u,v)$ lies in $G$, which is impossible since $(L,S,R)$ is a valid vertex-cut in $G$. Therefore, $v=t$ must hold. But since $L\subseteq A$, $u\in A$ must hold. However,  graph $\hat G$ contains no edges connecting vertices of $A$ to $t$, a contradiction. We conclude that $(L,S,R')$ is a valid vertex-cut in $\hat G$.
\end{proof}

The following simple claim allows us to transform vertex-cuts in $G^{|A}$ to vertex-cuts in $G$.

\begin{claim}\label{claim: transforming cuts from defined graph}
	Let  $G$ be a directed graph with weights $w(v)\geq 0$ on its vertices $v\in V(G)$, and let $A\subseteq V(G)$ be a non-empty set of vertices with $A\cup N^+_G(A)\neq V(G)$. Consider the graph $G^{|A}$ derived from $G$ via $A$, and let $(L,S,R)$ be a vertex-cut in $G^{|A}$ with $t\in R$. Then $L \subseteq A$ must hold, and, moreover, $(L,S,V(G)\setminus(L\cup S))$ is a valid vertex-cut in $G$.
\end{claim}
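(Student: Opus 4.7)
The plan is to prove the two assertions in order, both by exploiting the structure of $G^{|A}$ and the fact that $(L,S,R)$ is a valid vertex-cut in it.

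First I would establish $L \subseteq A$. By construction, $V(G^{|A}) = A \cup N^+_G(A) \cup \set{t}$, and for every $u \in N^+_G(A)$, the edge $(u,t)$ lies in $E(G^{|A})$. Since $t \in R$, if some vertex $u \in L$ were to lie in $N^+_G(A) \setminus A$, then the edge $(u,t)$ would connect $L$ to $R$ in $G^{|A}$, contradicting the fact that $(L,S,R)$ is a valid vertex-cut. Since $t \notin L$ either, this forces $L \subseteq A$.

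Next I would verify that $(L, S, R^*) := (L, S, V(G) \setminus (L \cup S))$ is a valid vertex-cut in $G$. The first step is to check that $S \subseteq V(G)$: since $t \in R$ we have $t \notin S$, so $S \subseteq V(G^{|A}) \setminus \set{t} = A \cup N^+_G(A) \subseteq V(G)$. For non-emptiness of $L$ and $R^*$: $L \neq \emptyset$ since $(L,S,R)$ is a vertex-cut in $G^{|A}$, and $R^* \neq \emptyset$ because the hypothesis $A \cup N^+_G(A) \neq V(G)$ supplies a vertex $v \in V(G) \setminus (A \cup N^+_G(A))$; such a $v$ cannot lie in $L \subseteq A$ or in $S \subseteq A \cup N^+_G(A)$, hence $v \in R^*$.

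The main step is to show that no edge of $G$ connects a vertex of $L$ to a vertex of $R^*$. Suppose for contradiction there exists an edge $(u,v) \in E(G)$ with $u \in L$ and $v \in R^*$. Since $u \in L \subseteq A$, Definition~\ref{def: defined graph} guarantees that $\delta^+_G(u) \subseteq E(G^{|A})$, so $(u,v) \in E(G^{|A})$. In particular, $v \in V(G^{|A})$; moreover, $v$ is an out-neighbor of a vertex of $A$ in $G$, so $v \in A \cup N^+_G(A)$, and hence $v \neq t$. Since $v \notin L$ and $v \notin S$ by choice, this forces $v \in R$, contradicting the fact that $(L,S,R)$ is a valid vertex-cut in $G^{|A}$. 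This rules out such an edge and completes the proof.

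I do not anticipate any serious obstacle in this argument; the only subtlety is in checking that every out-neighbor in $G$ of a vertex of $L$ is actually present in $V(G^{|A})$, which is immediate from $L \subseteq A$ and the definition of $G^{|A}$.
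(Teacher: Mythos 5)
Your proof is correct and follows essentially the same route as the paper's: first deduce $L \subseteq A$ from the edges $(u,t)$ present for every $u \in N^+_G(A)$, then rule out an edge of $G$ from $L$ into $V(G)\setminus(L\cup S)$ by using the structure of $G^{|A}$ to derive a contradiction with the validity of $(L,S,R)$. The only difference is cosmetic — you invoke $L \subseteq A$ directly to conclude that the offending edge already lies in $G^{|A}$, whereas the paper splits into cases according to whether its head lies outside $A \cup N^+_G(A)$ — and both arguments are sound.
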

\begin{proof}
	We first prove that $L \subseteq A$.
	Indeed, since $(L,S,R)$ is a vertex-cut in $G^{|A}$ with $t\in R$, $L$ may only vertices $v\in V(G^{|A})\setminus\set{t}$ for which $(v,t)\not\in E(G^{|A})$. From the construction of the graph $G^{|A}$, it is then easy to verify that each such vertex must lie in $A$, so $L\subseteq A$ must hold.
	
	Next, we prove that $(L,S,V(G)\setminus(L\cup S))$ is a valid vertex-cut in $G$.
	Let $X=V(G)\setminus (A\cup N_G^+(A))$ and note that $X\neq \emptyset$ must hold. Then $V(G)\setminus (L\cup S)=(R\setminus\set{t})\cup X$; we denote this set by $R'$. It is now enough to show that $(L,S,R')$ is a valid vertex-cut in $G$. 
	Since $(L,S,R)$ is a valid vertex-cut in $G^{|A}$, and since $X\neq \emptyset$, it is immediate to vertify that $L,R'\neq \emptyset$. 
	
	It now remains to show that no edge connects a vertex of $L$ to a vertex of $R'$ in $G$. Indeed, assume otherwise, and let $(u,v)\in E(G)$ be an edge with $u\in L$ and $v\in R'$. 
	Assume first that $v\in X$. Since $(L,S,R)$ is valid vertex-cut in $G^{|A}$, edge $(u,t)$ may not lie in $G^{|A}$, and  so, by the definition of $G^{|A}$, $u\in A$ must hold. But then by construction of $G^{|A}$, $v\in N^+_G(A)$ must hold, so $v\not\in X$, a contradiction. We conclude that $v\in R\setminus\set{t}$ must hold. Since edge $(u,v)$ may not lie in $G^{|A}$ but it lies in $G$, we get that $u\not\in A$. But then edge $(u,t)$ must lie in $G^{|A}$, contradicting the fact that $(L,S,R)$ is a valid vertex-cut in $G^{|A}$.	
\end{proof}

We also use the following immediate corollary of \Cref{claim: transforming cuts from defined graph}.

\begin{corollary}\label{cl: finding global mincut using defined graph}\label{cor: minimum cut in defined graph}
	Let $G$ be a directed graph with weights $w(v)\geq 0$ on its vertices $v\in V(G)$, and let $A\subseteq V(G)$ be a non-empty set of vertices with $A\cup N^+_G(A)\neq V(G)$. Let $x,y \in V(G)$ be a pair of vertices with $x \in A$ and $y \notin A\cup N^+_G(A)$.
	Consider the graph $G^{|A}$ derived from $G$ via $A$, and let $(L',S',R')$ be a minimum $x$-$t$ vertex-cut in $G^{|A}$.
	Then $(L',S',V(G)\setminus(L'\cup S'))$ is a valid $x$-$y$ vertex-cut in $G$.
	In particular, if we  denote by $\opt_{x,y}$ the value of the minimum $x$-$y$ vertex-cut in $G$, and by $\opt'_{x,t}$ the value of the minimum $x$-$t$ vertex-cut in $G^{|A}$, then $\opt'_{x,t}\geq \opt_{x,y}$ must hold.
	Moreover, if there exists a global minimum vertex-cut $(L,S,R)$ in $G$ with $x \in L$, $y \in R$, and $L \subseteq A$, then $(L',S',V(G)\setminus(L'\cup S'))$ is a global minimum vertex-cut in $G$.
\end{corollary}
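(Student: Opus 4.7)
The plan is to verify each of the three assertions in turn by combining \Cref{claim: transforming cuts from defined graph} (which lifts a vertex-cut from $G^{|A}$ back to $G$) with \Cref{claim: properties of defined graph} (which pushes an induced vertex-cut from $G$ into $G^{|A}$). I would begin by applying \Cref{claim: transforming cuts from defined graph} directly to the minimum $x$-$t$ vertex-cut $(L',S',R')$ of $G^{|A}$. Since $t\in R'$, the claim immediately yields both $L'\subseteq A$ and the fact that $(L',S',V(G)\setminus(L'\cup S'))$ is a valid vertex-cut in $G$. To check that this is an $x$-$y$ cut, $x\in L'$ is immediate from the hypothesis that $(L',S',R')$ is an $x$-$t$ cut, while for $y$ one observes that $L'\subseteq A$ and $S'\subseteq V(G^{|A})\setminus\{t\}\subseteq A\cup N^+_G(A)$, whereas $y\notin A\cup N^+_G(A)$ by assumption; hence $y\notin L'\cup S'$, as required.

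The inequality $\opt'_{x,t}\geq \opt_{x,y}$ then follows immediately from the previous step: the cut constructed there has value $w(S')$ when viewed in $G$, and since $t\notin S'$ and the weights of non-$t$ vertices agree in $G$ and $G^{|A}$, this quantity is exactly $\opt'_{x,t}$. Because this is an $x$-$y$ vertex-cut in $G$, its value is at least $\opt_{x,y}$.

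The portion I expect to require the most care is the third assertion, establishing global optimality under the hypothesis. The natural attempt is to take the global minimum $(L,S,R)$ and apply \Cref{claim: properties of defined graph} to it directly, but this fails because $(L,S,R)$ is not guaranteed to be the tripartition of $V(G)$ induced by $L$. To circumvent this, I would instead work with the induced tripartition $(L,\,N^+_G(L),\,V(G)\setminus(L\cup N^+_G(L)))$. Since $(L,S,R)$ is a valid vertex-cut, no edge leaves $L$ towards $R$, which forces $N^+_G(L)\subseteq S$; in particular $y\in R\subseteq V(G)\setminus(L\cup N^+_G(L))$, so this induced tripartition is itself a valid $x$-$y$ vertex-cut in $G$ whose value $w(N^+_G(L))$ is at most $w(S)$. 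Now \Cref{claim: properties of defined graph}, applied with $L\subseteq A$, lifts this induced cut to a valid vertex-cut $(L,\,N^+_G(L),\,V(G^{|A})\setminus(L\cup N^+_G(L)))$ in $G^{|A}$; this is an $x$-$t$ cut because $x\in L$ and $t\notin L\cup N^+_G(L)$ (since $t\notin A$ and $t$ has no preimage in $V(G)$), and its value in $G^{|A}$ equals $w(N^+_G(L))$ because $t\notin N^+_G(L)$.

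Putting the pieces together, $\opt'_{x,t}\leq w(N^+_G(L))\leq w(S)=\opt$, and since $(L,S,R)$ is itself an $x$-$y$ cut of value $\opt$ we also have $\opt_{x,y}=\opt$. Combining this with Step~2 yields $\opt'_{x,t}=\opt_{x,y}=\opt$, so the $x$-$y$ vertex-cut $(L',S',V(G)\setminus(L'\cup S'))$ constructed in Step~1 has value exactly $\opt$ and is therefore a global minimum vertex-cut of $G$, completing the proof.
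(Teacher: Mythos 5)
Your proof is correct and follows essentially the same route as the paper's: lift the minimum $x$-$t$ cut of $G^{|A}$ back to $G$ via \Cref{claim: transforming cuts from defined graph}, and for global optimality replace $(L,S,R)$ by the tripartition induced by $L$ (exactly the paper's cut $(\hat L,\hat S,\hat R)$ with $\hat S=N^+_G(L)\subseteq S$) before pushing it into $G^{|A}$ with \Cref{claim: properties of defined graph} and invoking minimality of $(L',S',R')$. The only difference is cosmetic bookkeeping (your explicit sandwich $\opt'_{x,t}=\opt_{x,y}=\opt$), so no changes are needed.
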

\begin{proof}
	Let $(L',S',R')$ be a minimum $x$-$t$ vertex-cut in $G^{|A}$, so $w(S')=\opt'_{x,y}$. Denote by $R''=V(G)\setminus (L'\cup S')$, and note that, since $y \notin A\cup N^+_G(A)$, $y\in R''$ must hold. From \Cref{claim: transforming cuts from defined graph}, $(L',S',R'')$ is a valid vertex-cut in $G$. Clearly, since $x\in L'$ and $y\in R''$, it is also a valid $x$-$y$ vertex-cut in $G$. In particular, $\opt_{x,y}\leq w(S')=\opt'_{x,t}$ must hold.
	
	Assume now that there exists a global minimum vertex-cut $(L,S,R)$ in $G$ with $x \in L$, $y \in R$, and $L \subseteq A$.
	Consider the vertex-cut $(\hat{L},\hat{S},\hat{R})$ in $G$ induced by the set $L$ of vertices, so $\hat{L}=L$, $\hat{S} = N^+_G(L)$ and $\hat{R} = V(G) \setminus (\hat{L} \cup \hat{S})$.
	Clearly, $\hat S\subseteq S$ must hold, so $(\hat{L},\hat{S},\hat{R})$ is also a global minimum vertex-cut in $G$, with $x \in \hat{L}$, $y \in \hat{R}$, and $\hat{L} \subseteq A$.
	Then, by \Cref{claim: properties of defined graph}, $(\hat{L},\hat{S},V(G^{|A})\setminus (\hat{L}\cup \hat{S}))$ is a valid vertex-cut in $G^{|A}$. Since $x \in \hat{L}$ and $t \in V(G^{|A})\setminus (\hat{L}\cup \hat{S})$, it is a valid $x$-$t$ vertex-cut in $G^{|A}$.
	Since $(L',S',R')$ is a minimum $x$-$t$ vertex-cut in $G^{|A}$, it follows that $w(S') \leq w(\hat{S})$.
However, since, as we have shown above, $(L',S',R'')$ is a valid vertex-cut in $G$, and $w(S')\leq w(\hat S)\leq w(S)$, we get that it must be a global minimum vertex-cut in $G$.
\end{proof}

\subsection{An Assumption on Strictly Positive Vertex Weights}
\label{subsec: positive weights}

Suppose we are given a simple directed $n$-vertex and $m$-edge graph $G$ with integral weights $0\leq w(v)\leq W$ on its vertices.
It will be convenient for us to assume that all vertex weights are strictly positive. In order to achieve this, we can modify the vertex weights using a standard transformation as follows: for every vertex $v\in V(G)$, we let  $w'(v)=n^2\cdot w(v)+1$. Let $\tilde G$ be the graph that is identical to $G$, except that the weight of every vertex $v$ is set to $w'(v)$, and let $W'=n^2\cdot W+1$, so for every vertex $v\in V(G)$, $1\leq w'(v)\leq W'$ holds. Note that, if $(L,S,R)$ is a global minimum vertex-cut in $\tilde G$, then it must also be a global minimum vertex-cut in $G$. Therefore, in order to obtain an algorithm for the global minimum vertex-cut problem, it is enough to design an algorithm for the special case where all vertex weights are strictly positive.

Note that, if all vertex weights in $G$ are strictly positive, and  $(L,S,R)$ is a global minimum vertex-cut in $ G$, then every vertex $v\in S$ must have an in-neighbor $u\in N_G^-(v)$ that lies in $L$, since otherwise, by moving $v$ from $S$ to $R$, we obtain a vertex-cut of a smaller value. 
Therefore, $S=N_G^+(L)$ must hold, and in particular, $(L,S,R)$ is a vertex-cut induced by $L$. Using a similar reasoning, every vertex $v\in S$ must have an out-neighbor $u'\in N_G^+(v)$ that lies in $R$, and so $S=N_G^-(R)$.

\subsection{Parameters $\wmax$ and $\wmax'$ and The Split Graph}
\label{subsec: split graph}

In this subsection we assume that we  are given a directed graph $G=(V,E)$ with integral weights $w(v)\geq 0$ on its vertices $v\in V(G)$.

\paragraph{Parameters $\wmax$ and $\wmax'$.}
These parameters are defined exactly like in \cite{CT24}.
We denote by $W'_{\max}(G)$ the smallest integral power of $2$, such that $\wmax'(G)\geq 1+ \max_{v\in V}\set{w(v)}$ holds. We let $W_{\max}(G)$ be the smallest integral power of $2$ with $\wmax(G)\geq8n\cdot W'_{\max}(G)$. When the graph $G$ is clear from context, 
we denote $\wmax'(G)$ and $\wmax(G)$ by $\wmax'$ and $\wmax$, respectively. Notice that:

\begin{equation}\label{eq: bound on wmax}
	\wmax(G)\leq 16n\cdot \wmax'(G)\leq 32n\cdot(1+ \max_{v\in V}\set{w(v)}).
\end{equation}

\paragraph{The Split Graph.}
The split graph is defined exactly like in \cite{CT24}, except that we extend the definition to directed graphs.
Given a directed vertex-weighted graph $G=(V,E)$, its \emph{split graph} is a directed graph $G'$, that is defined as follows. The set of vertices of $G'$ contains two copies of every vertex $v\in V$, an \emph{in-copy} $v^{\inn}$, and an \emph{out-copy} $v^{\out}$, so $V(G')=\set{v^{\inn},v^{\out}\mid v\in V}$. The set of edges of $G'$ is partitioned into two subsets: the set $E^{\spec}$ of \emph{special} edges, and the set $E^{\reg}$ of \emph{regular edges}. For every vertex $v\in V(G)$, the set $E^{\spec}$ of special edges contains the edge $e_v=(v^{\inn},v^{\out})$, whose capacity $c(e_v)$ is set to be $w(v)$; we refer to $e_v$ as the \emph{special edge representing $v$}. Additionally, for every edge $e=(x,y)\in E(G)$, we add a regular edge $(x^{\out},y^{\inn})$ of capacity $c(x^{\out},y^{\inn})=\wmax(G)$ to $G'$, representing $e$.
This completes the definition of the split graph $G'$. Observe that $|V(G')|=2|V(G)|$ and $|E(G')|\leq O(|E(G)|)$; moreover, there is an algorithm that, given $G$, computes $G'$ in time $O(|E(G')|)$.

Given a subset $Z\subseteq V(G)$ of vertices of $G$, we denote by $Z^{\inn}=\set{v^{\inn}\mid v\in Z}$, $Z^{\out}=\set{v^{\out}\mid v\in Z}$, and $Z^*=Z^{\inn}\cup Z^{\out}$.
We also denote by  $V^{\inn}=\set{v^{\inn}\mid v\in V(G)}$ and $V^{\out}=\set{v^{\out}\mid v\in V(G)}$.

Note that, for every pair $s,t\in V(G)$ of vertices, the value of the minimum $s$-$t$ vertex-cut in $G$ is equal to the value of the minimum $s^{\out}$--$t^{\inn}$ edge-cut in $G'$, which, in turn, is equal to the value of the maximum  $s^{\out}$--$t^{\inn}$ flow in $G'$ from the Max-Flow/Min-Cut theorem. In particular, the value of the global minimum vertex-cut in $G$ is equal to maximum value of the $s^{\out}$--$t^{\inn}$ flow in $G'$, over all vertex pairs  $s,t\in G$.

The following two simple observations, that were proved implicitly or explicitly  numerous times in prior work, show that an $s$-$t$ vertex-cut in $G$ can be transformed into an $s^{\out}$-$t^{\inn}$ edge-cut  of the same value in the split graph, and vise versa.  For completeness, we include their proofs in sections \ref{subsec : proof of obs existential from regualr to split} and \ref{subsec: proof of obs existential from split to regular} of appendix, respectively.

\begin{observation}\label{obs: existential from regular to split}
	Let $G$ be an directed graph with integral weights $0\leq w(v)\leq W$ on its vertices $v\in V(G)$, let $s,t\in V(G)$ be a pair of distinct vertices, and let $(L,S,R)$ be an $s$-$t$ vertex cut in $G$.
	Consider the corresponding split graph $G'$, and for every edge $e\in E(G')$, denote its capacity in $G'$ by $c(e)$. Let $X = L^* \cup S^{\inn}$ and $Y = V(G') \setminus X = S^{\out} \cup R^*$.
	Then $(X,Y)$ is an $s^{\out}$-$t^{\inn}$ edge-cut in $G'$ whose value is equal to $w(S)$.
\end{observation}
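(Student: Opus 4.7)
The plan is to first verify that $(X,Y)$ is a valid edge-cut separating $s^{\out}$ from $t^{\inn}$, and then to compute $c(E_{G'}(X,Y))$ by going through the two types of edges of $G'$ separately.

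First I would observe that $X$ and $Y$ partition $V(G')$. Since $(L,S,R)$ is a partition of $V(G)$, we have $V(G') = L^* \cup S^* \cup R^* = L^{\inn} \cup L^{\out} \cup S^{\inn} \cup S^{\out} \cup R^{\inn} \cup R^{\out}$, and these six sets are pairwise disjoint. By definition $X = L^{\inn} \cup L^{\out} \cup S^{\inn}$ and $Y = S^{\out} \cup R^{\inn} \cup R^{\out}$, so $X \cap Y = \emptyset$ and $X \cup Y = V(G')$. Since $s \in L$, we have $s^{\out} \in L^{\out} \subseteq X$, and since $t \in R$, we have $t^{\inn} \in R^{\inn} \subseteq Y$, so both $X$ and $Y$ are nonempty and $(X,Y)$ is an $s^{\out}$-$t^{\inn}$ edge-cut in $G'$.

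Next I would compute $c(E_{G'}(X,Y))$ by splitting the edges of $G'$ into the special edges $E^{\spec}$ and the regular edges $E^{\reg}$, and checking which edges of each type contribute to the cut. For a special edge $e_v = (v^{\inn}, v^{\out})$ of capacity $w(v)$, I would case on which part of $(L,S,R)$ the vertex $v$ lies in: if $v \in L$ then both endpoints lie in $X$; if $v \in R$ then both endpoints lie in $Y$; and if $v \in S$ then $v^{\inn} \in X$ and $v^{\out} \in Y$, so $e_v$ is the unique case that contributes, adding exactly $w(v)$ to the cut value. Summing over $v \in S$ gives a contribution of $w(S)$ from special edges.

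For the regular edges, I would argue that no regular edge crosses from $X$ to $Y$. By inspection, for a regular edge $(x^{\out}, y^{\inn})$, the condition $x^{\out} \in X$ forces $x \in L$ (since $X \cap V^{\out} = L^{\out}$), and the condition $y^{\inn} \in Y$ forces $y \in R$ (since $Y \cap V^{\inn} = R^{\inn}$). But then $(x,y)$ would be an edge of $G$ with $x \in L$ and $y \in R$, contradicting the assumption that $(L,S,R)$ is a vertex-cut in $G$. Hence the regular edges contribute $0$ to the cut value, and the total cut value equals $w(S)$, completing the proof. I do not expect any real obstacle here; the only point that requires attention is the direction of the split-graph convention (out-copy of source on the $X$ side, in-copy of target on the $Y$ side), which is what makes the vertex-cut property of $(L,S,R)$ kill all regular edge contributions.
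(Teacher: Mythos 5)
Your proposal is correct and follows essentially the same argument as the paper's proof in the appendix: verify that $s^{\out}\in X$, $t^{\inn}\in Y$, show that no regular edge can cross from $X$ to $Y$ because that would force an edge of $G$ from $L$ to $R$, and observe that the crossing edges are exactly the special edges representing vertices of $S$, giving cut value $w(S)$. The only difference is organizational (you split by edge type up front, while the paper examines an arbitrary crossing edge), which is immaterial.
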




\begin{observation}\label{obs: existential from split to regular}
	Let $G$ be an directed graph with integral weights $0\leq w(v)\leq W$ on its vertices $v\in V(G)$, and let $s,t\in V(G)$ be a pair of distinct vertices.
	Consider the corresponding split graph $G'$, and for every edge $e\in E(G')$, denote its capacity in $G'$ by $c(e)$. Let $(X,Y)$ be an $s^{\out}$-$t^{\inn}$ edge-cut in $G'$ whose value $c(X,Y)<\wmax(G)$. Let $L=\{v \in V(G) \setminus \{t\} \mid v^{\out} \in X\}$, $S=\set{v\in V(G)\mid v^{\inn}\in X,v^{\out}\in Y}$, and let $R=V(G)\setminus(L\cup S)$. Then $(L,S,R)$ is an $s$-$t$ vertex-cut in $G$, and $w(S) \leq c(X,Y)$. Moreover, if we let $(L,S',R')$ be the tripartition of $V(G)$ induced by $L$, then $(L,S',R')$ is an $s$-$t$ vertex-cut in $G$ and $S'\subseteq S$ must hold; in particular, $w(S') \leq c(X,Y)$. 
\end{observation}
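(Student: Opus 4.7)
}
The plan is to first exploit the capacity assumption $c(X,Y) < \wmax(G)$ in order to pin down exactly which edges of $G'$ can cross from $X$ to $Y$. Every regular edge has capacity $\wmax(G)$, so if any regular edge were in $E_{G'}(X,Y)$, the value of the cut would already be at least $\wmax(G)$, contradicting our assumption. Hence $E_{G'}(X,Y)$ consists only of special edges $e_v = (v^{\inn},v^{\out})$, and each such edge is in $E_{G'}(X,Y)$ iff $v^{\inn} \in X$ and $v^{\out} \in Y$, i.e., iff $v \in S$. Therefore $c(X,Y) = \sum_{v \in S} w(v) = w(S)$, which yields the inequality $w(S) \le c(X,Y)$ (in fact with equality).

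Next I would verify that $(L,S,R)$ is a valid $s$-$t$ vertex-cut of $G$. By construction the three sets partition $V(G)$. The vertex $s$ lies in $L$ since $s^{\out} \in X$ and $s \neq t$; the vertex $t$ does not lie in $L$ (excluded by definition) and does not lie in $S$ (since $S$ requires $v^{\inn}\in X$, but $t^{\inn}\in Y$), so $t \in R$. It remains to show that no edge of $G$ goes from $L$ to $R$. Take any edge $(u,v) \in E(G)$ with $u\in L$, so $u^{\out}\in X$. The regular edge $(u^{\out},v^{\inn})$ lies in $G'$ and, by the first paragraph, cannot cross from $X$ to $Y$, so $v^{\inn}\in X$. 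A short case analysis on whether $v^{\out}\in X$ or $v^{\out}\in Y$ (and ruling out $v=t$ via $t^{\inn}\in Y$) shows that $v\in L\cup S$, hence $v\notin R$, as needed.

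For the "moreover" part, let $(L,S',R')$ be the tripartition of $V(G)$ induced by $L$, i.e.\ $S' = N_G^+(L)\setminus L$ and $R' = V(G)\setminus (L\cup S')$. I would first argue $S'\subseteq S$, which will immediately give $w(S')\le w(S)\le c(X,Y)$: for any $v\in S'$ there is $u \in L$ with $(u,v)\in E(G)$, so the regular edge $(u^{\out},v^{\inn})$ does not cross the cut, giving $v^{\inn}\in X$; combined with $v\notin L$ and the fact that $v\neq t$ (again from $t^{\inn}\in Y$) this forces $v^{\out}\in Y$, so $v\in S$. To see that $(L,S',R')$ is an $s$-$t$ vertex-cut, $s\in L$ is immediate, and $t\in R'$ follows because $t\notin L$ and $t\notin S'$ (the latter because any edge $(u,t)\in E(G)$ with $u\in L$ would give a regular edge $(u^{\out},t^{\inn})$ crossing from $X$ to $Y$, contradicting paragraph one).

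I do not anticipate any real obstacle: the entire proof hinges on the single observation that, because regular edges have capacity $\wmax(G)$, the assumption $c(X,Y)<\wmax(G)$ forces all crossing edges to be special, after which the rest is a routine case check on the membership of the endpoints of each $G$-edge in $X$ versus $Y$.
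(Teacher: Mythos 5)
Your proposal is correct and follows essentially the same argument as the paper: the assumption $c(X,Y)<\wmax(G)$ forces every crossing edge to be special, which gives both the bound on $w(S)$ and the absence of $L$-to-$R$ edges. The only (immaterial) difference is in the ``moreover'' part, where you re-derive $S'\subseteq S$ and $t\in R'$ directly from the split-graph structure, whereas the paper deduces them from the already-established validity of $(L,S,R)$.
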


We obtain the following simple corollary of \Cref{obs: existential from split to regular}, whose proof follows standard arguments and is included in Section \ref{subsec: proof of cor minimum cut in split graph} of Appendix for completeness.

\begin{corollary}\label{cor : minimum cut in split graph}
	Let $G$ be an directed graph with integral weights $0\leq w(v)\leq W$ on its vertices $v\in V(G)$, and let $s,t\in V(G)$ be a pair of distinct vertices such that $(s,t)\not\in E(G)$.
	Consider the corresponding split graph $G'$.
	Then, the value of the minimum $s$-$t$ vertex-cut in $G$ is equal to the value of the minimum $s^{\out}$-$t^{\inn}$ edge-cut in $G'$, and this value is less than $\frac{\wmax(G)}{2}$.
\end{corollary}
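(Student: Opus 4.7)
The plan is to establish the equality of the two cut values through two inequalities, and to bound both quantities from above by $\wmax(G)/2$ using the trivial $s$-$t$ vertex-cut that is guaranteed to exist by the assumption $(s,t) \notin E(G)$.

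First, I would observe that since $(s,t) \notin E(G)$, the tripartition $(\{s\}, V(G)\setminus\{s,t\}, \{t\})$ is a valid $s$-$t$ vertex-cut in $G$, whose value is $\sum_{v \in V(G)\setminus\{s,t\}} w(v) \leq n \cdot \max_{v \in V(G)}\set{w(v)} \leq n \cdot W'_{\max}(G)$. Combining this with \Cref{eq: bound on wmax}, which gives $\wmax(G) \geq 8n\cdot W'_{\max}(G)$, I conclude that the value of the minimum $s$-$t$ vertex-cut in $G$ is at most $\wmax(G)/8 < \wmax(G)/2$.

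Next, I would prove the inequality that the value of the minimum $s^{\out}$-$t^{\inn}$ edge-cut in $G'$ is at most the value of the minimum $s$-$t$ vertex-cut in $G$. Let $(L,S,R)$ be a minimum $s$-$t$ vertex-cut in $G$. Applying \Cref{obs: existential from regular to split}, we obtain an $s^{\out}$-$t^{\inn}$ edge-cut $(X,Y)$ in $G'$ whose value equals $w(S)$; hence the minimum $s^{\out}$-$t^{\inn}$ edge-cut in $G'$ has value at most $w(S)$, which is strictly less than $\wmax(G)/2$.

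For the reverse inequality, I would take $(X,Y)$ to be a minimum $s^{\out}$-$t^{\inn}$ edge-cut in $G'$. By the preceding paragraph, its value is strictly less than $\wmax(G)/2 < \wmax(G)$, so \Cref{obs: existential from split to regular} applies and yields an $s$-$t$ vertex-cut in $G$ whose value is at most $c(X,Y)$. Therefore the minimum $s$-$t$ vertex-cut in $G$ has value at most that of the minimum $s^{\out}$-$t^{\inn}$ edge-cut in $G'$. Combining the two inequalities gives the claimed equality, and both values are strictly less than $\wmax(G)/2$ as already established. The argument is essentially a routine composition of the two observations; there is no real obstacle, only the minor subtlety of verifying that the capacity-threshold condition of \Cref{obs: existential from split to regular} is satisfied, which is handled by the trivial cut bound.
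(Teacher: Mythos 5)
Your proposal is correct and follows essentially the same route as the paper: bound the value of the trivial cut $(\{s\},V(G)\setminus\{s,t\},\{t\})$ by well below $\wmax(G)/2$, get $\opt_{G'}\leq \opt_G$ via \Cref{obs: existential from regular to split}, and use that bound to verify the capacity condition $c(X,Y)<\wmax(G)$ needed to apply \Cref{obs: existential from split to regular} for the reverse inequality. One small correction: the inequality $\wmax(G)\geq 8n\cdot W'_{\max}(G)$ you invoke comes directly from the definition of $\wmax(G)$ in \Cref{subsec: split graph}, not from \Cref{eq: bound on wmax}, which is the upper bound $\wmax(G)\leq 16n\cdot\wmax'(G)$.
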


\subsection{A Split Graph with Shortcuts and the Modified Adjacency-List Representation}
\label{subsec: split w shortcuts}

\paragraph{A split graph with shortcuts.}
Assume that we are given a directed graph $G=(V,E)$ with integral weights $w(v)\geq 1$ on its vertices $v\in V(G)$, and consider the corresponding split graph $G'$. We will sometimes use a \emph{split graph with shortcuts}, that is defined as follows. Let $B\subseteq V(G')$ be any subset of vertices of $G'$. The corresponding split graph with $B$-shortcuts, denoted by $G''(B)$, is obtained from the graph $G'$ by first adding a new vertex $t$ to it, and then connecting every vertex $v\in B$ to the vertex $t$ with a directed edge whose capacity is $\wmax$. These newly added edges are considered regular edges of the resulting graph $G''(B)$.

\paragraph{Modified adjacency-list representation.}
Our algorithm may sometimes need to utilize a slightly unusual analogue of the adjacency-list representation of the split graph $G'$ and of the split graph with shortcuts $G''(B)$. In order to define this representation, we first partition the vertices of the graph $G$ into groups by their weight geometrically: for all $1\leq i\leq \log(\wmax')+1$, we define $V_i=\set{v\in V(G)\mid 2^{i-1}\leq w(v)<2^i}$.
Additionally, assume that we are given a degree threshold $\td$, and denote by $\Vhd=\set{v\mid \deg_G(v)\geq \td}$ and by $\Vld=V\setminus\Vhd=\set{v\mid \deg_G(v)<\td}$; we refer to the vertices in $\Vhd$ as \emph{high-degree vertices} and to the vertices in $\Vld$ as \emph{low-degree vertices}.
Finally, assume that we are given a subset $B\subseteq V(G')$ of vertices of the split graph $G'$, and consider the corresponding split graph with shortcuts $G''(B)$. The modified adjacency-list representation of $G''(B)$ is defined as follows.

First, it contains an array $\IN(t)$, listing all edges that enter the destination vertex $t$ in $G''(B)$. Consider now some vertex $v\in V(G')\setminus\set{t}$, and assume first that $v$ is an in-copy of some vertex $u\in V(G)$, so $v=u^{\inn}$. The modified adjacency-list representation of $G''(B)$ contains the following data structures for $v$:

\begin{itemize}
	\item a bit $b(v)$ indicating whether the edge $(v,t)$ is present in $G''(B)$;
	\item the unique special edge $\spec(v)=(u^{\inn},u^{\out})$ that is incident to $v=u^{\inn}$ together with its capacity $c(u^{\inn},u^{\out})$; and
	\item for all $1\leq i\leq \log(\wmax')+1$, a doubly-linked list $\IN^{\hd}_i(v)$, containing all regular edges $(a^{\out},v)$ that are incident to $v$, for which $a\in V_i\cap \Vhd$,
	 and a doubly-linked list $\IN^{\ld}_i(v)$, containing all regular edges $(a^{\out},v)$ that are incident to $v$, for which $a\in V_i\cap \Vld$; we  also denote by $\IN_i(v)=\IN^{\hd}_i(v)\cup \IN^{\ld}_i(v)$, and we store all edges of $\IN_i(v)$ in an efficient search data structure, that allows to look up an edge $(u,v)$ via its endpoint $u$.
\end{itemize}

Similarly, if $v$ is an out-copy of some vertex $u\in V(G)$, so $v=u^{\out}$,  the modified adjacency-list representation of $G''(B)$ contains the following data structures for $v$:

\begin{itemize}
	\item a bit $b(v)$ indicating whether the edge $(v,t)$ is present in $G''(B)$;
	\item the unique special edge $\spec(v)=(u^{\inn},u^{\out})$ that is incident to $v=u^{\out}$ together with its capacity $c(u^{\inn},u^{\out})$; and
	\item for all $1\leq i\leq \log(\wmax')+1$,  a doubly-linked list $\OUT^{\hd}_i(v)$, that contains all regular edges $(v,a^{\inn})$ that are incident to $v$, for which $a\in V_i\cap \Vhd$ and a doubly-linked list $\OUT^{\ld}_i(v)$, that contains all regular edges $(v,a^{\inn})$ that are incident to $v$, for which $a\in V_i\cap \Vld$; we will also denote by $\OUT_i(v)=\OUT^{\hd}_i(v)\cup \OUT^{\ld}_i(v)$, and store all edges of $\OUT_i(v)$ in an efficient search  data structure, that allows to look up an edge $(v,u)$ via its endpoint $u$.
\end{itemize}

This completes the definition of the modified adjacency-list representation of the graph $G''(B)$. The modified adjacency-list representation of graph $G'$ is defined analogously.

Consider now some value $1\leq \tau\leq \log(\wmax')+1$ that is an integral power of $2$. Let $G_{\geq \tau}$ be the graph obtained from $G$ by deleting from it all vertices $v$ with $w(v)<\tau$. Let $G'_{\geq\tau}$ be the split graph corresponding to $G_{\geq \tau}$. Note that graph $G'_{\geq \tau}$ can be equivalently obtained from the split graph $G'$ by deleting from it the copies of all vertices $v\in V(G)$ with $w(v)<\tau$. Assume now that we are given a modified adjacency-list representation of the graph $G'$. Then we can use it to obtain access to the adjacency-list representation of $G'_{\geq \tau}$, by simply ignoring the copies of all vertices $v\in V(G)$ with $w(v)<\tau$, and the arrays $\IN_i(u^{\inn}),\OUT_i(u^{\out})$ for all $u\in V(G)$ with $w(u)\geq \tau$ and $i<\log \tau$. 

Similarly, assume that we are given a subset $B\subseteq V(G')$ of vertices of the split-graph. 
Let $G''_{\geq \tau}(B)$ be the graph obtained from $G''(B)$ by deleting from it all vertices that are copies of the vertices of $\set{u\in V(G)\mid w(u)<\tau}$.
Then a modified adjacency-list representation of the graph $G''(B)$ can be used to obtain access to the adjacency-list representation of $G''_{\geq \tau}(B)$, exactly as before.

Finally, the modified adjacency-list representation of the graph $G''(B)$ allows us to efficiently access, for every vertex $u\in V(G'')\setminus \set{t}$, all neighbors of $u$ in $G''$ that correspond to low-degree vertices of $G$ whose weight is at least as high some given threshold $\tau$ that is an integral power of $2$.

\subsection{Fast Algorithms for Flows and Cuts}

Our algorithm relies on the recent breakthrough almost-linear time algorithm for Minimum Cost Maximum $s$-$t$ Flow \cite{CKLP22,BCP23}. The algorithm of \cite{BCP23} first computes a fractional solution to the problem, whose cost is close to the optimal one, and then rounds the resulting flow via the Link-Cut Tree data structure~\cite{ST83,KP15} in order to compute the optimal flow. In particular, the flow that they obtain is integral, provided all edge capacities are integral as well (see Lemma $4.1$ in~\cite{BCP23}). This result is summarized in the following theorem (see Theorem $1.1$ and Lemma 4.1 in \cite{BCP23}).

\begin{theorem}
	\label{thm: maxflow}
	There is a deterministic algorithm, that receives as input a directed $m$-edge graph $G$ with integral capacities $1\leq u(e)\leq U$ and  integral costs $1\leq c(e)\leq C$ on edges $e\in E(G)$, together with two vertices $s,t\in V(G)$. The algorithm computes a minimum-cost maximum $s$-$t$ flow $f$ in $G$, such that $f$ is integral. The running time of the algorithm is $O\left (m^{1+o(1)}\cdot \log U\cdot  \log C\right )$.
\end{theorem}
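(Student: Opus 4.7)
The plan is to derive this result by combining two ingredients: the almost-linear time interior-point-method (IPM) based algorithm for approximate min-cost max-flow, and a deterministic rounding procedure that converts a near-optimal fractional solution into an exact integral one without asymptotically increasing the running time. The first ingredient would follow the IPM framework of Chen--Kyng--Liu--Peng, which reduces min-cost flow to a sequence of $\tilde O(\sqrt m)$ (in fact $m^{o(1)}$, after the $\ell_1$-IPM plus dynamic low-stretch tree / expander decomposition machinery) calls to a dynamic data structure maintaining approximate electrical flows on an implicit spanner-like object, each call taking amortized $m^{o(1)}$ time. To match the stated bound with its $\log U\cdot \log C$ dependence, I would first rescale capacities and costs to be integers in $[1,\mathrm{poly}(m,U,C)]$, then run the IPM to produce a fractional feasible flow $\tilde f$ whose cost exceeds $\mathsf{OPT}$ by at most $1/\mathrm{poly}(m,U,C)$, with additional errors swept into the $m^{o(1)}$ overhead; the precision required is only $O(\log(mUC))$ bits per edge.

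Next, I would round $\tilde f$ to an integral optimal min-cost max-flow $f$. The standard recipe, used in BCP23's Lemma 4.1, is to alternate two steps. First, extract an exactly feasible (but possibly fractional) flow $f'$ of value equal to $\lfloor \mathrm{val}(\tilde f)\rfloor$ (or, since all capacities are integral, simply $\mathrm{val}(f^*)$) whose cost differs from $\mathsf{OPT}$ by strictly less than $1$; with integral capacities and costs, any such $f'$ must share the optimal cost once rounded. Second, remove fractionality by repeatedly canceling cycles in the support of $f'$ whose total cost is $\le 0$, shifting fractional flow until at least one edge becomes integral or saturated/empty. Implementing this naively is too slow, but a classical link-cut-tree based implementation (as in Sleator--Tarjan and Kaplan--Peng) maintains the support of $f'$ as a forest plus a small set of extra edges and realizes each cycle cancellation in amortized $O(\log m)$ time while killing at least one fractional edge. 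Since the support has $O(m)$ fractional edges to begin with, the entire rounding phase runs in $\tilde O(m)$ time, which is dominated by the IPM cost.

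Finally, I would verify correctness and the running-time bookkeeping: feasibility is preserved by construction in both phases; optimality follows because the starting fractional flow has cost within $1$ of $\mathsf{OPT}$ and all cycle cancellations are non-increasing in cost, while integrality of capacities and costs forces the final integral flow's cost to equal $\mathsf{OPT}$; the value equals the max-flow value because the initial IPM solution achieves it up to additive $1/\mathrm{poly}$ and integral rounding on integral-capacity graphs reaches it exactly. The total running time is $m^{1+o(1)}\cdot \mathrm{polylog}(UC)$, and the polylog factors can be pinned down to $\log U\cdot \log C$ by scaling the IPM error parameter accordingly.

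The main obstacle I anticipate is the rounding step: the IPM returns a flow that is only approximately feasible (small residual demand imbalances at vertices) and only approximately optimal, so one must argue that these two kinds of errors can be cleaned up simultaneously within the stated time bound. The key technical point is that, because capacities and costs are integers bounded by $U$ and $C$, any feasible integral flow has cost that is a multiple of $1$, so once $f'$ has cost within $1$ of $\mathsf{OPT}$ the subsequent non-cost-increasing cycle cancellations automatically land on an optimum; the challenge is purely algorithmic, namely implementing cycle cancellation on the (potentially dense) fractional support in $\tilde O(m)$ time, which is exactly what the link-cut-tree data structure of Sleator--Tarjan affords.
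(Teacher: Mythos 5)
The paper does not give a proof of this theorem: it is stated verbatim as an import from prior work, citing Theorem~1.1 and Lemma~4.1 of Brand--Chen--Peng (their reference \cite{BCP23}). Your sketch is a reasonable recapitulation of the roadmap behind that result, and it matches the paper's one-sentence description of the cited argument (almost-linear IPM producing a near-optimal fractional solution, followed by Link-Cut-Tree rounding to an integral optimum), so there is no ``different route'' to compare against --- the paper simply outsources the proof.

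Two cautions about your account. First, you attribute the almost-linear IPM to Chen--Kyng--Liu--Peng, but that version is \emph{randomized}; the theorem as stated claims a \emph{deterministic} algorithm, and that is precisely the contribution of \cite{BCP23}, which derandomizes the data-structure side (the dynamic min-ratio-cycle maintenance) using a deterministic fully dynamic spanner and expander decomposition machinery. If you present this as a proof you must ground the deterministic claim in \cite{BCP23} rather than \cite{CKLP22}. Second, you correctly flag the real technical crux --- the IPM output is only approximately feasible and approximately optimal, and one must repair both simultaneously in $m^{1+o(1)}$ time --- but you do not actually resolve it; you only observe that integrality of capacities and costs makes ``within additive $1$'' suffice. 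That observation is the right reduction, but the actual cleanup argument (restoring exact feasibility from residual imbalances, then rounding along the support with Link-Cut Trees so that each elementary operation removes at least one fractional edge) is the content of \cite{BCP23}'s Lemma~4.1 and is not self-contained in your write-up. Since the paper itself treats this as a black box, citing those specific lemmas is the right thing to do rather than attempting a self-contained proof.
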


We also use the following immediate corollary of \Cref{thm: maxflow}, whose proof follows standard arguments and can be found, e.g. in \cite{CT24}.

\begin{corollary}[Corollary 2.9 in \cite{CT24}]\label{cor: mincut}
	There is a deterministic algorithm, that receives as input a directed $m$-edge graph $G$ with integral capacities $1\leq u(e)\leq U$ on edges $e\in E(G)$, together with two vertices $s,t\in V(G)$. The algorithm computes a minimum $s$-$t$ edge-cut $(S,T)$ in $G$. The running time of the algorithm is $O\left (m^{1+o(1)}\cdot \log U\right )$.
\end{corollary}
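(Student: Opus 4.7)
}
The plan is to reduce the minimum $s$-$t$ edge-cut problem to the maximum $s$-$t$ flow problem via the Max-Flow/Min-Cut duality, and then invoke \Cref{thm: maxflow} as a black box. First, I would construct an auxiliary instance of Minimum-Cost Maximum $s$-$t$ Flow on the same graph $G$ with the given capacities $u(e)$ and with all edge costs set to $c(e)=1$. Since the costs are trivially bounded by $C=1$, applying \Cref{thm: maxflow} to this instance produces an integral maximum $s$-$t$ flow $f$ in $G$ in time $O(m^{1+o(1)}\cdot \log U \cdot \log C) = O(m^{1+o(1)}\cdot \log U)$.

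Next, I would recover a minimum $s$-$t$ edge-cut from $f$ in the standard way. Construct the residual flow network $G_f$ as defined in \Cref{subsec: derived subgraphs}; this takes $O(m)$ time given $f$ in the sparse representation guaranteed by \Cref{thm: maxflow}, since the number of edges in $G_f$ is at most $2m$. Let $S$ be the set of vertices reachable from $s$ in $G_f$ (computed by a single BFS or DFS in $O(m)$ time), and let $T = V(G)\setminus S$. Since $f$ is a maximum flow, $t$ is not reachable from $s$ in $G_f$, so $s\in S$ and $t\in T$, meaning $(S,T)$ is a valid $s$-$t$ edge-cut. Moreover, by construction, every edge of $E_G(S,T)$ is saturated by $f$ and every edge of $E_G(T,S)$ carries no flow, so by \Cref{fact : residual-capacity-of-edge-cut} (or equivalently by the standard Max-Flow/Min-Cut theorem), the value of the cut $(S,T)$ in $G$ equals $\val(f)$, which matches the value of the minimum $s$-$t$ edge-cut. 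Hence $(S,T)$ is a minimum $s$-$t$ edge-cut.

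The total running time is dominated by the call to \Cref{thm: maxflow}, giving $O(m^{1+o(1)}\cdot \log U)$ overall; the postprocessing contributes only an $O(m)$ additive term. There is no substantive obstacle here: the proof is essentially a direct invocation of the almost-linear time max-flow algorithm combined with the classical BFS-based recovery of a min-cut from a max-flow, and the only minor point to verify is that setting unit edge costs is legitimate input to \Cref{thm: maxflow} and yields the desired runtime bound after the $\log C$ factor collapses.
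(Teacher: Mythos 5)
Your proposal is correct and follows the same standard reduction the paper intends (the paper itself defers the proof to \cite{CT24} as a ``standard argument''): reduce min $s$-$t$ cut to max $s$-$t$ flow, invoke \Cref{thm: maxflow} with unit costs so the $\log C$ factor is $O(1)$, and recover the cut via BFS/DFS in the residual network. One trivial slip: the residual flow network is defined in the preliminaries paragraph of Section 2.1 rather than in \Cref{subsec: derived subgraphs}, but that has no bearing on the argument.
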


Finally, we use the following immediate corollary of \Cref{cor: mincut}, whose proof also uses standard techniques. The corollary was proven for undirected graphs in \cite{CT24} (see Corollary 2.10). The proof for directed graphs, that appears below, is esentially identical. 

\begin{corollary}\label{cor: min_vertex_cut}
	There is a deterministic algorithm, that receives as input a directed $m$-edge graph $G$ with integral weights $1\leq w(v)\leq W$ on vertices $v\in V(G)$, together with two disjoint subsets $S,T\subseteq V(G)$ of vertices, such that no edge of $G$ connects a vertex of $S$ to a vertex of $T$. The algorithm computes a minimum $S$-$T$ vertex-cut $(X,Y,Z)$ in $G$. The running time of the algorithm is $O\left (m^{1+o(1)}\cdot \log W\right )$.
\end{corollary}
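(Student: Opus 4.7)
The plan is to reduce the minimum $S$-$T$ vertex-cut problem to a single-source single-sink minimum edge-cut problem on a modified split graph, and invoke \Cref{cor: mincut}. First, I construct the split graph $G'$ of $G$ as in \Cref{subsec: split graph}, with the following modification: for every vertex $v \in S \cup T$, I reset the capacity of the special edge $e_v = (v^{\inn}, v^{\out})$ to $\wmax$. I then add two new auxiliary vertices $s^*$ and $t^*$, and, for every $v \in S$, the edge $(s^*, v^{\out})$ of capacity $\wmax$; and for every $v \in T$, the edge $(v^{\inn}, t^*)$ of capacity $\wmax$. Call the resulting graph $\hat G$. The construction runs in $O(m)$ time from an adjacency-list representation of $G$, and $|E(\hat G)| = O(m)$.

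Next, I invoke \Cref{cor: mincut} on $\hat G$ with source $s^*$ and sink $t^*$ to compute a minimum $s^*$-$t^*$ edge-cut $(X, Y)$ in time $O(m^{1+o(1)} \cdot \log \wmax) = O(m^{1+o(1)} \cdot \log W)$, where I use \Cref{eq: bound on wmax} to bound $\log \wmax = O(\log(nW))$, which is $O(\log W)$ (swallowing the $\log n$ factor into the $m^{o(1)}$). Following the template of \Cref{obs: existential from regular to split}, the trivial $S$-$T$ vertex-cut $(S, V(G) \setminus (S \cup T), T)$ (well-defined by the hypothesis that no edge connects $S$ to $T$) gives rise to an $s^*$-$t^*$ edge-cut in $\hat G$ of value at most $n \cdot W < \wmax$. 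Hence $c_{\hat G}(X, Y) < \wmax$, which implies that no edge in $E_{\hat G}(X,Y)$ has capacity $\wmax$. In particular, the cut edges are exactly a subset of the special edges $e_v$ for $v \in V(G) \setminus (S \cup T)$.

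I then translate $(X,Y)$ back to a vertex-cut in $G$ in the manner of \Cref{obs: existential from split to regular}: define $Y' = \{v \in V(G) \setminus (S \cup T) \mid e_v \in E_{\hat G}(X,Y)\}$, $X' = \{v \in V(G) \setminus Y' \mid v^{\out} \in X\}$, and $Z' = V(G) \setminus (X' \cup Y')$, and output $(X', Y', Z')$. The same argument as in \Cref{obs: existential from split to regular} shows that no edge of $G$ connects $X'$ to $Z'$ (any such edge would give a regular edge of $\hat G$ of capacity $\wmax$ crossing $(X,Y)$, contradicting $c_{\hat G}(X,Y) < \wmax$). Moreover, the capacity-$\wmax$ edges $(s^*, v^{\out})$ for $v \in S$ and $(v^{\inn}, t^*)$ for $v \in T$ cannot lie in the cut, so $v^{\out} \in X$ for all $v \in S$ and $v^{\inn} \in Y$ for all $v \in T$; together with $w(Y')$ excluding $S \cup T$, this forces $S \subseteq X'$ and $T \subseteq Z'$. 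Minimality follows in the reverse direction: any $S$-$T$ vertex-cut $(X'', Y'', Z'')$ in $G$ with $w(Y'') < \wmax$ yields, via the construction of \Cref{obs: existential from regular to split}, an $s^*$-$t^*$ edge-cut in $\hat G$ of value $w(Y'')$, so $w(Y') = c_{\hat G}(X,Y) \leq w(Y'')$.

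\textbf{Main obstacle.} The only delicate point is making sure vertices of $S \cup T$ never appear in the cut and are always assigned to the correct side of the tripartition. Boosting the capacity of their special edges to $\wmax$, along with the capacity-$\wmax$ auxiliary edges from $s^*$ and to $t^*$, ensures both properties since the minimum cut is strictly smaller than $\wmax$. Everything else is a straightforward adaptation of the single-pair translation lemmas already established in \Cref{obs: existential from regular to split} and \Cref{obs: existential from split to regular}.
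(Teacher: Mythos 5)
Your overall reduction is the same as the paper's (attach a super-source and super-sink with capacity-$\wmax$ edges, boost the terminals' special edges, run the algorithm of \Cref{cor: mincut}, and translate the edge-cut back as in \Cref{obs: existential from split to regular}), but there is a genuine gap in where you attach the sink: for $v\in T$ you add the edge $(v^{\inn},t^*)$, which only forces $v^{\inn}$ into the sink side $Y$ of the computed cut. Nothing forces $v^{\out}\in Y$: the boosted special edge $(v^{\inn},v^{\out})$ would be a \emph{backward} edge of the cut if $v^{\inn}\in Y$ and $v^{\out}\in X$, so it contributes nothing, and the only real constraint is that all regular edges leaving $v^{\out}$ stay inside $X$. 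Hence a minimum $s^*$-$t^*$ cut may well place $v^{\out}\in X$ for some $v\in T$ (e.g.\ whenever $v$ has no outgoing edges in $G$), and then your translation $X'=\{v\notin Y'\mid v^{\out}\in X\}$ puts $v\in X'$, so $T\not\subseteq Z'$ and the output is not an $S$-$T$ vertex-cut. Concretely, take $V(G)=\{a,m,b\}$, $S=\{a\}$, $T=\{b\}$, edges $(a,m),(m,b)$: the cut $X=\{s^*,a^{\inn},a^{\out},m^{\inn},b^{\out}\}$, $Y=\{m^{\out},b^{\inn},t^*\}$ has value $w(m)$, is minimum, and could be returned by \Cref{cor: mincut}; your translation gives $Y'=\{m\}$, $X'=\{a,b\}$, $Z'=\emptyset$, which is not even a valid vertex-cut. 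So the sentence ``this forces $S\subseteq X'$ and $T\subseteq Z'$'' is where the proof breaks: the argument you give establishes $v^{\inn}\in Y$ for $v\in T$, but the membership of $v$ in $X'$ is decided by $v^{\out}$, not $v^{\inn}$.

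The fix is small and brings you in line with the paper's construction: attach the sink edges at the out-copies, i.e.\ add $(v^{\out},t^*)$ of capacity $\wmax$ for every $v\in T$ (keeping the boosted special edges). Then $v^{\out}\in Y$ is forced for all $v\in T$ (and, via the boosted special edge, $v^{\inn}\in Y$ as well), so your translation yields $T\subseteq Z'$, and the rest of your argument — the upper bound $c_{\hat G}(X,Y)<\wmax$, the absence of edges from $X'$ to $Z'$, $S\subseteq X'$, and the two-sided comparison $w(Y')=c_{\hat G}(X,Y)\leq w(Y'')$ for every $S$-$T$ vertex-cut $(X'',Y'',Z'')$ — goes through. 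This is exactly what the paper does, phrased one level up: it builds a vertex-weighted gadget $\hat G$ with edges $(v,t)$ for $v\in T$, and in the split graph these become regular edges $(v^{\out},t^{\inn})$, which is precisely the attachment at out-copies that your construction is missing.
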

\begin{proof}
	Let $\hat G$ be the graph obtained from $G$ by setting the weights of the vertices in $S\cup T$ to $\wmax(G)$, and then adding vertices $s$ and $t$ of unit weight, where $s$ connects with an edge to every vertex in $S$, and  every vertex in $T$ connects with an edge to $t$. It is immediate to verify that, if $(X,Y,Z)$ is a minimum $s$-$t$ vertex-cut in $\hat G$, then $(X\setminus\set{s},Y,Z\setminus\set{t})$ is a minimum $S$-$T$ vertex-cut in $G$.
	We then compute a split graph $\hat G'$ corresponding to $\hat G$, and
	use the algorithm from \Cref{cor: mincut} to compute the minimum $s^{\out}$-$t^{\inn}$ edge-cut $(L,R)$ in $\hat G'$, whose value must be equal to the value of a minimum $s$-$t$ vertex-cut in $\hat G$, from \Cref{cor : minimum cut in split graph}. 
	We convert this edge-cut into  an $s$-$t$ vertex-cut $(X,Y,Z)$ in $\hat G$, using the procedure outlined in \Cref{obs: existential from split to regular}, namely: we let $X=\{v \in V(\hat G) \setminus \{t\} \mid v^{\out} \in L\}$, $Y=\set{v\in V(\hat G)\mid v^{\inn}\in L,v^{\out}\in R}$, and $Z=V(\hat G)\setminus(X\cup Y)$. From \Cref{obs: existential from split to regular}, $(X,Y,Z)$ is an $s$-$t$ vertex-cut in $G$, with $w(Y) \leq c(L,R)$, so it must be a minimum $s$-$t$ vertex-cut in $\hat G$. We then output 
	$(X\setminus\set{s},Y,Z\setminus\set{t})$ as a minimum $S$-$T$ vertex-cut in $G$. \end{proof}

\subsection{Some Useful Claims}

We will use the following claim that can be thought of as a generalization of Proposition $3.2$ in~\cite{LNP21} from unit vertex weights to general weights.
The claim was proved in \cite{CT24} for undirected graphs, but the proof works as is in directed graphs as well. For completeness we provide it here.

\begin{claim}\label{claim: neighbors in S}
	Let $G$ be a simple directed graph with weights $w(v)> 0$ on its vertices $v\in V(G)$, and let  $(L,S,R)$ be a global minimum vertex-cut in $G$. Let $x\in L$ be any vertex, and denote by $S'=S\setminus N^+_G(x)$. Then $w(S')\leq w(N^+_G(x)\cap L)$ must hold, and in particular $w(S')\leq w(L)$. 
\end{claim}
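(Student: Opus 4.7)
The plan is to derive the bound from the minimality of $(L,S,R)$ by comparing it against the tripartition of $V(G)$ induced by the singleton $\set{x}$. Concretely, I would set $\hat L=\set{x}$, $\hat S=N_G^+(x)$, and $\hat R=V(G)\setminus(\set{x}\cup N_G^+(x))$, and argue first that $(\hat L,\hat S,\hat R)$ is a valid vertex-cut in $G$, and then that its value is at least $w(S)$.

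First I would verify validity. The set $\hat L$ is clearly non-empty. Since $x\in L$ and $(L,S,R)$ is a vertex-cut, no edge leaves $x$ for any vertex of $R$, so $R\cap N_G^+(x)=\emptyset$ and hence $R\subseteq \hat R$; in particular $\hat R\neq\emptyset$. By the choice of $\hat S$, every out-edge of $x$ terminates in $\hat S$, so no edge of $G$ connects $\hat L$ to $\hat R$. Thus $(\hat L,\hat S,\hat R)$ is a valid vertex-cut, and since $w(S)$ is the minimum value of any vertex-cut in $G$, we obtain $w(N_G^+(x))=w(\hat S)\geq w(S)$.

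For the final rearrangement, I would use that $(L,S,R)$ is a vertex-cut together with $x\in L$: since no edge of $G$ leaves $L$ for $R$, every out-neighbor of $x$ lies in $L\cup S$. Consequently, $N_G^+(x)$ is the disjoint union of $N_G^+(x)\cap L$ and $N_G^+(x)\cap S$, so
\[
w(N_G^+(x))=w(N_G^+(x)\cap L)+w(N_G^+(x)\cap S).
\]
Substituting this into $w(S)\leq w(N_G^+(x))$ and rearranging yields
\[
w(S')=w(S\setminus N_G^+(x))=w(S)-w(S\cap N_G^+(x))\leq w(N_G^+(x)\cap L)\leq w(L),
\]
as claimed. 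I do not anticipate any genuine obstacle: the argument rests only on the definition of a vertex-cut, the elementary fact that $x$ has no out-neighbors in $R$, and the global minimality of $(L,S,R)$.
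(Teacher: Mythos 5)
Your argument is correct, and it takes a genuinely different and more elementary route than the paper's. The paper proves this claim by invoking the Max-Flow/Min-Cut theorem to obtain an $x$-$t$ flow of value $w(S)$ for some $t\in R$, decomposing it into flow-paths, splitting the paths by whether the second vertex lies in $S$ or in $L$, and bounding the two parts separately against $w(S\cap N^+_G(x))$ and $w(N^+_G(x)\cap L)$. You instead compare $(L,S,R)$ directly against the vertex-cut $(\{x\},\,N^+_G(x),\,V(G)\setminus(\{x\}\cup N^+_G(x)))$ induced by the singleton $\{x\}$: its validity follows from $R\subseteq \hat R$ (which also gives $\hat R\neq\emptyset$), global minimality gives $w(N^+_G(x))\geq w(S)$, and the identity $N^+_G(x)\subseteq L\cup S$ (disjointly) lets you rearrange to the desired bound. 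Your approach avoids flow decomposition entirely and relies only on the definition of a vertex-cut and the optimality of $(L,S,R)$; it is shorter and arguably the more natural argument. The paper's flow-based proof is inherited from the undirected version in the prior work it cites, which is likely why it appears here despite the cut-comparison route being simpler. Both proofs are sound; there is no gap in yours.
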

\begin{proof}	
	Let $t\in R$ be any vertex. From the Max-Flow/Min-Cut theorem, there is a flow $f$ in $G$ from $x$ to $t$, that respects all vertex capacities (that are equal to their weights), whose value is $\opt=w(S)$. Let $\pset$ be the collection of flow-paths in the flow-path decomposition of $f$. We partition $\pset$ into two sets: set $\pset'\subseteq \pset$ contains all paths $P$, whose second vertex lies in $S$, and $\pset'' = \pset\setminus\pset'$ contains all remaining flow-paths. Notice that, for every path $P\in \pset''$, the second vertex of $P$ must lie in $N^+_G(x)\cap L$, and so $\sum_{P\in \pset''}f(P)\leq w(N^+_G(x)\cap L)$ must hold. Therefore, $\sum_{P\in \pset'}f(P)\geq \opt-w(N^+_G(x)\cap L)=w(S)-w(N^+_G(x)\cap L)$. For each flow-path $P\in \pset'$, the second vertex of $P$ lies in $S\cap N^+_G(x)$, so $w(S\cap N^+_G(x))\geq w(S)-w(N^+_G(x)\cap L)$, and $w(S')=w(S)-w(S\cap N^+_G(x))\leq w(N^+_G(x)\cap L)$.
\end{proof}

We will also use the following lemma, that was proved in \cite{CT24} for undirected graphs. The proof of \cite{CT24} works for directed graphs as well; we provide it here for completeness. The only difference is  the use of \Cref{claim: neighbors in S} instead of its counterpart for undirected graphs. The lemma uses the convention mentioned above that, in a global minimum vertex-cut $(L,S,R)$, $w(L)\leq w(R)$ must hold.

\begin{lemma}[Analogue of Lemma 3.7 in \cite{CT24}]\label{lem: cut if vertex of L}
	There is a randomized algorithm, that, given a simple directed $n$-vertex and $m$-edge graph $G$ with integral weights $1\leq w(v)\leq W'$ on its vertices $v\in V(G)$, and a vertex $x\in V(G)$, returns a vertex $y\in V(G)$, and the value $c$ of the minimum $x$-$y$ vertex-cut in $G$. Moreover, if there is a global minimum vertex-cut $(L,S,R)$ in $G$ with $x\in L$, then with probability at least $1-1/n^4$, $c=w(S)$ holds. The running time of the algorithm is $O\left (m^{1+o(1)}\cdot \log W'\right )$.
\end{lemma}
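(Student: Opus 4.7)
The plan is to repeat the following trial $k = \Theta(\log n)$ times: sample a vertex $y$ from the set $U = V(G) \setminus (\set{x} \cup N^+_G(x))$ with probability proportional to its weight, then use the algorithm from \Cref{cor: min_vertex_cut} to compute the value $c_y$ of the minimum $x$-$y$ vertex-cut in $G$. At the end, output the pair $(y^*, c_{y^*})$ for which $c_{y^*}$ is the smallest. (The set $U$ is non-empty whenever a global minimum vertex-cut with $x\in L$ exists, since $R\subseteq U$ in that case; if $U=\emptyset$ we output $c=\infty$.)

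For the correctness analysis, suppose a global minimum vertex-cut $(L,S,R)$ with $x\in L$ exists. Since $x\in L$ and the cut is valid, $N^+_G(x)\subseteq L\cup S$, hence $R\subseteq U$. To bound $\prob{y\in R}$ in a single trial, I will estimate $w(U)$ by decomposing $U = (L\setminus(\set{x}\cup N^+_G(x))) \cup (S\setminus N^+_G(x)) \cup R$. The first summand has weight at most $w(L)$. For the middle summand, \Cref{claim: neighbors in S} gives $w(S\setminus N^+_G(x))\leq w(L)$. Combined with the assumption $w(L)\leq w(R)$ for global minimum vertex-cuts, this yields $w(U)\leq 3w(R)$, and hence in each trial $\prob{y\in R}\geq 1/3$. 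After $k=\Theta(\log n)$ independent trials, the probability that no sampled $y$ lies in $R$ is at most $(2/3)^k\leq 1/n^4$ for a sufficiently large constant.

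It remains to argue that, in any trial where $y\in R$, we have $c_y = w(S)$. On one hand, $(L,S,R)$ itself is a valid $x$-$y$ vertex-cut of value $w(S)$, so $c_y\leq w(S)$. On the other hand, any $x$-$y$ vertex-cut is in particular a vertex-cut in $G$, so its value is at least $\opt=w(S)$. Thus $c_y=w(S)$ whenever $y\in R$, and since all other trials satisfy $c_y\geq w(S)$ by the same lower bound, taking the minimum over all $k$ trials yields $c_{y^*}=w(S)$ with probability at least $1-1/n^4$.

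For the running time, each trial requires sampling from a weighted distribution over $U$, which can be done in $O(m)$ time after a linear-time preprocessing, and one invocation of \Cref{cor: min_vertex_cut} costing $O(m^{1+o(1)}\cdot \log W')$. Summing over $O(\log n)$ trials gives the required bound $O(m^{1+o(1)}\cdot \log W')$. The only subtle point I expect in the write-up is making the decomposition of $w(U)$ completely rigorous (in particular invoking the convention that global minimum vertex-cuts satisfy $w(L)\leq w(R)$), but this is exactly what \Cref{claim: neighbors in S} together with the convention give.
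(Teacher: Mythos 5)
Your proposal is correct and follows essentially the same route as the paper: sample $y$ from $V(G)\setminus(\set{x}\cup N^+_G(x))$ proportionally to weight, bound $w$ of this set by $3w(R)$ via \Cref{claim: neighbors in S} and the convention $w(L)\leq w(R)$, compute the cut value with \Cref{cor: min_vertex_cut}, and amplify over $\Theta(\log n)$ trials by taking the minimum. No substantive differences from the paper's argument.
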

\begin{proof}
	Let $Z=V(G)\setminus\left(\set{x}\cup N_G^+(x)\right )$.  Let $\aset$ be an algorithm that selects a vertex $y\in Z$ at random, where the probability to select a vertex $v\in V(G)$ is $\frac{w(v)}{w(Z)}$. It then uses the algorithm from \Cref{cor: min_vertex_cut} to compute the value $c$ of minimum $x$-$y$ vertex cut in $G$, in time $O\left (m^{1+o(1)}\cdot \log W'\right )$, and returns vertex $y$ and the value $c$. 
	We use the following observation.

	\begin{observation}\label{obs: alg2 success}
		Suppose there is a global minimum vertex-cut $(L,S,R)$ in $G$ with $x\in L$. Let $(c,y)$ be the outcome of the algorithm $\aset$. Then with probability at least $\frac{1}{3}$, $y\in R$ and $c=\opt$ holds.
	\end{observation}
	\begin{proof}
		Consider a global minimum vertex-cut $(L,S,R)$ with $x\in L$. Let $S'=S\setminus N^+_G(x)$. Then from \Cref{claim: neighbors in S},  $w(S')\leq w(L)$ must hold.
		
		We conclude that $w(Z)\leq w(L)+w(R)+w(S')\leq 2w(L)+w(R)\leq 3w(R)$, since we always assume by default that, in a global minimum vertex-cut $(L,S,R)$, $w(L)\leq w(R)$ holds. Therefore,  the probability that Algorithm $\aset$ chooses a vertex $y\in R$ is $\frac{w(R)}{w(Z)}\geq \frac{1}{3}$. Clearly, if $y\in R$ holds, then the value $c$ of the minimum $x$-$y$ vertex cut in $G$ is equal to $\opt$.	
	\end{proof}
	
	We are now ready to complete the proof of \Cref{lem: cut if vertex of L}. We execute Algorithm $\aset$ $\Theta(\log n)$ times, and output the smallest value $c$ that the algorithm returned, together with the corresponding vertex $y$. It is immediate to verify that $c$ is indeed the value of the minimum $x$-$y$ vertex cut in $G$. Assume now that there is a global minimum vertex cut $(L,S,R)$ in $G$ with $x\in L$. We say that an application of Algorithm $\aset$ is \emph{successful}, if it returns a value $c'=\opt$. From \Cref{obs: alg2 success}, the probability that a single application of the algorithm is successful is at least $1/3$. It is then easy to verify that the probability that every application of Algorithm $\aset$ was unsuccessful is at most $1/n^4$. If at least one application of the algorithm was successful, then we are guaranteed that the value $c$ returned by our algorithm is equal to $\opt$. Since the running time of Algorithm $\aset$ is $O(m^{1+o(1)}\cdot \log W')$, the total running time of the entire algorithm is $O(m^{1+o(1)}\cdot \log W')$.
\end{proof}

\subsection{Local Flow}

Our algorithm for dense weighted graphs uses a poweful tool called the \emph{local flow technique}, that was also exploited in numerous previous works (see, e.g. \cite{OZ14,HRW17,NSWN17,SW19}).
At a very high level, in the local flow problem, the goal is to compute an $s$-$t$ flow of a prescribed value $\Delta$ in a given directed edge-capacitated graph, in time that is proportional to $\Delta$; so in particular, the running time of the algorithm may be significantly lower than the size of the input graph. More specifically, the input to the local flow problem consists of an edge-capacitated directed graph $G$ with two special vertices $s$ and $t$, and two additional parameters $\Delta\geq 1$ and $h\geq 1$; here, $\Delta$ is the desired flow value, and the parameter $h$ can be thought of as controlling the tradeoff between the approximation factor of the algorithm and its running time. The algorithm must output a valid $s$-$t$ flow $f$ in $G$, together with an $s$-$t$ edge-cut $(X,Y)$, so that, if $\val(f)<\Delta$, then the sparsity of the edge-cut $(X,Y)$ in the residual flow network $G_f$ is at most roughly $\frac{1}{h}$. We note that, for some specific and often used settings of the edge capacities, an edge-cut $(X,Y)$ that is sparse in $G_f$ can also be shown to be sparse in the original graph $G$.
Moreover, typically, as long as every vertex $v$ of the input graph is connected to the destination vertex $t$ with an edge whose capacity is at least roughly $\deg_G(v)$, one can obtain an algorithm for local flow whose running time  is close to $O(\Delta \cdot h\cdot n^{o(1)})$ (see e.g. \cite{HRW17,NSWN17,SW19}.)
Several algorithms were developed for this problem in the past, some of which are based on the push-relabel framework of Goldberg and Tarjan \cite{GT88} (see e.g. \cite{HRW17,NSWN17,SW19}), and some on the blocking flows of Dinic \cite{dinic1970algorithm} (see e.g. \cite{OZ14}). Both types of algorithms follow the same approach of starting with an initial flow $f=0$, and then gradually augmenting it, until the distance from $s$ to $t$ in the residual flow network $G_f$ becomes at least roughly $h$. Using the standard ball-growing technique one can then compute an $s$-$t$ edge-cut $(X,Y)$ that is sparse in $G_f$.
For technical reasons, we need a slight variation of this technique, namely: in the case where the value of the $s$-$t$ flow $f$ that the algorithm computes is lower than the target value $\Delta$, we need the algorithm to compute an intermediate object, instead of a sparse cut in $G_f$. This intermediate object is a low-volume vertex subset $A\subseteq V(G)$, containing all vertices that are sufficiently close to the source vertex $s$ in the residual flow network $G_f$. Specifically, we require that, for every vertex $v\in V(G)\setminus A$, every path connecting $s$ to $v$ in $G_f$ contains at least $h$ edges. Generally, given such a set $A$ of vertices, it is not difficult to construct an $s$-$t$ edge-cut whose sparsity in $G_f$ is at most roughly $\frac{1}{h}$ using the standard ball-growing technique; however, our algorithm needs to use the properties of the vertex set $A$ directly, instead of exploiting such a cut. We also slightly modify the standard setting in that we only require that every vertex $v\in V(G)$ is connected to the destination vertex $t$ with an edge whose capacity is at least roughly $\deg^+_G(v)$, instead of the more standard bound of $\deg_G(v)$. Our algorithm for computing local flow is summarized in the following theorem. The proof uses standard techniques and is defferred to Section \ref{sec: proof-of-unit-flow} of Appendix.

\begin{theorem} \label{thm : unit-flow}
	There is a deterministic algorithm, called \alglocal, whose input consists of a simple directed graph $G$ with integral capacities $c(e) \geq 0$ on its edges $e \in E(G)$ given in the adjacency-list representation, a pair $s,t\in V(G)$ of its vertices, and integral parameters $\Delta,h \geq 1$, such that $\deg^+_{G}(s) \leq \Delta$ holds.
	Let $\hat G$ be the graph obtained from $G$ by adding, for every vertex $v\in V(G)\setminus\set{t}$ with $\deg^+_G(v)>0$, an edge $(v,t)$ of capacity $\deg^+_G(v)$ to the graph.
	The algorithm returns an $s$-$t$ flow $f$ in $\hat{G}$ (by only listing the flow values $f(e)$ for edges $e\in E(\hat G)$ with $f(e)>0$), and a set $A \subseteq V(G) \setminus \set{t}$ of vertices with $\vol^+_{G}(A) \leq \Delta$ and $s \in A$.
	Let $H=\hat G_{f}$ be the residual flow network of graph $\hat G$ with respect to the flow $f$. The algorithm additionally guarantees that,  if $\val(f) < \Delta$, then every path in $H$ that connects $s$ to a vertex of $V(\hat G) \setminus A$ contains at least $h$ edges.
	The running time of the algorithm is $\Tilde{O}(h \cdot \Delta)$.
\end{theorem}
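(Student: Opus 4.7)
The plan is to implement \alglocal\ by adapting the local push-relabel (``unit-flow'') framework to the current setting. The algorithm operates directly on $\hat G$, where the shortcut edges $(v,t)$ of capacity $\deg^+_G(v)$ serve as local drains that let each activated vertex discharge excess to $t$ efficiently. This structural feature is what will ultimately enable the bound $\vol^+_G(A)\leq \Delta$.

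\textbf{Algorithm and correctness.} I will maintain a preflow $f$ on $\hat G$, a height function $\ell$ with $\ell(s)=h$ pinned and $\ell(v)=0$ for all other vertices, and an excess function $e$ with $e(s)=\Delta$ initially. The main loop repeatedly selects an active vertex $v$ (one with $e(v)>0$, $\ell(v)<h$, and $v\neq t$) and either \emph{pushes} along an admissible residual edge $(v,u)\in \hat G_f$ (one where $\ell(v)=\ell(u)+1$ and the residual capacity is positive) or \emph{relabels} $v$ by incrementing $\ell(v)$ by one. The algorithm halts when no active vertex remains; the preflow is then converted into a valid $s$-$t$ flow $f$ by routing any leftover excess back to $s$ along flow-paths in the support of the preflow, using a standard decomposition. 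I will define $A=\set{s}\cup\set{v\in V(G)\setminus\set{t}\mid \ell(v)>0}$. The non-volume guarantees follow immediately from the standard push-relabel invariant $\ell(u)\leq \ell(v)+1$ for every residual edge $(u,v)\in H$: if $\val(f)<\Delta$ at termination, some excess is stuck at a vertex of height $h$, and for any $w\in V(\hat G)\setminus A$ we have $\ell(w)=0$, so every $s$-to-$w$ path in $H$ has length at least $\ell(s)-\ell(w)=h$; also $s\in A$ by construction.

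\textbf{Running time and the volume bound.} Standard push-relabel amortization, with heights capped at $h$, gives total running time $\tilde O(h\cdot \vol^+_G(A))$, which becomes $\tilde O(h\cdot \Delta)$ once the volume bound is established. The main technical obstacle is showing $\vol^+_G(A)\leq \Delta$. The plan is to exploit the shortcut edges via an amortized charging argument: whenever a vertex $v\in A\setminus\set{s}$ gets relabeled above height $1$, its shortcut edge $(v,t)$ must already be saturated (at height $1$ the shortcut is admissible since $\ell(t)=0$, and by the algorithm's design admissible edges are pushed on before any relabel), and therefore $\deg^+_G(v)$ units of flow have already been routed from $v$ to $t$. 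Summing over such vertices gives $\vol^+_G(A)-\deg^+_G(s)\leq \val(f)\leq \Delta$. The subtlety lies in handling vertices of $A$ that remain at height exactly $1$ at termination, whose shortcuts need not be fully saturated; these will be handled by a separate accounting step using that each such vertex received at least one positive unit of excess, and their contribution to $\vol^+_G(A)$ will be charged against the total excess budget $\Delta$. Assembling these charges cleanly so that the final constants yield $\vol^+_G(A)\leq \Delta$ exactly -- rather than only up to a constant, which would be absorbed by running the algorithm with $\Delta/c$ for a suitable constant $c$ -- is the chief technical challenge of the proof.
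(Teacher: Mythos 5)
Your proposal takes a genuinely different route from the paper: you adapt the push--relabel (``unit-flow'') framework, whereas the paper builds a local variant of Dinic's blocking-flow algorithm, following \cite{OZ14}. The paper explicitly notes that a push--relabel implementation is a viable alternative, so there is nothing wrong with the high-level choice. However, your specific formulation has a gap in exactly the place you flag as the ``chief technical challenge,'' and the gap is not a constant-factor issue that rescaling $\Delta$ can absorb.

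The problem is your definition $A=\set{s}\cup\set{v\mid \ell(v)>0}$. A vertex $v$ that ever receives positive excess is immediately relabeled from height $0$ to height $1$ (at height $0$ nothing is admissible), and at height $1$ it discharges along its shortcut $(v,t)$. If its excess runs out before the shortcut saturates, $v$ remains at height $1$ with $f(v,t)$ strictly less than $\deg^+_G(v)$, yet $v\in A$ contributes $\deg^+_G(v)$ to $\vol^+_G(A)$. Concretely: let $s$ have $\Delta$ out-edges of capacity $1$ each, leading to $\Delta$ distinct vertices each with $\deg^+_G(v)=D$. Each such vertex receives $1$ unit of excess, relabels to height $1$, sends $1$ unit to $t$, and stops; the algorithm terminates with $\vol^+_G(A)=\deg^+_G(s)+\Delta\cdot D$, which is unbounded relative to $\Delta$ as $D$ grows. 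Your proposal to ``charge against the total excess budget'' cannot work: the quantity you must bound is $\sum_{v}\deg^+_G(v)$ over touched vertices, whereas the available budget is $\sum_v f(v,t)\le\Delta$, and $\deg^+_G(v)/f(v,t)$ is arbitrarily large. The argument for vertices at height $\ge 2$ is sound (their shortcuts are saturated, so they contribute at most $\val(f)\le\Delta$ in total), but the height-$1$ vertices defeat the bound, and this also undermines the running-time claim $\tilde O(h\cdot\vol^+_G(A))$, which needs $\vol^+_G(A)=O(\Delta)$.

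The paper avoids this by defining $A$ not via heights but as $\set{v:\dist_H(s,v)\le h-1}$, where $H$ is the residual network of the final flow, and by arranging that the flow is $(h+1)$-distancing in $\hat G$ (via an $(h+2)$-distancing flow from an auxiliary super-source $s'$ with a $\Delta$-capacity edge into $s$). With that $A$, \emph{every} vertex of $A$ must have a saturated shortcut: if $(v,t)$ had residual capacity for some $v$ at distance $\le h-1$, then $\dist_H(s,t)\le h$, contradicting the distancing property. Hence $\vol^+_G(A)\le\sum_v f(v,t)=\val(f)\le\Delta$ with no slack. To salvage your approach, you would need to make the same move: define $A$ by residual distance rather than by height, pin $\ell(s)$ high enough (e.g., $h+1$, or use the $s'$ gadget) so that $\dist_H(s,t)\ge h+1$ is forced by the push--relabel invariant, and --- a subtlety you did not mention --- verify that the invariant $\ell(u)\le\ell(w)+1$ for residual edges $(u,w)$ continues to hold for the residual network of the \emph{flow} after the preflow's leftover excess is routed back to $s$ (cancellation creates new forward residual arcs along flow paths, and heights may have increased since those pushes occurred, so this requires an argument). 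As written, your proof does not establish the volume bound.
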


We note that \cite{CT24} use a technique they call "local flow augmentation", that was first introduced in~\cite{CHILP17} and then further refined in~\cite{FNY20,CQ21}, and which we also use in our algorithm for weighted non-dense graphs.
This technique, while having some similarities to the local flow algorithm from \Cref{thm : unit-flow}, also has some fundamental differences.
Specifically, one can think of the setting of \Cref{thm : unit-flow}, in which every vertex of the graph $\hat G$ is connected to the destination vertex $t$, as allowing each such vertex to absorb a small amount of flow. In contrast, the local flow augmentation technique can be thought of as randomly selecting a small subset of vertices and allowing each such vertex to absorb an arbitrarily large amount of flow.
While the local flow augmentation technique allows the resulting flow $f$  to satisfy somewhat stronger properties than the flow $f$ that we construct in \Cref{thm : unit-flow}, one drawback of this technique is that, 
in the event where $\val(f)$ is small, the algorithm returns a subset of vertices\footnote{As an example, the algorithm of \cite{CT24} computes a subgraph $J$ of the residual flow network of the split-graph of $G$; we refer here to the subset of vertices of the original graph $G$ that $J$ naturally defines.} with significantly weaker properties than those required from the vertex set $A$ in \Cref{thm : unit-flow}. 
Additionally, using \Cref{thm : unit-flow} instead of the local flow augmentation technique allows our algorithm for weighted global minimum vertex-cut in dense graphs to compute the flow $f$ faster than the algorithm of \cite{CT24}, which, in turn, is crucial to obtaining a faster running time in the dense regime. 

\subsection{Chernoff Bound}

We use the following standard version of the Chernoff Bound (see. e.g., \cite{dubhashi2009concentration}).

\begin{lemma}[Chernoff Bound]
	\label{lem: Chernoff}
	Let $X_1,\ldots,X_n$ be independent random variables taking values in $\set{0,1}$. Let $X=\sum_{1\le i\le n}X_i$, and let $\mu=\expect{X}$. Then for any $t>2e\mu$,
	\[\Pr\Big[X>t\Big]\le 2^{-t}.\]
	Additionally, for any $0\le \delta \le 1$,
	\[\Pr\Big[X<(1-\delta)\cdot\mu\Big]\le e^{-\frac{\delta^2\cdot\mu}{2}}.\]
\end{lemma}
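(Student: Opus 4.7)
The plan is to establish both tails via the standard Chernoff moment-generating-function approach. First I would observe that, for any $\lambda>0$, Markov's inequality applied to $e^{\lambda X}$ gives $\Pr[X\geq a]\leq e^{-\lambda a}\cdot \mathbb{E}[e^{\lambda X}]$. Using independence of the $X_i$ and the fact that each $X_i\in\{0,1\}$ with $p_i=\mathbb{E}[X_i]$, we have $\mathbb{E}[e^{\lambda X_i}]=1+p_i(e^{\lambda}-1)\leq \exp(p_i(e^{\lambda}-1))$, so $\mathbb{E}[e^{\lambda X}]\leq \exp(\mu(e^{\lambda}-1))$. This single inequality is the common technical ingredient that drives both tail bounds.

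For the upper tail, I would parametrize $t=(1+\delta)\mu$, so the hypothesis $t>2e\mu$ is equivalent to $1+\delta>2e$. Optimizing the bound above in $\lambda$ by setting $\lambda=\ln(1+\delta)$ yields the classical inequality
\[
\Pr[X\geq (1+\delta)\mu]\leq \left(\frac{e^{\delta}}{(1+\delta)^{1+\delta}}\right)^{\mu}.
\]
A short calculation shows that $e^{\delta}/(1+\delta)^{1+\delta}=e^{-1}\cdot (e/(1+\delta))^{1+\delta}$, and the hypothesis $1+\delta>2e$ gives $(e/(1+\delta))^{1+\delta}<2^{-(1+\delta)}$. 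Hence the bound reduces to $e^{-\mu}\cdot 2^{-(1+\delta)\mu}\leq 2^{-t}$, as required.

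For the lower tail with $\delta\in[0,1]$, I would run the same argument with $\lambda<0$ (equivalently, apply Markov to $e^{-\lambda X}$), obtaining
\[
\Pr[X\leq (1-\delta)\mu]\leq \left(\frac{e^{-\delta}}{(1-\delta)^{1-\delta}}\right)^{\mu}.
\]
The remaining step is the elementary calculus estimate $(1-\delta)\ln(1-\delta)\geq -\delta+\delta^{2}/2$ for $\delta\in[0,1]$, which follows by Taylor expanding $(1-\delta)\ln(1-\delta)$ around $\delta=0$ and noting that all subsequent terms are non-negative. Taking logarithms of the displayed inequality and plugging in this estimate yields exactly $\exp(-\delta^{2}\mu/2)$, completing the proof.

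The ``hard part'' here is essentially zero: both tails reduce to verifying classical elementary inequalities about $e^{\delta}/(1+\delta)^{1+\delta}$ and $(1-\delta)\ln(1-\delta)$. Since the lemma is quoted verbatim from a standard reference, I would not rederive these calculus facts from scratch and would simply cite \cite{dubhashi2009concentration} for the final numerical constants.
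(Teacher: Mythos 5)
Your proposal is correct: the MGF bound $\mathbb{E}[e^{\lambda X}]\le \exp(\mu(e^{\lambda}-1))$, the optimization $\lambda=\ln(1+\delta)$ with the reduction $e^{\delta}/(1+\delta)^{1+\delta}=e^{-1}(e/(1+\delta))^{1+\delta}<e^{-1}2^{-(1+\delta)}$ under $1+\delta>2e$, and the lower-tail estimate $(1-\delta)\ln(1-\delta)\ge-\delta+\delta^{2}/2$ all check out (the degenerate case $\mu=0$, where your parametrization $t=(1+\delta)\mu$ breaks down, is trivial since then $X=0$ almost surely). The paper gives no proof of this lemma at all — it simply quotes it as a standard fact from \cite{dubhashi2009concentration} — so your derivation is exactly the textbook argument the citation points to, and there is nothing further to compare.
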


\subsection{A Subgraph Oracle}

Our algorithm uses a subgraph oracle, that was defined in \cite{CT24} for undirected graphs; we extend their definition to directed graphs in a straightforward manner.

\begin{definition}[A Subgraph Oracle]\label{def: subgraph oracle}
	A $(\delta,\tau)$-subgraph oracle,  for parameters $0<\delta\leq 3/4$ and $\tau\geq 2$, receives as input a directed $n$-vertex graph $G$ in the adjacency list representation and a set $Z\subseteq V(G)$ of its vertices. The oracle must return a partition $(Y^h,Y^{\ell})$ of $V(G)$, and a collection $E'$ of at most $n^{2-\delta}\cdot \log^2\tau$ edges of $G$. We say that the oracle \emph{errs}, if, for some vertex $v\in Y^{h}$, $|N^+_G(v)\cap Z|<\frac{n^{1-\delta}}{1000\log n}$ holds, or, for some vertex $v'\in Y^{\ell}$, $|N^+_G(v)\cap Z|>n^{1-\delta}\cdot \log \tau$ holds, or $E'\neq E_G(Y^{\ell},Z)$. We require that the probability that the oracle errs is at most $\frac{1}{n^5\cdot \log^4\tau}$. 
\end{definition}


The following lemma is a straightforward extension of an analogous result of \cite{CT24} to directed graphs, that provides a relatively efficient algorithm for responding to up to $n$ subgraph oracle queries in bulk. We slightly adapt the lemma statement by explicitly stating the dependence of the running time on the number of queries, which is straightforward to do. The lemma statement uses the matrix multiplication exponent $\omega$, whose current bound is $\omega\leq 2.371552$~\cite{WXXZ24}. The proof is essentially identical to that from \cite{CT24} and is included in Section \ref{subsec: implementing subgraph oracle} of Appendix for completeness.

\begin{lemma}[Extension of Lemma 3.17 in \cite{CT24} to directed graphs]\label{lem: oracle in bulk}
	There is a randomized algorithm, that is given as input a directed $n$-vertex graph $G$, parameters $\frac{1}{\sqrt{\log n}}<\delta\leq \frac{3}{4}$ and $\tau>0$ that is greater than a large enough constant, and $q = n^{1-\epsilon}$ queries $(Z_1,\ldots,Z_q)$ to the $(\delta,\tau)$-subgraph oracle on $G$, for any $0\leq\epsilon\leq 1$. The algorithm computes responses $(Y^h_1,Y^{\ell}_1,E'_1),\ldots,(Y^h_q,Y^{\ell}_q,E'_q)$ to the queries, such that the probability that the algorithm errs in its response to any query is at most $\frac{1}{n^9\cdot\log^4\tau}$. The running time of the algorithm is: 
	
	\[O\left (\left(n^{3-(\delta+\epsilon)\cdot 2(3-\omega)/(5-\omega)+o(1)} + n^{2+\delta+o(1)}\right)\cdot \poly\log \tau\right )\leq O\left (\left(n^{3-0.478(\delta+\epsilon)+o(1)} + n^{2+\delta+o(1)}\right)\cdot \poly\log \tau\right ),\]
	where $\omega$ is the matrix multiplication exponent.
\end{lemma}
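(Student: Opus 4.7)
I plan to follow the proof of Lemma 3.17 in \cite{CT24} essentially verbatim, with a single systematic change: the undirected neighborhood $N_G(v)$ is everywhere replaced by the out-neighborhood $N^+_G(v)$, and, correspondingly, the symmetric adjacency matrix of $G$ is replaced by the (generally non-symmetric) out-adjacency matrix $A \in \{0,1\}^{n \times n}$ with $A_{vu}=1$ iff $(v,u) \in E(G)$. All the technical tools used in \cite{CT24} --- Chernoff concentration, fast rectangular matrix multiplication, and the combinatorial split into ``heavy'' and ``light'' source vertices --- are insensitive to symmetry of the underlying matrix, so no new ingredient is required.

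The first stage of the algorithm classifies each vertex into $Y^h_i$ or $Y^\ell_i$ for every query $i$. For each query $Z_i$, sample $Z'_i \subseteq Z_i$ by including every vertex of $Z_i$ independently with probability $p = \Theta(n^{\delta-1}\log(n\tau))$, and stack the indicator vectors $\chi_{Z'_i}$ as columns of an $n \times q$ matrix $M$. A fast rectangular matrix multiplication computes $AM$, whose $(v,i)$-entry equals $|N^+_G(v) \cap Z'_i|$, a scaled unbiased estimator of $|N^+_G(v) \cap Z_i|$. By \Cref{lem: Chernoff} and a union bound over the $nq \leq n^2$ pairs $(v,i)$, with probability at least $1 - 1/(n^9 \log^4 \tau)$ every estimator is within a constant factor of its expectation. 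Since the two thresholds in \Cref{def: subgraph oracle} are separated by only a $\poly\log$ factor, this accuracy is enough to assign every vertex to $Y^h_i$ or $Y^\ell_i$ consistently with the specification.

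The second stage outputs the edge sets $E'_i = E_G(Y^\ell_i, Z_i)$. Following \cite{CT24}, vertices are split by their out-degree into a ``heavy'' and a ``light'' regime. Edges with a light source are enumerated directly, contributing the term $n^{2+\delta+o(1)}$ to the running time. Edges with a heavy source are recovered in bulk using a second rectangular matrix multiplication whose dimensions are balanced exactly as in \cite{CT24} so that its cost is $n^{3-(\delta+\epsilon)\cdot 2(3-\omega)/(5-\omega)+o(1)}$. Plugging the current bound on $\omega$ from \cite{WXXZ24} into this exponent yields the bound stated in the lemma. The main potential obstacle is simply verifying that none of the steps of \cite{CT24} secretly uses symmetry: the sampling and concentration bounds only need independence, rectangular matrix multiplication is equally fast for non-symmetric inputs, and the edge enumeration only uses the property that light-source vertices in $Y^\ell_i$ have few out-neighbors in $Z_i$, which is precisely the directed analogue of the property used in the undirected proof. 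I therefore expect no genuinely new difficulty to arise beyond routine bookkeeping.
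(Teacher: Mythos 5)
Your plan diverges from the paper's proof in both of its stages, and the description of the second stage does not appear to yield the stated time bound.

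For the partition stage, the paper does not perform a single bulk rectangular matrix product $AM$. It instead runs the sampling-based subroutine from \cite{CT24} (restated here as \Cref{claim: heavy weight est}) separately on each of the $q$ queries, obtaining all partitions $(Y^h_i,Y^\ell_i)$ in total time $O(n^{2+\delta}\poly\log\tau)$. Your proposal to compute $AM$ with $M$ an $n \times q$ matrix of sampled indicators is a different method; note that a \emph{dense} rectangular product $MM(n,n,n^{1-\epsilon})$ is not obviously dominated by the claimed bound for all admissible $\delta,\epsilon$, so if you want to go this route you must either exploit the sparsity of $M$ or restrict to the regime $\delta+\epsilon<1$ and handle the complementary case separately (as the paper in fact does, by a trivial $O(n^{2+\delta})$ direct scan when $\delta+\epsilon\geq 1$; your plan omits this case entirely).

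The more serious gap is in the edge-recovery stage. The exponent $2(3-\omega)/(5-\omega)$ you write into the running time does not come from "a second rectangular matrix multiplication whose dimensions are balanced": it is precisely the exponent in the output-sensitive triangle-listing algorithm of Björklund, Pagh, Vassilevska~Williams and Zwick \cite{bjorklund2014listing}. The paper's (and \cite{CT24}'s) actual argument builds a tripartite auxiliary graph $H$ on vertex sets $A=\{a_v\}$, $B=\{b_v\}$, $C=\{c_1,\dots,c_q\}$ in which triangles are in bijection with the desired incidences $(v,u,i)$ with $(v,u)\in E_G(Y^\ell_i,Z_i)$; it then shows that, conditioned on the oracle not erring, the number of such triangles is at most $n^{3-(\delta+\epsilon)}\log\tau$, and finally calls the triangle-listing routine of \cite{bjorklund2014listing}, which yields exactly the exponent $3-(\delta+\epsilon)\cdot 2(3-\omega)/(5-\omega)$. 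Your description — "split by out-degree into heavy and light, enumerate light directly in $n^{2+\delta+o(1)}$, recover heavy by MM" — does not contain this reduction, and I do not see how it achieves an output-sensitive bound: directly enumerating $E_G(Y^\ell_i,Z_i)$ for all $i$ costs $\Theta(|E(G)|\cdot q)=\Theta(n^{3-\epsilon})$ in the worst case, and plain matrix multiplication counts but does not list edges. You need to make the triangle-listing reduction explicit; without it, the claim that the second stage runs in $n^{3-(\delta+\epsilon)\cdot 2(3-\omega)/(5-\omega)+o(1)}$ time is unsubstantiated.

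Your overall observation — that the only change required when moving from undirected to directed graphs is replacing $N_G(v)$ by $N^+_G(v)$ and the adjacency matrix by the directed adjacency matrix, and that nothing in the construction uses symmetry — is correct and matches the paper's assessment. But the plan as written does not recover the actual argument.
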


%

%
\section{Algorithms for Easy Special Cases}
\label{sec: special}

Suppose we are given a directed graph $G$ with integral weights $w(v)\geq 1$ on its vertices. Observe first that, if $G$ is not a strongly connected graph, then there is a partition $(L,R)$ of vertices of $V(G)$ into two non-empty subsets, so that no edge connects a vertex of $L$ to a vertex of $R$, and so the value of the global minimum vertex-cut in $G$ is $0$. Moreover, by performing two DFS searches from an arbitrary vertex $v$ of $G$, one in the graph $G$ itself, and one in the graph obtained from $G$ by reversing the direction of all its edges, we can check, in time $O(m)$ whether $G$ is strongly connected, and, if this is not the case, compute the vertex-cut of value $0$. Therefore, we will assume from now on that the input graph $G$ is strongly connected.

In the remainder of this section we provide a simple algorithm for some additional easy special cases of the problem: namely, where there is a global minimum cut $(L,S,R)$, such that either $|L|$ is sufficiently large, or $\vol_G(L)$ is high. 
The algorithm is summarized in the following theorem, and it closely resembles algorithms that were employed in previous work for similar settings.

\begin{theorem}\label{thm: alg for special}\label{thm : alg for when L is large}
	There is a randomized algorithm, that we denote by $\algspec$, that receives as input a directed $n$-vertex and $m$-edge graph $G=(V,E)$ with integral weights $1\leq w(v)\leq W$ on its vertices $v\in V$, such that $G$ contains some vertex-cut, together with a parameter $0<\eps<1$.
	Let $d=\frac{2m}{n}$ denote the average vertex degree in $G$. The algorithm  returns a vertex-cut $(L',S',R')$ in $G$.  Moreover, if there exists a global minimum vertex-cut $(L,S,R)$ in $G$, for  which either $|L|\geq n^{\eps}$ or $\vol_G(L)\geq n^{\eps}\cdot d$ hold, then, with probability at least $(1-1/n^2)$, the vertex-cut $(L',S',R')$ is a global minimum vertex-cut. The running time of the algorithm is $O(mn^{1-\eps+o(1)}\log W)$.
\end{theorem}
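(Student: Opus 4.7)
The plan is to run two sampling-based procedures in parallel and return the smallest-value vertex-cut produced by either of them. Each procedure consists of $\Theta(n^{1-\eps}\log n)$ independent iterations; in each iteration we pick a candidate ``source'' vertex $s$ and then invoke \Cref{lem: cut if vertex of L} on $(G,s)$ to obtain a vertex $y$ and the value $c$ of the minimum $s$--$y$ vertex-cut in $G$, together with the cut itself (by one additional call to \Cref{cor: min_vertex_cut}). The two procedures differ only in how they sample $s$. In Procedure~A (designed to handle the case $|L|\geq n^\eps$) we sample $s$ uniformly from $V(G)$. In Procedure~B (designed to handle $\vol_G(L)\geq n^\eps\cdot d$) we sample an edge $e\in E(G)$ uniformly at random and let $s$ be one of its two endpoints, chosen uniformly at random. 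The algorithm $\algspec$ outputs the smallest-value vertex-cut among all cuts produced across all iterations of both procedures.

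For the analysis, fix a global minimum vertex-cut $(L,S,R)$ in $G$ that witnesses one of the two hypotheses. In Procedure~A, if $|L|\geq n^\eps$, then in a single iteration $\Pr[s\in L]\geq |L|/n\geq n^{\eps-1}$; conditioned on $s\in L$, by \Cref{lem: cut if vertex of L}, with probability at least $1-1/n^4$ the returned value $c$ equals $\opt$, and hence the corresponding cut is a global minimum vertex-cut. In Procedure~B, if $\vol_G(L)\geq n^\eps\cdot d$, then when we sample an edge uniformly and then a random endpoint,
\[
\Pr[s\in L]\;\geq\;\frac{\vol_G(L)}{2m}\;=\;\frac{\vol_G(L)}{n\cdot d}\;\geq\;n^{\eps-1},
\]
and the same conditional guarantee from \Cref{lem: cut if vertex of L} applies. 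In either case, a single iteration of the relevant procedure succeeds (i.e.\ produces a global min vertex-cut) with probability at least $\tfrac{1}{2}n^{\eps-1}$. Performing $C\cdot n^{1-\eps}\log n$ independent iterations for a sufficiently large constant $C$ drives the probability that no iteration succeeds down to at most $1/n^2$, as required. Since every iteration outputs a valid vertex-cut in $G$, returning the smallest-value one among them is a valid vertex-cut, and, in the good event described above, it must be a global minimum vertex-cut.

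For the running time, each iteration invokes \Cref{lem: cut if vertex of L} once, at cost $O(m^{1+o(1)}\log W)$, and then invokes \Cref{cor: min_vertex_cut} once more on $(G,\{s\},\{y\})$ at the same cost (if it is not already embedded inside the call to \Cref{lem: cut if vertex of L}). Summing over the $\Theta(n^{1-\eps}\log n)$ iterations of each of the two procedures gives a total running time of
\[
O\!\left(n^{1-\eps}\log n\cdot m^{1+o(1)}\log W\right)\;=\;O\!\left(m\,n^{1-\eps+o(1)}\log W\right),
\]
matching the claim. There is no serious technical obstacle here: the whole construction is a standard ``try enough random sources and invoke a minimum $s$--$y$ cut subroutine'' argument; the only points worth double-checking are (i) that the edge-based sampling in Procedure~B really does yield a vertex $s\in L$ with probability $\Omega(n^{\eps-1})$ under the volume assumption, which follows from $2m=n\cdot d$, and (ii) that \Cref{lem: cut if vertex of L}'s failure probability $1/n^4$ is small enough to be absorbed into the final $1/n^2$ bound after a union bound over iterations.
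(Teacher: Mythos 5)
Your proposal is correct and follows essentially the same route as the paper: two sampling procedures (uniform vertex sampling for the case $|L|\geq n^{\eps}$, and random-edge-endpoint sampling for the case $\vol_G(L)\geq n^{\eps}\cdot d$), each performing $\Theta(n^{1-\eps}\log n)$ invocations of \Cref{lem: cut if vertex of L}, with the smallest-value cut returned via \Cref{cor: min_vertex_cut}. The only cosmetic difference is that you recover the explicit cut in every iteration rather than only once for the minimizing index, which does not affect the claimed running time.
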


\begin{proof}
	We design two simple algorithms, denoted by $\alg_1$ and $\alg_2$, respectively. Each of the two algorithms, given the graph $G$ and the parameter $\eps$ as above, produces a vertex-cut in $G$. Additionally, if $|L|\geq n^{\eps}$, then the cut produced by $\alg_1$ is guaranted  to be a global minimum vertex-cut  with probability at least $(1-1/n^2)$, and, if $\vol_G(L)\geq n^{\eps}\cdot d$, then the cut produced by $\alg_2$ has a similar guarantee. Our algorithm then returns the smaller-value cut among the two resulting cuts. In order to complete the proof of \Cref{thm: alg for special}, it is now enough to provide Algorithms $\alg_1$ and $\alg_2$ with these properties. Both algorithms are elementary and are similar to algorithms that appeared in prior work in similar settings, e.g. see Algorithm $\alg_2$ in Section 3.2 of \cite{CT24}. We now describe Algorithm $\alg_1$.
	
\paragraph{Algorithm $\alg_1$.}
Let $q=128n^{1-\eps}\cdot \ceil{\log n}$. We perform $q$ iterations. In the $i$th iteration, we select a vertex $x_i\in V(G)$ uniformly at random, and then 
apply the algorithm from 	\Cref{lem: cut if vertex of L} to vertex $x_i$. Recall that the algorithm returns a vertex $y_i\in V(G)$, and the value $c_i$ of the minimum $x_i$-$y_i$ vertex-cut in $G$. The algorithm also guarantees that,  if there is a global minimum vertex-cut $(L,S,R)$ in $G$ with $x_i\in L$, then with probability at least $1-1/n^4$, $c_i=w(S)$ holds. Let $1\leq i^*\leq q$ be the index for which $c_{i^*}$ is minimized, breaking ties arbitrarily. We use the algorithm from \Cref{cor: min_vertex_cut} to compute a minimum $x_{i^*}$-$y_{i^*}$ cut in $G$, that we denote by $(L_1,S_1,R_1)$. Recall that the value of this cut is guaranteed to be equal to $c_{i^*}$. We then return the cut $(L_1,S_1,R_1)$ as the outcome of the algorithm. This completes the description of Algorithm $\alg_1$. We now bound its running time. The running time of a single execution of the algorithm from \Cref{lem: cut if vertex of L} is $O\left (m^{1+o(1)}\cdot \log W\right )$, and this algorithm is executed $O(n^{1-\eps}\log n)$ times. The running time of the algorithm from  \Cref{cor: min_vertex_cut} is $O\left (m^{1+o(1)}\cdot \log W\right )$, and this algorithm is executed only once. Therefore, the total running time of algorithm $\alg_1$ is bounded by $O(mn^{1-\eps+o(1)}\log W)$.

We now proceed to analyze Algorithm $\alg_1$.
Let $\event$ be the bad event that no vertex of $\set{x_1,\ldots,x_q}$ lies in $L$. Note that, if $|L|\geq n^{\eps}$, then the probability that a randomly chosen vertex $x_i$ of $V(G)$ does not lie in $L$ is bounded by $\left(1-\frac{1}{n^{1-\eps}}\right )$. Since $q=128n^{1-\eps}\cdot \ceil{\log n}$, if $|L|\geq n^{\eps}$, then:

\[\prob{\event}\leq \left(1-\frac{1}{n^{1-\eps}}\right )^q=\left(1-\frac{1}{n^{1-\eps}}\right )^{128n^{1-\eps}\cdot \ceil{\log n}}\leq \frac{1}{n^{3}}. \]

Assume now that Event $\event$ does not happen, and let $1\leq i'\leq q$ be any index with $x_{i'}\in L$. Let $\opt$ be the value of the global minimum vertex-cut in $G$, and $\event'$ be the bad event that $c_{i'}\neq \opt$. Then from \Cref{lem: cut if vertex of L}, $\prob{\event'\mid \neg\event}\leq \frac{1}{n^4}$. If neither of the events $\event,\event'$ happen, then we are guaranteed that $c_{i^*}=c_{i'}=\opt$, and so the value of the vertex-cut $(L_1,S_1,R_1)$ returned by the algorithm is $\opt$. Therefore, altogether, if $|L|\geq n^{\eps}$, then the probability that the cut $(L_1,S_1,R_1)$ is not a global minimum cut is bounded by:

\[\prob{\event}+\prob{\event'\mid\neg\event}\leq \frac{1}{n^{3}}+\frac{1}{n^{4}}\leq \frac{1}{n^2}. \]

We now proceed to describe Algorithm $\alg_2$, which is quite similar.

\paragraph{Algorithm $\alg_2$.}
We again denote $q=128n^{1-\eps}\cdot \ceil{\log n}$ and perform $q$ iterations. In the $i$th iteration, we select an edge $e_i\in E(G)$ uniformly at random, and we denote its endpoints by $x'_{2i-1}$ and $x'_{2i}$, respectively. Let $X=\set{x'_1,\ldots,x'_{2q}}$ be the resulting set of vertices. For all $1\leq j\leq 2q$, we
apply the algorithm from 	\Cref{lem: cut if vertex of L} to vertex 
$x'_j$. Recall that the algorithm returns a vertex $y'_j\in V(G)$, and the value $c'_j$ of the minimum $x'_j$-$y'_j$ vertex-cut in $G$. The algorithm also guarantees that,  if there is a global minimum vertex-cut $(L,S,R)$ in $G$ with $x'_j\in L$, then with probability at least $1-1/n^4$, $c'_j=w(S)$ holds. Let $1\leq j^*\leq 2q$ be the index for which $c'_{j^*}$ is minimized, breaking ties arbitrarily. We use the algorithm from \Cref{cor: min_vertex_cut} to compute a	 minimum $x_{j^*}$-$y_{j^*}$ cut in $G$, that we denote by $(L_2,S_2,R_2)$. Recall that the value of this cut is guaranteed to be equal to $c_{j^*}$. We then return the cut $(L_2,S_2,R_2)$ as the outcome of the algorithm. This completes the description of Algorithm $\alg_2$. Its running time is bounded similarly to that of Algorith $\alg_1$: the running time of a single execution of the algorithm from \Cref{lem: cut if vertex of L} is $O\left (m^{1+o(1)}\cdot \log W\right )$, and this algorithm is executed $O(n^{1-\eps}\log n)$ times. The running time of the algorithm from  \Cref{cor: min_vertex_cut} is $O\left (m^{1+o(1)}\cdot \log W\right )$, and this algorithm is executed only once. Therefore, the total running time of algorithm $\alg_2$ is bounded by: $O(mn^{1-\eps+o(1)}\log W)$.

Let $\hat \event$ be the bad event that no vertex of $X$ lies in $L$. 
Let $E'\subseteq E(G)$ be the set of all edges incident to the vertices of $L$.  Clearly, Event $\hat \event$ may only happen if neither of the edges $e_1,\ldots,e_{q}$ lies in $E'$.
Note that, if $\vol(L)\geq n^{\eps}\cdot d$, then $|E'|\geq \frac{n^{\eps}\cdot d}{2}$, and the probability that a randomly chosen edge $e_i$ of $E(G)$ does not lie in $E'$ is bounded by: 

\[\left(1-\frac{|E'|}{m}\right )\leq \left(1-\frac{n^{\eps}d/2}{nd/2}\right ) =\left(1-\frac{1}{n^{1-\eps}}\right ).\] 

Since $q=128n^{1-\eps}\cdot \ceil{\log n}$, we get that,  if $\vol(L)\geq n^{\eps}\cdot d$, then:

\[\prob{\hat \event}\leq \left(1-\frac{1}{n^{1-\eps}}\right )^q=\left(1-\frac{1}{n^{1-\eps}}\right )^{128n^{1-\eps}\cdot \ceil{\log n}}\leq \frac{1}{n^{3}}. \]

Assume now that Event $\hat \event$ does not happen, and let $1\leq i'\leq 2q$ be any index with $x_{i'}\in L$. Let $\opt$ be the value of the global minimum vertex-cut in $G$, and let $\hat \event'$ be the bad event that $c_{i'}\neq \opt$. Then from \Cref{lem: cut if vertex of L}, $\prob{\hat \event\mid \neg\hat \event}\leq \frac{1}{n^4}$. If neither of the events $\hat \event,\hat\event'$ happen, then we are guaranteed that $c_{i^*}=c_{i'}=\opt$, and so the value of the vertex-cut $(L_2,S_2,R_2)$ returned by the algorithm is $\opt$. Therefore, altogether, if $\vol(L)\geq n^{\eps}\cdot d$, then the probability that the cut $(L_2,S_2,R_2)$ is not a global minimum cut is bounded by:

\[\prob{\hat \event}+\prob{\hat \event'\mid\neg\hat \event}\leq \frac{1}{n^2}. \]

Algorithm $\algspec$ executes Algorithms $\alg_1$ and $\alg_2$ and then returns one of the cuts $(L_1,S_1,R_1)$ or $(L_2,S_2,R_2)$, whose value is smaller. From the above discussion, if there exist a global minimum vertex-cut $(L,S,R)$ in $G$, for  which either $|L|\geq n^{\eps}$ or $\vol_G(L)\geq n^{\eps}\cdot d$ hold, then, with probability at least $(1-1/n^2)$, the returned cut is a global minimum vertex-cut. The running time of the algorithm is $O(mn^{1-\eps+o(1)}\log W)$.
\end{proof}

\section{Some Useful Tools}
\label{sec: tools}

In this section we introduce several additional tools that will be used in our algorithms for both unweighted and weighted graphs. 

Suppose we are given a directed $n$-vertex and $m$-edge graph $G=(V,E)$ with integral weights $1\leq w(v)\leq W$ on its vertices $v\in V$ and a parameter $0<\eps<1$. Throughout, we denote by $d=2m/n$ the average vertex degree in $G$. Recall that algorithm $\algspec$ from \Cref{thm: alg for special} allows us to 
 compute a global minimum vertex-cut in $G$ in the special case where there exists a global minimum vertex-cut $(L,S,R)$ in $G$, for  which either $|L|\geq n^{\eps}$ or $\vol_G(L)\geq n^{\eps}\cdot d$ hold.
Therefore, for some of our algorithms, it is sufficient to consider the case where, for every global minimum vertex-cut $(L,S,R)$ in $G$, $|L|<n^{\eps}$ and $\vol_G(L)<n^{\eps}\cdot d$. Note that, if $(L,S,R)$ is a global minimum vertex-cut in $G$, then, for every vertex $v\in S$, there must be an edge $(u,v)$ with $u\in L$, since otherwise we could move $v$ from $S$ to $R$ and obtain a valid vertex-cut of a smaller value. Therefore, we can assume that $|S|\le \vol(L)< n^{\eps}\cdot d$ holds as well.
To summarize, some of our algorithms will assume that there is a global minimum vertex-cut $(L,S,R)$ in $G$  with the following properties:

\begin{properties}{P}
	\item $|L|< n^{\eps}$;\label{prop: small L}
	\item $|S|< n^{\eps}\cdot d$; and \label{prop: small S}
	\item $\vol_G(L)<n^{\eps}\cdot d$. \label{prop: small volume of L}
\end{properties}

This motivates our definition of high-degree vertices:

\begin{definition}[High-degree vertices]\label{def: high-deg}
	We say that a vertex $v\in V(G)$ is a \emph{high-degree vertex} if $\deg_G(v)\geq d\cdot n^{\eps}$, where $d$ is the average vertex degree in $G$; otherwise, it is a \emph{low-degree vertex}. We denote by $\Vhd=\set{v\in V(G)\mid \deg_G(v)\geq d\cdot n^{\eps}}$ the set of all high-degree vertices of $G$, and by  $\Vld=\set{v\in V(G)\mid \deg_G(v)< d\cdot n^{\eps}}$ the set of all low-degree vertices.
\end{definition}

Note that, if $(L,S,R)$ is a global minimum vertex-cut for which Property \ref{prop: small volume of L} holds, then set $L$ may not contain high-degree vertices.

\subsection{Selection of Terminals and Parameter $\lambda$}

It will be convenient for us to assume that the algorithm is given a correct (approximate) estimate $\lambda$ on the value $|L|$. Additionally, in some of our algorithms, we need to select a relatively small collection $T$ of vertices of $G$ called terminals, so that at least one vertex $v$ of $T$ lies in $L$, and the degree of $v$ in $G$ is not much higher than the average degree of all vertices in $L$. In the following claim we provide a simple algorithm that allows us to achieve both these goals.

\begin{claim}\label{claim: compute terminals}
	There is a randomized algorithm, that we call $\algterm$, that receives as input an $n$-vertex directed graph $G$ with integral weights $w(v)\geq 1$ on its vertices $v\in V(G)$, and returns 
	a value $1\le \lambda\leq n$ that is an integral power of $2$, together with a set $T$ of at most $\ceil{\frac{100n\log n}{\lambda}}$ vertices of $G$. We say that the algorithm is \emph{successful} with respect to a fixed global minimum vertex-cut $(L,S,R)$, if  (i) $\frac{\lambda}{2}\leq |L|\leq\lambda$; (ii) $T\cap L\neq \emptyset$; and (iii) there is a vertex $v\in T\cap L$ with $\deg_G(v)\leq 4\ceil{\frac{\vol_G(L)}{\lambda}}$. Let $(L,S,R)$ be a global minimum vertex-cut in $G$. The running time of the algorithm is $O(n)$, and it is successful with respect to $(L,S,R)$ with probability at least  $\frac{1}{4\log n}$.
\end{claim}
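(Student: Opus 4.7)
The algorithm $\algterm$ first samples $\lambda$ uniformly at random from the set $\{2^0,2^1,\ldots,2^{\lfloor \log n\rfloor}\}$ of integral powers of $2$ in $[1,n]$. It then sets the sampling probability $p=\min\{1,\,10\log n/\lambda\}$ and constructs $T$ by including each vertex $v\in V(G)$ in $T$ independently with probability $p$. If the resulting set has more than $\lceil 100n\log n/\lambda\rceil$ vertices, the algorithm discards arbitrary vertices from $T$ so that the required size bound holds (alternatively, it may reset $T$ to $\emptyset$). Both steps clearly take $O(n)$ time, and the size bound on $T$ is satisfied by construction.

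Now fix the global minimum vertex-cut $(L,S,R)$ and let $\lambda^*=2^{\lceil\log|L|\rceil}$; this is the unique integral power of $2$ in $[1,n]$ with $\lambda^*/2\leq |L|\leq \lambda^*$ (it exists because $1\leq|L|\leq n$). The number of candidate values of $\lambda$ is $\lfloor\log n\rfloor+1\leq 2\log n$, so the algorithm selects $\lambda=\lambda^*$ with probability at least $1/(2\log n)$, and we condition on this event, which gives property (i).

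Conditional on $\lambda=\lambda^*$, call a vertex $v\in L$ \emph{good} if $\deg_G(v)\leq 2\vol_G(L)/|L|$. By Markov's inequality applied to the degree sequence of $L$, at least $|L|/2$ vertices of $L$ are good. Since $\lambda\leq 2|L|$, every good vertex satisfies $\deg_G(v)\leq 2\vol_G(L)/|L|\leq 4\vol_G(L)/\lambda\leq 4\lceil\vol_G(L)/\lambda\rceil$, so containing any good vertex in $T$ certifies property (iii). It therefore suffices to show that, conditional on $\lambda=\lambda^*$, with probability at least $1/2$ both (a) the unpruned set $T$ contains a good vertex, and (b) $|T|\leq \lceil 100n\log n/\lambda\rceil$, so that pruning is not triggered.

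For (a), the probability that no good vertex is sampled is at most $(1-p)^{|L|/2}$. If $p=1$ this is $0$; otherwise $p=10\log n/\lambda$ and, using $|L|/\lambda\geq 1/2$, we bound it by $e^{-p|L|/2}\leq e^{-10\log n/4}\leq n^{-3}$. For (b), if $p=1$ then necessarily $\lambda=1$ and $|T|=n\leq \lceil 100n\log n\rceil$ holds automatically. Otherwise $\mu:=\mathbb E[|T|]=10n\log n/\lambda$, and the bound $\lceil 100n\log n/\lambda\rceil\geq 10\mu$ exceeds $2e\mu$, so the Chernoff bound of \Cref{lem: Chernoff} gives
\[
\Pr\left[|T|>10\mu\right]\leq 2^{-10\mu}\leq 2^{-100\log n}=n^{-100},
\]
since $\mu\geq 10\log n$ (as $\lambda\leq n$). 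By the union bound, (a) and (b) both hold with probability at least $1-n^{-3}-n^{-100}\geq 1/2$ for all sufficiently large $n$ (small $n$ are handled by returning $\lambda=1$ and $T=V(G)$ outright). Multiplying by the probability of choosing $\lambda=\lambda^*$, the overall success probability is at least $\frac{1}{2}\cdot\frac{1}{2\log n}=\frac{1}{4\log n}$, as required.
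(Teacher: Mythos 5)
Your approach is genuinely different from the paper's. The paper builds $T$ by performing exactly $\ceil{100n\log n/\lambda}$ independent uniform trials (selection with replacement); the cardinality bound on $T$ is then automatic, so no concentration argument or pruning step is ever needed. You instead include each vertex independently with a $\lambda$-dependent Bernoulli rate $p = \min\{1, 10\log n/\lambda\}$ and invoke the Chernoff bound of \Cref{lem: Chernoff} to control $|T|$, pruning if the rare oversize event occurs. Both routes work, but the paper's is a bit cleaner precisely because the size constraint is vacuous by construction. Your Markov-style count of "good" vertices (degree at most $2\vol_G(L)/|L|$, which is at most $4\ceil{\vol_G(L)/\lambda}$ once $\lambda \leq 2|L|$) is in spirit the same computation the paper performs directly on the threshold $4\ceil{\vol_G(L)/\lambda}$, just packaged differently.

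There is one fixable gap: you draw $\lambda$ from $\{2^0,\ldots,2^{\floor{\log n}}\}$, but $\lambda^* = 2^{\ceil{\log |L|}}$ need not lie in that set. For instance, if $n=10$ and $|L|=9$, then $\lambda^*=16$, yet your largest candidate is $8$, and no admissible $\lambda$ satisfies $\lambda/2 \leq |L| \leq \lambda$, so the success probability collapses to $0$ for that cut. The paper avoids this by allowing the exponent to range up to $\ceil{\log n}$; you should extend your range by one power of two (and note that a returned $\lambda$ slightly exceeding $n$ is harmless, as it is also in the paper's construction). Two further small slips are worth flagging but do not change the outcome. First, $p=1$ does not force $\lambda=1$; it only forces $\lambda \leq 10\log n$, and the conclusion $|T| = n \leq \ceil{100n\log n/\lambda}$ then holds because $\lambda \leq 100\log n$, not because $\lambda=1$. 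Second, the "small $n$" fallback of deterministically returning $\lambda=1$ and $T=V(G)$ fails the guarantee for any minimum cut with $|L|>1$; fortunately that fallback is also unnecessary, since $1 - n^{-3} - n^{-100} \geq 1/2$ already holds for every $n\geq 2$.
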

\begin{proof}
	The algorithm starts by selecting an integer $1\leq i\leq \ceil{\log n}$ uniformly at random and setting $\lambda=2^i$. Additionally, it selects a set $T$ of  vertices of $G$ by performing  $\ceil{\frac{100n\log n}{\lambda}}$ trials; in every trial, a single vertex is selected uniformly at random from $G$ and is added to $T$. Clearly, the running time of the algorithm is bounded by $O(n)$.
	
	Consider now some fixed global minimum vertex-cut $(L,S,R)$. Let $L'\subseteq L$ be the set of all vertices $v\in L$ with $\deg_G(v)\leq 4\ceil{\frac{\vol_G(L)}{\lambda}}$. 
	Let $\event_1$ be the good event that $\frac{\lambda}{2}\leq |L|\leq \lambda$. It is easy to verify that $\prob{\event_1}\geq \frac{1}{2\log n}$.
	
	Next, we let $\event_2$ be the good event that in some trial, a vertex $x\in L'$ was selected. Note that, if Event $\event_1$ happened, 
	then the number of vertices $v\in L$ with $\deg_G(v)> 4\ceil{\frac{\vol_G(L)}{\lambda}}$ must be bounded by $\frac{\lambda}{4}\leq \frac{|L|}{2}$. Therefore, $|L'|\geq \frac{|L|}{2}$ must hold. The probability that in a single trial a vertex of $L'$ is selected is at least $\frac{|L'|}{n}\geq \frac{\lambda}{4n}$. Therefore:

	\[\prob{\neg \event_2\mid \event_1}\leq \left(1-\frac{\lambda}{4n}\right )^{\ceil{(100n\log n)/\lambda}}\leq e^{-25\log n},\]
	
	and $\prob{\event_2\mid \event_1}\geq 1-\frac{1}{n}$.
	
	Overall, if both of the events $\event_1$ and $\event_2$ happen, then Algorithm \algselectterm is successful with respect to $(L,S,R)$. Therefore, 
	the  the probability that the algorithm is successful is at least:
	
	\[\prob{\event_1\band\event_2}\geq \prob{\event_1}\cdot \prob{\event_2\mid \event_1}\geq \frac{1}{4\log n}. \]
\end{proof}

\subsection{A New Tool: an Approximate Furthest Min-Cut}\label{sec : approximately-minimum-containing-cut-problem}

It is well known that, for any edge-capacitated graph $G$ and a pair $x,y\in V(G)$ of distinct vertices, there exists a minimum $x$-$y$ edge-cut $(X,Y)$ in $G$ with the following property: for every other minimum $x$-$y$ edge-cut $(X',Y')$ in $G$, $X' \subseteq X$ holds. Such a cut $(X,Y)$ is sometimes referred to as the \emph{furthest minimum $x$-$y$ cut} (in fact, an analogous notion of ``closest min-cut'' was also used in the past, perhaps even more often, see e.g. \cite{ASY25}).
The furthest minimum $x$-$y$ cut $(X,Y)$ can be computed by first computing a maximum $x$-$y$ flow $f$ in $G$, and then letting $X$ contain all vertices $v\in G$ such that there is no $v$-$y$ path in the resulting residual flow network $G_f$.
In this section, we generalize the notion of furthest cuts to capture not just exact minimum $x$-$y$ cuts, but also approximate ones, and provide an efficient algorithm for computing such cuts. 
We refer to the corresponding problem as the \emph{Approximate Furthest Min-Cut} problem, or \AFMC for short. We distinguish between the edge- and the vertex-versions of the problem, that we refer to as edge-\AFMC and vertex-\AFMC, respectively. We start by defining the edge version of the problem.

\begin{definition}[Edge-\AFMC]\label{def : almost-minimum-furthest-edge-cut}
In the edge version of the Approximate Furthest Min-Cut problem (edge-\AFMC), the input consists of a directed $n$-vertex and $m$-edge graph $G$ with integral capacities $0\leq c(e) \leq W$ on its edges $e \in E(G)$, a pair $x,y \in V(G)$ of distinct vertices, and an integral parameter $\alpha>0$.
We denote the value of the minimum $x$-$y$ edge-cut in $G$ by $\opt_{x,y}^{E}$.
The goal is to compute an $x$-$y$ edge-cut $(X,Y)$ in $G$ of value $c(X,Y)\leq \opt_{x,y}^{E}+\alpha$, such that, for every other $x$-$y$ edge-cut $(X',Y')$ the following holds:
\begin{equation}\label{eq : condition-in-almost-furthest-edge-cut-problem}
	|X' \setminus X| \leq \left(c(X',Y') - \opt_{x,y}^{E}\right )\cdot\frac{n}{\alpha}.
\end{equation}
\end{definition}

 Observe that, unlike in the case of the exact furthest minimum edge-cut, we no longer require that the cut $(X,Y)$ is the minimum $x$-$y$ edge-cut; instead, it is sufficient that its value is within an additive $\alpha$ value from the optimal one. Additionally, for all $\beta>0$, for every other $x$-$y$ edge-cut $(X',Y')$ whose value is at most $\opt^E_{x,y}+\beta$, we only require that all but at most $\frac{\beta\cdot n}{\alpha}$ vertices of $X'$ are contained in $X$, relaxing the stricter requirement that $X'\subseteq X$ in the exact furthest minimum edge-cut.
%
%
In the following theorem, whose  proof appears in \Cref{sec : proof of edge capacitated approximately minimum containing cut}, we provide an efficient algorithm for the edge-\AFMC problem.

\begin{theorem}\label{thm : edge capacitated approximately minimum containing cut}
There is a deterministic algorithm for the edge-\AFMC  problem with running time $O\left ((n+m)^{1+o(1)}\log W\cdot \log(2+\alpha)\right )$.
\end{theorem}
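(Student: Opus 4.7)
My plan is to reduce the edge-\AFMC problem to a single minimum $x$-$y$ edge-cut computation on a modified graph $G'$ that penalizes cuts $(X', Y')$ for being ``too close to $x$'' (i.e.\ having many vertices in $Y'$). I will construct $G'$ from $G$ by (i) multiplying every edge capacity of $G$ by $n$, and (ii) for every vertex $v \in V(G) \setminus \{x\}$, adding $\alpha$ to the capacity of the edge $(x, v)$ (inserting the edge with capacity $\alpha$ if $(x, v) \notin E(G)$). Then for every $x$-$y$ edge-cut $(X', Y')$ one has
\[ c_{G'}(X', Y') = n \cdot c_G(X', Y') + |Y'| \cdot \alpha, \]
since every vertex in $Y'$ contributes an extra $\alpha$ from the augmented edges originating at $x$. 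The algorithm then computes an arbitrary minimum $x$-$y$ edge-cut $(X, Y)$ in $G'$ using \Cref{cor: mincut} and returns it.

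\textbf{Analysis via submodularity.} For the value bound $c_G(X, Y) \leq \opt_{x,y}^{E} + \alpha$: plugging an arbitrary minimum $x$-$y$ cut $(X^*, Y^*)$ of $G$ into the identity above gives $c_{G'}(X^*, Y^*) \leq n \cdot \opt_{x,y}^{E} + n \alpha$, and minimality of $(X, Y)$ in $G'$ combined with $c_G(X, Y) \leq c_{G'}(X, Y)/n$ yields the bound. For the second condition, fix any $x$-$y$ cut $(X', Y')$ in $G$. Submodularity of the cut function in $G'$ gives
\[ c_{G'}(X, Y) + c_{G'}(X', Y') \geq c_{G'}(X \cup X', Y \cap Y') + c_{G'}(X \cap X', Y \cup Y'). \]
Since $(X \cup X', Y \cap Y')$ is itself an $x$-$y$ cut in $G'$ and $(X, Y)$ is a minimum such cut, the first term on the right is at least $c_{G'}(X, Y)$; rearranging yields $c_{G'}(X', Y') \geq c_{G'}(X \cap X', Y \cup Y')$. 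Expanding both sides via $c_{G'}(\cdot) = n \cdot c_G(\cdot) + |\cdot| \cdot \alpha$, using $|Y \cup Y'| - |Y'| = |Y \setminus Y'| = |X' \setminus X|$, and using that $(X \cap X', Y \cup Y')$ is an $x$-$y$ cut in $G$ (so its $G$-value is at least $\opt_{x,y}^{E}$), one obtains
\[ n \cdot \bigl(c_G(X', Y') - \opt_{x,y}^{E}\bigr) \geq |X' \setminus X| \cdot \alpha, \]
which is exactly \Cref{eq : condition-in-almost-furthest-edge-cut-problem}.

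\textbf{Running time and main obstacle.} Constructing $G'$ takes $O(n+m)$ time; the resulting graph has $O(n+m)$ edges and maximum capacity $\max(nW, \alpha)$, so \Cref{cor: mincut} computes a minimum $x$-$y$ cut in time $O\!\left((n+m)^{1+o(1)} \cdot \log(\max(nW, \alpha))\right)$, which is bounded by $O\!\left((n+m)^{1+o(1)} \cdot \log W \cdot \log(2 + \alpha)\right)$ after absorbing $\log n$ into the $(n+m)^{o(1)}$ factor. The only conceptual step is identifying the correct modification of $G$: the added capacity of $\alpha$ on each edge $(x, v)$ makes the cut function in $G'$ charge exactly $\alpha$ per vertex placed on the sink side, which is precisely what converts the generic submodularity inequality into the $|X' \setminus X| \cdot \alpha/n$ slack demanded by \Cref{def : almost-minimum-furthest-edge-cut}. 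Once this construction is in place, no further technical obstacle remains, and in particular we do not need to select a specific ``furthest'' minimum cut in $G'$ --- any minimum cut suffices.
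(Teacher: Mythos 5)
Your proposal is correct, and its core construction is the same as the paper's: penalize every vertex on the sink side by roughly $\alpha/n$ via auxiliary edges leaving $x$, take a minimum $x$-$y$ cut in the augmented graph, and argue that any cut $(X',Y')$ whose $G$-value is close to optimal can have only few vertices of $X'$ outside $X$. The difference is in how you get there. The paper first computes a maximum $x$-$y$ flow in $G$ and passes to the residual network $H=G_f$, so that the minimum cut value becomes $0$; it then adds fake edges $(x,v)$ of capacity $\alpha/n$ (scaled to be integral) and proves the two guarantees by an explicit exchange argument, showing that if some cut $(X',Y')$ violated the slack condition, then moving $X'\setminus X$ to the source side would produce a cheaper cut in the augmented graph, contradicting minimality. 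You skip the flow/residual step entirely: by scaling the capacities of $G$ by $n$ and giving the edges $(x,v)$ capacity $\alpha$, you get the exact identity $c_{G'}(X',Y')=n\,c_G(X',Y')+|Y'|\alpha$, which lets you handle a nonzero optimum directly, and you replace the paper's exchange argument by the generic submodularity/uncrossing inequality together with minimality of $(X,Y)$ in $G'$ (these are really the same calculation in different clothing, since $|Y\setminus Y'|=|X'\setminus X|$). Your route is slightly leaner --- one min-cut computation instead of a max-flow plus a min-cut --- and the running-time bookkeeping is the same up to absorbing $\log n$ factors into $(n+m)^{o(1)}$, exactly as in the paper. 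The only cosmetic point to add is that edges of capacity $0$ should be discarded (and note that all capacities in $G'$ are integral) so that the min-cut subroutine of \Cref{cor: mincut} applies verbatim; this does not affect any cut value.
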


In our algorithm for the weighted version of global minimum vertex-cut in dense graphs, we will need to solve the vertex version of the \AFMC problem, that we refer to as \emph{vertex-\AFMC} and define next.
The vertex version is very similar to the edge version, except that it naturally uses vertex-cuts $(L,S,R)$ and $(L',S',R')$ instead of the edge-cuts $(X,Y)$ and $(X',Y')$, and it requires that the cardinality of $(L' \cup S') \setminus (L \cup S)$ is bounded, instead of the cardinality of $X' \setminus X$.

\begin{definition}[Vertex-\AFMC]\label{def : almost-minimum-furthest-vertex-cut}
 In the vertex version of the Approximate Furthest Min-Cut problem (vertex-\AFMC), the input is a directed $n$-vertex and $m$-edge graph $G$ with integral weights $0\leq w(v) \leq W$ on its vertices $v \in V(G)$, a pair of distinct vertices $x,y \in V(G)$ such that $(x,y) \notin E(G)$, and an integral parameter $0<\alpha<\frac{\wmax(G)}{2}$. 
We denote the value of the minimum $x$-$y$ vertex-cut in $G$ by $\opt_{x,y}^{V}$.
The goal is to compute an $x$-$y$ vertex-cut $(L,S,R)$ in $G$, with $w(S)\leq \opt_{x,y}^{V}+2\alpha$, such that, for every other $x$-$y$ vertex-cut $(L',S',R')$, the following holds:
\begin{equation}\label{eq : guarnatee in vertex weighted approximately minimum containing cut problem}
	|(L' \cup S') \setminus (L \cup S)| \leq \left(w(S') - \opt_{x,y}^{V}\right )\cdot \frac{n}{\alpha}.
\end{equation}
\end{definition}

We note that Condition \ref{eq : guarnatee in vertex weighted approximately minimum containing cut problem} in the above definition is an analogue of Condition \ref{eq : condition-in-almost-furthest-edge-cut-problem} for edge-\AFMC. In particular, it requires that, for every value $\beta\geq 0$, for every $x$-$y$ vertex-cut $(L',S',R')$ of value $w(S')\leq \opt_{x,y}^{V} + \beta$, all but at most $\frac{\beta\cdot n}{\alpha}$ vertices of $L' \cup S'$ are contained in $L \cup S$.

In the following theorem we provide an efficient algorithm for the vertex-\AFMC problem, by reducing it to edge-\AFMC via a split graph, and then employing the algorithm from \Cref{thm : edge capacitated approximately minimum containing cut}.

\begin{theorem}\label{thm : vertex weighted approximately minimum containing cut}
There is a deterministic algorithm for the vertex-\AFMC problem with running time $O\left ((n+m)^{1+o(1)}\log W\cdot \log(2+\alpha)\right )$.
\end{theorem}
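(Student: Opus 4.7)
The plan is to reduce the problem to an instance of edge-\AFMC on the split graph $G'$ of $G$, and then invoke the algorithm of \Cref{thm : edge capacitated approximately minimum containing cut}. Construct $G'$ in time $O(n+m)$; by \Cref{cor : minimum cut in split graph}, since $(x,y)\notin E(G)$, the value of the minimum $x^{\out}$-$y^{\inn}$ edge-cut in $G'$ is exactly $\opt^V_{x,y}$. Because \Cref{obs: existential from split to regular} requires the returned edge-cut to have value strictly less than the regular-edge capacity (so that no regular edge is cut), and because the only bound we have on $\alpha$ is $\alpha<\wmax(G)/2$, we first slightly modify $G'$ by raising the capacity of every regular edge from $\wmax(G)$ to some $\hat W > \opt^V_{x,y}+2\alpha$, say $\hat W := 2\wmax(G)$. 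This modification does not change any $x^{\out}$-$y^{\inn}$ edge-cut of value below $\hat W$, and in particular does not change $\opt^E_{x^{\out},y^{\inn}}$.

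Next, apply the algorithm of \Cref{thm : edge capacitated approximately minimum containing cut} to the modified $G'$, with $s:=x^{\out}$, $t:=y^{\inn}$, and parameter $\alpha':=2\alpha$. This returns an $s$-$t$ edge-cut $(X,Y)$ in $G'$ with $c(X,Y)\le \opt^V_{x,y}+2\alpha < \hat W$, satisfying, for every other $s$-$t$ edge-cut $(X',Y')$,
\[
 |X'\setminus X| \le \bigl(c(X',Y')-\opt^V_{x,y}\bigr)\cdot \tfrac{|V(G')|}{\alpha'} = \bigl(c(X',Y')-\opt^V_{x,y}\bigr)\cdot \tfrac{n}{\alpha}.
\]
Since $c(X,Y)<\hat W$, the cut contains no regular edge, so \Cref{obs: existential from split to regular} (applied with the regular-edge capacity threshold $\hat W$ in place of $\wmax(G)$) produces a valid $x$-$y$ vertex-cut $(L,S,R)$ in $G$ with $L=\{v\in V(G)\setminus\{y\}\mid v^{\out}\in X\}$, $S=\{v\in V(G)\mid v^{\inn}\in X,\ v^{\out}\in Y\}$, $R=V(G)\setminus(L\cup S)$, and $w(S)\le c(X,Y)\le \opt^V_{x,y}+2\alpha$. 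This establishes the first required property and will be the cut that we output.

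It remains to verify the furthest-cut inequality \eqref{eq : guarnatee in vertex weighted approximately minimum containing cut problem}. Fix any $x$-$y$ vertex-cut $(L',S',R')$ in $G$ and, using \Cref{obs: existential from regular to split}, let $(X',Y')$ be the corresponding $s$-$t$ edge-cut in $G'$, where $X'=(L')^{*}\cup (S')^{\inn}$ and $Y'=(S')^{\out}\cup (R')^{*}$; by that observation $c(X',Y')=w(S')$. Note that $v\in L'\cup S'$ implies $v^{\inn}\in X'$, while the definition of $(L,S,R)$ gives $L\cup S=\{v\mid v^{\inn}\in X\}$, so $v\notin L\cup S$ implies $v^{\inn}\notin X$. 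Consequently, $v\mapsto v^{\inn}$ is an injection from $(L'\cup S')\setminus(L\cup S)$ into $X'\setminus X$. Combining with the size guarantee from the edge-\AFMC call yields
\[
 |(L'\cup S')\setminus(L\cup S)| \le |X'\setminus X| \le \bigl(w(S')-\opt^V_{x,y}\bigr)\cdot \tfrac{n}{\alpha},
\]
as required.

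The only non-routine design choice is the capacity inflation in Step~1: without it, the cut returned by edge-\AFMC could have value in $[\wmax(G),\,\opt^V_{x,y}+2\alpha]$ and then cross a regular edge, breaking the conversion via \Cref{obs: existential from split to regular}; inflating to $\hat W:=2\wmax(G)$ precludes this without altering correctness of the split-graph reduction. The running time is dominated by the single call to the edge-\AFMC algorithm on a graph of size $O(n+m)$ with integral capacities bounded by $O(\hat W)=O(nW)$ and parameter $2\alpha$, which by \Cref{thm : edge capacitated approximately minimum containing cut} costs $O\bigl((n+m)^{1+o(1)}\log(nW)\cdot \log(2+2\alpha)\bigr)=O\bigl((n+m)^{1+o(1)}\log W\cdot \log(2+\alpha)\bigr)$, absorbing the $\log n$ factor into $(n+m)^{o(1)}$.
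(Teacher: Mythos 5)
Your proposal is correct and follows essentially the same route as the paper: reduce to edge-\AFMC{} on the split graph with $s=x^{\out}$, $t=y^{\inn}$, convert the returned edge-cut back via \Cref{obs: existential from split to regular}, and prove the furthest-cut property through the same injection $v\mapsto v^{\inn}$ from $(L'\cup S')\setminus(L\cup S)$ into $X'\setminus X$. The only difference is cosmetic: the paper keeps the parameter $\alpha$ unchanged and uses $\opt^V_{x,y}<\wmax(G)/2$ together with $\alpha<\wmax(G)/2$ to conclude $c(X,Y)<\wmax(G)$ directly (so no capacity inflation is needed and one even gets $w(S)\le\opt^V_{x,y}+\alpha$), whereas your choice $\alpha'=2\alpha$ with regular capacities raised to $2\wmax(G)$ is a harmless variant (note also that your claimed equality $L\cup S=\{v\mid v^{\inn}\in X\}$ should only be the containment $\{v\mid v^{\inn}\in X\}\subseteq L\cup S$, which is the direction you actually use).
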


\begin{proof}
Recall that we are given as input a directed $n$-vertex and $m$-edge graph $G$ with integral weights $0\leq w(v) \leq W$ on its vertices $v \in V(G)$, a pair of distinct vertices $x,y \in V(G)$ such that $(x,y) \notin E(G)$, and an integral parameter $0<\alpha<\wmax(G)/2$.

We start by computing, in time $O(m)$, the split graph $G'$ of $G$  (see \Cref{subsec: split graph} for the definition). We then apply the algorithm from \Cref{thm : edge capacitated approximately minimum containing cut} to the instance of edge-\AFMC defined by the graph $G'$, the pair $x^{\out},y^{\inn} \in V(G')$ of its vertices, and the parameter $\alpha$ that remains unchanged. Recall that the running time of the algorithm from \Cref{thm : edge capacitated approximately minimum containing cut} is $O\left((|V(G')|+|E(G')|)^{1+o(1)} \cdot \log \wmax(G)\cdot \log(2+\alpha)\right )\leq O\left ((n+m)^{1+o(1)}\log W\cdot \log(2+\alpha)\right )$, since $|V(G')|\leq 2n$, $|E(G')|\leq m+n$, and $\wmax(G) \leq 32n \cdot (W+1)$.
Let $(X,Y)$ denote the $x^{\out}$-$y^{\inn}$ edge-cut in $G'$ that the algorithm from \Cref{thm : edge capacitated approximately minimum containing cut} outputs.

Let $\opt_{x,y}^{V}$ denote the value of the minimum $x$-$y$ vertex-cut in $G$, and let $\opt_{x^{\out},y^{\inn}}^{E}$ denote the value of the minimum $x^{\out}$-$y^{\inn}$ edge-cut in $G'$.
Note that, from \Cref{cor : minimum cut in split graph},  $\opt_{x^{\out},y^{\inn}}^{E}=\opt_{x,y}^{V} < \frac{\wmax(G)}{2}$ must hold.
We will use the following simple observation.

\begin{observation}\label{obs: small cut value}
	 $c(X,Y) < \wmax(G)$.
\end{observation}
\begin{proof}
As observed already, $\opt_{x^{\out},y^{\inn}}^{E}=\opt_{x,y}^{V} < \frac{\wmax(G)}{2}$ must hold. From the definition of vertex-\AFMC, $\alpha<\frac{\wmax(G)}{2}$, and,  from the definition of edge-\AFMC:

\[c(X,Y) \leq \opt_{x^{\out},y^{\inn}}^{E} + \alpha<\wmax(G).\] 
\end{proof}

Let $L\subseteq V(G)$ be a vertex set that is defined as follows:

\[L=\set{v \in V(G) \setminus \{y\}\mid v^{\out} \in X}.\]

Let $S=\set{v\in V(G)\mid v^{\inn}\in X,v^{\out}\in Y}$, and let $R=V(G)\setminus(L\cup S)$.
By combining \Cref{obs: existential from split to regular} with \Cref{obs: small cut value}, we get that $(L,S,R)$ is a valid $x$-$y$ vertex-cut in $G$ with $w(S)\leq c(X,Y)\leq \opt_{x^{\out},y^{\inn}}^E+\alpha=\opt_{x,y}^V+\alpha$.
We return $(L,S,R)$ as the algorithm's outcome. Note that, given the edge-cut $(X,Y)$ in $G'$, the corresponding vertex-cut $(L,S,R)$ in $G$ can be computed in time $O(n)$. Therefore, the total running time of the algorithm is $O\left ((n+m)^{1+o(1)}\log W\cdot \log(2+\alpha)\right )$.

We claim that $(L,S,R)$  is a valid output for the vertex-\AFMC problem instance. Indeed, we already established that $(L,S,R)$ is a valid $x$-$y$ vertex-cut in $G$, and that $w(S)\leq \opt_{x,y}^V+\alpha$. Consider now any other $x$-$y$ vertex-cut $(L',S',R')$ in $G$, and let $(X',Y')$ be a partition of vertices of $G'$ with $X'=(L')^* \cup (S')^{\inn}$ and $Y' = (S')^{\out} \cup (R')^*$.
From \Cref{obs: existential from regular to split}, $(X',Y')$ is an $x^{\out}$-$y^{\inn}$ edge-cut in $G'$ whose value is $c(X',Y')=w(S')$. From the definition of the edge-\AFMC problem: 

\[|X' \setminus X| \leq \left(c(X',Y') - \opt_{x^{\out},y^{\inn}}^{E}\right )\cdot \frac{n}{\alpha} = \left(w(S') - \opt_{x,y}^{V}\right )\cdot \frac{n}{\alpha}\]

In order to complete the proof of \Cref{thm : vertex weighted approximately minimum containing cut}, it is now enough to prove that $|(L' \cup S') \setminus (L \cup S)| \leq |X' \setminus X|$ must hold. 
We do so by showing that, for every vertex $v\in (L' \cup S') \setminus (L \cup S)$, $v^{\inn}\in X'\setminus X$ holds.
Consider now any vertex $v\in (L' \cup S') \setminus (L \cup S)$. From the definition of the edge-cut $(X',Y')$, since $v\in L'\cup S'$, $v^{\inn}\in X'$ must hold. On the other hand, $v^{\inn}$ may not lie in $X$: indeed, if both $v^{\inn},v^{\out}\in X$, then $v\in L$ must hold from the definition of the set $L$; and if $v^{\inn}\in X$ and $v^{\out}\not\in X$ then $v\in S$ by the definition of the set $S$. We conclude that, for every vertex $v\in (L' \cup S') \setminus (L \cup S)$, $v^{\inn}\in X' \setminus X$ must hold, and so:

\[|(L' \cup S') \setminus (L \cup S)| \leq |X' \setminus X|\leq \left(w(S') - \opt_{x,y}^{V}\right )\cdot \frac{n}{\alpha},\]

as required, and, in particular, $(L,S,R)$ is a valid output for the vertex-\AFMC problem instance.
\end{proof}

\subsubsection{An Algorithm for the Edge-\AFMC Problem: Proof of \Cref{thm : edge capacitated approximately minimum containing cut}} \label{sec : proof of edge capacitated approximately minimum containing cut}

In this subsection we prove \Cref{thm : edge capacitated approximately minimum containing cut}, by designing an algorithm for the  edge-\AFMC Problem.
Recall we are given as input a directed $n$-vertex and $m$-edge graph $G$ with integral capacities $0\leq  c(e) \leq W$ on its edges $e \in E(G)$, a pair $x,y \in V(G)$ of distinct vertices, and an integral parameter $\alpha>0$.
We denote the value of the minimum $x$-$y$ edge-cut in $G$ by $\opt_{x,y}^{E}(G)$.
Our goal is to compute an $x$-$y$ edge-cut $(X,Y)$ in $G$ of value $c(X,Y)\leq \opt_{x,y}^{E}(G)+\alpha$, such that, for every other $x$-$y$ edge-cut $(X',Y')$,
$|X' \setminus X| \leq \left(c(X',Y') - \opt_{x,y}^{E}(G)\right )\cdot\frac{n}{\alpha}$ holds.

Our algorithm consists of two steps. In the first step, we use the  algorithm from \Cref{thm: maxflow}, in order to compute a maximum $x$-$y$ flow $f$ in $G$, in time $O(m^{1+o(1)} \cdot \log W)$. We then compute the residual flow network $H=G_f$ of the graph $G$ with respect to the flow $f$.  We denote the edge capacities in $H$ by $c_H(\cdot)$. In the remainder of the algorithm, we will solve the edge-\AFMC problem instance defined by the graph $H$, the pair $(x,y)$ of its vertices, and the parameter $\alpha$ that remains unchanged. In the following claim we show that, if $(X,Y)$ is a valid solution to the edge-\AFMC problem instance defined by the graph $H$ (that we denote by $(H,x,y,\alpha)$), then it must also be a valid solution to the edge-\AFMC problem instance defined by the graph $G$ (that we denote by $(G,x,y,\alpha)$.

\begin{claim}\label{cl : reduction-to-special-case-in-proof-of-furthest-edge-cut}
	If $(X,Y)$ is a valid solution to the instance $(H,x,y,\alpha)$ of the edge-\AFMC problem, then it is a valid solution to the instance $(G,x,y,\alpha)$ of the edge-\AFMC problem.
\end{claim}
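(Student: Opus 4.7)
The proof will be a short calculation that leverages Observation \ref{fact : residual-capacity-of-edge-cut} together with the fact that $f$ is a \emph{maximum} $x$-$y$ flow in $G$. The central observation is that $V(H)=V(G)$, so every bipartition $(A,B)$ of $V(G)$ with $x\in A$ and $y\in B$ is simultaneously an $x$-$y$ edge-cut in $G$ and in $H$; only the cut value changes. Moreover, by Observation \ref{fact : residual-capacity-of-edge-cut}, the two values are related by
\[ c_H(A,B) \;=\; c_G(A,B) - \val(f) \;=\; c_G(A,B) - \opt_{x,y}^{E}(G), \]
where the last equality uses that $f$ is a maximum $x$-$y$ flow in $G$, combined with the Max-Flow/Min-Cut theorem.

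Taking the minimum of both sides over all $x$-$y$ edge-cuts, I would first deduce $\opt_{x,y}^{E}(H)=0$. Then I would verify the two defining properties of a valid solution to $(G,x,y,\alpha)$ in turn. For the value bound, assuming $(X,Y)$ is valid for $(H,x,y,\alpha)$, we have $c_H(X,Y)\leq \opt_{x,y}^{E}(H)+\alpha=\alpha$, which, combined with the identity above, yields $c_G(X,Y)\leq \opt_{x,y}^{E}(G)+\alpha$, as required. For the "furthest" condition, for any other $x$-$y$ edge-cut $(X',Y')$, validity of $(X,Y)$ for the $H$-instance gives
\[ |X'\setminus X| \;\leq\; \bigl(c_H(X',Y') - \opt_{x,y}^{E}(H)\bigr)\cdot \tfrac{n}{\alpha} \;=\; c_H(X',Y')\cdot \tfrac{n}{\alpha} \;=\; \bigl(c_G(X',Y') - \opt_{x,y}^{E}(G)\bigr)\cdot \tfrac{n}{\alpha}, \]
which is precisely Condition \eqref{eq : condition-in-almost-furthest-edge-cut-problem} for the $G$-instance.

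There is essentially no obstacle here: this reduction is a one-line consequence of Observation \ref{fact : residual-capacity-of-edge-cut}, once one notices that maximality of $f$ forces $\opt_{x,y}^{E}(H)=0$ and that the additive parameter $\alpha$ is preserved verbatim under the translation $c_H\mapsto c_G-\opt_{x,y}^{E}(G)$. The purpose of the claim is clearly to enable the second step of the algorithm to work on the residual graph $H$, where $\opt_{x,y}^{E}(H)=0$, reducing the edge-\AFMC problem to finding, for every $x$-$y$ cut $(X',Y')$ in $H$, a containing cut $(X,Y)$ of value at most $\alpha$ whose size-overrun is controlled by $c_H(X',Y')\cdot n/\alpha$; the remaining work of the theorem will be devoted to this simpler sub-problem on $H$.
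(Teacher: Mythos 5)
Your proposal is correct and follows essentially the same route as the paper's proof: both use the maximality of $f$ (via Max-Flow/Min-Cut) to conclude that the minimum $x$-$y$ cut value in $H$ is $0$, and then apply Observation \ref{fact : residual-capacity-of-edge-cut} to translate the value bound and Condition \eqref{eq : condition-in-almost-furthest-edge-cut-problem} from the instance on $H$ back to the instance on $G$. No gaps.
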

\begin{proof}
	Note that, since $f$ is the maximum $x$-$y$ flow in $G$, from the properties of the residual flow network, the value of the maximum 
	$x$-$y$ flow, and hence of the minimum $x$-$y$ in $H$ is $0$. Consider now a solution $(X,Y)$ to the instance $(H,x,y,\alpha)$ of edge-\AFMC. Then, from the problem definition, $c_H(X,Y)\leq \alpha$. Moreover, for every other $x$-$y$ edge-cut $(X',Y')$ in $H$, $|X'\setminus X|\leq c_H(X',Y')\cdot \frac{n}{\alpha}$.
	
	From \Cref{fact : residual-capacity-of-edge-cut}:
	
	\[c(X,Y)=c_{H}(X,Y)+\val(f)\leq \alpha+\opt^E_{x,y}(G).\]
	
	Consider now any other $x$-$y$ cut $(X'',Y'')$ in $G$, and observe that $(X'',Y'')$ also defines an $x$-$y$ cut in $H$.
	From \Cref{fact : residual-capacity-of-edge-cut}, $c_H(X'',Y'')=c(X'',Y'')-\val(f)=c(X'',Y'')-\opt^E_{x,y}(G)$.  Therefore:

\[
|X''\setminus X| \leq c_H(X'',Y'')\cdot \frac{n}{\alpha}
=\left(c(X'',Y'')-\opt^E_{x,y}(G)\right )\cdot \frac{n}{\alpha}.\]
\end{proof}

From \Cref{cl : reduction-to-special-case-in-proof-of-furthest-edge-cut}, it is now sufficient to design an algorithm for the instance $(H,x,y,\alpha)$ of edge-\AFMC. As observed already, the value of the minimum $x$-$y$ cut in $H$ is $0$.
We start by constructing a graph $\hat H$, by starting with $\hat H=H$ and then adding, for every vertex $v \in V(H)$ a new edge $(x,v)$ of capacity $\eta = \frac{\alpha}{n}$; we refer to these new edges as \emph{fake edges}. We denote the edge capacities in $\hat H$ by $c_{\hat H}(\cdot)$.
We use the algorithm from \Cref{cor: mincut} in order to compute a minimum $x$-$y$ edge-cut $(X,Y)$ in $\hat H$, and return $(X,Y)$  as the outcome of our algorithm. 
Note that the algorithm from \Cref{cor: mincut} requires that all edge capacities in the graph are integral, so we may need to scale all edge capacities up by factor $n$ to ensure this. The running time of the algorithm from \Cref{cor: mincut}  is bounded by $O\left (m^{1+o(1)}\log W\cdot \log(2+\alpha)\right )$, and so the total running time of our algorithm is bounded by $O\left ((n+m)^{1+o(1)}\log W\cdot \log(2+\alpha)\right )$.

From \Cref{cl : reduction-to-special-case-in-proof-of-furthest-edge-cut}, in order to complete the proof of \Cref{thm : edge capacitated approximately minimum containing cut}, it is now enough to prove that $(X,Y)$ is a valid solution to the instance $(H,x,y,\alpha)$ of the edge-\AFMC problem instance.
We do so in the following two claims.

\begin{claim}\label{claim : value-of-final-cut-in-proof-of-furthest-edge-cut}
$c_H(X,Y)\leq \alpha$.
\end{claim}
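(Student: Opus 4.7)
The plan is to bound $c_H(X,Y)$ by bounding the minimum $x$-$y$ cut value in the auxiliary graph $\hat H$, and then using monotonicity with respect to the added fake edges.

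First, I would observe that since $f$ is a maximum $x$-$y$ flow in $G$, the residual network $H = G_f$ admits no augmenting path from $x$ to $y$, so the value of the minimum $x$-$y$ edge-cut in $H$ is $0$. In particular, there exists some $x$-$y$ partition $(X_0, Y_0)$ of $V(H)$ with $c_H(X_0,Y_0) = 0$. I would use this specific cut as a witness bounding the minimum cut value in $\hat H$.

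Next, I would examine how adding the fake edges affects the value of the cut $(X_0, Y_0)$. The fake edges are of the form $(x,v)$ for $v \in V(H)$, each with capacity $\eta = \alpha/n$. Since $x \in X_0$, a fake edge $(x,v)$ crosses the cut from $X_0$ to $Y_0$ if and only if $v \in Y_0$. Therefore the total capacity contributed by fake edges to $(X_0, Y_0)$ in $\hat H$ is exactly $|Y_0| \cdot \eta \leq n \cdot \eta = \alpha$. Combined with $c_H(X_0,Y_0) = 0$, this yields $c_{\hat H}(X_0, Y_0) \leq \alpha$. Since $(X,Y)$ is a minimum $x$-$y$ cut in $\hat H$, it follows that $c_{\hat H}(X,Y) \leq c_{\hat H}(X_0, Y_0) \leq \alpha$.

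Finally, I would note that $\hat H$ is obtained from $H$ by \emph{adding} edges (the fake edges), and no edge of $H$ is removed or has its capacity reduced. Hence for every partition $(A,B)$ of $V(H)=V(\hat H)$, $c_H(A,B) \leq c_{\hat H}(A,B)$. Applying this with $(A,B)=(X,Y)$ gives $c_H(X,Y) \leq c_{\hat H}(X,Y) \leq \alpha$, as claimed. The argument is short, and the only step that requires any care is the bookkeeping of how many fake edges cross $(X_0,Y_0)$; there is no real obstacle here, since $x\in X_0$ makes the counting immediate.
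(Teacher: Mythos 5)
Your proof is correct and follows essentially the same route as the paper: take a zero-value $x$-$y$ cut in $H$ (which exists because $f$ is a maximum flow), observe that the fake edges add at most $n\cdot\eta=\alpha$ to its value in $\hat H$, invoke minimality of $(X,Y)$ in $\hat H$, and conclude via $c_H(X,Y)\leq c_{\hat H}(X,Y)$. Nothing further is needed.
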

\begin{proof}
Consider the minimum $x$-$y$ cut $(X^*,Y^*)$ in $H$, and recall that $c_H(X^*,Y^*)=0$. Recall that graph $\hat H$ is obtained from $H$ by adding, for every vertex $v \in V(H)$ a new edge $(x,v)$ of capacity $\eta = \frac{\alpha}{n}$. Clearly:

\[c_{\hat H}(X^*,Y^*)\leq c_H(X,Y)+\eta\cdot n=\alpha. \]

Since $(X,Y)$ is a minimum $x$-$y$ edge-cut in $\hat H$, we get that $c_H(X,Y)\leq c_{\hat H}(X,Y)\leq c_{\hat H}(X^*,Y^*)\leq \alpha$. 
\end{proof}

\begin{claim}\label{claim: second prop of furthest cut}
	Let $(X',Y')\neq (X,Y)$ be any $x$-$y$ edge-cut in $H$. Then 	$|X' \setminus X| \leq c_H(X',Y')\cdot\frac{n}{\alpha}$.
\end{claim}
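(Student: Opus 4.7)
The plan is to exploit the following observation about cuts in the auxiliary graph $\hat H$: for any $x$-$y$ edge-cut $(A,B)$ in $\hat H$, the fake edges added to form $\hat H$ contribute exactly $\eta\cdot|B\cap V(H)|$ to $c_{\hat H}(A,B)$, since a fake edge $(x,v)$ crosses $(A,B)$ precisely when $v\in B$. Consequently,
\[
c_{\hat H}(A,B)\;=\;c_H(A,B)+\eta\cdot|B\cap V(H)|.
\]
The minimality of $(X,Y)$ in $\hat H$ therefore penalizes any cut that places too many vertices on the $Y$-side, which is exactly the mechanism that will force $X$ to ``contain'' $X'$.

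To turn this intuition into the desired inequality, I would apply the standard submodularity of the (directed) cut function in $\hat H$. Since $x\in X\cap X'$ and $y\in Y\cap Y'$, both $(X\cap X',Y\cup Y')$ and $(X\cup X',Y\cap Y')$ are valid $x$-$y$ edge-cuts in $\hat H$. Submodularity gives
\[
c_{\hat H}(X,Y)+c_{\hat H}(X',Y')\;\geq\;c_{\hat H}(X\cup X',Y\cap Y')+c_{\hat H}(X\cap X',Y\cup Y').
\]
Combining this with the minimality $c_{\hat H}(X\cup X',Y\cap Y')\geq c_{\hat H}(X,Y)$ and canceling yields
\[
c_{\hat H}(X',Y')\;\geq\;c_{\hat H}(X\cap X',Y\cup Y').
\]

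The final step is to translate this inequality on $\hat H$ into one on $H$ using the explicit formula above. Expanding both sides gives
\[
c_H(X',Y')+\eta\cdot|Y'|\;\geq\;c_H(X\cap X',Y\cup Y')+\eta\cdot|Y\cup Y'|,
\]
and rearranging, together with $c_H(X\cap X',Y\cup Y')\geq 0$ and the identity $|Y\cup Y'|-|Y'|=|Y\setminus Y'|=|X'\setminus X|$, produces
\[
\eta\cdot|X'\setminus X|\;\leq\;c_H(X',Y')-c_H(X\cap X',Y\cup Y')\;\leq\;c_H(X',Y').
\]
Substituting $\eta=\alpha/n$ yields the claimed bound $|X'\setminus X|\leq c_H(X',Y')\cdot n/\alpha$. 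I do not anticipate any genuine obstacle here; the only subtle point is to verify the submodularity step requires $(X\cap X',Y\cup Y')$ to be an $x$-$y$ cut, which follows from $x\in X\cap X'$ and $y\in Y\cap Y'$, and to handle the (irrelevant) fake self-loop $(x,x)$ consistently in the formula $c_{\hat H}(A,B)=c_H(A,B)+\eta|B\cap V(H)|$.
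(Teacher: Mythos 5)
Your proof is correct, and it takes a genuinely different (if closely related) route from the paper's. The paper argues by contradiction and works directly with $X''=X\cup X'$: it counts the change in $c_{\hat H}$ when the vertices of $X'\setminus X$ are moved from $Y$ to $X$, observing that the fake edges incident to $X'\setminus X$ (total capacity $\geq \eta\,|X'\setminus X|>c_H(X',Y')$) leave the cut while the newly cut edges of $H$ from $X'\setminus X$ into $Y''$ have total capacity at most $c_H(X',Y')$, giving $c_{\hat H}(X'',Y'')<c_{\hat H}(X,Y)$ and a contradiction with minimality. You instead invoke submodularity of the cut function in $\hat H$ to obtain $c_{\hat H}(X',Y')\geq c_{\hat H}(X\cap X',Y\cup Y')$ after cancelling against the minimality bound, and then unwind this using the closed-form decomposition $c_{\hat H}(A,B)=c_H(A,B)+\eta\,|B|$ together with the identity $|Y\cup Y'|-|Y'|=|X'\setminus X|$. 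In effect the paper's edge accounting is a hand-rolled instance of the submodularity inequality; your version packages it through the standard lemma and avoids the proof by contradiction, yielding a somewhat more modular argument, while the paper's version is more elementary and self-contained. All your individual steps (the cut-value formula, the fact that $(X\cap X',Y\cup Y')$ and $(X\cup X',Y\cap Y')$ remain $x$-$y$ cuts, the nonnegativity of $c_H(X\cap X',Y\cup Y')$, and the set identity) check out.
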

\begin{proof}
Assume, for contradiction, that for some $x$-$y$ 	edge-cut $(X',Y')$ in $H$, $|X' \setminus X| > c_H(X',Y')\cdot\frac{n}{\alpha}$ holds.
Let $X''=X\cup X'$ and let $Y''=V(H)\setminus X''$. Clearly, $x\in X''$ and $y\not\in X''$, so $(X'',Y'')$ is a valid $x$-$y$ cut in $H$ and hence in $\hat H$. We now show that $c_{\hat H}(X'',Y'')<c_{\hat H}(X,Y)$, contradicting the fact that $(X,Y)$ is a minimum $x$-$y$ cut in $\hat H$.

Indeed, notice that cut $(X'',Y'')$ can be obtained from the cut $(X,Y)$ by moving the vertices of $X'\setminus X$ from $Y$ to $X$. Notice that the fake edges incident to the vertices of $X'\setminus X$ contribute to $c_{\hat H}(X,Y)$ but do not contribute to $c_{\hat H}(X'',Y'')$, and their total capacity is at least $|X' \setminus X|\cdot \eta>c_H(X',Y')$. On the other hand, the only edges that contribute to $c_{\hat H}(X'',Y'')$ but not to $c_{\hat H}(X,Y)$ are edges of $H$ connecting vertices of $X'\setminus X$ to vertices of $Y''$, whose total capacity is bounded by $c_H(X'\setminus X,Y'')\leq c_H(X',Y'')\leq c_H(X',Y')$, since $Y''\subseteq Y'$ as $X'\subseteq X''$. Altogether, we get that $c_{\hat H}(X'',Y'')<c_{\hat H}(X,Y)$, contradicting the fact that $(X,Y)$ is a minimum $x$-$y$ cut in $\hat H$.
\end{proof}

\subsection{Maximum $s$-$t$ Flow with Small Support}

For some of our algorithms, we need to be able to quickly compute a maximum $s$-$t$ flow $f$ in a vertex-weighted directed $n$-vertex graph, such that the flow $f$ is non-zero on at most $O(n)$ edges. In the following theorem we provide an algorithm for computing such a flow $f$ in time that is almost-linear in the number of edges of the input graph.
We note that the existence of such a flow was already shown in \cite{Tra25} (see Lemma 2.5 in \cite{Tra25}), and their proof also implicitly provides an algorithm for computing it. Unfortunately, the running time of this algorithm is at least $\Omega(nm)$ in the worst case, which is too slow for our purposes.
We build on the techniques from \cite{Tra25} to provide an algorithm with an almost linear runing time.
The algorithm is summarized in the following theorem.
\begin{theorem}\label{thm : computing-flow-with-few-edges}
	There is a randomized algorithm, whose input consists of $n$-vertex and $m$-edge graph $G$ with integral weights $0 \leq w(v)\leq W$ on its vertices $v \in V(G)$ and a pair $s,t \in V(G)$ of its vertices. The algorithm computes a maximum $s$-$t$ flow $f$ in $G$ that obeys the vertex capacities (defined by their weights), such that $f$ is integral, and, moreover, the total number of edges $e\in E(G)$ with $f(e)>0$ is at most $4n$.
	The expected running time of the algorithm is $O(m^{1+o(1)} \cdot \log W)$.
\end{theorem}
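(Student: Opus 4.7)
The plan is to compute the desired max flow by invoking the almost-linear-time min-cost max flow algorithm of \Cref{thm: maxflow} on the split graph of $G$, with random integer edge costs that force the returned flow to be an extreme point of the max flow polytope (with constant probability), which by a standard LP argument has support $O(n)$. First I would, in time $O(m)$, construct the split graph $G'$ of $G$ as in \Cref{subsec: split graph}, after removing in preprocessing any vertex $v$ with $w(v)=0$ (such vertices carry no flow). Recall $|V(G')| \leq 2n$, $|E(G')| \leq m+n$, special edges $e_v=(v^{\inn},v^{\out})$ have capacity $w(v)\geq 1$, and regular edges have capacity $\wmax(G) = O(nW)$. By \Cref{cor : minimum cut in split graph}, the maximum $s^{\out}$-$t^{\inn}$ edge flow value in $G'$ equals the maximum $s$-$t$ vertex flow value in $G$ and is strictly less than $\wmax(G)$, so no regular edge of $G'$ can ever be saturated in any maximum flow.

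Next, I would independently assign each edge $e \in E(G')$ a uniformly random integer cost $c(e) \in \{1,\ldots,N\}$ for some $N = \poly(n)$ chosen large enough so that the min-cost maximum $s^{\out}$-$t^{\inn}$ flow in $G'$ under these costs is unique with probability at least $1/2$. I would then invoke \Cref{thm: maxflow} to compute an integer min-cost maximum flow $f'$ in $G'$ in time $O((m+n)^{1+o(1)} \log \wmax(G) \cdot \log N) = O(m^{1+o(1)} \log W)$, and translate $f'$ back to $G$ by setting $f(u,v) := f'(u^{\out}, v^{\inn})$ for every $(u,v) \in E(G)$, yielding an integer max $s$-$t$ vertex flow $f$ in $G$. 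Whenever the min-cost max flow is unique, $f'$ must be an extreme point of the max $s^{\out}$-$t^{\inn}$ flow polytope, since the set of optimal solutions of any LP is a face of the feasible polytope and a singleton face is a vertex. A standard LP counting argument then bounds the support: the LP has $m+n$ variables, and at an extreme point at least $m+n$ linearly independent tight constraints are required; of these, $2n-2$ are flow conservation equalities, at most $n$ are saturated special-edge capacity inequalities (regular edges never saturate), and the remainder must be non-negativity constraints $f'(e)=0$ for $e \notin \operatorname{supp}(f')$. Counting yields $|\operatorname{supp}(f')| \leq (2n-2)+n = 3n-2$, and since $\operatorname{supp}(f)$ in $G$ corresponds only to regular edges with positive flow in $f'$, we get $|\operatorname{supp}(f)| \leq 3n-2 < 4n$.

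If after a given trial the support of the returned $f$ exceeds $4n$, I would simply retry with fresh random costs. Since each trial succeeds with probability at least $1/2$, the expected number of trials is $O(1)$, giving the claimed expected total running time of $O(m^{1+o(1)} \log W)$. The main subtle point is justifying that random integer costs from a polynomial range suffice to make the min-cost max flow unique with constant probability. This relies on an Isolation-Lemma-style analysis adapted to the flow polytope, whose extreme points are integer lattice points due to the total unimodularity of the flow LP's constraint matrix; careful handling of edges that could be ambiguous between distinct extreme points is needed, but since the algorithm uses verification and retry, any constant per-trial success probability suffices, so the precise probabilistic analysis, while delicate, is not the bottleneck for the asymptotic running time.
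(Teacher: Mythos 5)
Your algorithm is essentially the paper's: build the split graph, put independent random integer costs on its edges, run the min-cost max flow algorithm of \Cref{thm: maxflow}, translate back, and retry if the support exceeds $4n$. The analyses, however, diverge, and yours has a genuine gap at its central probabilistic claim: that random costs drawn from a \emph{polynomial} range $\{1,\ldots,N\}$ make the min-cost maximum flow unique with constant probability. The standard Isolation Lemma does not apply here, because the cost of a flow is $\sum_e c(e)f'(e)$ rather than a set weight $\sum_{e\in\operatorname{supp}(f')} c(e)$: capacities are as large as $\wmax(G)=\Theta(nW)$, so distinct optimal integer flows need not be distinguished by their supports, and the family-of-sets formulation breaks down. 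The natural replacement argument --- requiring the cost vector to be non-orthogonal to every edge direction of the max-flow polytope, i.e.\ to every simple-cycle circulation --- needs a union bound over exponentially many cycles, forcing $N=2^{\Omega(n\log n)}$; the known isolation results for polytopes with totally unimodular faces (which include flow polytopes) only achieve quasi-polynomial weight ranges. Since the min-cost flow algorithm is not guaranteed to return a basic optimal solution unless the optimum is unique, you really do need this uniqueness for your extreme-point counting step, so without it the per-trial success probability (and hence your expected running time bound) is unsupported.

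The paper sidesteps uniqueness entirely. It takes $U=2^{32\lceil\log n\rceil^2}$ (quasi-polynomial, costing only a $\log U=O(\log^2 n)$ factor in \Cref{thm: maxflow}) and shows that, for any fixed cycle of length at most $4\lceil\log(2n)\rceil$ in the undirected support graph of the regular flow-carrying edges, the costs of the two orientation classes would have to balance exactly --- otherwise one could reroute flow around the cycle and reduce the cost without changing the value, since no regular edge is ever saturated. Each such balance event has probability $1/U$, and a union bound over the $n^{O(\log n)}$ vertex sequences of logarithmic length shows the support graph has girth greater than $4\lceil\log(2n)\rceil$ with probability at least $\tfrac12$; a high-girth graph on $2n$ vertices has at most $4n$ edges. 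Your proof can be repaired in the same spirit: either adopt the quasi-polynomial range and the girth/short-cycle union bound, or invoke the quasi-polynomial isolation results for totally unimodular polytopes and keep your (otherwise correct) tight-constraint counting, which gives the slightly stronger bound of roughly $3n$ support edges. With polynomial $N$ as written, the claim is not established.
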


\begin{proof}
	Throughout the proof, we refer to an $s$-$t$ flow in $G$ that obeys the vertex capacities as a \emph{valid $s$-$t$ flow}. Observe that it is sufficient to design a randomized algorithm that computes a valid maximum $s$-$t$ flow $f$ in $G$ in time $O\left(m^{1+o(1)}\cdot \log W\right )$, such that $f$ is integral and, with probability at least $\half$, the number of edges $e\in E(G)$ with $f(e)>0$ is at most $4n$. This is since we can check, in time $O(m)$, whether the number of edges  $e\in E(G)$ with $f(e)>0$ is bounded by $4n$, and, if this is not the case,  repeat the algorithm; the expected running time of the resulting algorithm is $O\left(m^{1+o(1)}\cdot \log W\right )$ as required.  
	Therefore, from now on we focus on desiging an algorithm that computes a valid maximum $s$-$t$ flow $f$ in $G$ in time $O\left(m^{1+o(1)}\cdot \log W\right )$, such that $f$ is integral, and, with probability at least $\half$, the number of edges $e\in E(G)$ with $f(e)>0$ is at most $4n$.

	Our algorithm uses a parameter $U = 2^{32\ceil{\log n}^2}$.
	We start by computing  the split graph $G'$ of $G$, and then assign, to every edge $e\in E(G')$, a cost $\gamma(e)$, that is chosen independently uniformly at random from $\set{1,\ldots,U}$. We then use the algorithm from \Cref{thm: maxflow} to compute  a minimum-cost maximum $s^{\out}$-$t^{\inn}$ flow $f'$ in the resulting flow network $G'$, such that $f'$ is integral, in time $O\left(m^{1+o(1)}\cdot \log W\cdot \log U\right )\le O\left(m^{1+o(1)}\cdot \log W\right )$, since $U\leq 2^{O(\log^2n)}$. Note that the flow $f'$ in $G'$ naturally defines a valid and integral $s$-$t$ flow $f$ in $G$, as follows: for every edge $e=(u,v)\in E(G)$, we set $f(e)=f'(u^{\out},v^{\inn})$. From the construction of the split graph, it is immediate to verify that the flow $f$ is a valid $s$-$t$ flow in $G$ of value $\val(f')$, and moreover, from \Cref{cor : minimum cut in split graph}, it is the maximum $s$-$t$ flow in $G$. It is immediate to verify that flow $f$ is integral, and that it can be computed in time $O(m)$ given $f'$. Therefore, the total running time of our algorithm is $O\left(m^{1+o(1)}\cdot \log W\right )$.

	It now remains to show that, with probability at least $\half$, the total number of edges $e\in E(G)$ with $f(e)>0$ is bounded by $4n$. Notice that an edge $e=(u,v)\in E(G)$ has $f(e)>0$ if and only if the corresponding regular edge $e'=(u^{\out},v^{\inn})\in E(G')$ has $f'(e')>0$. Therefore, it is enough to prove that, with probability at least $\half$, the total number of regular edges $e'\in E(G')$ with $f'(e')>0$ is bounded by $4n$. In order to do so, we consider a new undirected graph $\hat G$, with $V(\hat G)=V(G')$; for every regular edge $e'=(u^{\out},v^{\inn})\in E(G')$ with $f'(e')>0$, we add an undirected edge $(u^{\out},v^{\inn})$ to $\hat G$. Let $\event$ be the bad event that $\hat G$ contains a cycle of length at most $4\ceil{\log (2n)}$. The following claim is key to our analysis.
	
	\begin{claim}\label{claim: small girth}
		$\prob{\event}\leq \half$.
	\end{claim}
	
	We provide the proof of \Cref{claim: small girth} below, after we complete the proof of \Cref{thm : computing-flow-with-few-edges} using it. In order to do so, it is enough to prove that, if Event $\event$  did not happen, then the total number of regular edges 
	$e'\in E(G')$ with $f'(e')>0$ (and hence the total number of edges $e\in E(G)$ with $f(e)>0$), is bounded by $4n$. Indeed, assume that Event $\event$ did not happen. Notice that the number  of regular edges 
	$e'\in E(G')$ with $f'(e')>0$ is precisely $|E(\hat G)|$. However, if Event $\event$ did not happen, then the girth of $\hat G$ is at least 
	$4\ceil{\log (2n)}$. We show, using standard techniques, that, in this case, $|E(\hat G)|\leq 4n$ must hold. Indeed, assume for contradiction that
	the girth of $\hat G$ is at least $4\ceil{\log (2n)}$ but $|E(\hat G)|> 4n$. Consider an iterative process, where in every iteration, we delete a vertex of $\hat G$ whose current degree is at most $2$. When this process terminates, the total number of edges that were deleted from $\hat G$ is at most $2|V(\hat G)|\leq 4n$, and the degree of every vertex is at least $3$; in particular, the graph remains non-empty. Consider now performing a BFS from any vertex $v$ in the resulting subgraph of $\hat G$. Since the girth of $\hat G$ is at least 
	$4\ceil{\log (2n)}=4\ceil{\log(|V(\hat G)|)}$, and since the degree of each remaining vertex is at least $3$, for all $1\leq i< 2\ceil{\log (2n)}$, the number of vertices in the $(i+1)$th layer of the BFS must be at least twice that in the $i$th layer. This is impossible, because then the total number of vertices encountered in the first $2\ceil{\log (2n)}$ layers is greater than $2n=|V(\hat G)|$. We concldue that, if Event $\event$ happened, then $|E(\hat G)|\leq 4n$, and so the total number of edges $e\in E(G)$ with $f(e)>0$ is at most $4n$.
	%
	%
	%
	In order to complete the proof of \Cref{thm : computing-flow-with-few-edges}, it is now enough to prove \Cref{claim: small girth}.
	
	\begin{proofof}{\Cref{claim: small girth}}
		For every integer $1\leq \ell\leq 4\ceil{\log (2n)}$, for every sequence $\Sigma=(v_1,\ldots,v_{\ell})$ of $\ell$ distinct vertices of $G'$, we let $\event(\Sigma)$ be the bad event that there is a cycle $C$ in $\hat G$, whose vertices are $(v_1,\ldots,v_{\ell})$, and that they appear on the cycle in this order. In the following observation, we bound the probability of Event $\event(\Sigma)$.
		
		\begin{observation}\label{obs: bad cycle}
			For every integer $1\leq \ell\leq 4\ceil{\log (2n)}$ and a sequence $\Sigma=(v_1,\ldots,v_{\ell})$ of $\ell$ vertices of $G$, $\prob{\event(\Sigma)}\leq \frac{1}{U}$.
		\end{observation}
		\begin{proof}
			Note that Event $\event(\Sigma)$ may only happen if there is a sequence $E'=(e_1,\ldots,e_{\ell})$ of directed regular edges in graph $G'$, for which the following hold. First,  for all $1\leq i<\ell$, $f'(e_i)>0$, and the endpoints of $e_i$ are $v_i$ and $v_{i+1}$. Additionally, $f'(e_{\ell})>0$, and the endpoints of $e_{\ell}$ are $v_{\ell}$ and $v_1$. For $1\leq i<\ell$, we say that edge $e_i$ is oriented to the right if $e_i=(v_i,v_{i+1})$, and we say that it is oriented to the left if $e_i=(v_{i+1},v_{i})$. Similarly, we say that edge $e_{\ell}$ is oriented to the right if $e_{\ell}=(v_{\ell},v_1)$, and that it is oriented to the left\footnote{In fact due to the structure of the split graph, where all regular edges are directed from the vertices of $V^{\out}$ to the vertices of $V^{\inn}$, $\ell$ must be an even integer, and either all even-indexed edges are oriented to the right and odd-indexed ones to the left, or the other way around. This structure may be convenient to keep in mind for intuition but it is immaterial to the proof.} if $e_{\ell}=(v_1,v_{\ell})$. Let $S_1$ denote the total cost of all edges of $E'$ that are oriented to the left, and let $S_2$ denote the total cost of all edges of $E'$ that are oriented to the left. We claim that Event $\event(\Sigma)$ may only happen if $S_1=S_2$ holds. Indeed, assume, for contradiction that Event $\event(\Sigma)$ happened, but $S_1\neq S_2$; assume without loss of generality that $S_1>S_2$. Recall that, by construction of the split graph, the capacity of every regular edge in $G'$ is $\wmax(G)$, while the total weight of all vertices in $G$ (and hence the values of the flows $f$ and $f'$) is less than $\wmax(G)$. Therefore, for every edge $e_i\in E'$, $0<f'(e_i)<c(e_i)$ holds. We can then select a value $\eps>0$, so that, for every edge $e_i\in E'$, $\eps\le f'(e_i)\leq c(e_i)-\eps$ holds. Consider the $s^{\out}$-$t^{\inn}$ flow in $G'$ obtained by starting from $f'$ and then decreasing the flow value on every edge of $E'$ that is oriented to the left by $\eps$, and increasing the flow value on every edge of $E'$ that is oriented to the right by the same amount. It is immediate to verify that we obtain a valid $s^{\out}$-$t^{\inn}$ flow in $G'$ whose value remains equal to $\val(f')$, while the cost becomes strictly smaller than that of $f'$,  a contradiction.
			
			We conclude that Event $\event(\Sigma)$ may only happen if $S_1=S_2$. Assume without loss of generality that edge $e_1$ is oriented to the right, and that the costs of all edges in $E'\setminus\set{e_1}$ were selected before the cost of $e_1$ was selected. Let $E_1$ be the set of all edges of $E'$ oriented to the left, and let $E_2=E'\setminus(E_1\cup \set{e_1})$. 
			Finally, let $S=\sum_{e\in E_1}\gamma(e)-\sum_{e\in E_2}\gamma(e)$.
			Then Event $\event(\Sigma)$ may only happen if the cost of the edge $e_1$ is chosen to be precisely $S$, and, since the cost of $e_1$ is chosen uniformly at random from $\set{1,\ldots,U}$, the  probability of this happening is $\frac{1}{U}$.
		\end{proof}

		Notice that, for all $1\leq \ell\leq 4\ceil{\log (2n)}$, the total number of sequences of $\ell$ vertices in $G'$ is bounded by $(2n)^{\ell}$. Therefore, the total number of length-$\ell$ vertex sequences, for all $1\leq \ell\leq 4\ceil{\log (2n)}$, is bounded by $n^{16\ceil{\log n}}$. Using the Union Bound over the events $\event(\Sigma)$ for all such sequences $\Sigma$, and the fact that $U=2^{32\ceil{\log n}^2}$, we get that $\prob{\event}\leq \half$.
	\end{proofof}
\end{proof}

\section{Algorithm for Unweighted Directed Global Minimum Vertex-Cut}
\label{sec: alg: unweigthed}

In this section we design an algorithm for the unweighted version of Global Minimum Vertex-Cut, proving \Cref{thm: main: unweighted}. 
We use the following theorem, that provides an $O(n^{2+o(1)})$-time algorithm for the problem, and was proved in \cite{LNP21} (their initial theorem statement was conditional on the existence of a near-linear time algorithm for maximum $s$-$t$ flow, but since such an algorithm (albeit with an almost-linear running time) was shown by \cite{CKLP22,BCP23}, it now holds unconditionally).


\begin{theorem}[follows from Theorem 1.2 in \cite{LNP21} and \cite{CKLP22,BCP23}]\label{thm: old unweighted}
	There is a randomized algorithm that, given a simple directed unweighted $n$-vertex and $m$-edge graph $G$ that contains some vertex-cut, returns a vertex-cut $(L',S',R')$ in $G$, so that, with probability at least $0.99$,  $(L',S',R')$ is a global minimum vertex-cut. The running time of the algorithm is $O\left(n^{2+o(1)}\right )$.
\end{theorem}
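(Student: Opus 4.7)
The plan is to derive this theorem as a direct consequence of Theorem 1.2 in \cite{LNP21} combined with the almost-linear time max-flow algorithm of \cite{CKLP22,BCP23}. Theorem 1.2 of \cite{LNP21} provides a randomized algorithm for directed unweighted global minimum vertex-cut whose running time is parameterized by the cost $T_{\mathrm{mf}}(n,m)$ of a single maximum $s$-$t$ flow computation on a graph with $n$ vertices and $m$ edges. At a high level, \cite{LNP21} reduce the problem to a combination of (i) sampling-based procedures for identifying a balanced min vertex-cut, which make $\tilde O(1)$ max-flow calls on graphs of size $O(m)$ per random source-target pair; (ii) a kernelization / local-flow subroutine that, when the smaller side $L$ of the cut is small, constructs sparsified subgraphs on which the max-flow calls are much cheaper; and (iii) auxiliary combinatorial steps (e.g.\ hitting set constructions, low-degree sampling) whose total cost is bounded by $\tilde O(n^2)$. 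The two regimes are executed in parallel and are balanced via a threshold on $|L|$, so that at least one of them succeeds with constant probability regardless of the value of $|L|$.

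To establish the stated running time bound, I would substitute $T_{\mathrm{mf}}(n,m) = m^{1+o(1)}$ from \cite{CKLP22,BCP23} into \cite{LNP21}'s analysis, and verify, via a regime-by-regime accounting, that every branch contributes at most $n^{2+o(1)}$ time. Concretely: the combinatorial preprocessing contributes $\tilde O(n^2)$; in the balanced regime the number of max-flow calls times the per-call cost is bounded by $n^{2+o(1)}$ thanks to the $\tilde O(n/|L|)$ sampling density; and in the unbalanced regime the sparsified subgraphs on which max-flow is invoked have total size $n^{2+o(1)}$ across all calls, so that the total max-flow work is again $n^{2+o(1)}$ when each call runs in almost-linear time. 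Finally, the success probability of $0.99$ can be boosted from the constant probability guaranteed by \cite{LNP21}'s subroutines via a standard $O(\log n)$-fold amplification, keeping each vertex-cut found and returning the one of minimum value, which preserves the $n^{2+o(1)}$ bound.

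The main obstacle is the regime-by-regime tradeoff analysis: \cite{LNP21}'s original statement is parametric in $T_{\mathrm{mf}}$, and one must verify that no branch of their algorithm incurs, for instance, an $\Omega(m\cdot n^{11/12})$ cost that would dominate $n^{2+o(1)}$ in dense graphs. This requires tracking, for each of their subroutines, the exact dependence on $m$ (as opposed to $n$) in the max-flow calls, and invoking the $m^{1+o(1)} \le n^{2+o(1)}$ bound only when the total number of calls is $n^{o(1)}$, while using the sparsified max-flow calls whenever multiple calls are needed. Once this bookkeeping is carried out, plugging in \cite{CKLP22,BCP23} yields the claimed $n^{2+o(1)}$ bound unconditionally.
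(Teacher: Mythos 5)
Your proposal takes essentially the same approach as the paper: cite Theorem~1.2 of \cite{LNP21}, which gives an $n^{2+o(1)}$ bound for directed unweighted global minimum vertex-cut conditional on a near-linear time max-flow subroutine, and then observe that the almost-linear time max-flow algorithm of \cite{CKLP22,BCP23} makes the result unconditional. The additional regime-by-regime bookkeeping you sketch describes the internals of \cite{LNP21}'s algorithm rather than anything new that needs to be verified here; the paper treats this theorem purely as a citation and provides no such accounting.
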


The main technical result of this section is the following theorem, that provides an algorithm for the unweighted directed global minimum vertex-cut problem with running time $O(m^{1+o(1)}\cdot \hat k)$, where $\hat k$ is a ``guess'' on the value of the optimal solution that the algorithm receives as input.

\begin{theorem}\label{thm: main: inner unweighted}
	There is a randomized algorithm, that, given a simple directed unweighted $n$-vertex and $m$-edge graph $G$ that contains some vertex-cut, together with an integer $1\leq \hat k\leq \frac{n}{1000}$, computes a vertex-cut $(L',S',R')$ in $G$,
  in time $O\left (m^{1+o(1)}\cdot \hat k\right )$. Moreover, if we denote by $k$ the value of the global minimum vertex-cut in $G$, and if $\frac{\hat k}{2}< k\leq \hat k$ holds, then, with probability at least $\left(1-\frac{1}{n}\right )$, $(L',S',R')$ is a global minimum vertex-cut in $G$.
\end{theorem}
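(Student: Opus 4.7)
The plan is to implement the hierarchical sparsification strategy described in the technical overview. First, we may assume (by running the algorithm on $G$ and its reversal $\overline{G}$ and taking the smaller-value output) that the distinguished min-cut $(L,S,R)$ satisfies $|L|\leq |R|$. We also try all $O(\log n)$ integral powers of two $\lambda \in [1,n]$ as a guess for $\vol^+(L)$, running the subroutine once per guess and returning the smallest-value cut produced; it suffices to analyze the run where $\lambda/2 \leq \vol^+(L) \leq \lambda$. The subroutine begins by sampling a terminal set $T$ by including each $v\in V(G)$ independently with probability $\Theta(\deg^+(v)/\lambda)$, an anti-terminal set $T'$ by an analogous process calibrated so that $|T'| \approx \lambda$ on average, and a uniformly random target vertex $t$. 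Let $\mathcal{E}'$ be the event that $t\in R$, $T\cap L \neq \emptyset$, $T'\cap L = \emptyset$, and $|T|,|T'|$ are within $\tilde O(m/\lambda)$ of their expectations. A routine calculation shows $\Pr[\mathcal{E}'] = \Omega(1/\operatorname{polylog} n)$, which we boost to $1-1/n$ by $O(\log n)$ independent repetitions.

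Next, I build a hierarchical partition $\mathcal{B}_0=\{T\} \supseteq \mathcal{B}_1 \supseteq \cdots \supseteq \mathcal{B}_z$ of $T$ with $z=\lceil\log |T|\rceil$, where $\mathcal{B}_i$ is a refinement of $\mathcal{B}_{i-1}$ into roughly $2^i$ batches of nearly equal cardinality and $\mathcal{B}_z$ consists of singletons. For each batch $B\in\mathcal{B}_i$ I maintain a graph $G_B$ satisfying the following invariants: (I1) $t \in V(G_B) \subseteq V(G)$ and $|E(G_B)| \leq \tilde O(m\hat{k}/2^i)$; (I2) for every vertex-cut $(L'',S'',R'')$ in $G_B$ with $t\in R''$, the triple $(L'',S'',V(G)\setminus(L''\cup S''))$ is a valid vertex-cut in $G$; and (I3) if $B\cap L \neq \emptyset$, then $L\cup S \subseteq V(G_B)$ and $(L,S,V(G_B)\setminus(L\cup S))$ is a vertex-cut in $G_B$. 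Initialize $G_T = G$, which trivially satisfies all three. For the inductive step, given $G_B$ at level $i$ and a child $B'\subseteq B$ at level $i+1$, I form $\hat G_B$ by (a) contracting the vertices of $B'$ into a source $s$, (b) assigning infinite vertex-weight to every anti-terminal in $V(\hat G_B)\setminus\{s,t\}$, and (c) setting all other vertex weights to $1$. Compute a minimum $s$-$t$ vertex-cut $(L',S',R')$ in $\hat G_B$ via \Cref{cor: min_vertex_cut}; by \Cref{claim: neighbors in S}-style bookkeeping and invariant (I3), if $B'\cap L \neq \emptyset$ then this cut satisfies $L\subseteq L'$, $|S'|\leq \hat k$, and $L'\cap T' = \emptyset$. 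Finally, let $G_{B'}$ be the graph derived from $\hat G_B$ via $L'$ in the sense of \Cref{def: defined graph}: $V(G_{B'})=L'\cup N^+_{G_B}(L')\cup\{t\}$, with outgoing edges of $L'$ preserved and an edge $(u,t)$ for every $u\in N^+_{G_B}(L')$. Invariants (I2) and (I3) for $G_{B'}$ follow from \Cref{claim: transforming cuts from defined graph} and \Cref{cl: finding global mincut using defined graph}.

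The main obstacle is maintaining invariant (I1). Since $|E(G_{B'})| = \Theta(\vol^+_G(L'))$, we need a simultaneous bound, over all $2^{i+1}$ level-$(i+1)$ batches, showing that every set $L'$ arising from the procedure has $\vol^+_G(L') \leq \tilde O(m\hat{k}/2^{i+1})$. This is where the anti-terminals are essential: because we enforced $L'\cap T'=\emptyset$, and because $T'$ was sampled with probability proportional to $\deg^+(v)/\lambda$, a standard Chernoff-plus-union-bound argument shows that with probability $1-1/\operatorname{poly}(n)$, \emph{every} vertex set $Z\subseteq V(G)$ with $Z\cap T'=\emptyset$ and $|Z|\leq O(\hat k |T|/2^i)$ satisfies $\vol^+_G(Z) \leq \tilde O(m\hat k / 2^{i+1})$. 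The subtlety here is that the family of candidate sets $L'$ is not fixed in advance, but as each such $L'$ is the ``source side'' of a minimum $s$-$t$ vertex-cut in a graph with at most $\tilde O(m\hat k/2^i)$ edges and at most $\hat k$ cut-vertices, one can parameterize these sets by $\tilde O(m\hat k/2^i)$ bits and take the union bound over this smaller family.

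At level $z$, every batch $B\in\mathcal{B}_z$ is a singleton $\{v^*\}$, and invariant (I1) gives $|E(G_B)| \leq \tilde O(m\hat k /|T|)$ so that the total edge count across level $z$ is $\tilde O(m\hat k)$. For each $B$ I compute the minimum $v^*$-$t$ vertex-cut in $G_B$ via \Cref{cor: min_vertex_cut} on the split graph, then lift it to a vertex-cut in $G$ via \Cref{claim: transforming cuts from defined graph}; the smallest such cut is returned. Conditioned on $\mathcal{E}'$ and on (I1)--(I3) being preserved, invariant (I3) for the batch $B^*$ containing the unique terminal in $T\cap L$ guarantees that this cut has value $k$, proving correctness. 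The running time is dominated by the $z = O(\log |T|)$ phases, each of which spends $m^{1+o(1)}$ time per batch via \Cref{cor: min_vertex_cut}; summing geometrically gives the claimed $O(m^{1+o(1)}\cdot\hat k)$ bound.
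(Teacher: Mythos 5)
The high-level plan — terminals and anti-terminals sampled proportional to $\deg^+(v)/\lambda$, a hierarchical partition of $T$ into $O(\log n)$ levels of batches, per-batch graphs $G_B$ of geometrically shrinking size, and a union bound powered by the anti-terminals to control $\vol^+_G(L')$ — matches the paper's strategy. However, your implementation of a single batch-refinement step is genuinely broken, and this is where the paper's construction does most of its work.

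\medskip

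\textbf{Contracting $B'$ into $s$ does not work.} At intermediate levels, a batch $B'$ contains many terminals, and whenever $\lambda \ll m$ most of them will lie in $S\cup R$, not $L$. If you contract all of $B'$ into a single source $s$, then every path from any such terminal $u\in B'\cap R$ to $t$ must be severed by the cut, and nothing in the problem bounds how expensive that is. In the worst case some $u\in B'\cap R$ even lies in $N^-_G(t)$ or can reach $t$ through a large piece of $R$, forcing a min $s$-$t$ cut of size $\omega(\hat k\cdot\eta_{i+1})$ or making it infinite outright. Your claimed bound ``$|S'|\leq \hat k$'' is therefore false, and so is the subsequent volume bound you need for invariant (I1). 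The paper avoids this by \emph{not} contracting: it attaches, for each $v\in B'\cap A_B$, a \emph{pseudoterminal} copy $v'$ of weight $2\hat k$ that inherits $v$'s out-edges and is connected from $s$. The point of the weight $2\hat k$ is that if $v\notin L$ the optimal cut can simply pay $2\hat k$ to place $v'$ in the separator (cutting it off from $t$), whereas if $v\in L$ it is cheaper to put $v'$ on the source side and pay $|S|=k<\hat k\le 2\hat k$. This yields the correct bound $\hat w(\hat S)\leq 2\hat k\,\eta_{i+1}$ (not $\hat k$), and is essential to keeping the per-level graph sizes under control. Contraction offers no way to ``pay off'' a bad terminal; only a bounded-weight auxiliary copy does.

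\medskip

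\textbf{The claim $L\subseteq L'$ is not actually established.} You invoke ``\Cref{claim: neighbors in S}-style bookkeeping and invariant (I3),'' but that claim only bounds $w(S\setminus N^+_G(x))$; it does not show the source side of your min cut contains all of $L$. The paper's argument (\Cref{claim: establising the cut property}) is more delicate: it chooses the distinguished cut $(L,S,R)$ to be one \emph{minimizing $|L|$} among global min-cuts, shows both that $(L,S,\cdot)$ is a minimum $v$-$t$ cut in $G^{|A_B}$ for $v\in L\cap B'$ and that the cut induced by $A_{B'}$ is a minimum $\hat B$-$t$ cut, and then applies submodularity of $w(N^+_G(\cdot))$ (\Cref{obs:submodularity}) to conclude that $L\cap L'$ would induce a $v$-$t$ cut of value at most $|S|$, which by the $|L|$-minimality of the distinguished cut forces $L\subseteq L'$. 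None of that machinery is present in your sketch, and the ``infinite-weight anti-terminals'' device, while in the right spirit, does not by itself give you $L\subseteq L'$.

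\medskip

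Smaller points: your union bound over ``sets $L'$ parameterized by $\tilde O(m\hat k/2^i)$ bits'' is much looser than what the anti-terminal hitting probabilities can absorb; the paper instead unions only over triples $(i,\hat S,\hat B)$ with $|\hat S|\leq 3\hat k\eta_i$, $|\hat B|\leq\eta_i$, and defines $L'$ as the reachability closure of $\hat B$ in $G\setminus\hat S$, which gives only $n^{O(\hat k\eta_i)}$ candidates — small enough to beat the $n^{-\Omega(\hat k\eta_i)}$ per-set failure probability. You should also exclude $t^*$ and $N^-_G(t^*)$ from the terminal set $T$ up front (the paper uses $T=T_0\setminus(T'\cup\{t^*\}\cup N^-_G(t^*))$); otherwise a terminal adjacent to $t^*$ breaks the cut arguments.
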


We prove \Cref{thm: main: inner unweighted} in the remainder of this section, after completing the proof of \Cref{thm: main: unweighted} using it. 
We assume that $m\geq n/2$ holds, since otherwise the input graph $G$ is not strongly connected, and vertex-cut of value $0$ can be computed in time $O(m)$ by performing two DFS searches.
Let $z=\floor{\log(n/1000)}$. For all $0\leq i\leq z$, 
we execute the algorithm from \Cref{thm: main: inner unweighted} with the parameter $\hat k=2^i$; we denote this algorithm by $\aset_i$. We also denote by $\aset_{z+1}$ the algorithm from \Cref{thm: old unweighted}. We execute all algorithms $\aset_0,\ldots,\aset_z,\aset_{z+1}$ in parallel. For all $0\leq i\leq z+1$, we may execute Algorithm $\aset_i$ in its entirety (in which case we say that the algorithm has run its course), or we may choose to terminate it prematurely.

Whenever, for any $0\leq i\leq z$ algorithm $\aset_i$ terminates after running its course, if the cut $(L_i,S_i,R_i)$ that it returns has $|S_i|\leq 2^i$, then we immediately terminate algorithms $\aset_{i+1},\ldots,\aset_{z}$ prematurely, and if, additionally, $2^i\cdot m<n^2$, then we also terminate Algorithm $\aset_{z+1}$ prematurely.
We then record the vertex-cut $(L_i,S_i,R_i)$, and refer to it as a \emph{candidate solution}.
If Algorithm $\aset_{z+1}$ terminates after running its course, then we record the vertex-cut $(L_{z+1},S_{z+1},R_{z+1})$ as a candidate solution, and we terminate  prematurely all algorithms $\aset_0,\ldots,\aset_z$ that have not terminated yet.

Once every algorithm $\aset_j$ terminates (either because it has run its their course, or because we terminated it prematurely), we output the smallest-value candidate solution.
This completes the description of the algorithm. We now analyze its running time and correctness.

Observe first that the running time of the algorithm may not exceed $O\left (n^{2+o(1)}\right )$. Indeed, if Algorithm $\aset_{z+1}$ is allowed to run its course, then it terminates in time $O\left (n^{2+o(1)}\right )$, and, once it does so, all other algorithms $\aset_1,\ldots,\aset_z$ are terminated.
Otherwise, Algorithm $\aset_{z+1}$ was terminated prematurely. This may only happen if, for some $0\leq i\leq z$  with $2^i\cdot m<n^2$, Algorithm $\aset_i$ terminates with a cut $(L_i,S_i,R_i)$, and $|S_i|\leq 2^i$ holds. 
Recall that, from \Cref{thm: main: inner unweighted}, for all $0\leq i'\leq i$, the running time of algorithm $\aset_{i'}$ is bounded by  $O\left (m^{1+o(1)}\cdot 2^{i'}\right )\leq O\left (m^{1+o(1)}\cdot 2^{i}\right )\leq O\left(n^{2+o(1)}\right)$. Once Algorithm $\aset_i$ terminates, all algorithms $\aset_{i+1}\ldots,\aset_{z+1}$ are terminated as well, and so the total running time of the entire algorithm is bounded by $O\left(n^{2+o(1)}\right)$.

Let $k$ denote the value of the global minimum vertex-cut in $G$, and let $i\geq 0$ be the unique integer for which $2^{i-1}<k\leq 2^i$ holds. Assume first that $i>z$, so $k\geq \frac{n}{2000}$. In this case, $m\cdot k\geq\Omega(n^2)$, and, from the above discussion, the running time of the algorithm is $O\left(n^{2+o(1)}\right)\leq O\left ( \min\set{n^{2+o(1)},m^{1+o(1)}\cdot k}\right )$, as required.
Moreover, for all $0\leq i\leq z$, algorithm $\aset_i$ may not terminate with a vertex-cut $(L_i,S_i,R_i)$ of value $|S_i|\leq 2^i<k$, so Algorithm $\aset_{z+1}$ was allowed to run its course, and, from \Cref{thm: old unweighted}, with probability at least $0.99$, the cut $(L_{z+1},S_{z+1},R_{z+1})$ that it returns (which must be the only candidate solution for our algorithm) is optimal.

 Therefore, we assume from now on that $i\leq z$. Recall that the running time of Algorithm $\aset_i$ is bounded by $O\left (m^{1+o(1)}\cdot 2^i\right )\leq O\left (m^{1+o(1)}\cdot k\right )$. From the choice of the value of $i$, for $0\leq i'<i$, algorithm $\aset_{i'}$ may not terminate with a cut of value at most $2^{i'}<k$. Therefore, one of the algorithms $\aset_i$ or $\aset_{z+1}$ must have run its course. If it is the former, then, with probability at least $\left(1-\frac{1}{n}\right )$, the cut $(L_i,S_i,R_i)$ that $\aset_i$ outputs has $|S_i|=k$. If this is indeed the case, then the entire algorithm terminates, and its running time is bounded by $O\left (m^{1+o(1)}\cdot k\right )$. With probability at most $\frac 1 n$, the cut $(L_i,S_i,R_i)$ that Algorithm $\aset_i$ returns is not optimal, and then the running time of the entire algorithm is bounded by $O\left(n^{2+o(1)}\right)$. It is now easy to verify that the expected running time of the entire algorithm is bounded by $ O\left ( \min\set{n^{2+o(1)},m^{1+o(1)}\cdot k}\right )$, and, with probability at least $\half$, the cut that it returns is a global minimum vertex-cut.
In order to complete the proof of  \Cref{thm: main: unweighted}, it is now enough to prove \Cref{thm: main: inner unweighted}, which we do in the remainder of this section.

\subsection*{Proof of \Cref{thm: main: inner unweighted}}

We assume that the the number of vertices $n$ in the input graph is greater than a sufficiently large constant, so that, for example, $n>\log^{20}n$ holds; if this is not the case, then we can use brute-force search to find and return a global minimum vertex-cut of $G$, in time that is upper-bounded by a sufficiently large constant.

\paragraph{Distinguished cut.}
Throughout the algorithm, for the sake of the analysis, we fix a global minimum vertex-cut $(L,S,R)$ in $G$, that minimizes $|L|$, that we refer to as the \emph{distinguished cut}. As before, we assume without loss of generality that $|L|\leq |R|$, since otherwise we can equivalently solve the problem on the graph $\overline{G}$ that is obtained from $G$ by reversing the direction of its edges. Notice that, from the selection of the distinguished cut, it must be induced by the vertex set $L$.

\paragraph{Fixing an arbitrary vertex-cut.} Our algorithm may sometimes need to return an arbitrary vertex-cut in $G$. For convenience, we compute such a vertex-cut $(\tilde L,\tilde S,\tilde R)$ at the beginning of the algorithm, as follows. Recall that the input graph $G$ has some vertex-cut, so for some vertex $v\in V(G)$, $\deg^+_G(v)<n-1$ must hold. We let $v$ be any such vertex, and we let $(\tilde L,\tilde S,\tilde R)$ be the tripartition of $V(G)$ induced by $v$; clearly, $(\tilde L,\tilde S,\tilde R)$ is a valid vertex-cut in $G$. Note that the cut $(\tilde L,\tilde S,\tilde R)$ can be computed in time $O(m)$, and we do so at the beginning of the algorithm. Generally, whenever our algorithm needs to output an arbitrary vertex-cut in $G$ (which, intuitively, means that it failed), we output this fixed cut $(\tilde L,\tilde S,\tilde R)$.


\paragraph{High-level overview of the algorithm.}
Our algorithm starts with a preprocessing step, in which we compute two disjoint sets $T$, $T'$ of vertices whose cardinalities are bounded by roughly $\tilde O\left(\frac{m}{\vol^+(L)}\right )$, and another vertex $t^*\in V(G)$ using a simple randomized procedure that ensures that, with a sufficiently high probability, $T\cap L\neq \emptyset$, $T'\cap L=\emptyset$ and $t^*\in R$ holds. We say that the good event $\event^*$ happens if these conditions are indeed satisfied. Throughout the algorithm, we refer to the vertices in the sets $T$ and $T'$ as \emph{terminals} and \emph{anti-terminals}, respectively. We then compute a hierachical partition of the collection $T$ of terminals. The hierarchical partition has $z=O(\log |T|)$ levels. For all $0\leq i\leq z$, the \emph{level-$i$ partition}, denoted by $\bset_i$, contains at most $N_i\leq 2^i$ sets (that we refer to as \emph{level-$i$ batches of terminals}), each of which contains roughly $\eta_i=\frac{|T|}{2^{\Theta(i)}}$ terminals. Additionally, every set in the level-$i$ partition $\bset_i$ is contained in some level-$(i-1)$ batch. We also ensure that every level-$z$ batch $B\in \bset_z$ contains exactly one terminal. Our algorithm consists of $z$ phases, where, in the $i$th phase, we process the level-$i$ partition $\bset_i$ of the terminals.
We partition the set $\bset_i$ into a collection $\bset_i^A$ of \emph{active} level-$i$ batches and a collection $\bset_i^I$ of \emph{inactive} level-$i$ batches. Additionally, for every active batch $B\in \bset_i^A$, we will compute a set $A_B\subseteq V(G)\setminus (T'\cup \set{t^*}\cup N^-_G(t^*))$ of vertices with $A_B\cap B\neq \emptyset$. 
We will ensure that, on the one hand, for every active batch $B\in \bset_i^A$, $\vol^+_G(A_B)$ is sufficiently small, while, on the other hand, if, for some level-$i$ batch $B\in \bset_i$, $B\cap L\neq \emptyset$ holds, then $B$ must be an active batch, and $L\subseteq A_B$ must hold. 
For every level $0\leq i\leq z$, every active level-$i$ batch $B\in \bset_i^A$ is associated with the graph $G_B=G^{|A_B}$ --- the graph derived from $G$ via the vertex set $A_B$. Since $\vol_G^+(A_B)$ is relatively small, so is $|E(G_B)|$.

Consider now the last level $z$ of the hierarchical partition of the terminals. 
For every level-$z$ batch $B\in \bset_z$, $|B|=1$ holds. When processing this last level of the partition, our algorithm will compute, for each active batch $B\in \bset^A_z$, a vertex-cut $(L_B,S_B,R_B)$ in $G$, so that, if the unique terminal of $B$ lies in $L$, then $(L_B,S_B,R_B)$  is a global minimum vertex-cut. Cut $(L_B,S_B,R_B)$ is computed by first computing a minimum vertex-cut in $G_B$ separating the unique terminal in $B$ from $t$, and then converting the resulting cut into a vertex-cut in $G$. The key is that, since we ensure that the sizes of the graphs in $\set{G_B}_{B\in \bset_z}$ are sufficiently small, and each batch $B\in \bset_z$ only contains one terminal, we can compute the cut  $(L_B,S_B,R_B)$ in time $O\left (|E(G_B)|^{1+o(1)}\right )$, and we can afford to spend this running time for all $B\in \bset_z$.
We now proceed to formally describe the preprocessing step.

\subsection{The Preprocessing Step}
\label{subsec: preproc}

In the preprocessing step, we construct the collections $T$ and $T'$ of terminals and anti-terminals, respectively, and select the vertex $t^*\in V(G)$. We also construct the hierarchical partition of the terminals. 

We start by selecting an integer $1\leq j\leq \ceil{\log m}$ uniformly at random and setting $\lambda=\frac{m}{2^j}$.
We say that the good event $\event_1$ happens if $\frac{\lambda} 2\leq \vol^+_G(L)\leq \lambda$. It is immediate to verify that $\prob{\event_1}\geq \frac{1}{\ceil{\log m}}\geq\frac{1}{4\log n}$.

Next, we select a vertex $t^*\in V(G)$ uniformly at random. We let $\event_2$ be the good event that $t^*\in R$. Note that, since $\hat k\leq \frac{n}{1000}$, if $k\leq \hat k$ holds, then $k\leq \frac{n}{1000}$ must hold.  Since we also assumed that $|L|\leq |R|$, we get that, if $k\leq \hat k$, then $|R|\geq \frac{n}{4}$. Therefore, if $k\leq \hat k$, $\prob{\event_2}\geq \frac{1}{4}$.

Our next step is to select a random bit $b\in\set{0,1}$ uniformly at random. We say that the good event $\event_3$ happens if either (i) $b=0$ and $\vol^+(L)\geq \frac{m}{4}$; or (ii) $b=1$ and $\vol^+(L)< \frac{m}{4}$. Clearly, $\prob{\event_3}=\half$.

We now consider two cases. Assume first that $b=0$. In this case, we select a vertex $s^*\in V(G)$ at random, where, for every vertex $v\in V(G)$, the probability to select $s^*=v$ is $\frac{\deg^+(v)}{m}$. We say that the good event $\event'$ happens if $s^*\in L$. Note that, if Event $\event_3$ has happened, then $\vol^+_G(L)\geq\frac{m}{4}$, and so $\prob{\event'\mid \event_3}\geq \frac{1}{4}$. We then compute a minimum $s^*$-$t^*$ vertex-cut $(L',S',R')$ in $G$ using the algorithm from \Cref{cor: min_vertex_cut} in time $O\left (m^{1+o(1)})\right )$, and return this cut as the algorithm's outcome. Note that, if $b=0$, and if both events $\event_2$ and $\event'$ happened, then $s^*\in L$ and $t^*\in R$ must hold, and the cut $(L',S',R')$ must be a global minimum vertex-cut. Note that, if $\vol^+_G(L)\geq\frac{m}{4}$ and $k\leq \hat k$, then $\prob{\event_2\band\event_3\band\event'}\geq \prob{\event_2}\cdot \prob{\event_3}\cdot \prob{\event'\mid \event_3}\geq \frac{1}{32}$.
Altogether, if $\vol^+_G(L)\geq\frac{m}{4}$ and $k\leq \hat k$, then, with probability at least $\frac{1}{32}$, the algorithm described so far outputs a global minimum vertex-cut. We assume from now on that $b=1$.

Our next step is to compute a set $T_0\subseteq V(G)$ of vertices that we call \emph{initial terminals}, by performing $r=\ceil{\frac{m}{\lambda}}$ independent trials. In every trial we select a vertex from $V(G)$, where the probability for selecting a vertex $v$ is $\frac{\deg^+_G(v)}{m}$; the selected vertex is then added to $T_0$. These trials are performed independently and with repetitions. We construct a set $T'\subseteq V(G)$ of vertices, that we refer to as \emph{anti-terminals}, using the same process. Finally, we let $T=T_0\setminus \left(T'\cup \set{t^*}\cup N^-_G(t^*)\right )$ be the set of \emph{terminals}.
We say that a good event $\event$ happens if all of the following hold:

\begin{itemize}
	\item $\frac{\lambda} 2\leq \vol^+_G(L)\leq \lambda$;
	\item $t^*\in R$;
\item Event $\event_3$ happens;
	\item $T_0\cap L\neq \emptyset$; and
	\item $T'\cap L=\emptyset$.
\end{itemize}

In the following simple observation we show that, if $\vol^+_G(L)<\frac{m}{4}$ and $k\leq \hat k$, then Event $\event$ happens with a sufficiently high probability. The proof is elementary and is deferred to Section \ref{subsec: proof of good terminal selection} of Appendix.
\begin{observation}\label{obs: preprocessing successful}
If $\vol^+_G(L)<\frac{m}{4}$  and $k\leq \hat k$, then	$\prob{\event}\geq \frac{1}{2^{64}\cdot \log n}$.
\end{observation}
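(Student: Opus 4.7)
The plan is to decompose the event $\event$ into several sub-events corresponding to the different (and essentially independent) random choices made during preprocessing, lower bound each one, and then multiply. The random choices are: (1) the exponent $j\in\{1,\ldots,\ceil{\log m}\}$ defining $\lambda$, (2) the vertex $t^*$, (3) the bit $b$, (4) the $r=\ceil{m/\lambda}$ samples defining $T_0$, and (5) the $r$ samples defining $T'$. Crucially, choices (2), (3), and (5) are independent of (1) and (4), while $T_0$ and $T'$ are conditionally independent given $\lambda$. Thus
\[
\prob{\event}\;\geq\;\prob{\event_1}\cdot\prob{\event_2}\cdot\prob{\event_3}\cdot\prob{T_0\cap L\neq\emptyset\mid\event_1}\cdot\prob{T'\cap L=\emptyset\mid\event_1}.
\]
Note that under the hypothesis $\vol^+_G(L)<m/4$, Event $\event_3$ reduces to $b=1$, which holds with probability $\tfrac12$; and under the hypothesis $k\le\hat k\le n/1000$, together with $|L|\le|R|$, we have $|R|\ge n/4$, so $\prob{\event_2}\ge \tfrac14$. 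The bounds $\prob{\event_1}\ge\tfrac{1}{4\log n}$ and $\prob{\event_3}=\tfrac12$ are already recorded in the text.

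It remains to bound the two probabilities involving the terminal samples, conditional on $\event_1$, i.e.\ on $\tfrac{\lambda}{2}\le \vol^+_G(L)\le\lambda$. In a single trial, the probability that the selected vertex lies in $L$ is $\vol^+_G(L)/m$. For $T_0\cap L\neq\emptyset$, I would use $\vol^+_G(L)/m\ge \lambda/(2m)$ and the $r\ge m/\lambda$ independent trials to obtain
\[
\prob{T_0\cap L\neq\emptyset\mid \event_1}\;\ge\;1-\left(1-\frac{\lambda}{2m}\right)^{m/\lambda}\;\ge\;1-e^{-1/2}\;\ge\;\tfrac{1}{4}.
\]
For $T'\cap L=\emptyset$, I would use $\vol^+_G(L)/m\le \lambda/m\le \tfrac12$ (the latter since $j\ge 1$ forces $\lambda\le m/2$) and the fact that $r\le m/\lambda+1$, so
\[
\prob{T'\cap L=\emptyset\mid\event_1}\;\ge\;\left(1-\frac{\lambda}{m}\right)^{m/\lambda+1}\;\ge\;\tfrac{1}{8},
\]
using the standard estimate $(1-x)^{1/x}\ge 1/4$ for $0<x\le 1/2$ together with the extra factor $(1-\lambda/m)\ge \tfrac12$.

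Multiplying the five bounds yields $\prob{\event}\ge \tfrac{1}{4\log n}\cdot\tfrac14\cdot\tfrac12\cdot\tfrac14\cdot\tfrac18=\tfrac{1}{1024\log n}$, which is comfortably larger than the required $\tfrac{1}{2^{64}\log n}$. There is no real obstacle here; the only care needed is to verify that the random choices underlying the various sub-events are genuinely independent (so that one may multiply the conditional probabilities cleanly), and in particular that, when $b=1$, the two sample sets $T_0$ and $T'$ are independent given the value of $\lambda$. These all hold by construction in Section~\ref{subsec: preproc}.
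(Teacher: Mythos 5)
Your proof is correct and follows essentially the same route as the paper: decompose $\event$ into the five sub-events $\event_1,\ldots,\event_5$ (determined by independent random choices, with $T_0,T'$ conditionally independent given $\lambda$), bound each one, and multiply. The only difference is in the elementary estimates for the two sub-events involving the terminal samples; the paper uses a cruder bound $\prob{\event_5\mid\event_1\band\event_3}\geq 1/e^8$ whereas your $(1-x)^{1/x}\geq 1/4$ for $x\leq\tfrac12$ together with $r\leq m/\lambda+1$ gives the tighter $1/8$, which is why you land on $1/(1024\log n)$ rather than the paper's looser $1/(2^{64}\log n)$. Both are comfortably within the required bound, so there is no gap.
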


Throughout, we refer to the vertices in $T'$ as \emph{anti-terminals}, and we let $T=T_0\setminus \left(T'\cup \set{t^*}\cup N^-_G(t^*)\right )$ be the set of vertices that we refer to as \emph{terminals}. The vertices of $T_0$ are called \emph{initial terminals} and are only used in the analysis of the preprocessing step. Note that, if Event $\event$ happened, then $T\cap L\neq \emptyset$ must hold.

\paragraph{Hierarchical Partition of the Terminals.}
Our algorithm computes a hierarchical partition of the terminals, as follows. The hierarchy will have $z=O(\log n)$ \emph{levels}, and, for all $0\leq i\leq z$, we will construct a partition $\bset_i$ of $T$, that we refer to as the \emph{level-$i$ partition}. We refer to the sets $B\in \bset_i$ in the level-$i$ partition as \emph{level-$i$ batches of terminals}.
We will denote by $\eta_i=\max_{B\in \bset_i}\set{|B|}$ the largest cardinality of a level-$i$ batch, and we will ensure that, for all $B\in \bset_i$, $\frac{\eta_i}{4}\leq |B|\leq \eta_i$ holds. We will also ensure that, for $1\leq i\leq z$, $\frac{\eta_{i-1}}{2}\leq \eta_i\leq \frac{2\eta_{i-1}}{3}$. We now provide a formal description of the construction of the hierarchical partition of the terminals.

The level-$0$ partition, $\bset_0$, consists of a single set of terminals $T$, and we let $\eta_0=|T|$. Consider now some level $i>0$ and assume that the level-$(i-1)$ partition $\bset_{i-1}$ of the terminals was computed already. 
Let $\eta_{i-1}=\max_{B\in \bset_{i-1}}\set{|B|}$, and assume additionally that, for every set $B\in \bset_{i-1}$, $|B|\geq\frac{\eta_{i-1}}{4}$ holds.

In order to compute the level-$i$ partition, we start with $\bset_i=\emptyset$, and then consider every level-$(i-1)$ batch $B\in \bset_{i-1}$ one by one. For each such batch $B$, if $|B|=1$ or $|B|\leq \frac{\eta_{i-1}}{2}$, then we add the set $B$ to $\bset_i$. Otherwise, we compute an arbitrary partition $(B',B'')$ of $B$, where $B'$ contains exactly $\ceil{\frac{|B|}{2}}$ and $B''$ exactly $\floor{\frac{|B|}{2}}$ terminals of $B$, and add both $B'$ and $B''$ to $\bset_{i-1}$. 
This concludes the description of the algorithm for computing the level-$i$ partition $\bset_i$. We denote by $\eta_i=\max_{B'\in \bset_i}\set{|B'|}$. Since, for every level-$(i-1)$ batch $B\in \bset_{i-1}$, $\frac{\eta_{i-1}}{4}\leq |B|\leq \eta_{i-1}$ holds, it is immediate to verify that, if $\eta_{i-1}\geq 2$, then $\eta_{i}\leq \ceil{\frac{\eta_{i-1}}{2}}\leq \frac{2\eta_{i-1}}{3}$.
Moreover, for every set $B'\in \bset_{i-1}$, $|B'|\geq \frac{\eta_{i-1}}{4}\geq \frac{\eta_i}4$. It is also immediate to verify that $\eta_{i}\geq \frac{\eta_{i-1}}{2}$ must hold.

Once we reach a level $i$ for which $\eta_i=1$, we set $z=i$ and terminate the algorithm for constructing the hierarchical partition of the terminals. 
From the above discussion, $\eta_0=|T|\leq r=\ceil{\frac{m}{\lambda}}$, and for all $1\leq i\leq z$: 

\begin{equation}\label{eq: compare set sizes}
\frac{\eta_{i-1}}{2}\leq \eta_i\leq  \frac{2\eta_{i-1}}{3}.\end{equation}

 Therefore, for all $1\leq i\le z$:

\begin{equation}
\frac{|T|}{2^i}\leq \eta_i\leq |T|\cdot \left(\frac{2}{3}\right )^i.
\end{equation}

In particular, $z\leq \frac{\log r}{\log (3/2)}\leq 2\log n$.
For all $0\leq i\leq z$, we denote $N_i=|\bset_i|$.
Since $\bset_i$ is a partition of $T$, and since, for every level-$i$ batch $B\in \bset_i$, $\frac{\eta_i}{4}\leq |B|\leq \eta_i$ holds, we get that:

\begin{equation}\label{eq: bound 1 on Ni}
 \frac{|T|}{\eta_i}\leq N_i\leq \frac{4|T|}{\eta_i}.
\end{equation}


It is immediate to verify that the hierarchical partition of the terminals can be computed in time $O(nz)\leq O(n\log n)$, and overall, the total running time of the preprocessing step is bounded by $O(m+n\log n)$.

\paragraph{Algorithm Overview.}
Our algorithm consists of $z$ phases. For all $1\leq i\leq z$, in Phase $i$ we compute a partition of the set $\bset_i$ of level-$i$ batches into the collection $\bset^A_i$ of \emph{active batches}, and the collection $\bset^I_i$ of \emph{inactive batches}. For every active batch $B\in \bset^A_i$, we will compute a subset $A_B\subseteq V(G)\setminus \left(\set{t^*}\cup N^-_G(t^*)\cup T'\right )$ of vertices of $G$, with $\vol^+_G(A_B)\leq \frac{1000\hat k\cdot m\cdot \log n}{N_i}$ and $B\cap A_B\ne \emptyset$. 
We will ensure that, 
if $B\in \bset_i$ is a level-$i$ batch of terminals with  $L\cap B\neq \emptyset$, then $B\in \bset^A_i$ holds, and, moreover, $L\subseteq A_B$.


After the last phase is completed, for every active level-$z$ batch $B\in \bset_z^A$, we will compute the graph $G_B=G^{|A_B}$ that is derived from $G$ via the vertex set $A_B$, and
a minimum vertex-cut in $G_B$ between the unique terminal $s\in B$ and $t$. We will then select the smallest-value cut among the resulting cuts, transform it into a vertex-cut in $G$ of the same value, and output this vertex-cut.

\paragraph{The Main Technical Claim.}
At a very high level, in order to execute the $i$th phase, for $i>0$, the algorithm processes each level-$i$ batch $B\in \bset_i$  one by one. Consider any such batch $B$, and let $B'\in \bset_{i-1}$ be the unique level-$(i-1)$ batch containing $B$. If $B'$ is an inactive batch, then we add $B$ to the set $\bset_i^I$ of inactive level-$i$ batches. Assume now that $B'$ is an active batch. The algorithm considers the corresponding graph $G_{B'}$ derived from $G$ via $A_{B'}$, and then computes a vertex-cut $(\hat L,\hat S,\hat R)$ in $G_{B'}$ with $t\in \hat R$, such that $T'\cap \hat L=\emptyset$, and  $|\hat S|\leq 3\hat k\cdot \eta_i$.
It then sets $A_{B}=\hat L$. Let $G_{B}=G^{|A_{B}}$ be the graph derived from $G$ via the vertex set $A_B$. From the properties of graphs derived via vertex sets, $(\hat L,\hat S, V(G)\setminus (\hat L\cup \hat S))$ is a valid vertex-cut in $G$  (see \Cref{claim: transforming cuts from defined graph}).
In order to ensure that  $\vol^+_G(A_B)\leq \frac{1000\hat k\cdot m\cdot \log n}{N_i}$, it is enough to ensure that $\vol_G^+(\hat L)\leq\frac{1000\hat k\cdot m\cdot \log n}{N_i}$. Our main technical claim shows that, with a sufficiently high probability, for every vertex-cut $(\hat L,\hat S, V(G)\setminus (\hat L\cup \hat S))$  in $G$ that may arise from the above process, $\vol_G^+(\hat L)$ must indeed be sufficiently low. In fact, the main purpose of selecting the anti-terminals is to ensure precisely this property. 
We now define the notion of a \emph{bad tripartition} and show that, with a sufficiently high probability, no such bad tripartitions of vertices of $G$ exist.

\begin{definition}[Bad tripartition]\label{def: bad tripartition}
	Consider a level $0\leq i\leq z$ and a tripartition $(L',S',R')$ of $V(G)$. We say that $(L',S',R')$ is a \emph{bad tripartition for level $i$} if all of the following hold:
	
	\begin{properties}{B}
		\item $|S'|\leq 3\hat k\cdot \eta_i$; \label{prop: bad cuts: small separator}
		\item $L'\cap T'=\emptyset$; \label{prop: bad cuts: missed cut}
		\item $\vol^+_G(L')\geq \frac{1000\hat k\cdot m\cdot \log n}{N_i}$; and \label{prop: bad cuts: large volume}
		\item there is a collection $\hat B\subseteq T\setminus S'$ of at most $\eta_i$ terminals, such that $L'$ is precisely the collection of all vertices of $G$ that are reachable from the vertices of $\hat B$ in $G\setminus S'$. In other words, for every vertex $v\in L'$, there is a path from some vertex $u\in \hat B$ to $v$ in $G$ that avoids the vertices of $S'$, and all vertices of $G$ with this property lie in $L'$. \label{prop: bad cuts: terminals}
	\end{properties}
\end{definition}

We let $\hat \event$ be the bad event that, for some level $0\leq i\leq z$, a  bad tripartition of $V(G)$  for level $i$ exists. The following technical claim is key to our algorithm.

\begin{claim}\label{claim: avoiding bad cuts}
	$\prob{\hat \event}\leq\frac{1}{n^2}$.
\end{claim}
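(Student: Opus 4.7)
The plan is to show $\prob{\hat\event}\le 1/n^2$ by a union bound over all possible ``witness pairs'' that could produce a bad tripartition at any level. Fix a level $0\le i\le z$. By Property~\ref{prop: bad cuts: terminals}, every bad tripartition $(L',S',R')$ for level $i$ is uniquely determined by a pair $(\hat B,S')$, where $\hat B\subseteq V(G)$ with $|\hat B|\le \eta_i$ and $S'\subseteq V(G)$ with $|S'|\le 3\hat k\eta_i$: given this pair, $L'$ is the set of vertices reachable from $\hat B$ in $G\setminus S'$. So it suffices to bound the expected number of such witness pairs whose induced $L'$ satisfies Property~\ref{prop: bad cuts: large volume}, and for which $\hat B\subseteq T$ and $T'\cap L'=\emptyset$ hold.

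For a fixed pair $(\hat B,S')$ with $|\hat B|=s$ for which $\vol^+_G(L')\ge \tfrac{1000\hat k\cdot m\cdot\log n}{N_i}$, since $T_0$ and $T'$ are chosen by independent samplings and $T\subseteq T_0$, the probability that $\hat B\subseteq T$ and $T'\cap L'=\emptyset$ is at most
$\prob{\hat B\subseteq T_0}\cdot\prob{T'\cap L'=\emptyset}$.
For the first factor, arguing via which $s$ of the $r$ trials hit the vertices of $\hat B$,
\[
\prob{\hat B\subseteq T_0}\le r^s\cdot\prod_{v\in \hat B}\frac{\deg^+_G(v)}{m}.
\]
For the second factor, each anti-terminal trial lies in $L'$ with probability $\vol^+_G(L')/m$, so
\[
\prob{T'\cap L'=\emptyset}\le \exp\!\left(-\frac{r\cdot \vol^+_G(L')}{m}\right).
\]
Using $r=\lceil m/\lambda\rceil \ge m/\lambda$, the inequality $N_i\le 4|T|/\eta_i\le 4r/\eta_i$ (from \Cref{eq: bound 1 on Ni} and $|T|\le r$), and the lower bound on $\vol^+_G(L')$, the exponent satisfies $r\cdot \vol^+_G(L')/m\ge 250\,\hat k\log n\cdot\eta_i$, so $\prob{T'\cap L'=\emptyset}\le n^{-300\,\hat k\,\eta_i}$.

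Now sum over pairs $(\hat B,S')$. For fixed $s,s'$, the key simplification is
\[
\sum_{\hat B:\,|\hat B|=s}\prod_{v\in \hat B}\frac{\deg^+_G(v)}{m}\le \frac{1}{s!}\Bigl(\sum_{v\in V(G)}\tfrac{\deg^+_G(v)}{m}\Bigr)^{\!s}=\frac{1}{s!},
\]
and $\sum_{S':\,|S'|=s'}1\le \binom{n}{s'}\le n^{s'}$. Combining and using $r\le m\le n^2$,
\[
\sum_{s\le \eta_i}\sum_{s'\le 3\hat k\eta_i}\frac{r^s}{s!}\cdot n^{s'}\cdot n^{-300\,\hat k\,\eta_i}\le \poly(n)\cdot n^{2\eta_i}\cdot n^{3\hat k\eta_i}\cdot n^{-300\,\hat k\,\eta_i}\le n^{-100\,\hat k\,\eta_i}
\]
for $n$ sufficiently large; since $\eta_i\ge 1$, this is at most $n^{-100}$. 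A final union bound over $z\le 2\log n$ levels gives $\prob{\hat\event}\le 1/n^2$ as required.

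The main obstacle, and the reason the bound is tight enough to work, is the tension between the number of potential witness pairs $(\hat B,S')$ (of order $n^{\eta_i+3\hat k\eta_i}$ if one were naive) and the single ``savings factor'' $\exp(-r\vol^+_G(L')/m)$ provided by the anti-terminal sampling. The calibration of the threshold $1000\,\hat k\,m\log n/N_i$ in the definition of a bad tripartition is designed precisely so that this savings factor beats the enumeration cost with room to spare; verifying this requires the relation $r/N_i\ge \eta_i/4$, which in turn relies on $|T|\le r$ and the geometric decay of the batch sizes $\eta_i$ in the hierarchical partition.
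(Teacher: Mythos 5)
Your proof is correct and follows essentially the same route as the paper: a union bound over all witness pairs $(\hat B, S')$ (the paper's ``suspicious triples''), in which only the anti-terminal miss probability $\left(1-\vol^+_G(L')/m\right)^r \le n^{-\Omega(\hat k\,\eta_i)}$ — derived from exactly the same ingredients, namely the volume threshold in Property \ref{prop: bad cuts: large volume} and the inequality $N_i\le 4r/\eta_i$ — overwhelms the $n^{O(\hat k\,\eta_i)}$ enumeration of witnesses, followed by a union bound over the $z\le 2\log n$ levels. The only deviation is your extra factor $\prob{\hat B\subseteq T_0}\le \frac{r^s}{s!}\prod_{v\in\hat B}\frac{\deg^+_G(v)}{m}$ from the terminal-side randomness, which the paper replaces by the trivial count of at most $n^{\eta_i}$ choices of $\hat B$; as your own crude estimate $r^s/s!\le n^{2\eta_i}$ shows, this embellishment is harmless but gains nothing here.
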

\begin{proof}
	For the sake of the proof, we view the process of constructing the set of terminals and anti-terminals as follows. We first select the set $T_0$ of initial terminals and the vertex $t^*$, and only then construct the set $T'$ of anti-terminals. Lastly, we let the set of terminals be $T=T_0\setminus \left(T'\cup N^-_G(t^*)\cup \set{t^*}\right )$. In this proof, we consider the time when the set $T_0$ of initial terminals was constructed already, but the set $T'$ of anti-terminals was not yet constructed. Let $\Pi$ be the collection of all triples $(i,\hat S,\hat B)$, where $0\leq i\leq z$ is a level, $\hat S\subseteq V(G)$ is a collection of at most $3\hat k\eta_i$ vertices of $G$, and $\hat B\subseteq T_0\setminus \hat S$ is a collection of at most $\eta_i$ initial terminals. We may sometimes refer to a triple $(i,\hat S,\hat B)$ as a \emph{level-$i$ triple}. For each such triple $(i,\hat S,\hat B)$, let $\rho(\hat S,\hat B)$ be the collection of all vertices $v\in V(G)$, such that $v$ is reachable from some vertex of $\hat B$ in $G\setminus \hat S$; in other words, there is a path in $G$ connecting a vertex of $\hat B$ to $v$ that avoids $\hat S$. Notice that $\hat B\subseteq \rho(\hat S,\hat B)$ must hold. Let $\hat \event(i,\hat S,\hat B)$ be the bad event that no vertex of $\rho(\hat S,\hat B)$ was added to $T'$. Finally, we let $\Pi'\subseteq \Pi$ be the collection of triples $(i,\hat S,\hat B)$ with $\vol^+_G(\rho(\hat S,\hat B))\geq \frac{1000\hat k\cdot m\cdot \log n}{N_i}$; we refer to the triples in $\Pi'$ as \emph{suspicious triples}. We start by showing that, if Event $\hat \event$ has happened, then Event $\hat \event(i,\hat S,\hat B)$ must have happened for some suspicious triple $(i,\hat S,\hat B)\in \Pi'$, in the following simple observation.
	
	\begin{observation}\label{obs: relating events}
		If Event $\hat \event$ has happened, then there must be a triple  $(i,\hat S,\hat B)\in \Pi'$ for which Event $\hat \event (i,\hat S,\hat B)$ has happened.
	\end{observation}

\begin{proof}
	Assume that Event $\hat \event$ has happened. Then for some level $0\le i\leq z$, there is a bad tripartition for level $i$ in $G$; denote this tripartition by $(L',S',R')$. From Property \ref{prop: bad cuts: terminals} of bad tripartitions,
	there is a collection $\hat B\subseteq T\setminus S'$ of at most $\eta_i$ terminals, such that $L'$ is precisely the set of all vertices of $G$ that are reachable from the vertices of $\hat B$ in $G\setminus S'$, so in particular $\hat B\subseteq T\subseteq T_0$ and $\hat B\cap S'=\emptyset$.
	Additionally, from Property \ref{prop: bad cuts: small separator}, $|S'|\leq 3\hat k\cdot \eta_i$ must hold. Therefore, triple $(i,S',\hat B)$ lies in $\Pi$. Notice also that, from Property \ref{prop: bad cuts: terminals}, $L'=\rho(S',\hat B)$, and, from Property \ref{prop: bad cuts: large volume}, $\vol^+_G(L')=\vol^+_G(\rho(S',\hat B))\geq \frac{1000\hat k\cdot m\cdot \log n}{N_i}$. We conclude that $(i,S',\hat B)\in \Pi'$. Lastly, from Property \ref{prop: bad cuts: missed cut} of bad tripartitions,  $\rho(\hat S,B')\cap T'=L'\cap T'=\emptyset$. Therefore, Event  $\hat \event (i,\hat S,\hat B)$ must have happened.
\end{proof}

In the following observation, we bound the probability of the event $\hat \event(i,\hat S,\hat B)$ for a single suspicious triple $(i,\hat S,\hat B)$; we will then use the Union Bound over all such triples to complete the proof of \Cref{claim: avoiding bad cuts}.

\begin{observation}\label{obs: bounding a single event}
	For every suspicious triple $(i,\hat S,\hat B)\in \Pi'$, $\prob{\hat \event(i,\hat S,\hat B)}\leq n^{-250\hat k\cdot \eta_i}$.	
\end{observation}
\begin{proof}
Consider a  suspicious triple $(i,\hat S,\hat B)\in \Pi'$, and denote $\Lambda=\rho(\hat S,\hat B)$. Recall that, from the definition of a suspicious triple, $\vol^+_G(\Lambda)\geq\frac{1000\hat k\cdot m\cdot\log n}{N_i}$.
Since $\vol^+_G(\Lambda)\leq m$ must hold, the triple $(i,\hat S,\hat B)$ may only be suspicious if $N_i\geq 1000\hat k\cdot \log n$.

 Recall that the bad event $\hat \event(i,\hat S,\hat B)$ may only happen if no vertex of $\Lambda$ was added to $T'$. Recall also that the set $T'$ of anti-terminals is constructed over the course of $r=\ceil{\frac{m}{\lambda}}$ independent trials, where in every trial we select a single vertex from $V(G)$ to be added to $T'$, and the probability of selecting a vertex $v$ is $\frac{\deg^+(v)}{m}$. Therefore, the probability that the vertex selected in a single trial lies in $\Lambda$ is at least $\frac{\vol^+_G(\Lambda)}{m}\geq \frac{1000\hat k\cdot \log n}{N_i}$, and so:

\[\prob{\hat \event(i,\hat S,\hat B)}\leq \left(1-\frac{1000\hat k\cdot \log n}{N_i}\right )^{r}\leq e^{-(1000\hat kr\log n)/N_i}\leq n^{-250\hat k\cdot \eta_i}.
\]

since $N_i\leq \frac{4|T|}{\eta_i}\leq \frac{4r}{\eta_i}$ from Inequality \ref{eq: bound 1 on Ni}.
\end{proof}

Finally, we bound the number of triples in the set $\Pi'$. Fix a level $0\leq i\leq z$. Note that there are at most $n^{3\hat k\cdot \eta_i}$ possible choices for a collection
$\hat S\subseteq V(G)$ of at most $3\hat k\eta_i$ vertices of $G$, and at most $n^{\eta_i}$ possible choices for a collection $\hat B\subseteq T_0\setminus \hat S$ of at most $\eta_i$ initial terminals. Therefore, the total number of level-$i$ triples in $\Pi$, and hence in $\Pi'$, is bounded by $n^{4\hat k\cdot \eta_i}$. From \Cref{obs: bounding a single event}, combined with the union bound, the probability that Event $\hat \event(i,\hat S,\hat B)$ happens for any level-$i$ triple $(i,\hat S,\hat B)$ is bounded by $n^{-250\hat k\cdot \eta_i}\cdot n^{4\hat k\cdot \eta_i}\leq n^{-100\hat k\cdot \eta_i}$, and, from \Cref{obs: relating events}, combined with the Union Bound, $\prob{\hat \event}\leq \frac{1}{n^2}$.
\end{proof}

We let $\event^*$ be the good event that the Event $\event$ has happened and the Event $\hat \event$ did not happen. By combining \Cref{obs: preprocessing successful} with \Cref{claim: avoiding bad cuts} and the Union Bound, we get that, if $\vol^+_G(L)<\frac{m}{4}$  and $k\leq \hat k$, then:

\[\prob{\neg\event^*}\leq \prob{\neg \event}+\prob{\hat \event}\leq 1-\frac{1}{2^{64}\cdot \log n}+\frac{1}{n^2}\leq 1-\frac{1}{2^{65}\cdot \log n}. \]

Therefore, if $\vol^+_G(L)<\frac{m}{4}$  and $k\leq \hat k$, then:
\begin{equation}\label{eq: bounding success prob}
\prob{\event^*}\geq \frac{1}{2^{65}\cdot \log n}.
\end{equation}

For convenience, in the remainder of the proof, we denote by $T''=T'\cup N^-_G(t^*)\cup \set{t^*}$. Note that $T''\cap T=\emptyset$ must hold, and, if Event $\event^*$ happened, then $T''\cap L=\emptyset$.

\subsection*{Processing a Single Batch of Terminals}

The key subroutine of our algoritm is a procedure for processing a single batch of terminals, that is summarized in the following claim.

\begin{claim}\label{claim: process one batch}
	There is a deterministic algorithm, whose input consists of an integer $1\leq i\leq z$, a level-$(i-1)$ batch $B\in \bset_{i-1}$ of terminals, a set $A_B\subseteq V(G)\setminus T''$ of vertices of $G$ with $B\cap A_B\neq \emptyset$ and $\vol^+_G(A_B) \leq \frac{1000\hat k\cdot m\cdot \log n}{N_{i-1}}$, together with a level-$i$ batch $B'\subseteq B$ of terminals. The algorithm either returns ``FAIL'', or computes a set $A_{B'}\subseteq  V(G)\setminus T''$ of vertices of $G$ with $B'\cap A_{B'}\neq\emptyset$ and $\vol^+_G(A_{B'}) \leq \frac{1000\hat k\cdot m\cdot \log n}{N_{i}}$. The algorithm guarantees that, if all of the following hold:
	
	\begin{itemize}
\item 	Event $\event^*$ has happened;
\item $k\leq \hat k$;

	\item $L\subseteq A_B$; and 
	
	\item $B'\cap L\neq\emptyset$,
	
	\end{itemize} 

then it may not return ``FAIL'', and moreover, $L\subseteq A_{B'}$ must hold.
	The running time of the algorithm is $O\left(\frac{\hat k\cdot m^{1+o(1)}}{N_{i-1}}\right )$.
\end{claim}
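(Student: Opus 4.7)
The plan is to compute $A_{B'}$ via a single minimum vertex-cut computation in a modified version of $G_B$, and to bound its out-volume via \Cref{claim: avoiding bad cuts}. First, I construct the derived graph $G_B = G^{|A_B}$ in time $O(\vol^+_G(A_B)) = O(\hat k m \log n / N_{i-1})$ using \Cref{claim: properties of defined graph}; this graph has $|E(G_B)| \leq 2\vol^+_G(A_B)$. Next, I construct a modified graph $\hat G_B$ from $G_B$ by adding a new dummy source $s^*$ with edges $(s^*, v)$ for every $v \in B' \cap A_B$, and by adding an edge $(u, t)$ for every vertex $u \in T'' \cap V(G_B)$ (since $A_B \cap T'' = \emptyset$, such vertices lie in $N^+_G(A_B)$ and can be identified during the construction of $G_B$). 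The role of the new $u \to t$ edges is to force every vertex of $T''$ to lie on the sink side of any $s^*$-$t$ vertex cut in $\hat G_B$.

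I then apply \Cref{cor: min_vertex_cut} to compute a minimum $\{s^*\}$-to-$\{t\}$ vertex cut $(\hat L, \hat S, \hat R)$ in $\hat G_B$, choosing the \emph{furthest} such cut (the one that maximizes $\hat L$) by complementing the set of vertices that can reach the appropriate split-graph copy of $t$ in the residual network of a maximum flow; this takes time $O(|E(\hat G_B)|^{1+o(1)}) \leq O(\hat k m^{1+o(1)} / N_{i-1})$. I set $A_{B'} = \hat L \setminus \{s^*\}$ and check whether $|\hat S| \leq 3\hat k \eta_i$, $B' \cap A_{B'} \neq \emptyset$, and $\vol^+_G(A_{B'}) \leq \frac{1000 \hat k m \log n}{N_i}$; if any check fails, return FAIL, otherwise return $A_{B'}$. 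Inspecting edges in $\hat G_B$, one sees that $\hat L \cap T'' = \emptyset$ (else an edge $T'' \to t$ would cross the cut) and $\hat L \setminus \{s^*\} \subseteq A_B$ (else an edge $N^+_G(A_B) \to t$ would cross the cut), so $A_{B'} \subseteq A_B \setminus T'' \subseteq V(G) \setminus T''$, and $N^+_G(A_{B'}) \subseteq \hat S \cap V(G_B)$, yielding $|N^+_G(A_{B'})| \leq |\hat S|$.

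Under the stated conditions, the tripartition $(\{s^*\} \cup L, S \cup (B' \cap A_B \setminus L), V(\hat G_B) \setminus \cdot)$ is a valid $s^*$-$t$ vertex cut in $\hat G_B$: every edge out of $L$ in $G$ lands in $L \cup S$, every edge $s^* \to v$ lands in either $L$ (source side) or $B' \cap A_B \setminus L$ (separator), and $L \cap T'' = \emptyset$ guarantees that $t$ lies on the sink side; its value is at most $k + |B'| \leq \hat k + 2\eta_i \leq 3\hat k \eta_i$, so the minimum cut value is bounded likewise. I then argue that the furthest min cut has $L \subseteq \hat L$ (hence $L \subseteq A_{B'}$ and $B' \cap A_{B'} \supseteq B' \cap L \neq \emptyset$), and combine with \Cref{claim: avoiding bad cuts} to rule out $\vol^+_G(A_{B'}) > \frac{1000\hat k m \log n}{N_i}$ --- otherwise, taking $\hat B = B' \cap A_{B'}$ (of size $\leq \eta_i$), the tripartition induced by $A_{B'}$ would be a bad tripartition at level $i$, since $A_{B'}$ is precisely the closure of $\hat B$ in $G \setminus N^+_G(A_{B'})$. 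The main obstacle is the step showing that the furthest min cut contains $L$ on its source side: while any $s^*$-$t$ cut in $\hat G_B$ with $\hat L \setminus \{s^*\} \neq \emptyset$ restricts via \Cref{claim: transforming cuts from defined graph} to a vertex cut of $G$ of the same value and hence has value at least $k$, a ``degenerate'' cut with $\hat L = \{s^*\}$ has value only $|B' \cap A_B|$, which could be smaller than $k + 2\eta_i$ and would fail to contain $L$. Resolving this --- either by a careful case analysis on the size of $|B' \cap A_B|$, or by augmenting $\hat G_B$ with additional gadgets that penalize the degenerate cut (e.g., by inflating the weights of vertices in $B' \cap A_B$), so that the furthest min cut is guaranteed to be non-degenerate under the stated conditions --- is the principal technical content of the proof.
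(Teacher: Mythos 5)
Your proposal has a genuine gap, and it sits exactly where you place it: establishing $L\subseteq A_{B'}$. Unfortunately, neither of your suggested repairs works. If you inflate the weights of the terminals of $B'\cap A_B$ themselves (rather than of separate copies), the natural exchange argument breaks: a terminal of $B'$ may lie in $S$, and then moving $L$ to the source side and $S$ into the separator adds an inflated-weight vertex to the separator, so the exchange no longer decreases the cut value. Even if you do manage to pin every terminal of $L\cap B'$ to the source side of every minimum $s^*$-$t$ cut, the cut $(\{s^*\}\cup L,\; S\cup(B'\cap A_B\setminus L),\;\cdot)$ is generally \emph{not} a minimum $s^*$-$t$ cut in your gadget (its value can exceed the optimum by up to $k$, or by $2\hat k$ per non-$L$ terminal in the inflated variant), so the standard union-closure/submodularity property of furthest minimum cuts gives you no containment of $L$ in the source side. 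There is also a secondary problem with your volume bound: the source side of a \emph{furthest} minimum cut is the complement of the set of vertices that can reach $t$ in the residual network, and is in general not equal to the reachability closure of $\hat B=B'\cap A_{B'}$; hence Property \ref{prop: bad cuts: terminals} of a bad tripartition does not apply to $A_{B'}=\hat L\setminus\{s^*\}$ as you claim (this part is fixable by replacing $A_{B'}$ with the closure, but then you are no longer using the furthest cut at all).

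The paper resolves the crux by a different route. It attaches to each terminal $v\in B'\cap A_B$ a \emph{copy} $v'$ of weight $2\hat k$ with $N^+_H(v')=N^+_G(v)$ (the original $v$ keeps weight $1$), connects a source $s$ to the copies, and computes a minimum $s$-$t$ vertex-cut in this graph $H$; an exchange argument (move $v'$ and $L$ to the source side, $S$ into the separator, at cost at most $k\leq\hat k<2\hat k$) shows only that the copies of terminals in $L\cap B'$ lie on the source side -- it never claims $L$ does. It then restricts the cut to $\hat G=G^{|A_B}$, shows the restriction is a \emph{minimum} $\hat B$-$t$ cut there, defines $A_{B'}$ as the reachability closure of $\hat B$ avoiding the separator $S''$ (which is what makes \Cref{claim: avoiding bad cuts} applicable), and finally proves $L\subseteq A_{B'}$ via submodularity of $w(N^+(\cdot))$ combined with the fact that the distinguished cut was fixed to minimize $|L|$ among global minimum vertex-cuts -- a tie-breaking choice your argument does not exploit and which appears essential: without it, even the paper's construction would not force $L$ into the closure. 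So the missing step is not a routine case analysis or gadget tweak; it is the actual technical heart of the claim, and the paper's copy-gadget plus closure plus minimal-$|L|$ submodularity argument is the machinery that supplies it.
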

	
We provide the proof of \Cref{claim: process one batch} below, after we complete the proof of \Cref{thm: main: inner unweighted} using it.

\subsection*{The Algorithm Description}

At the beginning of the algorithm, we consider the unique level-$0$ batch $B\in \bset_0$ of terminals (recall that $B=T$). 
We set $\bset_0^I=\emptyset$ and $\bset_0^A=\bset_0=\set{B}$.
We also define $A_B=V(G)\setminus T''$. 
Since $N_0=1$, it is immediate to verify that $\vol^+(A_B)\leq \frac{1000\hat k\cdot m\cdot \log n}{N_0}$. Assume now that Event $\event^*$ has happened, so, in particular, $L\cap B=L\cap T\neq \emptyset$ and $t^*\in R$ hold. Then $N^-_G(t^*)\subseteq S\cup R$ must hold, and so $L\subseteq A_B$. 
Next, we perform at most $z$ phases.

\paragraph{Execution of a Phase.}
We now fix an index $1\leq i\leq z$ and describe the execution of the $i$th phase. We assume that, at the beginning of Phase $i$, we are given a partition $(\bset_{i-1}^A,\bset_{i-1}^I)$ of the set $\bset_{i-1}$ of level-$(i-1)$ batches, where the batches in $\bset_{i-1}^A$ are referred to as \emph{active}, and the batches in 
$\bset_{i-1}^I$ as \emph{inactive}. We also assume that, for every active level-$(i-1)$ batch $B\in \bset_{i-1}^A$, we are given 
a set $A_B\subseteq V(G)\setminus T''$ of vertices of $G$, such that $B\cap A_B\neq \emptyset$ and $\vol^+_G(A_B) \leq \frac{1000\hat k\cdot m\cdot \log n}{N_{i-1}}$ hold. Lastly, we assume that, if Event $\event^*$ has happened and $k\leq \hat k$ then for every level-$(i-1)$ batch $B\in \bset_{i-1}$, if  $L\cap B\neq \emptyset$, then $B\in \bset^A_{i-1}$ holds, and, moreover, $L\subseteq A_B$.

We consider every level-$i$ batch  one by one. Let $B'\in \bset_i$ be any such batch, and let $B\in \bset_{i-1}$ be the unique level-$(i-1)$ batch with $B'\subseteq B$. If $B\in \bset_{i-1}^I$, then we add $B'$ to the set $\bset_i^I$ of inactive level-$i$ batches. Otherwise, we apply the algorithm from \Cref{claim: process one batch} to the batches $B$ and $B'$ and the set $A_B$ of vertices of $G$. If the algorithm returns ``FAIL'', then we add $B'$  to the set $\bset_i^I$ of inactive level-$i$ batches. Otherwise, the algorithm must have computed a set $A_{B'}\subseteq  V(G)\setminus T''$ of vertices of $G$ with $B'\cap A_{B'}\neq\emptyset$ and $\vol^+_G(A_{B'}) \leq \frac{1000\hat k\cdot m\cdot \log n}{N_{i}}$.  We then add $B'$ to the set $\bset_i^A$ of active level-$i$ batches, and we record its associated vertex set $A_{B'}$. 

This completes the description of the algorithm for Phase $i$. Note that, at the end of the algorithm, we obtain a partition $(\bset_i^A,\bset_i^I)$ of the set $\bset_i$ of level-$i$ batches into the set $\bset_i^A$ of active batches and the set $\bset_i^I$ of inactive batches. For every active level-$i$ batch $B'\in \bset^A_i$, we have computed a set  $A_{B'}\subseteq  V(G)\setminus T''$ of vertices of $G$ with $B'\cap A_{B'}\neq\emptyset$ and $\vol^+_G(A_{B'}) \leq \frac{1000\hat k\cdot m\cdot \log n}{N_{i}}$. 
Lastly,  assume that Event
$\event^*$ has happened and that $k\leq \hat k$, and consider some level-$i$ batch $B'\in \bset_i$ with $B'\cap L\neq \emptyset$.
Let $B\in \bset_i$ be the unique level-$(i-1)$ batch containing $B$. Then, from our assumption, $B\in \bset^A_{i-1}$ and $L\subseteq A_B$ hold. 
Therefore, our algorithm must have applied the algorithm from \Cref{claim: process one batch} to the batches $B$ and $B'$, and, moreover, the algorithm from \Cref{claim: process one batch} did not return ``FAIL''. Therefore, $B'$ was added to the set $\bset_i^A$ of active level-$i$ batches. Lastly, the algorithm from \Cref{claim: process one batch} guarantees that in this case, if $A_{B'}$ is the vertex set that it returns, then $L\subseteq A_{B'}$.

Recall that the running time of the algorithm from \Cref{claim: process one batch} is $O\left(\frac{\hat k\cdot m^{1+o(1)}}{N_{i-1}}\right )$, and we apply it to at most $|\bset_i|\leq N_i$ batches. Moreover, from Inequalities \ref{eq: bound 1 on Ni} and \ref{eq: compare set sizes}, $N_i\leq O(N_{i-1})$. Therefore, the running time of a single phase is:

\[O\left(\frac{\hat k\cdot m^{1+o(1)}}{N_{i-1}}\cdot N_i\right )\leq O\left(\hat k\cdot m^{1+o(1)}\right ).\]

\paragraph{Completing the Algorithm.}

Recall that every level-$z$ batch $B\in \bset_z$ of terminals contains exactly one terminal, that we denote by $s_B$. Recall also that, if Event $\event^*$ has happened, then $T\cap L\neq \emptyset$. If Event $\event^*$ has happened, then we let $s^*$ be any terminal in $T\cap L$, and otherwise we let $s^*$ be any terminal in $T$. We denote by $B^*$ the unique level-$z$ batch with $s_{B^*}=s^*$. Note that, if Event $\event^*$ has happened and $k\leq \hat k$, then our algorithm guarantees that $B^*\in \bset^A_z$ holds, and, moreover, $L\subseteq A_{B^*}$. 

If $\bset_z^A=\emptyset$, then we terminate the algorithm and return the arbitrary vertex-cut $(\tilde L,\tilde S,\tilde R)$ that we computed at the beginning of the algorithm; note that, from our discussion, this may only happen if Event $\event^*$ did not happen, or if $\hat k>k$.

Assume now that $\bset_z^A\neq \emptyset$. We process every active level-$z$ batch $B\in \bset_z^A$ one by one. When a batch $B\in \bset_z^A$ is processed, we compute the corresponding graph $G_B=G^{|A_B}$ that is derived from $G$ via $A_B$.  Recall that, from \Cref{claim: properties of defined graph}, graph $G_B$ can be computed in time $O\left(\vol^+_G(A_B)\right )\leq O\left(\frac{\hat k\cdot m\cdot \log n}{N_z}\right )$, and, moreover, $|E(G_B)|\leq O\left(\vol^+_G(A_B)\right )\leq O\left(\frac{\hat k\cdot m\cdot \log n}{N_z}\right )$. Recall also that $A_B\cap B\neq \emptyset$ must hold, so $s_B\in A_B$.
We then compute a minimum $s_B$-$t$ vertex-cut $(L_B,S_B,R_B)$ in $G_B$ in time  $O\left(|E(G_B)|^{1+o(1)}\right )\leq O\left(\frac{\hat k\cdot m^{1+o(1)}}{N_z}\right )$,
using the algorithm from \Cref{cor: min_vertex_cut}.
Finally, we select a batch $B\in \bset_z$ for which $|S_B|$ is minimized, and return the tripartition $(L',S',R')$ of $V(G)$ that is induced by the vertex set $L'=L_B$.
Recall that, from \Cref{claim: process one batch}, $A_B\subseteq V(G)\setminus T''$ must hold, so $t^*\not\in A_B\cup N_G^+(A_B)$, and, in particular, $A\cup N^+_G(A)\neq V(G)$. From \Cref{claim: transforming cuts from defined graph}, $(L',S',R')$ is a valid vertex-cut in $G$. We return this cut as the algorithm's output.

From the above discussion, the time required to process a single level-$z$ batch $B\in \bset_z^A$ is bounded by $O\left(\frac{\hat k\cdot m^{1+o(1)}}{N_z}\right )$, and, since the number of level-$z$ batches is bounded by $N_z$, the running time of this last step is bounded by $O\left(\hat k\cdot m^{1+o(1)}\right )$. Since the algorithm has $z=O(\log n)$ phases, and the running time of each phase is bounded by  $O\left(\hat k\cdot m^{1+o(1)}\right )$, and since the running time of the preprocessing step is bounded by $\tilde O(m)$, the total running time of the algorithm is  $O\left(\hat k\cdot m^{1+o(1)}\right )$.

As observed already, if Event $\event^*$ happened and $\hat k\leq k$, then $B^*\in \bset^A_z$ and $L\subseteq A_{B^*}$ hold.
From \Cref{claim: properties of defined graph}, $(L,S,V(G_{B^*})\setminus (L\cup S)))$ is a valid vertex-cut in $G^{B^*}$, and it is easy to see that it must be an $s^*$-$t$ vertex-cut in $G_{B^*}$ of value $k$.  In this case, we are guaranteed that the vertex-cut $(L',S',R')$ that our algorithm returns has value $|S'|=k$.

Recall that, if $\vol^+_G(L)\geq \frac{m}{4}$, then the vertex-cut that our algorithm returns is guaranteed to be optimal with probability at least $\frac{1}{32}$. If $\vol^+_G(L)< \frac{m}{4}$, and if $\hat k\leq k$ and Event $\event^*$ happens, then the vertex-cut that our algorithm returns is guaranteed to be optimal. Since, from 
\Cref{eq: bounding success prob}, if $k\leq \hat k$ and $\vol^+_G(L)<\frac{m}{4}$ then
$\prob{\event^*}\geq \frac{1}{2^{65}\cdot \log n}$, 
we get that, overall, if $\hat k\leq k$, the probability that the vertex-cut that our algorithm returns is optimal is $\Omega\left(\frac{1}{\log n}\right )$. By repeating the algorithm $O(\log^2n)$ times and returning the lowest-value cut among the resulting vertex-cuts, we guarantee that the probability that  the vertex-cut that our algorithm returns is optimal is at least $\left(1-\frac{1}{n}\right )$. The running time of the algorithm remains $O\left(\hat k\cdot m^{1+o(1)}\right )$.

In order to complete the proof of \Cref{thm: main: inner unweighted} it now remains to prove \Cref{claim: process one batch}, which we do next.


\subsection{Processing a Single Batch of Terminals: Proof of \Cref{claim: process one batch}}

We assume that we are given  an integer $1\leq i\leq z$, a level-$(i-1)$ batch $B\in \bset_{i-1}$ of terminals, a set $A_B\subseteq V(G)\setminus T''$ of vertices of $G$ with $B\cap A_B\neq \emptyset$ and $\vol^+_G(A_B) \leq \frac{1000\hat k\cdot m\cdot \log n}{N_{i-1}}$, and a level-$i$ batch $B'\subseteq B$ of terminals. 
We denote by $\hat G=G^{|A_B}$ the graph derived from $G$ via the vertex set $A_B$. From  \Cref{claim: properties of defined graph}, graph $\hat G$ can be computed in time $O\left(\vol^+_G(A_B)\right )\leq O\left(\frac{\hat k\cdot m\cdot \log n}{N_{i-1}}\right )$, and, moreover, $|E(\hat G)|\leq O\left(\vol^+_G(A_B)\right )\leq O\left(\frac{\hat k\cdot m\cdot \log n}{N_{i-1}}\right )$ must hold.

\paragraph{Condition (C).}
For brevity, we say that Condition (C) holds, if all of the following hold:

	\begin{itemize}
	\item 	Event $\event^*$ has happened;
	\item $k\leq \hat k$;
	\item $L\subseteq A_B$; and 
	
	\item $B'\cap L\neq\emptyset$,
\end{itemize} 

Recall that our goal is to compute a set $A_{B'}\subseteq  V(G)\setminus T''$ of vertices of $G$ with $B'\cap A_{B'}\neq\emptyset$ and $\vol^+_G(A_{B'}) \leq \frac{1000\hat k\cdot m\cdot \log n}{N_{i}}$, such that, if Condition (C) holds, then $L\subseteq A_{B'}$. Our algorithm may return ``FAIL'' if Condition (C) does not hold.  

%

The following observation follows immediately from the fact that $A_B\subseteq V(G)\setminus T''$, and from the definition of the graph $\hat G=G^{A_B}$.

\begin{observation}\label{obs: pseudoterminals connect to t}
	If $v\in T''$ is a vertex of $\hat G$, then edge $(v,t)$ lies in $\hat G$. In particular, if $(\hat L,\hat S,\hat R)$ is any vertex-cut in $\hat G$ with $t\in \hat R$, then $\hat L\cap T''=\emptyset$.
\end{observation}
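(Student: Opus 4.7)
The observation is essentially a direct unpacking of the definition of the derived graph $\hat G = G^{|A_B}$ (Definition \ref{def: defined graph}) combined with the assumption $A_B \subseteq V(G) \setminus T''$ that was given as input to \Cref{claim: process one batch}. My plan is to break it into the two natural parts and handle each with a one-line argument.

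For the first part, I will start from the assumption $v \in T'' \cap V(\hat G)$. By construction, $V(\hat G) = A_B \cup N^+_G(A_B) \cup \{t\}$, where $t$ is a fresh auxiliary vertex that is not a vertex of $G$. Since $v \in T'' \subseteq V(G)$, we have $v \neq t$. Since $A_B \subseteq V(G) \setminus T''$, we also have $v \notin A_B$. The only remaining possibility is $v \in N^+_G(A_B)$. But Definition \ref{def: defined graph} explicitly adds the edge $(u,t)$ to $\hat G$ for every $u \in N^+_G(A_B)$, so $(v,t) \in E(\hat G)$, as required.

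For the second part, suppose toward contradiction that $(\hat L, \hat S, \hat R)$ is a vertex-cut in $\hat G$ with $t \in \hat R$, and there exists some $v \in \hat L \cap T''$. The first part gives $(v,t) \in E(\hat G)$. But then $\hat G$ contains an edge from a vertex of $\hat L$ to a vertex of $\hat R$, contradicting the definition of a vertex-cut. Hence $\hat L \cap T'' = \emptyset$.

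There is no real obstacle here; the statement is just a sanity check that exploits how edges to the auxiliary sink $t$ are inserted in $G^{|A_B}$ and the assumption $A_B \cap T'' = \emptyset$. The only thing to be careful about is distinguishing between the auxiliary vertex $t$ of $\hat G$ and vertices of the original graph $G$, which is what ensures $v \neq t$ in the first step.
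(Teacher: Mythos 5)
Your proof is correct and is exactly the unpacking the paper has in mind when it states that the observation "follows immediately from the fact that $A_B\subseteq V(G)\setminus T''$, and from the definition of the graph $\hat G=G^{|A_B}$" (the paper gives no further detail). The case split on $V(\hat G) = A_B \cup N^+_G(A_B) \cup \{t\}$ and the appeal to the inserted edges $(u,t)$ for $u \in N^+_G(A_B)$ are precisely what is intended.
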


If $B'\cap A_B=\emptyset$, then we terminate the algorithm and return ``FAIL''. Note that, in this case, Condition (C) does not hold. Indeed, if (C) holds, then, since  $B'\cap L\neq \emptyset$ and $B'\cap L\subseteq L\subseteq A_B$, we get that $B'\cap A_B\neq \emptyset$. Therefore, we assume from now on that $B'\cap A_B\neq \emptyset$.

\paragraph{High-level overview.}
At a high level, our algorithm will compute a graph $H$, that can be viewed as a slight modification of the graph $\hat G$, and it will then compute a vertex-cut in $H$. After minor modifications, the resulting cut will be transformed into a vertex-cut $(L',S',R')$ in $\hat G$, and we will then set $A_{B'}=L'$, and prove that the vertex set $A_{B'}$ has all the required properties. We now proceed with a formal description of our algorithm, starting with the construction of the graph $H$.

\paragraph{Graph $H$.} We denote by $\hat T=B'\cap A_B$; recall that, from our assumption,  $\hat T\neq \emptyset$. In order to construct the graph $H$, we start with $H=\hat G$, and then, for every terminal $v\in \hat T$, we add a new vertex $v'$ to $H$, and then connect $v'$ with an edge to every vertex $u\in N^+_G(v)$ (recall that each such vertex $u$ must lie in $\hat G=G^{|A_B}$ and edge $(v,u)$ must lie in $\hat G$, from the definition of a graph derived from $G$ via $A_B$). We refer to $v'$ as a \emph{copy of $v$}, and we denote by $\hat T'=\set{v'\mid v\in\hat T}$ the collection of all copies of the terminals in $\hat T$; we may sometimes also refer to the vertices in $\hat T'$ as \emph{pseudoterminals}. Lastly, we add a source vertex $s$ to $H$, and connect it with an edge to every vertex in $\hat T'$.  We set the weights of vertices $v\in V(H)$ as follows. If $v\in \hat T'$, then its weight is $\hat w(v)=2\hat k$, and otherwise its weight is $\hat w(v)=1$. This completes the description of the graph $H$. It is easy to verify that $H$ can be constructed in time $O(|E(\hat G)|)\leq O\left(\frac{\hat k\cdot m\cdot \log n}{N_{i-1}}\right )$, and that $|E(H)|\leq O(|E(\hat G)|)\leq O\left(\frac{\hat k\cdot m\cdot \log n}{N_{i-1}}\right )$ must hold.

\paragraph{A vertex-cut in $H$.}
As our next step, we compute a minimum $s$-$t$ vertex-cut $(\hat L,\hat S,\hat R)$ in $H$ using the algorithm from \Cref{cor: min_vertex_cut}, in time $O\left (|E(H)|^{1+o(1)}\cdot \log \hat k\right )\leq O\left(\frac{\hat k\cdot m^{1+o(1)}}{N_{i-1}}\right )$. In the following two simple observations we establish some basic properties of the cut $(\hat L,\hat S,\hat R)$.

\begin{observation}\label{obs: cut value}
$\hat w(\hat S)\leq 2\hat k\eta_i$.
\end{observation}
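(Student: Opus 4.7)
The plan is to exhibit an explicit $s$-$t$ vertex-cut in $H$ of value at most $2\hat{k}\eta_i$, and then invoke the minimality of $(\hat L,\hat S,\hat R)$. The natural choice is to take the set $\hat T'$ of pseudoterminals as the separator. By construction of $H$, the source vertex $s$ has no incoming edges and its only outgoing edges are to the pseudoterminals in $\hat T'$; every edge leaving $s$ thus has its other endpoint in $\hat T'$. Consequently, if we remove $\hat T'$ from $H$, the vertex $s$ becomes disconnected from all other vertices, in particular from $t$, so the tripartition $(\{s\},\, \hat T',\, V(H)\setminus(\{s\}\cup\hat T'))$ is a valid $s$-$t$ vertex-cut in $H$ (provided $t\neq s$, which holds since $t\in V(\hat G)$).

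The value of this cut under the weight function $\hat w$ is $\hat w(\hat T') = 2\hat k \cdot |\hat T'| = 2\hat k \cdot |\hat T|$, because every pseudoterminal has weight $2\hat k$ by definition. Since $\hat T = B'\cap A_B \subseteq B'$, and since $B'$ is a level-$i$ batch, we have $|B'|\leq \eta_i$ by definition of $\eta_i$ as the maximum cardinality of a level-$i$ batch. Therefore $|\hat T|\leq \eta_i$, and the value of our explicit cut is at most $2\hat k \eta_i$.

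Since $(\hat L,\hat S,\hat R)$ is a \emph{minimum} $s$-$t$ vertex-cut in $H$, its value $\hat w(\hat S)$ is at most the value of any other $s$-$t$ vertex-cut in $H$. In particular $\hat w(\hat S) \leq 2\hat k \eta_i$, as required. There is no real obstacle here — the argument is just that cutting out the pseudoterminals is an obvious $s$-$t$ separator, and its weight is controlled by the batch size bound $\eta_i$.
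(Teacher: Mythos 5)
Your proof is correct and follows essentially the same route as the paper: both exhibit the explicit cut with $L^*=\{s\}$ and $S^*=\hat T'$, note that all edges leaving $s$ end in $\hat T'$, bound its value by $2\hat k\,|\hat T|\leq 2\hat k\,\eta_i$ via $\hat T\subseteq B'$, and conclude by minimality of $(\hat L,\hat S,\hat R)$.
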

\begin{proof}
	It is enough to show that there is an $s$-$t$ vertex-cut $(L^*,S^*,R^*)$ in $H$ of value $\hat w(S^*)\leq 2\hat k\cdot \eta_i$. We define the vertex-cut 
	$(L^*,S^*,R^*)$ as follows: $L^*=\set{s}$; $S^*=\hat T'$; and $R^*=\set{t}\cup (R\cap V(H))$.
Clearly, $s\in L^*$ and $t\in R^*$ holds. Moreover, since all edges leaving $s$ in $H$ connect it to vertices of $\hat T'$, it is immediate to verify that $(L^*,S^*,R^*)$ is a valid $s$-$t$ cut in $H$. Finally:
	
	\[\hat w(\hat S)\leq \hat w(\hat T')\leq 2\hat k\cdot\eta_i.
	\]

\end{proof}
\begin{observation}\label{obs: terminal in L}
	Assume that Condition (C) holds, and let $v\in L\cap B'$ be any vertex. Then the copy $v'$ of $v$ must belong to $\hat T'$, and moreover, it must lie in $\hat L$.
\end{observation}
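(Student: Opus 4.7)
The proof will proceed in two parts, matching the two assertions of the observation.

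\paragraph{Part 1 ($v'\in\hat T'$).} This is a direct bookkeeping argument. Recall that $\hat T=B'\cap A_B$ and $\hat T'=\{u'\mid u\in \hat T\}$. Since $v\in L\cap B'$ and Condition (C) provides $L\subseteq A_B$, we get $v\in A_B\cap B'=\hat T$, so the copy $v'$ was added to $\hat T'$ during the construction of $H$.

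\paragraph{Part 2 ($v'\in\hat L$).} By construction of $H$, the edge $(s,v')$ is present. Since $s\in\hat L$ and $(\hat L,\hat S,\hat R)$ is a valid vertex-cut, we cannot have $v'\in\hat R$, so it suffices to rule out $v'\in\hat S$. Assume, for contradiction, that $v'\in\hat S$. The plan is to construct an $s$-$t$ vertex-cut in $H$ whose separator is strictly lighter than $\hat S$, contradicting the minimality of $(\hat L,\hat S,\hat R)$. Specifically, set
\[
\hat L'\;=\;\hat L\cup\{v'\}\cup (L\cap \hat R),\qquad \hat S'\;=\;N^+_H(\hat L')\setminus \hat L',\qquad \hat R'\;=\;V(H)\setminus(\hat L'\cup \hat S').
\]
The key calculation is to upper-bound $\hat w(\hat S')$. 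For the original cut we have $N^+_H(\hat L)\subseteq \hat L\cup\hat S$. For any $u\in L$, since $L\subseteq A_B$ gives $u\in A_B$, the construction of $\hat G=G^{|A_B}$ guarantees that no edge $(u,t)$ exists in $\hat G$ (hence in $H$), so $N^+_H(u)\subseteq N^+_G(u)$, and the fact that $v\in L$ together with $(L,S,R)$ being a cut in $G$ yields $N^+_G(u)\subseteq L\cup S$. The same reasoning applied to the pseudoterminal gives $N^+_H(v')=N^+_G(v)\subseteq L\cup S$. Combining these three bounds,
\[
N^+_H(\hat L')\;\subseteq\;\hat L\cup\hat S\cup L\cup S,
\]
and after removing $\hat L'$ we obtain $\hat S'\subseteq (\hat S\setminus\{v'\})\cup S$ (using $v'\notin V(G)$, so $v'\notin L\cup S$). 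Because vertices of $V(G)$ have weight $1$ in $H$ while $\hat w(v')=2\hat k$, and because Condition (C) supplies $k\le \hat k$,
\[
\hat w(\hat S')\;\le\;\hat w(\hat S)-2\hat k+|S|\;=\;\hat w(\hat S)-2\hat k+k\;\le\;\hat w(\hat S)-\hat k\;<\;\hat w(\hat S).
\]
Finally, to confirm that $(\hat L',\hat S',\hat R')$ is a valid $s$-$t$ vertex-cut, observe that $s\in\hat L\subseteq\hat L'$, and that $t\notin L$, $t\ne v'$, $t\notin \hat L$ give $t\notin\hat L'$, while $t\notin \hat S$ and $t\notin S$ (since $t$ is a new vertex not in $V(G)$) give $t\notin\hat S'$; hence $t\in\hat R'$, so the cut is proper. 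This contradicts the minimality of $(\hat L,\hat S,\hat R)$, completing the proof.

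\paragraph{Main obstacle.} The delicate step is the set-theoretic containment $\hat S'\subseteq (\hat S\setminus\{v'\})\cup S$; in particular, one must track what happens to vertices of $L\cap\hat S$ (which get absorbed into $\hat S\setminus\{v'\}$) and ensure that enlarging $\hat L$ by $L\cap\hat R$ does not introduce any ``new'' separator vertices other than those in $S$. The rest of the argument is a careful accounting using $\hat w(v')=2\hat k$ and $|S|=k\le\hat k$, where the two-to-one weight ratio between pseudoterminals and ordinary vertices is precisely what forces the strict inequality.
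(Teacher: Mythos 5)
Your proof is correct and follows essentially the same exchange argument as the paper: assume $v'\in\hat S$, build a cheaper $s$-$t$ cut by moving $v'$ (and the relevant part of $L$) to the left side, and charge the new separator vertices to $S$, using $\hat w(v')=2\hat k$ versus $|S|=k\le\hat k$ to get a strict improvement. The only cosmetic difference is that you take the tripartition induced by $\hat L\cup\{v'\}\cup(L\cap\hat R)$, while the paper explicitly moves all of $L$ into $\hat L$ and all of $S$ into $\hat S$; both verifications hinge on the same containment $N^+_H(u)\subseteq L\cup S$ for $u\in L\cup\{v'\}$.
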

\begin{proof}
	Assume that Condition (C) holds and let $v\in L\cap B'$ be any vertex. Since $L\subseteq A_B$, $v\in A_B$ must hold, so $v\in \hat T$ and $v'\in \hat T'$ must hold. From the construction of the graph $H$, $v'\in V(H)$ and $(s,v')\in E(H)$. Since $(\hat L,\hat S,\hat R)$ is an $s$-$t$ cut in $H$, we get that $v'\in \hat L\cup \hat S$ must hold. Assume for contradiction that $v'\in \hat S$.
	
	Recall that $(L,S,R)$ is the distinguished minimum vertex-cut in $G$ with $|S|\leq k$, and it is a vertex-cut that is induced by $L$ in $G$. Additionally, since $L\subseteq A_B$ and $S=N^+_G(L)$, we get that $L\cup S\subseteq V(\hat G)\subseteq V(H)$.
	Consider the tripartition $(\hat L',\hat S',\hat R')$ of vertices of $H$, that is obtained from $(\hat L,\hat S,\hat R)$ by first moving $v'$ and all the vertices of $L$ to $\hat L$, and then moving all vertices of $S$ to $\hat S$.  Note that, in particular, vertex $v'$, whose weight in $H$ is $2\hat k\geq 2k$, lies in $\hat S\setminus \hat S'$, while the only vertices that lie in $\hat S'\setminus\hat S$ are the vertices of $S$, whose total weight in $H$ is bounded by $|S|\leq k$. Therefore, $\hat w(\hat S')<w(\hat S)$. We claim that $(\hat L',\hat S',\hat R')$ is a valid $s$-$t$ vertex-cut in $H$, contradicting the fact that $(\hat L,\hat S,\hat R)$ is a minimum $s$-$t$ vertex-cut in $H$.
	
	Indeed, it is immediate to verify that $s\in \hat L'$ and $t\in \hat R'$ hold. It is now enough to show that no edge may connect a vertex of $\hat L'$ to a vertex of $\hat R'$ in $H$. Assume for contradiction that there is an edge $(a,b)\in V(H)$ with $a\in \hat L'$ and $b\in \hat R'$, and notice that $b\in \hat R$ must hold, since $\hat R'\subseteq \hat R$. It is impossible that $a\in \hat L$ because $(\hat L,\hat S,\hat R)$ is a valid $s$-$t$ vertex-cut in $H$. Therefore, $a\in L$ or $a=v'$ must hold. In the former case, since $(L,S,R)$ is a valid vertex-cut in $G$, and since, from Condition (C), $a\in A_B$ must hold, $N^+_H(a)\subseteq N^+_G(a)\subseteq L\cup S$. 
	In the latter case, $N^+_H(v')\subseteq N^+_G(v)\subseteq L\cup S$.
	Therefore, $b\in L\cup S$ must hold. Since $L\subseteq \hat L'$ and $S\subseteq \hat S'$, this contradicts the fact that $b\in \hat R'$. We conclude that $(\hat L',\hat S',\hat R')$ is a valid $s$-$t$ vertex-cut in $H$, whose value is strictly lower than that of $(\hat L,\hat S,\hat R)$, a contradiction.
\end{proof}

Our next step is to slightly modify the $s$-$t$ vertex-cut $(\hat L,\hat S,\hat R)$ in $H$, to obtain a new $s$-$t$ vertex-cut $(\hat L'',\hat S'',\hat R'')$ in $H$, as follows. 
We start with $(\hat L'',\hat S'',\hat R'')=(\hat L,\hat S,\hat R)$, and then we consider every vertex $v\in \hat T$ one by one. If $v'\in \hat L''$ but $v\not\in \hat L''$, then we move $v$ to $\hat L''$. Notice that this move cannot increase the weight of the set $\hat S''$. Moreover, since $N^+_H(v)=N^+_H(v')$ by the construction of the graph $H$, we get that $N^+_H(v)\subseteq \hat L''\cup \hat S''$. Therefore, $(\hat L'',\hat S'',\hat R'')$ remains a valid $s$-$t$ vertex-cut in $H$, with $\hat w(\hat S'')\leq 3\hat k\eta_i$. Recall that, for every pseudoterminal $v'\in \hat T'$, edge $(s,v')$ lies in $H$, so $v'\in \hat L\cup \hat S$ must hold.
For convenience, abusing the notation, we denote $(\hat L'',\hat S'',\hat R'')$ by $(\hat L,\hat S,\hat R)$ in the remainder of the proof. 
We summarize the properties of the cut  $(\hat L,\hat S,\hat R)$ that we have established so far in the following observation.

\begin{observation}\label{obs: props of cut in H}
The tripartition $(\hat L,\hat S,\hat R)$ of $V(H)$ is a minimum $s$-$t$ vertex-cut in $H$ with $\hat w(\hat S)\leq 2\hat k\eta_i$. 
 Moreover, for every vertex $v\in \hat T$, $v'\in \hat L\cup \hat S$ holds, and, if $v'\in \hat L$, then $v\in \hat L$ as well.
 Lastly, if Condition (C) holds, then for every vertex $v\in L\cap B'$, $v\in \hat L$ must hold.
\end{observation}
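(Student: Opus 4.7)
The plan is to verify the three stated properties in order, relying on the previous observations together with a careful examination of the post-processing step applied to $(\hat L'',\hat S'',\hat R'')$. Since the observation mostly summarizes conclusions already established, the proof is primarily a consistency check: we need to confirm that the modification did not destroy the properties we obtained before and that, in combination with \Cref{obs: cut value} and \Cref{obs: terminal in L}, all three bullets hold simultaneously.

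First I would argue that $(\hat L,\hat S,\hat R)$ (in its post-processed form) is a minimum $s$-$t$ vertex-cut in $H$ with $\hat w(\hat S)\leq 2\hat k\eta_i$. \Cref{obs: cut value} established this bound for the minimum $s$-$t$ cut before the modification. The modification only moves certain terminals $v\in \hat T$ from $\hat R''$ into $\hat L''$ whenever their copy $v'$ is already in $\hat L''$; such a move can only decrease (or keep unchanged) the weight of $\hat S''$, and, as explicitly observed in the text, keeps $\hat S''$ a valid separator because $N^+_H(v)=N^+_H(v')\subseteq \hat L''\cup \hat S''$. Hence, after the modification, the tripartition remains an $s$-$t$ vertex-cut of weight at most $2\hat k\eta_i$, and since it can only be of smaller or equal weight than the original minimum, it is itself a minimum $s$-$t$ vertex-cut in $H$.

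Next I would verify the structural properties concerning the pseudoterminals. For every vertex $v\in \hat T$, the edge $(s,v')$ lies in $H$ by construction, so $v'\notin \hat R$; thus $v'\in\hat L\cup \hat S$. If additionally $v'\in \hat L$, then the modification step explicitly inspected this case and placed $v$ in $\hat L$ whenever $v\notin \hat L$ already. Hence the property $v'\in \hat L\Rightarrow v\in \hat L$ holds for every $v\in \hat T$.

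Finally I would handle the case Condition~(C) holds and $v\in L\cap B'$. The key observation is that this case has already been analyzed: \Cref{obs: terminal in L} (which was proved for the pre-modification cut) guarantees that $v'\in \hat T'$ and $v'\in \hat L$. The modification only adds vertices to $\hat L''$, so $v'\in \hat L$ continues to hold in the post-modification cut. By the property established in the previous paragraph, it follows that $v\in\hat L$. The main (mild) obstacle is simply to be careful that the modification preserves rather than destroys the conclusions of \Cref{obs: cut value} and \Cref{obs: terminal in L}; once this is checked, the three items follow directly.
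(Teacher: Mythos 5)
Your proof is correct and follows essentially the same approach as the paper: the observation simply collects the facts established in \Cref{obs: cut value}, \Cref{obs: terminal in L}, and the discussion of the modification step, and you verify that the modification preserves each of them. One minor imprecision: the modification may move a terminal $v\in\hat T$ from $\hat S''$ (not only from $\hat R''$) into $\hat L''$, but this only reinforces your point that the weight of $\hat S''$ cannot increase, so the argument stands.
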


(The last assertion follows immediately from \Cref{obs: terminal in L}).

\paragraph{A vertex-cut in $\hat G$.}
Next, we define a tripartition $(L'',S'',R'')$ of vertices of $\hat G$ as follows: $L''=\hat L\cap V(\hat G)$; $S''=\hat S\cap V(\hat G)$; and $R''=V(\hat G)\setminus\left(L''\cup S''\right )=\hat R\cap V(\hat G)$. Since $(\hat L,\hat S,\hat R)$ is a valid vertex-cut in $H$, it is easy to verify that no edge in $\hat G$ may connect a vertex of $L''$ to a vertex of $R''$. 
In the remainder of the proof, we denote by $\hat B=\hat T\cap L''$.
We will use the following central claim to establish the properties of the tripartition $(L'',S'',R'')$  of $V(\hat G)$.

\begin{claim}\label{claim: properties of tripartition in hat G}
	 Set $L''$ may not contain any vertices of $T''$. 
	Moreover if Condition (C) holds, then $(L'',S'',R'')$ is a valid vertex-cut in $\hat G$ with $|S''|\leq 3\hat k\eta_i$ and $B'\cap L\subseteq L''$. Additionally, in this case $\hat B\neq \emptyset$ must hold, and $(L'',S'',R'')$ is a minimum vertex-cut in $\hat G$ separating $\hat B$ from $t$.
\end{claim}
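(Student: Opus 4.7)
\textbf{Proof plan for \Cref{claim: properties of tripartition in hat G}.}
The plan is to prove the four assertions of the claim in order, three of which follow almost directly from the setup and from \Cref{obs: props of cut in H}, while the last assertion, the minimality of $(L'',S'',R'')$ among cuts separating $\hat B$ from $t$ in $\hat G$, will require a careful exchange argument against a hypothetical better cut.

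First I would dispose of $L''\cap T''=\emptyset$. By \Cref{obs: pseudoterminals connect to t}, for every $v\in T''\cap V(\hat G)$ the edge $(v,t)$ is present in $\hat G\subseteq H$. Since $t\in \hat R$, no such $v$ can lie in $\hat L$, hence not in $L''$. Next, assuming Condition (C), I would verify that $(L'',S'',R'')$ is a valid cut in $\hat G$: no $\hat G$-edge can cross it because it would also be an edge of $H$ crossing the cut $(\hat L,\hat S,\hat R)$; $R''\neq\emptyset$ since $t\in \hat R\cap V(\hat G)$; and $L''\neq\emptyset$ follows from $B'\cap L\subseteq L''$, which I would obtain by combining $L\subseteq A_B\subseteq V(\hat G)$ with the last assertion of \Cref{obs: props of cut in H}. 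The bound $|S''|\leq 3\hat k\eta_i$ follows from $\hat w(\hat S)\leq 2\hat k\eta_i$ since $S''\subseteq \hat S$ and every vertex of $V(\hat G)$ has unit weight in $H$. For $\hat B\neq\emptyset$: any $v\in L\cap B'$ belongs to $\hat T=B'\cap A_B$ (since $L\subseteq A_B$) and to $\hat L$ (by \Cref{obs: props of cut in H}), hence to $\hat B=\hat T\cap L''$.

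The hard part will be proving that $(L'',S'',R'')$ is a minimum vertex-cut in $\hat G$ separating $\hat B$ from $t$. The crucial preliminary fact I will establish is that for every $v\in\hat T$, $v\in\hat L\iff v'\in\hat L$. One direction is already given by \Cref{obs: props of cut in H}. For the converse, if $v\in \hat L$ but $v'\in \hat S$, then moving $v'$ into $\hat L$ would keep the cut valid (its out-neighbors $N^+_G(v)=N^+_{\hat G}(v)$ already lie in $\hat L\cup\hat S$ because $v\in\hat L$) and would strictly decrease $\hat w(\hat S)$ by $2\hat k$, contradicting minimality of $(\hat L,\hat S,\hat R)$. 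From this equivalence I would deduce the exact weight identity $\hat w(\hat S)=|S''|+2\hat k\cdot|\hat T\setminus \hat B|$.

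Given this identity, I will argue by contradiction: suppose a cut $(L^*,S^*,R^*)$ in $\hat G$ separates $\hat B$ from $t$ with $|S^*|<|S''|$. I will lift it to an $s$-$t$ cut $(\hat L^*,\hat S^*,\hat R^*)$ in $H$ by setting
\[\hat L^*=L^*\cup\{s\}\cup\{v':v\in\hat T\cap L^*\},\quad \hat S^*=S^*\cup\{v':v\in\hat T\setminus L^*\},\quad \hat R^*=R^*.\]
Validity would be checked by going through each edge type: edges from $s$ lead into $\hat T'\subseteq \hat L^*\cup \hat S^*$; edges out of $L^*$ in $H$ are exactly those in $\hat G$ (since $L^*\subseteq V(\hat G)$ and $\hat T'$ has no incoming edges from $V(\hat G)$) and thus stay in $L^*\cup S^*\subseteq\hat L^*\cup\hat S^*$; and edges out of $v'\in\hat T'\cap\hat L^*$ (so $v\in L^*$) go to $N^+_G(v)=N^+_{\hat G}(v)\subseteq L^*\cup S^*$. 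Finally, using $\hat B\subseteq L^*$ so that $|\hat T\setminus L^*|\leq|\hat T\setminus\hat B|$, the weight of $\hat S^*$ satisfies
\[\hat w(\hat S^*)=|S^*|+2\hat k\cdot|\hat T\setminus L^*|<|S''|+2\hat k\cdot|\hat T\setminus \hat B|=\hat w(\hat S),\]
contradicting the minimality of $(\hat L,\hat S,\hat R)$. The main obstacle throughout is keeping the accounting of pseudoterminals precise, which the $v\in\hat L\iff v'\in\hat L$ equivalence makes clean.
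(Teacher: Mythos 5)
Your proof is correct and follows essentially the same route as the paper's: the easy assertions are handled identically, and the minimality of $(L'',S'',R'')$ is established by lifting a hypothetical cheaper $\hat B$-$t$ vertex-cut of $\hat G$ to an $s$-$t$ vertex-cut of $H$ of weight strictly below $\hat w(\hat S)$, contradicting the minimality of $(\hat L,\hat S,\hat R)$. The only difference is bookkeeping: you first prove the equivalence $v\in\hat L\iff v'\in\hat L$ (via an extra exchange step) and assign pseudoterminals in the lifted cut according to the new cut, whereas the paper assigns them according to the split $\hat T'_1=\hat T'\cap\hat L$, $\hat T'_2=\hat T'\cap\hat S$ of the optimal cut, so the pseudoterminal contribution cancels on both sides and that equivalence is never needed.
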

\begin{proof}
	Consider any vertex $v\in T''$ that lies in $\hat G$.
	From \Cref{obs: pseudoterminals connect to t}, edge $(v,t)$ lies in $\hat G$, and hence in $H$. Since $(\hat L,\hat S,\hat R)$ of $V(H)$ is a valid $s$-$t$ vertex-cut in $H$, vertex $v$ may not lie in $\hat L$, and so $v\not\in L''$. Therefore, $L''$ may not contain any vertices of $T''$.
	
	Next, we assume that Condition (C) holds. We have established already that no edge may connect a vertex of $L''$ to a vertex of $R''$ in $\hat G$. Since, from \Cref{obs: props of cut in H}, $(\hat L,\hat S,\hat R)$ is a valid $s$-$t$ vertex-cut in $H$, $t\in \hat R$ and hence $t\in R''$ holds. Since we assumed that Condition (C) holds, $B'\cap L\neq \emptyset$, and, from \Cref{obs: props of cut in H}, $B'\cap L\subseteq \hat L$ must hold. Therefore, $B'\cap L\subseteq L''$, and, since $B'\cap L=B'\cap A_B=\hat T$, we get that $B'\cap L\subseteq \hat B$, and $\hat B\neq \emptyset$. It is also immediate to verify that $|S''|\leq \hat w(\hat S)\leq 2\hat k\eta_i$, and, from our dicsussion so far, $(L'',S'',R'')$ is a valid vertex-cut in $\hat G$ separating $\hat B$ from $t$. It now only remains to show that $(L'',S'',R'')$ is a minimum vertex-cut in $\hat G$ separating $\hat B$ from $t$. The following observation will then complete the proof of the claim.
	
\begin{observation}\label{obs: minimality of the cut}
	Assume that Condition (C) holds and let $(\tilde L,\tilde S,\tilde R)$ be any vertex-cut in $\hat G$ separating $\hat B$ from $t$. Then $|\tilde S|\geq |S''|$.
\end{observation}
\begin{proof}
	Assume that  Condition (C) holds, and recall that, from \Cref{obs: props of cut in H}, for every vertex $v\in \hat T$, $v'\in \hat L\cup \hat S$ holds. We partition the set $\hat T'$ of pseudoterminals into two subsets, $\hat T'_1=\hat T'\cap \hat L$ and $\hat T'_2=\hat T'\cap \hat S$. Observe that $\hat w(\hat S)=|S''|+2\hat k\cdot |\hat T'_2|$, and that, from \Cref{obs: props of cut in H}, for every pseudoterminal $v'\in \hat T'_1$, the corresponding terminal $v\in \hat T=B'\cap A_B$ must lie in $\hat L$, and hence in $L''$; in particular, $v\in \hat B$ must hold.
	
	Consider now any vertex-cut $(\tilde L,\tilde S,\tilde R)$  separating $\hat B$ from $t$  in $\hat G$. Assume for contradiction that $|\tilde S|< |S''|$. We construct a tripartition $(\tilde L',\tilde S',\tilde R')$ of $V(H)$, by letting $\tilde L'=\tilde L\cup\set{s}\cup \hat T'_1$, $\tilde S'=\tilde S\cup \hat T'_2$ and $\tilde R'=\tilde R$. Note that $\hat w(\tilde S')=|\tilde S|+2\hat k\cdot |\hat T'_2|<|S''|+2\hat k\cdot |\hat T'_2|=\hat w(\hat S)$. Next, we show that $(\tilde L',\tilde S',\tilde R')$ is a valid $s$-$t$ vertex-cut in $H$,  contradicting the fact that, from \Cref{obs: props of cut in H}, $(\hat L,\hat S,\hat R)$ is a minimum $s$-$t$ vertex-cut in $H$.
	
	From our construction it is immediate to verify that $s\in \tilde L'$ and $t\in \tilde R'$. It is now enough to show that no edge in $H$ may connect a vertex of $\tilde L'$ to a vertex of $\tilde R'$. Indeed, assume otherwise, and let $e=(x,y)$ be an edge of $E(H)$ with $x\in \tilde L'$ and $y\in \tilde R'$, so, in particular, $y\in \tilde R$. Since $(\tilde L,\tilde S,\tilde R)$  is a valid vertex-cut in $\hat G$, no edge in $\hat G$ (and hence in $H$) may connect a vertex of $\tilde L$ to a vertex of $\tilde R$, so $x\not \in \tilde L$ must hold. Note that it is impossible that $x=s$, since $N^+_H(s)=\hat T'$ and $\hat T'\subseteq \tilde L'\cup \tilde S'$. Therefore, $x\in \hat T'$ must hold. We assume that $x$ is a copy of the terminal $v\in \hat T$, so $x=v'$. Since $v'\in \hat T'_1$, from \Cref{obs: props of cut in H}, $v\in \hat L$ must hold, so, in particular, $v\in L''$ and $v\in \hat B$. But then $v\in \tilde L$ must hold and $N^+_H(v)=N_{G}^+(v)\subseteq \tilde L\cup \tilde S$. Since $N^+_H(v')=N^+_H(v)$, we get that $N^+_H(v')\subseteq \tilde L\cup \tilde S$, contradicting the fact that an edge $(v',y)$ with $y\in \tilde R$ lies in $H$.
	We conclude that $(\tilde L',\tilde S',\tilde R')$ is a valid $s$-$t$ cut in $H$,  contradicting the fact that, from \Cref{obs: props of cut in H}, $(\hat L,\hat S,\hat R)$ of $V(H)$ is a minimum $s$-$t$ vertex-cut in $H$.
\end{proof}
\end{proof}

If $\hat B=\emptyset$, or if $(L'',S'',R'')$ is not a valid vertex-cut in $\hat G$ separating  $\hat B$ from $t$, or if $|S''|>2\hat k\eta_i$, then we terminate the algorithm and return ``FAIL''. Notice that, these conditions can be checked in time $O(|E(H)|)$, and, if any of them holds, then, from \Cref{claim: properties of tripartition in hat G}, Condition (C) does not hold. Therefore, we assume from now on that $\hat B\neq\emptyset$, and that $(L'',S'',R'')$ is a valid vertex-cut in $\hat G$ separating  $\hat B$ from $t$ with $|S''|\leq 3\hat k\eta_i$.

\paragraph{Vertex Set $A_{B'}$.}
We let $A_{B'}$ be the set of all vertices $v\in V(\hat G)$, such that there is a path in $\hat G$ connecting some vertex of $\hat B$ to $v$, that avoids the vertices of $S''$. Note that, in particular, $\hat B\subseteq A_{B'}$, and so $A_{B'}\neq \emptyset$. In the following claim, we establish some basic properties of the vertex set $A_{B'}$.

\begin{claim}\label{claim: properties of AB'}
Set	$A_{B'}$ may not contain a vertex of $T''$, and $B'\cap A_{B'}\neq\emptyset$. Moreover, if Condition (C) holds, then $\vol^+_G(A_{B'}) \leq \frac{1000\hat k\cdot m\cdot \log n}{N_{i}}$.
\end{claim}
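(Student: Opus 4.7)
The plan is to first establish the structural claims (1) and (2), and then derive (3) by showing that a violation would produce a bad tripartition of $V(G)$ for level $i$, contradicting the assumption that Condition (C) holds (which includes $\neg\hat\event$).

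The key structural fact I would prove first is the containment $A_{B'} \subseteq A_B$. Recall that $A_{B'}$ consists of vertices of $\hat G$ reachable from $\hat B$ while avoiding $S''$, and $(L'',S'',R'')$ is a valid vertex-cut in $\hat G$ separating $\hat B$ from $t$, so any vertex reachable from $\hat B$ in $\hat G\setminus S''$ must lie in $L''$. Now, for any vertex $v \in N^+_G(A_B)\setminus A_B$ that appears in $\hat G$, the construction of $\hat G = G^{|A_B}$ gives $v$ exactly one outgoing edge in $\hat G$, namely to $t$. Since $t \in R''$, validity of the cut forces $v \notin L''$; similarly $t \notin L''$. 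Hence $A_{B'} \subseteq A_B$. Because $A_B \subseteq V(G)\setminus T''$ by assumption, this immediately yields claim (1). For claim (2), $\hat B \subseteq A_{B'}$ by the definition of $A_{B'}$, $\hat B \subseteq \hat T \subseteq B'$, and $\hat B \neq \emptyset$ (from the case assumption made right before this subsection, where we return $\mathsf{FAIL}$ if $\hat B = \emptyset$), so $\hat B \subseteq B' \cap A_{B'}$.

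For (3), I assume Condition (C) and $\vol^+_G(A_{B'}) > \frac{1000\hat k\cdot m\cdot \log n}{N_i}$, aiming to contradict $\neg\hat\event$. Define $L' = A_{B'}$, $S' = N^+_G(L')$, and $R' = V(G)\setminus(L'\cup S')$, and verify that $(L',S',R')$ is a bad tripartition for level $i$ in the sense of \Cref{def: bad tripartition}. The inclusion $S' \subseteq S''$ follows from $A_{B'} \subseteq A_B$: every out-edge of $v \in L'$ in $G$ is also present in $\hat G$, and by the reachability definition of $A_{B'}$ combined with the validity of $(L'',S'',R'')$, each such out-edge either stays inside $L'$ or lands in $S''$. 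Combined with $|S''| \leq 3\hat k\,\eta_i$ from \Cref{claim: properties of tripartition in hat G}, this gives property (B1). Property (B2) follows from $L' \subseteq A_B \subseteq V(G)\setminus T'' \subseteq V(G)\setminus T'$. Property (B3) is the contradiction hypothesis.

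The main remaining, and most delicate, step is verifying property (B4): exhibiting a small collection $\hat B' \subseteq T\setminus S'$ of terminals for which $L'$ is exactly the set of vertices reachable from $\hat B'$ in $G\setminus S'$. I plan to use $\hat B$ itself as the witness: $\hat B \subseteq \hat T \subseteq B' \subseteq T$, $|\hat B| \leq |B'| \leq \eta_i$, and $\hat B \cap S' = \emptyset$ since $\hat B \subseteq L'$ and $L' \cap N^+_G(L') = \emptyset$. One inclusion is straightforward: any $\hat G\setminus S''$ path from $\hat B$ to a vertex $v \in L'$ stays entirely in $A_{B'} \subseteq A_B$, so its edges coincide with $G$-edges, and $S'\subseteq S''$ ensures the path avoids $S'$ in $G$. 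The reverse inclusion I would handle by induction along a $G$-path from $\hat B$ to $v$ avoiding $S'$: at each step the current vertex lies in $A_{B'} \subseteq A_B$, hence the edge in question lies in $\hat G$, and the next vertex is either in $L'$ (induction continues) or in $N^+_G(L') = S'$ (excluded). With (B4) verified, $(L',S',R')$ is a bad tripartition for level $i$, contradicting $\neg\hat\event$, which is guaranteed by $\event^*$ under Condition (C). I expect the main obstacle to be in this final paragraph, namely carefully reconciling reachability in the modified graph $\hat G$ with reachability in $G$; everything else reduces to routine bookkeeping.
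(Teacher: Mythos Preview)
Your proof is correct and follows the same overall strategy as the paper: establish $A_{B'}\subseteq L''$ (you route this through the containment $A_{B'}\subseteq A_B$, which amounts to the same thing), use $\hat B\subseteq A_{B'}\cap B'$ for non-emptiness, and then argue that a volume violation produces a bad tripartition for level $i$, contradicting $\neg\hat\event$. The one substantive difference is the choice of middle set in the bad tripartition: the paper takes $(A_{B'},\,S'',\,V(G)\setminus(A_{B'}\cup S''))$, whereas you take $(A_{B'},\,N^+_G(A_{B'}),\,\ldots)$ with the tighter separator $S'=N^+_G(A_{B'})\subseteq S''$. Your choice requires the extra step of verifying $S'\subseteq S''$ for (B1), but in exchange makes the reverse inclusion in (B4) a clean one-step induction. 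The paper's choice makes (B1) immediate, but its justification of (B4) (``follows from the definition of $A_{B'}$'') glosses over precisely the $\hat G$-versus-$G$ reachability subtlety that you correctly flag as the delicate point and handle explicitly. Both routes are valid; yours is somewhat more self-contained.
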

\begin{proof}
	It is immediate to verify that $A_{B'}\subseteq L''$ must hold, since the set $S''$ of vertices separates $\hat B$ from $R''$ in $\hat G$. Since, from \Cref{claim: properties of tripartition in hat G}, set $L''$ may not contain any vertices of $T''$, the same holds for $A_{B'}$. Since we ensured that $\hat B\neq \emptyset$, and since $\hat B\subseteq B'$ and $\hat B\subseteq A_{B'}$, we get that $B'\cap A_{B'}\neq\emptyset$. 
	
	Assume now that $\vol^+_G(A_{B'}) > \frac{1000\hat k\cdot m\cdot \log n}{N_{i}}$. We claim that, in this case, $(A_{B'}, S'',V(G)\setminus(A_{B'}\cup S''))$ is a bad tripartition of $V(G)$ for level $i$ (see \Cref{def: bad tripartition}). Indeed, our algorithm ensures that $|S''|\leq 3\hat k\cdot \eta_i$, and we have already shown that $A_{B'}$ may not contain vertices of $T''\subseteq T'$, establishing properties \ref{prop: bad cuts: small separator} and \ref{prop: bad cuts: missed cut}. Property \ref{prop: bad cuts: large volume} follows from from our assumption that $\vol^+_G(A_{B'}) > \frac{1000\hat k\cdot m\cdot \log n}{N_{i}}$, and Property \ref{prop: bad cuts: terminals} follows from the definition of the vertex set $A_{B'}$. Therefore, if Condition (C) holds, then $\vol^+_G(A_{B'}) \leq \frac{1000\hat k\cdot m\cdot \log n}{N_{i}}$ must hold.
\end{proof}

If $\vol^+_G(A_{B'}) > \frac{1000\hat k\cdot m\cdot \log n}{N_{i}}$, then we return ``FAIL''. From \Cref{claim: properties of AB'}, Condition (C) does not hold in this case. Otherwise, we return the vertex set $A_{B'}$. 
In the next claim we show that, if Condition (C) holds, then $L\subseteq A_{B'}$ must hold.

\begin{claim}\label{claim: establising the cut property}
	Assume that Condition (C) holds. Then $L\subseteq A_{B'}$.
\end{claim}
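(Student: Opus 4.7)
The plan is to argue by contradiction: assuming $L \not\subseteq A_{B'}$, I will exhibit a vertex-cut in $\hat G$ separating $\hat B$ from $t$ whose separator is strictly smaller than $|S''|$, contradicting the minimality of $(L'',S'',R'')$ established in \Cref{claim: properties of tripartition in hat G}. The argument rests on combining two distinct minimality hypotheses --- the minimality of $|L|$ among all global minimum vertex-cuts of $G$ (by the choice of the distinguished cut), and the minimality of $(L'',S'',R'')$ in $\hat G$ --- via the standard submodularity inequality for the vertex-out-separator function.

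I would first collect three preliminary facts. Since $L \subseteq A_B$ by Condition (C) and $\hat G = G^{|A_B}$ retains every edge of $G$ leaving $A_B$, we have $N^+_{\hat G}(L) = N^+_G(L) = S$, so $|N^+_{\hat G}(L) \setminus L| = k$. Next, $A_{B'} \subseteq A_B$: any vertex $u \in N^+_G(A_B) \setminus A_B$ carries an edge to $t$ in $\hat G$, so if such a $u$ lay in $A_{B'} \subseteq L''$ then $t$ would belong to $N^+_{\hat G}(A_{B'}) \setminus A_{B'}$, forcing $t \in S''$ and contradicting $t \in R''$. Consequently the tripartition induced by $A_{B'}$ in $\hat G$ is a valid vertex-cut separating $\hat B$ from $t$, and its separator $N^+_{\hat G}(A_{B'}) \setminus A_{B'}$ lies inside $S''$ (any boundary vertex outside $S''$ would be reachable from $\hat B$ in $\hat G \setminus S''$ and hence already in $A_{B'}$); the minimality of $(L'',S'',R'')$ among such cuts upgrades this containment to equality, giving $|N^+_{\hat G}(A_{B'}) \setminus A_{B'}| = |S''|$. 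I would also record that $B' \cap L \subseteq \hat B$, which follows from $B' \cap L \subseteq L''$ (\Cref{claim: properties of tripartition in hat G}) together with $L \subseteq A_B$.

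The core of the argument is the submodularity inequality
\[
	|N^+_{\hat G}(A \cup B) \setminus (A \cup B)| + |N^+_{\hat G}(A \cap B) \setminus (A \cap B)| \leq |N^+_{\hat G}(A) \setminus A| + |N^+_{\hat G}(B) \setminus B|
\]
for the out-separator function on subsets of $V(\hat G)$, which holds in any directed graph and is verified by a routine case analysis over the four regions $A \cap B$, $A \setminus B$, $B \setminus A$, and the complement of $A \cup B$. Applying it with $A = L$ and $B = A_{B'}$ and substituting the two equalities just obtained gives
\[
	|N^+_{\hat G}(L \cup A_{B'}) \setminus (L \cup A_{B'})| + |N^+_{\hat G}(L \cap A_{B'}) \setminus (L \cap A_{B'})| \leq k + |S''|.
\]

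Now assume for contradiction that $L \not\subseteq A_{B'}$, so $L \cap A_{B'} \subsetneq L$. I would check that $L \cap A_{B'}$ induces a valid vertex-cut in $G$ obeying the $|L'| \leq |R'|$ convention: the left side is nonempty since $\emptyset \neq B' \cap L \subseteq \hat B \subseteq A_{B'}$, and the right side contains $R$ because $N^+_G(L \cap A_{B'}) \subseteq L \cup S$. Since $(L,S,R)$ was chosen to minimize $|L|$ among global minimum vertex-cuts of $G$ and $|L \cap A_{B'}| < |L|$, the cut induced by $L \cap A_{B'}$ cannot itself be a global minimum cut, so its separator satisfies $|N^+_G(L \cap A_{B'}) \setminus (L \cap A_{B'})| \geq k+1$, a value that coincides with $|N^+_{\hat G}(L \cap A_{B'}) \setminus (L \cap A_{B'})|$ because $L \cap A_{B'} \subseteq A_B$. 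Substituting into the submodular inequality yields $|N^+_{\hat G}(L \cup A_{B'}) \setminus (L \cup A_{B'})| \leq |S''| - 1$. On the other hand, the tripartition induced by $L \cup A_{B'}$ is itself a valid vertex-cut in $\hat G$ separating $\hat B$ from $t$ (since $\hat B \subseteq A_{B'}$ and no vertex of $L \cup A_{B'} \subseteq A_B$ has an edge to $t$ in $\hat G$), so by the minimality of $(L'',S'',R'')$ its separator has size at least $|S''|$, the desired contradiction. The main subtlety is arranging the two minimality hypotheses so that each contributes a strict lower bound on one side of the submodularity inequality, with submodularity gluing the two strict gains together into the single strict improvement that closes the argument.
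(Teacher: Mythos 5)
Your proof is correct and is essentially the paper's argument rearranged as a proof by contradiction: both rest on the submodularity of the out-neighbourhood function in $\hat G$ (\Cref{obs:submodularity}), the minimality of $|L|$ among global minimum vertex-cuts of $G$, and the minimality of $(L'',S'',R'')$ among $\hat B$-$t$ vertex-cuts of $\hat G$. The paper derives $w(N^+_{\hat G}(L\cap A_{B'}))\leq w(S)$ directly from submodularity and then invokes the $|L|$-minimality via \Cref{obs: first vertex cut in hat G}, whereas you assume $L\not\subseteq A_{B'}$, lower-bound $|N^+_G(L\cap A_{B'})|$ by $k+1$ from the same $|L|$-minimality, and feed both minimality hypotheses into opposite sides of the submodular inequality to close the contradiction --- the underlying calculation is identical.
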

\begin{proof}
	We assume that Condition (C) holds. From \Cref{claim: properties of tripartition in hat G}, $B'\cap L\subseteq L''$. Since we have denoted by $\hat B=\hat T\cap L''=(B'\cap A_B)\cap L''$, and since, from Condition (C), $L\subseteq A_B$, we get that $\hat B$ must contain some vertex of $B'\cap L$. We let $v$ be any such vertex. 

At a high level, we start by showing that $(L,S,V(\hat G)\setminus(L\cup S))$ is a minimum $v$-$t$ vertex-cut in $\hat G$, and that it minimizes $|L|$ among all such cuts. We also show that the tripartition $(L',S',R')$ of vertices of $\hat G$ that is induced by vertex set $L'=A_{B'}$ is a minimum $\hat B$-$t$ vertex-cut in $\hat G$. We then use the submodularity of vertex-cuts in order to argue that the tripartition of $V(\hat G)$ induced by $L\cap L'$ is a valid $v$-$t$ vertex-cut in $\hat G$ of value at most $|S|$. Since 
$(L,S,V(\hat G)\setminus(L\cup S))$ is a minimum $v$-$t$ vertex-cut in $\hat G$ that minimizes $|L|$, we then conclude that $L\subseteq L'$ must hold. We now proceed with the formal proof of the claim. In the next two simple observations we establish the properties of the tripartition $(L,S,V(\hat G)\setminus(L\cup S))$ of $\hat G$, and the tripartition $(L',S',R')$ of $V(\hat G)$ induced by the vertex set $L'=A_{B'}$, respectively.
Note that, since $L\subseteq A_B$, $L\cup S\subseteq V(\hat G)$ must hold, from the definition of the graph $\hat G=G^{|A_B}$ that is derived from $G$ via the vertex set $A_B$.

\begin{observation}\label{obs: first vertex cut in hat G}
	Assume that Condition (C) holds. 
	Then the tripartition $(L,S,V(\hat G)\setminus(L\cup S))$ of $V(\hat G)$ is a minimum $v$-$t$ vertex-cut in $\hat G$. Moreover, for every other $v$-$t$ vertex-cut $(X,Y,Z)$ in $\hat G$, either $|Y|>|S|$ or $|X|\geq |L|$ must hold.
\end{observation}
\begin{proof}
	From the definition of the distinguished cut $(L,S,R)$, it is induced by the set $L$ of vertices, and, from our assumption that Condition (C) holds, $L\subseteq A_B$. From \Cref{claim: properties of defined graph}, $(L,S,V(\hat G)\setminus(L\cup S))$ is a valid vertex-cut in $\hat G$. Clearly, $v\in L$ and $t\in V(\hat G)\setminus(L\cup S)$ must hold, so it is a valid $v$-$t$ vertex-cut in $\hat G$.
	
	Consider now any  other $v$-$t$ vertex-cut $(X,Y,Z)$ in $\hat G$. Note that, since $A_B\cap T''=\emptyset$, $V(G)\setminus (A_B\cup N^+_G(A_B))$ contains the vertex $t^*$, and so $A_B\cup N^+_G(A_B)\neq V(G)$. From \Cref{claim: transforming cuts from defined graph}, $(X,Y,V(G)\setminus(X\cup Y))$ is a valid vertex-cut in $G$. From the choice of the distinguished min-cut, either $|Y|>|S|$, or $|X|\geq |L|$ must hold.
\end{proof}

Next, we let $(L',S',R')$ be the tripartition of $V(\hat G)$ that is induced by the vertex set $L'=A_{B'}$. Since $A_{B'}\subseteq L''$, it is immediate to see that $t\in R'$. From the definition of the vertex set $A_{B'}$, we get that $\hat B\subseteq A_{B'}$. Therefore, $(L',S',R')$ is a $\hat B$-$t$ vertex-cut in $\hat G$. From \Cref{claim: properties of tripartition in hat G}, it is also a minimum $\hat B$-$t$ vertex-cut in $\hat G$. We summarize this in the following observation.

\begin{observation}\label{obs: cut 2 in hat G}
	If Condition (C) holds, then $(L',S',R')$ is  a minimum $\hat B$-$t$ vertex-cut in $\hat G$.
\end{observation}

The following standard observation, whose proof is deferred to Section \ref{subsec: proof of submodularity of cuts} of Appendix, establishes the submodularity of induced vertex-cuts in directed graphs.
	
	\begin{observation}[Submodularity of weighted vertex-cuts.]\label{obs:submodularity} Let $G$ be a directed graph with weights $w(v)\geq 0$ on its vertices $v\in V(G)$, and let $A,B\subseteq V(G)$ be a pair of vertex subsets. Then:
		
		$$w(N^+_G(A)) + w(N^+_G(B)) \geq w(N^+_G(A \cup B)) + w(N^+_G(A \cap B)).$$
	\end{observation}

Consider now the set $(L\cup L')$ of vertices of $\hat G$. Clearly, $N^+_{\hat G}(L\cup L')\subseteq N^+_{\hat G}(L)\cup  N^+_{\hat G}(L')$, so $t\not\in N^+_{\hat G}(L\cup L')$. Therefore, if we denote by$(L^*,S^*,R^*)$ the tripartition of $V(\hat G)$ induced by $L\cup L'$, then it is a valid vertex-cut in $\hat G$, and it is immediate to verify that it separates $\hat B$ from $t$. From \Cref{obs: cut 2 in hat G}, $w(S^*)\geq w(S')$ must hold. By applying \Cref{obs:submodularity}  to vertex sets $A=L$ and $B=L'$ in graph $\hat G$, and using the fact that $w(N^+_{\hat G}(L\cup L'))=w(N^+_{\hat G}(L^*))=w(S^*)\geq w(S')$, we get that $w(N^+_{\hat G}(L\cap L'))\leq w(N^+_{\hat G}(L))\leq w(S)$ must hold. It is immediate to verify that the tripartition of $V(\hat G)$ induced by $L\cap L'$ is a valid $v$-$t$ vertex-cut in $\hat G$, so, from \Cref{obs: first vertex cut in hat G}, $|L\cap L'|\geq |L|$ must hold. We conclude that $L\subseteq L'$ must hold.
\end{proof}

It now remains to bound the running time of the algorithm. Recall that graph $\hat G$ can be computed in time $O\left(\vol^+_G(A_B)\right )\leq O\left(\frac{\hat k\cdot m\cdot \log n}{N_{i-1}}\right )$, and, moreover, $|E(\hat G)|\leq O\left(\vol^+_G(A_B)\right )\leq O\left(\frac{\hat k\cdot m\cdot \log n}{N_{i-1}}\right )$ must hold. It is easy to verify that, given graph $\hat G$, graph $H$ can be constructed in time $O(|E(\hat G)|)\leq O\left(\frac{\hat k\cdot m\cdot \log n}{N_{i-1}}\right )$. The time required to compute 
a minimum $s$-$t$ vertex-cut $(\hat L,\hat S,\hat R)$ in $H$ using the algorithm from \Cref{cor: min_vertex_cut} is bounded by $ O\left(\frac{\hat k\cdot m^{1+o(1)}}{N_{i-1}}\right )$, as observed already. The time required to compute a modified vertex-cut in $H$ is bounded by $O(n)$, and this cut can be transformed into the tripartition $(L'',S'',R'')$ of $V(\hat G)$ in time $O(n)$. Lastly, the time required to compute the vertex set $A_{B'}$ is bounded by $O(|E(\hat G)|)\leq O\left(\frac{\hat k\cdot m\cdot \log n}{N_{i-1}}\right )$. The total running time of the algorithm is then bounded by  $ O\left(\frac{\hat k\cdot m^{1+o(1)}}{N_{i-1}}\right )$.

\section{Algorithm for Weighted Directed  Non-Dense Graphs}
\label{sec: weighted non-dense}

In this section we provide a randomized algorithm for the weighted directed global minimum vertex-cut problem with running time $O\left(mn^{11/12+o(1)}\cdot d^{1/12}\cdot \log^2 W\right )$, where $d$ is the average vertex degree in the input graph. The algorithm is most suitable for graphs that are not too dense, and is summarized in the following theorem.

\begin{theorem}\label{thm : sparse weighted algorithm}
	There is a randomized algorithm, whose input consists of a simple directed $n$-vertex and $m$-edge graph $G$ with integral weights $0 \leq w(v) \leq W$ on its vertices $v \in V(G)$, such that $G$ contains some vertex-cut. The algorithm computes a vertex-cut $(L',S',R')$ in $G$, so that, with probability at least $\frac{1}{2}$, $(L',S',R')$ is a global minimum vertex-cut.
	The running time of the algorithm is $O\left(m n^{11/12 + o(1)} \cdot d^{1/12} \cdot \log^2(W)\right)$, where $d$ is the average vertex degree in $G$.
\end{theorem}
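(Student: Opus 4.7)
The plan is to follow the architecture outlined in the introduction. First, set $\epsilon = \tfrac{1}{12} - \tfrac{\log d}{12\log n}$ so that $mn^{1-\epsilon} = O(mn^{11/12}\cdot d^{1/12})$, and apply the standard shift from \Cref{subsec: positive weights} to assume $w(v)\geq 1$. Fix a distinguished global minimum vertex-cut $(L,S,R)$ with $w(L)\leq w(R)$, induced by $L$; since computing a minimum cut in the reversed graph and taking the better of the two answers only costs a constant factor, this assumption is at no loss. The algorithm will run several subroutines in parallel and output the lowest-value cut produced.

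Next, invoke Algorithm $\algspec$ from \Cref{thm: alg for special}. It succeeds with probability $1-1/n^2$ in time $O(mn^{1-\epsilon+o(1)}\log W)$ whenever the distinguished cut is \emph{balanced}, i.e.\ $|L|\geq n^\epsilon$ or $\vol(L)\geq d\cdot n^\epsilon$. Therefore I only need to construct an algorithm that, under the assumption $|L|<n^\epsilon$, $\vol(L)<d\cdot n^\epsilon$, $|S|<d\cdot n^\epsilon$ (Properties P1--P3), and conditioned on correct ``guesses'' $\lambda$ and $\tau$ of $|L|$ (up to factor $2$, via \Cref{claim: compute terminals}) and of the critical threshold (a power of $2$, among $O(\log W)$ candidates), outputs a global minimum cut with constant probability. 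This guessing contributes only a $\poly\log(nW)$ factor, which is absorbed into $n^{o(1)}\log^2 W$.

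In the remaining regime I compute, for every vertex $v$, the DFS set $\beta(v)$ (using either threshold $\tau$ in Case~1 or $\hat\tau=\tau/(8|L|)$ in Case~2) and the sets $\sigma(u)=\{v\colon u\in\beta(v)\}$; total time is $\tilde O(n\cdot|L|)$ since each DFS halts after $|L|+1$ vertices. Mark $u$ suspicious if $|\sigma(u)|>2|L|\cdot n^\epsilon$, so $|U|\leq n^{1-\epsilon}$. If some vertex of $L$ is suspicious I iterate over $s\in U$: for each, pick $t$ by the weighted random rule of \Cref{lem: cut if vertex of L} and compute a min $s$-$t$ vertex cut via \Cref{cor: min_vertex_cut}; total cost $\tilde O(|U|\cdot m^{1+o(1)}\log W)$, matching the target. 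Otherwise assume no vertex of $L$ is suspicious. Sample a set $\Gamma$ of $\Theta((n/|L|)\log n)$ pairs $(x,y)$ so that, with high probability, some \emph{distinguished} pair $(x^*,y^*)\in\Gamma$ satisfies $x^*\in L$, $y^*\in R$, $x^*$ low-degree, $(x^*,y^*)\notin E$. For each $(x,y)\in\Gamma$ I compute a candidate vertex set $A_{x,y}\subseteq\Vld$ with $x\in A_{x,y}$ and $y\notin A_{x,y}\cup N^+(A_{x,y})$, such that, if $(x,y)$ is distinguished, $L\subseteq A_{x,y}$. Then I form the derived graph $\hat G_{x,y}=G^{|A_{x,y}}$ (\Cref{def: defined graph}); by \Cref{cl: finding global mincut using defined graph}, a minimum $x$-$t$ vertex-cut in $\hat G_{x,y}$ yields a valid $x$-$y$ vertex-cut in $G$, and, when $(x,y)$ is distinguished, it is a global minimum cut. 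Finally, for each $(x,y)\in\Gamma$, read $c_{x,y}=w(\hat S_{x,y})$ from the cut just computed, identify the smallest one, and output the corresponding vertex-cut in $G$.

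The main technical burden is constructing $A_{x,y}$ efficiently in the two cases. For Case~1 ($|\hat S|\geq 2^{12}|L|^3$), I set $J(v)$ to be the low-degree vertices of $N^+(\beta(v))$ of weight $\geq\tau'=\tau/(64|L|^2)$; under Condition P and when $x\in L$, $J(v)\subseteq L\cup\hat S$ and $|J(v)\cap\hat S|$ is overwhelming, so sampling $O(\log n)$ vertices from $J(x)$ and taking the union of $\sigma(u)$ over their non-suspicious in-neighbors produces a sufficiently small $A_{x,y}$ with $L\subseteq A_{x,y}$ w.h.p.; the total volume is controlled using $|U|\leq n^{1-\epsilon}$ and $|\sigma(u)|\leq 2|L|n^\epsilon$. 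For Case~2 ($|\hat S|<2^{12}|L|^3$), I work in the split graph with the shortcut construction from \Cref{subsec: split w shortcuts}, add high-degree shortcut edges to $t$ without altering the optimum, restrict to $G^h$ keeping only copies of vertices of weight $\geq\tau'$, and run local flow augmentation for $O(|L|^2|\hat S|)$ rounds with augmenting paths of length $\leq 16|L|$, sending $\tau'$ units per round (using \Cref{thm : computing-flow-with-few-edges} for the initial max-flow). Let $Z$ contain every low-degree $v$ such that either $v^{\out}$ is $s$-reachable within $16|L|$ edges in the final residual network, or $e_v$ lay on some augmenting path. Arguing on $\hat S\cap Z$ via max-flow/min-cut shows $Z$ is promising w.h.p.\ whenever $(x,y)$ is distinguished; then $A_{x,y}=\bigcup_{u\in Z,\,u'\in N^-(u)\cap(\Vld\setminus U)}\sigma(u')$ satisfies $L\subseteq A_{x,y}$ because for $v\in L$ the modified $\beta(v)$ contains a non-suspicious low-degree vertex adjacent to some $u\in Z$. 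The hardest step is the running-time accounting in Case~2: I must show that the total work of local flow augmentation, which depends on $|L|$, $|\hat S|$, and the path length $16|L|$, together with the final min-cut in $\hat G_{x,y}$ whose edge count is bounded via $|A_{x,y}|\cdot d\cdot n^\epsilon$ (since $A_{x,y}\subseteq \Vld$), fits within $O(mn^{11/12+o(1)}d^{1/12}\log^2 W)$ when summed over all $(x,y)\in\Gamma$. Combining the balanced algorithm, the suspicious-vertex algorithm, and Cases~1 and~2 over all guesses of $\lambda$ and $\tau$, and amplifying success probability by $O(\log n)$ repetitions, completes the proof of \Cref{thm : sparse weighted algorithm}.
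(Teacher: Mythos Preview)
Your proposal is correct and follows essentially the same architecture as the paper's proof: the choice of $\epsilon$, the reduction via \algspec\ to the unbalanced regime with Properties \ref{prop: small L}--\ref{prop: small volume of L}, the guessing of $\lambda$ and $\tau$, the $\beta$/$\sigma$/suspicious-vertex machinery, the handling of $L\cap U\neq\emptyset$, and the two cases (large $\hat S$ via the $J(v)$-symmetric-difference construction, small $\hat S$ via local flow augmentation in the split graph) all match the paper. The only trivial slip is that the initial flow in the small-$\hat S$ case is computed via the standard max-flow algorithm (\Cref{thm: maxflow}) rather than \Cref{thm : computing-flow-with-few-edges}, but this is immaterial.
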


In the remainder of this section we prove  \Cref{thm : sparse weighted algorithm}. We will use a parameter $0<\eps<1$, whose value will be set later. If there is a global minimum vertex-cut $(L,S,R)$ in the input graph $G$ for which either $|L|\geq n^{\eps}$ or $\vol_G(L)\ge n^{\eps}\cdot d$ hold, where $d$ is the average vertex degree in $G$, we will employ Algorithm $\algspec$ from \Cref{thm: alg for special} to compute a global minimum vertex-cut in $G$. Otherwise, we are guaranteed that every global minimum vertex-cut $(L,S,R)$ in $G$ has properties \ref{prop: small L}--\ref{prop: small volume of L}. 
It is therefore sufficient to design an algorithm for this latter case.

In the remainder of this section, we assume that we are given a directed $n$-vertex and $m$-edge graph $G=(V,E)$. Using the transformation from \Cref{subsec: positive weights}, we can assume that, for every vertex $v\in V(G)$, its weight is an integer $w(v)\geq 1$. We also assume that we are given a parameter $0<\eps<1$, whose value we will set later. For the sake of the analysis, we fix a global minimum vertex-cut $(L,S,R)$ in $G$, that we refer to as the \emph{distinguished min-cut}, as follows. If there exists a global minimum vertex-cut  in $G$ for which  Properties \ref{prop: small L}-\ref{prop: small volume of L} hold, then we let $(L,S,R)$ be any such cut, and we say that the distinguished min-cut is \emph{good}; otherwise, we let $(L,S,R)$ be an arbitrary global minimum vertex-cut and we say that the distinguished min-cut is \emph{bad}. As before, we use the convention that $w(L)\leq w(R)$ holds. Throughout, we denote by $d$ the average vertex degree in $G$. We also denote the parameters $\wmax'(G)$ and $\wmax(G)$ by $\wmax'$ and $\wmax$ respectively (see \Cref{subsec: split graph} for a definition).

\paragraph{Critical Threshold.}
One of the central notions that our algorithm uses is that of a \emph{critical 
threshold}, that we define next.

\begin{definition}[Critical Threshold]\label{def: cricical threshold}
Let $G$ be a directed graph with integral weights $w(v)\geq 1$ on its vertices $v\in V(G)$, let $0<\eps<1$ be a parameter, and let $(L',S',R')$ be a global minimum vertex-cut in $G$. An integer $\tau^*$ is the \emph{critical threshold} for $(L',S',R')$, if it is the smallest integral power of $2$ with $\tau^*\geq 1$, such that every vertex $u\in S'$ with $w(u)> \tau^*$ is a high-degree vertex (see \Cref{def: high-deg}). Notice that it is possible that $S'$ contains no vertex $u$ with $w(u)> \tau^*$. 
\end{definition}

Observe that $1\leq \tau^*\leq \wmax'$ must hold. Moreover, if $\tau^*=1$ then it is possible that $S$ only contains high-degree vertices. The critical thershold $\tau^*$ is not known to the algorithm. 
However, we will supply the algorithm with a \emph{guess} $\tau$ on the value of the critical threshold. We will then require that, if $\tau=\tau^*$, then the algorithm  returns a global minimum vertex-cut in $G$ with a sufficiently high probability.

\paragraph{Selecting parameters $\lambda$, $\tau$ and a set $\Gamma$ of vertex pairs.}
As our first step, we select values
 $1\le \lambda\leq n$ and $1\leq \tau\leq \wmax'$, that are both integral powers of $2$, together with a set $\Gamma$ of at most $\ceil{\frac{100n\log n}{\lambda}}$ pairs of vertices of $G$. Our goal is to ensure that, with a sufficiently high probability, $\tau=\tau^*$, $\frac{\lambda}{2}\leq |L|\leq\lambda$, and  there is a pair $(x,y)\in \Gamma$ with $x\in L$ and $y\in R$ (recall that $(L,S,R)$ is the distingiushed min-cut). 
The algorithm is summarized in the next claim, whose proof appears in Section \ref{appx: subsec: compute pairs 2} of Appendix.

\begin{claim}\label{claim: compute pairs2}
	There is a randomized algorithm, that we call $\algpairs$, whose input consists of a directed $n$-vertex and $m$-edge  graph $G$ with integral weights $w(v)\geq 1$ on its vertices $v\in V(G)$ and a parameter $0<\eps<1$. The algorithm returns 
 values $1\leq \tau\leq \wmax'$ and $1\le \lambda\leq 2n^{\eps}$ that are integral powers of $2$, together with a set $\Gamma$ of at most $\ceil{\frac{100n\log n}{\lambda}}$ pairs of vertices of $G$, such that, for every pair $(x,y)\in \Gamma$, $x\neq y$ and $(x,y)\not\in E(G)$. We say that the algorithm is \emph{successful} with respect to a fixed global minimum vertex-cut $(L,S,R)$, if  $\frac{\lambda}{2}\leq |L|\leq\lambda$, $\tau=\tau^*$, where $\tau^*$ is the critical threshold for $(L,S,R)$, and  there is a pair $(x,y)\in \Gamma$ with $x\in L$ and $y\in R$. If $(L,S,R)$ is a global minimum vertex-cut for which Properties \ref{prop: small L}--\ref{prop: small volume of L} hold, then the algorithm is successful with respect to this cut with probability at least  $\frac{1}{32\log n\cdot\log \wmax}$. The running time of the algorithm is $\tilde O\left (m\cdot n^{\eps}\right )$.
\end{claim}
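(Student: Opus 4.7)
My plan is to construct the triple $(\tau,\lambda,\Gamma)$ via three essentially independent randomized steps. First, I would sample $\tau$ uniformly at random from the integral powers of $2$ in $[1,\wmax']$, and sample $\lambda$ uniformly at random from the integral powers of $2$ in $[1,2n^\eps]$. The critical threshold $\tau^*$ always lies in $[1,\wmax']$ (for $\tau=\wmax'$ the defining property is vacuously satisfied), and, assuming Property \ref{prop: small L} holds, $1\le|L|\le n^\eps$; thus each of these sampling ranges contains $O(\log\wmax)$ and $O(\log n)$ values respectively, so the probability that simultaneously $\tau=\tau^*$ and $\lambda/2\le|L|\le\lambda$ is at least $\Omega\!\left(1/(\log n\cdot\log\wmax)\right)$.

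Second, to construct $\Gamma$, I would perform $q=\lceil 100n\log n/\lambda\rceil$ independent trials; in each trial, sample $x\in V(G)$ uniformly at random and then sample $y$ from $V(G)\setminus(N^+_G(x)\cup\{x\})$ with probability proportional to $w(y)$, adding $(x,y)$ to $\Gamma$ (skipping the trial if this set is empty or has total weight $0$, which only happens when $x$ dominates $G$). This guarantees by construction that $x\neq y$ and $(x,y)\notin E(G)$. For efficient implementation, I would maintain a segment tree over $V(G)$ keyed by vertex weights, supporting single-weight updates and weighted sampling in $O(\log n)$ time each; each trial temporarily zeros out the weights of the $\deg^+_G(x)+1$ vertices in $N^+_G(x)\cup\{x\}$, samples $y$, and restores them, costing $O(\deg^+_G(x)\log n)$. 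Since $x$ is sampled uniformly, the expected cost per trial is $O(d\log n)$, yielding a total expected running time of $O(qd\log n)=O(m\log^2 n/\lambda)\le\tilde O(m)$, comfortably within the $\tilde O(mn^\eps)$ budget; initial preprocessing of the segment tree takes $O(n)$.

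The key analytical step is to bound the probability that some trial produces a pair $(x,y)$ with $x\in L$ and $y\in R$, conditional on $\lambda/2\le|L|\le\lambda$. Clearly $\Pr[x\in L]=|L|/n\ge\lambda/(2n)$. Conditional on $x\in L$, the fact that $(L,S,R)$ is a vertex-cut gives $R\cap N^+_G(x)=\emptyset$, so $R\subseteq V(G)\setminus(N^+_G(x)\cup\{x\})$; applying \Cref{claim: neighbors in S} yields $w(S\setminus N^+_G(x))\le w(L)\le w(R)$, whence
\[
w\!\left(V(G)\setminus(N^+_G(x)\cup\{x\})\right)\;\le\; w(L)+w(S\setminus N^+_G(x))+w(R)\;\le\; 2w(L)+w(R)\;\le\; 3w(R),
\]
so $\Pr[y\in R\mid x\in L]\ge w(R)/(3w(R))=1/3$. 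Hence each trial succeeds with probability at least $\lambda/(6n)$, and the probability that all $q$ trials fail is at most $(1-\lambda/(6n))^q\le e^{-100\log n/6}\le 1/n$. Combining with the probabilities for $\tau$ and $\lambda$ and taking a union bound for the last event gives an overall success probability of at least $\frac{1}{32\log n\log\wmax}$, as required.

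The only mildly delicate step is the efficient weighted sampling with exclusions; a naive per-trial cost of $\Theta(n)$ would blow the running-time budget, but the segment-tree-with-undo idea keeps the cost proportional to $\deg^+_G(x)$. Once this data-structural point is in place, the remainder of the argument is a direct application of \Cref{claim: neighbors in S} plus elementary probability, and should go through without further complication.
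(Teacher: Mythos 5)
Your approach is essentially the paper's: sample $\tau$ and $\lambda$ from geometric grids, perform $q=\lceil 100n\log n/\lambda\rceil$ trials in which $y$ is drawn from $V(G)\setminus(\{x\}\cup N^+_G(x))$ with weight-proportional probability, use a segment-tree/hierarchical-interval structure with undo to make each trial cost $O(\deg^+_G(x)\log n)$, and invoke \Cref{claim: neighbors in S} to get $\Pr[y\in R\mid x\in L]\ge 1/3$. The probability calculus is the same modulo constants.

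There is, however, one genuine gap. You sample $x$ uniformly from \emph{all} of $V(G)$ and therefore only obtain an \emph{expected} per-trial cost of $O(d\log n)$, which you then turn into an expected total running time. The claim (and its use in \Cref{thm : sparse weighted algorithm}) needs a worst-case bound of $\tilde O(mn^\eps)$. With $x$ drawn from all of $V(G)$, a trial can cost $\Theta(n\log n)$ when $\deg^+_G(x)=\Theta(n)$, and the worst-case total becomes $\Theta(n^2\log^2 n/\lambda)$, which can exceed $\tilde O(mn^\eps)$ whenever $m$ is well below $n^{2-\eps}$ (which is exactly the ``non-dense'' regime this lemma serves). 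The paper avoids this by sampling $x$ uniformly from the set $\Vld$ of low-degree vertices (degree $<dn^\eps$), so each trial deterministically costs $O(dn^\eps\log n)$, giving the claimed $\tilde O(n\cdot dn^\eps/\lambda)\le\tilde O(mn^\eps)$ bound. This restriction is free for correctness: Property~\ref{prop: small volume of L} forces $L\subseteq\Vld$, so $\Pr[x\in L]=|L|/|\Vld|\ge|L|/n$ and the rest of your success-probability argument goes through verbatim. Fixing your algorithm to sample from $\Vld$ closes the gap.
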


In the remainder of the proof, we denote by $\event_1$ the good event that Algorithm $\algpairs$ was successful with respect to the distinguished min-cut $(L,S,R)$. From  \Cref{claim: compute pairs2}, if the distinguished cut was good, then $\prob{\event_1}\geq \frac{1}{32\log n\cdot\log(\wmax)}$.

For the sake of analysis, we designate a single pair $(x^*,y^*)\in \Gamma$ as the \emph{distinguished pair} as follows: if Algorithm $\algpairs$ was successful, then $(x^*,y^*)\in \Gamma$ is an arbitrary pair with $x^*\in L$ and $y^*\in R$. Otherwise, $(x^*,y^*)$ is an arbitrary pair in $\Gamma$.

\paragraph{Two Special Cases.}

Throughout, we denote by $\tau'=\frac{\tau}{64\lambda^2}$. 
Recall that we denoted by $d$ the average vertex degree in $G$, and that a vertex $v\in V(G)$ is \emph{high-degree} iff $\deg_G(v)\geq d\cdot n^{\eps}$. We denote by $\Vhd$ and $\Vld$ the sets of all high-degree and all low-degree vertices of $G$ respectively. We also denote by $V^{\geq \tau'}=\set{v\in V(G)\mid w(v)\geq \tau'}$.
Consider now the distinguished min-cut $(L,S,R)$. Throughout, we denote by $\hat S\subseteq S$ the set of vertices containing all low-degree vertices of $S$ whose weight is at least $\tau'$. In other words:

\[\hat S=S\cap \Vld\cap V^{\geq \tau'}.\]

Next we will consider two cases: Case 1 happens if $|\hat S|\leq 2^{12}\lambda^3$ and Case 2 happens otherwise.
We summarize our algorithms for each of the two cases in the following two lemmas. The two lemmas are essentially identical, except that the first one requires that  $|\hat S|\leq 2^{12}\lambda^3$, while the second requires that  $|\hat S|>2^{12}\lambda^3$, and they differ in the running times of their respective algorithms.

\begin{lemma}[Algorithm for Case 1]\label{lem: case 1 main}
		There is a randomized algorithm, that is given as input a directed $n$-vertex and $m$-edge graph $G=(V,E)$ with integral weights $w(v)\geq 1$ on its vertices $v\in V$, a set $\Gamma\subseteq V$ of at most $\ceil{\frac{100n\log n}{\lambda}}$ pairs of vertices, such that, for every pair $(x,y)\in \Gamma$, $x\neq y$ and $(x,y)\not\in E(G)$ hold, a parameter 	$0<\eps<1$, and two additional parameters $1\leq \lambda\leq 2n^{\eps}$ and $1\leq \tau\leq \wmax$, both integral powers of $2$. The algorithm returns a vertex-cut  $(L',S',R')$ in $G$. The algorithm guarantees that, if there exists a global minimum vertex-cut $(L,S,R)$ in $G$ for which all of the following hold:
		
	\begin{itemize}
		\item  Properties \ref{prop: small L}-\ref{prop: small volume of L} hold for $(L,S,R)$;
		\item $\frac{\lambda}{2}\leq |L|\leq\lambda$; 
		\item $\tau=\tau^*$, where $\tau^*$ is the critical threshold for $(L,S,R)$;  
		
		\item $|\hat S|\leq 2^{12}\lambda^3$, where $\hat S=S\cap \Vld\cap V^{\geq \tau'}$ and $\tau'=\frac{\tau}{64\lambda^2}$; and  
		\item there is a pair $(x^*,y^*)\in \Gamma$ with $x^*\in L$ and $y^*\in R$,
	\end{itemize}

 then, with probability at least $\frac{1}{4}$, the vertex cut $(L',S',R')$ is a global minimum vertex-cut. The running time of the algorithm is  $O\left ((m\cdot n^{1-\eps+o(1)}+md\cdot n^{11\eps+o(1)})\cdot \log (\wmax)\right )$, where $d=\frac{2m}{n}$ is the average vertex degree in $G$.
\end{lemma}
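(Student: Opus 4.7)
The plan is to implement the local-flow-augmentation-based approach described as ``Case 2'' in the high-level overview (i.e., the small-$|\hat S|$ regime). Let $\hat\tau := \tau/(8\lambda)$, and for every $v\in V(G)$ re-define $\beta(v)$ to be the set of vertices visited by a DFS from $v$ that explores only low-degree vertices of weight strictly greater than $\hat\tau$, truncated at $\lambda+1$ vertices; define $\sigma(u):=\{v : u\in\beta(v)\}$, and let $U:=\{u:|\sigma(u)|>2n^\eps\lambda\}$ be the set of \emph{suspicious} vertices. These definitions are motivated by the structural fact (which I will verify) that if $v\in L$, then every non-$L$ vertex of $\beta(v)$ would have to be a low-degree $S$-vertex of weight $>\hat\tau$, so the cumulative weight constraint yields $\beta(v)\subseteq L$. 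I will handle the easy special case where some vertex of $L$ is suspicious by iterating over $U$ as in Algorithm $\algspec$: for each $s\in U$, sample $t\in V(G)\setminus(\{s\}\cup N^+(s))$ with probability proportional to $w(t)$, call the algorithm from \Cref{cor: min_vertex_cut}, and retain the cheapest cut found; this costs $O(|U|\cdot m^{1+o(1)}\log W)=O(mn^{1-\eps+o(1)}\log W)$ since $|U|\le n^{1-\eps}$. In the complementary case where $L\cap U=\emptyset$, I will, for every pair $(x,y)\in\Gamma$, build a vertex set $A_{x,y}\subseteq V(G)$, and then use \Cref{cl: finding global mincut using defined graph} on the derived graph $G^{|A_{x,y}}$ to produce a candidate $x$-$y$ vertex-cut in $G$; the algorithm's output is the smallest-value candidate across all pairs.

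The construction of $A_{x,y}$ goes through a \emph{promising vertex set} $Z$: a subset of $\Vld$ that contains every low-degree $v\in S$ with $w(v)\ge\hat\tau$. The key structural claim I will establish is that if $Z$ is promising and no vertex of $L$ is suspicious, then for every $v\in L$ the modified $\beta(v)$ contains a low-degree, non-suspicious vertex $u'$ with an out-edge into $Z$; consequently, setting
\[
A_{x,y}\;:=\;\bigcup_{u\in Z}\;\bigcup_{u'\in N^-_G(u)\cap\Vld\setminus U}\;\sigma(u'),
\]
we get $L\subseteq A_{x,y}$, while $|A_{x,y}|\le |Z|\cdot d\cdot n^\eps\cdot 2n^\eps\lambda$, which is small enough. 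To compute a $Z$ that is promising with constant probability (conditioned on $(x,y)=(x^*,y^*)$ and the distinguished cut being good), I will use local flow augmentation in the split graph. Construct $G'$ and obtain $G'_{x,y}$ by adding an infinite-capacity shortcut edge $(v^{\out},t)$ for every $v\in\Vhd$ with $t=y^{\inn}$, $s=x^{\out}$; then let $G^h$ be the subgraph of $G'_{x,y}$ induced by the copies of vertices of weight $\ge\tau'$. Using \Cref{thm: maxflow}, compute a max $s$-$t$ flow $f$ in the graph obtained from $G^h$ by temporarily setting the capacity of every special edge $e_v,v\in V^{\ge\tau'}\cap\Vld$ of weight $<2\tau$, to zero -- this guarantees $f(e_v)=0$ on all such edges (in particular on all of $\hat S$) and $\val(f)\ge\opt_{x,y}-2|\hat S|\cdot\tau$. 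Then run $N=O(\lambda^2|\hat S|)\le O(\lambda^5)$ iterations of: search for an augmenting path of length $\le 16\lambda$ in the residual network of $G^h$ and push $\tau'$ units of flow along it; stop early once the number of out-copies of $V(G)$ reachable from $s$ by $\le 16\lambda$-length paths falls below a threshold. Let $Z$ contain every low-degree $v$ such that either $v^{\out}$ remains reachable from $s$ by a $\le 16\lambda$-length path in the final residual network, or the edge $e_v$ carried flow on at least one augmenting path.

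The heart of the correctness argument is to show $Z$ is promising with constant probability when $(x,y)=(x^*,y^*)$. Fix any low-degree $v\in S$ with $w(v)\ge\hat\tau$; then $v\in\hat S$, so initially $f(e_v)=0$. Either the augmentation used $e_v$ on some path, placing $v\in Z$, or $f(e_v)=0$ still holds at termination. In the latter case, because $(L,S,R)$ corresponds to an $s$-$t$ edge-cut in $G'_{x,y}$ of value $\le w(S)$, and only the special edges of $L\cup S$ contribute to that cut, by \Cref{fact : residual-capacity-of-edge-cut} I can route $w(v)\ge\hat\tau=\tau/(8\lambda)>\tau'\cdot N/N\ge$ (the deficit) units of residual flow from $s$ to $v^{\inn}$ using only out-copies of $L$-vertices and in-copies of $(L\cup S)$-vertices; since there are at most $|L|+|L|+|S|\le 4\lambda\cdot(1+d\cdot n^\eps/\lambda)$ such vertex copies (of which only $\le 4\lambda$ are out-copies lying in $G^h$, by the weight cutoff $\tau'$ and the definition of $\hat S$ together with the $|\hat S|\le 2^{12}\lambda^3$ hypothesis), some such residual path has length at most $16\lambda-1$ in $G^h$, so $v^{\inn}$ is $16\lambda$-reachable, whence $v^{\out}\in Z$ via $e_v$. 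Combined with a union bound over a small set of bad events (the termination criterion and initial-flow deficit), $Z$ will be promising with probability at least $1/2$. Combining with the $\Omega(1)$ probability of having $(x,y)=(x^*,y^*)$ among the pairs considered for which the cut returned by $G^{|A_{x,y}}$ is optimal (via \Cref{cl: finding global mincut using defined graph}), the overall success probability is at least $1/4$.

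The main obstacles I anticipate are: (i) producing the initial flow $f$ satisfying both value and support constraints simultaneously --- handled by zeroing out capacities as above and accounting for the $2|\hat S|\tau$ loss; (ii) bounding the length/time of augmenting-path searches by $\tilde O(\lambda\cdot|V(G^h)|)$ truncated BFS combined with the iteration count $N=O(\lambda^5)$ to fit within $md\cdot n^{11\eps+o(1)}\log\wmax$ (using $\lambda\le 2n^\eps$); and (iii) the final volume bound for $A_{x,y}$, which uses $|Z|\le\tilde O(\lambda^2)$ (bounded via the termination criterion and the total support of the augmenting flow) and $|\sigma(u')|\le 2n^\eps\lambda$ for every non-suspicious $u'$, yielding the desired $O(d\cdot n^{11\eps})$-sized $A_{x,y}$ that dominates the runtime of each per-pair min-cut computation on $G^{|A_{x,y}}$.
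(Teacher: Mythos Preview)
Your high-level approach is the same as the paper's: handle suspicious $L$-vertices separately, then for each pair $(x,y)\in\Gamma$ build a promising set $Z$ via local flow augmentation, expand $Z$ into $A_{x,y}$ through the $\sigma$-sets, and run min-cut on $G^{|A_{x,y}}$. The structural claim about $A_{x,y}\supseteq L$ and the definition of a promising set are correct. However, two efficiency steps are missing, and without them the claimed running time is not achieved.

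\textbf{Initial flow.} You propose computing the initial flow $f$ by calling the almost-linear max-flow algorithm on (a modification of) $G^h$. But $G^h$ has $\Theta(m)$ edges, so this costs $m^{1+o(1)}\log W$ per pair, and with $\tilde O(n/\lambda)$ pairs the total is $\tilde O(mn/\lambda)$---which for $\lambda=O(1)$ is $\tilde\Theta(mn)$, far exceeding the bound. The paper avoids this by first exploring the subgraph of $\hat G$ (the graph obtained by deleting low-degree vertices of weight $\le\tau$) reachable from $s$; if Condition~(C3) holds, this reachable subgraph $\hat G'$ has only $O(dn^\eps)$ edges (since $J^{\mathrm{ld}}\subseteq L$), so max flow there takes only $O(dn^{\eps+o(1)}\log\wmax)$ per pair.

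\textbf{Augmenting-path search.} Your bound of $\tilde O(\lambda\cdot|V(G^h)|)$ per BFS is not justified: a depth-$16\lambda$ BFS in $G^h$ can touch $\Theta(m)$ edges. The paper bounds each iteration by introducing a second termination rule: if the BFS discovers at least $16z\lambda$ vertices of $V^{\out}$ (with $z=O(\lambda^2\rho)$ the iteration budget), pick one uniformly at random and add a shortcut edge from it to $t$, then augment along the resulting short path. Since any high-degree out-vertex immediately leads to $t$, each iteration explores at most $O(z\lambda\cdot dn^\eps)$ edges. The random shortcut is valid with probability $1-O(1/z)$ because $|L|\le\lambda$, so by a union bound over $z$ iterations all shortcuts are valid with probability $\ge 15/16$. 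You mention an overall termination criterion (``stop once few out-copies are $16\lambda$-reachable'') but not this per-iteration one; without it the running-time bound fails.

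\textbf{Minor issues.} The motivational claim that $\beta(v)\subseteq L$ for $v\in L$ is false with the $\hat\tau$ threshold: low-degree vertices of $S$ with weight in $(\hat\tau,\tau]$ can be explored. This does not break your key structural claim (which the paper proves as Claim~6.4 via a flow-path argument that finds a vertex of $L\cap\beta(v)$ with an edge into $Z$), but your stated justification would not go through. Also, your bound $|Z|\le\tilde O(\lambda^2)$ is too small; the paper gets $|Z|=O(\lambda^3\rho)=O(\lambda^6)$ from the augmenting-path support plus the final reachable set.
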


\begin{lemma}[Algorithm for Case 2]\label{lem: case 2 main}
	There is a randomized algorithm, that is given as input a directed $n$-vertex and $m$-edge graph $G=(V,E)$ with integral weights $w(v)\geq 1$ on its vertices $v\in V$, a set $\Gamma\subseteq V$ of at most $\ceil{\frac{100n\log n}{\lambda}}$ pairs of vertices, such that, for every pair $(x,y)\in \Gamma$, $x\neq y$ and $(x,y)\not\in E(G)$ hold, a parameter 	$0<\eps<1$, and two additional parameters $1\leq \lambda\leq 2n^{\eps}$ and $1\leq \tau\leq \wmax$, both integral powers of $2$. The algorithm returns a vertex-cut  $(L',S',R')$  in $G$.  The algorithm guarantees that, if there exists a global minimum vertex-cut $(L,S,R)$ in $G$ for which all of the following hold:
	
	\begin{itemize}
		\item  Properties \ref{prop: small L}-\ref{prop: small volume of L} hold for $(L,S,R)$;
		\item $\frac{\lambda}{2}\leq |L|\leq\lambda$; 
		\item $\tau=\tau^*$, where $\tau^*$ is the critical threshold for $(L,S,R)$;  
		
		\item $|\hat S|> 2^{12}\lambda^3$, where $\hat S=S\cap \Vld\cap V^{\geq \tau'}$ and $\tau'=\frac{\tau}{64\lambda^2}$; and  
		\item there is a pair $(x^*,y^*)\in \Gamma$ with $x^*\in L$ and $y^*\in R$,
	\end{itemize}
	
	then, with probability at least $\frac{1}{4}$, the vertex cut $(L',S',R')$ is a global minimum vertex-cut. The running time of the algorithm is  $O\left ((m\cdot n^{1-\eps+o(1)}+md\cdot n^{3\eps+o(1)})\cdot\log (\wmax)\right )$, where $d=\frac{2m}{n}$ is the average vertex degree in $G$.
\end{lemma}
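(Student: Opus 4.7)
My plan for \Cref{lem: case 2 main} is to follow the ``$J$-set'' strategy sketched as Case 1 in the introduction: after two reductions, I preprocess the $\beta$-, $\sigma$-, $J$-, and $U$-sets once globally, and then for each pair $(x,y)\in \Gamma$ I construct a small vertex set $A_{x,y}$ from a few random samples of $J(x)$ and solve min $x$-$\tilde t$ vertex-cut on the derived graph $G^{|A_{x,y}}$ of \Cref{def: defined graph}.

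\textbf{Reductions.} I would first call $\algspec$ from \Cref{thm: alg for special}, which in time $O(mn^{1-\eps+o(1)}\log W)$ already returns a global minimum vertex-cut whenever some global minimum vertex-cut violates one of \ref{prop: small L}--\ref{prop: small volume of L}; thus I may assume (P1)--(P3) for $(L,S,R)$. After the $\beta$/$\sigma$ preprocessing below defines the set $U$ of suspicious vertices with $|U|\leq n^{1-\eps}$, I would enumerate every $s \in U$, sample $t$ from $V(G)\setminus(\{s\}\cup N^+(s))$ with probability proportional to weight, and invoke \Cref{cor: min_vertex_cut} for the minimum $s$-$t$ vertex-cut. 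Since $|U|\leq n^{1-\eps}$ this also fits in $O(mn^{1-\eps+o(1)}\log W)$, so from now on I assume $L\cap U=\emptyset$.

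\textbf{Global preprocessing and core procedure.} For each $v$, I would run a truncated DFS that only explores low-degree vertices of weight $>\tau$ and halts after $\lambda+1$ discoveries, producing $\beta(v)$; inverting gives $\sigma(u)=\{v:u\in\beta(v)\}$, and $J(v)$ is extracted from $N^+_G(\beta(v))$ by keeping only low-degree vertices of weight $\geq \tau'$. Each DFS costs $\tilde O(\lambda\cdot d\cdot n^\eps)$, giving $\tilde O(md\cdot n^{2\eps})$ in total. For each pair $(x,y)\in\Gamma$ with $x$ low-degree, I would sample an independent multiset $W_x$ of $r=\Theta(\log n)$ vertices from $J(x)$ uniformly at random and define
\[A_{x,y}=\bigcup_{w\in W_x}\ \bigcup_{u\in N^-_G(w)\setminus U}\ \bigl(\sigma(u)\cap \Vld\bigr).\]
Then I would build $G^{|A_{x,y}}$, run \Cref{cor: min_vertex_cut} on it for a minimum $x$-$\tilde t$ vertex-cut, translate the outcome to $G$ via \Cref{cl: finding global mincut using defined graph}, and return the smallest-value cut across all pairs.

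\textbf{Main difficulty.} The heart of the proof is two claims about $A_{x,y}$ for the distinguished pair $(x^*,y^*)$. First, the critical-threshold definition forces $\beta(v)\subseteq L$ for every $v\in L$, so $J(v)\subseteq L\cup \hat S$; iterating \Cref{claim: neighbors in S} over pairs of low-degree vertices in $L$ (together with the weight cutoff $\tau'=\tau/(64\lambda^2)$) yields the additive bound $|J(v)\triangle J(x^*)|\leq 256\lambda^3$. Combined with the hypothesis $|\hat S|>2^{12}\lambda^3$, a uniformly random $w\in J(x^*)$ lies in $J(v)\cap\hat S$ with probability $\geq 1-2^{-8}$; for such $w$ some $u\in\beta(v)\subseteq L\setminus U$ satisfies $w\in N^+_G(u)$, so $v\in\sigma(u)\cap\Vld\subseteq A_{x^*,y^*}$, and a union bound over $v\in L$ together with $r=\Theta(\log n)$ samples drives the failure probability below $1/4$. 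Second, $|\sigma(u)|\leq 2\lambda n^\eps$ for every non-suspicious $u$, which bounds $|A_{x,y}|$ and hence $\vol^+_G(A_{x,y})$; summing over $|\Gamma|=\tilde O(n/\lambda)$ pairs and invoking \Cref{cor: min_vertex_cut} on each derived graph delivers the claimed running time $O((mn^{1-\eps+o(1)}+md\cdot n^{3\eps+o(1)})\log \wmax)$. The main obstacle is the quantitative $256\lambda^3$ bound on $|J(v)\triangle J(x^*)|$, which is an order of magnitude tighter than a naive union-bound over $\beta(v)\cup\beta(x^*)$ and relies crucially on \Cref{claim: neighbors in S} for out-neighbourhoods of vertices in $L$ together with the precise choice of $\tau'$.
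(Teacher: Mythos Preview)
Your approach is the same as the paper's (Section 6.2): both use the $\beta'/\sigma'/J$ framework with threshold $\tau$, handle suspicious vertices by enumeration, sample $\Theta(\log n)$ vertices from $J(x)$, and build $A_{x,y}$ from $\sigma'$-sets of non-suspicious in-neighbors of the samples, then solve a min-cut in $G^{|A_{x,y}}$.

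One caveat on the crux: your justification of $|J(v)\triangle J(x^*)|\leq 256\lambda^3$ via ``iterating \Cref{claim: neighbors in S}'' does not work as stated. That claim yields $w(S\setminus N^+_G(x))\leq w(L)$, but $w(L)$ can be as large as $\lambda W$, which does not convert to the needed cardinality bound through the cutoff $\tau'$. The paper's actual argument is a direct flow decomposition tailored to $\beta'(v)$: every flow path from $v\in L$ to a vertex of $\hat S\setminus J(v)$ must traverse some $u\in L$ with $w(u)\leq\tau$ (otherwise the entire prefix sits inside $\beta'(v)$ and the endpoint would land in $J(v)$ by definition), so the total flow into $\hat S\setminus J(v)$ is at most $|L|\cdot\tau\leq\lambda\tau$; dividing by $\tau'=\tau/(64\lambda^2)$ gives $|\hat S\setminus J(v)|\leq 64\lambda^3$. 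You correctly flag this as the main obstacle and correctly point to the role of $\tau'$, but the operative upper bound is $\lambda\tau$, not $w(L)$ --- this is precisely where the critical-threshold hypothesis (no low-degree vertex of $S$ has weight exceeding $\tau$) enters, rather than \Cref{claim: neighbors in S}. Once this is fixed, the rest of your plan (including the size bound on $A_{x,y}$ via $|\sigma'(u)|\leq 2\lambda n^\eps$ and the final running-time accounting) matches the paper.
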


We provide the proofs of Lemmas \ref{lem: case 1 main} and \ref{lem: case 2 main} in Sections \ref{sec: compute promising} and \ref{sec: case 2} respectively, after we complete the proof of \Cref{thm : sparse weighted algorithm} using them. 

We set $\eps=\frac{1}{12}-\frac{\log d}{12\log n}$, where $d$ is the average vertex degree in the input graph $G$.
We apply the algorithm \algspec from \Cref{thm: alg for special} to compute, in time $O(mn^{1-\eps+o(1)}\log W)$, a vertex-cut $(L_1,S_1,R_1)$ in $G$. Then we use Algorithm $\algpairs$ from \Cref{claim: compute pairs2} to compute, in time $\tilde O\left (m\cdot n^{\eps}\right )$, values $1\leq \tau\leq \wmax'$ and $1\le \lambda\leq 2n^{\eps}$ that are integral powers of $2$, together with a set $\Gamma$ of at most $\ceil{\frac{100n\log n}{\lambda}}$ pairs of vertices of $G$. We let $\event_1$ be the good event that the algorithm from \Cref{claim: compute pairs2} is successful with respect to the distinguished min-cut. Lastly, we apply  the algorithms from  Lemmas \ref{lem: case 1 main} and \ref{lem: case 2 main} to the input graph $G$, the set $\Gamma$ of vertex pairs, and the values $\lambda$ and $\tau$ computed by Algorithm $\algpairs$; we denote the cuts returned by these algorithms by $(L_2,S_2,R_2)$  and $(L_3,S_3,R_3)$, respectively. We output the smallest-value cut from among 
$(L_1,S_1,R_1)$,  $(L_2,S_2,R_2)$,  and $(L_3,S_3,R_3)$. Recall that, if either $|L|\geq n^{\eps}$ or $\vol_G(L)\geq n^{\eps}\cdot d$ hold, then the cut $(L_1,S_1,R_1)$ that Algorithm $\algspec$ returns is guaranteed to be optimal with probability at least $(1-1/n^2)$, and in this case, the cut that our algorithm returns is optimal as well. Otherwise, Properties \ref{prop: small L}-\ref{prop: small volume of L} hold for the distinguished min-cut. If Event $\event_1$ happens and  $|\hat S|\leq 2^{12}\lambda^3$, then, from \Cref{lem: case 1 main}, with probability at least $\frac{1}{4}$, cut $(L_2,S_2,R_2)$ is optimal, and in this case, the cut the our algorithm returns is optimal as well. If Event $\event_1$ happens and  $|\hat S|> 2^{12}\lambda^3$, then, from \Cref{lem: case 2 main}, with probability at least $\frac{1}{4}$, cut $(L_3,S_3,R_3)$ is optimal, and in this case, the cut the our algorithm returns is optimal as well.
Since, from \Cref{claim: compute pairs2}, $\prob{\event_1}\geq \frac{1}{32\log n\cdot\log (\wmax)}$, we get that the probability that the cut returned by our algorithm is optimal is at least $\Omega\left (\frac{1}{\log n\log (\wmax)}\right )$. By repeating the algorithm $O(\log n\log (\wmax))$ times and returning the smallest-value cut among the resulting vertex-cuts, we ensure that the cut that the algorithm returns is optimal with probability at least $\half$.

Note that the running time of the algorithm is bounded by:  

\[O\left ((m\cdot n^{1-\eps+o(1)}+md\cdot n^{11\eps+o(1)})\cdot \log^2 (\wmax)\right )\leq O\left(mn^{11/12+o(1)}\cdot d^{1/12}\cdot \log^2 W\right ),\] 

since  $\eps=\frac{1}{12}-\frac{\log d}{12\log n}$ and $\wmax\leq O(W\cdot n)$.
In order to complete the proof of \Cref{thm : sparse weighted algorithm}, it now remains to prove Lemmas \ref{lem: case 1 main} and \ref{lem: case 2 main}, which we do in the following two subsections.

\subsection{Algorithm for Case 1: Proof of \Cref{lem: case 1 main}}
\label{sec: compute promising}

Recall that we have denoted by $\tau'=\frac{\tau}{64\lambda^2}$  and by $\hat S=S\cap \Vld\cap V^{\geq \tau'}$, and that Case 1 happens if $|\hat S|\leq 2^{12}\lambda^3$.  In this section we provide an algorithm for the slightly more general case where $|\hat S|\leq \rho$  for some given parameter $\rho\geq \lambda$.
We note that the running time of the algorithm will depend on $\rho$, and that eventually we will set $\rho=2^{12}\lambda^3$.
 For brevity, we say that Condition (C1) holds for a vertex-cut $(L,S,R)$ in $G$, if all of the following hold:

\begin{itemize}
	\item  Properties \ref{prop: small L}-\ref{prop: small volume of L} hold for $(L,S,R)$;
	\item $\frac{\lambda}{2}\leq |L|\leq\lambda$; 
	\item $\tau=\tau^*$, where $\tau^*$ is the critical threshold for $(L,S,R)$;  
		\item we are given a value $\rho\geq \lambda$, such that $|\hat S|\leq \rho$; and  
	\item there is a pair $(x^*,y^*)\in \Gamma$ with $x^*\in L$ and $y^*\in R$,
\end{itemize}

For the sake of analysis, we designate a global minimum vertex-cut $(L,S,R)$ to be a \emph{distinguished min-cut} as follows. If there exists a global minimum vertex-cut for which Condition (C1) holds, then we let $(L,S,R)$ be this cut, and we say that the distinguished min-cut is \emph{good}; otherwise, we let $(L,S,R)$ be an arbitrary global minimum vertex-cut, and we say that it is \emph{bad}. In the following, whenever we say that Condition (C1) holds, we mean that it holds with respect to the distinguished min-cut $(L,S,R)$.
For the sake of the analysis, we also designate a pair $(x^*,y^*)\in \Gamma$ to be the \emph{distinguished pair} as follows: if there is a pair $(x,y)\in \Gamma$ with $x\in L$ and $y\in R$ then we let $(x^*,y^*)$ be any such pair; otherwise, we let $(x^*,y^*)$ be any pair in $\Gamma$.
We start by describing the preprocessing step of the algorithm.

\subsubsection{The Preprocessing Step}\label{subsec: Case 1 preprocessing}

Throughout, we use the parameter $\hat \tau=\frac{\tau}{8\lambda}$.
We start by defining some of the central notions that our algorithm uses: namely sets $\beta(v)$ and $\sigma(v)$ of vertices for $v\in V(G)$, as well as the notion of suspicious vertices.

\subsubsection*{$\beta$-Sets, $\sigma$-Sets, and Suspicious Vertices}

\paragraph{$\beta$-sets.}
For every vertex $v\in V(G)$, we construct a set $\beta(v)\subseteq V(G)$ of vertices of cardinality at most $2\lambda$, as follows. 
We perform a DFS search  in the graph $G$ starting at $v$, except that the DFS search is only allowed to explore vertices $u\in V(G)\setminus\set{v}$ with $u\in \Vld$ and $w(u)\geq \hat \tau$. Once the DFS search discovers $2\lambda$ vertices, we terminate it, and we let $\beta(v)$ be the vertex set containing all these discovered vertices, including $v$ itself. Alternatively, if the DFS search terminates before $2\lambda$ vertices are discovered (in which case there are fewer than $2\lambda$ vertices in $G$ that can be reached from $v$ via paths that only use low-degree vertices of weight at least $\hat \tau$ in addition to $v$), then we let $\beta(v)$ be the set of all vertices that the DFS search has discovered, including $v$. 
In order to compute the vertex sets $\beta(v)$ for all vertices $v\in V(G)$ efficiently, we first compute, 
for every vertex $u\in V(G)$, the set $\OUT(u)\subseteq V(G)$ of vertices, that contains all low-degree vertices $z\in N^+_G(u)$ with $w(z)\geq \hat \tau$; if $|\OUT(u)|>2\lambda$, then we discard vertices from $\OUT(u)$ arbitrarily, until $|\OUT(u)|=2\lambda+1$ holds. Clearly, all vertex sets $\OUT(u)$ for $u\in V(G)$ can be computed in time $O(m)$. Then when we process a vertex $v\in V(G)$ in order to compute the set $\beta(v)$, we perform the DFS search using the sets  $\OUT(u)$ of vertices that we have computed for all $u\in V(G)$, so that the time required to process $v$ is $O(\lambda^2)$, as we explore at most $2\lambda$ vertices $u$, and $|\OUT(u)|\leq 2\lambda+1$ holds for each such vertex. Overall, the total time required to compute the sets $\beta(v)$ of vertices for all $v\in V(G)$ is $O(m+n\lambda^2)$.

\paragraph{$\sigma$-sets and suspicious vertices.}
For every low-degree vertex $u\in V(G)$, we define the set $\sigma(u)$ of vertices of $G$, to contain every vertex $v\in V(G)$ with $u\in \beta(v)$, so:
$\sigma(u)=\set{v\in V(G)\mid u\in \beta(v)}$.
We say that a low-degree vertex $u\in \Vld$ is \emph{suspicious}, if $|\sigma(u)|>2n^{\eps}\cdot \lambda$. We denote by $U$ the set of all low-degree suspicious vertices.

Note that $\sum_{v\in V(G)}|\beta(v)|\leq 2n\cdot \lambda$. Therefore, at most $n^{1-\eps}$ vertices of $V(G)$ may be suspicious. Moreover, by inspecting the sets $\beta(v)$ of vertices for all $v\in V(G)$, we can compute the sets $\sigma(u)$ of vertices for all $u\in \Vld$, as well as the set $U\subseteq \Vld$ of suspicious vertices, in time $O(n\cdot \lambda)$.

\paragraph{Additional assumption.}

Assume that Condition (C1) holds and $L\cap U\neq \emptyset$. In this case, we show that we can efficiently compute a cut $(L',S',R')$, that, with probability at least $\half$, is a global minimum vertex-cut. In order to do so, for every vertex $v\in U$, we apply the algorithm from \Cref{lem: cut if vertex of L}, that, in time $O\left (m^{1+o(1)}\cdot \log (\wmax)\right )$, returns a vertex $y_v\in V(G)$, and the value $c_v$ of the minimum $v$-$y_v$ vertex-cut in $G$. Recall that the lemma guarantees that, if there is a global minimum vertex-cut $(L,S,R)$ in $G$ with $v\in L$, then with probability at least $1-1/n^4$, $c_v=w(S)$ holds. We then select a vertex $v'\in U$ for which the value $c_{v'}$ is the smallest, and use the algorithm from \Cref{cor: min_vertex_cut} to compute the minimum $v'$-$y_{v'}$ vertex-cut $(L',S',R')$ in $G$, whose value must be equal to $c_{v'}$, in time $O\left (m^{1+o(1)}\cdot \log (\wmax)\right )$, returning this cut as the algorithm's outcome. Since we have assumed that $L\cap U\neq \emptyset$, when a vertex $v\in L\cap U$ is considered,
the algorithm from \Cref{lem: cut if vertex of L} is guaranteed to output the value $c_v=\opt$ with probability at least $1-1/n^4$. If this happens, then the vertex-cut $(L',S',R')$ that our algorithm outputs is guaranteed to be optimal. Recall that we invoke the algorithm from \Cref{lem: cut if vertex of L}  $|U|\leq n^{1-\eps}$ times, and the algorithm from \Cref{cor: min_vertex_cut}  once. Therefore, the total running time of this part of the algorithm is $O\left (m\cdot n^{1-\eps+o(1)} \log (\wmax)\right )$. Overall, if Condition (C1) holds and $U\cap L\neq\emptyset$, then the running time of our algorithm is $O\left ((m\cdot n^{1-\eps+o(1)}+n\lambda^2) \log (\wmax)\right )\leq O\left ((m\cdot n^{1-\eps+o(1)}+n^{1+2\eps}) \log (\wmax)\right )$, and it is guaranteed to return the global minimum vertex-cut with probability at least $1/2$. 

Intuitively, it is now sufficient for us to consider the case where $L\cap U=\emptyset$. Formally, our algorithm chooses a random bit $b\in \set{0,1}$. 
If $b=0$, then it executes the algorithm described above, that is designed for the case where $L\cap U\neq \emptyset$. If $b=1$, then it executes the algorithm that we describe below, that is designed for the case where $L\cap U=\emptyset$. We say that a good event $\event'$ happens if either (i) $b=0$ and  $L\cap U\neq \emptyset$; or (ii) $b=1$ and $L\cap U= \emptyset$. Clearly, Event $\event'$ happens with probability $1/2$. It is now enough to design an algorithm for the case where $b=1$ and $L\cap U=\emptyset$. For shorthand, we say that Condition (C2) happens if Condition (C1) happens, and, additionally, $b=1$ and $L\cap U=\emptyset$.
We now focus on designing an algorithm that, if Condition (C2) holds, outputs a global minimum vertex-cut with a sufficiently high probability. Our algorithm will consider every pair $(x,y)\in \Gamma$ of vertices in turn. For each such pair, it will compute a relatively small subset $A_{x,y}\subseteq V(G)$ of vertices, such that, if Condition (C2) holds and $(x,y)$ is the distinguished pair, then $L\subseteq A_{x,y}$ with a sufficiently high probability. This, in turn, will allow us to sparsify the graph $G$, in order to efficiently compute the value of the minimum $x$-$y$ vertex-cut.
For every pair $(x,y)\in \Gamma$ of vertices, we will consider two graphs, $G'_{x,y}$ and $G''_{x,y}$, that we define next. We also need to perform additional preprocessing that will allow us to compute each such graph in turn efficiently, prior to processing the corresponding pair $(x,y)$ of vertices.

\subsubsection*{Graphs $G'_{x,y}$ and $G''_{x,y}$ for Pairs $(x,y)\in \Gamma$, and Valid Shortcut Operations}

\paragraph{Graph $G'_{x,y}$.}
Consider a pair $(x,y)\in \Gamma$ of vertices. Let $G'_{x,y}$ be the graph that is obtained from the split graph $G'$ for the graph $G$ (see the definition in \Cref{subsec: split graph}), by adding a destination vertex $t$ and connecting $y^{\inn}$ to $t$ with a regular edge of capacity $\wmax$; we also denote $x^{\out}$ by $s$, and view it as the source vertex of $G'_{x,y}$.
Note that the value of the  minimum $s$-$t$ edge-cut in $G'_{x,y}$ is equal to the value of the minimum $x$-$y$ vertex-cut in $G$. Moreover, if Condition (C2) holds and $(x,y)$ is the distinguished pair, then the set $E_S=\set{(v^{\inn},v^{\out})\mid v\in S}$ of edges of $G'_{x,y}$ is the minimum $s$-$t$ edge cut in $G'_{x,y}$, and $c(E_S)=\opt$, the value of the global minimum vertex-cut in $G$. 

Over the course of the algorithm, we may modify the graph $G'_{x,y}$ by performing \emph{shortcut operations}, that we define next; this definition is identical to that in \cite{CT24}.

\paragraph{Shortcut Operations.}
Consider a vertex pair $(x,y)\in \Gamma$ and the corresponding graph $G'_{x,y}$.
In a shortcut operation, we add an edge $(v,t)$ of capacity $\wmax$ to graph $G'_{x,y}$, for some vertex $v\in V(G'_{x,y})\setminus \set{s,t}$. All such newly added edges are considered regular edges. If Condition (C2) holds and $(x,y)\in \Gamma$ is the distinguished pair, then the shortcut operation in which an edge $(v,t)$  is inserted into $G'_{x,y}$ is a \emph{valid shortcut operation}, and edge $(v,t)$ is a \emph{valid shortcut edge}, if and only if either (i) $v$ is a copy of a vertex $u\in R$; or (ii) $v$ is an out-copy of a vertex $u'\in S$.  If Condition (C2) does not hold, or if $(x,y)$ is not the distinguished pair, then any such shortcut operation is a valid shortcut operation, and any edge $(v,t)$ is a valid shortcut edge.  We will use the following claim that summarizes the properties of the graph obtained from $G'_{x,y}$ via a sequence of shortcut operations. The claim and its proof for undirected graphs appeared in \cite{CT24}, but since we use it for directed graphs here, we include both the claim and its proof (which is identical to that in \cite{CT24}) here.

\begin{claim}\label{claim: shortcut}
	Let $G''_{x,y}$ be a graph that is obtained from $G'_{x,y}$ via a sequence of shortcut operations. Then the value of the minimum $s$-$t$ edge-cut in $G''_{x,y}$ is at least as high as the value of the minimum $s$-$t$ edge-cut in $G'_{x,y}$. Moreover, if Condition (C2) holds, $(x,y)$ is the distinguished pair, and all shortcut operations in the sequence are valid, then the value of the minimum $s$-$t$ edge-cut in $G''_{x,y}$ is exactly $\opt$.
\end{claim}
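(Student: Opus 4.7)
The plan is to prove the two assertions in turn. For the first, I would observe that every shortcut operation only inserts an edge into the graph and never removes one, so $E(G'_{x,y}) \subseteq E(G''_{x,y})$, and each edge of $G'_{x,y}$ keeps its capacity in $G''_{x,y}$. Consequently, any $s$-$t$ flow $f$ in $G'_{x,y}$ extends to a feasible $s$-$t$ flow of the same value in $G''_{x,y}$ by setting $f(e)=0$ on every inserted shortcut edge. Hence the value of the maximum $s$-$t$ flow in $G''_{x,y}$ is at least that in $G'_{x,y}$, and by the Max-Flow/Min-Cut theorem the same inequality transfers to the minimum $s$-$t$ edge-cut values, giving the first assertion.

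For the second assertion I assume Condition (C2) holds, $(x,y)$ is the distinguished pair, and every shortcut operation in the sequence is valid. The discussion preceding the claim already shows that $E_S=\{(v^{\inn},v^{\out})\mid v\in S\}$ is a minimum $s$-$t$ edge-cut of value $\opt$ in $G'_{x,y}$, so combined with the first part we already have that the minimum $s$-$t$ edge-cut in $G''_{x,y}$ has value at least $\opt$. It therefore suffices to exhibit one $s$-$t$ edge-cut in $G''_{x,y}$ of value at most $\opt$. For this I would use the bipartition $(X,Y)$ of $V(G''_{x,y})$ arising from $(L,S,R)$ via the split-graph construction of \Cref{obs: existential from regular to split}, namely $X = L^{*} \cup S^{\inn} \cup \{s\}$ and $Y = S^{\out} \cup R^{*} \cup \{t\}$. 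Since $x\in L$, we have $s=x^{\out}\in X$, and both $y^{\inn}\in R^{\inn}$ and $t$ lie in $Y$, so $(X,Y)$ is a valid $s$-$t$ cut.

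What remains is to check that the only edges crossing from $X$ to $Y$ in $G''_{x,y}$ are the special edges $(v^{\inn},v^{\out})$ for $v\in S$, which contribute exactly $w(S)=\opt$. I would do a short case analysis on the three kinds of edges in $G''_{x,y}$. A regular edge representing an edge $(u,v)\in E(G)$ would cross $(X,Y)$ only if $u^{\out}\in X$ and $v^{\inn}\in Y$, forcing $u\in L$ and $v\in R$, which is impossible because $(L,S,R)$ is a vertex-cut in $G$. The added edge $(y^{\inn},t)$ lies entirely in $Y$ and so does not cross. Finally, by validity every inserted shortcut edge $(v,t)$ has $v$ either a copy of a vertex in $R$ or an out-copy of a vertex in $S$, hence $v\in R^{*}\cup S^{\out}\subseteq Y$, so no such edge crosses from $X$ to $Y$. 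Summing up, $c(X,Y)=w(S)=\opt$, matching the lower bound from the first part.

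The main obstacle — and the reason the definition of ``valid shortcut'' is set up exactly as it is — is aligning the list of permitted shortcut endpoints with the bipartition $(X,Y)$: the permitted endpoints are precisely those lying in $Y$. A shortcut originating at $u^{\inn}$ with $u\in S$, for instance, would sit on the $X$-side and contribute $\wmax$ to the cut, destroying the argument; validity excludes exactly these cases, so once the alignment is made explicit the case analysis is routine.
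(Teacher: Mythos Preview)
Your proposal is correct and follows essentially the same approach as the paper: use Max-Flow/Min-Cut plus edge insertion for the first part, and for the second exhibit the bipartition $X=L^{*}\cup S^{\inn}$, $Y=S^{\out}\cup R^{*}\cup\{t\}$ and verify that valid shortcut edges never cross it. Your case analysis is slightly more explicit than the paper's, and your inclusion of $\{s\}$ in $X$ is harmless redundancy since $s=x^{\out}\in L^{\out}\subseteq L^{*}$ already.
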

\begin{proof}
	From the Max-Flow/Min-Cut theorem, the value of the minimum $s$-$t$ edge-cut in $G'_{x,y}$ is equal to the value of the maximum $s$-$t$ flow in $G'_{x,y}$. The insertion of edges into $G'_{x,y}$ may not decrease the value of the maximum $s$-$t$ flow. Therefore, the value of the minimum $s$-$t$ edge-cut in $G''_{x,y}$ is at least as high as that in $G'_{x,y}$.

	Assume now that Condition (C2) holds and that $(x,y)$ is the distingiushed pair in $\Gamma$.  Let $E_S=\set{(v^{\inn},v^{\out})\mid  v\in S}$ be the set of special edges in graph $G'_{x,y}$ that represent the vertices of $S$.
	Let $X=L^*\cup S^{\inn}$ and let $Y=V(G'_x)\setminus X$. Note that $(X,Y)$ is an $s$-$t$ cut in $G'_{x,y}$. Clearly, $E_{G'_{x,y}}(X,Y)=E_S$, and so the edges of $E_S$ separate the vertices of $X$ from the vertices of $Y$ in $G'_{x,y}$. Moreover, since $(L,S,R)$ is a  minimum vertex-cut in $G$,  $\sum_{e\in E_S}c(e)=\sum_{v\in S}w(v)=\opt$. 
	Since $s\in X$ and $t\in Y$, and since the valid shortcut edges that we add all connect vertices of $Y$ to $t$, set $E_S$ of edges also separates $X$ from $Y$ in $G''_{x,y}$. Therefore, the value of the minimum $s$-$t$ edge-cut in $G''_{x,y}$ is $\sum_{e\in E_S}c(e)=\opt$.
\end{proof}

\paragraph{Initial Shortcuts and Graphs $G''_{x,y}$.}

For every pair $(x,y)\in \Gamma$, we define a new graph $G''_{x,y}$, that is obtained from graph $G'_{x,y}$ by adding,
for every high-degree vertex $v\in V(G)$, a shortcut edge $(v^{\out},t)$ of capacity $\wmax$ to it; these newly added edges are considered to be regular edges. Note that all these newly inserted edges are valid shortcut edges, from Property \ref{prop: small volume of L} and the definition of high-degree vertices (see \Cref{def: high-deg}).
Therefore, the value of the maximum $s$-$t$ flow in $G''_{x,y}$ is at least $\opt$, and, if Condition (C2) holds and $(x,y)$ is the distinguished pair, then this value is equal to $\opt$. 
We denote by $\opt_{x,y}$ the value of the maximum $s$-$t$ flow in $G''_{x,y}$. From the above discussion, $\opt_{x,y}\geq \opt$, and, if 
 Condition (C2) holds and $(x,y)$ is the distinguished pair, then $\opt_{x,y}=\opt$.

%

\paragraph{Efficient Construction of the graphs $G''_{x,y}$.}
Our algorithm will process every pair $(x,y)\in \Gamma$ one by one. When a pair $(x,y)$ is processed, we will consider the corresponding graph $G''_{x,y}$ and perform various computations in it, that require access to the modified adjacency-list representation of $G''_{x,y}$. However, since $|\Gamma|$ may be as large as $\Omega(n)$, we cannot afford to spend $\Omega(m)$ time to construct the modified adjacency-list representation of each such graph $G''_{x,y}$ from scratch. Instead, we will construct a ``universal'' graph $G^*$, from which one can compute the modified adjacency-list representation of each of the graphs $G''_{x,y}$ for $(x,y)\in \Gamma$ in turn efficiently.

Graph $G^*$ is defined to be the split-graph with shortcuts $G''(B)$, where the set $B$ contains the out-copy of every high-degree vertex of $G$. Notice that, given $G$, the modified adjacency-list representation of $G^*$ can be constructed in time $O(m+n)$.

Consider now any vertex pair $(x,y)\in \Gamma$. Notice that graph $G''_{x,y}$ can be equivalently defined as the graph that is obtained from $G^*$, by adding the edge $(y^{\inn},t)$ to it. The  modified adjacency-list representation of $G^*$ can be adapted in time $O(1)$ to obtain the modified adjacency-list representation of $G''_{x,y}$ from it.
To summarize, in the preprocessing step, we construct the modified adjacency-list representation of the graph $G^*$, in time $O(m+n)$. After that, we process every pair $(x,y)\in \Gamma$ one by one. When pair $(x,y)\in \Gamma$ is processed, we compute the modified adjacency-list representation of $G''_{x,y}$, from that of $G^*$, in time $O(1)$. Once we finish processing the pair $(x,y)$, we undo all the changes made to the adjacency-list representation of $G^*$ and proceed to process the next pair. The algorithm for processing every pair in $\Gamma$ is summarized in the following lemma.

\begin{lemma}\label{lem: Case 1}
	There is a randomized algorithm, that is given as input a directed $n$-vertex and $m$-edge graph $G$ with integral weights $w(v)\geq 1$ on its vertices $v\in V(G)$, whose average vertex degree is denoted by $d$, and  parameters $0<\eps<1$, $1\leq \tau\leq \wmax$, $1\leq \lambda\leq 2n^{\eps}$, and $\rho\geq \lambda$. Additionally, the algorithm is given a pair $(x,y)$ of distinct vertices of $G$ with $(x,y)\not\in E(G)$, and two copies of the corresponding graph $G''_{x,y}$ in the modified adjacency-list representation. Finally, the algorithm is given the set $U\subseteq \Vld$ of suspicious vertices, and, for every vertex $u\in\Vld$, the corresponding vertex set $\sigma(u)$. The algorithm returns two subsets $L',S'\subseteq V(G)$ of vertices of $G$, such that, if we denote by $R'=V(G)\setminus (L'\cup S')$, then $(L',S',R')$ is a vertex-cut in $G$.
	It ensures that, if all of the following conditions hold:
	\begin{itemize}
		\item there is a global minimum vertex-cut $(L,S,R)$ for which Properties \ref{prop: small L}-\ref{prop: small volume of L} hold;
		
		\item $\frac{\lambda}{2}\leq |L|\leq \lambda$;
		
		\item $\tau=\tau^*$, where $\tau^*$ is the critical threshold for $(L,S,R)$;
		\item $|\hat S|\leq \rho$, where $\hat S=S\cap \Vld\cap V^{\geq \tau'}$ and $\tau'=\frac{\tau}{64\lambda^2}$; 
		\item no vertex of $L$ is suspicious; and
		\item $x\in L$, and $y\in R$,
	\end{itemize}

then, with probability at least $\half$, $(L',S',R')$ is a global minimum vertex-cut in $G$. The running time of the algorithm is $ O(d^2\cdot n^{3\eps+o(1)}\cdot \lambda^3\cdot\rho^2\cdot \log(\wmax))$.
\end{lemma}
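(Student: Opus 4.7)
The plan is to build the set $A_{x,y}$ of vertices in two stages: first compute a small \emph{promising} vertex set $Z$ in the graph $G''_{x,y}$ via a length-restricted local flow augmentation, and then expand it to $A_{x,y}$ using the $\sigma$-sets of non-suspicious in-neighbors of $Z$. The output cut $(L',S',R')$ will then be obtained by forming the derived graph $G^{|A_{x,y}}$ and applying \Cref{cl: finding global mincut using defined graph} with $\alg$ from \Cref{cor: min_vertex_cut}. Recall $\hat\tau=\tau/(8\lambda)$ and $\tau'=\tau/(64\lambda^2)$; the target is that $Z$ contain every low-degree vertex of $S$ of weight at least $\hat\tau$, as this is what the analysis of Overview Case~2 identifies as the promising-set property.

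I would first compute, in the restriction $G^h$ of $G''_{x,y}$ to the copies of vertices of weight at least $\tau'$, an initial integral $s$-$t$ flow $f$ such that $f(v^{\inn},v^{\out})=0$ for every $v\in\hat S$ and $\val(f)\ge \opt_{x,y}-2|\hat S|\cdot\tau$. This can be done by temporarily deleting the special edges representing $\hat S$, applying \Cref{thm: maxflow}, and then setting $f$ to zero on those edges; the approximation quality follows from $|\hat S|\le\rho$ and the fact that each deleted edge has capacity at most $\tau$. I would then run a bounded local flow augmentation of $O(\lambda^2\rho)$ iterations, where each iteration performs a length-restricted BFS in the residual network $H=G^h_f$ from $s$ (truncated at depth $16\lambda$), and pushes $\tau'$ units of flow along the found path, terminating early if no such path exists. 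The set $Z$ is then formed by collecting every low-degree $v\in V(G)$ whose $v^{\out}$ is reachable from $s$ in the final residual network via a path of length at most $16\lambda$, together with every low-degree $v$ whose special edge $(v^{\inn},v^{\out})$ appeared on some augmenting path.

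The main technical claim is that, under the lemma's hypothesis (and in particular that $(x,y)$ is the distinguished pair), every $v\in S\cap\Vld\cap V^{\ge\hat\tau}$ ends up in $Z$. The argument has two cases. If the special edge of $v$ ever carried flow during the augmentation, then $v\in Z$ by construction. Otherwise $f(v^{\inn},v^{\out})=0$ at the end, so by max-flow duality in the original (unbounded) residual network of $G'_{x,y}$ we can route $w(v)\ge\hat\tau>0$ residual flow units from $s$ to $v^{\inn}$ using only out-copies of $L$ and in-copies of $L\cup S$; restricting this routing to $G^h$ (which discards only vertices of weight $<\tau'$, a loss already accounted for in the slack $|\hat S|\tau$), and exploiting $|L|\le\lambda$, forces a simple residual path from $s$ to $v^{\out}$ of length at most $16\lambda$, so $v\in Z$. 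The second key ingredient is that $A_{x,y}:=\{x\}\cup\bigcup_{u\in Z}\bigcup_{u'\in N^-_G(u)\cap\Vld\setminus U}\sigma(u')$ contains $L$: for each $v\in L$ the modified $\beta(v)$ is a set of $\le 2\lambda$ low-degree vertices of weight $\ge\hat\tau$; since $\beta(v)$ cannot be entirely inside $L$ (it must touch the boundary $S$, whose relevant low-degree weight-$\ge\hat\tau$ vertices lie in $Z$), there is some $u'\in\beta(v)\setminus L$, necessarily non-suspicious (by the hypothesis), with an out-edge to $Z$, so $v\in\sigma(u')\subseteq A_{x,y}$.

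For the running time, the non-suspicious condition bounds $|\sigma(u')|\le 2n^{\eps}\lambda$, and each $u\in Z$ contributes at most $\deg^-_G(u)\le d\cdot n^{\eps}$ in-neighbors, so $|A_{x,y}|\le |Z|\cdot d\cdot n^{\eps}\cdot 2n^{\eps}\lambda$ and $\vol^+_G(A_{x,y})\le|A_{x,y}|\cdot d\cdot n^{\eps}$; bounding $|Z|$ by the vertices explored in the length-restricted BFSs and the edges on augmenting paths, both of which are governed by $O(\lambda^2\rho)$ iterations each touching $O(\lambda\cdot d n^{\eps})$ edges, one obtains the target bound $O(d^2 n^{3\eps+o(1)}\lambda^3\rho^2\log\wmax)$, with the final min-cut step on $G^{|A_{x,y}}$ done in time $O(|E(G^{|A_{x,y}})|^{1+o(1)}\log\wmax)$ via \Cref{cor: min_vertex_cut}. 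I expect the hardest step to be the second case of the promising-set argument: carefully showing that routing residual flow to $v^{\inn}$ stays within $L^{\out}\cup(L\cup S)^{\inn}$ even after the $\tau'$-truncation to $G^h$, and that the path-length bound $16\lambda$ suffices; this depends on a delicate accounting of how far below $\opt_{x,y}$ the initial flow value can be, and on the submodular-style argument that any residual short path must traverse out-copies of $L$ only.
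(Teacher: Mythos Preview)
Your overall architecture matches the paper's, but two of the steps have genuine gaps that prevent the argument from going through as written.

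\textbf{The initial flow cannot be computed as you describe.} You propose to ``temporarily delete the special edges representing $\hat S$'' and run \Cref{thm: maxflow}. But $\hat S = S\cap\Vld\cap V^{\ge\tau'}$ depends on the unknown cut $(L,S,R)$, so you cannot identify these edges. Moreover, even granting oracle access to $\hat S$, applying max-flow to $G^h$ (which has $\Theta(m)$ edges) would already exceed the target running time. The paper's fix is to remove the copies of \emph{all} low-degree vertices of weight at most $\tau$ --- a criterion that is known --- obtaining a graph $\hat G\subseteq G^h$. Under the lemma's hypothesis, the set of vertices reachable from $s$ in $\hat G$ has only $O(dn^\eps)$ edges (because low-degree vertices reached must lie in $L$), so max-flow can be run on this tiny reachable subgraph. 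The property $f(v^{\inn},v^{\out})=0$ for $v\in\hat S$ then follows automatically, since the critical-threshold hypothesis forces $w(v)\le\tau$ for every $v\in\hat S$, so its copies were removed.

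\textbf{The length-restricted BFS has no size bound and no source of randomness.} You say each iteration ``touches $O(\lambda\cdot dn^{\eps})$ edges,'' but a BFS to depth $16\lambda$ in the residual network can explore far more: the number of low-degree out-vertices reached can grow unboundedly with depth. The paper controls this with a third termination condition: if the BFS discovers $\ge 16z\lambda$ out-vertices (with $z=\Theta(\lambda^2\rho)$), it picks one uniformly at random, adds a shortcut edge from it to $t$, and augments along that path. This both (i) caps each iteration at $O(z\lambda\cdot dn^\eps)$ edges (since reaching a high-degree vertex already yields $t$), and (ii) is the \emph{only} source of randomness, accounting for the ``with probability at least $\tfrac12$'' clause --- the bad event is that some shortcut lands on a vertex of $L$, which happens with probability at most $|L|/(16z\lambda)$ per shortcut. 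Your proposal has no mechanism producing randomness at all.

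A smaller issue: in your argument that $L\subseteq A_{x,y}$, you take $u'\in\beta(v)\setminus L$ and assert it is non-suspicious ``by the hypothesis.'' The hypothesis only says vertices \emph{in} $L$ are non-suspicious. The paper's argument instead locates a vertex $u'\in\beta(v)\cap L$ with an out-edge into $Z$ (via a flow-decomposition path from $v$ through $L$ to a heavy vertex of $S$), which is why non-suspiciousness applies.
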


We prove \Cref{lem: Case 1} below, after we complete the proof of \Cref{lem: case 1 main} using it. Recall that the running time of the preprocessing step is $O(m+n\lambda^2)$. If the bit $b=0$, then the additional time that the algorithm takes is $O\left ((m\cdot n^{1-\eps+o(1)}+n^{1+2\eps}) \log (\wmax)\right )$. Otherwise, for every pair $(x,y)\in \Gamma$, we apply the algorithm from \Cref{lem: Case 1}, that returns the sets $L_{x,y},S_{x,y}\subseteq V(G)$ of vertices. We select a pair $(x,y)\in \Gamma$ for which $w(S_{x,y})$ is the smallest, and return the cut $(L_{x,y},S_{x,y},R_{x,y})$, where $R_{x,y}=V(G)\setminus (L_{x,y}\cup S_{x,y})$ as the algorithm's outcome. Since $|\Gamma|\leq \tilde O\left(\frac{n}{\lambda}\right)$, and since the running time of the algorithm from  \Cref{lem: Case 1} is $O(d^2\cdot n^{3\eps+o(1)}\cdot \lambda^3\cdot \rho^2\cdot\log(\wmax))$, we get that the total running time of the entire algorithm is:

\[
\begin{split}
&O\left ((m\cdot n^{1-\eps+o(1)}+n^{1+2\eps}+nd^2\cdot n^{3\eps+o(1)}\cdot \lambda^2\cdot\rho^2)\cdot \log (\wmax)\right )  \\
&\quad\quad\quad\quad\quad\quad\leq O\left ((m\cdot n^{1-\eps+o(1)}+md\cdot n^{5\eps+o(1)}\cdot \rho^2)\cdot\log (\wmax)\right )
\end{split}\]

Note that from the conditions of \Cref{lem: case 1 main}, we can set $\rho=2^{12}\lambda^3\leq O(n^{3\eps})$. The running time of the algorithm then becomes $O\left ((m\cdot n^{1-\eps+o(1)}+md\cdot n^{11\eps+o(1)})\cdot\log (\wmax)\right )$, as required.

Assume that Condition (C1) holds, and that Event $\event'$ happened. If $b=0$, then our algorithm is guaranteed to return a global minimum vertex-cut with probability at least $\half$. Otherwise, when the algorithm from  \Cref{lem: Case 1} is applied to the distinguished pair $(x^*,y^*)$, the corresponding cut $(L_{x^*,y^*},S_{x^*,y^*},R_{x^*,y^*})$ is guaranteed to be optimal with probability at least $\half$. Since $\prob{\event'}=\half$, we get that, overall, if Condition (C1) holds, then the cut that the algorithm returns is a global minimum vertex-cut with probability at least $\frac{1}{4}$.

In order to complete the proof of \Cref{lem: case 1 main}, it is now enough to prove \Cref{lem: Case 1}, which we do in the remainder of this subsection.

\subsubsection{Proof of \Cref{lem: Case 1}}
We assume that we are given a directed $n$-vertex and $m$-edge graph $G$ with integral weights $w(v)\geq 1$ on its vertices $v\in V(G)$, and  parameters $0<\eps<1$, $1\leq \tau\leq \wmax$, $1\leq \lambda\leq 2n^{\eps}$, and $\rho\geq \lambda$, together with a pair $(x,y)$ of distinct vertices of $G$ with $(x,y)\not\in E(G)$, and two copies of the corresponding graph $G''_{x,y}$ in the modified adjacency-list representation. We also assume that we are given the set $U\subseteq \Vld$ of suspicious vertices, and, for every vertex $u\in\Vld$, the corresponding vertex set $\sigma(u)$.

Given a vertex-cut $(L,S,R)$ in $G$, we say that Condition (C3) holds with respect to $(L,S,R)$, if all of the following hold:

\begin{itemize}
	\item  Properties \ref{prop: small L}-\ref{prop: small volume of L} hold for $(L,S,R)$;
	
	\item $\frac{\lambda}{2}\leq |L|\leq \lambda$;
	
	\item $\tau=\tau^*$, where $\tau^*$ is the critical threshold for $(L,S,R)$;
	\item $|\hat S|\leq \rho$, where $\hat S=S\cap \Vld\cap V^{\geq \tau'}$ and $\tau'=\frac{\tau}{64\lambda^2}$; 
	\item no vertex of $L$ is suspicious; and
	\item $x\in L$, and $y\in R$.
\end{itemize}

As before, for the sake of the analysis, we fix a distinguished global minimum vertex-cut $(L,S,R)$ in $G$, as follows.
If there is a global minimum vertex-cut in $G$ for which Condition (C3) holds, then we let any such cut be the distinguished min-cut $(L,S,R)$ and we say that it is good; otherwise, we let the distinguished min-cut be any global minimum vertex-cut in $G$, and we say that it is bad. In the remainder of this section, whenever we say that Condition (C3) holds, we mean that it holds with respect to the distinguished min-cut.

Note that, if $x$ is a high-degree vertex, then Condition (C3) may not hold. In this case, we let $L'\subseteq V(G)$ be a vertex set that contains a single arbitrary low-degree vertex $v$ of $G$ and $S'=N^+_G(v)$, and return $L',S'$, terminating the algorithm. Therefore, we assume from now on that $x$ is a low-degree vertex.

Since the pair $(x,y)\in \Gamma$ is fixed from now on, we denote the graph $G''_{x,y}$ by $G''$ in the remainder of the proof.
A central notion that our algorithm uses is that of a \emph{promising vertex set}, that we define next. Recall that we have defined a parameter $\hat \tau=\frac{\tau}{8\lambda}$, that was used in the definition of the vertex sets $\beta(v)$ for $v\in V(G)$.

\begin{definition}\label{def: promising vertex set}
	Let $Z\subseteq V(G)$ be a set of vertices. We say that $Z$ is a \emph{promising vertex set} if every vertex in $Z$ is a low-degree vertex, and moreover, every low-degree vertex $v\in S$ with $w(v)\geq 
	\hat \tau$ lies in $Z$.
\end{definition}

Our algorithm will compute a relatively small set $Z$ of vertices of $G$, such that, if Condition (C3) holds, then, with probability at least $\half$, $Z$ is a promising set of vertices. Before we provide the algorithm for computing such a set $Z$, we first provide an efficient algorithm for computing the vertex sets $L',S'\subseteq V(G)$ with the required properties, given $Z$.

\subsubsection*{Computing the Vertex Sets $L',S'$ given $Z$}

Our first step is to compute a relatively small set $A\subseteq V(G)$ of vertices, that has the following property: if Condition (C3) holds and $Z$ is a promising vertex set, then $L\subseteq A$.

\paragraph{Computing the vertex set $A$.}
Recall that our algorithm is given as input the set $U\subseteq \Vld$ of suspicious vertices, and, for every vertex $u\in \Vld$, the corresponding vertex set $\sigma(u)$. If $Z$ contains a high-degree vertex then it may not be a promising set by definition. In this case,  we terminate the algorithm and return a vertex set $L'\subseteq V(G)$ that contains a single arbirary low-degree vertex $v$ and $S'=N^+_G(v)$.  Therefore, we assume from now on that every vertex in $Z$ is a low-degree vertex. Recall that we also assumed that $x$ is  a low-degree vertex.
We start with $A=\set{x}$, and then process every vertex $v\in Z$ one by one. When a vertex $v\in Z$ is processed, we consider every vertex $u\in N^-_G(v)\cap \Vld$ one by one. For each such vertex $u$, we check whether $u$ is suspicious (that is, whether $u\in U$), and, if this is not the case, we add all low-degree vertices in $\sigma(u)$ to $A$. This completes the algorithm for computing the set $A$ of vertices.

Recall that every vertex $v\in Z$ is a low-degree vertex, so $|N^-_G(v)|\leq d\cdot n^{\eps}$. By definition, for every non-suspicious vertex $u\in\Vld\setminus U$, $|\sigma(u)|\leq 2n^{\eps}\cdot \lambda$. Therefore, $|A|\leq 2|Z|\cdot d\cdot n^{2\eps}\cdot \lambda$, and the time required to compute the set $A$ is bounded by $O\left (|Z|\cdot d\cdot n^{2\eps}\cdot \lambda\right )$.
The following claim summarizes the central property of the resulting set $A$.

\begin{claim}\label{claim: property of A}
Assume that Condition (C3) holds and that $Z$ is a promising vertex set. Assume further that $S$ contains at least one low-degree vertex $v$ of weight at least $\frac{\tau}{2}$. Then $L\subseteq A$.
\end{claim}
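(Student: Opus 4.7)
The plan is to show that every $w \in L$ is added to $A$ by the algorithm. The case $w = x$ is immediate from the initialization $A = \{x\}$, using that $x \in L$ under (C3). For $w \in L \setminus \{x\}$, it suffices to exhibit a witness pair $(v, u)$ with $v \in Z$, $u \in N^-_G(v) \cap \Vld \setminus U$, and $u \in \beta(w)$: such a pair forces the algorithm, upon processing $v$ and its in-neighbor $u$, to add every low-degree member of $\sigma(u)$ to $A$; since $u \in \beta(w)$ gives $w \in \sigma(u)$, and $w \in \Vld$ because $L \subseteq \Vld$ (which follows from Property \ref{prop: small volume of L} under (C3)), this puts $w$ into $A$.

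The heart of the proof is the structural claim: for every $w \in L$, the set $\beta(w)$ contains a low-degree non-suspicious vertex with an out-neighbor in $Z$. Two facts will be used throughout. First, $\beta(w) \subseteq L \cup S$, because the DFS starts at $w \in L$ and cannot reach $R$ without crossing $S$. Second, any $S$-vertex in $\beta(w)$ lies automatically in $Z$, since the DFS only explores vertices of $\Vld$ of weight at least $\hat\tau$ and $Z$ is promising. The argument then splits on whether $\beta(w) \cap S$ is empty. In the case $\beta(w) \cap S \neq \emptyset$, I let $v^{**}$ be the first $S$-vertex encountered in the DFS order and let $u$ be its predecessor in the DFS tree; then $u \in \beta(w) \cap L$ (otherwise $v^{**}$ would not be the first $S$-vertex), so $u \in \Vld$ by Property \ref{prop: small volume of L} and $u \notin U$ because no vertex of $L$ is suspicious under (C3), while $v^{**} \in Z$. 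Thus $(v^{**}, u)$ is the desired witness.

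The remaining case $\beta(w) \subseteq L$ is where the hypothesis on $v^*$ is essential, and it constitutes the main obstacle. Here $|L| \leq \lambda < 2\lambda$ forces the DFS to terminate by exhausting its reachability rather than by hitting its $2\lambda$-vertex cap, so $\beta(w)$ is the entire out-reachability closure of $w$ in the subgraph of $G$ on low-degree vertices of weight at least $\hat\tau$ (together with $w$). The assumption on $v^*$ provides $v^* \in Z$, since $v^*$ is low-degree with $w(v^*) \geq \tau/2 \geq \hat\tau$; and since $S = N^+_G(L)$ under (C3), there exists $v^\dagger \in L$ with $(v^\dagger, v^*) \in E$, so the aim becomes to exhibit some such $v^\dagger$ inside $\beta(w)$, which would serve as the witness by being low-degree, non-suspicious, and an in-neighbor of $v^* \in Z$. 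I plan to approach this by leveraging the minimality of $(L,S,R)$: the tripartition induced by $\beta(w)$ is a $w$-to-$y$ vertex-cut whose separator $N^+_G(\beta(w)) \setminus \beta(w)$ must have weight at least $\opt \geq w(v^*) \geq \tau/2$, while the out-closure property of $\beta(w)$ forces this separator to avoid low-degree high-weight $S$-vertices outside $\beta(w)$. A careful counting argument, combined with the non-suspiciousness of every vertex in $L$, should then exhibit an $L$-vertex of $\beta(w)$ with an out-edge to $Z$. Making this last step tight while respecting the quantitative thresholds will be the technical crux.
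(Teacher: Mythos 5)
Your Case~1 (when $\beta(w)\cap S\neq\emptyset$) is essentially the paper's argument and is fine (the blanket ``fact'' $\beta(w)\subseteq L\cup S$ is actually false here, since the DFS threshold is $\hat\tau$ rather than $\tau$ and can pass through a low-degree $S$-vertex of weight $\geq\hat\tau$ into $R$; but your first-$S$-vertex-in-discovery-order argument survives without it). The genuine gap is the other case, which you yourself flag as the crux and only sketch. The paper closes it with a dedicated flow argument (its Observation~\ref{obs: the path}): there is a path from $z$ to the heavy low-degree vertex $v^*\in S$ all of whose inner vertices lie in $L$ and have weight at least $\hat\tau$. The proof routes a $z$-to-$R$ flow of value $w(S)$, isolates the flow through $v^*$ (which is exactly $w(v^*)\geq\tau/2$), and observes that flow-paths using a light inner $L$-vertex carry at most $|L|\cdot\hat\tau\leq\lambda\hat\tau=\tau/8<\tau/2$; hence some flow-path to $v^*$ uses only heavy $L$-vertices. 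Since in this case the DFS exhausted its reachability, that whole path lies in $\beta(w)$, and its penultimate vertex is the desired low-degree, non-suspicious in-neighbor of $v^*\in Z$. Nothing in your proposal supplies this quantitative comparison between $w(v^*)$ and the total weight of light $L$-vertices, which is the heart of the claim.

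Moreover, the specific counting you propose cannot be made to work. Taking the cut induced by $\beta(w)$ only gives $w\bigl(N^+_G(\beta(w))\bigr)\geq\opt=w(S)$, while the closure property gives $N^+_G(\beta(w))\subseteq\bigl(L\setminus\beta(w)\bigr)\cup\bigl(S\setminus\{\text{low-degree vertices of weight}\geq\hat\tau\}\bigr)$, so the most you can conclude is $w(S)\leq w(L)+w(S)-w(v^*)$, i.e.\ $w(L)\geq\tau/2$. Condition (C3) bounds $|L|\leq\lambda$ but not $w(L)$ (individual $L$-vertices may be arbitrarily heavy), so there is no contradiction and no witness is produced. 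Worse, the stated goal of the counting --- an $L$-vertex of $\beta(w)$ with an out-edge to $Z$ --- is unreachable within your Case~2 as framed: by exhaustiveness of the DFS, no out-neighbor of $\beta(w)$ outside $\beta(w)$ is a low-degree vertex of weight $\geq\hat\tau$, and such vertices are the only ones guaranteed to lie in $Z$; since $\beta(w)\cap S=\emptyset$, none lies inside $\beta(w)$ either. The correct resolution is that this case is in fact vacuous --- the path forces $v^*$ (an $S$-vertex) into $\beta(w)$ --- but establishing that is exactly the missing flow (or equivalent cut) argument, e.g.\ noting that otherwise the set consisting of all light $L$-vertices together with $S\setminus\{v^*\}$ would separate $z$ from $y$ with weight at most $\lambda\hat\tau+w(S)-w(v^*)<\opt$, a contradiction.
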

\begin{proof}
	Assume that Condition (C3) holds, and that $Z$ is a promising vertex set, so, in particular,  every vertex $v\in Z$ is a low-degree vertex. Let $Z'= \left(N^-_G(Z)\cap\Vld\right )\setminus U$ contain all low-degree in-neighbors of the vertices of $Z$ that are non-suspicious. Consider now some vertex $z\in L$ and recall that, from Property \ref{prop: small volume of L} it must be a low-degree vertex. Note that, if $\beta(z)$ contains some vertex $u\in Z'$, then $z\in \sigma(u)$, and so $z\in A$ must hold. Therefore, it is enough to prove that $\beta(z)\cap Z'\ne \emptyset$.

	Consider now the low-degree vertex $v\in S$ with $w(v)\geq \frac{\tau}{2}$. Since $\hat \tau=\frac{\tau}{8\lambda}$, from the definition of the promising vertex set, $v\in Z$ must hold.
	We use the following observation to complete the proof:
	
	\begin{observation}\label{obs: the path}
		There is a path $P$ connecting $z$ to $v$ in $G$, such that every inner vertex on $P$ lies in $L$ and has weight at least $\hat \tau$.
	\end{observation}
We provide the proof of \Cref{obs: the path} below, after we complete the proof of \Cref{claim: property of A} using it. Consider the set $\beta(z)$ of vertices, and assume first that $|\beta(z)|=2\lambda$. Since $|L|\leq \lambda$, $\beta(z)$ must contain some vertex $u\in S$.
From the definition of the set $\beta(z)$ of vertices, there is a path $P'$ in $G$ connecting $z$ to $u$, such that every vertex in $V(P')\setminus\set{z}$ is a low-degree vertex of weight at least $\hat \tau$ that lies in $\beta(z)$. We let $u'$ be the first vertex on $P'$ that lies in $S$, and we let $u''$ be the vertex preceding $u'$ on $P'$. Then $u''\in L$ must hold, and in particular it is not suspicious, from Condition (C3).
Since $u'\in \beta(z)$, and since every vertex in $\beta(z)$ is a low-degree vertex of weight at least $\hat \tau$, from the definition of a promising vertex set, we get that $u'\in Z$ must hold.  But then $u''\in Z'$ and $z\in \sigma(u'')$, so $z\in A$ must hold.

Finally, assume that $|\beta(z)|<2\lambda$. Then, by the construction, $\beta(z)$ must contain all vertices that lie on the path $P$ given by \Cref{obs: the path} (since, from Condition (C3), all vertices of $L$ are low-degree). If we denote by $v'$ the penultimate vertex on $P$, then $v'$ lies in $L$, so it must be a low-degree and a non-suspicious vertex. Therefore, $v'\in Z'$ and $z\in \sigma(v')$ must hold, so $z\in A$. In order to complete the proof of \Cref{claim: property of A}, it is now enough to prove \Cref{obs: the path}.

\begin{proofof}{\Cref{obs: the path}}
	Let $z'$ be any vertex of $R$. From the properties of the global minimum vertex-cut, there is a flow $f$ in $G$ from $z$ to $z'$ of value $\opt=w(S)$, that obeys all vertex capacities. In particular, for every vertex $u\in S$, the total amount of flow that $f$ sends through $u$ is equal to $w(u)$. 

Consider now a flow-path decomposition $\pset$ of $f$. It is easy to verify that every path $P\in\pset$ contains exactly one vertex of $S$, that we denote by $u(P)$. Moreover, if we consider the subpath $P'$ of $P$ from $z$ to $ u(P)$, then all inner vertices on $P'$ lie in $L$. Let $\pset'\subseteq \pset$ be the collection of all paths $P$ with $u(P)=v$, and let $\pset''=\set{P'\mid P\in \pset'}$. Then $\sum_{P\in \pset'}f(P)=w(v)$, and  every path $P'\in \pset''$ originates at $z$, terminates at $v$, and only contains vertices of $L$ as inner vertices.

We partition the set $\pset''$ of paths into two subsets: set $\pset''_1$ contains all paths $P'$ such that some inner vertex of $P'$ has weight below $\hat \tau$; and $\pset''_2$ contains all other paths of $\pset''$. Since $|L|\leq \lambda$, we get that $\sum_{\stackrel{P\in \pset':}{P'\in \pset''_1}}f(P)\leq \hat \tau\cdot \lambda$. Since $\sum_{P\in \pset'}f(P)=w(v)\geq \frac{\tau}{2}\geq 2\lambda\cdot \hat \tau$ (as $\hat \tau=\frac{\tau}{8\lambda}$), we get that $\pset''_2\neq\emptyset$. 

Let $P'$ be any path in $\pset''_2$. Then from the definition, $P'$ starts at $z$ and terminates at $v$, and moreover, every inner vertex $a$ on $P'$ lies in $L$ and has weight $w(a)\geq \hat \tau$.
\end{proofof}
\end{proof}

Recall that our algorithm ensures that $x\in A$. If $y\in A$ holds, then we delete $y$ from $A$; if Condition (C3) holds and $Z$ is a promising vertex set, then $L\subseteq A$ still continues to hold. Additionally, if some vertex $v\in N_G^-(y)$ lies in $A$, then we delete $v$ from $A$.  As before, if Condition (C3) holds and $Z$ is a promising vertex set, then $L\subseteq A$ still continues to hold (since $y\in R$, so it cannot have in-neighbors that lie in $L$).
Moreover, since $x\neq y$ and $(x,y)\not \in E(G)$, $x\in A$ continues to hold. Therefore, we assume from now on that $x\in A$, $y\not\in A$ and $y\not\in N_G^+(A)$.


Next, we construct the graph $G^{|A}$ that is derived from $G$ via the vertex set $A$ (see \Cref{def: defined graph}).
Since every vertex of $A$ is a low-degree vertex, it follows from \Cref{claim: properties of defined graph} that $|V(G^{|A})|+|E(G^{|A})|\leq O\left(|A|\cdot d\cdot n^{\eps}\right)\leq O\left(|Z|\cdot d^2\cdot n^{3\eps}\cdot \lambda\right)$, and that $G^{|A}$ can be constructed in time 
$O\left(|A|\cdot d\cdot n^{\eps}\right)\leq O\left(|Z|\cdot d^2\cdot n^{3\eps}\cdot \lambda\right)$.
Our next step is to compute a minimum $x$-$t$ vertex-cut $(\hat L,\hat S,\hat R)$ in $G^{|A}$ using the algorithm from \Cref{cor: min_vertex_cut}, in time $O\left(|E(G^{|A})|^{1+o(1)}\cdot \log(\wmax)\right )\leq O\left(|Z|\cdot d^2\cdot n^{3\eps+o(1)}\cdot \lambda\cdot \log(\wmax)\right)$. (Note that $(x,t) \notin E(G^{|A})$, since $x \in A$.)
We then output the vertex sets $L'=\hat L$ and $S'=\hat S$ as the algorithm's outcome.
Observe that, by \Cref{claim: transforming cuts from defined graph}, 
if we denote by $R' = V(G) \setminus (L' \cup S')$, then
$(L',S',R')$ is a valid vertex-cut in $G$.

So far we provided an algorithm that, given a set $Z\subseteq V(G)$ of vertices, computes the sets $L',S'\subseteq V(G)$ of vertices, such that, if we denote by $R'=V(G)\setminus (L'\cup S')$, then $(L',S',R')$ is a vertex-cut in $G$, in time $O\left(|Z|\cdot d^2\cdot n^{3\eps+o(1)}\cdot \lambda\cdot \log(\wmax)\right)$. Note that, if Condition (C3) holds, $S$ contains at least one low-degree vertex of weight at least $\frac{\tau}{2}$, and $Z$ is a promising vertex-set, then, from \Cref{claim: property of A}, $L\subseteq A$, and from 
\Cref{cl: finding global mincut using defined graph}, $(L',S',R')$ is guaranteed to be a global minimum vertex-cut in $G$.

From the definition of the critical threshold, if Condition (C3) holds and $\tau>1$, then $S$ must contain a low-degree vertex of weight at least $\frac{\tau}{2}$. If $\tau=1$, it is possible that this is not the case, and then every vertex in $S$ must be a high-degree vertex. Therefore, if $\tau=1$, then we use a slightly different strategy. 

We start by performing a DFS search in $G$ starting from vertex $x$, but only using low-degree vertices. If the DFS search terminates before $\lambda+1$ such vertices are discovered, then we let $A$ be the set of all vertices that the DFS search has discovered.
Assume that Condition (C3) holds, and let $\tilde L$ be the set of all vertices of $L$ that are reachable from $x$ via paths that only use vertices of $L$. Denote $\tilde R=R\cup (L\setminus\tilde L)$. Note that in this case, $\tilde L\subseteq A$, and $(\tilde L,S,\tilde R)$ is a global minimum vertex-cut in $G$. We then continue to construct the graph $G^{|A}$ and compute the vertex sets $L',S'$ exactly as before. Observe that, if Condition (C3) holds, then $x\in \tilde L$ and $y\in \tilde R$ must hold, and, using the same reasoning as before, the resulting cut $(L',S',R')$ is guaranteed to be a global minimum vertex-cut in $G$.
It is easy to verify that the running time of the algorithm in this case is bounded by $O(|A|\cdot d\cdot n^{\eps+o(1)}\cdot\log(\wmax))\leq O(dn^{\eps+o(1)}\lambda\cdot\log(\wmax))$. Otherwise, if the DFS search discovers $\lambda+1$ vertices, then we terminate it; in this case, we are guaranteed that either Condition (C3) does not hold, or $S$ contains a low-degree vertex of weight at least $1\geq\frac{\tau}{2}$. We then employ the algorithm that we described above to compute the vertex sets $L',S'$.

To summarize, we now provided an algorithm that, given a set $Z\subseteq V(G)$ of vertices, computes the sets $L',S'\subseteq V(G)$ of vertices, so that,  if we denote by $R'=V(G)\setminus (L'\cup S')$, then $(L',S',R')$ is a vertex-cut in $G$, in time $O((|Z|+1)\cdot d^2\cdot n^{3\eps+o(1)}\cdot \lambda\cdot \log(\wmax))$. Moreover, if Condition (C3) holds and $Z$ is a promising vertex-set, then $(L',S',R')$ is guaranteed to be a global minimum vertex-cut in $G$.

In order to complete the proof of \Cref{lem: Case 1}, it is now enough to provide an algorithm that computes a relatively small set $Z$ of vertices of $G$, such that, if Condition (C3) holds, then, with probability at least $\half$, $Z$ is a promising vertex set. We provide such an algorithm in the following lemma.

\begin{lemma}\label{lem: Case 1 promising}
	There is a randomized algorithm, that is given as input a directed $n$-vertex and $m$-edge graph $G$ with integral weights $w(v)\geq 1$ on its vertices $v\in V(G)$, and  parameters $0<\eps<1$, $1\leq \tau\leq \wmax$, $1\leq \lambda\leq 2n^{\eps}$, and $\rho\geq \lambda$. Additionally, the algorithm is given a pair $(x,y)$ of distinct vertices of $G$ with $(x,y)\not\in E(G)$, and two copies of the corresponding graph $G''_{x,y}$ in the modified adjacency-list representation. The algorithm computes a set $Z\subseteq V(G)$ of vertices with $|Z|\leq O(\lambda^3\rho)$.
	It ensures that, if Condition (C3) holds (that is, all of the following hold:
	\begin{itemize}
		\item there is a global minimum vertex-cut $(L,S,R)$ for which Properties \ref{prop: small L}-\ref{prop: small volume of L} hold;
		
		\item $\frac{\lambda}{2}\leq |L|\leq \lambda$;
		
		\item $\tau=\tau^*$, where $\tau^*$ is the critical threshold for $(L,S,R)$;
		\item $|\hat S|\leq \rho$, where $\hat S=S\cap \Vld\cap V^{\geq \tau'}$ and $\tau'=\frac{\tau}{64\lambda^2}$; 
		\item no vertex of $L$ is suspicious; and
		\item $x\in L$, and $y\in R$),
	\end{itemize}
	
	then, with probability at least $\half$, $Z$ is a promising vertex set for $(L,S,R)$. The running time of the algorithm is $O\left (d\cdot \lambda^5\cdot \rho^2\cdot n^{\eps+o(1)}\cdot \log(\wmax)\right )$, where $d$ is the average vertex degree in $G$.
\end{lemma}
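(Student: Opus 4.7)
\bigskip

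\textbf{Proof plan for \Cref{lem: Case 1 promising}.}

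The plan is to construct $Z$ via a truncated local flow augmentation in the graph $G^h$, defined as the subgraph of $G''_{x,y}$ obtained by deleting all copies of vertices $v$ with $w(v)<\tau'$. The source is $s = x^{\out}$ and the sink is $t$; recall that $G''_{x,y}$ already contains shortcut edges $(v^{\out},t)$ for all high-degree vertices, which, when Condition (C3) holds, are valid shortcut edges (since $L \subseteq \Vld$). The first step is to compute, in almost-linear time via \Cref{thm: maxflow}, an initial integral $s$-$t$ flow $f_0$ in $G^h$, and then reroute it so that $f_0(v^{\inn},v^{\out}) = 0$ for every vertex $v\in V(G)$ whose special edge has been used but whose weight is at most $\hat\tau$; standard flow-rerouting along the special edge pushes the flow onto the short-circuit $(v^{\out},t)$ through some high-degree shortcut or keeps it cancelled, at the cost of at most $2|\hat S|\cdot \hat\tau$ units of total flow value. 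This will yield an $s$-$t$ flow $f$ with $\val(f)\ge \opt_{x,y}-O(|\hat S|\cdot \hat\tau)$, such that $f(v^{\inn},v^{\out})=0$ for every candidate $v\in \Vld$ with $w(v)\geq \hat\tau$ and $w(v) < \tau$ --- in particular for every vertex of $\hat S$ with $w(v)\ge\hat\tau$ (vertices of $\hat S$ with $w(v)\geq \tau$ are high-degree, which cannot happen).

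Next I would perform a bounded-depth local flow augmentation on $G^h$ with parameter $h = 16\lambda$ and quantum $\tau'$, for at most $\Theta(\lambda^2 \cdot \rho)$ iterations. In each iteration, compute a shortest augmenting path from $s$ in the residual network $H = G^h_f$; if one of length at most $16\lambda$ exists, send $\tau'$ units of flow along it and update $f$. Terminate when no such short path exists or the iteration cap is reached. Finally, define $Z$ to consist of every low-degree vertex $v\in V(G)$ such that either (i)~$v^{\out}$ is reachable from $s$ in the final residual network $H$ by a path of length at most $16\lambda$, or (ii)~the special edge $(v^{\inn},v^{\out})$ carried flow at some point during the augmentation. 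The size bound $|Z|=O(\lambda^3\rho)$ will follow from the two bounds: at most $O(\lambda^2\rho)$ augmenting paths, each of length $O(\lambda)$, contribute $O(\lambda^3\rho)$ candidates via (ii); and a standard ball-growing argument around $s$ will bound the vertices counted by~(i) by the total flow absorbed, which is at most $O(\lambda^2\rho)\cdot \tau'$ units, divided by $\tau'$ per unit of cut capacity around the BFS ball, times $O(\lambda)$ layers. The running time is dominated by the augmentation steps, each of which costs $\tilde O(\lambda \cdot d\cdot n^\eps)$ via a truncated BFS in $H$ that only explores low-degree vertices plus the shortcut edges, using the modified adjacency-list representation to skip high-degree copies except through the shortcuts to $t$; multiplied by $O(\lambda^2\rho)$ iterations this gives the claimed $O(d\cdot \lambda^5\cdot \rho^2\cdot n^{\eps+o(1)}\cdot \log(\wmax))$ bound, once logarithmic factors from \Cref{thm: maxflow} are absorbed.

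The correctness argument, assuming Condition (C3) holds, is the heart of the proof. Consider any low-degree $v\in S$ with $w(v)\ge \hat\tau$; we must show $v\in Z$. If $v$'s special edge ever carried flow, (ii) places $v$ in $Z$. Otherwise $f(v^{\inn},v^{\out})=0$ at termination. Since $(L,S,R)$ is a minimum $x$-$y$ vertex-cut of value $w(S)$ and $\val(f)\ge w(S)-O(|\hat S|\cdot \hat\tau)\ge w(S)-\tau/4$, in the residual network of the full graph $G'_{x,y}$ (without weight truncation and without shortcuts) one can still route at least $w(v)/2\ge \hat\tau/2$ units from $s$ to $v^{\inn}$ using only out-copies of $L$ and in-copies of $L\cup S$, by a standard max-flow/min-cut argument applied to the vertex-cut formed by $\set{(u^{\inn},u^{\out}):u\in S\setminus\set{v}}\cup\set{(v^{\inn},v^{\out})}$. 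Decomposing this residual flow into paths and removing those that use special edges of vertices with weight $<\tau'$ (total loss $\le |L|\cdot\tau'\cdot |L|\ll\hat\tau/4$, by Property~\ref{prop: small L}) shows that there is a residual $s$-$v^{\inn}$ path in $G^h$ whose internal special edges all have positive residual capacity; since every such path traverses only in-/out-copy pairs from $L\cup S$, and $|L\cup S|=O(\lambda)$ by Properties~\ref{prop: small L} and~\ref{prop: small S} combined with low-degree restriction, the path has length at most $16\lambda-1$. Thus $v^{\inn}$ is within distance $16\lambda$ of $s$, and the first edge out of $v^{\inn}$ in such a path is necessarily its special edge (no other residual edges leave $v^{\inn}$ in $H$ besides the special one, since shortcut/high-degree edges are attached only to out-copies), hence $v^{\out}$ is reachable from $s$ within $16\lambda$ hops, placing $v$ in $Z$ via~(i).

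The main obstacle is reconciling the three quantitative budgets that must line up simultaneously: the deficit $\val(f_0)-\val(f)$ accrued by the initial rerouting (which controls how much residual flow towards any given $v^{\inn}$ is guaranteed), the length cap $16\lambda$ (which must be large enough to accommodate any residual $s$-$v^{\inn}$ path through $L\cup S$ after removing lost mass), and the iteration/quantum product $\lambda^2\rho\cdot\tau'$ (which must dominate the total slack introduced by augmenting at granularity $\tau'$). I expect the delicate step to be the path-decomposition argument showing that even after the truncation to $G^h$ and after losing some mass to special edges below weight $\tau'$, enough residual flow to $v^{\inn}$ survives through low-weight-but-above-$\tau'$ vertices to guarantee a short residual path; this is where Properties~\ref{prop: small L}--\ref{prop: small volume of L} and the precise choice $\tau'=\tau/(64\lambda^2)$ versus $\hat\tau=\tau/(8\lambda)$ interact most tightly.
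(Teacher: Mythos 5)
Your outline matches the paper's high-level strategy (truncated local augmentation in $G^h$ with paths of length at most $16\lambda$, quantum $\tau'$, and $Z$ defined from augmenting paths plus the short-radius reachable set), and your correctness sketch for why every low-degree $v\in S$ with $w(v)\geq\hat\tau$ lands in $Z$ is essentially the paper's argument. However, there are two genuine gaps. First, your Step 1 computes a maximum $s$-$t$ flow in all of $G^h$ via \Cref{thm: maxflow}; since $G^h$ may still contain $\Theta(m)$ edges, this costs $m^{1+o(1)}$ per pair $(x,y)$, which destroys the claimed bound $O(d\lambda^5\rho^2 n^{\eps+o(1)}\log\wmax)$ (and the whole point of the lemma, which is invoked for up to $\tilde\Omega(n/\lambda)$ pairs). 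The paper instead computes the initial flow only inside the subgraph $\hat G'$ of $G''_{x,y}$ reachable from $s$ after deleting the copies of all \emph{low-degree vertices of weight at most $\tau$} (not $\tau'$); under Condition (C3) this subgraph has size $O(n^{\eps}d)$, and \Cref{claim: current deficit} shows the restriction costs only $2\rho\tau$ in flow value. Your alternative — take a max flow of $G^h$ and "reroute" it off all special edges of weight at most $\hat\tau$ — does not give the deficit bound you claim: low-degree vertices of weight at most $\hat\tau$ that carry flow need not lie in $\hat S$ (they can lie in $R$, say), there is no shortcut edge $(v^{\out},t)$ to reroute onto for low-degree $v$, and cancelling all flow through such vertices can lose up to $n\hat\tau$, not $O(|\hat S|\hat\tau)$.

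Second, your bound on $|Z|$ and on the per-iteration cost does not go through. After the augmentation phase ends, the set of out-copies within residual distance $16\lambda$ of $s$ is not controlled by "total flow absorbed"; a residual network can have an enormous short-radius reachable set while admitting no short augmenting path, so your ball-growing argument bounds neither $|Z|$ nor the BFS work per iteration. The paper resolves exactly this with its randomized shortcut mechanism: each BFS is truncated as soon as it discovers $16z\lambda$ out-copies (with $z=256\lambda^2\rho$), in which case a uniformly random discovered out-copy is shortcut to $t$ — valid with probability $1-O(1/z)$ since $|L|\leq\lambda$ — creating a length-$\leq 16\lambda$ augmenting path; hence the algorithm only stops when the $16\lambda$-ball contains fewer than $16z\lambda=O(\lambda^3\rho)$ out-copies, which is what yields $|Z|\leq O(\lambda^3\rho)$, the per-iteration cost $O(z\lambda\cdot dn^{\eps})$, and the probability-$\tfrac12$ guarantee (the event that all shortcuts are valid). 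Your plan is deterministic and has no analogue of this step, so you cannot certify either the size bound or the running time at termination.
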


We prove \Cref{lem: Case 1 promising} below, after we complete the proof of \Cref{lem: Case 1} using it. We use the algorithm from \Cref{lem: Case 1 promising} to compute a set $Z$ of $O(\lambda^3\rho)$ vertices, in time $O\left (d\cdot \lambda^5\cdot \rho^2\cdot n^{\eps+o(1)}\cdot \log(\wmax)\right )$. We then use the algorithm described above to compute the sets $L',S'\subseteq V(G)$ of vertices, in time 
$O((|Z|+1)\cdot d^2\cdot n^{3\eps+o(1)}\cdot \lambda\cdot \log(\wmax))\leq O(d^2\cdot n^{3\eps+o(1)}\cdot \lambda^4\cdot\rho\cdot  \log(\wmax))$, and output $L',S'$ as the algorithm's outcome. Since $\lambda\leq 2n^{\eps}$ and $\lambda\leq \rho$, the total running time of the algorithm is  $ O(d^2\cdot n^{3\eps+o(1)}\cdot \lambda^3\cdot \rho^2\cdot  \log(\wmax))$. Assume now that Condition (C3) holds. Then, from \Cref{lem: Case 1 promising}, with probability at least $\half$, the set $Z$ of vertices is promising, and therefore,  $(L',S',R')$ is a global minimum vertex-cut in $G$. 
In order to complete the proof of \Cref{lem: Case 1}, it is now enough to prove \Cref{lem: Case 1 promising}, which we do in the remainder of this subsection.

\subsection*{Proof of \Cref{lem: Case 1 promising}}
As before, since the vertices $x$ and $y$ are now fixed, we denote the graph $G''_{x,y}$ by $G''$ for brevity. We also denote the vertex $x^{\out}\in V(G'')$ by $s$. 
Our algorithm uses the graph $G^h\subseteq G''$ that we define next.

\paragraph{Graph $G^h$.}
We let $G^h$ be a graph that is obtained from $G''$ as follows. 
We start with $G^h=G''$. Then we delete, for every vertex $v\in V(G)\setminus\set{x,y}$ with $w(v)<\tau'$, both copies of  $v$, together with their incident edges, from the graph. 
For every vertex $u$ of the resulting graph, if edge $(u,t)$ lies in $G''$, then we delete all edges leaving $u$, except for the edge $(u,t)$. Finally, for every special edge $(u^{\inn},u^{\out})$ that remains in the graph, we decrease its capacity by at most $\tau'$ units, until it becomes an integral multiple of $\tau'$ (we note that, since both $\tau'$ and $\wmax$ are integral powers of $2$, the capacity of every regular edge is an integral multiple of $\tau'$). For every edge $e\in E(G^h)$, its capacity in $G^h$ is denoted by $\hat c(e)$; note that, by our construction, $\hat c(e)>0$ must hold.
This completes the definition of the graph $G^h$. While we cannot afford computing the graph $G^h$ explicitly, we can use the modified adjacency-list representation of $G''$ in order to simulate the adjacency-list representation of $G^h$.

We denote by $\opt^h$ the value of the maximum $s$-$t$ flow in $G^h$. 
Recall that, if Condition (C3) holds, then the edges in set $\set{(v^{\inn},v^{\out})\mid v\in S}$ disconnect vertices of $L^*\cup S^{\inn}$ from vertices of $R^*\cup S^{\out}$ in $G''$, and, in particular, they disconnect $s$ from $t$.  Let $S^{\geq \tau'}=\set{v\in S\mid  w(v)\geq \tau'}$.
It is then immediate to verify that the edges in set $\set{(v^{\inn},v^{\out})\mid v\in S^{\geq \tau'}}$ disconnect vertices of $L^*\cup S^{\inn}$ from vertices of $R^*\cup S^{\out}$ in $G^h$. We will use this fact later.

Our algorithm consists of three steps. In the first step, we compute an initial $s$-$t$ flow $f$ in $G^h$, whose value is quite close to $\opt^h$, so $\val(f)\geq \opt^h-2\rho\cdot \tau$. We also ensure that, for every vertex $v\in \hat S$, the flow on the corresponding special edge $(v^{\inn},v^{\out})$ is $0$. 
In the second step of the algorithm, we perform $O(\rho\cdot \lambda^2)$ iterations. In every iteration, we use the local flow augmentation paradigm in order to augment the current $s$-$t$ flow $f$ in $G^h$ by $\tau'$ units via an augmenting path of length at most $O(\lambda)$; all vertices of $G$ whose copies lie on the augmenting paths that the algorithm uses are added to the set $Z$. In every iteration of the local flow augmentation algorithm, we may perform a shortcut operation, but we will ensure that, with probability at least $\half$, all these shortcut operations are valid. Once the local flow augmentation algorithm terminates, we ensure that the total number of vertices reachable from $s$ in the resulting residual flow network $H=G^h_f$ is relatively small; in Step 3, we add the corresponding vertices of $G$ to the set $Z$ as well. We then show that, if Condition (C3) holds and all shortcut operations were valid, then $Z$ must contain all low-degree vertices of $S$ whose weight is at least $\hat \tau$. By discarding high-degree vertices from $Z$, we then obtain the desired set $Z$.
We now describe each of the steps in turn.

\subsubsection*{Step 1: Computing the Initial Flow}

The purpose of this step is to compute an initial $s$-$t$ flow $f$ in the graph $G^h$, whose value is close to $\opt^h$, so that, for every vertex $v\in \hat S$, $f(v^{\inn},v^{\out})=0$ holds. We will also ensure that the flow $f$ is $\tau'$-integral. In order to do so, we consider a new graph $\hat G\subseteq G''$, that is obtained from $G''$ as follows. We start with  $\hat G=G''$, and then delete from it the copies of all vertices in:

\[\set{v\in (V(G)\cap V^{\ld})\setminus\set{x,y}\mid w(v)\leq \tau} \cup \set{v\in (V(G)\cap V^{\hd})\setminus\set{x,y}\mid w(v)<\tau'}. \] 

In other words, we delete the copies of all vertices $v\not\in \set{x,y}$, such that either (i) $v$ is a low-degree vertex and its weight is at most $\tau$; or (ii) $v$ is a high-degree vertex and its weight is below $\tau'$.
Additionally, for every vertex $u$ that is connected to $t$ with an edge in the resulting graph, we delete all edges that leave $u$ except for the edge $(u,t)$. 
Finally, for every special edge $(v^{\inn},v^{\out})$ that remains in the graph, we decrease its capacity by at most $\tau'$ units, until it becomes an integral multiple of $\tau'$.
This completes the definition of the graph $\hat G$. It is immediate to verify that $\hat G\subseteq G^h$. Indeed, graph $\hat G$ can be obtained from $G^h$ by deleting from it the copies of every low-degree vertex $v\in V(G)\setminus\set{x,y}$, whose weight is at most $\tau$.
Moreover, for every edge $e\in E(\hat G)$, its capacity in $\hat G$ is the same as in $G^h$, and is denoted by $\hat c(e)$. From the above definitions, $\hat c(e)> 0$ must hold for all $e\in E(\hat G)$.

 Let $\hat J$ be the set of all vertices of $\hat G$ that are reachable from $s$, and let $J\subseteq V(G)$ be the set of all vertices $v$ such that a copy of $v$ lies in $\hat J$; note that $x\in J$. Lastly, we let $\Jld\subseteq J$ be the subset of all low-degree vertices of $J$. We use the following simple observation.

\begin{observation}\label{obs: large reachable set}
	If Condition (C3) holds, then $\Jld\subseteq L$ and $|\Jld|\leq \lambda$.
\end{observation}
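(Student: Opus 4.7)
}
The plan is to show that, under Condition (C3), the only vertices of $G$ whose copies are reachable from $s=x^{\out}$ in $\hat G$ and that are low-degree must lie in $L$. The bound $|\Jld|\leq\lambda$ is then immediate from Property~\ref{prop: small L}. The key structural input is that, because $\tau=\tau^{*}$ is the critical threshold, every low-degree vertex of $S$ has weight at most $\tau$, and these are exactly the vertices whose copies are removed from $\hat G$ when we discard low-degree vertices of weight $\leq \tau$. Consequently, the only vertices of $S$ that still have copies in $\hat G$ are high-degree vertices of weight $\geq \tau'$.

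First I would argue that $E_{S}=\{(v^{\inn},v^{\out}) : v\in S\}$ remains an $s$-$t$ edge-cut in $\hat G$, separating $X:=L^{*}\cup S^{\inn}$ from $Y:=R^{*}\cup S^{\out}\cup\{t\}$. Indeed, in $G''_{x,y}$ the only edges from $X$ to $Y$ are (i) the special edges $(v^{\inn},v^{\out})$ with $v\in S$; (ii) regular edges $(u^{\out},v^{\inn})$ with $u\in L$ and $v\in R$, which cannot exist since $(L,S,R)$ is a vertex-cut in $G$; and (iii) shortcut edges $(v^{\out},t)$ for high-degree $v$, which cannot cross since no high-degree vertex belongs to $L$ by Property~\ref{prop: small volume of L}, so $v^{\out}\notin X$. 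Since $\hat G\subseteq G''_{x,y}$, $E_{S}$ is still an $s$-$t$ edge-cut in $\hat G$.

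Next I would trace which vertices can belong to $\hat J$. Starting from $s=x^{\out}\in X$, a BFS in $\hat G$ can leave $X$ only through an edge of $E_{S}$, and such an edge exists in $\hat G$ only when the corresponding $v\in S$ survived --- i.e.\ when $v$ is high-degree with $w(v)\geq \tau'$. After crossing such an edge, the search lands at $v^{\out}\in S^{\out}$. By the construction of $\hat G$, for every vertex $u$ adjacent to $t$ in $\hat G$ we delete all outgoing edges of $u$ other than $(u,t)$; since $v$ is high-degree the shortcut edge $(v^{\out},t)$ is present, so $v^{\out}$'s only outgoing edge in $\hat G$ is $(v^{\out},t)$. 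Therefore from $v^{\out}$ the search can only reach $t$. It follows that
\[
\hat J\;\subseteq\;X\;\cup\;\{v^{\out} : v\in S\cap V^{\hd},\ w(v)\geq \tau'\}\;\cup\;\{t\}.
\]
Projecting back to $G$ gives $J\subseteq L\cup(S\cap V^{\hd})$, and intersecting with $V^{\ld}$ yields $\Jld\subseteq L$. Combined with $|L|\leq\lambda$, this establishes both assertions.

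I do not expect a serious obstacle: the only subtlety is to make sure no edge of $\hat G$ can bypass the cut $E_{S}$, which reduces to carefully enumerating the three kinds of edges in $G''_{x,y}$ and invoking the no-high-degree-in-$L$ property to rule out shortcut edges crossing the cut. The critical-threshold identity $\tau=\tau^{*}$ is the crucial ingredient that makes every low-degree vertex of $S$ disappear from $\hat G$, which is what allows the conclusion $\Jld\subseteq L$ rather than merely $\Jld\subseteq L\cup S$.
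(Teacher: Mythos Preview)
Your proposal is correct and follows essentially the same approach as the paper. Both arguments hinge on the same two observations: (i) because $\tau=\tau^*$, every low-degree vertex of $S$ has weight at most $\tau$ and is therefore deleted from $\hat G$, so the only special edges $(v^{\inn},v^{\out})$ with $v\in S$ that survive correspond to high-degree vertices; and (ii) for each such high-degree $v$, the shortcut $(v^{\out},t)$ is present and all other outgoing edges of $v^{\out}$ are deleted, so the search cannot proceed past $v^{\out}$ except to $t$. The paper phrases this as a contradiction argument tracing a hypothetical $s$--$v^{\out}$ path for $v\in\Jld\setminus L$, while you phrase it as a structural edge-cut analysis; the content is the same. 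One minor point: the bound $|\Jld|\leq\lambda$ follows from the explicit hypothesis $|L|\leq\lambda$ in Condition~(C3), not from Property~\ref{prop: small L}.
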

\begin{proof}
	Assume that Condition (C3) holds. From the definition of the critical threshold, and since $\tau=\tau^*$, set $S$ may not contain a low-degree vertex $u$ with $w(u)> \tau$.
	We start by proving that $\Jld\subseteq L$. Indeed, assume otherwise, and let $v\in \Jld\setminus L$ be some vertex. Then there must be a path from $s$ to $v^{\out}$ in $\hat G$; let $P$ be any such path (if there is a path connecting $s$ to $v^{\inn}$ in $\hat G$ then the edge $(v^{\inn},v^{\out})$ must lie in $\hat G$ from the definition of $\hat G$, so there is a path connecting $s$ to $v^{\out}$ in $\hat G$ as well). 
	
	If $v\in S$, then the last edge on path $P$ must be $(v^{\inn},v^{\out})$. Since $v$ is a low-degree vertex by the definition of the set $\Jld$, from the definition of the critical threshold, it must be the case that $w(v)\leq \tau$, so the edge $(v^{\inn},v^{\out})$ and its endpoints may not lie in $\hat G$, a contradiction.
	
	Otherwise, $v\in R$, and there must be some vertex $u\in S$ such that the edge $(u^{\inn},u^{\out})$ lies on $P$. If $u$ is a low-degree vertex, then $w(u)\leq \tau$ must hold as before, and so the edge $(u^{\inn},u^{\out})$ may not lie in graph $\hat G$. Otherwise, the only edge leaving the vertex $u^{\out}$ in $\hat G$ is the edge $(u^{\out},t)$, so there is no way to reach $v^{\out}$ from $u^{\out}$ in $\hat G$. 
	
	We conclude that, if $v$ is a low-degree vertex in $S\cup R$, then there is no $s$-$v^{\out}$ path in $\hat G$, and so $v$ may not lie in $\Jld$.
	Therefore, $\Jld\subseteq L$ must hold. From the definition, if Condition (C3) holds, then $|L|\leq \lambda$ and so $|\Jld|\leq \lambda$ as well.
\end{proof}

\paragraph{Vertex set $N$.} Next, we define the set $N\subseteq V(G)$ of vertices of $G$, that will play an important role in the current step. Intuitively, $N$ is the set of all out-neighbors of the vertices in $\Jld$, whose weight is at least $\tau'$. Formally:

\[N=\set{v\in V(G)\setminus \Jld\mid w(v)\geq \tau' \mbox{ and } \exists u\in \Jld \mbox{ with } (u,v)\in E(G)}.\]
 
We partition $N$ into two subsets: set $\Nhd$ of high-degree vertices and set $\Nld$ of low-degree vertices. Notice that, from the construction of the graph $\hat G$, for every vertex $v\in \Nld$, $w(v)\leq \tau$ must hold. 
We also denote by $\hat G'\subseteq \hat G$ the subgraph of $\hat G$ containing all vertices and edges of $\hat G$ that are reachable from $s$ in $\hat G$. We also include the vertex $t$ in $\hat G'$, even if it is not reachable from $s$ in $\hat G$.
We  obtain the following corollary of \Cref{obs: large reachable set}.

\begin{corollary}\label{cor: candidate small graph}
Assume that Condition (C3) holds. Then all of the following hold:

\begin{itemize}
	\item $\Nld\subseteq \hat S\cup L$;
	\item $|N|\leq n^{\eps}\cdot d$; and
	\item  $|E(\hat G')|+|V(\hat G')|\leq 8n^{\eps}\cdot d$.
\end{itemize}
\end{corollary}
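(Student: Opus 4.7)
The plan is to prove each of the three assertions in turn, relying on \Cref{obs: large reachable set} (which gives $\Jld \subseteq L$ and $|\Jld| \leq \lambda$) and the properties \ref{prop: small L}--\ref{prop: small volume of L} of the distinguished cut that Condition (C3) supplies.

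For the first assertion, consider any $v \in \Nld$. By definition of $N$, there exists some $u \in \Jld$ with $(u,v) \in E(G)$; by \Cref{obs: large reachable set}, $u \in L$. Since $(L,S,R)$ is a valid vertex-cut, the out-neighbor $v$ of $u \in L$ must lie in $L \cup S$. If $v \in L$, we are done. Otherwise $v \in S$; since $v$ is low-degree (as $v \in \Nld$) and $w(v) \geq \tau'$ (by the definition of $N$), we have $v \in S \cap \Vld \cap V^{\geq \tau'} = \hat S$.

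For the second assertion, I will bound $|N|$ by the total out-volume of $\Jld$ in $G$. Indeed, the definition of $N$ gives
\[
|N| \;\leq\; \bigl|\{v \in V(G) : \exists u \in \Jld, (u,v) \in E(G)\}\bigr| \;\leq\; \sum_{u \in \Jld} \deg^+_G(u) \;\leq\; \vol_G(\Jld) \;\leq\; \vol_G(L) \;<\; n^{\eps}\cdot d,
\]
where the penultimate inequality uses $\Jld \subseteq L$ and the last uses Property \ref{prop: small volume of L}.

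For the third assertion, I will separately count vertices and edges of $\hat G'$. For the vertex count, observe that any vertex of $\hat G'$ is either $t$, or a copy of a vertex of $G$ reachable from $s$ in $\hat G$, i.e., of a vertex in $J$. Now $J = \Jld \cup J^{\hd}$; every vertex of $J^{\hd}$ is reached by a regular edge of $\hat G$ from some out-copy of a vertex in $\Jld$, so $J^{\hd} \subseteq \{v \in V(G)\setminus \Jld : \exists u \in \Jld, (u,v) \in E(G)\}$, and the counting argument for $|N|$ above applies to give $|J^{\hd}| \leq \vol_G(L) < n^{\eps}\cdot d$. Combined with $|\Jld| \leq \lambda \leq 2n^{\eps} \leq 2 n^{\eps}\cdot d$, we get $|V(\hat G')| \leq 2|J| + 1 \leq 2(|\Jld| + |J^{\hd}|) + 1 \leq 4 n^{\eps}\cdot d$ (for $d \geq 1$, which we may assume, else $\hat G'$ is trivial).

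For the edge count, I categorize the edges of $\hat G'$ by their tail: (i) special edges $(v^{\inn},v^{\out})$ for $v \in J$ contribute at most $|J| \leq O(n^{\eps}\cdot d)$; (ii) regular edges leaving out-copies $v^{\out}$ for $v \in \Jld$ (including $v = x$) contribute at most $\sum_{v \in \Jld}\deg^+_G(v) \leq \vol_G(L) < n^{\eps}\cdot d$; (iii) regular edges $(v^{\out},t)$ from out-copies of high-degree vertices in $J^{\hd}$ contribute at most $|J^{\hd}| \leq n^{\eps}\cdot d$; (iv) the single edge $(y^{\inn},t)$ if $y^{\inn} \in V(\hat G')$. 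Summing these contributions yields $|E(\hat G')| \leq 4 n^{\eps}\cdot d$, and combining with the vertex bound gives $|V(\hat G')| + |E(\hat G')| \leq 8 n^{\eps}\cdot d$, as claimed. The main subtlety here is category (ii): out-copies of high-degree vertices in $\hat G'$ have only the shortcut edge to $t$ available (by construction of $\hat G$), so they contribute only to category (iii), not to a blowup in category (ii).
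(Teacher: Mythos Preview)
Your proof is correct and follows essentially the same approach as the paper's: both use \Cref{obs: large reachable set} to place $\Jld$ inside $L$, bound $|N|$ and $|J^{\hd}|$ by the volume of $L$, and observe that out-copies of high-degree vertices only connect to $t$ in $\hat G$. Your edge categorization by tail is a slight reorganization of the paper's ``incident to copies of $\Jld$'' versus ``not incident'' split, but the content is the same.

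One minor arithmetic slip: from your bounds $|\Jld| \leq 2n^{\eps}$ and $|J^{\hd}| < n^{\eps} d$, you get $|V(\hat G')| \leq 2(|\Jld|+|J^{\hd}|)+1 \leq 6n^{\eps} d + 1$, not $4n^{\eps} d$ as you wrote; similarly your edge tally gives roughly $5n^{\eps} d$, not $4n^{\eps} d$. The cleanest fix is to use the tighter bound $|\Jld| \leq |L| < n^{\eps}$ (which you have directly from $\Jld \subseteq L$ and Property~\ref{prop: small L}) rather than routing through $\lambda$. With that and $d \geq 2$ (the graph is strongly connected), your total comes out below $8n^{\eps} d$ as required.
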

\begin{proof}
We assume that  Condition (C3) holds.  
	From \Cref{obs: large reachable set}, $\Jld\subseteq L$ and $|\Jld|\leq \lambda\leq n^{\eps}$. Moreover, from Property \ref{prop: small volume of L}, the total number of edges incident to the vertices of $L$, and hence of $\Jld$, in $G$ is bounded by $n^{\eps}\cdot d$. It follows that $|N|\leq n^{\eps}\cdot d$, and that the number of edges in $\hat G'$ that are incident to the copies of the vertices of $\Jld$ is bounded by $2n^{\eps}\cdot d$. It is easy to verify that every vertex of $\hat G'$ that is different from $t$ is a copy of a vertex in $\Jld\cup N^{\hd}$, so $|V(\hat G')|\leq 2n^{\eps}\cdot d$. Lastly, the only edges of $\hat G'$ that are not incident to the vertices of $\Jld$ lie in edge set $\set{(v^{\inn},v^{\out}),(v^{\out},t)\mid v\in \Nhd}$, 
			and their number is bounded by $2|\Nhd|\leq 2n^{\eps}\cdot d$. Altogether, we get that $|E(\hat G')|+|V(\hat G')|\leq 8n^{\eps}\cdot d$.
			
	Consider now some vertex $v\in \Nld$. Since $\Jld\subseteq L$ and since, from the definition of the set $N$, there is a vertex $u\in \Jld$ with edge $(u,v)$ in $G$, we get that $v\in S\cup L$ must hold. If $v\in S$, then, since it is a low-degree vertex of weight at least $\tau'$ (from the definition of $N$ and $\Nld$), it must lie in $\hat S=S\cap V^{\geq \tau'}\cap \Vld$.
\end{proof}

\paragraph{The algorithm for step 1.}
We are now ready to describe the algorithm for Step 1. The algorithm performs a search in the graph $\hat G$ starting from $s$, until either (i) more than $8n^{\eps}\cdot d$ edges and vertices are discovered; or (ii) all vertices and edges reachable from $s$ in $\hat G$ have been discovered, whichever happens first. Note that we can use the modified adjacency-list representation of the graph $G''$ in order to perform this search in time $O(n^{\eps}\cdot d)$. If the search terminates after more than  $8n^{\eps}\cdot d$ edges and vertices of $\hat G$ are discovered, then, from \Cref{cor: candidate small graph}, we are guaranteed that Condition (C3) does not hold. In this case, we terminate the algorithm and return $Z=\emptyset$. From now on we assume that the search algorithm terminated after discovering all vertices and edges of $\hat G$ that are reachable from $s$, so it has discovered all edges and vertices of graph $\hat G'$. We then compute the sets $J$, $\Jld$ and $N$ of vertices of $G$, in time $O(n^{\eps}\cdot d)$. If we discover that $|\Jld|>\lambda$, or  that $|N|>n^{\eps}\cdot d$, or that $|\Nld|>\lambda+\rho$, then, as before, we are guaranteed that Condition (C3) does not hold, and we terminate the algorithm, returning $Z=\emptyset$. Therefore, from now on we assume that our algorithm has correctly computed the graph $\hat G'$ and the sets $J$, $\Jld$ and $N$ of vertices of $G$, and moreover, that 
 $|\Jld|\leq \lambda$, $|N|\leq n^{\eps}\cdot d$, and $|\Nld|\leq \lambda+\rho\leq 2\rho$ hold. So far the running time of the algorithm for Step 1 is $O(n^{\eps}\cdot d)$. Our last step is to compute a maximum $s$-$t$ flow $f$ in graph $\hat G'$, in time $O\left(|E(\hat G')|^{1+o(1)}\right )\leq O\left (n^{\eps+o(1)}\cdot d\cdot\log(\wmax)\right )$, using the algorithm from \Cref{thm: maxflow}. 
 Since all edge capacities in $\hat G'$ are integral multiples of $\tau'$, we can ensure that the flow $f$ is $\tau'$-integral.

Recall that we denoted by $\opt^h$ be the value of the maximum $s$-$t$ flow in $G^h$. 
 In the following key claim we show that the value of $f$ is close to $\opt^h$.

\begin{claim}\label{claim: current deficit}
	Assume that  Condition (C3) holds. Then $\val(f) \geq \opt^h-2\rho\cdot \tau$.  
\end{claim}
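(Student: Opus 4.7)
The plan is to construct, from a maximum $s$-$t$ flow $f^*$ in $G^h$ of value $\opt^h$, a valid $s$-$t$ flow in $\hat G$ of value at least $\opt^h - 2\rho\tau$; the claim then follows since $\val(f)$ equals the value of a maximum $s$-$t$ flow in $\hat G'$, which in turn equals that in $\hat G$ (because $\hat G'$ retains all vertices of $\hat G$ reachable from $s$, and any $s$-$t$ flow in $\hat G$ is supported on such vertices). Concretely, I will decompose $f^*$ into a collection $\pset^*$ of $s$-$t$ flow-paths, and split $\pset^* = \pset^*_1 \cup \pset^*_2$, where $\pset^*_1$ consists of paths avoiding all copies of vertices in $D = V(G^h)\setminus V(\hat G) = \set{v\in \Vld\setminus\set{x,y}\mid \tau'\leq w(v)\leq \tau}$. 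Since edges of $G^h$ joining $\hat G$-vertices exist also in $\hat G$ with identical capacities (the shortcut rule and the capacity flooring are defined identically in both graphs), the paths in $\pset^*_1$ give a valid flow in $\hat G$ of value $\sum_{P\in\pset^*_1}f^*(P) = \opt^h - \sum_{P\in\pset^*_2}f^*(P)$.

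The key step is to show $\sum_{P\in\pset^*_2}f^*(P) \leq 2\rho\tau$. I plan to prove that every path $P\in \pset^*_2$ uses the special edge $(v^{\inn},v^{\out})$ of some $v\in (D\cap L)\cup (D\cap S)$. Let $x=a_0,a_1,\ldots,a_k$ be the walk in $G$ traced by $P$. As $x\in L$ and $(L,S,R)$ has no $L\to R$ edges, the walk enters $S$ at some first vertex $u$, which must lie in $V(G^h)$ and hence satisfy $w(u)\geq \tau'$. If $u$ is high-degree, then by the initial shortcut construction of $G^h$, the out-copy $u^{\out}$ has $(u^{\out},t)$ as its only outgoing edge, so $P$ terminates at $u$ and uses no vertex of $R$; thus any $v\in D$ on $P$ must lie in $L$, giving $v\in D\cap L$. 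If instead $u$ is low-degree, then since $\tau=\tau^*$ is the critical threshold for $(L,S,R)$ we have $w(u)\leq \tau$, so $u\in S\cap\Vld\cap V^{\geq\tau'}=\hat S = D\cap S$, and $P$ uses the desired vertex.

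Combining this with the capacity bound $f^*(v^{\inn},v^{\out})\leq w(v)\leq \tau$ for all $v\in D$, and recalling that $\abs{D\cap L}\leq \abs{L}\leq \lambda$ (using Property \ref{prop: small volume of L}, which forces $L\subseteq \Vld$), that $\abs{\hat S}\leq \rho$, and that $\rho\geq \lambda$, we conclude
\[\sum_{P\in\pset^*_2}f^*(P)\leq \sum_{v\in D\cap L}f^*(v^{\inn},v^{\out})+\sum_{v\in \hat S}f^*(v^{\inn},v^{\out})\leq \lambda\tau+\rho\tau\leq 2\rho\tau,\]
where the first inequality is a genuine upper bound that over-counts paths traversing several $D$-vertices of the same side. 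Hence the flow in $\hat G$ built from $\pset^*_1$ has value at least $\opt^h-2\rho\tau$, and the claim follows.

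The main obstacle the proof must overcome is that a single flow path could a priori traverse many vertices of $D\cap R$, and a naive counting argument summing $w(v)$ over all reachable $v\in D$ would be far too weak. The resolution is the structural observation above: any traversal of a vertex in $D\cap R$ forces the path to have crossed $S$ at a low-degree vertex, and the critical-threshold assumption then localizes this crossing to $\hat S$, whose size is controlled by the hypothesis $\abs{\hat S}\leq \rho$. This effectively charges every lost unit of flow to a crossing through either $D\cap L$ or $\hat S$, yielding the desired $2\rho\tau$ bound.
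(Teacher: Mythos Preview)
Your proof is correct and follows essentially the same approach as the paper's. Both arguments take a maximum $s$-$t$ flow in $G^h$, decompose it into flow-paths, and charge the ``lost'' flow to special edges of vertices in $(D\cap L)\cup\hat S$, giving the bound $\lambda\tau+\rho\tau\leq 2\rho\tau$. The paper organizes the partition into three sets $\pset_1,\pset_2,\pset_3$ based on what happens at and before the first $S$-vertex, while you partition into two sets based on whether the path touches $D$; these partitions coincide (your $\pset^*_1$ is exactly the paper's $\pset_3$), and your presentation is arguably slightly cleaner since it avoids the detour through $\Nld$ and Corollary~\ref{cor: candidate small graph}. One small point: your sentence ``the walk enters $S$ at some first vertex $u$'' implicitly uses that no vertex of $L$ is high-degree (so no shortcut edge leaves an $L$-out-copy), which follows from Property~\ref{prop: small volume of L}; you mention this fact later in your bound on $|D\cap L|$, but it is really needed earlier to guarantee the walk crosses $S$ at all.
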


\begin{proof}
	We assume that  Condition (C3) holds, and let $f'$ be the maximum $s$-$t$ flow in $G^h$, of value $\opt^h$. Recall that the edges of set $E'=\set{(v^{\inn},v^{\out})\mid v\in S^{\geq \tau'}}$ disconnect $s$ from $t$ in $G^h$, and so $\opt^h\leq \sum_{v\in S^{\geq\tau'}}\hat c(v^{\inn},v^{\out})$. Let $\pset$ be the flow-path decomposition of $f'$. For every path $P\in \pset$, let $e(P)=(u^{\inn},u^{\out})$ be the first edge of $E'$ that lies on $P$, and denote by $v(P)=u$ the corresponding vertex of $S^{\geq\tau'}$. Let $P'$ be the subpath of $P$ from $s$ to $u^{\out}$. It is immediate to verify that all vertices of $P'$, except for the two last ones, lie in $L^*$. 
	
	We partition the set $\pset$ of paths into three subsets. The first set, $\pset_1$ contains all paths $P$, such that there is a special edge $(v^{\inn},v^{\out})\in E(P')$ with $v\in L$ and $w(v)\leq \tau$. Since $|L|\leq \lambda$, $\sum_{P\in \pset_1}f'(P)\leq \lambda\cdot \tau\leq \rho\cdot \tau$.
	
	The second set, $\pset_2$, contains all paths $P$ that do not lie in $\pset_1$, for which $v(P)\in \Vld$ (note that, from the definition of the critical threshold $\tau^*$, and from our assumption that $\tau=\tau^*$, $w(v(P))\leq \tau$ must hold in this case).
	Consider any path $P\in \pset_2$, and let $P''$ be obtained from the corresponding path $P'$ by deleting the last edge $e(P)$, together with its endpoints, from it. Then every special edge on $P'$ corresponds to a vertex of $L$ whose weight is greater than $\tau$. Therefore, every vertex and edge of $P''$ lie in 
	$\hat G'$, and every vertex of $P''$ lies in $\hat J$. Moreover, there is a vertex $a\in \Jld$ with edge $(a,v(P))$ in $G$, and $w(v(P))\geq \tau'$ from the definition graph $G^h$. Therefore, $v(P)\in \Nld$ must hold, and, from \Cref{cor: candidate small graph}, $v(P)\in \hat S$.
	Therefore, $\sum_{P\in \pset_2}f'(P)\leq  |\hat S|\cdot \tau\leq \rho\cdot \tau$.

Lastly, set $\pset_3$ contains all remaining paths. 
Let $P\in \pset_3$ be any such path. Recall that every special edge in $E(P')\setminus \set{e(P)}$ must correspond to a vertex of $L$ whose weight is greater than $\tau$. As before, let $P''$ be the path obtained from $P'$ by deleting the edge $e(P)$, together with its both endpoints, from it.
Using the same argument as before, every edge and every vertex of $P''$ lie in $\hat G'$, and every vertex of $P''$ lies in $\hat J$. 
From the definition of the set $\pset_2$ of paths, vertex $u=v(P)$ must be a high-degree vertex, and, from the definition of graph $G^h$, $w(u)\geq \tau'$ must hold. It is then easy to verify that $u^{\inn}$ and $u^{\out}$ both lie in $\hat G$, and so $P'\subseteq \hat G'$. We let $\hat P$ be the path obtained by concatenating $P'$ with the edge $(u^{\out},t)$, that must lie in $G''$ since $u$ is a high degree vertex, so $\hat P\subseteq \hat G'$ must hold as well.

Let $f''$ be the flow in $\hat G'$ obtained by sending, for every path $P\in \pset_3$, $f'(P)$ flow units via the path $\hat P$. Then $f''$ is an $s$-$t$ flow in $\hat G'$, and its value is at least:

\[\sum_{P\in \pset_3}f'(P)\geq \sum_{P\in \pset}f'(P)-\sum_{P\in \pset_1\cup \pset_2}f'(P)\geq \opt^h-2\rho\cdot\tau.\]

Since $f$ is the maximum $s$-$t$ flow in $\hat G'$, we get that $\val(f)\geq \val(f'')\geq \opt^h-2\rho\cdot \tau$.	
\end{proof}

Recall that we have established already that $\hat G\subseteq G^h$. Therefore, $f$ is a valid $s$-$t$ flow in $G^h$, and, if Condition (C3) holds, then $\val(f)\geq \opt^h-2\rho\tau\geq \opt^h-128\tau'\lambda^2\rho$,
as $\tau'=\frac{\tau}{64\lambda^2}$ . Lastly, assume that Condition (C3) holds, and let $v$ be any vertex of $\hat S$. Then $v$ is a low-degree vertex, and, from the definition of the critical threshold, $w(v)\leq \tau$ must hold; in particular, the copies of $v$ do not lie in $\hat G$, and so $f(v^{\inn},v^{\out})=0$. We summarize the properties of the flow $f$ that we established in the following observation.

\begin{observation}\label{obs: initial flow}
	Flow $f$ is a valid $\tau'$-integral  $s$-$t$ flow in $G^h$. Moreover, if Condition (C3) holds, then $\val(f)\geq \opt^h-128\tau'\lambda^2\rho$, and, for every vertex $v\in \hat S$, $f(v^{\inn},v^{\out})=0$.
\end{observation}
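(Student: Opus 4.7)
The plan is to verify each of the three assertions in turn, relying entirely on facts already established in the construction of $\hat G$, $\hat G'$, and the flow $f$, together with \Cref{claim: current deficit} and the properties of $\hat S$.

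First, to establish that $f$ is a valid $\tau'$-integral $s$-$t$ flow in $G^h$, I will argue that $\hat G'\subseteq \hat G\subseteq G^h$, with all common edges carrying the same capacity $\hat c(\cdot)$. This is because $\hat G$ is obtained from $G^h$ by additionally deleting copies of low-degree vertices $v\in V(G)\setminus\set{x,y}$ with $w(v)\leq \tau$, and the capacities on surviving special edges were defined identically in both graphs (both are obtained by rounding down to the nearest multiple of $\tau'$). Since $f$ is computed by the algorithm from \Cref{thm: maxflow} applied to $\hat G'$, and since every edge capacity in $\hat G'$ is an integral multiple of $\tau'$, the flow $f$ returned is integral and its values on individual edges are integral multiples of $\tau'$; it is then trivially a valid $s$-$t$ flow in $G^h$ by extending it with $0$ on edges of $E(G^h)\setminus E(\hat G')$.

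Second, for the lower bound on $\val(f)$ when Condition (C3) holds, I would simply invoke \Cref{claim: current deficit}, which already gives $\val(f)\geq \opt^h-2\rho\cdot \tau$. Substituting $\tau=64\lambda^2\tau'$ (from $\tau'=\tau/(64\lambda^2)$) yields $\val(f)\geq \opt^h-128\tau'\lambda^2\rho$, as required.

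Third, to verify that $f(v^{\inn},v^{\out})=0$ for every $v\in\hat S$, I will unfold the definition $\hat S=S\cap \Vld\cap V^{\geq \tau'}$: each such $v$ is a low-degree vertex lying in $S$. Assuming Condition (C3) and using $\tau=\tau^*$, the definition of the critical threshold forces $w(v)\leq \tau$ for every low-degree $v\in S$. Therefore $v$ satisfies the deletion criterion used to obtain $\hat G$ from $G^h$ (it is a low-degree vertex in $V(G)\setminus\set{x,y}$ with $w(v)\leq\tau$), so neither $v^{\inn}$ nor $v^{\out}$ lies in $\hat G$, and in particular not in $\hat G'$. Consequently the edge $(v^{\inn},v^{\out})$ carries $0$ flow under $f$, as the flow is extended by zero outside $\hat G'$. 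The main thing to double-check is that $v\notin\set{x,y}$, which follows since any vertex in $S$ is distinct from the endpoints $x\in L$, $y\in R$ of the distinguished pair under Condition (C3). This completes the proof.
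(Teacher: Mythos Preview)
Your proposal is correct and follows essentially the same approach as the paper: you use $\hat G'\subseteq \hat G\subseteq G^h$ (with matching capacities) to lift $f$ to a valid $\tau'$-integral flow in $G^h$, invoke \Cref{claim: current deficit} and substitute $\tau=64\lambda^2\tau'$ for the value bound, and use the critical threshold to conclude that any $v\in\hat S$ has $w(v)\leq\tau$ and hence has no copies in $\hat G$. Your version is slightly more explicit (e.g., checking $v\notin\{x,y\}$), but the argument is the same.
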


\subsubsection*{Step 2: Local Flow Augmentations}
\label{subsubsec: local flow}

In this step we gradually augment the $s$-$t$ flow $f$ in graph $G^h$ that was computed in Step 1, while keeping it $\tau'$-integral.

Recall that the modified adjacency-list representation of graph $G''$ that was provided as part of input can be used in order to simulate the  adjacency-list representation of $G^h$. Over the course of the algorithm, we may perform some shortcut operations, by inserting edges $(v,t)$ into $G^h$, for some vertices $v\in V(G^h)$. We will use the first copy of the modified adjacency-list representation of $G''$ in order to maintain the resulting graph $G^h$ as it evolves.

Throughout, we denote by $H$ the residual flow network of $G^h$ with respect to the current flow $f$. We use the second  copy of the modified adjacency-list representation of $G''$ in order to simulate the adjacency-list representation of the graph $H$. For the initial flow $f$ computed in Step 1, we can modify the adjacency-list representation of $G''$ to obtain an adjacency-list representation of the initial residual flow network $H$ in time that is asymptotically bounded by the running time of Step 1. 
Note that, since all edge capacities in $G^h$ are integral multiples of $\tau'$, and since the flow $f$ that we maintain is $\tau'$-integral, the capacity of every edge in $H$ is at least $\tau'$ throughout the algorithm.

Our algorithm for Step 2 performs at most $z=256\lambda^2\rho$ iterations. In every iteration $i$, we select an $s$-$t$ path $P_i$ in the current graph $H$, whose length is at most $16\lambda$, and augment the current flow $f$ by sending $\tau'$ flow units along $P_i$. Every vertex $v\in V(G)$ whose copy lies on $P_i$ will be eventually added to $Z$. In order to compute the path $P_i$ efficiently, we may perform a shortcut operation.
We now describe every iteration more formally.

\paragraph{Iteration Description.}
In order to execute iteration $i$, we perform a BFS search in the current graph $H$, until one of the following conditions are met:

\begin{properties}{Q}
	\item Vertex $t$ is reached; \label{cond: found t}
	\item at least $16z\lambda$ vertices of $V^{\out}$ are discovered; \label{cond: many vertices} or
	\item Neither of the above conditions happened and the BFS has explored all vertices $v$ such that there is an $s$-$v$ path of length at most $16\lambda$ in $H$.\label{cond: BFS done}
\end{properties}

We now consider three cases, depending on which of the terminating conditions happened.

\paragraph{Case 1: Condition \ref{cond: found t} is met.} In this case, we let $P_i$ be the $s$-$t$ path of length at most $16\lambda$ through which $t$ was discovered. We augment $f$ by sending $\tau'$ flow units along this path, update the residual flow network $H$, and continue to the next iteration.

\paragraph{Case 2: Condition \ref{cond: many vertices} met.} In this case, we select a vertex $a^{\out}\in V^{\out}$ among the discovered vertices uniformly at random, and we connect $a^{\out}$ to $t$ in $G''$ and in $G^h$ via a shortcut operation. We also let $P_i'$ be the path of length at most $16\lambda$ via which the BFS discovered $a^{\out}$, and we denote by $P_i=P'_i\circ (a^{\out},t)$. We then augment $f$ by sending send $\tau'$ flow units on path $P_i$. As before, we update the flow $f$ and the residual network $H$, and continue to the next iteration. 

If Condition \ref{cond: BFS done} is met, then we terminate the algorithm for Step 2 and proceed to Step 3, where we will construct 
the desired set $Z$ of vertices.

\paragraph{Running time.} 
We now analyze the running time of a single iteration. Recall that the total number of vertices of $V^{\out}$ that may be discovered in a single iteration is $O(z\cdot \lambda)$. If, for any such vertex $v^{\out}$, the corresponding vertex $v\in V(G)$ is a high-degree vertex, then, since the only edge leaving $v^{\out}$ in $G^h$ is the edge $(v^{\out},t)$, the iteration terminates with outcome \ref{cond: found t} immediately upon discovering $v^{\out}$. Therefore, the total number of edges that may be discovered in a single iteration is bounded by $O\left (z\cdot \lambda\cdot d\cdot n^{\eps}\right )$. Since the 
number of iterations is bounded by $z=O(\lambda^2\rho)$, the total running time of Step 2 is:

\[O\left (z^2\cdot \lambda\cdot d\cdot n^{\eps}\right )\leq O\left (\lambda^5\cdot \rho^2\cdot d\cdot n^{\eps}\right ).\]

\paragraph{Valid shortcuts.} Let $\event''$ be the bad event that any of the shortcut operations that the algorithm performed were invalid. We use the following simple observation.

\begin{observation}\label{obs: valid shortcuts}
	If Condition (C3) holds, then $\prob{\event''}\leq \frac{1}{16}$.
\end{observation}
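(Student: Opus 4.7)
The plan is a union-bound argument over the iterations of Step 2. There are at most $z = 256\lambda^2\rho$ iterations, and each iteration performs at most one shortcut operation (only in Case~2). So it suffices to show that, assuming Condition (C3) holds, the probability that the shortcut performed in any fixed iteration is invalid is at most $\frac{1}{16z}$; the claim then follows immediately from the union bound.

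First I would unpack when a shortcut in Case~2 is invalid. The shortcut edge has the form $(a^{\out},t)$, where $a^{\out}$ is an out-copy of a vertex $a\in V(G)$. By the definition of valid shortcut edges (given in the ``Shortcut Operations'' paragraph), under Condition~(C3) the edge $(a^{\out},t)$ is valid as long as $a\in R$ or $a\in S$; hence it is invalid if and only if $a\in L$.

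Next I would exploit the uniform random choice in Case~2. By the termination rule \ref{cond: many vertices}, when Case~2 is triggered the BFS has discovered at least $16z\lambda$ vertices of $V^{\out}$, and $a^{\out}$ is drawn uniformly at random from these discovered vertices. Under Condition~(C3), $|L|\leq \lambda$, so at most $\lambda$ of the discovered out-copies correspond to vertices of $L$. Therefore the probability that the shortcut performed in this iteration is invalid is at most
\[
\frac{\lambda}{16z\lambda} \;=\; \frac{1}{16z}.
\]
Applying the union bound over the at most $z$ iterations in which a shortcut may be performed yields $\prob{\event''\mid \text{Condition (C3)}} \leq \frac{1}{16}$, as desired.

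The argument is essentially immediate once one observes that the denominator $16z\lambda$ in the uniform sampling was chosen precisely to dominate $|L|\cdot z$; no further technical obstacles arise, provided we are careful that the randomness of the choice of $a^{\out}$ in different iterations is independent (which it is, by construction of the algorithm).
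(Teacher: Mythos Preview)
Your proof is correct and essentially identical to the paper's own proof: both use the union bound over at most $z$ iterations, observe that a shortcut $(a^{\out},t)$ is invalid only if $a\in L$, and use $|L|\le\lambda$ together with the uniform choice over at least $16z\lambda$ discovered out-copies to bound the per-iteration failure probability by $\frac{1}{16z}$. (Note that the union bound does not require independence between iterations, so your remark about independence, while true, is unnecessary.)
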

\begin{proof}
	Assume that Condition (C3) holds. Recall that we perform at most $z$ shortcut operations, and in each such operation, we insert an edge $(v^{\out},t)$ into $G''$, where $v^{\out}\in V^{\out}$. This shortcut operation may only be invalid if $v\in L$. In every iteration, we select a vertex of $V^{\out}$ to shortcut from among $16z\lambda$ such vertices uniformly at random. Therefore, the probability that a single shortcut operation is invalid is at most $\frac{|L|}{16z\lambda}\leq \frac{1}{16z}$. From the Union Bound, the probability that any of the at most $z$ shortcut operations that we perform is invalid is bounded by $\frac{1}{16}$. Therefore, 
	$\prob{\event''}\leq \frac{1}{16}$.
\end{proof}

Recall that, from \Cref{obs: initial flow}, if Condition (C3) holds, then $\val(f)\geq \opt^h-128\tau'\lambda^2\rho$, where $f$ is the initial flow computed in Step 1. Recall also that in every iteration, we augment $f$ by $\tau'$ flow units. Therefore, if, after $\ceil{128\lambda^2\rho}+1$ iterations the algorithm does not termiate with outcome \ref{cond: BFS done}, then we are guaranteed that either Condition (C3) does not hold, or Event $\event''$ happened. In any case, if, after $\ceil{128\lambda^2\rho}+1$ iterations the algorithm for Step 2 does not termiate with outcome \ref{cond: BFS done}, then we terminate the entire algorithm and return $Z=\emptyset$. If Condition (C3) holds, the probability for this happening is bounded by 
$\prob{\event''}\leq \frac{1}{16}$.

\subsubsection*{Step 3: Computing the Set $Z$}

We compute the initial set $Z\subseteq V(G)$ as follows. First, we include in $Z$ every vertex $v$, such that, in any iteration of Step 2, a copy of $v$ lied on the augmenting path that was used in that iteration. Since the number of iterations in Step 2 is $z=O(\lambda^2\rho)$, and the length of the augmenting path used in every iteration is $O(\lambda)$, $|Z|\leq O(\lambda^3\rho)$ now holds. Next, for every vertex $v\in V(G)\setminus Z$, if vertex $v^{\out}$ was discovered in the last iteration of the algorithm, we add $v$ to $Z$ as well. Since Condition \ref{cond: BFS done} must be met in the last iteration of the algorithm, the total number of vertices of $V^{\out}$ that are discovered in the last iteration is $O(z\lambda)$, and so $|Z|\leq O(\lambda^3\rho+z\lambda )\leq  O(\lambda^3\rho)$ must hold. Lastly, we delete from $Z$ all high-degree vertices.
This completes the algorithm for Step 3. Notice that its running time is subsumed by the running time of Step 2. Therefore, the total running time of the algorithm is:

\[O\left (n^{\eps+o(1)}\cdot d\cdot\log(\wmax)+\lambda^5\cdot \rho^2\cdot d\cdot n^{\eps}\right )\leq O\left (\lambda^5\cdot \rho^2\cdot d\cdot n^{\eps+o(1)}\cdot \log(\wmax)\right ).\]

We use the following claim to analyze Step 3.

\begin{claim}\label{claim: promising} 
	Assume that Condition (C3) holds and that Event $\event''$ did not happen. Then $Z$ is a promising set of vertices.
\end{claim}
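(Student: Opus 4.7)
The plan is to verify the two defining properties of a promising vertex set in turn. The first property, that every vertex in $Z$ is low-degree, holds trivially by construction, since the last step of the algorithm explicitly discards all high-degree vertices from $Z$. The main task is therefore to fix an arbitrary low-degree vertex $v\in S$ with $w(v)\geq\hat\tau$ and show that $v\in Z$. We split into two cases based on whether the special edge $e_v=(v^{\inn},v^{\out})$ ever carried flow during Step~2. If at some iteration $i$ of Step~2 the augmenting path $P_i$ used $e_v$ (either as a forward or a backward residual edge), then both $v^{\inn}$ and $v^{\out}$ appear on $P_i$, and $v$ is added to $Z$ in the first part of Step~3. We may therefore assume from now on that $f(v^{\inn},v^{\out})=0$ holds throughout Step~2, so that at the end of Step~2 the edge $e_v$ has residual capacity equal to its capacity in $G^h$, which, by construction, is at least $w(v)-\tau'\geq \hat\tau-\tau' >0$.

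Given this, the goal reduces to showing that, at the end of Step~2, $v^{\out}$ is reachable from $s$ in $H$ by a path of length at most $16\lambda$. Indeed, the last iteration of Step~2 terminated with Condition~\ref{cond: BFS done}, so its BFS discovered every vertex within distance $16\lambda$ from $s$ in $H$; in particular $v^{\out}$ would be discovered, causing $v$ to be added to $Z$ in the second part of Step~3. To get such a path, it suffices to exhibit an $s$-$v^{\inn}$ path of length at most $16\lambda-1$ in $H$ and then append $e_v$, whose residual capacity we have already noted is positive. Let $L'=L\cap V^{\geq\tau'}$ and $L''=L\setminus L'$; by Property~\ref{prop: small L} the total weight of $L''$ is at most $\lambda\cdot\tau'=\tau/(64\lambda)=\hat\tau/8$. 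Condition~(C3) together with $\neg\event''$ implies that every shortcut operation performed in Step~2 is valid, so, by \Cref{claim: shortcut}, the maximum $s$-$t$ flow in $G^h$ (with its shortcuts) is exactly $\opt$, and moreover any such max flow sends exactly $w(v)\geq\hat\tau$ units through $e_v$, with the sub-flow from $s$ to $v^{\inn}$ routed entirely through copies of $L$-vertices and $v^{\inn}$ itself.

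The key technical step is to combine these observations with the assumption $f(v^{\inn},v^{\out})=0$ in order to produce the desired short path in the residual graph $H$. The plan is: starting from a maximum $s$-$t$ flow $f^{*}$ in $G^h$ that sends $w(v)$ units through $e_v$, restrict to the sub-flow $h^{*}$ from $s$ to $v^{\inn}$ using only copies of $L$-vertices; discard the portion that passes through $L''$-copies, whose value is at most $\hat\tau/8$ by the weight bound above, to obtain a flow $h$ of value at least $7\hat\tau/8>0$ from $s$ to $v^{\inn}$ in $G^h$ that uses only $(L')^{*}\cup\{v^{\inn}\}$ as intermediate vertices, so every flow-path of $h$ has length at most $2|L'|+1\leq 2\lambda+1\leq 16\lambda-1$. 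Since $f(v^{\inn},v^{\out})=0$, a standard residual-flow argument (subtracting $f$ from an extension of $h$ to an $s$-$t$ flow through $e_v$) shows that $h$ can be routed in the residual network $H$ as well, with value bounded below by $7\hat\tau/8-0>0$ while preserving the restriction to $(L')^{*}\cup\{v^{\inn}\}$. In particular, $H$ contains at least one such $s$-$v^{\inn}$ path of length at most $16\lambda-1$, which completes the argument. The main obstacle I anticipate is the last residual-routing step: one must verify carefully that neither the capacity truncations in the definition of $G^h$ nor the deletions of outgoing edges at initially shortcut (high-degree) vertices destroy any of the relevant short paths through $(L')^{*}$, and that the residual reformulation preserves the length bound.
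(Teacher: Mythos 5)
Your overall skeleton matches the paper's proof (low-degree vertices are discarded by construction; if $e_v=(v^{\inn},v^{\out})$ ever appears on an augmenting path then $v\in Z$; otherwise one must exhibit an $s$-$v^{\out}$ path of length at most $16\lambda$ in the final residual network), but the step you yourself flag as the obstacle is a genuine gap, and it is precisely the heart of the claim. First, a terminological slip that matters: by \Cref{claim: shortcut} it is the graph $G''$ whose minimum $s$-$t$ cut equals $\opt$ and whose maximum flows saturate every special edge of $S$; the graph $G^h$ is obtained from $G''$ by deleting copies of vertices of weight below $\tau'$ and truncating special-edge capacities, so its maximum flow value $\opt^h$ can be strictly smaller than $\opt$ and a maximum flow in $G^h$ need not send $w(v)$ units through $e_v$. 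More seriously, the ``standard residual-flow argument'' you invoke does not do what you need. Subtracting the final flow $f$ from (an extension of) your $L'$-confined flow $h$ does produce a feasible flow in the residual network, but the path decomposition of that difference flow uses backward edges wherever $f$ is positive, i.e.\ anywhere in the graph; it is \emph{not} confined to $(L')^{*}\cup\{v^{\inn}\}$ and comes with no length guarantee. Indeed, the forward special edges of the $L'$-vertices may be completely saturated by $f$ in $G^h$, in which case none of the short forward paths of $h$ survive in $H$, and the existence of an alternative short route (necessarily using backward edges, but only ones staying on the $L\cup S$ side) is exactly what has to be proved rather than asserted.

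The paper closes this gap by working directly inside the residual network $\tilde H$ of $G''$ with respect to the final flow: using $\neg\event''$ it shows (\Cref{obs: min cut in residual}) that $(X,Y)$ with $X=L^{*}\cup S^{\inn}$ is a \emph{minimum} $s$-$t$ cut of $\tilde H$; a maximum flow of $\tilde H$ therefore saturates $e^{*}=(v^{\inn},v^{\out})$, whose residual capacity is at least $\hat\tau-\tau'\geq\tau/(16\lambda)$, and every flow-path crosses the cut exactly once, so the prefix of each path ending at $v^{\out}$ stays inside $X$ and has length at most $2|L|<16\lambda$. Finally, $\tau'$-integrality of $f$ and the capacity truncation are used to show that the paths whose prefix contains an edge of residual capacity below $\tau'$ carry less than $\lambda\tau'<\tau/(16\lambda)$ flow (such a bottleneck must be a forward special edge of an $L$-vertex), so at least one surviving prefix lies in the residual network $H'$ of $G^h$, yielding $v\in Z$. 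Your accounting with $w(L\setminus V^{\geq\tau'})\leq\lambda\tau'$ is reminiscent of this last step, but it is applied to the original flow rather than to the residual max-flow paths, where it is actually needed; without the min-cut-in-the-residual argument (or some substitute forcing short, $X$-confined residual paths), the proposal does not establish the claim.
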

\begin{proof}
		Assume that Condition (C3) holds and that Event $\event''$ did not happen.
	From our construction, every vertex in $Z$ is a low-degree vertex. 
	Consider now any low-degree vertex $v\in S$ with $w(v)\geq \hat \tau$. It is now enough to prove that $v\in Z$.
	Recall that $\hat \tau=\frac{\tau}{8\lambda}$, $\tau'=\frac{\tau}{64\lambda^2}<\hat \tau$, and  
	$\hat S=S\cap \Vld\cap V^{\geq \tau'}$. Clearly, $v\in \hat S$ must hold.
	
	Let $f'$ be the flow $f$ computed at the end of Step 1. Then, from \Cref{obs: initial flow}, $f'(v^{\inn},v^{\out})=0$. Let $f''$ be the flow $f$ computed at the end of Step 2. If $f''(v^{\inn},v^{\out})>0$, then edge $(v^{\inn},v^{\out})$ must lie on some augmenting path $P_i$ that the algorithm used, and so $v\in Z$ must hold. 
	
	Finally, assume that $f''(v^{\inn},v^{\out})=0$, and let $H'$ be the graph $H$ at the end of Step 2. It is enough to show that there exists a path connecting $s$ to $v^{\out}$ in $H'$ of length at most $16\lambda$, since in this case $v\in Z$ must hold.

	Let $\tilde H$ be the residual flow network of $G''$ with respect to the $s$-$t$ flow $f''$, and
	let $(X,Y)$ be a partition of $V(\tilde H)$, where $X=L^*\cup S^{\inn}$ and $Y=S^{\out}\cup R^*$. We use the following simple observation whose proof is deferred to Section \ref{sec: proof of min cut in residual} of Appendix.
	
	\begin{observation}\label{obs: min cut in residual}
	If Condition (C3) holds and Event $\event''$ did not happen, then	$(X,Y)$ is a minimum $s$-$t$ edge-cut in $\tilde H$.
	\end{observation}

Consider now the maximum $s$-$t$ flow $g$ in the graph $\tilde H$ and let $\pset$ be its flow-path decomposition. Then, from the Max-flow / Min-cut theorem, $\val(g)=\sum_{e\in E_{\tilde H}(X,Y)}\tilde c(e)$, where $\tilde c(e)$ is the capacity of the edge $e$ in $\tilde H$. Therefore, for every edge $e\in E_{\tilde H}(X,Y)$, $g(e)=\tilde c(e)$, and moreover, every flow-path $P\in \pset$ contains exactly one edge of  $E_{\tilde H}(X,Y)$; we denote this edge by $e(P)$. Note that all vertices preceding $e(P)$ on path $P$ must lie in $L^*\cup S^{\inn}$.

Consider now the edge $e^*=(v^{\inn},v^{\out})$. Since $f''(v^{\inn},v^{\out})=0$, its residual capacity is $\tilde c(e^*)=\hat c(e^*)\geq w(v)-\tau'\geq \hat \tau-\tau'\geq \frac{\tau}{16\lambda}$. Clearly, $e^*\in E_{\tilde H}(X,Y)$, and so, from the above discussion $g(e^*)\geq \tilde c(e^*)\geq \frac{\tau}{16\lambda}$. Let $\pset'\subseteq \pset$ be the collection of paths $P\in \pset$ with $e(P)=e^*$; notice that $\sum_{P\in \pset'}g(P)=g(e^*)\geq \frac{\tau}{16\lambda}$ must hold. For every path $P\in \pset'$, we denote by $P'$ the subpath of $P$ from $s$ to $v^{\out}$. Since all vertices on $P'\setminus\set{v^{\out}}$ lie in $L^*\cup S^{\inn}$, for every edge $e\in E(P'\setminus\set{v^{\out}})$, one endpoint of $e$ must lie in $L^{\out}$, and so $|E(P')|\leq 2|L|<16\lambda$.

We partition the set $\pset'$ into two subsets: set $\pset_1'$ containing all paths $P\in \pset'$, such that at least one edge $e'\in E(P')$ has residual capacity $\tilde c(e')<\tau'$, and set $\pset_2'$ contains all remaining paths of $\pset'$. 

We claim that, for every path $P\in \pset_2'$, the corresponding subpath $P'$ must lie in $H'$.
Indeed, consider any path $P\in \pset_2'$, any edge $e=(a,b)\in E(P')$, and let $e'$ be the corresponding edge of $G''$ (so if $e$ is a forward edge, $e'=e$, and otherwise $e'=(b,a)$). 
Since $\tilde c(e)\geq \tau'$, it must be the case that $\hat c(e')\geq \tau'$ as well. Moreover, since $a,b\in L^*\cup S^{\inn}$, and since we have assumed that Event $\event''$ did not happen, neither of the vertices $a,b$ may be connected to $t$ in $G''$. Therefore, $e'\in E(G^h)$ and $e\in E(H')$ must hold. Since the length of $P'$ is at most $16\lambda$, we conclude that, if $\pset'_2\neq\emptyset$, then $v^{\out}$ is reachable from $s$ in $H'$ via a path of length at most $16\lambda$, and so $v\in Z$ must hold.
Therefore, it is enough to show that $\pset'_2\neq \emptyset$, and, in order to do so, it is enough to show that $\sum_{P\in \pset'_1}g(P)<\frac{\tau}{16\lambda}$, since $\sum_{P\in \pset'}g(P)=g(e^*)\geq \frac{\tau}{16\lambda}$ holds.

Consider now any path $P\in \pset'_1$, and let $e'(P)\in E(P')$ be any edge that lies on the subpath $P'$ of $P$, for which $\tilde c(e'(P))<\tau'$. Since $V(P')\setminus\set{v^{\out}}\subseteq L^*\cup S^{\inn}$, at least one endpoint of $e'(P)$ must lie in $L^*$. Note that $e'(P)$ may not be a forward regular edge, since the initial capacity of every forward regular edge is $\wmax>\opt$. Moreover, $e'(P)$ may not be a backward  edge, since the flow $f''$ is $\tau'$-integral, so the residual capacity of every backward edge in $H'$ is at least $\tau'$. Therefore, $e'(P)$ must be a forward special edge, corresponding to some vertex $u\in L$. Since $|L|\leq \lambda$, we get that:

\[\sum_{P\in \pset_1'}g(P)< |L|\cdot \tau'\leq \lambda\tau'<\frac{\tau}{16\lambda},\]

since $\tau'=\frac{\tau}{64\lambda^2}$.
\end{proof}

\subsection{Algorithm for Case 2: Proof of \Cref{lem: case 2 main}}
\label{sec: case 2}

Recall that we have denoted by $\tau'=\frac{\tau}{64\lambda^2}$  and by $\hat S=S\cap \Vld\cap V^{\geq \tau'}$, and that Case 2 happens if $|\hat S|>2^{12}\lambda^3$. For brevity, we say that Condition (C'1) holds for a cut $(L,S,R)$, if all of the following hold:

\begin{itemize}
	\item  Properties \ref{prop: small L}-\ref{prop: small volume of L} hold for $(L,S,R)$;
	\item $\frac{\lambda}{2}\leq |L|\leq\lambda$; 
	\item $\tau=\tau^*$, where $\tau^*$ is the critical threshold for $(L,S,R)$;  
	\item $|\hat S|>2^{12}\lambda^3$; and  
	\item there is a pair $(x^*,y^*)\in \Gamma$ with $x^*\in L$ and $y^*\in R$,
\end{itemize}

As before, for the sake of analysis, we designate a global minimum vertex-cut $(L,S,R)$ as the \emph{distinguished min-cut} as follows. If there exists a global minimum vertex-cut for which Condition (C'1) holds, then we let $(L,S,R)$ be this cut, and we say that the distinguished min-cut is \emph{good}; otherwise, we let $(L,S,R)$ be an arbitrary global minimum vertex-cut, and we say that it is \emph{bad}. In the following, whenever we say that Condition (C'1) holds, we mean that it holds with respect to the distinguished min-cut $(L,S,R)$. We also define a distinguished pair $(x^*,y^*)\in \Gamma$ as follows: if there is a pair $(x,y)\in \Gamma$ with $x\in L$ and $y\in R$, then we let  $(x^*,y^*)$ be any such pair, and otherwise we let  $(x^*,y^*)$ be any pair in $\Gamma$.
We start by describing the preprocessing step of the algorithm.

\subsubsection{The Preprocessing Step}\label{subsec: Case 2 preprocessing}

The preprocessing step for this case is very similar to that for Case 1. For every vertex $v\in V(G)$, we construct a set $\beta'(v)\subseteq V(G)$ of vertices; this set is similar to the set $\beta(v)$ constructed for Case 1, but there are some differences, the chief among them is using the threshold of $\tau$ instead of $\hat \tau$. We then define vertex sets $\sigma'(u)$ for low-degree vertices $u\in V(G)$, that are similar to the sets $\sigma(u)$ constructed by the algorithm for Case 1, except that they are based on vertex sets $\beta'(v)$ for $v\in V(G)$ instead of $\beta(v)$. Finally, we define a set $U'\subseteq \Vld$ of suspicious vertices, based on the cardinalities of the sets $\sigma'(u)$ that we construct, and take care of a special case where $U'\cap L\neq\emptyset$ like in the algorithm for Case 1.
We now proceed to describe the preprocessing step formally.

\paragraph{$\beta'$-sets.}
For every vertex $v\in V(G)$, we construct a set $\beta'(v)\subseteq V(G)$ of vertices of cardinality at most $2\lambda$. The construction is identical to that used in the algorithm for Case 1, except that we include vertices of weight greater than $\tau$, instead of including vertices of weight at least $\hat \tau$.

Formally, we compute, for every vertex $v\in V(G)$, the set $\beta'(v)\subseteq V(G)$ of vertices as follows. We perform a DFS search  in the graph $G$ starting at $v$, except that the DFS search is only allowed to explore vertices $u\in V(G)\setminus\set{v}$ with $u\in \Vld$ and $w(u)>\tau$. Once the DFS search discovers $2\lambda$ vertices, we terminate it, and we let $\beta'(v)$ be the vertex set containing all these discovered vertices, including $v$ itself. Alternatively, if the DFS search terminates before $2\lambda$ vertices are discovered (in which case there are fewer than $2\lambda$ vertices in $G$ that can be reached from $v$ via paths that only use low-degree vertices of weight greater than $\tau$ in addition to $v$), then we let $\beta'(v)$ be the set of all vertices that the DFS search has discovered, including $v$. The algorithm for computing the sets $\beta'(v)$ of vertices for all $v\in V(G)$ is identical to that in Case 1 and we do not repeat it here; recall that its running time is  $O(m+n\lambda^2)$.

\paragraph{$\sigma'$-sets and suspicious vertices.}
For every low-degree vertex $u\in V(G)$, we define the vertex set $\sigma'(u)=\set{v\in V(G)\mid u\in \beta'(v)}$.
We say that a low-degree vertex $u\in \Vld$ is \emph{suspicious}, if $|\sigma'(u)|>2n^{\eps}\cdot \lambda$, and we denote by $U'$ the set of all low-degree suspicious vertices.

As in Case 1, $\sum_{v\in V(G)}|\beta'(v)|\leq 2n\cdot \lambda$, and so $|U'|\leq n^{1-\eps}$ must hold. By inspecting the sets $\beta'(v)$ of vertices for all $v\in V(G)$, we can compute the sets $\sigma'(u)$ of vertices for all $u\in \Vld$, as well as the set $U'\subseteq \Vld$ of suspicious vertices, in time $O(n\cdot \lambda)$.

\paragraph{Additional assumption.}

Assume that Condition (C'1) holds and that $L\cap U'\neq \emptyset$. In this case,  we show that we can efficiently compute a cut $(L',S',R')$, that, with probability at least $\half$, is a global minimum vertex-cut. The algorithm is essentially identical to that used for Case 1: For every vertex $v\in U'$, we apply the algorithm from \Cref{lem: cut if vertex of L}, that, in time $O\left (m^{1+o(1)}\cdot \log (\wmax)\right )$, returns a vertex $y_v\in V(G)$, and the value $c_v$ of the minimum $v$-$y_v$ vertex-cut in $G$. Recall that the lemma guarantees that, if there is a global minimum vertex-cut $(L,S,R)$ in $G$ with $v\in L$, then with probability at least $1-1/n^4$, $c_v=w(S)$ holds. We then select a vertex $v'$ for which the value $c_{v'}$ is the smallest, and use the algorithm from \Cref{cor: min_vertex_cut} to compute the minimum $v'$-$y_{v'}$ vertex-cut $(L',S',R')$ in $G$, whose value must be equal to $c_{v'}$, in time $O\left (m^{1+o(1)}\cdot \log (\wmax)\right )$, returning this cut as the algorithm's outcome. Since we have assumed that $L\cap U'\neq \emptyset$, when a vertex $v\in L\cap U'$ is considered,
the algorithm from \Cref{lem: cut if vertex of L} is guaranteed to output the value $c_v=\opt$ with probability at least $1-1/n^4$. If this happens, then the vertex-cut $(L',S',R')$ that our algorithm outputs is guaranteed to be optimal. Recall that we invoke the algorithm from \Cref{lem: cut if vertex of L} $O(n^{1-\eps})$ times, and the algorithm from \Cref{cor: min_vertex_cut}  once. Therefore, the total running time of this part of the algorithm is $O\left (m\cdot n^{1-\eps+o(1)} \log (\wmax)\right )$. Overall, if Condition (C'1) holds and $U'\cap L\neq\emptyset$, then the running time of our algorithm is $O\left ((m\cdot n^{1-\eps+o(1)}+n\lambda^2) \log (\wmax)\right )\leq O\left ((m\cdot n^{1-\eps+o(1)}+n^{1+2\eps}) \log (\wmax)\right )$, and it is guaranteed to return the global minimum vertex-cut with probability at least $1/2$. 

As in the algorithm for Case 1, we choose a random bit $b\in \set{0,1}$. 
If $b=0$, then we execute the algorithm described above, that is designed for the case where $L\cap U'\neq \emptyset$. If $b=1$, then we execute the algorithm that we describe below, that is designed for the case where $L\cap U'=\emptyset$. We say that a good event $\event'$ happens if either (i) $b=0$ and  $L\cap U'\neq \emptyset$; or (ii) $b=1$ and $L\cap U'= \emptyset$. Clearly, Event $\event'$ happens with probability $1/2$. It is now enough to design an algorithm for the case where $b=1$ and $L\cap U'=\emptyset$. For shorthand, we say that Condition (C'2) holds if Condition (C'1) holds, and, additionally, $b=1$ and $L\cap U'=\emptyset$.
We now focus on designing an algorithm that, if Condition (C'2) holds, outputs a global minimum vertex-cut with a sufficiently high probability.

Before we proceed, we define a set $B\subseteq V(G)$ of vertices to include all vertices $v\in V(G)$, for which either $|\beta'(v)|>\lambda$ or $\beta'(v)\cap U'\neq \emptyset$ hold:  $B=\set{v\in V(G)\mid |\beta'(v)|>\lambda\mbox{ or } \beta'(v)\cap U'\neq \emptyset}$. We refer to vertices in set $B$ as \emph{bad}, and to the remaining vertices of $G$ as \emph{good}. We use the following simple observation to argue that, if Condition (C'2) holds, then $B\cap L=\emptyset$.

\begin{observation}\label{obs: bad vertex}
	Assume that Condition (C'2) holds, and let $v\in L$ be a vertex. Then $v$ must be a good vertex, and, moreover, $\beta'(v)\subseteq L$.
\end{observation}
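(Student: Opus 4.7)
The plan is to prove $\beta'(v)\subseteq L$ first; the remaining assertions then follow immediately. Recall that $\beta'(v)$ is produced by a DFS from $v$ that, after the starting vertex, only explores vertices $u$ with $u\in \Vld$ and $w(u)>\tau$. Thus every vertex in $\beta'(v)\setminus\{v\}$ is reachable from $v$ in $G$ via a path whose internal vertices are all low-degree of weight strictly greater than $\tau$.

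Suppose for contradiction that some vertex $u^*\in \beta'(v)$ does not lie in $L$, and fix a path $v=u_0,u_1,\ldots,u_k=u^*$ in $G$ witnessing the inclusion $u^*\in\beta'(v)$, where every $u_j$ with $j\ge 1$ is low-degree and has $w(u_j)>\tau$. Let $i\ge 1$ be the smallest index with $u_i\notin L$; such an $i$ exists because $u_0=v\in L$ and $u_k\notin L$. Then $u_{i-1}\in L$ and $(u_{i-1},u_i)\in E(G)$, so $u_i\in N^+_G(L)\subseteq L\cup S$ since $(L,S,R)$ is a valid vertex-cut in $G$ (no edges from $L$ to $R$). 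Hence $u_i\in S$. But $u_i$ is low-degree with $w(u_i)>\tau$, which contradicts the definition of the critical threshold $\tau^*=\tau$: every low-degree vertex of $S$ must have weight at most $\tau^*$. This contradiction shows that $\beta'(v)\subseteq L$.

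Given $\beta'(v)\subseteq L$, the fact that $|L|\le\lambda$ (since Condition (C'1) supplies $\frac{\lambda}{2}\le|L|\le\lambda$) yields $|\beta'(v)|\le\lambda$, so the first bad-vertex condition fails for $v$. For the second condition, Condition (C'2) guarantees $L\cap U'=\emptyset$, and since $\beta'(v)\subseteq L$ we immediately get $\beta'(v)\cap U'=\emptyset$. Thus $v\notin B$, i.e., $v$ is good, completing the observation. The only place where subtlety is needed is the inductive ``first-vertex-outside-$L$'' argument, but because $\beta'$ only follows low-degree, heavy-enough edges, the critical-threshold condition on $\tau^*$ is exactly tailored to rule out crossing from $L$ into $S$ along such a path; no further machinery is required.
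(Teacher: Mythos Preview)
Your proof is correct and follows essentially the same approach as the paper's own proof: both argue that any path from $v$ in the DFS tree cannot leave $L$ because the first vertex outside $L$ would have to lie in $S$, yet be a low-degree vertex of weight greater than $\tau$, contradicting $\tau=\tau^*$. The paper phrases this as ``$V(P)\cap S=\emptyset$, hence $V(P)\subseteq L$'', while you make the first-exit index explicit; the deductions that $|\beta'(v)|\le\lambda$ and $\beta'(v)\cap U'=\emptyset$ are identical.
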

\begin{proof}
	Assume that Condition (C'2) holds, and consider a vertex $v\in L$. Let $u\in \beta'(v)$ be any vertex. We claim that $u\in L$ must hold. Indeed, assume otherwise. From the definition of vertex set $\beta'(v)$, there is a path $P$ connecting $v$ to $u$ in $G$, such that every vertex $a\in V(P)\setminus\set{v}$ is a low-degree vertex, whose weight is greater than $\tau$. From Condition (C'2), $\tau$ must be a critical threshold for $G$, so $S$ may not contain a low-degree vertex of weight greater than $\tau$. Therefore, $V(P)\cap S=\emptyset$ and $V(P)\subseteq L$ must hold, a contradiction. We conclude that $\beta'(v)\subseteq L$, and so, from Condition (C'2), $|\beta'(v)|\leq \lambda$ and $\beta'(v)\cap U'=\emptyset$. Therefore, $v$ must be a good vertex.
	\end{proof}

Notice that the time required to compute the set $B$ of vertices is subsumed by the running time of the preprocessing step so far.

Similarly to Case 1, our algorithm will consider every pair $(x,y)\in \Gamma$ of vertices in turn. For each such pair, it will compute a relatively small subset $A_{x,y}\subseteq V(G)$ of vertices, such that, if Condition (C'2) holds and $(x,y)$ is the distinguished pair, then $L\subseteq A_{x,y}$ will hold with a sufficiently high probability. This, in turn, will allow us to sparsify the graph $G$, in order to efficiently compute the value of the minimum $x$-$y$ cut. However, the specific algorithm for computing the vertex set $A_{x,y}$ for every pair $(x,y)\in \Gamma$ is significantly simpler. We use the following lemma, which is an analogue of \Cref{lem: Case 1} for Case 1.

\begin{lemma}\label{lem: Case 3}
	There is a randomized algorithm, that is given as input a directed $n$-vertex and $m$-edge graph $G$ with integral weights $w(v)\geq 1$ on its vertices $v\in V(G)$, and  parameters $0<\eps<1$, $1\leq \tau\leq \wmax$, and $1\leq \lambda\leq 2n^{\eps}$, together with a pair $(x,y)$ of distinct vertices of $G$ with $(x,y)\not\in E(G)$. The algorithm is also given the set $U'\subseteq \Vld$ of suspicious vertices, the set $B\subseteq V(G)$ of bad vertices, and, for every vertex $u\in\Vld$, the corresponding vertex set $\sigma'(u)$. The algorithm returns two subsets $L',S'\subseteq V(G)$ of vertices, such that, if we denote by $R'=V(G)\setminus(L'\cup S')$ then  $(L',S',R')$ is a vertex-cut in $G$.
	It also ensures that, if all of the following conditions hold:
	\begin{itemize}
		\item there is a global minimum vertex-cut $(L,S,R)$ for which Properties \ref{prop: small L}-\ref{prop: small volume of L} hold;
		
		\item $\frac{\lambda}{2}\leq |L|\leq \lambda$;
		
		\item $\tau=\tau^*$, where $\tau^*$ is the critical threshold for $(L,S,R)$;
		\item $|\hat S|>2^{12}\lambda^3$, where $\hat S=S\cap \Vld\cap V^{\geq \tau'}$ and $\tau'=\frac{\tau}{64\lambda^2}$; 
		\item no vertex of $L$ is suspicious; and
		\item $x\in L$, and $y\in R$,
	\end{itemize}
	
	then, with probability at least $\half$, $(L',S',R')$ is a global minimum vertex-cut in $G$. The running time of the algorithm is $O\left (d^2n^{3\eps+o(1)}\lambda\cdot \log (\wmax)\right )$, where $d$ is the average vertex degree in $G$.
\end{lemma}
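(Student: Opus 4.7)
The algorithm will follow the high-level strategy outlined for the large-$|\hat S|$ case in the introduction. First, if $x$ is bad (i.e. $x\in B$), suspicious ($x\in U'$), or high-degree, output an arbitrary vertex-cut (such as the cut induced by a low-degree vertex whose out-degree is less than $n-1$, which must exist since the graph has some vertex-cut); in each of these sub-cases the hypothesis of the lemma fails so this is safe. Otherwise, compute $\beta'(x)$ explicitly by running a DFS from $x$ that only follows edges to low-degree vertices of weight strictly greater than $\tau$, in time $O(\lambda \cdot d \cdot n^{\eps})$; since $x$ is good we have $|\beta'(x)|\leq \lambda$. From $\beta'(x)$, compute
\[
J(x)\;=\;N^+_G(\beta'(x))\cap \Vld \cap V^{\geq \tau'}
\]
in time $O(\lambda \cdot d \cdot n^{\eps})$. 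Sample a multiset $X$ of $r=\Theta(\log n)$ vertices uniformly at random from $J(x)$, initialize $A_{x,y}=\{x\}$, and for every $u^*\in X$ enumerate its low-degree in-neighbors $N^-_G(u^*)\cap \Vld$ (there are at most $d\cdot n^{\eps}$ of them since $u^*$ is low-degree); for each such $u^{**}\notin U'$, add all low-degree vertices of $\sigma'(u^{**})$ to $A_{x,y}$. Since $|\sigma'(u^{**})|\leq 2n^{\eps}\lambda$ for every non-suspicious low-degree $u^{**}$, this yields $|A_{x,y}| = O(\log n\cdot d\cdot \lambda\cdot n^{2\eps})$, and the whole construction runs in the same time.

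Finally, construct the graph $G^{|A_{x,y}}$ derived from $G$ via $A_{x,y}$ (\Cref{def: defined graph}); by \Cref{claim: properties of defined graph}, this takes time $O(\vol^+_G(A_{x,y}))=O(\log n\cdot d^2\cdot \lambda\cdot n^{3\eps})$, and $|E(G^{|A_{x,y}})|$ obeys the same bound (every vertex in $A_{x,y}$ is low-degree, so has out-degree at most $d\cdot n^{\eps}$). If $y\in A_{x,y}\cup N^+_G(A_{x,y})$ output an arbitrary cut; otherwise, invoke \Cref{cor: min_vertex_cut} on $G^{|A_{x,y}}$ to compute a minimum $x$-$t$ vertex-cut $(\hat L,\hat S,\hat R)$ in time $O\big(d^2 \lambda \cdot n^{3\eps + o(1)}\log \wmax\big)$, and return $L'=\hat L$, $S'=\hat S$. \Cref{claim: transforming cuts from defined graph} guarantees that $(L',S',V(G)\setminus(L'\cup S'))$ is a valid vertex-cut in $G$, and the running time matches the lemma statement.

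Correctness reduces to showing that, under the hypotheses of the lemma, $L\subseteq A_{x,y}$ holds with probability at least $\half$: once this is established, \Cref{cor: minimum cut in defined graph} applied with $x,y$ (and using $L\subseteq A_{x,y}$) immediately implies that the returned $(L',S',\cdot)$ is a global minimum vertex-cut. For every $v\in L$, define $J(v) = N^+_G(\beta'(v))\cap \Vld \cap V^{\geq \tau'}$. By \Cref{obs: bad vertex}, $v\in L$ implies $\beta'(v)\subseteq L$, and the assumption $L\cap U'=\emptyset$ then gives $\beta'(v)\cap U'=\emptyset$; hence whenever a sample $u^*\in X$ happens to lie in $J(v)$, the corresponding in-neighbor $u^{**}\in \beta'(v)\setminus U'$ of $u^*$ satisfies $v\in \sigma'(u^{**})$, and the algorithm places $v$ into $A_{x,y}$.

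Thus the main technical obstacle is to prove the structural inequality
\[
|J(v)\cap J(x)|\;\geq\;\tfrac12\,|J(x)|\qquad \text{for every }v\in L,
\]
which combined with $r=\Theta(\log n)$ independent samples gives failure probability at most $2^{-\Omega(\log n)}$ per $v\in L$, and a union bound over $|L|\leq \lambda\leq n$ yields the overall $\half$ success probability. The plan for this inequality is the following: both $J(x)$ and $J(v)$ lie in $L\cup \hat S$, and $|J(x)\cap L|,|J(v)\cap L|\leq \lambda$, so it suffices to control $|(J(x)\triangle J(v))\cap \hat S|$ against $|J(x)\cap \hat S|$. The bound $|\hat S|\geq 2^{12}\lambda^3$ ensures that $|J(x)\cap \hat S|$ dominates the differences: for any $u\in \hat S$ to lie in $J(x)\setminus J(v)$, all in-neighbors of $u$ in $L$ must avoid $\beta'(v)$, and a weighted application of \Cref{claim: neighbors in S} to both $x$ and $v$, together with the critical-threshold upper bound $w(u)\leq \tau$ for every low-degree $u\in S$ and the weight lower bound $w(u)\geq \tau'=\tau/(64\lambda^2)$ on $\hat S$, will upper bound the number of such ``bad'' $u\in \hat S$ by $O(\lambda^3)$. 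Combining with $|J(x)\cap \hat S|=\Omega(|\hat S|)\geq 2^{12}\lambda^3$ (established via a symmetric argument using that $x\in L$ and $\beta'(x)\subseteq L$) delivers the required constant-fraction lower bound and completes the proof.
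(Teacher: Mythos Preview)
Your algorithm and high-level strategy match the paper's, but there are two gaps.

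First, the handling of $y$ is incorrect as written. You output an arbitrary cut whenever $y\in A_{x,y}\cup N^+_G(A_{x,y})$, but this event can occur even under the lemma's hypotheses: the set $A_{x,y}$ is assembled from sets $\sigma'(u^{**})$, which may well contain vertices of $S\cup R$, and any such vertex may have an out-edge to $y$. The paper's fix is to \emph{delete} $y$ and all of $A_{x,y}\cap N^-_G(y)$ from $A_{x,y}$ before forming $G^{|A_{x,y}}$; this is safe because $y\in R$ forces $N^-_G(y)\subseteq S\cup R$, so no vertex of $L$ is removed, and the deletion can be carried out in time $O(|A_{x,y}|\log n)$ by checking membership of each vertex of $A_{x,y}$ in $N^-_G(y)$.

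Second, and this is the substantive gap, \Cref{claim: neighbors in S} does not deliver the bound you need for the structural inequality. That claim gives $w(S\setminus N^+_G(v))\le w(N^+_G(v)\cap L)\le w(L)$, but $w(L)$ is not controlled by $\tau$ or $\lambda$ --- vertices of $L$ may have arbitrarily large weight --- so dividing by $\tau'$ does not produce an $O(\lambda^3)$ count. The paper establishes $|\hat S\setminus J(v)|\le 64\lambda^3$ for every $v\in L$ via a direct max-flow argument (\Cref{obs: aux for case 2}): take a flow-path decomposition of the maximum $v$--$y$ flow; every flow path through a vertex $u\in\hat S\setminus J(v)$ must pass, on its way from $v$ to $u$, through some inner vertex of $L$ of weight at most $\tau$ (otherwise the DFS defining $\beta'(v)$, which only halts at low-weight vertices, would reach the predecessor of $u$ along that path and place $u\in J(v)$); since there are at most $|L|\le\lambda$ such low-weight $L$-vertices and each carries at most $\tau$ units of flow, $w(\hat S\setminus J(v))\le\lambda\tau=64\lambda^3\tau'$, and dividing by the weight lower bound $\tau'$ yields the count. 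This flow argument is the core of the Case~2 analysis, and your sketch does not supply it.
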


We prove \Cref{lem: Case 3} below, after we complete the proof of \Cref{lem: case 2 main} using it. Recall that the running time of the preprocessing step is $O(m+n\lambda^2)$. If the bit $b=0$, then the additional time that the algorithm takes is $O\left ((m\cdot n^{1-\eps+o(1)}+n^{1+2\eps}) \log (\wmax)\right )$. Otherwise, for every pair $(x,y)\in \Gamma$, we apply the algorithm from \Cref{lem: Case 3}, that returns vertex sets $L_{x,y},S_{x,y}$. We select a pair $(x,y)$ for which the weight of set $S_{x,y}$ is minimized, and return
the corresponding vertex-cut $(L_{x,y},S_{x,y},R_{x,y})$, where $R_{x,y}=V(G)\setminus(S_{x,y}\cup L_{x,y})$. Since $|\Gamma|\leq \tilde O\left(\frac{n}{\lambda}\right)$, and since the running time of the algorithm from  \Cref{lem: Case 3} is $O\left (d^2n^{3\eps+o(1)}\lambda\cdot \log (\wmax)\right )$, we get that the total running time of the entire algorithm is:

\[
\begin{split}
&O\left ((m\cdot n^{1-\eps+o(1)}+n^{1+2\eps}+nd^2\cdot n^{3\eps+o(1)})\cdot \log (\wmax)\right )  \\
&\quad\quad\quad\quad\quad\quad\leq O\left ((m\cdot n^{1-\eps+o(1)}+md\cdot n^{3\eps+o(1)})\cdot\log (\wmax)\right )
\end{split}\]


Assume that Condition (C'1) holds, and that Event $\event'$ happened. If $b=0$, then our algorithm is guaranteed to return a global minimum vertex-cut with probability at least $\half$. Otherwise, when the algorithm from  \Cref{lem: Case 3} is applied to the distinguished pair $(x^*,y^*)$, the resulting cut $(L_{x^*,y^*},S_{x^*,y^*},R_{x^*,y^*})$ is guaranteed to be optimal with probability at least $\half$, and in that case, the cut that the algorithm outputs is a global minimum vertex-cut. Since $\prob{\event'}=\half$, we get that, overall, if Condition (C'1) holds, then the cut that the algorithm returns is a global minimum vertex-cut with probability at least $\frac{1}{4}$.

In order to complete the proof of \Cref{lem: case 2 main}, it is now enough to prove \Cref{lem: Case 3}, which we do in the remainder of this section.

\subsubsection{Proof of \Cref{lem: Case 3}}

We assume that we are given a directed $n$-vertex and $m$-edge graph $G$ with integral weights $w(v)\geq 1$ on its vertices $v\in V(G)$, and  parameters $0<\eps<1$, $1\leq \tau\leq \wmax$, and $1\leq \lambda\leq 2n^{\eps}$, together with a pair $(x,y)$ of distinct vertices of $G$ with $(x,y)\not\in E(G)$. We also assume that we are given the set $U'\subseteq \Vld$ of suspicious vertices, the set $B\subseteq V(G)$ of bad vertices, and, for every vertex $u\in\Vld$, the corresponding vertex set $\sigma'(u)$.

Given a vertex-cut $(L,S,R)$ in $G$, we say that Condition (C'3) holds with respect to $(L,S,R)$, if all of the following hold:

\begin{itemize}
	\item Properties \ref{prop: small L}-\ref{prop: small volume of L} hold for $(L,S,R)$;
	
	\item $\frac{\lambda}{2}\leq |L|\leq \lambda$;
	
	\item $\tau=\tau^*$, where $\tau^*$ is the critical threshold for $(L,S,R)$;
	\item $|\hat S|>2^{12}\lambda^3$, where $\hat S=S\cap \Vld\cap V^{\geq \tau'}$ and $\tau'=\frac{\tau}{64\lambda^2}$; 
	\item no vertex of $L$ is suspicious; and
	\item $x\in L$, and $y\in R$.
\end{itemize}

The following key lemma allows us to compute a relatively small set $A$ of vertices of $G$, such that, if Condition (C'3) holds for some global minimum vertex-cut $(L,S,R)$, then $L\subseteq A$.

\begin{lemma}\label{lem: Case 2 set A}
	There is a randomized algorithm, that is given as input a directed $n$-vertex and $m$-edge graph $G$ with integral weights $w(v)\geq 1$ on its vertices $v\in V(G)$, and  parameters $0<\eps<1$, $1\leq \tau\leq \wmax$, and $1\leq \lambda\leq 2n^{\eps}$, together with a pair $(x,y)$ of distinct vertices of $G$ with $(x,y)\not\in E(G)$. The algorithm is also given the set $U'\subseteq \Vld$ of suspicious vertices, the set $B\subseteq V(G)$ of bad vertices, and, for every vertex $u\in\Vld$, the corresponding vertex set $\sigma'(u)$. The algorithm returns a set $A\subseteq V(G)$ of at most $O(dn^{2\eps}\lambda\cdot \log n)$ vertices, where $d$ is the average vertex degree in $G$. It ensures that, if there exists a global minimum vertex-cut $(L,S,R)$ for which Condition (C'3) holds, then, with  probability at least $\half$, $L\subseteq A$. The running time of the algorithm is $O(dn^{2\eps}\lambda\cdot \log n)$.
\end{lemma}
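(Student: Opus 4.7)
The plan is to follow the sampling blueprint sketched in the overview: use $\beta'(x)$ and $\sigma'(\cdot)$ to turn the single pair $(x,y)$ into a short list of "witness" vertices, and let $A$ be the union of the non-suspicious preimages of the in-neighborhoods of those witnesses. Concretely, I would first test whether $x$ is a bad vertex; if so the guarantee is vacuous since Condition (C'3) forces $x \in L$ and, by (the straightforward generalization of) \Cref{obs: bad vertex}, every vertex of $L$ is good. Assuming $x$ is good, I would build $\beta'(x)$ (possible in time $O(\lambda^2)$ using only the adjacency-list access), and then assemble $J(x) := N^+_G(\beta'(x)) \cap \Vld \cap V^{\geq \tau'}$. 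Since each vertex of $\beta'(x)$ is low-degree, $|J(x)| \leq 2\lambda \cdot dn^{\eps}$ and it can be enumerated in time $O(dn^{\eps}\lambda)$.

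Next I would pick $k = \Theta(\log n)$ samples $v_1,\ldots,v_k$ from $J(x)$ independently and uniformly, and set
\[
A \;=\; \bigcup_{i=1}^k \bigcup_{u \in N^-_G(v_i) \cap \Vld \setminus U'} \bigl(\sigma'(u) \cap \Vld\bigr).
\]
Each sample $v_i$ is low-degree, so $|N^-_G(v_i)| \leq dn^{\eps}$; each non-suspicious $u$ satisfies $|\sigma'(u)| \leq 2n^{\eps}\lambda$. Consequently $|A| \leq 2k\, dn^{2\eps}\lambda = O(dn^{2\eps}\lambda\log n)$, and the total running time matches the size bound (we can read the $\sigma'(u)$ sets directly from the input).

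For correctness, suppose Condition (C'3) holds. By (the generalization of) \Cref{obs: bad vertex}, for every $v' \in L$ we have $\beta'(v') \subseteq L$, all of $\beta'(v')$ is non-suspicious and low-degree, and $v' \in \sigma'(u)$ for every $u \in \beta'(v')$. Therefore $v' \in A$ as soon as some sampled $v_i$ lies in $N^+_G(\beta'(v'))$, and it suffices to bound, for each fixed $v' \in L$, the probability that no $v_i$ lands in $J(x) \cap N^+_G(\beta'(v')) \supseteq J(x) \cap J(v')$. Since $\beta'(x),\beta'(v') \subseteq L$ and $L$ contains only low-degree vertices, both $J(x),J(v') \subseteq L \cup \hat S$ (the critical-threshold property excludes low-degree vertices of weight $>\tau$ from $S$, and the weight threshold $\tau'$ excludes light $S$-members), and \Cref{claim: neighbors in S} applied at $x$ and at $v'$ bounds $|\hat S \setminus J(x)|$ and $|\hat S \setminus J(v')|$. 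Combined with the defining Case-2 hypothesis $|\hat S| > 2^{12}\lambda^3$, this forces $|J(x) \cap J(v')| \geq |\hat S|/2$ and $|J(x)| \leq |\hat S| + |L| \leq 2|\hat S|$, so a single uniform sample from $J(x)$ hits $J(v')$ with probability at least $1/4$. With $k = \Theta(\log n)$ samples each $v' \in L$ is missed with probability at most $1/(4\lambda)$, and a union bound over the $|L| \leq \lambda$ members of $L$ yields $\Pr[L \subseteq A] \geq 1/2$.

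The hard part is the structural lower bound $|J(x) \cap J(v')| = \Omega(|\hat S|)$: it is exactly here that the Case-2 assumption $|\hat S| > 2^{12}\lambda^3$ is used to swamp the $\hat S$-deficit extracted from \Cref{claim: neighbors in S}, and it is here that non-suspiciousness of $L$ (so that $\beta'(v')$ survives the $\setminus U'$ filter) and goodness of $L$ (so that $\beta'(v') \subseteq L$ rather than leaking into $S$) are both decisive. The remaining ingredients, the size bound on $A$, the efficient construction of $J(x)$, and the union bound, are routine once these structural facts are in hand.
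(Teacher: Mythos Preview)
Your proposal follows the same approach as the paper: compute $J(x)$, sample $\Theta(\log n)$ vertices from it uniformly, and take $\sigma'$-preimages of their non-suspicious low-degree in-neighbors. The size bound, running time, and sampling analysis are all correct. (The paper additionally adds all of $N^-_G(v_i)$ to $A$ and proves the stronger statement that the set $A' = \{v \notin B : |J(v)\triangle J(x)| \leq 256\lambda^3\}$ lands in $A$; for the stated goal $L \subseteq A$, your simpler construction and direct argument suffice.)

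There is one genuine gap. You invoke \Cref{claim: neighbors in S} to bound $|\hat S \setminus J(v')|$, but that claim only yields the \emph{weight} inequality $w(S \setminus N^+_G(v')) \leq w(L)$, and $w(L)$ is not controlled in terms of $\lambda$ or $\tau'$. The bound you need, $|\hat S \setminus J(v')| \leq 64\lambda^3$, requires a finer flow argument that the paper isolates as a separate observation: any flow-path from $v'$ to a vertex of $\hat S \setminus J(v')$ must pass through an inner vertex of $L$ of weight at most $\tau$, since otherwise every vertex on its $L$-segment has weight exceeding $\tau$ and hence lies in $\beta'(v')$ (here one uses $v' \notin B$, so that $|\beta'(v')| \leq \lambda$ and the DFS defining $\beta'(v')$ terminated naturally), forcing the $\hat S$-endpoint into $J(v')$. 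There are at most $|L| \leq \lambda$ such low-weight blocking vertices, each carrying at most $\tau$ units of flow, so $w(\hat S \setminus J(v')) \leq \lambda\tau = 64\lambda^3\tau'$, whence $|\hat S \setminus J(v')| \leq 64\lambda^3$. With this fix, your inequalities $|J(x) \cap J(v')| \geq |\hat S|/2$ and $|J(x)| \leq 2|\hat S|$ hold and the remainder of the argument goes through.
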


We prove \Cref{lem: Case 2 set A} below, after we complete the proof of \Cref{lem: Case 3} using it. 
We use the algorithm from \Cref{lem: Case 2 set A} to compute a set $A$ of at most $O(dn^{2\eps}\lambda\cdot \log n)$ vertices of $G$.
Let $\hat \event$ be the bad event that $L\not\subseteq A$. From \Cref{lem: Case 2 set A}, if Condition (C'3) holds, then $\prob{\hat \event}\leq \half$.
We now provide an algorithm for computing the sets $L',S'\subseteq V(G)$ of vertices,  such that, if we denote by $R'=V(G)\setminus (L'\cup S')$, then $(L',S',R')$ is a vertex-cut in $G$. We will ensure that, if Condition (C'3) holds and Event $\hat \event$ did not happen, then $(L',S',R')$ is a global minimum vertex-cut in $G$. The remainder of the algorithm is essentially identical to the algorithm that was used in Case 1 in order to compute the vertex sets $L',S'$ from the set $A$ of vertices.

 Note that, if $x\not\in A$, then either  Condition (C'3) does not hold, or  Event $\hat \event$ has happened. Therefore, if  $x\not\in A$, then we return $L'=\set{v}$ and $S'=N^+_G(v)$, where $v$ is a lowest-degree vertex of $G$ and terminate the algorithm. We assume from now on that $x\in A$. If $y\in A$ holds, then we delete $y$ from $A$; if Condition (C3) holds and Event $\hat \event$ did not happen, then $L\subseteq A$ still continues to hold. Additionally, if some vertex $v\in N^-_G(y)$ lies in $A$, then we delete $v$ from $A$. As before, if Condition (C3) holds and Event $\hat \event$ did not happen, then $L\subseteq A$ still continues to hold (since $y\in R$, so it cannot have in-neighbors that lie in $L$). Since $(x,y)\not \in E(G)$, $x\in A$ continues to hold. Therefore, we assume from now on that $x\in A$, $y\not\in A$ and $y\not\in N_G^+(A)$. Lastly, we discard from $A$ all high-degree vertices, as these vertices may not lie in $L$ from Property \ref{prop: small volume of L}.


Next, as in Case 1, we construct the graph $G^{|A}$ that is derived from $G$ via $A$ (see \Cref{def: defined graph}).
Since every vertex of $A$ is a low-degree vertex, from \Cref{claim: properties of defined graph}, $|V(G^{|A})|+|E(G^{|A})|\leq O\left(|A|\cdot d\cdot n^{\eps}\right)\leq O\left(d^2n^{3\eps}\lambda\cdot \log n\right)$ must hold, and, moreover, graph $G^{|A}$ can be constructed in time 
$O\left(|A|\cdot d\cdot n^{\eps}\right)\leq O\left(d^2n^{3\eps}\lambda\cdot \log n\right)$.
Our next step is to compute a minimum $x$-$t$ vertex-cut $(\hat L,\hat S,\hat R)$ in $G^{|A}$ using the algorithm from \Cref{cor: min_vertex_cut}, in time $O\left(|E(G^{|A})|^{1+o(1)}\cdot \log(\wmax)\right )\leq O\left (d^2n^{3\eps+o(1)}\lambda\cdot \log (\wmax)\right )$. (Note that $(x,t) \notin E(G^{|A})$, since $x \in A$.)
We then output the vertex sets $L'=\hat L$ and $S'=\hat S$ as the algorithm's outcome.
Observe that, by \Cref{claim: transforming cuts from defined graph}, $(L',S',R')$ is a valid vertex-cut in $G$, where $R'$ denotes the set $R' = V(G) \setminus (L' \cup S')$.

We now bound the running time of the algorithm. The running time of the algorithm from \Cref{lem: Case 2 set A} is $O(dn^{2\eps}\lambda\cdot \log n)$, and the time required to construct the graph $\hat G$ and to compute the minimum $x$-$t$ cut in it is  $O\left (d^2n^{3\eps+o(1)}\lambda\cdot \log (\wmax)\right )$. Overall, the running time of the algorithm is $O\left (d^2n^{3\eps+o(1)}\lambda\cdot \log (\wmax)\right )$.

In order to complete the proof of \Cref{lem: Case 3}, it is now enough to prove \Cref{lem: Case 2 set A}, which we do next.

\subsubsection{Proof of \Cref{lem: Case 2 set A}}

We assume that we are given a directed $n$-vertex and $m$-edge graph $G$ with integral weights $w(v)\geq 1$ on its vertices $v\in V(G)$, and  parameters $0<\eps<1$, $1\leq \tau\leq \wmax$, and $1\leq \lambda\leq 2n^{\eps}$, together with a pair $(x,y)$ of distinct vertices of $G$ with $(x,y)\not\in E(G)$. We also assume that we are given the set $U'\subseteq \Vld$ of suspicious vertices, the set $B\subseteq V(G)$ of bad vertices, and, for every vertex $u\in\Vld$, the corresponding vertex set $\sigma'(u)$.

As before, for the sake of analysis, we designate a global minimum vertex-cut $(L,S,R)$ to be the \emph{distinguished min-cut} as follows. If there exists a global minimum vertex-cut for which Condition (C'3) holds, then we let $(L,S,R)$ be this cut, and we say that the distinguished min-cut is \emph{good}; otherwise, we let $(L,S,R)$ be an arbitrary global minimum vertex-cut, and we say that it is \emph{bad}. In the following, whenever we say that Condition (C'3) holds, we mean that it holds with respect to the distinguished min-cut $(L,S,R)$.
 Notice that, if $x\in B$, then, from \Cref{obs: bad vertex}, $x\not\in L$, and so Condition (C'3) may not hold. In this case, we simply return $A=\emptyset$. We assume from now on that $x\not\in B$.

The key to our algorithm for Case 2 is to define a property of the vertices of $L$ that is easy to compute, but such that only relatively few vertices of $G$ have it. In order to do so, we define, for every vertex $v\in V(G)\setminus B$, a subset $J(v)\subseteq V(G)$ of vertices. We then show that, if Condition (C'3) holds, then, for every vertex $v\in L$, the symmetric difference $J(x)\triangle J(v)$ must be small. This, in turn, will allow us to compute the desired set $A$ of vertices.
Formally, for every vertex $v\in V(G)\setminus B$, we define:

\[J(v)=\left(\bigcup_{u\in \beta'(v)}N^+_G(u)\right )\cap \Vld\cap V^{\geq \tau'}.\] 

In other words, $J(v)$ contains all out-neighbors of the vertices in $\beta'(v)$ that are low-degree vertices of weight at least $\tau'$.
The following observation is central to our algorithm and provides the motivation for the definition of the vertex sets $J(v)$.

\begin{observation}\label{obs: aux for case 2}
	Assume that Condition (C'3) holds. Then  for every vertex $v\in L$, $|\hat S\setminus J(v)|\leq  64\lambda^3$.
\end{observation}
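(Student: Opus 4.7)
Fix any $v \in L$. The first step is to establish an analogue of \Cref{obs: bad vertex} under Condition (C'3): $\beta'(v) \subseteq L$, and in fact $\beta'(v)$ is precisely the connected component of $v$ in the subgraph of $G[L]$ induced by $\{v\} \cup L^{>\tau}$, where $L^{>\tau} = \{u \in L : w(u) > \tau\}$. The argument is that, since $\tau = \tau^*$ is the critical threshold, $S$ contains no low-degree vertex of weight greater than $\tau$, while every vertex in $\beta'(v) \setminus \{v\}$ is by construction low-degree with weight greater than $\tau$; hence the DFS defining $\beta'(v)$ cannot cross into $S$, and since $L$ and $R$ are separated by $S$, it never leaves $L$. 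Moreover, Property \ref{prop: small volume of L} forces every vertex of $L$ to be low-degree, so the DFS traversal of $\beta'(v)$ coincides exactly with reachability from $v$ via weight-$>\tau$ vertices inside $G[L]$. In particular, $|\beta'(v)| \leq |L| \leq \lambda$, so the DFS never hits its cap of $2\lambda$.

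Next, let $T := \hat S \setminus J(v)$; by unpacking the definition of $J(v)$, every $w \in T$ is a low-degree vertex of $S$ with $w(w) \geq \tau'$ having no in-neighbor in $\beta'(v)$. Since $(L,S,R)$ is a global min-cut of value $\opt = w(S)$, for any fixed $t \in R$ the value of the minimum $v$-$t$ vertex-cut is also $\opt$, so by vertex-capacitated max-flow/min-cut there is a $v$-$t$ flow $f$ of value $\opt$ respecting vertex capacities; moreover every $w \in S$ is saturated, meaning exactly $w(w)$ units of $f$ pass through it. Taking a flow-path decomposition $\pset$ of $f$, each path contains exactly one $S$-vertex, and if $\pset_w \subseteq \pset$ denotes the paths through $w$, then $\sum_{P \in \pset_w} f(P) = w(w)$. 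Summing over $w \in T$, the paths in $\bigcup_{w \in T} \pset_w$ carry total flow at least $|T|\cdot \tau'$. On each such path $P \in \pset_w$, the vertex $p(P)$ immediately preceding $w$ on $P$ lies in $L$ (because $w$ is the first $S$-vertex on $P$) and satisfies $p(P) \notin \beta'(v)$ (since $w$ has no in-neighbor in $\beta'(v)$). Truncating each path at $p(P)$ yields a vertex-capacitated flow $f''$ from $v$ to $L \setminus \beta'(v)$ inside $G[L]$ of value at least $|T| \cdot \tau'$, with capacities inherited from $f$.

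Finally, let $V' = \{u \in L \setminus \{v\} : w(u) \leq \tau\}$. By the characterization of $\beta'(v)$ from the first paragraph, after deleting $V'$ from $G[L]$ the only remaining vertices are $\{v\} \cup L^{>\tau}$, and the connected component of $v$ in this residual graph is exactly $\beta'(v)$. Hence $V'$ is a vertex separator in $G[L]$ between $v$ and $L \setminus \beta'(v)$, with total weight $w(V') \leq (|L|-1)\cdot \tau \leq \lambda\tau$. Applying max-flow/min-cut to $f''$ (augmented with an auxiliary super-sink attached to $L \setminus \beta'(v)$) gives
\[
|T|\cdot \tau' \;\leq\; \val(f'') \;\leq\; w(V') \;\leq\; \lambda \tau,
\]
and substituting $\tau' = \tau/(64\lambda^2)$ yields $|T| \leq \lambda\tau/\tau' = 64\lambda^3$, as desired.

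The main obstacle is the structural identification in the first paragraph: namely, pinning down $\beta'(v)$ as the $v$-component of the weight-$>\tau$ part of $G[L]$, which is what makes the separator $V'$ work in the final step. Once this identification is in hand, the flow-path decomposition is a routine extension of the \Cref{claim: neighbors in S}-style argument applied simultaneously to all of $\beta'(v)$, and the rest of the calculation is just bookkeeping.
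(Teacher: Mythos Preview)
The proposal is correct and takes essentially the same approach as the paper: both use the max-flow from $v$ of value $w(S)$, observe that each flow-path through a vertex of $\hat S\setminus J(v)$ must first pass through a vertex of $L\setminus\{v\}$ of weight at most $\tau$, and bound the total such flow by $|L|\cdot\tau\leq\lambda\tau$, then divide by $\tau'$. Your framing via the separator $V'$ and max-flow/min-cut is a mild repackaging of the paper's direct counting argument; one small terminological slip is calling $\beta'(v)$ a ``connected component'' in a directed graph---what you need (and what your argument actually uses) is that $\beta'(v)$ is the set of vertices \emph{reachable from $v$} via directed paths through $\{v\}\cup L^{>\tau}$.
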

\begin{proof}
	Assume that Condition (C'3) holds, and consider any vertex $v\in L$. Recall that, from \Cref{obs: bad vertex}, $v\not\in B$ must hold, so vertex set $J(v)$ is well defined. 
	
	From the properties of the global minimum vertex-cut, there is a flow $f$ of value $w(S)=\opt$ in $G$ from $v$ to the vertex $y$ that is given to the algorithm as input; since we assumed that Condition (C'3) holds, $y\in R$. Let $\pset$ be the flow-path decomposition of this flow. Then every path $P\in \pset$ contains exactly one vertex of $S$, that we denote by $u(P)$. For every vertex $u\in S$, let $\pset_u\subseteq \pset$ contain all paths $P\in \pset$ with $u(P)=u$. Then $\sum_{P\in \pset_u}f(P)=w(u)$, and moreover, for every path $P\in \pset_u$, all vertices preceding $u$ on $P$ lie in $L$.
	
	We denote by $\hat S'=\hat S\setminus J(v)$, and we let $\pset'=\bigcup_{u\in \hat S'}\pset(u)$. We claim that every path $P\in \pset'$ must contain a vertex of $L\setminus\set{v}$, whose weight is at most $\tau$. Indeed, consider any path $P\in \pset'$, and assume for contradiction that it does not contain any such vertex. Let $P'$ be the subpath of $P$ from $v$ to $u(P)$, and let $v'$ be the penultimate vertex on this path. Finally, let $P''$ be the subpath of $P'$ from $v$ to $v'$. Then all vertices of $P''$ lie in $L$, and all vertices of $V(P'')\setminus\set{v}$ have weight greater than $\tau$. 
	Since $v\not\in B$, $|\beta'(v)|\leq \lambda$, so $\beta'(v)$ must contain all vertices of $P''$. Since $u(P)\in \hat S$, we get that $u(P)\in \Vld\cap V^{\geq \tau'}$, and so $u(P)\in J(v)$ must hold; a contradiction.
	We conclude that every path $P\in \pset'$ must contain a vertex of $L\setminus\set{v}$, whose weight is at most $\tau$. Therefore, 
	$\sum_{P\in \pset'}f(P)\leq |L|\cdot \tau\leq 64\lambda^3\cdot \tau'$, since $\tau'=\frac{\tau}{64\lambda^2}$ and $|L|\leq \lambda$.

	Since $\sum_{P\in \pset'}f(P)=w(\hat S')$ must hold, we conclude that $w(\hat S')\leq 64\lambda^3\cdot \tau'$.
	Since every vertex in $\hat S'$ has weight at least $\tau'$, we get that $|\hat S'|\leq 64\lambda^3$.
\end{proof}

Consider now the following set of vertices:

\[A'=\set{v\in V(G)\setminus B \mid |J(v)\triangle J(x)|\leq 256\lambda^3}. \]

We show that, if Condition (C'3) holds, then $L\subseteq A'$, and, for every vertex $v\in A'$, $|J(v)\cap \hat S|\geq \frac{3|\hat S|}{4}$.
	
\begin{claim}\label{claim: central for case 2}
	Assume that Condition (C'3) holds. Then $L\subseteq A'$, and for every vertex $v\in A'$, $|J(v)\cap \hat S|\geq \frac{3|\hat S|}{4}$. 
\end{claim}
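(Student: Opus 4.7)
The plan is to establish the two assertions of the claim by combining Observation~\ref{obs: aux for case 2} (which bounds $|\hat S \setminus J(v)|$ for $v \in L$), Observation~\ref{obs: bad vertex} (which ensures $L \cap B = \emptyset$, so that $J(v)$ is defined for every $v \in L$), and the lower bound $|\hat S| > 2^{12}\lambda^3$ guaranteed by Condition~(C'3). The key structural observation, which we will use implicitly, is that for any $v \in L$ the set $\beta'(v)$ is contained in $L$ (again by Observation~\ref{obs: bad vertex}), so every vertex in $J(v)$ is an out-neighbor of some vertex of $L$, and thus lies in $L \cup S$; combined with the definition of $J(v)$ as a subset of $\Vld \cap V^{\geq \tau'}$, this yields $J(v) \subseteq L \cup \hat S$ and in particular $|J(v) \setminus \hat S| \leq |L| \leq \lambda$.

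For the first assertion, consider any $v \in L$. Since $x \in L$ by hypothesis, both $x$ and $v$ lie outside $B$, so $J(v)$ and $J(x)$ are well defined. By Observation~\ref{obs: aux for case 2}, $|\hat S \setminus J(v)| \leq 64\lambda^3$ and $|\hat S \setminus J(x)| \leq 64\lambda^3$. Writing
\[
|J(v) \triangle J(x)| \leq |J(v) \setminus \hat S| + |J(x) \setminus \hat S| + \bigl|(J(v)\cap \hat S) \triangle (J(x)\cap \hat S)\bigr|,
\]
the first two terms are each at most $\lambda$ by the structural observation above, and the last term is bounded by $|\hat S\setminus J(v)| + |\hat S\setminus J(x)| \leq 128\lambda^3$. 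Hence $|J(v) \triangle J(x)| \leq 2\lambda + 128\lambda^3 \leq 256\lambda^3$, so $v \in A'$.

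For the second assertion, let $v \in A'$ be arbitrary. Since $x \in L$, Observation~\ref{obs: aux for case 2} gives $|J(x) \cap \hat S| \geq |\hat S| - 64\lambda^3$. From $v \in A'$ we have $|J(x) \setminus J(v)| \leq |J(v) \triangle J(x)| \leq 256\lambda^3$, and therefore
\[
|J(v) \cap \hat S| \geq |J(x) \cap \hat S| - |J(x) \setminus J(v)| \geq |\hat S| - 64\lambda^3 - 256\lambda^3 = |\hat S| - 320\lambda^3.
\]
Condition~(C'3) guarantees $|\hat S| > 2^{12}\lambda^3 = 4096\lambda^3$, so $320\lambda^3 < |\hat S|/4$, and we conclude $|J(v) \cap \hat S| \geq 3|\hat S|/4$, as required. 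The only slightly delicate point is the step $J(v) \subseteq L \cup \hat S$ for $v \in L$, which relies on the definition of the critical threshold $\tau$ ensuring that every low-degree out-neighbor of $\beta'(v) \subseteq L$ with weight $\geq \tau'$ that lies in $S$ is automatically a low-degree vertex of weight $\geq \tau' \geq 1$ and therefore belongs to $\hat S$; otherwise it must lie in $L$.
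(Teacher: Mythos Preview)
Your proof is correct and follows essentially the same approach as the paper: both rely on Observation~\ref{obs: bad vertex} to get $\beta'(v)\subseteq L$ and hence $J(v)\subseteq L\cup\hat S$ for $v\in L$, combine this with Observation~\ref{obs: aux for case 2}, and finish with the lower bound $|\hat S|>2^{12}\lambda^3$. The only cosmetic difference is that the paper bounds each one-sided difference $|J(u)\setminus J(v)|\leq |L|+|\hat S\setminus J(v)|$ directly, whereas you split the symmetric difference according to membership in $\hat S$; the arithmetic is equivalent. One minor remark: your final paragraph's reference to the critical threshold is unnecessary (and the claim $\tau'\geq 1$ need not hold)---the inclusion $J(v)\cap S\subseteq \hat S$ follows immediately from the definition $\hat S=S\cap\Vld\cap V^{\geq\tau'}$, without appealing to $\tau$.
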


\begin{proof}
Assume that Condition (C'3) holds.	
We start by proving that $L\subseteq A'$. Specifically, we will prove that, for every pair $u,v\in L$ of distinct vertices, $|J(v)\setminus J(u)|\leq 128\lambda^3$. It will then follow that, for every vertex $a\in L$, $|J(x)\triangle J(a)|=|J(x)\setminus J(a)|+|J(a)\setminus J(x)|\leq 256\lambda^3$, and so $L\subseteq A'$.

We now fix any pair $(u,v)$ of distinct vertices of $L$ and prove that $|J(u)\setminus J(v)|\leq 128\lambda^3$. 
Recall that, from \Cref{obs: bad vertex}, $\beta'(u)\subseteq L$ holds. Therefore, if $a$ is a vertex of $J(u)$, then $a\in (L\cup S)\cap \Vld\cap V^{\geq \tau'}\subseteq L\cup \hat S$. 
We conclude that, if $a\in J(u)\setminus J(v)$, then either (i) $a\in L$; or (ii) $a\in \hat S\setminus J(v)$. However, from Condition (C'3), $|L|\leq \lambda$ holds, and from \Cref{obs: aux for case 2}, $|\hat S\setminus J(v)|\leq 64\lambda^3$. Altogether, we get that:

\[|J(u)\setminus J(v)|\leq |L|+|\hat S\setminus J(v)|\leq 128\lambda^3.\]

Consider now any vertex $v\in A'$.
Since $x\in L$, from \Cref{obs: aux for case 2}, 
$|\hat S\setminus J(x)|\leq 64\lambda^3$. Since $|J(v)\triangle J(x)|\leq 256\lambda^3$, while $|\hat S|>2^{12}\lambda^3$, we get that at least $\frac{3|\hat S|}4$ vertices of $\hat S$ lie in $J(v)$.
\end{proof}

\paragraph{The Algorithm.}
We are now ready to describe our algorithm. 
We compute the set $J(x)$ of vertices by inspecting the out-neighbors of every vertex $v\in \beta'(x)$. Since each such vertex  $v$ is a low-degree vertex, set $J(x)$ can be computed in time $O(d\cdot n^{\eps}\cdot |\beta'(x)|)\leq O(d\cdot n^{\eps}\cdot \lambda)$.
Next, we set $q=64\ceil{\log n}$, and select $q$ vertices $v_1,\ldots,v_q\in J(x)$ uniformly independently at random with repetitions. 

We set $A=\emptyset$, and then perform $q$ iterations. For all $1\leq i\leq q$, in order the execute the $i$th iteration, we start by adding to $A$ all vertices in $N^-_G(v_i)$. Then we consider every vertex $a\in N^-_G(v_i)\cap \Vld\cap V^{\geq \tau'}$ that is not suspicious. For each such vertex $a$, we add all vertices of $\sigma'(a)$ to the set $A$. 
We then return the set $A$ obtained at the end of iteration $q$.

Recall that every vertex in $J(x)$ is low-degree,
so $|N^-_G(v_i)|\leq n^{\eps}\cdot d$. Moreover, if a vertex $a\in N^-_G(v_i)\cap \Vld\cap V^{\geq \tau'}$ is not suspicious, then $|\sigma'(a)|\leq 2n^{\eps}\cdot \lambda$. Therefore, 
the running time of a single iteration is $O(dn^{2\eps}\lambda)$, and the number of vertices added to $A$ in a single iteration is also bounded by $O(dn^{2\eps}\lambda)$. Since $q\leq O(\log n)$,
The total running time of the algorithm is  bounded by $O(dn^{2\eps}\lambda\cdot \log n)$, and $|A|\leq O(dn^{2\eps}\lambda\cdot \log n)$ holds.

We use the following claim to finish the analysis of the algorithm.

\begin{claim}\label{claim: catch all2}
	Assume that Condition (C'3) holds. Then with probability at least $0.99$, $A'\subseteq A$.
\end{claim}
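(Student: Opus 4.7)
The plan is to fix an arbitrary $v \in A'$ and show that $\Pr[v \notin A]$ is at most $n^{-25}$; then the union bound over $|A'| \leq n$ gives $\Pr[A' \not\subseteq A] \leq 0.01$.

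For the first step, I will show that if in some iteration $i$ the sampled vertex $v_i$ happens to lie in $J(v)$, then the algorithm necessarily adds $v$ to $A$. Since $v \in A' \subseteq V(G) \setminus B$, the definition of the set $B$ of bad vertices forces $|\beta'(v)| \leq \lambda$ and $\beta'(v) \cap U' = \emptyset$, so every vertex of $\beta'(v)$ is non-suspicious. Moreover, by construction of $\beta'(v)$, every vertex of $\beta'(v) \setminus \{v\}$ is low-degree and has weight strictly greater than $\tau \geq \tau'$, hence lies in $\Vld \cap V^{\geq \tau'}$. Now if $v_i \in J(v)$, then by definition of $J(v)$ there is some $u \in \beta'(v)$ with $v_i \in N^+_G(u)$, equivalently $u \in N^-_G(v_i)$. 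If $u = v$, then $v \in N^-_G(v_i)$ and the algorithm adds $v$ to $A$ at the beginning of iteration $i$ (option 1). Otherwise $u \in \beta'(v) \setminus \{v\}$ is non-suspicious and lies in $\Vld \cap V^{\geq \tau'}$, so when the algorithm processes $u$ in iteration $i$ it adds all of $\sigma'(u)$ to $A$, and since $u \in \beta'(v)$ we have $v \in \sigma'(u)$ (option 2). In either case $v$ lands in $A$.

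For the second step, I will lower bound the probability that a single uniform sample $v_i$ from $J(x)$ lies in $J(v)$. Since Condition (C'3) holds we have $x \in L$, and by \Cref{claim: central for case 2} this gives $x \in A'$ and $|J(x) \cap \hat S| \geq \tfrac{3}{4}|\hat S|$; by the same claim $|J(v) \cap \hat S| \geq \tfrac{3}{4}|\hat S|$. Inclusion-exclusion inside $\hat S$ yields
\[
|J(v) \cap J(x)| \;\geq\; |J(v) \cap J(x) \cap \hat S| \;\geq\; \tfrac{3}{4}|\hat S| + \tfrac{3}{4}|\hat S| - |\hat S| \;=\; \tfrac{1}{2}|\hat S|.
\]
To upper bound $|J(x)|$, I will use \Cref{obs: bad vertex} to get $\beta'(x) \subseteq L$ (which applies because $x \in L$ under (C'3)), and hence $J(x) \subseteq N^+_G(L) \cap \Vld \cap V^{\geq \tau'} \subseteq (L \cup S) \cap \Vld \cap V^{\geq \tau'} \subseteq L \cup \hat S$. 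Therefore $|J(x)| \leq |L| + |\hat S| \leq \lambda + |\hat S| \leq 2|\hat S|$, using $|\hat S| > 2^{12}\lambda^3 \geq \lambda$. Combining, $\Pr[v_i \in J(v)] \geq (|\hat S|/2)/(2|\hat S|) = 1/4$.

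For the third step, independence of the $q = 64\lceil \log n \rceil$ samples from $J(x)$ gives $\Pr[v \notin A] \leq (3/4)^{q} \leq n^{-64\log(4/3)} \leq n^{-25}$, and the union bound over $v \in A'$ with $|A'| \leq n$ yields $\Pr[A' \not\subseteq A] \leq n^{-24} < 0.01$, as required. The only nontrivial point in the argument is the case analysis in the first step distinguishing whether the "witness" $u \in \beta'(v)$ that connects $v_i$ back to $v$ is $v$ itself (handled by option 1) or one of the other vertices of $\beta'(v)$ (handled by option 2, using that $v \notin B$ guarantees non-suspiciousness); everything else is a clean inclusion-exclusion plus Chernoff-style bound.
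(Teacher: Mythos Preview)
Your proof is correct and follows the same three-step skeleton as the paper: (1) show that if some sample $v_i$ lands in $J(v)$ then $v$ is forced into $A$ via the same $u=v$ versus $u\in\beta'(v)\setminus\{v\}$ case split; (2) lower-bound the per-sample hit probability; (3) union bound. Step 1 and Step 3 match the paper essentially verbatim.

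The only genuine divergence is in Step 2. The paper bounds $\Pr[v_i\in J(v)]$ by invoking the \emph{definition} of $A'$ directly: since $v\in A'$, we have $|J(x)\setminus J(v)|\le 256\lambda^3$, while $|J(x)|\ge\tfrac34|\hat S|\ge 1024\lambda^3$ from \Cref{claim: central for case 2}, so $|J(x)\cap J(v)|>|J(x)|/2$ and the per-sample probability exceeds $1/2$. You instead never touch the symmetric-difference bound defining $A'$; you use \Cref{claim: central for case 2} twice (for both $J(x)\cap\hat S$ and $J(v)\cap\hat S$), do inclusion--exclusion inside $\hat S$ to get $|J(v)\cap J(x)|\ge|\hat S|/2$, and then upper-bound $|J(x)|\le 2|\hat S|$ via $\beta'(x)\subseteq L$ and $J(x)\subseteq L\cup\hat S$, obtaining per-sample probability $\ge 1/4$. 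Both routes are clean; the paper's is slightly sharper (yielding $(1/2)^q$ rather than $(3/4)^q$) and uses the definition of $A'$ more directly, while yours leans entirely on the structural consequence in \Cref{claim: central for case 2} and the containment $J(x)\subseteq L\cup\hat S$, which is a nice observation the paper does not make explicit.
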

\begin{proof}
Assume that Condition (C'3) holds.
For every vertex $u\in A'$, we define a bad event $\hat \event_u$ that $u\not\in A$, and we let $\hat \event$ be the bad event that $A'\not\subseteq A$. Clearly, $\hat \event$ may only happen if $\hat \event_u$ happens for some $u\in A'$. We now fix a vertex $u\in A'$ and bound $\prob{\hat \event_u}$.

	Consider the vertices $v_1,\ldots,v_q$ that were randomly selected from $J(x)$ by the algorithm, and assume that, for some $1\leq i\leq q$, $v_i\in J(u)$. Then there is a vertex $u'\in \beta'(u)$ with edge $(u',v_i)\in E(G)$. 
	If $u'=u$, then $u$ was added to $A$, since all vertices in $N^-_G(v_i)$ are added to $A$.
	Otherwise, from the definition of set $A'$, $u\not\in B$ and so $u'$ may not be a suspicious vertex. Moreover, from the definition of the set $\beta'(u)$, $u'$ is a low-degree vertex whose weight is greater than $\tau$. In particular, $u'\in  N^-_G(v_i)\cap \Vld\cap V^{\geq \tau'}$ and it is not suspicious, so all vertices in $\sigma'(u')$ were added to $A$, including the vertex $u$. We conclude that Event $\hat \event_u$ may only happen if no vertices of $J(u)\cap J(x)$ was chosen in the random trials. We now prove that $|J(u)\cap J(x)|\geq |J(x)|/2$, which, in turn, will imply that Event $\hat \event_u$ has a low probability.
	
	Indeed, from the definition of the set $A'$: 
	
	\[|J(x)\setminus J(u)|\leq 256\lambda^3.\]

	On the other hand, since $|\hat S|\geq 2^{12}\lambda^3$, and since, from \Cref{claim: central for case 2}, $|J(x)\cap \hat S|\geq \frac{3|\hat S|}{4}$, we get that:
	
	\[|J(x)|\geq \frac{3|\hat S|}{4}\geq 1024 \lambda^3.\]

	By combining the two inequalities, we get that $|J(x)\setminus J(v)|<\frac{|J(x)|}2$. The probability that none of the vertices $v_1,\ldots,v_q$ lie in $J(x)\cap J(u)$ is then bounded by $\frac{1}{2^q}\leq \frac{1}{n^2}$. We conclude that $\prob{\hat \event_u}\leq \frac{1}{n^2}$. By using the Union bound over all $u\in A'$, the claim follows.
\end{proof}

Assume that Condition (C'3) holds. Then, from \Cref{claim: central for case 2}, $L\subseteq A'$, and, from \Cref{claim: catch all2}, with probability at least $\half$, $A'\subseteq A$. We conclude that, with probability at least $\half$, $L\subseteq A$ holds.

\section{Algorithm for Weighted Directed Dense Graphs}
\label{sec: weighted dense}

In this section we provide an algorithm for weighted directed global minimum vertex-cut that is summarized in the following theorem.
\begin{theorem}\label{thm: dense weighted main theorem}
	There is a randomized algorithm, whose input consists of a simple directed $n$-vertex and $m$-edge graph $G$ with integral weights $1 \leq w(v) \leq W$ on its vertices $v \in V(G)$, such that $G$ contains some vertex-cut. The algorithm computes a vertex-cut $(L',S',R')$ in $G$, and it guarantees that, if there exists a global minimum vertex-cut $(L,S,R)$ in $G$ with $|L| \leq n^{1/2}$, then, with probability at least $\frac{1}{48\log n}$, $(L',S',R')$ is a global minimum vertex-cut.
	The running time of the algorithm is $O\left(n^{2+ (5-\omega)/(11-3\omega) +o(1)} \cdot (\log W)^{O(1)}\right)\leq O\left(n^{2.677} \cdot (\log W)^{O(1)}\right)$,
	where $\omega$ is the matrix multiplication exponent.
\end{theorem}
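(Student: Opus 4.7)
The plan is to break the proof into three stages: pair selection, a per-pair main subroutine that iteratively improves a well-behaving cut, and a parameter-balancing analysis of the running time. For pair selection, I would use a randomized procedure analogous to \Cref{claim: compute pairs2}: guess an integer power of $2$ value $\lambda$ with $\lambda/2 \leq |L| \leq \lambda$ (correct with probability $\Omega(1/\log n)$ since $|L|\leq n^{1/2}$), then sample a set $\Gamma$ of $\tilde O(n/\lambda)$ vertex pairs so that, conditioned on the right $\lambda$, with probability $\Omega(1)$ the set $\Gamma$ contains a \emph{distinguished pair} $(x^*,y^*)$ with $x^*\in L$, $y^*\in R$, and $(x^*,y^*)\notin E(G)$. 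The algorithm then invokes the main subroutine on each pair $(x,y)\in\Gamma$ and returns the smallest-value resulting vertex-cut.

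For the main subroutine, fix a pair $(x,y)\in\Gamma$ and a parameter $\gamma=n^c$ to be chosen later. Set $M_0=10W$ and $M_i=M_{i-1}/2$, and let $z=O(\log(nW))$ be the smallest integer with $M_z\leq 1/(20n)$. Initialize $(L_0,S_0,R_0)=(\{x\},V\setminus\{x,y\},\{y\})$, which is well-behaving for scale $M_0$. Each of the $z$ phases receives a well-behaving cut $(L_{i-1},S_{i-1},R_{i-1})$ for scale $M_{i-1}$ and produces one $(L_i,S_i,R_i)$ for scale $M_i=M_{i-1}/2$ via the three steps outlined in the introduction: Step~1 exploits $(L_{i-1},S_{i-1},R_{i-1})$ to sparsify $G$ and add a valid set $E'$ of $O(n)$ shortcut edges so that the max $x$-$y$ vertex-flow in $\hat G\cup E'$ is close to that in $G$; applying \Cref{thm: maxflow} (or \Cref{thm : computing-flow-with-few-edges}) to a split-graph version yields a valid shortcut flow $(E',f)$. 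Step~2 applies the local flow algorithm \Cref{thm : unit-flow} to the residual network of the split graph of $\tilde G\cup E'$ (where $\tilde G$ slightly reduces heavy vertex weights) to compute a set $A\subseteq V(G)$ with $\vol^+_G(A)\leq n^2/\gamma$ that is \emph{promising}: there exists a well-behaving $(\hat L,\hat S,\hat R)$ for scale $M_i$ with $\hat L\subseteq A$. Step~3 builds the derived graph $G^{|A}$ with $O(n^2/\gamma)$ edges (via \Cref{claim: properties of defined graph}) and runs vertex-\AFMC on $G^{|A}$ (\Cref{thm : vertex weighted approximately minimum containing cut}) with parameter $\alpha\approx nM_i$ to recover $(L_i,S_i,R_i)$, which via \Cref{claim: transforming cuts from defined graph} lifts back to a vertex-cut of $G$. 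After $z$ phases, $w(S_z)\leq w(S)+10nM_z<w(S)+1$ and integrality of weights forces $w(S_z)=w(S)$, proving optimality of the returned cut under Condition~(C).

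For the running time analysis, each phase's three steps reduce to: one almost-linear max-flow call on a graph of size $\tilde O(n^2/\gamma)$ (Step~1), one local-flow call of cost $\tilde O(|L|\cdot n^2/\gamma)$ with distance parameter $h=O(|L|)$ (Step~2), and one vertex-\AFMC call on a graph with $\tilde O(n^2/\gamma)$ edges (Step~3). Summed over $z=O(\log(nW))$ phases and $|\Gamma|=\tilde O(n/\lambda)$ pairs, this gives $\tilde O(n^3/\gamma)\cdot\poly\log W$. The complementary cost comes from accumulating all subgraph-oracle queries across pairs and phases (used to sparsify edges into heavy-vertex buckets in Step~1) and processing them in bulk via \Cref{lem: oracle in bulk}, contributing $\tilde O(n^{3-0.478(\delta+\epsilon)}+n^{2+\delta})$ with $\delta$ calibrated by $\gamma$ and $\epsilon$ by $\lambda$. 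Setting $\gamma=n^{(5-\omega)/(11-3\omega)}$ balances the two regimes and yields the claimed $O(n^{2+(5-\omega)/(11-3\omega)+o(1)}\cdot(\log W)^{O(1)})$ bound; repeating the entire algorithm $O(\log n)$ times boosts the success probability to $1/(48\log n)$.

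The main obstacle will be Step~2: guaranteeing that the set $A$ returned by local flow supports a well-behaving cut satisfying the heavy-vertex containment property (iii), not merely the volume bound (i) and the weight bound (ii). The plan is to work in $\tilde G$ (the graph with heavy weights shaved by $M_i$) so that $(L,S,R)$ remains essentially minimum, and then use the vertex-\AFMC guarantee together with \Cref{claim: neighbors in S} to argue that a near-minimum $x$-$y$ cut $(\hat L,\hat S,\hat R)$ in $\tilde G$ with $\hat L\subseteq A$ cannot miss too many heavy vertices of $L\cup S$; translating this back to the original weights then gives property (iii) with the slack $|L|\cdot\gamma$ in the definition of a well-behaving cut for scale $M_i$.
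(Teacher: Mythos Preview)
Your proposal follows essentially the same approach as the paper: pair selection via a guessed $\lambda$, a per-pair main subroutine that iterates over $O(\log(nW))$ scales improving a well-behaving cut through the three steps (sparsification plus shortcut flow, local flow to extract a promising set $A$, then vertex-\AFMC on $G^{|A}$), with subgraph-oracle queries batched across pairs via \Cref{lem: oracle in bulk}. Two small slips to flag: the balancing value of $\gamma$ should be $n^{(6-2\omega)/(11-3\omega)}=n^{1-(5-\omega)/(11-3\omega)}$ (so that $n^3/\gamma$ hits the target), not $n^{(5-\omega)/(11-3\omega)}$; and the heavy-vertex containment property (iii) is obtained not via \Cref{claim: neighbors in S} but directly from the weight-shaving trick you describe---near-optimality in the shaved graph $\tilde G$ forces the cut to include almost all heavy vertices, since each missed heavy vertex costs an extra $M_i$ units relative to $(L,S,R)$.
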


We provide the proof of \Cref{thm: dense weighted main theorem} below, after we complete the proof of \Cref{thm: main: weighted} using it.

\subsection{Completing the Proof of  \Cref{thm: main: weighted}}
Recall that we are given as input a simple directed $n$-vertex and $m$-edge graph $G$ with integral weights $0 \leq w(v) \leq W$ on its vertices $v\in V(G)$, such that $G$ contains some vertex-cut. Let $d=\frac{2m}{n}$ denote the average vertex degree in $G$. Our goal is to compute a vertex-cut $(L',S',R')$ in $G$, so that, with probability at least $1/2$, $(L',S',R')$ is a global minimum vertex-cut in $G$. 
We design a randomized algorithm with these properties, whose running time is bounded by: 

$$O\left(\min\left\{mn^{11/12+o(1)}d^{1/12},n^{2+(5-\omega)/(11-3\omega)+o(1)}\right\}(\log W)^{O(1)}\right),$$ 

where $\omega$ is the matrix multiplication exponent.
Using the current bound $\omega \leq 2.371552$~\cite{WXXZ24}, we get that this running time is bounded by $O\left(\min\set{m n^{11/12+o(1)} \cdot d^{1/12}, n^{2.677}}\cdot (\log W)^{O(1)}\right )$, as required. 

Our algorithm distinguishes between two cases. The first case happens if  $mn^{11/12}\cdot d^{1/12} \leq n^{2+(5-\omega)/(11-3\omega)}$. In this case we apply the algorithm from \Cref{thm : sparse weighted algorithm} to the graph $G$, and return the vertex-cut $(L',S',R')$ that the algorithm outputs. From \Cref{thm : sparse weighted algorithm}, with probability at least $\half$, $(L',S',R')$ is a global minimum vertex-cut. Clearly, the running time of the algorithm in this case is bounded by:

$$O\left(m n^{11/12 + o(1)} \cdot d^{1/12} \cdot \log^2(W)\right)\leq O\left(\min\left\{mn^{11/12+o(1)}d^{1/12},n^{2+(5-\omega)/(11-3\omega)+o(1)}\right\}(\log W)^{O(1)}\right),$$

  as required.
From now on we assume that the second case happens, that is, $n^{2+(5-\omega)/(11-3\omega)}<mn^{11/12}\cdot d^{1/12}$. Notice that in this case, it is sufficient to design an algorithm whose running time is bounded by $O\left(n^{2+(5-\omega)/(11-3\omega)+o(1)} \cdot (\log W)^{O(1)}\right )$.

Using the transformation described in \Cref{subsec: positive weights}, it is sufficient to design an algorithm for the special case where all vertex weights are strictly positive (since this transformation inflates the vertex weights by $O(n^2)$ factor, this will only lead to an $O(\poly\log n)$ additional overhead in the running time). Therefore, we assume from now on that all vertex weights in the graph $G$ are strictly positive.

Consider the following algorithm: we apply the algorithm from  \Cref{thm: dense weighted main theorem} to the graph $G$, and denote by $(L',S',R')$ the vertex-cut that it returns.
We then apply the algorithm from \Cref{thm : alg for when L is large} to the graph $G$, with the parameter $\epsilon=1/2$, and denote by $(L'',S'',R'')$ the vertex-cut that it returns. We then return the smaller-value vertex-cut from among the cuts $(L',S',R')$ and $(L'',S'',R'')$.
Recall that the running time of the algorithm from \Cref{thm: dense weighted main theorem}  is $O\left(n^{2+ (5-\omega)/(11-3\omega) +o(1)} \cdot (\log W)^{O(1)}\right)$, while the running time of the algorithm from  \Cref{thm : alg for when L is large}
is  $O\left(m n^{1-\epsilon +o(1)} \log W\right) \leq O\left(m n^{1/2+o(1)} \log W\right) \leq O\left(n^{5/2+o(1)} \log W\right)$; notice that this expression is bounded by $O\left(n^{2+ (5-\omega)/(11-3\omega) +o(1)} \cdot (\log W)^{O(1)}\right)$ for any possible value of $2 \leq \omega \leq 3$.

For the sake of analysis, we fix an arbitrary global minimum vertex-cut $(L,S,R)$, that we refer to as the \emph{distinguished cut}. From \Cref{thm: dense weighted main theorem}, if  $|L| \leq n^{1/2}$, then the cut $(L',S',R')$ is guaranteed to be a global minimum vertex-cut with probability at least $\frac{1}{48 \log n}$.
Otherwise, from \Cref{thm : alg for when L is large}, the cut $(L'',S'',R'')$ is guaranteed to be a global minimum vertex-cut with probability at least  $\left(1-1/n^2\right)$. Therefore, in either case, with probability at least $\frac{1}{48 \log n}$, the cut that the algorithm returns is a global minimum vertex-cut. By repeating this algorithm $O(\log n)$ times and returning the smallest-value vertex-cut among the resulting cuts, we can guarantee that, with probability at least $\half$, the cut that the algorithm returns is a global minimum vertex-cut. The total running time of the algorithm for Case 2 remains bounded by:

\[O\left(n^{2+ (5-\omega)/(11-3\omega) +o(1)} \cdot (\log W)^{O(1)}\right),\]

and the total running time of the algorithm is bounded by:
	$$O\left(\min\set{m n^{11/12+o(1)}\cdot d^{1/12}, n^{2.677}}(\log W)^{O(1)}\right ) \leq O\left(m n^{0.976}(\log W)^{O(1)}\right).$$

In order to complete the proof of \Cref{thm: main: weighted}, it is now enough to prove \Cref{thm: dense weighted main theorem}, which we in the remainder of this section.

\subsection{Proof of \Cref{thm: dense weighted main theorem}} \label{sec: proof of dense weighted main}

Our proof follows the high-level approach from \cite{CT24}, but we extend their algorithm to directed graphs, while also significantly simplifying it and achieving a faster running time.

Recall that the algorithm receives as input a simple directed $n$-vertex and $m$-edge graph $G$ with integral weights $1 \leq w(v) \leq W$ on its vertices $v \in V(G)$, such that $G$ contains some vertex-cut.
We can assume that $n$ is greater than a large enough constant, since otherwise we can use, for example, the algorithm from \Cref{thm : sparse weighted algorithm}, whose running time is bounded by $O(n^3\log^2W)\leq O(\log^2W)$. We can also assume without loss of generality that $W=\max_{v\in V(G)}\set{w(v)}$, since otherwise we can reduce the value of $W$ to be equal to $\max_{v\in V(G)}\set{w(v)}$, and it is sufficient to prove \Cref{thm: dense weighted main theorem} with this new value of the parameter $W$.
For the sake of the analysis, we fix a global minimum vertex-cut $(L,S,R)$, that we refer to as the \emph{distinguished cut}, as follows. If there exists a global minimum vertex-cut $(L^*,S^*,R^*)$ with $|L^*| \leq n^{1/2}$, then  we let $(L,S,R)$ be any such cut, and we say that the distinguished cut is \emph{good}.
Otherwise, we let $(L,S,R)$ be any global minimum vertex-cut, and we say that the distinguished cut is \emph{bad}.

Recall that our goal is to compute a vertex-cut $(L',S',R')$ in $G$, such that, if the distinguished cut is good, then,  with probability at least $\frac{1}{48\log n}$, $(L',S',R')$ is a global minimum vertex-cut.

At a very high level, our algorithm follows a rather standard approach.
We start by employing a simple randomized algorithm to select a value $1\leq\lambda \leq n$ and a collection $\Gamma$ of pairs of vertices of $G$. Intuitively, we would like to ensure that $\lambda$ is close to $|L|$ (where $(L,S,R)$ is the distinguished cut), and that there is some pair $(x,y)\in \Gamma$ with $x\in L$ and $y\in R$. We would also like to ensure that $|\Gamma|$ is not too large.
We will then compute, for every pair $(x,y)\in\Gamma$, an $x$-$y$ vertex-cut $(L_{x,y},S_{x,y},R_{x,y})$ in $G$, so that, if $|L|\leq \lambda$, $x\in L$ and $y\in R$ hold, then, with a sufficiently high probability, the corresponding vertex-cut $(L_{x,y},S_{x,y},R_{x,y})$ is a global minimum vertex-cut. We note that a similar high-level approach has been used in numerous prior works. The main difficulty with this approach is in designing an  algorithm that computes, for every pair $(x,y)\in \Gamma$ of vertices, the vertex-cut $(L_{x,y},S_{x,y},R_{x,y})$ with the desired properties efficiently; we note that we cannot afford to spend $\Theta(m)$ time per pair on such an algorithm.

We start with the following simple algorithm for computing the parameter $\lambda$ and the set $\Gamma$ of pairs of vertices of $G$.
The claim is similar to \Cref{claim: compute pairs2}, and its proof appears in Section \ref{subsec: proof of select pairs 2} of  Appendix.

\begin{claim}\label{cl: selecting pairs}
	There is a randomized algorithm, that we call $\algpairs'$, whose input consists of an $n$-vertex directed graph $G$ with integral weights $w(v) \geq 1$ on its vertices $v \in V(G)$. The algorithm returns 
	a value $1\le \lambda\leq n$ that is an integral power of $2$, together with a set $\Gamma$ of at most $\ceil{\frac{100n\log n}{\lambda}}$ pairs of vertices of $G$. We say that the algorithm is \emph{successful} with respect to a fixed global minimum vertex-cut $(L,S,R)$, if  $\frac{\lambda}{2}\leq |L|\leq\lambda$, and  there is a pair $(x,y)\in \Gamma$ with $x\in L$ and $y\in R$.  Let $(L,S,R)$ be any global minimum vertex-cut in $G$.
	The running time of the algorithm is $\tilde O\left(n^2\right)$, and it is successful with respect to $(L,S,R)$ with probability at least $\frac{1}{12\log n}$. 
\end{claim}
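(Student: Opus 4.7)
}
The plan is to closely mirror the strategy already used in \Cref{claim: compute pairs2} and in the algorithms underlying \Cref{lem: cut if vertex of L}, but in a simplified form since we no longer need to guess a critical threshold $\tau$. Specifically, the algorithm $\algpairs'$ will consist of two independent randomized choices: first, a guess $\lambda$ on $|L|$; second, a collection of candidate pairs designed so that at least one pair straddles an arbitrary global minimum vertex-cut $(L,S,R)$.

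For the first step, I would select an integer $j$ uniformly at random from $\{1, 2, \ldots, \lceil \log n \rceil\}$ and set $\lambda = 2^j$. Letting $\event_1$ denote the event that $\lambda/2 \leq |L| \leq \lambda$, this immediately gives $\Pr[\event_1] \geq 1/\lceil \log n \rceil$. For the second step, I would form $\Gamma$ by performing $r = \lceil 100 n \log n / \lambda \rceil$ independent trials; in each trial, pick $x$ uniformly at random from $V(G)$, and then pick $y$ from $V(G) \setminus (\{x\} \cup N_G^+(x))$ with probability proportional to $w(y)$ (if this latter set is empty, skip the trial). The resulting $r$ pairs form $\Gamma$, whose cardinality obeys the required bound by construction.

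The analysis then proceeds in two pieces. Conditioned on $\event_1$, the probability that a single trial satisfies $x \in L$ is at least $|L|/n \geq \lambda/(2n)$. Conditioned on $x \in L$, the weighted sampling of $y$ succeeds in landing in $R$ with probability at least $1/3$: this is exactly the content of \Cref{obs: alg2 success} from the proof of \Cref{lem: cut if vertex of L}, where it is shown via \Cref{claim: neighbors in S} that $w(V(G) \setminus (\{x\} \cup N_G^+(x))) \leq 3 w(R)$. Thus a single trial succeeds with probability at least $\lambda/(6n)$, and using the Union Bound over the $r = \lceil 100 n \log n / \lambda \rceil$ independent trials together with the standard estimate $(1 - \lambda/(6n))^r \leq e^{-100 \log n / 6} \leq 1/n^{10}$, we obtain that, conditioned on $\event_1$, at least one pair in $\Gamma$ satisfies $x \in L$ and $y \in R$ with probability at least $1 - 1/n^{10}$. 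Combining this with $\Pr[\event_1] \geq 1/\lceil \log n \rceil$ and mild constant slack yields the required overall success probability of at least $1/(12 \log n)$.

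For the running time, each trial requires $O(n)$ time to read the adjacency list of $x$, compute the complement $V(G) \setminus (\{x\} \cup N_G^+(x))$, compute its total weight, and sample $y$ proportionally. Since there are at most $O(n \log n / \lambda) \leq O(n \log n)$ trials, the total cost is $O(n^2 \log n) = \tilde O(n^2)$. There is no real obstacle in the proof: the weighted selection of $y$ is precisely what handles the mildly subtle case in which $|R|$ might be small relative to $n$ (because $|S|$ is large), and all other ingredients are direct consequences of earlier observations in the paper.
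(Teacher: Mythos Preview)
Your proposal is correct and takes essentially the same approach as the paper. The only cosmetic difference is that the paper invokes the already-proved subroutine \algterm (\Cref{claim: compute terminals}) to produce $\lambda$ and the set $T$ of candidate $x$-vertices, and then samples one $y_x$ per $x\in T$, whereas you inline this and analyze the joint probability that a single trial hits both $x\in L$ and $y\in R$; both analyses rest on the same weighted-sampling bound from \Cref{claim: neighbors in S}.
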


We apply Algorithm  $\algpairs'$ from \Cref{cl: selecting pairs} to graph $G$, and we let $\event$ be the good event that the algorithm is successful with respect to the distinguished cut $(L,S,R)$. From \Cref{cl: selecting pairs}, $\prob{\event}\geq \frac{1}{12\log n}$.

 If $\lambda >2n^{1/2}$, then  we terminate the algorithm and output an arbitrary vertex-cut. In this case, we say that the algorithm \emph{terminates early}. Note that, if the distinguished cut is good, then we may only terminate the algorithm early in this step if Event $\event$ did not happen. From now on we assume that the algorithm did not terminate early. Let $\Gamma'\subseteq\Gamma$ be a collection of pairs of vertices obtained from $\Gamma$ by discarding all pairs $(x,y)$ where either $x=y$ or $(x,y)\in E(G)$ hold. Note that, if Event $\event$ happened, then there is a pair $(x^*,y^*)\in \Gamma$ of vertices  with $x^*\in L$ and $y^*\in R$; in particular, $x^*\neq y^*$ and $(x^*,y^*)\not \in E(G)$ must hold, so $(x^*,y^*)\in \Gamma'$ holds as well. 
Therefore, if $\Gamma'=\emptyset$, then it must be the case that Event $\event$ did not happen. If $\Gamma'=\emptyset$, we also terminate the algorithm and return an arbitrary vertex-cut; we say that the algorithm terminates early in this case as well.
In the remainder of the algorithm, for brevity, we denote $\Gamma'$ by $\Gamma$. 
For the sake of the analysis, we designate a single pair $(x^*,y^*)\in \Gamma$ of vertices as the \emph{distinguished pair}, as follows. If Event $\event$ happened, then we let $(x^*,y^*)$ be any pair of vertices with $x^*\in L$ and $y^*\in R$ (such a pair must exist from our discussion); in this case, we say that the distinguished pair is \emph{good}. Otherwise, we let $(x^*,y^*)$ be any pair of vertices in $\Gamma$, and we say that the distinguished pair is \emph{bad}.

The main technical ingredient of our algorithm is a procedure that is summarized in the following theorem. Given a pair $(x,y)$ of vertices in the graph $G$, the procedure computes a vertex-cut $(L_{x,y},S_{x,y},R_{x,y})$, so that, if Event $\event$ happened and $(x,y)$ is the distinguished pair in $\Gamma$, then $(L_{x,y},S_{x,y},R_{x,y})$ is guaranteed to be a global minimum vertex-cut with a sufficiently high probability.
The subroutine is given a query access to the $(\delta,\tau)$-subgraph oracle on the input graph $G$ for appropriate values of $\delta$ and $\tau$ (see \Cref{def: subgraph oracle}).

\begin{theorem}\label{thm : main-subroutine}
	 There is a randomized algorithm, whose input consists of a simple directed $n$-vertex graph $G$ with integral weights $1 \leq w(v) \leq W$ on its vertices $v \in V(G)$ in the adjacency-list representation, together with the adjacency-list representation of the split-graph $G'$ of $G$. Additionally, the algorithm is given as input
	  integers $\lambda,\gamma \geq 1$ with $\lambda\cdot\gamma \leq \frac{n}{100}$, and a pair $x,y \in V(G)$ of distinct vertices, such that $(x,y) \notin E(G)$. Lastly, the algorithm is given access to the $(\delta,20\wmax(G))$-subgraph oracle for $G$, for $\delta=1-\frac{\log(30000\lambda\gamma \cdot \log n)}{\log n}$, so that $n^{1-\delta}=30000\lambda\gamma \cdot \log n$. The algorithm returns an $x$-$y$ vertex-cut $(L',S',R')$ in $G$ with the following guarantee: if there exists a minimum $x$-$y$ vertex-cut $(L,S,R)$ in $G$ with $|L| \leq \lambda$, and if the subgraph oracle never erred in its responses to queries, then $(L',S',R')$ is a minimum $x$-$y$ vertex-cut in $G$. The expected running time of the algorithm is $O\left(\left (\frac{n^{2+o(1)}\cdot \lambda}{\gamma} + n^{1+o(1)}\cdot \gamma\cdot \lambda\right ) \cdot \log^5W\right)$, and it accesses the $(\delta,20\wmax(G))$-subgraph oracle at most $\log (80n(W+1))$ times.
\end{theorem}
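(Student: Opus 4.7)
The plan is to implement the phase-based approach outlined in the introduction, building a sequence of $x$-$y$ vertex-cuts $(L_0,S_0,R_0),\ldots,(L_z,S_z,R_z)$ in $G$, where each $(L_i,S_i,R_i)$ is \emph{well-behaving at scale} $M_i=10W/2^i$ with respect to parameter $\gamma$. Here a cut $(L',S',R')$ is well-behaving at scale $M$ if (i) $\vol^+_G(L')\leq n^2/\gamma$, (ii) $w(S')\leq w(S)+10nM$, and (iii) $L'\cup S'$ misses at most $\lambda\gamma$ vertices of $(L\cup S)^{\geq M}$ (the heavy vertices of the distinguished min-cut). We initialize $(L_0,S_0,R_0)=(\{x\},V(G)\setminus\{x,y\},\{y\})$, which is trivially well-behaving at scale $M_0=10W$ since $\vol^+(L_0)\leq n\leq n^2/\gamma$, $w(S_0)$ can only differ from $w(S)$ by $w(L\setminus\{x\})+w(R\setminus\{y\})\leq 10nW$, and every vertex of $L\cup S$ lies in $L_0\cup S_0$. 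We stop at $z=\lceil\log(200nW)\rceil=O(\log(nW))$, so that $M_z\leq 1/(20n)$; then $w(S_z)\leq w(S)+10nM_z<w(S)+1$, forcing $w(S_z)=w(S)$ by integrality, and the output cut $(L_z,S_z,R_z)$ is a minimum $x$-$y$ vertex-cut.

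The heart of the proof is a \emph{single-phase subroutine} that, given a well-behaving cut at scale $M_{i-1}$, produces one at scale $M_i=M_{i-1}/2$. I will implement this in three substeps, following the overview. \emph{Step 1 (shortcut flow).} Using the input cut $(L_{i-1},S_{i-1},R_{i-1})$, I construct a sparsified graph $\hat G$ together with a set $E'$ of at most $O(n)$ shortcut edges of the form $(v,y)$, so that (a) the maximum vertex-capacitated $x$-$y$ flow value in $\hat G\cup E'$ is within $O(nM_i)$ of $\opt^V_{x,y}$, and (b) if Condition (C) holds (distinguished pair plus well-behaving input cut), then no shortcut edge originates in $L$. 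The sparsification splits $E(G)$ into a handful of categories based on the input cut; for each category, I will either include all edges (if few) or replace them by shortcut edges to $y$ (if the tail vertex is guaranteed to lie outside $L$). This is exactly the place where the $(\delta,20\wmax)$-subgraph oracle is invoked: for categories that involve ``a vertex $v$ with many out-neighbors of certain type in a given vertex subset $Y$'', the oracle in one call returns the partition of $V(G)$ into high/low out-neighborhood and the explicit edge list for the low side, which I can sparsify directly. The remaining edges are then fed into the almost-linear maximum flow algorithm of \Cref{thm: maxflow} applied to the split graph of $\hat G\cup E'$, producing a valid shortcut flow $(E',f)$.

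\emph{Step 2 (promising set via local flow).} Reduce $\tilde G=G$ with the heavy-vertex weights trimmed by $M_i$, take the split graph, and form its residual network $H$ with respect to the shortcut flow $\hat f$ corresponding to $f$. The distinguished cut $(L,S,R)$ induces an $x^{\out}$-$y^{\inn}$ edge-cut $(X,Y)$ in $H$ whose value is at most $O(nM_i)$ by Step 1's guarantees. I then apply the local-flow algorithm (\Cref{thm : unit-flow}) inside $H$ with source $x^{\out}$, sink $y^{\inn}$, parameters $h=\Theta(\lambda)$ and $\Delta=\Theta(n^2/\gamma)$, augmented with the standard $(v,t)$ edges of capacity $\deg_H^+(v)$. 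Because $h$ exceeds $2|L|$, any path in the residual network of length $\geq h$ must use an out-copy outside $L$; this ensures that the set $A'$ of vertices reachable from $x^{\out}$ within $h$ hops contains the set $X'$ of vertices reachable from $x^{\out}$ without ever exiting $X$, while $\vol^+_H(A')\leq\Delta=O(n^2/\gamma)$. Pushing back to $G$ gives the promising vertex set $A\subseteq V(G)$ of out-volume $\leq n^2/\gamma$, for which the existence (not computation) of an $x$-$y$ vertex-cut $(\hat L,\hat S,\hat R)$ with $\hat L\subseteq A$ satisfying a strengthened version of the well-behaving properties at scale $M_i$ is guaranteed when Condition (C) holds.

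\emph{Step 3 (well-behaving cut via vertex-\AFMC).} Given the promising set $A$, I form the derived graph $G^{|A}$ (\Cref{def: defined graph}), which by \Cref{claim: properties of defined graph} has $O(\vol^+_G(A))=O(n^2/\gamma)$ edges, and I apply the vertex-\AFMC algorithm of \Cref{thm : vertex weighted approximately minimum containing cut} to $G^{|A}$ with the pair $(x,t)$ and an additive-error parameter $\alpha=\Theta(nM_i)$. The output cut, pulled back to $G$ via \Cref{cor: minimum cut in defined graph}, is an $x$-$y$ vertex-cut $(L_i,S_i,R_i)$ with $L_i\subseteq A$; using the vertex-\AFMC guarantee $|(L'\cup S')\setminus(L\cup S)|\leq(w(S')-\opt)\cdot n/\alpha$ together with the strengthened properties of $(\hat L,\hat S,\hat R)$, I verify the three well-behaving conditions at scale $M_i$ for this new cut.

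For running-time accounting, each phase costs $O((n^{2+o(1)}\lambda/\gamma+n^{1+o(1)}\gamma\lambda)\log^4 W)$: Step 1's max flow costs $\tilde O((n+|E'|)^{1+o(1)}\log W)$ on an almost linear-sized graph, Step 2's local flow costs $\tilde O(h\Delta)=\tilde O(\lambda\cdot n^2/\gamma)$, and Step 3's vertex-\AFMC on $G^{|A}$ costs $\tilde O((n^2/\gamma)^{1+o(1)}\log W\log\alpha)$. Summing over $z=O(\log(nW))$ phases gives the stated total bound; each phase makes $O(1)$ subgraph-oracle queries, totaling $O(\log(nW))$. The correctness under Condition (C), combined with the invariant that $(L_i,S_i,R_i)$ is well-behaving at scale $M_i$, guarantees the returned cut is optimal when a distinguished cut with $|L|\leq\lambda$ exists and the oracle does not err.

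The main obstacle I anticipate is the careful argument in Step 2 that the local-flow-derived set $A$ is indeed promising, i.e., that the cut $(X',Y')$ obtained by ``ball-growing without exiting $X$'' remains valid in the untruncated residual network $H$ and pulls back to a vertex-cut in $G$ whose value is only $O(nM_i)$ above $\opt^V_{x,y}$ \emph{after} the weight-trimming, and moreover still satisfies the strengthened heavy-vertex-coverage property. This requires delicately tracking how the trimming by $M_i$ interacts with the additive $O(nM_i)$ slack accumulated across steps, and verifying that the submodularity/minimality of the $\hat B$-$t$ cut in $G^{|A}$ (in the spirit of \Cref{claim: establising the cut property}) transfers to the heavy-vertex count. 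The other steps follow relatively standard sparsification and almost-linear-flow machinery once the correct well-behaving invariant is pinned down.
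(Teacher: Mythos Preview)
Your proposal is correct and follows essentially the same approach as the paper: the phase-based refinement of well-behaving cuts at geometrically decreasing scales, with each phase consisting of sparsification plus shortcut-flow computation, a local-flow call on (a scaled version of) the residual split graph to extract a promising set, and then vertex-\AFMC\ on $G^{|A}$ to recover the next well-behaving cut. The paper's organization differs only cosmetically (it bundles your Steps~1--2 into the ``compute a promising set'' theorem and treats your Step~3 as the outer phase wrapper), and it makes explicit two technical ingredients you glossed over---the $\lceil n/\gamma\rceil$ capacity scaling of the residual network before the local-flow call (needed so that the cut $(X,Y)$ has value below $\Delta$ in $\hat Z'$), and the use of \Cref{thm : computing-flow-with-few-edges} to ensure the initial flow is nonzero on only $O(n)$ edges (needed to bound $\sum_{v^{\inn}}\deg^+_Z(v^{\inn})$)---but these are precisely the details you flagged as the anticipated obstacle in Step~2.
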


We note that the algorithm from \Cref{thm : main-subroutine} may modify the adjacency-list representations of the graphs $G$ and $G'$ that it is given as input.
We prove \Cref{thm : main-subroutine} in \Cref{sec : main-subroutine}.
At a high level, \Cref{thm : main-subroutine} can be viewed as an adaptation of Theorem 3.16 from \cite{CT24} to directed graphs, that additionally obtains a better tradeoff between the running time and the parameter $\delta$ of the subgraph oracle.
Additionally, the algorithm of \cite{CT24} only works in the setting where there exists a minimum $x$-$y$ vertex-cut $(L,S,R)$ with $|L| \leq \frac{|S|}{n^{44/45}} \leq n^{1/45}$, while our algorithm works for much higher cardinalities of the set $L$, albeit with running time that depends linearly on $|L|$.
While it is generally not difficult to extend the algorithm  of \cite{CT24} to the directed setting, we both significantly simplify the algorithm and its analysis, and obtain a faster running time, while also extending the algorithm to the setting where $|L|$ may be large. The latter is crucial in order to obtain the running time guarantees of \Cref{thm: main: weighted}.
In the remainder of this subsection, we complete the proof of \Cref{thm: dense weighted main theorem} using \Cref{thm : main-subroutine}. This part of our algorithm follows a standard approach that is similar to that used by \cite{CT24}.

\paragraph{Completing the proof of \Cref{thm: dense weighted main theorem}.}
After computing the parameter $\lambda$ and the set $\Gamma$ of vertex pairs as described above, we compute the adjacency-list representations of the input graph $G$ and its split-graph $G'$. 
Clearly, all these can be done in time $\tilde O(n^2)$. In the remainder of the description of the algorithm we assume that it did not terminate early.
The remainder of the algorithm consists of at most $z=\floor{\log (80n(W+1))}+1$ phases.

We now fix an integer $1 \leq i \leq z$ and describe the execution of Phase $i$. 
At a very high level, our algorithm will apply the algorithm from \Cref{thm : main-subroutine}  to every pair $(x,y)\in \Gamma$ of vertices one by one, while recording all random choices that the algorithm makes, until it performs the $i$th call to the subgraph oracle; we denote the corresponding oracle query by $Z_{i,(x,y)}$. Once it collects the $i$th query $Z_{i,(x,y)}$ for the subgraph oracle for all pairs $(x,y)\in \Gamma$, it calls to the algorithm from \Cref{lem: oracle in bulk} in order to compute responses to all these queries at bulk. For every pair $(x,y)\in \Gamma$, when we execute the algorithm from \Cref{thm : main-subroutine} in the subsequent phase, we repeat all random choices and reuse all responses to oracle queries computed in previous phases (it is easy to verify that the first $i$ oracle queries $Z_{1,(x,y)},\ldots,Z_{i,(x,y)}$ that the algorithm asks will then remain unchanged). Therefore, when applying the algorithm from  \Cref{thm : main-subroutine}  to pair $(x,y)\in \Gamma$ in Phase $i$, we can assume that we have already computed the responses to the first $i-1$ queries to the subgraph oracle that the algorithm makes, and can therefore execute the algorithm until it makes its $i$th call to the oracle. We now describe the execution of the $i$th phase more formally.

We process every pair $(x,y)\in \Gamma$ of vertices one by one. 
When pair $(x,y)$ is processed, we apply the algorithm from \Cref{thm : main-subroutine} to graph $G$, the the pair $(x,y)$ of its vertices, the parameter $\lambda$ computed above, and the parameter $\gamma$ that is set to $\gamma=\ceil{n^{(6-2\omega)/(11-3\omega)}}$, where $\omega$ is the matrix multiplication exponent. Note that, using the current bounds $2\leq \omega\leq 2.371552$~\cite{WXXZ24}, and by the assumption that $n$ is greater than a large enough constant, $n^{1/4} \leq \gamma \leq n^{2/5}$ must hold.
Moreover, since we assumed that the algorithm did not terminate early, $\lambda \leq 2n^{1/2}$, and, since we assumed that $n$ is greater than a large enough constant, $\lambda\cdot \gamma\leq 2n^{9/10}\leq \frac{n}{100}$ must hold.

We execute the algorithm from \Cref{thm : main-subroutine}  until its $i$th call to the subgraph oracle, while recording all random choices that it makes. Once the algorithm makes the $i$th call to the subgraph oracle, we terminate it, and we denote by $Z_{i,(x,y)}$ its $i$th query to the oracle. (If the algorithm terminates before making $i$ queries to the subgraph oracle, then we instead let $Z_{i,(x,y)}=\emptyset$.) When the algorithm from \Cref{thm : main-subroutine} is subsequently executed with the pair $(x,y)$ of vertices in Phase $(i+1)$, we repeat all the recorded random choices that we used when processing this pair in Phase $i$, so that the first $i$ calls to the subgraph oracle are identical to the execution of the algorithm in Phase $i$. Lastly, once the algorithm from \Cref{thm : main-subroutine} terminates, we undo all the modifications that it made to the adjacency-list representations of $G$ and $G'$. We then continue to process the next pair of vertices of $\Gamma$.

Once all pairs of vertices in $\Gamma$ are processed in this manner, we execute the algorithm from \Cref{lem: oracle in bulk} with parameters $\delta=1-\frac{\log(30000\lambda\gamma \cdot \log n)}{\log n}$ and $\tau=20\wmax(G)$ in order to compute the responses to all queries $\set{Z_{i,(x,y)} \mid (x,y) \in \Gamma}$. 
Observe that, since we assumed that the algorithm did not terminate early, $\lambda \leq 2n^{1/2}$ must hold. 
Since, as observed already, $n^{1/4} \leq \gamma \leq n^{2/5}$, and since we assumed that $n$ is greater than a large enough constant, we get that $\lambda\cdot \gamma\leq 2n^{9/10}< \frac{n}{30000 \cdot 2^{\sqrt{\log n}} \cdot \log n}$ and $\log(30000\lambda\gamma \cdot \log n)<\log n-\sqrt{\log n}$ must hold. Therefore, 
$\delta=1-\frac{\log(30000\lambda\gamma \cdot \log n)}{\log n} > \frac{1}{\sqrt{\log n}}$. Additionally, since, as observed already, $\gamma\geq n^{1/4}$, we get that $\delta\leq \frac{3}{4}$ must hold as well. We conclude that we obtain a valid input to the algorithm from \Cref{lem: oracle in bulk}.
Once the algorithm from \Cref{lem: oracle in bulk} computes the responses to all queries $\set{Z_{i,(x,y)} \mid (x,y) \in \Gamma}$, the current phase terminates and we continue to Phase $(i+1)$.
When processing the vertex pair $(x,y)\in\Gamma$ in Phase $(i+1)$, we repeat all random choices made in Phase $i$. Note that at this time we have already computed responses to the first $i$ queries of this algorithm to the subgraph oracle, so we can now execute this algorithm until its $(i+1)$th query to the oracle. 

Finally, for every pair $(x,y)\in \Gamma$ of vertices, we denote by $(L_{x,y},S_{x,y},R_{x,y})$ the vertex-cut that the algorithm from 
 \Cref{thm : main-subroutine} returned when applied to the pair $(x,y)$ in the last phase. We let $(x',y')\in \Gamma$ be the pair for which $w(S_{x',y'})$ is minimized, and we return the corresponding vertex-cut $(L_{x',y'},S_{x',y'},R_{x',y'})$.
This concludes the description of the algorithm. We now proceed to analyze it. 
 
Let $\event^{\oracle}$ be the bad event that the algorithm from \Cref{lem: oracle in bulk} errs in its response to any query in any of the phases.
Recall that the number of phases in the algorithm is bounded by:

\[z=\floor{\log (80n(W+1))}+1\leq 4\log (80nW).\]

Recall also that we apply the subgraph oracle with the parameter $\tau=20\wmax(G) \geq 20n \wmax'(G)\geq 8nW$ (from the definitions of $\wmax(G)$ and $\wmax'(G)$ in \Cref{subsec: split graph}, and from our assumption that $W=\max_{v\in V(G)}\set{w(v)}$).

By combining \Cref{lem: oracle in bulk} with the Union Bound, we get that: 

\[\prob{\event^{\oracle}} \leq \frac{z}{n^9 \cdot \log^4 \tau}\leq  \frac{4\log(80nW)}{n^9\cdot \log^4(8nW)}\leq \frac{1}{64n}.\]

Let $\event^*$ be the good event that Event $\event$ happened and 
 Event $\event^{\oracle}$ did not happen. Recall that, from \Cref{cl: selecting pairs}, $\prob{\event}\geq \frac{1}{12\log n}$, and from our discussion above, $\prob{\event^{\oracle}} \leq  \frac{1}{64n}$. Altogether, we get that:
 
 \[\prob{\neg\event^*}\leq \prob{\neg \event}+\prob{\event^{\oracle}}\leq 1-\frac{1}{12\log n}+\frac{1}{64n}\leq 1-\frac{1}{24\log n},\]

and so $\prob{\event^*}\geq \frac{1}{24\log n}$. 
 We use the following observation to complete the correctness analysis of the algorithm.

\begin{observation}\label{obs: good outcome}
	Assume that the distinguished cut is good and that Event $\event^*$ has happened. Then the cut that the algorithm outputs is a global minimum vertex-cut.
\end{observation}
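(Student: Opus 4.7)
The plan is to trace through the two early-termination checks and then invoke \Cref{thm : main-subroutine} on the distinguished pair. First, since the distinguished cut is good, we have $|L|\leq n^{1/2}$. Since Event $\event^*$ entails Event $\event$, Algorithm $\algpairs'$ was successful with respect to $(L,S,R)$, so $\lambda/2\leq |L|\leq \lambda$; consequently $\lambda\leq 2|L|\leq 2n^{1/2}$, and the algorithm does not terminate early at the $\lambda$-check. Event $\event$ further supplies a pair $(x^*,y^*)$ with $x^*\in L$, $y^*\in R$. Because $(L,S,R)$ is a valid vertex-cut and $x^*\in L$, $y^*\in R$, we have $x^*\neq y^*$ and $(x^*,y^*)\notin E(G)$, so $(x^*,y^*)$ survives into the refined set $\Gamma$; in particular $\Gamma\neq\emptyset$ and the algorithm does not terminate early at the second check either.

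Next, I would focus on the distinguished pair $(x^*,y^*)$ and show that the cut $(L_{x^*,y^*},S_{x^*,y^*},R_{x^*,y^*})$ produced by the main subroutine has value exactly $\opt$. The cut $(L,S,R)$ is itself a valid $x^*$-$y^*$ vertex-cut, and any $x^*$-$y^*$ vertex-cut is a vertex-cut of $G$ of value at least $\opt$; hence the value of the minimum $x^*$-$y^*$ vertex-cut equals $\opt$, and this minimum is attained by a cut with $|L|\leq \lambda$. Since Event $\event^*$ also rules out Event $\event^{\oracle}$, the subgraph oracle never errs throughout the execution, and in particular it never errs during the processing of the pair $(x^*,y^*)$ across the $z$ phases. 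Therefore the hypotheses of \Cref{thm : main-subroutine} hold for $(x^*,y^*)$, so the cut $(L_{x^*,y^*},S_{x^*,y^*},R_{x^*,y^*})$ returned by the main subroutine on the distinguished pair in the last phase is a minimum $x^*$-$y^*$ vertex-cut, with $w(S_{x^*,y^*})=\opt$.

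To finish, I would combine this with the final selection step. The algorithm outputs the cut $(L_{x',y'},S_{x',y'},R_{x',y'})$ minimizing $w(S_{x,y})$ over $(x,y)\in\Gamma$, so $w(S_{x',y'})\leq w(S_{x^*,y^*})=\opt$. On the other hand, by \Cref{thm : main-subroutine} the output for every pair $(x,y)\in\Gamma$ is an $x$-$y$ vertex-cut in $G$, hence a vertex-cut of $G$ of value at least $\opt$. Thus $w(S_{x',y'})=\opt$, and the output is a global minimum vertex-cut, establishing the observation.

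I do not foresee a substantive obstacle here: the argument is essentially bookkeeping that reconciles the two early-exit conditions with the guarantees of \Cref{cl: selecting pairs} and \Cref{thm : main-subroutine}. The only mildly delicate point is verifying that replaying the recorded random choices and the oracle answers across phases faithfully reconstructs an execution of the algorithm from \Cref{thm : main-subroutine}, so that its correctness guarantee applies to the cut returned in the last phase; this is immediate from the phase construction but worth stating explicitly.
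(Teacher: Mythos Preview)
Your proposal is correct and follows essentially the same approach as the paper's proof: both verify that the algorithm does not terminate early, then invoke \Cref{thm : main-subroutine} on the distinguished pair $(x^*,y^*)$ to conclude that $w(S_{x^*,y^*})=\opt$, and finish by noting that the smallest-value returned cut is therefore optimal. Your write-up is in fact slightly more careful than the paper's in two places: you explicitly derive $\lambda\le 2n^{1/2}$ from $\lambda/2\le|L|\le n^{1/2}$, and you make the lower-bound argument (every returned cut is a valid vertex-cut in $G$, hence of value at least $\opt$) explicit rather than leaving it implicit.
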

\begin{proof}
	Recall that if the distinguished cut is good and Event $\event$ has happened, then the algorithm may not terminate early, and moreover, the distinguished pair $(x^*,y^*)\in \Gamma$ of vertices is good; in other words, $x^*\in L$ and $y^*\in R$ holds. Therefore, the distinguished cut $(L,S,R)$ is also a minimum $x^*$-$y^*$ vertex-cut in $G$, and, since the distinguished cut is good, $|L|\leq \lambda$ holds. Since we assumed that Event $\event^{\oracle}$ did not happen, when the algorithm from \Cref{thm : main-subroutine} is applied to the pair  $(x^*,y^*)$ in the last phase, the cut $(L_{x^*,y^*},S_{x^*,y^*},R_{x^*,y^*})$ that it returns must be a minimum $x^*$-$y^*$ vertex-cut in $G$, so its value is $w(S_{x^*,y^*})=w(S)$. Since our algorithm returns a smallest-value vertex-cut from among the cuts in $\set{(L_{x,y},S_{x,y},R_{x,y})\mid (x,y)\in \Gamma}$, the value of the cut that it returns must be $w(S)$, and so it must be a global minimum vertex-cut.
\end{proof}

We conclude that, with probability at least $\frac{1}{24\log n}$, the vertex-cut that the algorithm returns is a global minimum vertex-cut. 

It now remains to bound the running time of the algorithm. As noted already, the time required to execute Algorithm $\algpairs'$, and to compute the adjacency-list representations of the graphs $G$ and $G'$ is bounded by $\tilde O(n^2)$. The number of phases in the remainder of the algorithm is $z\leq O(\log n\log W)$. In every phase, we perform  $|\Gamma|\leq \tilde O\left (\frac{n}{\lambda}\right )$  iterations, each of which executes the  algorithm from 
\Cref{thm : main-subroutine}, whose expected running time is  $O\left(\left (\frac{n^{2+o(1)}\cdot \lambda}{\gamma} + n^{1+o(1)}\cdot \gamma\cdot \lambda\right ) \cdot \log^5W\right)$. 
Therefore, the total expected running time of all applications of the algorithm from \Cref{thm : main-subroutine} is bounded by:

\[\begin{split}
&O\left(\left (\frac{n^{2+o(1)}\cdot \lambda}{\gamma} + n^{1+o(1)}\cdot \gamma\cdot \lambda\right )\cdot z\cdot |\Gamma| \cdot (\log W)^{O(1)}\right)\\
&\quad\quad\quad\quad \leq O\left(\left (\frac{n^{3}}{\gamma} + n^{2}\cdot \gamma\right )\cdot n^{o(1)} \cdot (\log W)^{O(1)}\right)\\
&\quad\quad\quad\quad \leq O\left(\left (n^{2+(5-\omega)/(11-3\omega)}+n^{2+(6-2\omega)/(11-3\omega)}\right )\cdot n^{o(1)} \cdot (\log W)^{O(1)}\right)\\
&\quad\quad\quad\quad \leq O\left(\left (n^{2+(5-\omega)/(11-3\omega)+o(1)}\right )\cdot (\log W)^{O(1)}\right).
\end{split}\]

(We have used the fact that  $\gamma=\Theta\left(n^{(6-2\omega)/(11-3\omega)}\right)=\Theta\left(n^{1-(5-\omega)/(11-3\omega)}\right)$).

Lastly, in every phase we also execute the algorithm from \Cref{lem: oracle in bulk}, whose running time is bounded in the following simple but technical claim, whose proof is deferred to Section \ref{subsec: runtime bound} of Appendix.

\begin{claim}\label{claim: bound time of oracle}
	The total running time that the algorithm spends, over all phases, on the algorithm from \Cref{lem: oracle in bulk} is bounded by  $O\left(n^{2+(5-\omega)/(11-3\omega)+o(1)}\cdot (\log W)^{O(1)}\right )$.
\end{claim}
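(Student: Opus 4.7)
The plan is to translate the parameter settings into the running-time bound of \Cref{lem: oracle in bulk} and then multiply by the number of phases. First, I will quantify the inputs to \Cref{lem: oracle in bulk}: in each phase the algorithm issues $q=|\Gamma|$ queries, with $q\le \lceil 100n\log n/\lambda\rceil$, so writing $q=n^{1-\epsilon}$ gives $n^{\epsilon}=\Theta(n/q)=\tilde\Theta(\lambda)$; and it uses $\delta=1-\log(30000\lambda\gamma\log n)/\log n$, i.e.\ $n^{\delta}=\tilde\Theta(n/(\lambda\gamma))$. Since we are in the regime where the algorithm did not terminate early, $\lambda\le 2n^{1/2}$, $\gamma=\Theta\!\bigl(n^{(6-2\omega)/(11-3\omega)}\bigr)$ satisfies $n^{1/4}\le\gamma\le n^{2/5}$, and $\lambda\gamma\le n/100$, so both $\epsilon$ and $\delta$ lie in the allowed range $(1/\sqrt{\log n},3/4]$ required by \Cref{lem: oracle in bulk}. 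The oracle parameter is $\tau=20\wmax(G)=O(nW)$, so $\poly\log\tau=(\log W)^{O(1)}$.

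Next, I will plug these values into the running-time bound of \Cref{lem: oracle in bulk}, namely
\[
O\!\left(\bigl(n^{3-0.478(\delta+\epsilon)+o(1)}+n^{2+\delta+o(1)}\bigr)\cdot \poly\log\tau\right).
\]
The identity $n^{\delta+\epsilon}=n^{\delta}\cdot n^{\epsilon}=\tilde\Theta(n/\gamma)$ yields $n^{3-0.478(\delta+\epsilon)}=\tilde\Theta\!\bigl(n^{2.522}\cdot\gamma^{0.478}\bigr)$, while $n^{2+\delta}=\tilde\Theta(n^{3}/(\lambda\gamma))\le \tilde\Theta(n^{3}/\gamma)$ since $\lambda\ge 1$. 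Using the more precise form $2(3-\omega)/(5-\omega)$ in place of $0.478$, with $\gamma=\Theta\!\bigl(n^{(6-2\omega)/(11-3\omega)}\bigr)$ a direct calculation shows that both exponents equal $2+(5-\omega)/(11-3\omega)$ (this is exactly the balancing choice of $\gamma$). Thus the per-phase running time of \Cref{lem: oracle in bulk} is $O\!\bigl(n^{2+(5-\omega)/(11-3\omega)+o(1)}\cdot(\log W)^{O(1)}\bigr)$.

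Finally, I will multiply by the number of phases $z=\lfloor\log(80n(W+1))\rfloor+1=O(\log(nW))$, which is absorbed into the $(\log W)^{O(1)}$ factor (together with the $\log n$ factor, absorbed into $n^{o(1)}$). This yields the claimed bound $O\!\bigl(n^{2+(5-\omega)/(11-3\omega)+o(1)}\cdot(\log W)^{O(1)}\bigr)$.

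The only mildly delicate step is the arithmetic verifying that the choice $\gamma=n^{(6-2\omega)/(11-3\omega)}$ exactly balances the two terms $n^{3-2(3-\omega)/(5-\omega)\cdot(\delta+\epsilon)}$ and $n^{2+\delta}$ after substituting $n^{\delta+\epsilon}=\tilde\Theta(n/\gamma)$ and $n^{\delta}\le\tilde\Theta(n/\gamma)$; I expect this to be routine algebra rather than a genuine obstacle, and the remaining parameter-range checks (that $\delta,\epsilon\in(1/\sqrt{\log n},3/4]$ and that $\lambda\gamma\le n/100$) follow immediately from the hypotheses carried over from earlier in the proof.
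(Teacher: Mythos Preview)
The proposal is correct and follows essentially the same approach as the paper's proof: both compute $n^{\delta+\epsilon}=\tilde\Theta(n/\gamma)$, bound each term of the running time from \Cref{lem: oracle in bulk} by $\tilde O(n^3/\gamma)=O(n^{2+(5-\omega)/(11-3\omega)+o(1)})$ using the choice $\gamma=\Theta(n^{(6-2\omega)/(11-3\omega)})$, and then multiply by the $O(\log(nW))$ phases. The only minor imprecision is that \Cref{lem: oracle in bulk} requires $\delta\in(1/\sqrt{\log n},3/4]$ but only $\epsilon\in[0,1]$, not the same range for $\epsilon$; this does not affect the argument.
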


Overall, the expected running time of the entire algorithm is bounded by
\begin{equation}\label{eq : expected running time of dense alg}
O\left(n^{2+(5-\omega)/(11-3\omega)+o(1)} \cdot (\log W)^{O(1)}\right).
\end{equation}
In order to achieve the worst-case running time bound stated in \Cref{thm: dense weighted main theorem}, we terminate the algorithm whenever its running time exceeds the expected running time by a factor of $48 \log n$, in which case we output an arbitrary vertex-cut in $G$. 
By Markov's inequality, the probability that the algorithm is terminated for this reason is at most $\frac{1}{48 \log n}$.
Overall, if the distinguished cut is good, then the probability that algorithm outputs a global minimum vertex-cut remains at least $\frac{1}{24\log n}-\frac{1}{48 \log n} \geq \frac{1}{48 \log n}$.
In order to complete the proof of \Cref{thm: dense weighted main theorem}, it is now enough to prove \Cref{thm : main-subroutine}, which we do next. 

\subsection{The Main Subroutine: Proof of \Cref{thm : main-subroutine}} \label{sec : main-subroutine}

In this section we provide the algorithm for our main subroutine, proving \Cref{thm : main-subroutine}.
 A key notion used in our proof is that of well-behaving cuts, that we define next.

\begin{definition}[A well-behaving cut]\label{def : well-behaving cut}
	Let $G$ be a directed $n$-vertex graph, and let $x,y\in V(G)$ be a pair of its distinct vertices. Let 
	$\lambda,\gamma\geq 1$ be integral parameters, and let $M'>0$ be another parameter. Given a pair $(L,S,R)$ and $(L',S',R')$ of $x$-$y$ vertex-cuts in $G$, we say that $(L',S',R')$ is \emph{$(\lambda,\gamma)$-well-behaving for scale $M'$ with respect to  $(L,S,R)$}, if all of the following hold:

	\begin{properties}{W}
		\item\label{prop : P1} $\vol^+_G(L') \leq \frac{100n^2}{\gamma}$;
		\item\label{prop : P2} $w(S') \leq w(S) + 10nM'$; and
		\item\label{prop : P3} set $(L \cup S) \setminus (L' \cup S')$ contains at most $20\lambda\gamma$ vertices of weight at least $M'$.
	\end{properties}
\end{definition}

Note that in the above definition it is possible that $M'<1$.
Note also that the notion of a well-behaving cut is somewhat similar to the notion of vertex-\AFMC (see \Cref{def : almost-minimum-furthest-vertex-cut}). The main difference is the specific setting of the parameters; that  Property \ref{prop : P3} only extends to vertices of weight at least $M'$ and is only required to hold for a specific vertex-cut $(L,S,R)$; and the additional Requirement 
\ref{prop : P1} that $L'$ has a small out-volume.


\paragraph{The Distinguished Cut.}
For the sake of the analysis, we fix a minimum $x$-$y$ vertex-cut $(L,S,R)$, that we refer to as the \emph{distinguished cut}, as follows: if there exists a minimum $x$-$y$ vertex-cut $(\hat L,\hat S,\hat R)$ with $|\hat L| \leq \lambda$, then let $(L,S,R)$ be any such vertex-cut, and we say that the distinguished cut is \emph{good}.
Otherwise, we let $(L,S,R)$ be any minimum $x$-$y$ vertex-cut, and we say that the distinguished cut is \emph{bad}.


The main idea of our algorithm is to iteratively construct $x$-$y$ vertex-cuts that are well-behaving with respect to the distinguished cut for increasingly smaller scales.

For an integer $i\geq 1$, let $M_i = \frac{\wmax'(G)}{2^i}$;  observe that $M_i$ is an integral power of $2$.
Our algorithm consists of $z=\ceil{\log(20n\wmax'(G))}$ phases.
 Intuitively, the $i$th phase will focus on obtaining an $x$-$y$ vertex-cut that is well-behaving with respect to the distinguished cut for the scale $M_i$ 

Specifically, for all $1\leq i\leq z$,
the input to Phase $i$ is an $x$-$y$ vertex-cut $(L_{i-1},S_{i-1},R_{i-1})$ in $G$ that has the following property: if the distinguished cut $(L,S,R)$ is good, then the cut $(L_{i-1},S_{i-1},R_{i-1})$ is $(\lambda,\gamma)$-well-behaving  for scale $M_{i-1}$ with respect to $(L,S,R)$. The output of Phase $i$ is an $x$-$y$ vertex-cut $(L_i,S_i,R_i)$, that can serve as a valid input to Phase $(i+1)$. In other words, it must have the following property: if the distinguished cut $(L,S,R)$ is good, then $(L_i,S_i,R_i)$ is $(\lambda,\gamma)$-well-behaving  for scale $M_{i-1}$ with respect to $(L,S,R)$.
The algorithm for a single phase is summarized in the following theorem.

\begin{theorem}\label{thm : phase-i-of-main-subroutine}
	There is a randomized algorithm, whose input consists of a simple directed $n$-vertex graph $G$ with integral weights $1 \leq w(v) \leq W$ on its vertices $v \in V(G)$ in the adjacency-list representation, together with the adjacency-list representation of the split-graph $G'$ of $G$. Additionally, the algorithm is given as input integral parameters $\lambda,\gamma \geq 1$ with $\lambda\cdot\gamma \leq \frac{n}{100}$, a pair $x,y \in V(G)$ of distinct vertices such that $(x,y) \notin E(G)$, a scale parameter $\frac{1}{100n} \leq M' \leq \frac{\wmax'(G)}{2}$ that is an integral power of $2$,
	and an $x$-$y$ vertex-cut $(L',S',R')$ in $G$.
	Lastly, the algorithm is allowed to make a single query to the $(\delta,20\wmax(G))$-subgraph oracle for $G$, for $\delta=1-\frac{\log(30000\lambda\gamma \cdot \log n)}{\log n}$, so that $n^{1-\delta}=30000\lambda\gamma \cdot \log n$.
	The algorithm returns an $x$-$y$ vertex-cut $(L'',S'',R'')$  with the following guarantee: if there exists a minimum $x$-$y$ vertex-cut $(L,S,R)$ with $|L| \leq \lambda$ such that the cut $(L',S',R')$ is $(\lambda,\gamma)$-well-behaving  for scale $(2M')$ with respect to $(L,S,R)$, and if the subgraph oracle did not err in its response to the query,  then cut $(L'',S'',R'')$ is $(\lambda,\gamma)$-well-behaving for scale $M'$  with respect to $(L,S,R)$.
	The expected running time of the algorithm is $O\left (\left (\frac{n^{2+o(1)}\cdot \lambda}{\gamma} + n^{1+o(1)}\cdot \gamma\cdot \lambda\right ) \cdot \log^4W\right )$.
\end{theorem}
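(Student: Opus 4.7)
The plan is to implement the three-step structure sketched in the overview. Let $(L,S,R)$ denote the hypothesized minimum $x$-$y$ vertex-cut with $|L|\le\lambda$, and assume $(L',S',R')$ is $(\lambda,\gamma)$-well-behaving for scale $2M'$ with respect to $(L,S,R)$. Step~1 produces a valid shortcut flow $(E',f)$ with $\val(f)\ge w(S)-O(nM')$. The key observation is that Property \ref{prop : P3} of the input cut bounds the number of heavy vertices (of weight $\ge 2M'$) in $(L\cup S)\setminus(L'\cup S')$ by $20\lambda\gamma$: consequently, if a vertex $v$ has more than $20\lambda\gamma$ out-neighbors in $Z:=\{u\in R':w(u)\ge 2M'\}$, then $v$ has an out-neighbor in $R$ and hence $v\notin L$. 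A single oracle query on $Z$ identifies such ``definitely-not-in-$L$'' vertices $Y^h$ together with the edges $E_G(Y^\ell,Z)$ from the remaining vertices. I build a sparse subgraph $\hat G$ by keeping only (i) edges into $L'\cup S'$ (with volume $\le 100n^2/\gamma$ by Property \ref{prop : P1}), (ii) the edges of $E_G(Y^\ell,Z)$, and (iii) edges into vertices of weight $<2M'$ (whose discard costs only $O(nM')$ in flow value after a standard weight-trim), then insert shortcut edges $(v,y)$ for every $v\in Y^h$. The resulting graph has $|E(\hat G\cup E')|=\tilde O(n\gamma\lambda)$ edges, so \Cref{thm: maxflow} applied to the split graph of $\hat G\cup E'$ produces $f$ in time $O(n^{1+o(1)}\gamma\lambda\log W)$.

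In Step~2, I convert $(E',f)$ into a promising vertex set via local flow. Transfer $f$ to the flow $\hat f$ in the split graph of $G\cup E'$, and let $H$ be its residual network. By \Cref{fact : residual-capacity-of-edge-cut}, the edge-cut $(X,Y)=(L^*\cup S^{\inn},\;S^{\out}\cup R^*)$ in $H$ has value $w(S)-\val(f)=O(nM')$. Before invoking \alglocal I reduce the weight of each vertex of weight $\ge M'$ in $G$ by a small additive amount absorbed into the $O(nM')$ budget, ensuring that any near-minimum $x$-$y$ vertex-cut in the trimmed graph retains almost all heavy vertices of $L\cup S$. I then call \Cref{thm : unit-flow} on $H$ with $s=x^{\out}$, $t=y^{\inn}$, $h=4\lambda$, and $\Delta=\Theta(n^2/\gamma)$. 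Since $(X,Y)$ has value $O(nM')\ll\Delta$, the algorithm terminates with $\val(f')<\Delta$, and its guarantee produces a set $A'\supseteq X'$, where $X'$ is the set of vertices of $H$ reachable from $s$ by residual paths staying inside $X$. The containment $X'\subseteq A'$ holds because such paths have length at most $2|L|<h$ (edges alternate between in- and out-copies, and $X$ contains only $|L|$ out-copies), and together with $L^{\out}\subseteq X'$ this gives $L\subseteq A:=\{v:v^{\out}\in A'\}$ while keeping $\vol^+_G(A)\le O(n^2/\gamma)$. A standard dual argument based on the submodularity of induced vertex-cuts (as in \Cref{obs:submodularity}) then exhibits an $x$-$y$ vertex-cut $(\hat L,\hat S,\hat R)$ with $\hat L\subseteq A$, weight within $O(nM')$ of $w(S)$, and support $\hat L\cup\hat S$ containing nearly every heavy vertex of $L\cup S$.

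Step~3 is short. I construct $G^{|A}$ from \Cref{def: defined graph}, which has $O(n^2/\gamma)$ edges by \Cref{claim: properties of defined graph}, and every $x$-$t$ cut in it lifts to an $x$-$y$ cut in $G$ by \Cref{claim: transforming cuts from defined graph}. Applying the vertex-\AFMC algorithm of \Cref{thm : vertex weighted approximately minimum containing cut} to $G^{|A}$ with $\alpha=\Theta(nM')$ then yields the output $(L'',S'',R'')$: Property \ref{prop : P1} is immediate from $L''\subseteq A$, Property \ref{prop : P2} follows from the value bound $w(S'')\le w(\hat S)+2\alpha$ together with the near-minimality of $(\hat L,\hat S,\hat R)$, and Property \ref{prop : P3} at scale $M'$ is a direct consequence of the furthest-cut guarantee \eqref{eq : guarnatee in vertex weighted approximately minimum containing cut problem} applied to $(\hat L,\hat S,\hat R)$, whose support already contains almost every heavy vertex of $L\cup S$.

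The main obstacle is the careful calibration of Step~2. The additive weight-trim of $G$, the sparsification performed in Step~1, and the parameter choices of \alglocal must all align so that the strengthened form of Property \ref{prop : P3} in the trimmed graph, which is what the promising set actually certifies, translates, after Step~3's furthest-cut procedure, back to the unmodified Property \ref{prop : P3} at scale $M'$ in the original graph $G$, all within the $O(nM')$ global error budget and the $O(n^2/\gamma)$ volume budget. The remaining bookkeeping is largely driven by the quoted theorems \Cref{thm: maxflow}, \Cref{thm : unit-flow}, and \Cref{thm : vertex weighted approximately minimum containing cut}.
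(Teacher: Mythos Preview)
Your three-step plan mirrors the paper's approach, and Step~3 (applying vertex-\AFMC to $G^{|A}$ with $\alpha=\Theta(nM')$) is essentially identical to what the paper does. But Step~2 has a genuine gap that breaks the running-time bound. You assert that the residual cut $(X,Y)$ in $H$ has value $O(nM')\ll\Delta=\Theta(n^2/\gamma)$, yet $M'$ ranges up to $\wmax'(G)/2\approx W$, so $nM'$ can dwarf $n^2/\gamma$. With your choice of $\Delta$, \alglocal need not return a set $A$ with the required reachability property; if instead you enlarge $\Delta$ to cover $O(nM')$, the running time becomes $\tilde O(\lambda\cdot nM')$, which carries a polynomial dependence on $W$ and is not within the claimed bound. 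Your ``small additive'' weight trim is aimed at Property~\ref{prop : P3} and does nothing for this scale mismatch. The paper resolves this by rescaling all vertex weights multiplicatively by $1/M'$ at the very beginning of the phase (setting $\hat w(v)=\lfloor(w(v)-M')/M'\rfloor$ for heavy $v$ and $\hat w(v)=0$ otherwise), so that the residual cut value after Step~1 is $O(n)$ in the rescaled graph; it then multiplies all residual capacities by $\lceil n/\gamma\rceil$ to align the cut value with $\Delta=\Theta(n^2/\gamma)$. This two-stage rescaling is the missing ingredient, and it is precisely what makes the local-flow call fit the time budget independently of $M'$.

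Two smaller slips worth flagging. In Step~1, ``edges into $L'\cup S'$'' is not what Property~\ref{prop : P1} bounds: that property bounds $\vol^+_G(L')$, i.e., edges \emph{out of} $L'$; the set of all edges into $L'\cup S'$ can be $\Theta(n^2)$ and includes exactly the edges the paper discards (its set $E_4$). Second, the initial flow must be computed via \Cref{thm : computing-flow-with-few-edges} rather than plain \Cref{thm: maxflow}, so that it is nonzero on only $O(n)$ edges; this is what bounds the total out-degree of in-copies in the residual network, which in turn is needed so that the degree-based sink edges inserted by \alglocal do not blow up the value of the cut $(X,Y)$ beyond $\Delta$.
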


We note that the algorithm from  \Cref{thm : phase-i-of-main-subroutine}  may modify the adjacency-list representations of the graphs $G$ and $G'$ that it is given as input.
We provide the proof of \Cref{thm : phase-i-of-main-subroutine} in \Cref{sec : phase-i-of-main-subroutine}, after we complete the proof of \Cref{thm : main-subroutine} using it.
Our algorithm starts by computing an initial $x$-$y$ vertex-cut $(L_0,S_0,R_0)$ with $L_0=\{x\}$, $S_0=N^+_G(x)$, and $R_0 = V(G) \setminus (L_0 \cup S_0)$; we later show that, if the distinguished cut is good, then $(L_0,S_0,R_0)$ is $(\lambda,\gamma)$-well-behaving for scale $M_{0}$ with respect to it.
We then perform $z=\ceil{\log(20n\wmax'(G))}$ phases. For all $1\leq i\leq z$, we use the algorithm from \Cref{thm : phase-i-of-main-subroutine} with scale parameter $M'=M_i$ and cut $(L',S',R')=(L_{i-1},S_{i-1},R_{i-1})$. We denote the cut that the algorithm from \Cref{thm : phase-i-of-main-subroutine} outputs by $(L_i,S_i,R_i)$. 
Recall that the algorithm from  \Cref{thm : phase-i-of-main-subroutine}  may modify the adjacency-list representations of the graphs $G$ and $G'$ that it is given as input; at the end of every phase we undo all these changes, so that the algorithm for the subsequent phase is given access to a valid adjacency-list representations of the graphs $G$ and $G'$.
After the completion of the last phase, the algorithm returns the cut $(L_z,S_z,R_z)$ computed in that phase. This completes the description of the algorithm. We now proceed to analyze its correctness, starting with the following simple claim. 

\begin{claim}\label{cl : input-to-phase-1-is-well-behaving}
Cut	$(L_0,S_0,R_0)$ is a valid $x$-$y$ vertex-cut in $G$, and, moreover, it is $(\lambda,\gamma)$-well-behaving  for scale $M_0$ with respect to the distinguished cut $(L,S,R)$.
\end{claim}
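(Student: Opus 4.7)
The plan is to verify each of the required conditions in turn; the claim is essentially a base-case check for the inductive sequence of phases, so no new machinery is needed.

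First, I would verify that $(L_0, S_0, R_0)$ is a valid $x$-$y$ vertex-cut. By construction $x \in L_0$, and since the hypotheses of \Cref{thm : main-subroutine} guarantee $x \neq y$ and $(x,y) \notin E(G)$, vertex $y$ lies neither in $L_0 = \{x\}$ nor in $S_0 = N^+_G(x)$, so $y \in R_0$. In particular $L_0, R_0 \neq \emptyset$. Moreover, every out-neighbor of the unique vertex $x \in L_0$ is in $S_0$ by definition, so no edge of $G$ connects $L_0$ to $R_0$.

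Next I would check the three properties of a well-behaving cut for scale $M_0 = \wmax'(G)$ (extending the formula $M_i = \wmax'(G)/2^i$ to $i=0$). For Property \ref{prop : P1}, note that $\vol^+_G(L_0) = \deg^+_G(x) \leq n-1$; since the input assumptions give $\lambda\gamma \leq n/100$, so $\gamma \leq n/100$, we obtain $\tfrac{100 n^2}{\gamma} \geq 10000 n \geq n-1 \geq \vol^+_G(L_0)$. For Property \ref{prop : P2}, because $\wmax'(G) \geq 1 + \max_{v \in V(G)} w(v)$ by definition (see \Cref{subsec: split graph}), we have $w(S_0) \leq \sum_{v \in V(G)} w(v) \leq n \cdot \wmax'(G) = n M_0 \leq w(S) + 10 n M_0$, where the last inequality uses $w(S) \geq 0$. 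For Property \ref{prop : P3}, the same bound $\wmax'(G) > \max_v w(v)$ implies that no vertex of $G$ has weight at least $M_0$, so the set $(L \cup S) \setminus (L_0 \cup S_0)$ trivially contains zero vertices of weight $\geq M_0$, which is at most $20 \lambda \gamma$.

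There is no genuine obstacle here; the only mild subtlety is that the definition $M_i = \wmax'(G)/2^i$ in the text is stated for $i \geq 1$, so one should point out that $M_0$ is defined as $\wmax'(G)$ for consistency with the inductive framework, and then the boundedness of all vertex weights strictly below $\wmax'(G)$ makes Property \ref{prop : P3} automatic and Property \ref{prop : P2} immediate.
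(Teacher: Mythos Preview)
Your proposal is correct and follows essentially the same approach as the paper: both verify validity of the cut using $(x,y)\notin E(G)$, then check Properties~\ref{prop : P1}--\ref{prop : P3} via the bounds $\vol^+_G(\{x\})\le n$, $w(S_0)\le n\cdot\wmax'(G)=nM_0$, and the fact that no vertex has weight at least $M_0=\wmax'(G)$.
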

\begin{proof}
	It is immediate to verify that $(L_{0},S_{0},R_{0})$ is a valid $x$-$y$ vertex-cut from its definition, and from the fact that $(x,y) \notin E(G)$.

	It now remains to prove that $(L_0,S_0,R_0)$ is $(\lambda,\gamma)$-well-behaving for scale $M_0$ with respect to the distinguished cut $(L,S,R)$. First, since $|L_0| = 1$, and $\gamma \leq \lambda\cdot\gamma \leq n$, we get that $\vol^+_G(L_0) \leq n\leq \frac{100n^2}{\gamma}$, establishing Property \ref{prop : P1}.
Next, recall that $M_0 = \wmax'(G) > \max_{v \in G} \set{w(v)}$, so $w(S_0)\leq w(V(G))\leq n\cdot \wmax'(G)\leq 10nM_0$ holds, establishing Property \ref{prop : P2}.
Lastly, since the weight of every vertex in $G$ is less than $M_0$, we get that the set $(L \cup S) \setminus (L_0 \cup S_0)$ contains no vertices with weight at least $M_0$, establishing Property \ref{prop : P3}.
\end{proof}

We then obtain the following immediate corollary.

\begin{corollary}\label{cor : the-vertex-cut-produced-by-each-phase-is-well-behaving}
	For all $0\leq i\leq z$, $(L_i,S_i,R_i)$ is a valid $x$-$y$ vertex-cut in $G$. Moreover, if the distinguished cut $(L,S,R)$ is good, and if the subgraph oracle never erred in its responses to its queries, then the vertex-cut $(L_z,S_z,R_z)$ is $(\lambda,\gamma)$-well-behaving for scale $M_z$ with respect to $(L,S,R)$.
\end{corollary}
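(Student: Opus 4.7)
The plan is to prove the corollary by induction on $i$, using Claim \ref{cl : input-to-phase-1-is-well-behaving} as the base case and \Cref{thm : phase-i-of-main-subroutine} as the inductive step. The first assertion (that each $(L_i,S_i,R_i)$ is a valid $x$-$y$ vertex-cut in $G$) follows unconditionally: the base case $i=0$ is given by \Cref{cl : input-to-phase-1-is-well-behaving}, and for $i\geq 1$, the cut $(L_i,S_i,R_i)$ is, by construction, the output of the algorithm from \Cref{thm : phase-i-of-main-subroutine}, which is always a valid $x$-$y$ vertex-cut according to the theorem statement.

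For the second assertion, I would assume that the distinguished cut $(L,S,R)$ is good (so $|L|\leq \lambda$) and that the subgraph oracle never erred. I then prove by induction that, for every $0\leq i\leq z$, the cut $(L_i,S_i,R_i)$ is $(\lambda,\gamma)$-well-behaving for scale $M_i$ with respect to $(L,S,R)$. The base case $i=0$ is exactly \Cref{cl : input-to-phase-1-is-well-behaving}, which already verifies the three well-behaving properties for scale $M_0=\wmax'(G)$. For the inductive step, assume the claim holds for $i-1$, so $(L_{i-1},S_{i-1},R_{i-1})$ is $(\lambda,\gamma)$-well-behaving for scale $M_{i-1}=2M_i$ with respect to $(L,S,R)$. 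We invoke \Cref{thm : phase-i-of-main-subroutine} with scale parameter $M'=M_i$ and input cut $(L',S',R')=(L_{i-1},S_{i-1},R_{i-1})$. Since $|L|\leq \lambda$ and the input cut is well-behaving for scale $2M'=M_{i-1}$ with respect to $(L,S,R)$, the conclusion of \Cref{thm : phase-i-of-main-subroutine} guarantees that $(L_i,S_i,R_i)$ is $(\lambda,\gamma)$-well-behaving for scale $M_i$ with respect to $(L,S,R)$, provided the oracle answered correctly, which holds by assumption. Applying this for $i=z$ yields the desired conclusion.

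The only non-routine verification is that our choice of parameters satisfies the preconditions of \Cref{thm : phase-i-of-main-subroutine}, in particular the range $\frac{1}{100n}\leq M'\leq \frac{\wmax'(G)}{2}$ and the requirement that $M'$ be an integral power of $2$. Since $M_i=\wmax'(G)/2^i$ with $\wmax'(G)$ an integral power of $2$, every $M_i$ is an integral power of $2$ as well. For $1\leq i\leq z$, we have $M_i\leq M_1=\wmax'(G)/2$, and $M_i\geq M_z=\wmax'(G)/2^z\geq \wmax'(G)/(40n\wmax'(G))=1/(40n)\geq 1/(100n)$, using $z=\ceil{\log(20n\wmax'(G))}$. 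Thus each invocation of \Cref{thm : phase-i-of-main-subroutine} is legal, completing the induction. I do not expect any real obstacle here since the corollary is essentially a clean inductive chaining of the per-phase guarantee; the main thing to be careful about is matching the ``scale'' convention ($2M'$ input, $M'$ output) between consecutive phases, which is handled by the explicit choice $M_i=M_{i-1}/2$.
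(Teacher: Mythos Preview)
Your proposal is correct and follows exactly the same approach as the paper, which simply states that the corollary ``follows immediately by induction on $i$, where the base of the induction, with $i=0$, follows from \Cref{cl : input-to-phase-1-is-well-behaving}, and the step follows from the statement of \Cref{thm : phase-i-of-main-subroutine}.'' Your additional verification that each $M_i$ lies in the admissible range $[\tfrac{1}{100n},\tfrac{\wmax'(G)}{2}]$ is a useful detail the paper omits.
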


The proof of the corollary follows immediately by induction on $i$, where the base of the induction, with $i=0$, follows from  \Cref{cl : input-to-phase-1-is-well-behaving}, and the step follows from the statement of \Cref{thm : phase-i-of-main-subroutine}.

It remains to show that, if the distinguished cut is good, and if the subgraph oracle never erred in its responses to the queries, then the $x$-$y$ vertex-cut $(L_z,S_z,R_z)$ returned by our algorithm is a minimum $x$-$y$ vertex-cut. Indeed, assume that the distinguished cut $(L,S,R)$ is good, and  that the subgraph oracle never erred in its responses to the queries. 
Then, from  \Cref{cor : the-vertex-cut-produced-by-each-phase-is-well-behaving},  the vertex-cut $(L_z,S_z,R_z)$ is $(\lambda,\gamma)$-well-behaving for scale $M_z$ with respect to $(L,S,R)$.
By property \ref{prop : P2} of well-behaving cuts, $w(S_z) \leq w(S) + 10n M_z<w(S)+1$ holds, since, from the definition of the parameter $z$, $M_z\leq \frac{1}{20n}$.
Since all vertex weights are integral, it must be the case that $w(S_z) \leq w(S)$, and so $(L_z,S_z,R_z)$ is a minimum $x$-$y$ vertex-cut.

\paragraph{Running Time and Oracle Access.}
The time required to compute the initial vertex-cut $(L_0,S_0,R_0)$ is $O(n)$.
Since the number of phases is $z\leq \ceil{\log(20n\wmax'(G))} \leq O(\log (nW)) \leq O(\log n \cdot \log W)$, and the expected running time of every phase, from  \Cref{thm : phase-i-of-main-subroutine}, is $O\left (\left (\frac{n^{2+o(1)}\cdot \lambda}{\gamma} + n^{1+o(1)}\cdot\gamma\cdot\lambda\right ) \cdot \log^4(W)\right )$, we get that the total expected running time of the algorithm is:

\[O\left(n+\left (\frac{n^{2+o(1)}\cdot \lambda}{\gamma} + n^{1+o(1)}\gamma\lambda\right ) \cdot \log^5(W)\right)\leq O\left(\left (\frac{n^{2+o(1)}\cdot \lambda}{\gamma} + n^{1+o(1)}\cdot \gamma\cdot \lambda\right ) \cdot \log^5(W)\right).\]


Lastly, observe that the algorithm accesses the $(\delta,20\wmax(G))$-subgraph oracle at most once per phase, so the total number of times it accesses the oracle is bounded by $\ceil{\log(20n\wmax'(G))} \leq \log(40n\wmax'(G)) \leq \log(80n (W+1))$.
In order to complete the proof of \Cref{thm : main-subroutine} it is now enough to prove \Cref{thm : phase-i-of-main-subroutine}, which we do next.

\subsubsection{Algorithm for a Single Phase: Proof of \Cref{thm : phase-i-of-main-subroutine}}\label{sec : phase-i-of-main-subroutine}

In this subsection we prove \Cref{thm : phase-i-of-main-subroutine}. We start by defining the notion of a \emph{promising vertex set}, that is central to the proof of \Cref{thm : phase-i-of-main-subroutine}.


\begin{definition}[Promising vertex set]\label{def : promising-set}
	Let $G$ be a directed $n$-vertex graph, let $x,y\in V(G)$ be a pair of its distinct vertices, and let $(L,S,R)$ be a minimum $x$-$y$ vertex-cut in $G$. Assume further that we are given integral parameters $\lambda,\gamma\geq 1$ and a scale parameter $M'>0$.
	We say that a set $A \subseteq V(G)$ of vertices is a \emph{$(\lambda,\gamma)$-promising vertex set for scale $M'$ with respect to $(G,x,y)$ and the cut $(L,S,R)$}, if all of the following hold:
	
	\begin{properties}{R}
		\item\label{prop : P'1} $\vol^+_G(A) \leq \frac{100n^2}{\gamma}$; and
	\end{properties}
	
	there exists a set $\Tilde{L} \subseteq A$ of vertices, such that, if we denote by $(\Tilde{L},\Tilde{S},\Tilde{R})$ the tripartition of $V(G)$ induced by $\Tilde{L}$, then:
	
	\begin{properties}[1]{R}
		\item\label{prop : P'2} $w(\Tilde{S}) \leq w(S) + 8\lambda\gamma M'$; 
		\item\label{prop : P'3} set $(L\cup S)\setminus (\Tilde{L} \cup \Tilde{S})$ contains at most $12\lambda\gamma$ vertices of weight at least $M'$; and
		\item \label{prop : P'4}   $(\Tilde{L},\Tilde{S},\Tilde{R})$ is a valid $x$-$y$ vertex-cut in $G$.
	\end{properties}
\end{definition}

Note that Properties \ref{prop : P'2} and \ref{prop : P'3}  of the vertex-cut $(\Tilde{L},\Tilde{S},\Tilde{R})$ in the above definition can be viewed as a strengthened version of Properties \ref{prop : P2} and \ref{prop : P3} from the definition of a well-behaving vertex-cut with respect to a cut $(L,S,R)$ (see \Cref{def : well-behaving cut}), if $\lambda\cdot \gamma\leq n$. Therefore, intuitively, a vertex set $A$ is promising if there exists an $x$-$y$ vertex-cut $(\tilde L,\tilde S,\tilde R)$ in $G$ that is induced by $\tilde L$, with $\tilde L\subseteq A$, such that $(\tilde L,\tilde S,\tilde R)$  is well-behaving with respect to $(L,S,R)$, even when using slightly strengthened versions of properties  \ref{prop : P2} and \ref{prop : P3}. 

The main idea of our algorithm is to first compute a $(\lambda,\gamma)$-promising vertex set $A$, and then to use this set in order to compute the cut $(L'',S'',R'')$ with the required properties, that will serve as the algorithm's output. The following theorem, whose proof is deferred to \Cref{sec : step-1}, provides an efficient randomized algorithm for computing such a vertex set $A$.
The input to the algorithm is identical to that from \Cref{thm : phase-i-of-main-subroutine}, but its output is a subset $A$ of vertices of $G$ (instead of a vertex-cut), that is guaranteed to be a promising vertex set, if the input vertex-cut $(L',S',R')$ is well-behaving for scale $(2M')$ with respect to some minimum $x$-$y$ vertex-cut, and if the subgraph oracle does not err.

\begin{theorem}
	\label{thm : algorithm-for-step-1}
	There is a randomized algorithm, whose input consists of a simple directed $n$-vertex graph $G$ with integral weights $1 \leq w(v) \leq W$ on its vertices $v \in V(G)$ given in the adjacency-list representation, together with the adjacency-list representation of the split-graph $G'$ of $G$. Additionally, the algorithm is given as input integral parameters $\lambda,\gamma \geq 1$ with $\lambda\cdot\gamma \leq \frac{n}{100}$, a pair $x,y \in V(G)$ of distinct vertices such that $(x,y) \notin E(G)$, a scale parameter $\frac{1}{100n} \leq M' \leq \frac{\wmax'(G)}{2}$ that is an integral power of $2$,
	and an $x$-$y$ vertex-cut $(L',S',R')$ in $G$.
	Lastly, the algorithm is allowed to make a single query to the $(\delta,20\wmax(G))$-subgraph oracle for $G$, for $\delta=1-\frac{\log(30000\lambda\gamma \cdot \log n)}{\log n}$, so that $n^{1-\delta}=30000\lambda\gamma \cdot \log n$.
	The algorithm returns a subset $A \subseteq V(G)$ of vertices with $x \in A$ and $y \notin A \cup N^+_G(A)$ with the following guarantee: if there exists a minimum $x$-$y$ vertex-cut $(L,S,R)$ with $|L| \leq \lambda$ such that the input vertex-cut $(L',S',R')$ is $(\lambda,\gamma)$-well-behaving for scale $(2M')$ with respect to $(L,S,R)$, and if the subgraph oracle does not err in its response to the query, then vertex set $A$ is $(\lambda,\gamma)$-promising for scale $M'$ with respect to $(G,x,y)$ and the cut $(L,S,R)$.
	The expected running time of the algorithm is $O\left (\left (\frac{n^{2+o(1)}\cdot \lambda}{\gamma} + n^{1+o(1)}\cdot \gamma\cdot \lambda\right ) \cdot \log^4 W\right )$.
\end{theorem}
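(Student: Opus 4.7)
\proofof{\Cref{thm : algorithm-for-step-1} (Proposal)}

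The plan is to realize the high-level strategy outlined in the overview: first construct a valid shortcut $x$-$y$ flow $(E^*,f)$ in a slight modification of $G$ whose value is close to the value of the minimum $x$-$y$ vertex-cut; then work in the residual flow network of the corresponding split graph with respect to $f$ to extract a small out-volume vertex set $A$ that must contain a ``core'' of $L$ together with enough of $S$ to make $A$ promising.

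First, I would construct an auxiliary graph $\tilde G$ from $G$ by reducing, for every vertex $v$ with $w(v)\geq M'$, its weight by exactly $M'$ (so $\tilde w(v)=w(v)-M'$ for heavy vertices and $\tilde w(v)=w(v)$ otherwise). The point of this weight reduction is to create a ``penalty'' proportional to the number of heavy vertices separated by the cut: any $x$-$y$ vertex-cut $(\tilde L,\tilde S,\tilde R)$ in $\tilde G$ satisfies $\tilde w(\tilde S)=w(\tilde S)-M'\cdot|\tilde S^{\geq M'}|$, so an approximate minimum cut in $\tilde G$ cannot leave out many heavy vertices of $L\cup S$ compared to $(L,S,R)$. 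Using the input vertex-cut $(L',S',R')$ and the promised properties \ref{prop : P1}--\ref{prop : P3}, I would then construct (analogously to Step 1 of the overview of the dense algorithm) a sparsified subgraph $\hat G\subseteq \tilde G$ together with a set $E^*$ of at most $O(n)$ shortcut edges $(v,y)$, such that the value of the maximum $x$-$y$ vertex-capacitated flow in $\hat G\cup E^*$ is close to that in $\tilde G$, while, provided $(L',S',R')$ is $(\lambda,\gamma)$-well-behaving for scale $(2M')$ with respect to $(L,S,R)$, every shortcut edge is valid (its tail lies outside $L$). The single call to the subgraph oracle is used precisely to sparsify efficiently the edges of $G$ whose tails have many heavy out-neighbors in $R'$, exactly as in the analogous Step~1 routine described in the overview. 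I would then apply the almost linear-time max-flow algorithm of \Cref{thm: maxflow} on $\hat G\cup E^*$ (via its split graph) to obtain an integral $x$-$y$ flow $f$ in $\tilde G\cup E^*$ whose value $\val(f)$ satisfies $\tilde w(S)-\val(f)\leq O(\lambda\gamma\cdot M')$.

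Next, I would move to the split graph of $\tilde G\cup E^*$, letting $\hat f$ be the corresponding edge flow in this split graph, letting $H$ be its residual network, and letting $s=x^{\out}$ and $t=y^{\inn}$. The distinguished cut $(L,S,R)$ defines the $s$-$t$ edge-cut $(X,Y)$ with $X=L^*\cup S^{\inn}$, $Y=V(H)\setminus X$; by \Cref{fact : residual-capacity-of-edge-cut} and the definition of $\tilde G$, the residual capacity of $(X,Y)$ in $H$ is exactly $\tilde w(S)-\val(f)+O(M'\lambda)\leq O(\lambda\gamma M')$. I would now invoke the local flow algorithm \alglocal{} from \Cref{thm : unit-flow} on (a suitably scaled version of) $H$ with source $s$, the modified sink of that theorem, flow target $\Delta=\Theta(\lambda\gamma)$, and depth parameter $h=16\lambda+O(1)$. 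The key structural observation is that any simple path in $H$ whose vertices are all in $X$ alternates between out-copies and in-copies of vertices in $L\cup S$ and uses at most $|L|$ special edges on in-copies of $L$, so it has length at most $2|L|+1\leq 2\lambda+1<h$; hence any path of length $\geq h$ in $H$ starting at $s$ must exit $X$. Setting $A'\subseteq V(H)$ to be the vertex set returned by \alglocal{}, this ensures that $A'$ contains every vertex reachable from $s$ in $H$ via a path that stays inside $X$, while the theorem guarantees $\vol^+_H(A')\leq O(\lambda\gamma)\cdot n^{o(1)}$ and that the call takes time $\tilde O(h\Delta)$. Finally, I would set $A=\{v\in V(G)\mid v^{\out}\in A'\}$, and verify that $x\in A$ and $y\notin A\cup N^+_G(A)$ by discarding copies of $y$ and its in-neighbors from $A'$ as needed.

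To prove that $A$ is $(\lambda,\gamma)$-promising, I would define the witness set $\tilde L$ as follows. Let $A^*\subseteq A'\cap X$ be the set of vertices reachable from $s$ in the residual network $\hat H_{f'}$ (returned by \alglocal{}) via paths lying entirely in $X$; by the observation above, $A^*\subseteq A'$. Let $\tilde L=\{v\in V(G)\mid v^{\out}\in A^*\}\subseteq A$, and let $(\tilde L,\tilde S,\tilde R)$ be the tripartition of $V(G)$ induced by $\tilde L$. Since $(A^*,V(H)\setminus A^*)$ is an $s$-$t$ edge-cut in $\hat H_{f'}$ whose residual capacity is no larger than that of $(X,Y)$ (by construction of $A^*$), \Cref{fact : residual-capacity-of-edge-cut} combined with \Cref{obs: existential from split to regular} implies that $(\tilde L,\tilde S,\tilde R)$ is a valid $x$-$y$ vertex-cut in $\tilde G$ with $\tilde w(\tilde S)\leq \tilde w(S)+O(\lambda\gamma M')$; reverting the weight modification and using the counting identity $\tilde w(Z)=w(Z)-M'\cdot|Z^{\geq M'}|$ for $Z\in\{S,\tilde S\}$ then yields both \ref{prop : P'2} and \ref{prop : P'3} with the required constants $8\lambda\gamma$ and $12\lambda\gamma$ (after absorbing the error terms introduced in Step 1). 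Property \ref{prop : P'1} follows directly from the out-volume bound on $A'$ coming from \alglocal{}, together with the fact that every outgoing edge of a vertex $v\in A$ in $G$ corresponds to an outgoing regular edge of $v^{\out}\in A'$ in $H$.

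The main obstacle I anticipate is the careful accounting needed to track the relationship between the weight modification in $\tilde G$, the shortcut flow, and the two scales $M'$ and $2M'$: the input cut is well-behaving for scale $2M'$, but the output promising set must be with respect to scale $M'$, so a factor-$2$ loss has to be absorbed in the constants $8\lambda\gamma$ and $12\lambda\gamma$ of \Cref{def : promising-set}. The running time is dominated by the single max-flow call in Step~1, which is $O((n^{2+o(1)}\lambda/\gamma)\log W)$ on the sparsified graph, and the call to \alglocal{}, which is $\tilde O(h\Delta)=\tilde O(\lambda^2\gamma)\leq O(n^{1+o(1)}\gamma\lambda)$ (using $\lambda\leq n^{o(1)}\cdot\lambda$); summing and absorbing polylogarithmic factors in $W$ yields the stated bound.
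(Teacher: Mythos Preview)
Your high-level plan matches the paper's three-step approach (sparsify using the oracle and the well-behaving cut; compute a near-maximum flow in the sparsifier; run local flow on the residual network of the split graph), and the structural argument that a path inside $X=L^*\cup S^{\inn}$ has length at most $2\lambda+1$ is exactly right. However, the quantitative part of your proposal has several genuine gaps that would make the argument fail as written.

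First, the deficit bound $\tilde w(S)-\val(f)\le O(\lambda\gamma M')$ is incorrect. The sparsification step loses an additive $O(nM')$, not $O(\lambda\gamma M')$: Property~\ref{prop : P2} of the input well-behaving cut already carries an $O(nM')$ slack in $w(S')$, and deleting the edges into light vertices loses another $O(nM')$. In the paper's scaled weights $\hat w(v)=\lfloor (w(v)-M')/M'\rfloor$ this becomes a deficit of $48n$, not $O(\lambda\gamma)$. This matters because it dictates how large $\Delta$ must be.

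Second, your choice $\Delta=\Theta(\lambda\gamma)$ is too small, and ``suitably scaled'' hides the key mechanism. The paper multiplies all residual capacities by $\lceil n/\gamma\rceil$ before running \alglocal. After this scaling, the residual-cut contribution becomes $O(n)\cdot(n/\gamma)=O(n^2/\gamma)$, while the sink-edge contribution $\sum_{v\in X}\deg^+_Z(v)\le O(\lambda n)$ is unchanged; their sum is $\Delta=50n^2/\gamma$. Crucially, when you undo the scaling to bound $c_Z(\tilde X,\tilde Y)-c_Z(X,Y)$, the sink-edge excess of $O(\lambda n)$ gets divided by $n/\gamma$ and becomes the required $O(\lambda\gamma)$. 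Without this specific scaling, the sink edges contribute $O(\lambda n)$ to $\hat w(\tilde S)-\hat w(S)$, which translates to $w(\tilde S)\le w(S)+O(\lambda n M')$ and violates Property~\ref{prop : P'2}. The running time of \alglocal{} is then $\tilde O(h\Delta)=\tilde O(\lambda\cdot n^2/\gamma)$, which is the dominant term in the theorem.

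Third, you are missing a technical ingredient: the flow $f$ must be nonzero on only $O(n)$ edges (the paper uses \Cref{thm : computing-flow-with-few-edges} rather than plain max-flow). This is needed to bound $\sum_{v^{\inn}\in V^{\inn}}\deg^+_Z(v^{\inn})\le O(n)$ in the residual network; without it, in-copies could have many outgoing backward edges in $Z$, breaking both the bound on the $(X,Y)$ cut in $\hat Z'$ and the translation from $\vol^+_{\hat Z}(A')$ back to $\vol^+_G(A)$.
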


We prove \Cref{thm : algorithm-for-step-1} in \Cref{sec : step-1}, after we complete the proof of  \Cref{thm : phase-i-of-main-subroutine} using it.
Recall that we are given as input a simple directed $n$-vertex graph $G$ with integral weights $1 \leq w(v) \leq W$ on its vertices $v \in V(G)$ in the adjacency-list representation, together with the adjacency-list representation of its split-graph $G'$, integers $\lambda,\gamma \geq 1$ with $\lambda\cdot\gamma \leq \frac{n}{100}$, and a pair $x,y \in V(G)$ of distinct vertices such that $(x,y) \notin E(G)$. Additionally, we are given a scale parameter $\frac{1}{100n} \leq M' \leq \frac{\wmax'(G)}2$ that is an integral power of $2$,
and an $x$-$y$ vertex-cut $(L',S',R')$.
Lastly, the algorithm is allowed to make a single query to the $(\delta,20\wmax(G))$-subgraph oracle for $G$, for $\delta=1-\frac{\log(30000\lambda\gamma \cdot \log n)}{\log n}$, so that $n^{1-\delta}=30000\lambda\gamma \cdot \log n$.

For the sake of analysis, we define a \emph{distinguished cut} $(L,S,R)$ as follows.
If there exists a minimum $x$-$y$ vertex-cut $(L^*,S^*,R^*)$ with $|L^*| \leq \lambda$ such that the input vertex-cut $(L',S',R')$ is $(\lambda,\gamma)$-well-behaving  for scale $(2M')$ with respect to $(L^*,S^*,R^*)$, then we let $(L,S,R)=(L^*,S^*,R^*)$, and we say that the distinguished cut is \emph{good}. Otherwise, we let $(L,S,R)$ be any minimum $x$-$y$ vertex-cut, and we say that the distinguished cut is \emph{bad}.
Since, in the remainder of the proof, the graph $G$, the vertices $x,y$, the parameters $\lambda,\gamma$, and the distinguished cut are fixed, we will say that a vertex set $A\subseteq V(G)$ is \emph{promising} if it is a $(\lambda,\gamma)$-promising vertex set for scale $M'$ with respect to $(G,x,y)$ and the distinguished cut $(L,S,R)$. Similarly, we will say that a vertex-cut $(L'',S'',R'')$ is \emph{well-behaving} for some scale $M$, if it is $(\lambda,\gamma)$-well-behaving for scale $M$ with respect to the cut $(L,S,R)$. 

Recall that our goal is to compute a vertex-cut  $(L'',S'',R'')$ in $G$, such that, if the distinguished cut is good, and if the subgraph oracle does not err in its response to the query, then $(L'',S'',R'')$  is well-behaving for scale $M'$.
As our first step, we apply the algorithm from \Cref{thm : algorithm-for-step-1} to the input graph $G$, the pair $x,y$ of vertices, the parameters $\lambda,\gamma$ and $M'$, and the $x$-$y$ vertex-cut $(L',S',R')$. Recall that the expected running time of the algorithm from  \Cref{thm : algorithm-for-step-1} is $O\left (\left (\frac{n^{2+o(1)}\cdot \lambda}{\gamma} + n^{1+o(1)}\cdot \gamma\cdot \lambda\right ) \cdot \log^4 W\right )$. Let $A\subseteq V(G)$ be the output of the algorithm, and  recall that $x\in A$ and $y\not\in A\cup N_G^+(A)$ must hold. 
Recall that  \Cref{thm : algorithm-for-step-1} guarantees that, if the distinguished cut $(L,S,R)$ is good, and if the oracle does not err in its response to the query, then  $A$ is a promising vertex set for scale $M'$.
We check, in time $O\left(\frac{n^2}{\gamma}\right )$, whether $\vol^+_G(A)>\frac{100n^2}{\gamma}$; if this is indeed the case, then we terminate the algorithm and return the $x$-$y$ vertex-cut $(L'',S'',R'')$, where $L''=\{x\}$, $S''=N^+_G(x)$, and $R'' = V(G) \setminus (L'' \cup S'')$. We say that the algorithm \emph{terminated early} in this case. Notice that the algorithm may only terminate early if the distinguished cut is bad, or if the subgraph oracle erred. We assume from now on that the algorithm did not terminate early, and so $\vol^+_G(A)\leq \frac{100n^2}{\gamma}$ holds.

We now provide intuition for the remainder of the algorithm. Observe that it is now enough to compute a vertex-cut $(L'',S'',R'')$ in $G$, so that, if the distinguished cut $(L,S,R)$ is good and the subgraph oracle did not err, then $(L'',S'',R'')$ is well-behaving for scale $M'$ with respect to $(L,S,R)$. Therefore, for the sake of intuition, assume that the distinguished cut $(L,S,R)$ is good, and that the subgraph oracle did not err, so that the vertex set $A$ is promising. Consider now the vertex-cut $(\tilde L,\tilde S,\tilde R)$ with $\tilde L\subseteq A$ that is guaranteed to exist by the definition of the promising vertex set (\Cref{def : promising-set}). 
Recall that, from the definition of the promising vertex set, cut $(\tilde L,\tilde S,\tilde R)$ has all required properties of a well-behaving cut, and in fact it even satisfies a strengthened version of these properties. Unfortunately, even if we are given a promising vertex set $A$, it is not immediately clear how to recover the corresponding vertex-cut $(\tilde L,\tilde S,\tilde R)$ from it. Fortunately, we can overcome this difficulty by 
using an algorithm for the vertex-\AFMC problem from \Cref{thm : vertex weighted approximately minimum containing cut} with  parameter $\alpha=M'n$, in order to recover a vertex-cut $(L'',S'',R'')$ in $G$, that, while not guaranteed to be exactly equal to the cut $(\tilde L,\tilde S,\tilde R)$, is sufficiently ``close'' to it, so that it is guaranteed to be well-behaving with respect to $(L,S,R)$. One remaining obstacle is that the running time of the algorithm from 
\Cref{thm : vertex weighted approximately minimum containing cut} is at least linear in the size of the input graph, which may be higher than the required running time bound if the graph $G$ is sufficiently dense. To overcome this difficulty, we apply the algorithm for the  vertex-\AFMC problem to the graph $\Tilde{G}=G^{|A}$ that is derived from $G$ via $A$ instead (see \Cref{def: defined graph}); since $\vol^+_G(A)\leq O\left (\frac{n^2}{\gamma}\right )$, this graph is guaranteed to be sufficiently small, and, since $\tilde L\subseteq A$ holds, a solution to the vertex-\AFMC problem on this graph naturally defines an $x$-$y$ vertex-cut in the original graph with the required properties. 

We are now ready to complete the description of the algorithm.
We start by constructing the graph $\Tilde{G}=G^{|A}$ that is derived from $G$ via $A$ (see \Cref{def: defined graph}). From \Cref{claim: properties of defined graph}, $|E(\tilde G)|\leq O\left(\vol^+_G(A)\right)\leq O\left(\frac{n^2}{\gamma}\right )$, and graph $\tilde G$ can be constructed in time $O\left(\vol^+_G(A)\right)\leq O\left(\frac{n^2}{\gamma}\right )$.

Next, we define an instance of the vertex-\AFMC problem (see \Cref{def : almost-minimum-furthest-vertex-cut}) with the graph $\Tilde{G}$, the pair $x,t \in V(\Tilde{G})$ of its vertices and the parameter $\alpha=n M'$.\footnote{Observe that, in the case where $M' < 1$, it is possible that $\alpha$ is not an integer. In this case, we cannot set up the instance of the vertex-\AFMC problem exactly as described, since this problem is only defined when $\alpha$ is an integer. However, we can get around this issue by rescaling both $\alpha$ and all vertex-weights by a multiplicative factor of $\frac{1}{M'}$, which is guaranteed to make them all integral (because, in this case, $M'$ is an integral power of $2$ satisfying $M' <1$.)} Note that $(x,t)\not\in E(\tilde G)$ since $x\in A$, and, moreover, since $M'\leq\frac{\wmax'(G)}{2}$, $\alpha<\frac{\wmax(G)}{2}$ must hold. Therefore, we obtain a valid instance of vertex-\AFMC.
We apply the algorithm from \Cref{thm : vertex weighted approximately minimum containing cut} to this instance of vertex-\AFMC, to obtain a solution $\left(\tilde L'',\tilde S'',\tilde R''\right)$ to the problem, so, in particular,
$\left(\tilde L'',\tilde S'',\tilde R''\right)$  is an $x$-$t$ vertex-cut in $\tilde G$.
We then compute a tripartition $(L'',S'',R'')$ of vertices of $G$, with $L''=\tilde L''$, $S''=\tilde S''$, and $R''=V(G)\setminus (L''\cup S'')$. 
Recall that $x\in \tilde L''=L''$ must hold, and, since, from \Cref{thm : algorithm-for-step-1}
$y \notin A \cup N^+_G(A)$, we get that $y$ is not a vertex of $\tilde G$, so $y\in R''$ must hold.
From \Cref{claim: transforming cuts from defined graph},  $(L'',S'',R'')$ is a valid vertex-cut in $G$, and $L''\subseteq A$. Moreover, from the above discussion, $(L'',S'',R'')$ is a valid $x$-$y$ cut in $G$.
The following claim will complete the correctness analysis of the algorithm.

\begin{claim}\label{claim: correctness of well behaving}
	Assume that the distinguished cut is good and that the subgraph oracle did not err. Then the algorithm does not terminate early, and moreover, the $x$-$y$ cut $(L'',S'',R'')$ that it returns is a well-behaving cut for scale $M'$ with respect to $(L,S,R)$.
\end{claim}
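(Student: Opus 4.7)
The plan is to verify each of the two assertions in turn, assuming the distinguished cut is good and the subgraph oracle did not err. Under these assumptions, Theorem \ref{thm : algorithm-for-step-1} guarantees that the set $A$ returned in the first step is a $(\lambda,\gamma)$-promising vertex set for scale $M'$ with respect to $(G,x,y)$ and $(L,S,R)$. In particular, property \ref{prop : P'1} gives $\vol^+_G(A) \leq 100n^2/\gamma$, so the algorithm does not terminate early. Moreover, the definition of a promising set hands us a witness subset $\tilde L \subseteq A$ such that the tripartition $(\tilde L,\tilde S,\tilde R)$ of $V(G)$ induced by $\tilde L$ is a valid $x$-$y$ vertex-cut in $G$ satisfying properties \ref{prop : P'2}--\ref{prop : P'4}; in particular $\tilde S = N^+_G(\tilde L)$.

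To establish property \ref{prop : P1} of well-behaving cuts, I note that by Claim \ref{claim: transforming cuts from defined graph}, the set $L''$ returned from the vertex-\AFMC call is contained in $A$. Hence $\vol^+_G(L'') \leq \vol^+_G(A) \leq 100n^2/\gamma$, as required. For the remaining two properties I exploit the guarantees of the vertex-\AFMC algorithm from \Cref{thm : vertex weighted approximately minimum containing cut}. Denote by $\opt^V_{x,t}$ the value of the minimum $x$-$t$ vertex-cut in $\tilde G = G^{|A}$. Since $\tilde L \subseteq A$ and $(\tilde L,\tilde S,\tilde R)$ is an $x$-$y$ cut in $G$ induced by $\tilde L$, Claim \ref{claim: properties of defined graph} implies that $\left(\tilde L,\tilde S, V(\tilde G)\setminus(\tilde L \cup \tilde S)\right)$ is a valid $x$-$t$ vertex-cut in $\tilde G$, so $\opt^V_{x,t} \leq w(\tilde S)$. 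Combining this with the first guarantee of vertex-\AFMC and property \ref{prop : P'2}:
\[
w(S'') \leq \opt^V_{x,t} + 2\alpha \leq w(\tilde S) + 2nM' \leq w(S) + 8\lambda\gamma M' + 2nM' \leq w(S) + 10nM',
\]
where the last inequality uses $\lambda\gamma \leq n/100 \leq n$. This establishes property \ref{prop : P2}.

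For property \ref{prop : P3}, I apply the second guarantee of vertex-\AFMC to the cut $\left(\tilde L,\tilde S, V(\tilde G)\setminus(\tilde L \cup \tilde S)\right)$ in $\tilde G$, which yields
\[
|(\tilde L \cup \tilde S) \setminus (L'' \cup S'')| \leq \bigl(w(\tilde S) - \opt^V_{x,t}\bigr)\cdot \frac{n}{\alpha} \leq \frac{w(\tilde S) - w(S)}{M'} \leq 8\lambda\gamma,
\]
where I used $\opt^V_{x,t} \geq w(S)$ (the value of the minimum $x$-$y$ vertex-cut in $G$, by Corollary \ref{cor: minimum cut in defined graph} applied with $y \notin A \cup N^+_G(A)$), followed by property \ref{prop : P'2}. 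Finally, using the set-theoretic inclusion
\[
(L\cup S)\setminus (L''\cup S'') \;\subseteq\; \bigl((L\cup S)\setminus(\tilde L\cup \tilde S)\bigr) \cup \bigl((\tilde L\cup \tilde S)\setminus(L''\cup S'')\bigr),
\]
together with property \ref{prop : P'3} (which bounds the number of vertices of weight at least $M'$ in the first set by $12\lambda\gamma$) and the bound of $8\lambda\gamma$ on the total size of the second set, I conclude that $(L \cup S) \setminus (L'' \cup S'')$ contains at most $20\lambda\gamma$ vertices of weight at least $M'$, which is exactly property \ref{prop : P3}. The main conceptual point is simply the translation of the vertex-\AFMC guarantees on $\tilde G$ back to the original graph $G$ via the witness cut supplied by the promising set, so no further obstacles arise.
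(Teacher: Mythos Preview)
Your proof is correct and follows essentially the same approach as the paper's own proof: both establish \ref{prop : P1} via $L'' \subseteq A$, establish \ref{prop : P2} by bounding $\opt^V_{\tilde G}$ through the witness cut $(\tilde L,\tilde S,\tilde R)$ and invoking the first vertex-\AFMC guarantee, and establish \ref{prop : P3} by combining \ref{prop : P'3} with the second vertex-\AFMC guarantee applied to that same witness cut (using $\opt^V_{\tilde G} \geq w(S)$ from \Cref{cor: minimum cut in defined graph}). The only differences are purely cosmetic---you phrase the final step as a set-theoretic inclusion while the paper defines the two pieces $Z_1,Z_2$ explicitly.
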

\begin{proof}
	We assume that  the distinguished cut is good and that the subgraph oracle did not err, so $A$ is a promising vertex set.
	In particular from Property \ref{prop : P'1} of a promising set, $\vol^+_G(A) \leq \frac{100n^2}{\gamma}$ holds, so the algorithm does not terminate early. Consider now the vertex-cut $(L'',S'',R'')$ that the algorithm returns. It is now enough to show that $(L'',S'',R'')$ is a well-behaving cut for scale $M'$ with respect to $(L,S,R)$, or, equivalently, that it satisfies Properties \ref{prop : P1}--\ref{prop : P3}.
	
	As observed already, $L''\subseteq A$ must hold. Since, from Property \ref{prop : P'1} of a promising vertex set, $\vol^+_G(A) \leq \frac{100n^2}{\gamma}$, we conclude that $\vol^+_G(L'')\leq  \frac{100n^2}{\gamma}$, establishing Property \ref{prop : P1}. 
	
	Consider now the set  $\tilde L\subseteq A$ of vertices that is guaranteed to exist by the definition of a promising vertex set, and the corresponding $x$-$y$ vertex-cut $(\tilde L,\tilde S,\tilde R)$ in $G$ (see Property \ref{prop : P'4} of a promising vertex set). It is immediate to verify that $\tilde L,\tilde S\subseteq V(\tilde G)$, and, if we denote by $\tilde R'=V(\tilde G)\setminus(\tilde L\cup\tilde S)$, then, from \Cref{claim: properties of defined graph}, $(\tilde L,\tilde S,\tilde R')$ is a valid vertex-cut, and hence a valid $x$-$t$ vertex-cut in $\tilde G$.
	Let $\opt^V_{\Tilde{G}}$ denote the value of the minimum $x$-$t$ vertex-cut in $\Tilde{G}$. 
	From the above discussion, $\opt^V_{\Tilde{G}}\leq w(\tilde S)$, and, from Property 
	\ref{prop : P'2} of a promising vertex set, $w(\tilde S)\leq  w(S) + 8\lambda\gamma M'$ must hold. Altogether,  
since $\lambda\gamma\leq \frac{n}{100}$, we get that:

\begin{equation}\label{eq: bound on OPT}	
\opt^V_{\Tilde{G}}\leq w(\tilde S)\leq  w(S) + 8\lambda\gamma M'\le w(S)+8n\cdot M'.
\end{equation}

Since $(\tilde L'',\tilde S'',\tilde R'')$ is a valid solution to the instance of the vertex-\AFMC problem on graph $\tilde G$, with the pair $x,t$ of vertices and the parameter $\alpha=n\cdot M'$, we get that:
	
	\[w(S'')=w(\tilde S'')\leq  \opt^V_{\Tilde{G}} + 2\alpha = \opt^V_{\Tilde{G}} + 2n M'\leq w(S)+10n\cdot M',\]
	
establishing Property \ref{prop : P2}.

It remains to show that $(L'',S'',R'')$ satisfies property \ref{prop : P3} of a  well-behaving vertex-cut for scale $M'$. 
In other words, it remains to show that the vertex set $(L \cup S)\setminus (L'' \cup S'')$ contains at most $20\lambda\gamma$ vertices of weight at least $M'$. Consider the following two subsets of vertices of $G$:

\[Z_1=\set{v\in (L \cup S)\setminus (\tilde L\cup \tilde S)\mid w(v)\geq M'},\]

and

\[Z_2=\set{v\in (\tilde L\cup \tilde S)\setminus (L'' \cup S'')\mid w(v)\geq M'}.\]

Notice that, if $v\in  (L \cup S)\setminus (L'' \cup S'')$ is a vertex of weight at least $M'$, then $v\in Z_1$ or $v\in Z_2$ must hold. Therefore, it is enough to prove that $|Z_1|+|Z_2|\leq 20\lambda\gamma$. Observe that, from Property \ref{prop : P'3} of a promising set, $|Z_1|\leq 12\lambda\gamma$ must hold. 
The following observation will then finish the proof of the claim.

\begin{observation}\label{obs: bound Z2}
	$|Z_2|\leq 8\lambda\gamma$.
\end{observation}
\begin{proof}
Recall that we denoted by $\tilde R'=V(\tilde G)\setminus(\tilde L\cup\tilde S)$, and we have established that $(\tilde L,\tilde S,\tilde R')$ is a valid $x$-$t$ vertex-cut in $\tilde G$. From the definition of the vertex-\AFMC problem (see \Cref{def : almost-minimum-furthest-vertex-cut}):

\[
	|(\tilde L \cup \tilde S) \setminus (\tilde L'' \cup \tilde S'')| \leq \left(w(\tilde S) - \opt^{V}_{\tilde G}\right )\cdot \frac{n}{\alpha} \leq \frac{w(\tilde S) - \opt^{V}_{\tilde G}}{M'},
\]

since $\alpha=nM'$.
From Property \ref{prop : P'2} of a promising vertex set, $w(\Tilde{S}) \leq w(S) + 8\lambda\gamma M'$.
Moreover, from \Cref{cor: minimum cut in defined graph}, $\opt_{\Tilde{G}}^V$ is at least as large as the value of the minimum $x$-$y$ vertex-cut in $G$, which, in turn, is equal to $w(S)$, so  $\opt^V_{\Tilde{G}} \geq w(S)$ must hold. Altogether, we get that:

\[	|(\tilde L \cup \tilde S) \setminus (\tilde L'' \cup \tilde S'')| \leq \frac{w(\tilde S) - \opt^{V}_{\tilde G}}{M'}\leq 8\lambda\gamma.\]
\end{proof}
\end{proof}

We conclude that, if the distinguished cut is good, and if the subgraph oracle does not err, then the $x$-$y$ vertex-cut that our algorithm returns is well-behaving for scale $M'$ with respect to $(L,S,R)$. 

Recall that the expected running time of the algorithm from  \Cref{thm : algorithm-for-step-1} is bounded by: 

\[O\left (\left (\frac{n^{2+o(1)}\cdot \lambda}{\gamma} + n^{1+o(1)}\cdot \gamma\cdot \lambda\right ) \cdot \log^4 W\right ).\]

As observed already, we can check whether $\vol^+_G(A)\leq \frac{100n^2}{\gamma}$, and if so, to compute the corresponding graph $\tilde G$, in time $O\left(\frac{n^2}{\gamma}\right )$.
Once the algorithm for the vertex-\AFMC problem computes the vertex-cut $(\tilde L'',\tilde S'',\tilde R'')$ in $\tilde G$, we can transform it into the vertex-cut $(L'',S'',R'')$ in $G$ in time $O(n)$. Lastly, the running time of the algorithm 
from \Cref{thm : vertex weighted approximately minimum containing cut} for the vertex-\AFMC problem is:

\[O\left ((|V(\tilde G)+|E(\tilde G)|)^{1+o(1)}\log W\cdot \log(2+\alpha)\right )\leq O\left(\frac{n^{2+o(1)}\cdot \log^2W}{\gamma}\right ).\]

since $\alpha=M'n\leq (W+1)n$ and $|V(\tilde G)|,|E(\tilde G)|\leq O\left(\frac{n^2}{\gamma}\right )$.
Altogether, the expected running time of the algorithm is $O\left (\left (\frac{n^{2+o(1)}\cdot \lambda}{\gamma} + n^{1+o(1)}\cdot \gamma\cdot \lambda\right ) \cdot \log^4W\right )$. 

\subsubsection{Computing a Promising Set: Proof of \Cref{thm : algorithm-for-step-1}} \label{sec : step-1}

In this subsection we prove \Cref{thm : algorithm-for-step-1}.
Recall that we are given as input a simple directed $n$-vertex graph $G$ with integral weights $1 \leq w(v) \leq W$ on its vertices $v \in V(G)$ in the adjacency-list representation, together with the adjacency-list representation of the split-graph $G'$ of $G$, integers $\lambda,\gamma \geq 1$ with $\lambda\cdot\gamma \leq \frac{n}{100}$, and a pair $x,y \in V(G)$ of distinct vertices such that $(x,y) \notin E(G)$.
Additionally, we are given as input a scale parameter $\frac{1}{100n} \leq M' \leq \frac{\wmax'(G)} 2$ that is an integral power of $2$,
and an $x$-$y$ vertex-cut $(L',S',R')$ in $G$.

\paragraph{Distinguished cut.}
For the sake of analysis, we fix  a minimum $x$-$y$ vertex-cut $(L,S,R)$ in $G$, that we refer to as the \emph{distinguished cut}, as follows.
If there exists a minimum $x$-$y$ vertex-cut $(L^*,S^*,R^*)$ with $|L^*| \leq \lambda$ such that the input vertex-cut $(L',S',R')$ is $(\lambda,\gamma)$-well-behaving for scale $(2M')$ with respect to $(L^*,S^*,R^*)$ (see \Cref{def : well-behaving cut}), then we let $(L,S,R)=(L^*,S^*,R^*)$, and we say that the distinguished cut is \emph{good}. Otherwise, we let $(L,S,R)$ be any 
minimum $x$-$y$ vertex-cut in $G$, and we say that the distinguished cut is \emph{bad}.
Since the vertices $x,y$, the parameters $\lambda,\gamma$, and the distinguished cut $(L,S,R)$ are now fixed, for simplicity, we will refer to a vertex set $A\subseteq V(G)$ that is $(\lambda,\gamma)$-promising for scale $M'$ with respect to $(G,x,y)$ and the vertex-cut $(L,S,R)$, as a \emph{promising vertex set for graph $G$ and scale $M'$}.

Recall that our algorithm is allowed to make a single query to the $(\delta,20\wmax(G))$-subgraph oracle for $G$, for $\delta=1-\frac{\log(30000\lambda\gamma \cdot \log n)}{\log n}$; observe that $n^{1-\delta}=30000\lambda\gamma \cdot \log n$.
Our goal is to compute a subset $A \subseteq V(G)$ of vertices with $x \in A$ and $y \notin A \cup N^+_G(A)$, so that, if the distinguished cut is good, and the subgraph oracle does not err in its response to the query, then $A$ is a promising vertex set for graph $G$ and scale $M'$. 

It will be more convenient for us to work with a graph $\hat G$ that is identical to $G$, but with slightly modified vertex weights. Specifically, for every vertex $v\in V(G)$, we define a new integral weight $\hat w(v)\geq 0$ as follows: if $w(v)<2M'$ then we set $\hat w(v)=0$, and otherwise we set $\hat w(v)=\floor{\frac{w(v)-M'}{M'}}$. 
Clearly, for every vertex $v\in V(G)$:

\begin{equation}\label{eq: bound on new weight}
\frac{w(v)}{M'}-2\leq  \hat w(v)\leq \frac{w(v)}{M'}.
\end{equation}

We let $\hat G$ be the graph that is identical to $G$, except that, for every vertex $v\in V(G)$, its weight in $\hat G$ is $\hat w(v)$.
We note that we do not compute the graph $\hat G$ explicitly. Instead, we can use the adjacency-list representation of the graph $G$ that is given as part of input in a natural way in order to simulate access to the adjacency-list representation of $\hat G$.
Let $\opt_{\hat{G}}$ denote the value of the minimum $x$-$y$ vertex-cut in $\hat{G}$. In the following simple observation we bound $\hat w(S)$ in terms of $\opt_{\hat{G}}$.

\begin{observation}\label{obs : value-of-distinguish-cut-in-hat-G-is-not-far-from-optimal}
$\hat{w}(S) \leq \opt_{\hat{G}} + 2n$. 
\end{observation}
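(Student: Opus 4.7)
The plan is to use the fact that $(L,S,R)$ is a minimum $x$-$y$ vertex-cut in the original graph $G$, together with the inequality \ref{eq: bound on new weight} that relates the modified weights $\hat{w}(v)$ to the original weights $w(v)$, in order to bound $\hat{w}(S)$ in terms of $\opt_{\hat G}$. The key observation is that while $(L,S,R)$ need not be optimal for $\hat{G}$, it cannot be much worse than optimal, since both cut values can be sandwiched by the corresponding $w$-values up to a small additive loss proportional to the size of the separator.

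More concretely, the first step is to let $(L^*, S^*, R^*)$ be a minimum $x$-$y$ vertex-cut in $\hat{G}$, so that $\hat{w}(S^*) = \opt_{\hat{G}}$. Since $(L^*,S^*,R^*)$ is also a valid $x$-$y$ vertex-cut in $G$ and $(L,S,R)$ is a minimum $x$-$y$ vertex-cut in $G$, we have $w(S) \leq w(S^*)$. The second step is to apply \ref{eq: bound on new weight} to each side: the upper bound $\hat{w}(v) \leq w(v)/M'$ summed over $S$ gives $\hat{w}(S) \cdot M' \leq w(S)$, while the lower bound $\hat{w}(v) \geq w(v)/M' - 2$ summed over $S^*$ gives $\hat{w}(S^*) \cdot M' \geq w(S^*) - 2M' \cdot |S^*|$.

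Chaining these inequalities yields
\[
\hat{w}(S) \cdot M' \;\leq\; w(S) \;\leq\; w(S^*) \;\leq\; \hat{w}(S^*) \cdot M' + 2M' \cdot |S^*|,
\]
and dividing through by $M' > 0$ gives $\hat{w}(S) \leq \hat{w}(S^*) + 2|S^*| \leq \opt_{\hat{G}} + 2n$, using the trivial bound $|S^*| \leq n$. There is essentially no obstacle here; the argument is a short sandwiching computation, and the main thing to double-check is just that $M'$ can be positive and possibly less than $1$ without causing issues, which is fine since the inequalities only rely on $M' > 0$ and the upper/lower bounds from \ref{eq: bound on new weight} hold in either regime.
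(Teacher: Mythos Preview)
The proposal is correct and follows essentially the same approach as the paper's proof: both take a minimum $x$-$y$ vertex-cut in $\hat G$, use the optimality of $(L,S,R)$ in $G$ to compare $w(S)$ with the $w$-value of that cut, and then sandwich via Inequality~\ref{eq: bound on new weight} to pass between $w$ and $\hat w$ with an additive loss of $2|S^*|\leq 2n$. The only cosmetic difference is that you multiply through by $M'$ before chaining, whereas the paper keeps the chain in the form $\hat w(S)\leq w(S)/M'\leq w(\hat S)/M'\leq \hat w(\hat S)+2|\hat S|$.
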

\begin{proof}
	Let $(\hat L,\hat S,\hat R)$ be a minimum $x$-$y$ vertex-cut in $\hat{G}$, so that $\hat{w}(\hat S)=\opt_{\hat{G}}$.
	Since $(L,S,R)$ is a minimum $x$-$y$ vertex-cut in $G$, $w(S) \leq w(\hat S)$ must hold. 
From 	\Cref{eq: bound on new weight} we then get that:

\[\hat{w}(S)\leq \frac{w(S)}{M'}\leq \frac{w(\hat S)}{M'}\leq \hat w(\hat S)+2|\hat S|\le \opt_{\hat{G}} + 2n.\]

\end{proof}

In the following observation we show that, if the distinguished cut $(L,S,R)$ is good, then the input vertex-cut $(L',S',R')$ is $(\lambda,\gamma)$-well-behaving for scale $4$ with respect to the distinguished cut $(L,S,R)$ in the graph $\hat{G}$.

\begin{observation}\label{obs : cut-is-well-behaving-in-graph-with-reduced-weights}
If the distinguished cut $(L,S,R)$ is good, then the vertex-cut $(L',S',R')$ is $(\lambda,\gamma)$-well-behaving for scale $4$ with respect to the distinguished cut $(L,S,R)$ in $\hat{G}$.
\end{observation}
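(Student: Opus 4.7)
The plan is to verify the three defining properties of a $(\lambda,\gamma)$-well-behaving cut for scale $4$ in $\hat G$ one by one, exploiting the fact that $\hat G$ and $G$ share the same edge set (so volumes are unchanged) and that the rescaling $\hat w(v)\approx w(v)/M'$ given by Eq.~\eqref{eq: bound on new weight} distorts weights only by an additive error of at most $2$ per vertex.

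First, Property \ref{prop : P1} transfers for free: since $V(\hat G)=V(G)$ and $E(\hat G)=E(G)$, we have $\vol^+_{\hat G}(L')=\vol^+_G(L')\leq \frac{100n^2}{\gamma}$, using the assumption that $(L',S',R')$ is $(\lambda,\gamma)$-well-behaving for scale $2M'$ in $G$.

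Second, for Property \ref{prop : P2} in $\hat G$ at scale $4$, I want $\hat w(S')\leq \hat w(S)+40n$. From the upper bound in Eq.~\eqref{eq: bound on new weight}, $\hat w(S')\leq w(S')/M'$, and from the well-behavedness of $(L',S',R')$ in $G$ at scale $2M'$, $w(S')\leq w(S)+20nM'$, so $\hat w(S')\leq w(S)/M'+20n$. From the lower bound in Eq.~\eqref{eq: bound on new weight}, $w(S)/M'\leq \hat w(S)+2|S|\leq \hat w(S)+2n$. Combining, $\hat w(S')\leq \hat w(S)+22n\leq \hat w(S)+40n$, as required.

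Third, for Property \ref{prop : P3} at scale $4$ in $\hat G$, observe that $\hat w(v)\geq 4$ forces $\lfloor (w(v)-M')/M'\rfloor\geq 4$, i.e.\ $w(v)\geq 5M'\geq 2M'$. Thus every vertex counted on the $\hat G$-side of Property \ref{prop : P3} would also be counted on the $G$-side of the analogous property at scale $2M'$, which bounds the count by $20\lambda\gamma$ by the hypothesis that $(L',S',R')$ is $(\lambda,\gamma)$-well-behaving for scale $2M'$ in $G$. There is no real obstacle here; the only thing to be careful about is the cascade of loose constants (the $2M'\mapsto 4$ rescaling, the additive $2$ per vertex in Eq.~\eqref{eq: bound on new weight}, and the $10n$-vs-$20n$ slack in the definition), but the inequalities $\lambda\gamma\leq n/100$ and $|S|\leq n$ give plenty of slack for all three checks.
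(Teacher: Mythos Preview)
Your proof is correct and follows essentially the same approach as the paper: verify Properties \ref{prop : P1}--\ref{prop : P3} in turn, using that $\hat G$ and $G$ share edges (so volumes transfer), the two-sided bound $\frac{w(v)}{M'}-2\le \hat w(v)\le \frac{w(v)}{M'}$ for Property~\ref{prop : P2}, and the implication $\hat w(v)\ge 4\Rightarrow w(v)\ge 2M'$ for Property~\ref{prop : P3}. Your intermediate constants ($22n$, $5M'$) are slightly sharper than the paper's, but the argument is identical.
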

\begin{proof}
	Recall that, if the distinguished cut $(L,S,R)$ is good, then 
	vertex-cut $(L',S',R')$ is $(\lambda,\gamma)$-well-behaving for scale $(2M')$ with respect to it in $G$. In particular, from Property \ref{prop : P1} of well-behaving cuts, $\vol^+_{\hat G}(L')=\vol^+_G(L') \leq \frac{100n^2}{\gamma}$ must hold, establishing Property \ref{prop : P1} for cut $(L',S',R')$ in $\hat G$.

	From Property \ref{prop : P2} of the cut $(L',S',R')$ in $G$, $w(S') \leq w(S) + 20nM'$ must hold. 
	From Inequality \ref{eq: bound on new weight}:

\[\hat w(S')\leq \frac{w(S')}{M'}\leq \frac{w(S)}{M'}+20n\leq \hat w(S)+2|S|+20n\leq \hat w(S)+40n,\]
		
 establishing Property \ref{prop : P2} for cut $(L',S',R')$ in $\hat G$ for scale $4$.
	
Lastly, from Property \ref{prop : P3} of the cut $(L',S',R')$ in $G$, set $(L \cup S) \setminus (L' \cup S')$ may contain at most  $20\lambda\gamma$ vertices $v\in V(G)$ with $w(v)\geq 2M'$. Note that if, for some vertex $v\in V(G)$, $\hat w(v)\geq 4$ holds, then, from the definition of the function  $\hat w$, $w(v)\geq 2M'$ must hold. Therefore, the total number of vertices $v\in V(G)$ with $\hat w(v)\geq 4$ in set  $(L \cup S) \setminus (L' \cup S')$ is bounded by  $20\lambda\gamma$ as well, establishing Property \ref{prop : P3} for cut $(L',S',R')$ in $\hat G$ for scale $4$.
\end{proof}

Next, we define the notion of a \emph{quasi-promising vertex set} in the graph $\hat G$, and show that, if $A$ is a quasi-promising vertex set in $\hat G$, and  if the distinguished cut $(L,S,R)$ is good, then $A$ is also a promising vertex set for graph $G$ and scale $M'$.

\begin{definition}[Quasi-promising vertex set]\label{def : quasi-promising}
A set $A\subseteq V(\hat G)$ of vertices is \emph{quasi-promising for graph $\hat G$}, if all of the following hold:
	
	\begin{properties}{Q}
		\item\label{prop : Q1} $\vol^+_{\hat G}(A) \leq \frac{100n^2}{\gamma}$;
	\end{properties}
	and there exists a set $\Tilde{L} \subseteq A$ of vertices, such that, if we denote by $(\Tilde{L},\Tilde{S},\Tilde{R})$ the tripartition of vertices of $\hat G$ induced by $\Tilde{L}$, then:
	\begin{properties}[1]{Q}
		\item\label{prop : Q2} $\hat w(\Tilde{S}) \leq \hat w(S) + 6\lambda\cdot \gamma$;
		\item\label{prop : Q3} $\Tilde{L} \subseteq L$; and
		\item \label{prop : Q4} $(\Tilde{L},\Tilde{S},\Tilde{R})$ is an $x$-$y$ vertex-cut in $\hat G$.	
	\end{properties}
\end{definition}

The proof of the following claim follows directly from the definitions but is somewhat technical. We defer it to Section \ref{subsec: appx: quasi to promising} of Appendix.
\begin{claim}\label{claim: from quasi to promising}
Assume that the distinguished cut $(L,S,R)$ is good, and	let $A\subseteq V(\hat G)$ be a vertex set that is quasi-promising for $\hat G$. Then $A$ is a promising vertex set for graph $G$ and scale $M'$.
\end{claim}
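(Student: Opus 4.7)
The plan is to establish each of the four properties R1--R4 in the definition of a promising vertex set, by leveraging the corresponding Q1--Q4 of quasi-promising. A key starting observation is that $\hat G$ and $G$ share the same vertex set and edge set, and differ only in the vertex weights. Consequently, for every set $X\subseteq V(G)$, $\vol_{\hat G}^+(X)=\vol_G^+(X)$, so R1 follows immediately from Q1, and the tripartition $(\tilde L,\tilde S,\tilde R)$ induced by any $\tilde L$ is the same in both graphs, so R4 follows directly from Q4. My plan is to take the same $\tilde L\subseteq A\cap L$ witnessing Q2--Q4 (using that $\tilde L\subseteq A$ by definition and $\tilde L\subseteq L$ by Q3) as the witness for R2--R4.

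For R2, the plan is to translate the $\hat w$-bound in Q2 into a $w$-bound via the pointwise inequality $M'\hat w(v)\le w(v)\le M'(\hat w(v)+2)$ obtained by a paraphrase of \Cref{eq: bound on new weight}. The key structural point is that since $\tilde L\subseteq L$, we have $\tilde S=N_G^+(\tilde L)\setminus\tilde L\subseteq (L\setminus\tilde L)\cup S$, so $\tilde S\setminus S\subseteq L\setminus\tilde L$ and therefore $|\tilde S\setminus S|\le|L|\le\lambda$. A direct computation then gives
\[
w(\tilde S)-w(S)=\sum_{v\in\tilde S\setminus S}w(v)-\sum_{v\in S\setminus\tilde S}w(v)\le M'(\hat w(\tilde S)-\hat w(S))+2M'|\tilde S\setminus S|\le M'(6\lambda\gamma+2\lambda)\le 8\lambda\gamma M',
\]
where the last step uses Q2 and $\gamma\ge 1$.

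For R3, the plan is to split $(L\cup S)\setminus(\tilde L\cup\tilde S)$ into its $L$-part, which lies in $L\setminus\tilde L$ and hence contributes at most $|L|\le\lambda$ vertices (heavy or not), and its $S$-part $S\setminus\tilde S$, for which the number of vertices of weight at least $M'$ must be bounded by $11\lambda\gamma$. I expect this second bound to be the main obstacle. The intended route is to further split the heavy vertices of $S\setminus\tilde S$ into strict-heavy ones ($\hat w(v)\ge 1$, equivalently $w(v)\ge 2M'$) and medium-weight ones ($M'\le w(v)<2M'$, so $\hat w(v)=0$), and bound each group using a combination of (i) the identity $w(\tilde S)-w(S)=w(\tilde S\setminus S)-w(S\setminus\tilde S)$ together with the minimality of $(L,S,R)$ in $G$, which gives $w(S\setminus\tilde S)\le w(\tilde S\cap L)$, and (ii) the translation between $\hat w$ and $w$ on $\tilde S\cap L\subseteq L\setminus\tilde L$ via the pointwise bounds. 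Each medium-weight vertex contributes at least $M'$ to $w(S\setminus\tilde S)$ and so is counted directly; each strict-heavy vertex contributes at least $1$ to $\hat w(S\setminus\tilde S)$ and so is controlled by an argument analogous to R2.

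Finally, assembling the four properties for the specific choice of $\tilde L$ inherited from the quasi-promising definition yields that $A$ is promising for graph $G$ and scale $M'$, as required. The hard part, as noted, is the careful bookkeeping in R3 to simultaneously account for both strict-heavy and medium-weight vertices while only using Q2, the structural fact $\tilde L\subseteq L$, and the minimality of $(L,S,R)$ in $G$; the constants $8\lambda\gamma$ and $12\lambda\gamma$ appearing in R2 and R3 (versus $6\lambda\gamma$ in Q2) are precisely designed to absorb the $2M'|\tilde S\setminus S|\le 2\lambda M'$ slack that appears when translating between $\hat w$ and $w$, together with the $\lambda$ vertices lost on the $L$-side.
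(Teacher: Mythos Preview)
Your proposal is correct and follows essentially the same approach as the paper: R1 and R4 are immediate from Q1, Q4 since $G$ and $\hat G$ share their edge set; R2 is derived exactly as you describe (and as the paper does) via $w(\tilde S)-w(S)=w(\tilde S\setminus S)-w(S\setminus\tilde S)$, the pointwise bounds between $w$ and $\hat w$, and $|\tilde S\setminus S|\le|L|\le\lambda$; and R3 uses minimality of $(L,S,R)$ together with the same weight-translation machinery.

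One minor remark on R3: your split into ``strict-heavy'' and ``medium-weight'' vertices is unnecessary. Both cases satisfy the single inequality $w(v)\ge M'\hat w(v)+M'$ whenever $w(v)\ge M'$, and this is the only pointwise fact needed. With it, the direct computation $M'\hat w(S\setminus\tilde S)+M'\cdot|\{v\in S\setminus\tilde S:w(v)\ge M'\}|\le w(S\setminus\tilde S)\le w(\tilde S\setminus S)\le M'\hat w(\tilde S\setminus S)+2M'\lambda$ immediately gives the bound $8\lambda\gamma$ on heavy vertices in $S\setminus\tilde S$. The paper obtains the same conclusion by contradiction (assuming more than $11\lambda\gamma$ heavy vertices in $S\setminus\tilde S$ and deriving $w(\tilde S)<w(S)$), which is logically equivalent to your direct argument.
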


It is now sufficient to provide an algorithm for computing a set $A\subseteq V(\hat G)$ of vertices with $x\in A$ and $y\not\in A\cup N^+_G(A)$, such that, if the distinguished cut is good and the subgraph oracle does not err, then $A$ is a quasi-promising vertex set for graph $\hat G$. We now provide  a high-level overview and intuition for our algorithm.
Consider a graph $\hat H$ obtained from $\hat{G}$ by multiplying all vertex-weights by roughly $\frac{n}{\gamma}$.
Denote by $\Delta=\frac{50n^2}{\gamma}$, and assume first that the value of the maximum $x$-$y$ flow in graph $\hat H$ (where the capacity of every vertex $v$ is its weight in $\hat H$) is significantly smaller than $\Delta$. Let $\hat H'$ be the split-graph of the graph $\hat H$, and denote $s=x^{\out}$ and $t=y^{\inn}$.
Let $\hat H''$ be the graph obtained from $\hat H'$ by adding, for every vertex $v\in V(\hat H')$ with $\deg^+_{\hat H'}(v)>0$, the edge $(v,t)$ of capacity $\deg^+_{\hat H'}(v)$ to the graph. As we show later, in this case, the value of the maximum $s$-$t$ flow in $\hat H''$ is bounded by $\Delta$.
Consider now applying Algorithm \alglocal from \Cref{thm : unit-flow} for  computing a local $s$-$t$ flow $f$ in $\hat H''$ with the parameter $h=2\lambda+4$. Since the value of the maximum $s$-$t$ flow in $\hat H''$ is less than $\Delta$, the algorithm must output a vertex set $A'\subseteq V(\hat H')$ with $s\in A'$ and $\vol^+_{\hat H'}(A')\leq \Delta$, such that the distance from $s$ to every vertex in $V(\hat H')\setminus A'$ in the residual flow network of the graph $\hat H''$ with respect to the flow $f$ is at least $h$. Let $A\subseteq V(G)$ contain every vertex $v\in V(G)\setminus\left(\set{y}\cup N^-_G(y)\right )$ with $v^{\out}\in A'$. We then show that vertex set $A$ must be quasi-promising for graph $\hat G$. Notice that the running time of the algorithm from \Cref{thm : unit-flow} is 
$\Tilde{O}(h \cdot \Delta)\leq \tilde O\left(\frac{n^2\lambda}{\gamma}\right )$. The main difficulty with using this approach directly is that the value of the maximum $x^{\out}$-$y^{\inn}$ flow in graph $\hat H$ may be large, and it may not be bounded by $O(n^2/\gamma)$. If we increase the value of the parameter $\Delta$ appropriately in order to overcome this difficulty, then the running time of the algorithm from \Cref{thm : unit-flow} will also grow, and, moreover, we may no longer be able to guarantee that the bound on $\vol^+_{\hat G}(A)$ required by Property \ref{prop : Q1} of a quasi-promising vertex set holds.

Intuitively, in order to overcome this difficulty, we would like to use the above approach with a ``warm start''. In other words, we will first compute an initial $x$-$y$ flow $f_0$ in graph $\hat G$, whose value is close to the value of the maximum $x$-$y$ flow in $\hat G$. In order to do so efficiently, we first sparsify the graph $\hat G$ while approximately preserving the value of the maximum $x$-$y$ flow.
The sparsification procedure constructs a \emph{sparsifier} for the graph $\hat G$, that consists of a sparse subgraph $\tilde G\subseteq \hat G$, and an additional collection $\tilde E$ of edges, that we refer to as \emph{shortcut edges}. Each shortcut edge $e\in \tilde E$ connects some vertex of $\hat G$ to the destination vertex $y$, and we ensure that, if the distinguished cut is good, and if the subgraph oracle does not err in its response to the query that the sparsification procedure performs, then, for each such edge $e=(v,y)\in \tilde E$, $v\not\in L$ must hold. This latter property ensures that $(L,S,R)$ remains a valid $x$-$y$ vertex-cut in both the sparsifier $\tilde G\cup \tilde E$ and in the graph $J=\hat G\cup \tilde E$, that is obtained from $\hat G$ by inserting the shortcut edges into it. Our sparsification procedure also ensures that the value of the maximum $x$-$y$ flow in the sparsifier $\tilde G\cup \tilde E$ is close to that in $\hat G$. We then use the algorithm from \Cref{thm: maxflow} in order to compute a maximum $x$-$y$ flow $f_0$ in the sparsified graph $\tilde G\cup \tilde E$; since this graph is relatively sparse, this flow can be computed efficiently. Note that $f_0$ also defines a valid $x$-$y$ flow in the graph  $J=\hat G\cup \tilde E$, and, moreover, its value is close to the value of the maximum $x$-$y$ flow in $J$.

Intuitively, this initial flow $f_0$ provides us with a ``warm start'', that allows us to carry out the local-flow based approach described above efficiently. Specifically, we let $J'$ denote the split-graph of the graph $J$, and we let $f'_0$ be the $x^{\out}$--$y^{\inn}$ flow in $J'$ of value $\val(f_0)$ that corresponds to flow $f_0$ in $J$; as we later show, the value of the flow $f'_0$ is close to the value of the maximum $x^{\out}$--$y^{\inn}$ flow in $J'$. We then let $Z$ be the residual flow network of the graph $J'$ with respect to the flow $f'_0$, and we define a graph $\hat{Z}$ that is obtained from $Z$ by multiplying all capacities by roughly $\frac{n}{\gamma}$; as we show later, the value of the maximum $x^{\out}$--$y^{\inn}$ flow in $\hat{Z}$ is bounded by $\Delta=\frac{50n^2}{\gamma}$. We then apply  the local-flow based approach described above to the graph $\hat{Z}$ instead of the split-graph $\hat H'$ of $\hat H$, eventually obtaining a set $A\subseteq V(G)$ of vertices, with the guarantee that, if the distingiushed cut is good and the subgraph oracle did not err in its response to the query, then $A$ is a quasi-promising vertex set for graph $\hat G$.

The remainder of our algorithm then consists of three steps. In the first step, we compute the sparsifier $(\tilde G,\tilde E)$ for graph $\hat G$. In the second step, we compute a maximum $x$-$y$ flow in graph $J=\hat G\cup \tilde E$, define the additional graphs $J'$ and $\hat{Z}$ that our algorithm uses, and analyse their properties. Then in the third step we use the local-flow based approach with the graph $\hat{Z}$ in order to construct the desired vertex set $A$. We now describe each of the steps in turn.

\subsubsection{Step 1: Sparsification}

The goal of this step is to compute a \emph{sparsifier} of the graph $\hat G$. Intuitively, a sparsifier is the union of a sparse subgraph  $\tilde G\subseteq \hat G$ with a collection $\tilde E$ of edges that may not lie in $\hat G$,  and that we refer to as \emph{shortcut edges}. Each shortcut edge connects some vertex of $\hat G$ to the destination vertex $y$. We would like to enure that, on the one hand, the value of the minimum $x$-$y$ vertex-cut in $\tilde G\cup \tilde E$ is close to that in $\hat G$, while, on the other hand, no vertex of $L$ serves as an endpoint of an edge in $\tilde E$. Recall that we denoted by $\opt_{\hat G}$ the value of the minimum $x$-$y$ vertex-cut in $\hat G$, and that $(L,S,R)$ is the distinguished minimum $x$-$y$ vertex-cut in $G$. We now formally define the notion of a sparsifier for $\hat G$.

\begin{definition}[A sparsifier for $\hat G$]
	A \emph{sparsifier} for the graph $\hat G$ consists of a subgraph 
$\tilde G\subseteq \hat G$ with $V(\tilde G)=V(\hat G)$ and $|E(\tilde G)|\leq \Tilde{O}\left(\left(\frac{n^2}{\gamma}+n\lambda\gamma\right) \cdot \log^2(\wmax(G))\right)$, and a collection $\tilde E$ of edges that may not lie in $\hat G$, where each edge connects a distinct vertex of $V(\hat G)$ to $y$; we refer to the edges of $\tilde E$ as \emph{shortcut edges}.
We say that $(\tilde G,\tilde E)$ is a \emph{perfect sparsifier}, if, additionally, the following properties hold:

\begin{itemize}
	\item no vertex of $L$ serves as an endpoint of an edge in $\tilde E$; and
	\item the value of the minimum $x$-$y$ vertex-cut in the graph $\tilde G\cup \tilde E$ is at least $\hat w(S)-48n$.
\end{itemize}
\end{definition}

In this subsection we provide a randomized algorithm for computing a sparsifier $(\tilde G,\tilde E)$ for $\hat G$, in time $\Tilde{O}\left(\left(\frac{n^2}{\gamma} + n\gamma\lambda\right) \cdot \log^2(\wmax(G))\right)$. The algorithm may perform a single query to the $(\delta,20\wmax(G))$-subgraph oracle for $G$. We will ensure that, if the distinguished cut $(L,S,R)$ is good and the subgraph oracle does not err, then $(\tilde G,\tilde E)$ is a perfect sparsifier for $\hat G$.

Our algorithm for constructing the sparsifier for $\hat G$ is based on  a similar procedure from \cite{CT24} (see section 7 of \cite{CT24}.)
However, our setting is somewhat simpler, allowing for a simpler algorithm and a more concise correctness proof. The main difference is that we work directly with the vertex-weighted graph $\hat G$, whereas the algorithm of \cite{CT24} was applied to an edge-capacitated graph  corresponding to the residual flow network of the split-graph of the input graph $G$, with respect to a flow that their algorithm maintains.

We start by defining a partition of the edges of $\hat G$ into four subsets $E_1,\ldots,E_4$, that will play a central role in the construction of the sparsifier.

\paragraph{Set $E_1$.} The first set of edges $E_1$ contains all edges $(u,v)\in E(\hat G)$ with $u \in L'$. Recall, from \Cref{obs : cut-is-well-behaving-in-graph-with-reduced-weights}, if the distinguished cut is good, then cut $(L',S',R')$ is $(\lambda,\gamma)$-well-behaving for scale $4$ with respect to the cut $(L,S,R)$  in $\hat G$. In particular, from Property \ref{prop : P1} of  well-behaving cuts, $|E_1|=\vol^+_{\hat G}(L')\leq  \frac{100n^2}{\gamma}$ must hold in this case.

\paragraph{Set $E_2$.} The second set of edges $E_2$ contains all edges $(u,v) \in E(\hat G)$ with $v \in R'\setminus\set{y}$ and $\hat w(v) \leq 4$. Intuitively, in any maximum $x$-$y$ flow in $\hat G$, the edges of $E_2$ may only carry at most $4n$ flow units, so deleting them from $\hat G$ may only decrease the value of the minimum $x$-$y$ vertex-cut in $\hat G$ by at most $4n$. Eventually we will not include the edges of $E_2$ in the graph $\tilde G$.

\paragraph{Set $E_3$.} The third set of edges $E_3$ contains all edges $(u,v) \in E(\hat G)$ with $v \in R'$ that do not lie in $E_2$; in other words, either $v=y$, or $\hat w(v)>4$ must hold for all such edges $(u,v)$.
The following observation will be helpful in order to sparsify the edges of $E_3$.

\begin{observation}\label{obs: few edges of E3}
	Assume that the  distinguished cut $(L,S,R)$ is good, and let $u\in V(\hat G)$ be any vertex with $|\delta^+_{\hat G}(u)\cap E_3|>20\lambda\gamma+1$. Then $u\not\in L$.
\end{observation}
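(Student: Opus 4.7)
The plan is to prove this by contradiction using Property~\ref{prop : P3} of a well-behaving cut. Suppose $u \in L$. I will bound $|\delta^+_{\hat G}(u) \cap E_3|$ by $20\lambda\gamma + 1$, which contradicts the hypothesis.

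First, I will use the fact that $(L,S,R)$ is a valid vertex-cut in $G$ (and hence in $\hat G$, since these graphs differ only in weights). This means no edge connects $L$ to $R$, so for every edge $(u,v) \in E(\hat G)$ with $u \in L$, the vertex $v$ must lie in $L \cup S$. In particular, for every edge $(u,v) \in \delta^+_{\hat G}(u) \cap E_3$, we have $v \in L \cup S$. Next, by the definition of $E_3$, we also have $v \in R'$, and either $v = y$ or $\hat w(v) > 4$.

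I will then split these edges into two groups. The first group consists of the edge (if any) whose endpoint is $y$; this contributes at most $1$ to $|\delta^+_{\hat G}(u) \cap E_3|$. The second group consists of edges $(u,v)$ with $v \in L \cup S$, $v \in R'$ (so $v \notin L' \cup S'$), and $\hat w(v) > 4$, i.e., $\hat w(v) \geq 4$ (recall $\hat w$ is integral). Every such vertex $v$ lies in the set $(L \cup S) \setminus (L' \cup S')$ and has weight at least $4$ in $\hat G$. Since we have assumed the distinguished cut is good, \Cref{obs : cut-is-well-behaving-in-graph-with-reduced-weights} guarantees that $(L',S',R')$ is $(\lambda,\gamma)$-well-behaving for scale $4$ with respect to $(L,S,R)$ in $\hat G$. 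Invoking Property~\ref{prop : P3} of well-behaving cuts at scale $M' = 4$, the number of vertices in $(L \cup S) \setminus (L' \cup S')$ with $\hat w$-weight at least $4$ is at most $20\lambda\gamma$. Since all the distinct endpoints $v$ in the second group are such vertices, this group contributes at most $20\lambda\gamma$ edges.

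Combining the two bounds yields $|\delta^+_{\hat G}(u) \cap E_3| \leq 20\lambda\gamma + 1$, contradicting the hypothesis on $u$. The proof is essentially a direct unpacking of definitions, so I do not anticipate any real obstacle; the only thing to be careful about is not double-counting the edge to $y$ and correctly using integrality of $\hat w$ to convert ``$\hat w(v) > 4$'' into ``$\hat w(v) \geq 4$'' so that Property~\ref{prop : P3} at scale $4$ applies.
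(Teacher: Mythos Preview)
Your proof is correct and takes essentially the same approach as the paper. The only cosmetic difference is framing: the paper argues directly that, since at most $20\lambda\gamma$ of the heavy endpoints in $R'$ can lie in $L\cup S$, some out-neighbor of $u$ must lie in $R$ and hence $u\notin L$; you assume $u\in L$ and derive the same bound by contradiction. (Your integrality remark is harmless but unnecessary, since $\hat w(v)>4$ already implies $\hat w(v)\geq 4$.)
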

\begin{proof}
	Assume that the distinguished cut $(L,S,R)$ is good. Then, from 
	\Cref{obs : cut-is-well-behaving-in-graph-with-reduced-weights},
	the vertex-cut $(L',S',R')$ is $(\lambda,\gamma)$-well-behaving for scale $4$ with respect to the distinguished cut $(L,S,R)$ in $\hat{G}$, and, from Property   \ref{prop : P3}, set $(L \cup S) \setminus (L' \cup S')$ contains at most $20\lambda\gamma$ vertices $a$ with $\hat w(a)\geq 4$.
	
	Consider now some vertex $u\in V(\hat G)$ with $|\delta^+_{\hat G}(u)\cap E_3|>20\lambda\gamma+1$. Denote by: 
	
	$$U'=\set{v\neq y\mid (u,v)\in \delta^+_{\hat G}(u)\cap E_3}.$$
	
	 Recall that, by the definition of the set $E_3$ of edges, $U'\subseteq R'$ must hold, and every vertex $v\in U'$ has $\hat w(v)\geq 4$. Since all but at most $20\lambda\gamma$ vertices of $\set{a\in L\cup S\mid \hat w(a)\geq 4}$ lie in $L'\cup S'$, we get that $U'$ may contain at most $20\lambda\gamma$ vertices of $L\cup S$. Therefore, some vertex $v\in U'$ must lie in $R$. Since the edge $(u,v)$ lies in $\hat G$ and hence in $G$, we get that $u\not\in L$ must hold.
\end{proof}

\paragraph{Set $E_4$.} The fourth and the last set $E_4$ contains all remaining edges of $\hat G$, so $E_4=E(\hat G)\setminus\left(E_1\cup E_2\cup E_3\right )$. It is immediate to verify that, if $e=(u,v)\in E_4$, then $u\in S'\cup R'$ and $v\in L'\cup S'$ must hold. We will not include the edges of $E_4$ in the sparsifier $\tilde G$. Intuitively, if we consider any maximum $x$-$y$ flow in $\hat G$ and its flow-path decomposition $\pset$, then every flow-path $P\in \pset$ that contains an edge of $E_4$ must also contain at least two vertices of $S'$. We will use standard arguments to show that the deletion of such edges from $\hat G$ may not decrease the value of the maximum $x$-$y$ flow, and hence of the minimum $x$-$y$ vertex-cut, by too much.

We are now ready to describe the algorithm for constructing a sparsifier for the graph $\hat G$. We start by checking, in time $O\left(\frac{n^2}{\gamma}\right )$, whether $\vol^+_{\hat G}(L') \leq \frac{100n^2}{\gamma}$ holds. If this is not the case, then we terminate the algorithm and return the sparsifier $(\tilde G,\tilde E)$ for $\hat G$ with $V(\tilde G)=V(\hat G)$ and $E(\tilde G)=\tilde E=\emptyset$. In this case, we say that the algorithm \emph{terminated early}. Recall that, if the distinguished cut is good, then, from \Cref{obs : cut-is-well-behaving-in-graph-with-reduced-weights}, cut $(L',S',R')$ is $(\lambda,\gamma)$-well-behaving for scale $4$ with respect to the cut $(L,S,R)$  in $\hat G$, and, from Property \ref{prop : P1} of  well-behaving cuts, $\vol^+_{\hat G}(L')\leq  \frac{100n^2}{\gamma}$ must hold. Therefore, the algorithm may only terminate early at this step if the distinguished cut is bad.  We assume from now on that $\vol^+_{\hat G}(L')\leq  \frac{100n^2}{\gamma}$, and we construct the set $E_1=\delta^+_{\hat G}(L')$ of edges, in time $O\left(\frac{n^2}{\gamma}\right )$. Notice that $|E_1|\leq \frac{100n^2}{\gamma}$ must hold.

Next, we denote by $U=\set{v\in V(\hat G)\mid \hat w(v)\geq 4}$. 
 Recall that our algorithm is allowed to make one query to the
$(\delta,20\wmax(G))$-subgraph oracle for $G$ (see Definition \ref{def: subgraph oracle}), for $\delta=1-\frac{\log(30000\lambda\gamma \cdot \log n)}{\log n}$; observe that $n^{1-\delta}=30000\lambda\gamma \cdot \log n$. Recall also that graph $\hat G$ is identical to the graph $G$, except for the weights of its vertices, so querying the $(\delta,20\wmax(G))$-subgraph oracle for $G$ is equivalent to querying the $(\delta,20\wmax(G))$-subgraph oracle for $\hat G$. We query the $(\delta,20\wmax(G))$-subgraph oracle for $G$ with the set $Z=U\cap R'$ of vertices, and denote by 
 $\left(Y^h,Y^\ell\right)$ and $E' \subseteq E(\hat G)$ the response of the oracle to the query.
Recall the subgraph oracle guarantees that $|E'| \leq n^{2-\delta} \cdot \log^2(20\wmax(G)) \leq  \Tilde{O}\left(\lambda\gamma n \cdot \log^2(\wmax(G))\right)$.
Moreover, if the oracle does not err, then $E' = E_{\hat G}\left(Y^\ell,Z\right) = E_{\hat G}\left(Y^\ell,R' \cap U\right)$ must hold, and, for every vertex $v \in Y^h$:

\[ \left |\delta^+_{\hat G}(v)\cap E_3\right |=\left|N^+_{\hat G}(v) \cap \left(R' \cap U\right)\right|=|N^+_{\hat G}(v)\cap Z| \geq \frac{n^{1-\delta}}{1000\log n} \geq 30\lambda\gamma.\]

The following corollary then immediately follows from \Cref{obs: few edges of E3}.

\begin{corollary}\label{cor: valid shortcuts}
	Assume that the distinguished cut is good and that the subgraph oracle did not err in its response to the query. Then, for every vertex $v\in Y^h$, $v\not\in L$ must hold.
\end{corollary}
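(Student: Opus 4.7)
The plan is to show that the oracle's ``high-degree'' side $Y^h$ corresponds exactly to vertices with many outgoing $E_3$-edges, and then apply Observation \ref{obs: few edges of E3} to conclude each such vertex is outside $L$. Fix an arbitrary $v \in Y^h$. By the definition of the $(\delta, 20\wmax(G))$-subgraph oracle (\Cref{def: subgraph oracle}) and the assumption that the oracle did not err, we have
\[
|N^+_{\hat G}(v) \cap Z| \geq \frac{n^{1-\delta}}{1000 \log n}.
\]
Plugging in $n^{1-\delta} = 30000\lambda\gamma \log n$ from the statement of \Cref{thm : algorithm-for-step-1}, this yields $|N^+_{\hat G}(v) \cap Z| \geq 30\lambda\gamma$.

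Next, I would translate this lower bound on neighbors in $Z = U \cap R'$ into a lower bound on $|\delta^+_{\hat G}(v) \cap E_3|$. Since $G$ is simple, each $u \in N^+_{\hat G}(v)$ corresponds to a unique outgoing edge $(v,u) \in E(\hat G)$. For any $u \in N^+_{\hat G}(v) \cap Z$, we have $u \in R'$ and $u \in U$, so the edge $(v,u)$ goes into $R'$ and its endpoint is either $y$ or has weight large enough to place $(v,u)$ outside $E_2$ and therefore inside $E_3$. This establishes the equality
\[
|\delta^+_{\hat G}(v) \cap E_3| \;=\; |N^+_{\hat G}(v) \cap Z|
\]
already asserted in the text preceding the corollary statement.

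Combining the two displayed inequalities gives $|\delta^+_{\hat G}(v) \cap E_3| \geq 30\lambda\gamma$. Since $\lambda\gamma \geq 1$ by assumption, we have $30\lambda\gamma > 20\lambda\gamma + 1$, so the hypothesis of Observation \ref{obs: few edges of E3} is satisfied with $u = v$. As the distinguished cut is good by assumption, that observation immediately yields $v \notin L$, completing the argument.

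There is essentially no technical obstacle here: the entire content of the corollary is a single substitution of parameters (using $n^{1-\delta} = 30000\lambda\gamma \log n$) followed by an appeal to Observation \ref{obs: few edges of E3}. The only minor care needed is in the bookkeeping that identifies out-neighbors in $Z$ with edges in $\delta^+_{\hat G}(v) \cap E_3$, which is justified because $Z = R' \cap U$ is precisely tuned so that edges from $v$ into $Z$ are exactly the outgoing edges of $v$ landing in $R'$ that survive the removal of the ``light'' class $E_2$.
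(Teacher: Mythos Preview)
Your proof is correct and follows essentially the same approach as the paper: use the oracle guarantee to lower-bound $|N^+_{\hat G}(v)\cap Z|$ by $30\lambda\gamma$, identify this with $|\delta^+_{\hat G}(v)\cap E_3|$, and invoke Observation~\ref{obs: few edges of E3}. The paper states this argument in the paragraph immediately preceding the corollary and simply declares that the corollary ``immediately follows,'' so your writeup is a faithful expansion of that one-line justification.
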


We construct the sparsifier $(\tilde G,\tilde E)$ of $\hat G$ as follows. We let $\tilde E=\set{(v,y)\mid v\in Y^h}$ and $V(\tilde G)=V(\hat G)$. We then let the set $E(\tilde G)$ of edges of $\tilde G$ contain all edges $e=(u,v)\in E'\cup E_1$ for which either $\hat w(v)\ge 4$ or $v=y$ holds.
From the above, discussion, $|E(\tilde G)|\leq |E_1|+|E'|\leq \Tilde{O}\left(\left(\frac{n^2}{\gamma}+n\lambda\gamma\right) \cdot \log^2(\wmax(G))\right)$,
and it is immediate to verify that the time required to construct the sparsifier, excluding the time that the subgraph oracle takes to compute its response to the query, is bounded by $\Tilde{O}\left(\left(\frac{n^2}{\gamma}+n\lambda\gamma\right) \cdot \log^2(\wmax(G))\right)$. It is immediate to verify that $(\tilde G,\tilde E)$ is indeed a valid sparsifier for $\hat G$.

It now remains to show that, if the distinguished cut is good, and if the subgraph oracle did not err in its response of the query, then $(\tilde G,\tilde E)$ is a perfect sparsifier for $\hat G$. We establish this in the following claim.

\begin{claim}\label{claim: perfect sparsifier}
	Assume that the distinguished cut $(L,S,R)$ is good and that the 
subgraph oracle did not err in its response of the query. Then 
$(\tilde G,\tilde E)$ is a perfect sparsifier for $\hat G$.
\end{claim}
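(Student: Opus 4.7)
The two properties of a perfect sparsifier must be established. The first property, that no vertex of $L$ is an endpoint of $\tilde E$, is immediate: every edge of $\tilde E$ has the form $(v,y)$ with $v \in Y^h$, so its endpoints lie in $Y^h \cup \{y\}$; by \Cref{cor: valid shortcuts} (applicable by the hypotheses), $Y^h \cap L = \emptyset$, and $y \in R$ so $y \notin L$.

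The main task is to show that the minimum $x$-$y$ vertex-cut in $\tilde G \cup \tilde E$ has value at least $\hat w(S) - 48n$. My plan is to appeal to Max-Flow/Min-Cut duality and construct an $x$-$y$ flow in $\tilde G \cup \tilde E$ of value at least $\hat w(S) - 48n$. I would start with a maximum $x$-$y$ flow $f$ in $\hat G$, for which $\val(f) = \opt_{\hat G} \geq \hat w(S) - 2n$ by \Cref{obs : value-of-distinguish-cut-in-hat-G-is-not-far-from-optimal}, and take a flow-path decomposition $\pset$ of $f$. I would then drop or reroute each path in $\pset$: (i) drop every path using an $E_4$-edge; (ii) drop every path visiting a vertex of $V_{\mathrm{low}} \setminus \{y\}$, where $V_{\mathrm{low}} := \{v : \hat w(v) \leq 4\}$; (iii) for each remaining path, reroute at the first vertex $u^* \in Y^h$ on the path, if one exists, via the shortcut $(u^*, y) \in \tilde E$.

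The key bounds are the following. The flow dropped in (i) is at most $42n$: the inequality $\val(f) + F_4 \leq \hat w(S')$ holds (since every flow path visits $S'$ at least once and any $E_4$-path crosses $(L', S', R')$ in the ``backwards'' direction and hence visits $S'$ at least twice), and combining this with $\hat w(S') \leq \hat w(S) + 40n$ from \Cref{obs : cut-is-well-behaving-in-graph-with-reduced-weights} (at scale $4$) yields $F_4 \leq 42n$. The flow dropped in (ii) is at most $\sum_{v \in V_{\mathrm{low}} \setminus \{y\}} \hat w(v) \leq 4n$, which simultaneously subsumes every $E_2$-using path, since $E_2$-heads lie in $V_{\mathrm{low}} \cap R' \setminus \{y\}$. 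For the remaining paths I would verify that every edge lies in $\tilde G \cup \tilde E$: each surviving $E_1$-edge has head either $y$ or with $\hat w \geq 5$ and hence is in $\tilde G$ by construction; each surviving $E_3$-edge with non-$y$ head satisfies $\hat w > 4$ (so its head lies in $Z$) and, after the rerouting in (iii), its tail lies in $Y^\ell$, so it belongs to $E' \subseteq \tilde G$; $E_3$-edges to $y$ with tail in $L'$ lie in $E_1 \subseteq \tilde G$, and $E_3$-edges to $y$ with tail in $Y^h$ are replaced by shortcuts. Summing the losses yields a flow in $\tilde G \cup \tilde E$ of value at least $\opt_{\hat G} - 46n \geq \hat w(S) - 48n$.

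The hard part will be handling edges of the form $(u, y)$ with $u \in Y^\ell \cap (S' \cup R')$: these are in $E'$, and hence in $\tilde G$, only when $\hat w(y) \geq 4$ (so that $y \in Z$). When $\hat w(y) < 4$, a surviving path that avoids $Y^h$ and ends with such an edge cannot be preserved directly. I expect this technicality to be resolved by a minor adjustment, e.g., enlarging the oracle query set from $Z = U \cap R'$ to $Z \cup \{y\}$ (which preserves all asymptotic size bounds on $E'$ used to qualify as a sparsifier), or by a separate bookkeeping argument showing that the flow carried by such problematic paths fits within the $2n$ slack between $46n$ and the target $48n$.
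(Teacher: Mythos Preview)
Your approach is essentially the paper's: both take a max-flow decomposition in $\hat G$, discard paths that cross $(L',S',R')$ backwards (the $E_4$-using paths, bounded via the ``each such path hits $S'$ twice'' inequality $\val(f)+F_4\le \hat w(S')$) and paths through low-weight intermediate vertices (bounded by $4n$), and reroute each surviving path at its first $Y^h$-vertex using the shortcut in $\tilde E$. The arithmetic is identical---the paper packages the same computation as the intermediate bound $\opt_J\ge 2\opt_{\hat G}-\hat w(S')-4n$ (its Claim~\ref{claim: perfect sparsifier inner}) and then substitutes, whereas you substitute first to get $46n$; both land on $\hat w(S)-48n$. One small slip: in your step (ii) you should exclude $x$ as well as $y$ from $V_{\mathrm{low}}$, since the source's weight does not cap the flow through it; this is harmless once corrected.

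The edge case you flag is genuine and is in fact glossed over by the paper's own proof. The paper asserts that $J=\tilde G\cup\tilde E$ equals $\hat G$ with the three deletion sets $\edel_1,\edel_2,\edel_3$ removed and $\tilde E$ added, but this fails exactly for edges $(u,y)$ with $u\in Y^\ell\cap(S'\cup R')$ when $\hat w(y)<4$: such an edge lies in none of the $\edel_i$ (all three exclude $v=y$ or force $v\in L'\cup S'$), yet it is not in $E'\cup E_1\cup\tilde E$ either (not in $E_1$ since $u\notin L'$, not in $E'=E_{\hat G}(Y^\ell,Z)$ since $y\notin U$, not in $\tilde E$ since $u\notin Y^h$). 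Your fix (a)---querying the oracle with $Z\cup\{y\}$ instead of $Z$---is the clean repair and costs nothing asymptotically; an equivalent patch is simply to add all of $\delta^-_{\hat G}(y)$ (at most $n$ edges) to $\tilde G$. Your fix (b) does not go through as stated: the total flow on these problematic last edges is not a priori bounded by $2n$.
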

\begin{proof}
	Assume that the distinguished cut $(L,S,R)$ is good and that the 
	subgraph oracle did not err in its response of the query. We have already shown that $(\tilde G,\tilde E)$ is a valid sparsifier for $\hat G$. Moreover, from \Cref{cor: valid shortcuts}, no vertex of $L$ may lie in $Y^h$, and so no vertex of $L$ may serve as an endpoint of an edge in $\tilde E$. 
	For convenience, we denote $J=\tilde G\cup \tilde E$, and we denote by $\opt_{J}$ the value of the minimum $x$-$y$ vertex-cut in  graph $J$. It now remains to show that $\opt_{J}\geq \hat w(S)-48n$. Recall that we denoted by $\opt_{\hat{G}}$ the value of the minimum $x$-$y$ vertex-cut in $\hat G$.
	We use the following claim in order to complete the proof.
	
	\begin{claim}\label{claim: perfect sparsifier inner}
		Assume that the distinguished cut $(L,S,R)$ is good and that the 
		subgraph oracle did not err in its response of the query. Then  $\opt_{J}\geq 2\opt_{\hat G}-\hat w(S')-4n$.
	\end{claim}
	
	We provide the proof of \Cref{claim: perfect sparsifier inner} below, after we complete the proof of \Cref{claim: perfect sparsifier} using it.
	 Recall that, since we have assumed that the distinguished cut $(L,S,R)$ is good, from \Cref{obs : cut-is-well-behaving-in-graph-with-reduced-weights}, the vertex-cut $(L',S',R')$ is $(\lambda,\gamma)$-well-behaving for scale $4$ with respect to the distinguished cut $(L,S,R)$ in $\hat{G}$. From Property \ref{prop : P2} of well-behaving cuts, $\hat w(S') \leq \hat w(S) + 40n$ must then hold. Moreover, from \Cref{obs : value-of-distinguish-cut-in-hat-G-is-not-far-from-optimal}
	 $\hat{w}(S) \leq \opt_{\hat{G}} + 2n$. Altogether, we get that:
	
	\[\begin{split}
	\opt_{J}&\geq 2\opt_{\hat G}-\hat w(S')-4n\\
	&\geq 2\opt_{\hat G}-\hat w(S) - 44n\\
	&\geq \hat w(S)-48n.
	\end{split}
	\]

In order to complete the proof of \Cref{claim: perfect sparsifier} it is now enough to prove \Cref{claim: perfect sparsifier inner}, which we do next.

\begin{proofof}{\Cref{claim: perfect sparsifier inner}}
Let $\edel_1$ denote the set of all edges $e=(u,v)\in E(\hat G)$, such that $v\neq y$ and $\hat w(v)\leq 4$ holds.
Let $\edel_2$ denote the set of all edges 
$e\in E_4\setminus \edel_1$; note that, equivalently, set $\edel_2$ contains all edges $e=(u,v)\in E(\hat G)\setminus \edel_1$ with $u\in S'\cup R'$ and $v\in L'\cup S'$. Lastly, let $\edel_3$ denote the set of all edges $e=(u,v)\in E(\hat G)\setminus \left(\edel_1\cup \edel_2\right )$ with $u\in Y^h$ and $v\neq y$. It is easy to verify that $J$ is precisely the graph that is obtained from $\hat G$ by deleting all edges of $\edel_1\cup\edel_2\cup \edel_3$ from it, and inserting the edges of $\tilde E$ into it.

Consider now a maximum $x$-$y$ flow $f$ in $\hat G$ that obeys the vertex capacities defined by their weights $\hat w(v)$; from the max-flow / min-cut theorem, $\val(f)=\opt_{\hat G}$. Let $\pset$ be the flow-path decomposition of  $f$. We partition the set $\pset$ into four subsets: set $\pset_1$ containing all paths $P\in \pset$ with $E(P)\cap \edel_1\neq \emptyset$; set $\pset_2$ containing all paths $P\in \pset\setminus \pset_1$ with $E(P)\cap \edel_2\neq \emptyset$;  set $\pset_3$ containing all paths $P\in \pset\setminus \left(\pset_1\cup \pset_2\right )$ with $E(P)\cap \edel_3\neq \emptyset$; and set $\pset_4$ containing all remaining paths of $\pset$.

It is immediate to verify that every path $P\in \pset_4$ is contained in $J$. Consider now some path $P\in \pset_3$, and let $u(P)$ be the first vertex of $P$ that lies in $Y^h$; such a vertex must exist by the definition of the path set $\pset_3$. Recall that edge $(u(P),y)$ lies in $\tilde E$ and hence in $J$, and, from the definition of the set $\pset_3$ of paths, the subpath of $P$ from $x$ to $u(P)$ is contained in $J$. We denote by $P'$ the path obtained by concantenating the subpath of $P$ from $x$ to $u(P)$ with the edge $(u(P),y)$. From our discussion, $P'\subseteq J$.

Consider now the flow $f'$ in $J$ that is defined as follows: for every path $P\in \pset_4$, we send $f(P)$ flow units on $P$, and, for every path $P\in \pset_3$, we send $f(P)$ flow units on $P'$. It is immediate to verify that $f'$ is a valid $x$-$y$ flow in $J$, and, moreover:

\[\val(f')=\sum_{P\in \pset_3\cup \pset_4}f(P)=\val(f)-\sum_{P\in \pset_1\cup \pset_2}f(P)=\opt_{\hat G}-\sum_{P\in \pset_1\cup \pset_2}f(P).\]

Note that every path $P\in \pset_1$ contains some vertex $v\in V(\hat G)\setminus\set{x,y}$ with $\hat w(v)\leq 4$ as an inner vertex. Therefore, $\sum_{P\in \pset_1}f(P)\leq 4n$ must hold.

Lastly, we show that $\sum_{P\in \pset_2}f(P)\leq \hat w(S')-\opt_{\hat G}$. Consider any path $P\in \pset_2$. Recall that $P$ connects $x$ to $y$ in $\hat G$, and that $(L',S',R')$ is an $x$-$y$ vertex-cut in $\hat G$. Moreover, from the definition of the path set $\pset_2$, there must be an edge $e=(u,v)\in \edel_2$ on path $P$, so that $u\in S'\cup R'$ and $v\in L'\cup S'$ holds. We claim that $P$ must contain at least two vertices of $S'$. Indeed, consider the subpath $P_1$ of $P$ from $x$ to $u$ and the subpath $P_2$ of $P$ from $v$ to $y$. Since $u\in  S'\cup R'$, and since $(L',S',R')$ is a valid vertex-cut in $\hat G$ with $x\in L'$, path $P_1$ must contain a vertex of $S'$. Since $v\in S'\cup L'$ and $y\in R'$, it must be the case that some vertex of $S'$ lies on $P_2$. Since the paths $P_1,P_2$ are disjoint in their vertices, $P$ must contain at least two vertices of $S'$. We conclude that every path $P\in \pset_2$ must contain at least two vertices of $S'$. 
Since vertex set $S'$ separates $x$ from $y$ in $\hat G$, we get that:

\[\hat w(S')\geq \sum_{P\in \pset}f(P)+\sum_{P\in \pset_2}f(P)=\opt_{\hat G}+\sum_{P\in \pset_2}f(P).\]

We conclude that $\sum_{P\in \pset_2}f(P)\leq \hat w(S')-\opt_{\hat G}$, and, altogether:

\[\val(f')=\opt_{\hat G}-\sum_{P\in \pset_1}f(P)-\sum_{P\in \pset_2}f(P)\geq 2\opt_{\hat G}-\hat w(S')-4n.\]

From the max-flow/min-cut theorem, the value of the minimum $x$-$y$ vertex-cut in $J$ is at least $2\opt_{\hat G}-\hat w(S')-4n$.
\end{proofof}
\end{proof}

\paragraph{Condition (C).}
In the remainder of the algorithm, for brevity, we say that Condition (C) holds, if the distinguished cut is good and the subgraph oracle did not err in its response to the query.
Recall that, if the distinguished cut $(L,S,R)$ is good, then $|L| \leq \lambda$, and the input vertex-cut $(L',S',R')$ is $(\lambda,\gamma)$-well-behaving for scale $(2M')$ with respect to $(L,S,R)$ in $G$, and, from \Cref{obs : cut-is-well-behaving-in-graph-with-reduced-weights}, it is also  $(\lambda,\gamma)$-well-behaving for scale $4$ with respect to   $(L,S,R)$ in $\hat{G}$.
From \Cref{claim: perfect sparsifier}, if Condition (C) holds, then
$(\tilde G,\tilde E)$ is a perfect sparsifier for $\hat G$.

\subsubsection{Step 2: Initial Flow and Auxiliary Graphs}

In this step, we compute an initial flow $f_0$ in graph $\hat G\cup \tilde E$, and define a series of auxiliary graphs, that will play a key role in computing the set $A$ of vertices in the third and final step of the algorithm.

We start by applying the algorithm from \Cref{thm : computing-flow-with-few-edges} to the graph $\tilde G\cup \tilde E$, with the pair $x,y$ of its vertices.
Recall that the expected running time of the algorithm is: 

\[O\left(\left(|E(\tilde G)|+|\tilde E|\right )^{1+o(1)}\cdot\log(\wmax(G))\right )
\leq O\left(\left(\frac{n^{2+o(1)}}{\gamma}+n^{1+o(1)}\cdot \lambda\gamma\right) \cdot \log^{3+o(1)}(\wmax(G))\right)
.\] 

The algorithm computes  a maximum $x$-$y$ flow $\tilde G\cup \tilde E$, that we denote by $f_0$, with the following properties: the flow $f_0$ obeys the vertex capacities (defined by their weights in $\tilde G$); it is integral, and, moreover, the total number of edges $e\in E(\tilde G)\cup \tilde E$ with $f_0(e)>0$ is at most $4n$. We assume without loss of generality that the algorithm also produces a list $E^f\subseteq E(\tilde G)\cup \tilde E$  of all edges $e$ with $f_0(e)>0$.

For convenience, we denote by $J=\hat G\cup \tilde E$. Clearly, $\tilde G\cup \tilde E\subseteq J$, so we can also view $f_0$ as a valid $x$-$y$ flow in $J$, which we do from now on. We also denote by $\opt_J$ the value of the minimum $x$-$y$ vertex-cut in $J$. We will use the following simple observation.

\begin{observation}\label{obs: flow in $J$}
	Assume that Condition (C) holds. Then $(L,S,R)$ is a valid $x$-$y$ vertex-cut in graph $J$ and, moreover, $\val(f_0)\geq \hat w(S)-48n$.
\end{observation}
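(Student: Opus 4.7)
The plan is to derive both parts of the observation directly from the fact that, under Condition (C), the pair $(\tilde G,\tilde E)$ is a perfect sparsifier for $\hat G$, which is exactly what \Cref{claim: perfect sparsifier} guarantees. So the first step in the proof is to invoke \Cref{claim: perfect sparsifier} to conclude that (i) no vertex of $L$ is an endpoint of any edge in $\tilde E$, and (ii) the value of the minimum $x$-$y$ vertex-cut in $\tilde G \cup \tilde E$ is at least $\hat w(S)-48n$.

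For the first assertion, I would argue as follows. Since $(L,S,R)$ is a minimum $x$-$y$ vertex-cut in $G$, it is also a valid $x$-$y$ vertex-cut in $\hat G$, because $\hat G$ is obtained from $G$ only by modifying vertex weights and leaves $V(G)$ and $E(G)$ unchanged. To conclude that $(L,S,R)$ remains a valid $x$-$y$ vertex-cut in $J = \hat G \cup \tilde E$, it suffices to rule out the existence of an edge of $\tilde E$ connecting a vertex of $L$ to a vertex of $R$. By definition of the sparsifier, every edge of $\tilde E$ is of the form $(v,y)$ for some $v \in V(\hat G)$, and since Condition (C) implies that no such $v$ lies in $L$ (by the first property of a perfect sparsifier), no edge of $\tilde E$ has an endpoint in $L$. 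Combined with the fact that $(L,S,R)$ is a cut in $\hat G$, this yields that $(L,S,R)$ is a valid $x$-$y$ vertex-cut in $J$.

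For the second assertion, I would observe that $f_0$ was computed by applying the algorithm from \Cref{thm : computing-flow-with-few-edges} to the graph $\tilde G \cup \tilde E$, so $f_0$ is a \emph{maximum} $x$-$y$ flow in $\tilde G \cup \tilde E$ that obeys the vertex capacities given by the weights $\hat w$. By the Max-Flow/Min-Cut theorem, $\val(f_0)$ is equal to the value of the minimum $x$-$y$ vertex-cut in $\tilde G \cup \tilde E$, which by the second property of a perfect sparsifier is at least $\hat w(S)-48n$. This gives $\val(f_0)\geq \hat w(S) - 48n$, as required.

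There is essentially no obstacle in this proof: once \Cref{claim: perfect sparsifier} is available, both conclusions are immediate unpackings of the definition of a perfect sparsifier together with the Max-Flow/Min-Cut theorem. The only subtlety to keep an eye on is that the flow $f_0$, though originally computed in $\tilde G \cup \tilde E$, is viewed as a flow in the larger graph $J = \hat G \cup \tilde E$; this viewpoint is valid because $\tilde G \subseteq \hat G$ implies $\tilde G \cup \tilde E \subseteq J$, and the vertex capacities $\hat w$ are the same in both graphs.
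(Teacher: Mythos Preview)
Your proposal is correct and follows essentially the same approach as the paper: invoke \Cref{claim: perfect sparsifier} to conclude that $(\tilde G,\tilde E)$ is a perfect sparsifier, then use the first property of a perfect sparsifier to show $(L,S,R)$ remains a valid cut in $J=\hat G\cup\tilde E$, and the second property together with Max-Flow/Min-Cut to lower-bound $\val(f_0)$.
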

\begin{proof}
	Assume that Condition (C) holds. Then,
	from \Cref{claim: perfect sparsifier},  $(\tilde G,\tilde E)$ is a perfect sparsifier for $\tilde G$. In particular, 
	no vertex of $L$ serves as an endpoint of an edge in $\tilde E$, and
	 the value of the minimum $x$-$y$ vertex-cut in the graph $\tilde G\cup \tilde E$ is at least $\hat w(S)-48n$. Recall that $(L,S,R)$ is a valid $x$-$y$ vertex-cut in $G$, and hence in $\hat G$. Since $J=\hat G\cup \tilde E$, and since no vertex of $L$ serves as an endpoint of an edge in $\tilde E$, we get that $(L,S,R)$ is a valid $x$-$y$ vertex-cut in $J$.
	 
	 Since the value of the minimum $x$-$y$ vertex-cut in the graph $\tilde G\cup \tilde E$ is at least $\hat w(S)-48n$,
	 from the max-flow/min-cut theorem, the value of the maximum $x$-$y$ flow in 
	 $\tilde G\cup \tilde E$ is at least $\hat w(S)-48n$, and so $\val(f_0)\geq \hat w(S)-48n$.
\end{proof}

\paragraph{Graph $J'$ and flow $f'_0$.}
We let $J'$ be the split-graph corresponding to the graph $J$. Recall that we denoted by $G'$ the split-graph of the input graph $G$. Note that graph $J'$ can be equivalently obtained from the graph $G'$ as follows: first, for every vertex $v\in V(G)$, we change the capacity of the corresponding special edge $(v^{\inn},v^{\out})$ from $w(v)$ to $\hat w(v)$. Next, for every regular edge $(v^{\out},u^{\inn})\in E(G')$, we change the capacity of the edge from $\wmax(G)$ to $\wmax(J)=\wmax(\hat G)$. Lastly, for every edge $e=(a,y)\in \tilde E$, we insert the edge $(a^{\out},y^{\inn})$ of capacity $\wmax(\hat G)$ into the graph.

Let $V^{\inn}=\set{v^{\inn}\mid v\in V(G)}$ and $V^{\out}=\set{v^{\out}\mid v\in V(G)}$. Notice that $V(J')=V^{\inn}\cup V^{\out}$, and that every vertex $v^{\inn}\in V^{\inn}$ has exactly one outgoing edge in $J'$ -- the special edge $(v^{\inn},v^{\out})$. For every vertex $v^{\out}\in V^{\out}$, every edge leaving $v^{\out}$ in $J'$ connects it to some vertex $u^{\inn}\in V^{\inn}$, and, unless $u=y$, this edge must also lie in $G'$. In particular, every edge of $J'$ connects a vertex of $V^{\inn}$ to a vertex of $V^{\out}$ or the other way around.

Recall that our algorithm is given as input the adjacency-list representation of the graph $G'$, that we denote by $\dset$. We modify $\dset$ in order to simulate the adjacency-list representation of the graph $J'$, as follows. For every vertex $v\in V(G)$, we change the capacity of the corresponding edge $e_v=(v^{\inn},v^{\out})$ from $w(v)$ to $\hat w(v)$ in the lists $\OUT(v^{\inn})$ and $\IN(v^{\out})$, and we mark the edge $e_v$ in both these lists to indicate that its capacity is stored explicitly. For every edge $e=(a,y)\in \tilde E$, we insert the edge $(a^{\out},y^{\inn})$ of capacity $\wmax(\hat G)$ into the lists $\OUT(a^{\out})$ and $\IN(y^{\inn})$, and mark the edge in both these lists to indicate that its capacity is stored explicitly. We also compute the value $\wmax(J)$. For regular edges $(v^{\out},u^{\inn})\in E(G')$, we do not change their capacities in the lists $\OUT(v^{\out}),\IN(u^{\inn})$. Since we did not mark these edges to indicate that their capacities are stored explicitly, whenever an algorithm tries to access such an edge $e$, we report that its capacity is $\wmax(J)$ to the algorithm. Note that all the above modifications to $\dset$ can be performed in time $O(n)$, and we can now use $\dset$ in order to simulate the adjacency-list representation of the graph $J'$.

Recall that we have computed an $x$-$y$ flow $f_0$ in the graph $J$. This flow naturally defines an $x^{\out}$-$y^{\inn}$ flow $f'_0$ in the graph $J'$, that obeys the edge capacities, with $\val(f'_0)=\val(f_0)$, as follows. Consider any edge $e=(u,v)\in E(J)$ with $f_0(e)>0$ (recall that all such edges are stored in the list $E^f$). We set $f'_0(u^{\out},v^{\inn})=f_0(u,v)$. Additionally, we store the edge 
$(u^{\out},v^{\inn})$ together with the vertex $u\in V(G)$. Once we process all edges of $E^f$ in this manner, we consider every vertex $v\in V(G)\setminus\set{x,y}$ one by one. For each such vertex $v$, we set $f'_0(v^{\inn},v^{\out})=\sum_{e\in \delta^+_{J'}(v^{\out})}f'_0(e)$; recall that all edges $e\in \delta^+_{J'}(v^{\out})$ with $f'_0(e)>0$ are explicitly stored with $v$. We denote by $E^{f'}$ the set of all edges $e\in E(J')$ with $f'_0(e)>0$. It is immediate to verify that $f'_0$ is indeed a valid $x^{\out}$-$y^{\inn}$ flow in $J'$, that obeys the edge capacities, with $\val(f'_0)=\val(f_0)$, and that $|E^{f'}|\leq 2|E^f|\leq 8n$. Moreover, given the flow $f_0$ and the list $E^f$, we can compute the set $E^{f'}$ of edges of $J'$ and the flow value $f'_0(e)$ for every edge $e\in E^{f'}$, in time $O(n)$.
For every edge $e\in E(J')$, we denote its capacity in $J'$ by $c_{J'}(e)$.

Next, we define a partition $(X,Y)$ of the vertices of $G'$ (and hence of $J'$) as follows: $X=L^*\cup S^{\inn}$ and $Y=S^{\out}\cup R^*$. We will use the following simple corollary of \Cref{obs: flow in $J$}.

\begin{corollary}\label{cor: flow in $J'$}
	Partition $(X,Y)$ of $V(J')$ is a valid $x^{\out}$-$y^{\inn}$ edge-cut in graph $J'$. Moreover, if Condition (C) holds, then $\val(f'_0)\geq \sum_{e\in E_{J'}(X,Y)}c_{J'}(e)-48n$.
\end{corollary}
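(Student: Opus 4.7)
My plan is to establish the two assertions separately, leveraging the structural results already developed.

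For the first assertion, I would show that $(X,Y)$ is always a valid $x^{\out}$-$y^{\inn}$ edge-cut, regardless of Condition (C). Since $(L,S,R)$ is an $x$-$y$ vertex-cut in $G$, we have that $L,S,R$ partition $V(G)$ with $x\in L$, $y\in R$. It follows that $X\cup Y = L^{\inn}\cup L^{\out}\cup S^{\inn}\cup S^{\out}\cup R^{\inn}\cup R^{\out} = V(J')$ and $X\cap Y=\emptyset$, so $(X,Y)$ is a partition of $V(J')$. Moreover, $x^{\out}\in L^{\out}\subseteq X$ and $y^{\inn}\in R^{\inn}\subseteq Y$, and both sets are non-empty, so $(X,Y)$ is a valid $x^{\out}$-$y^{\inn}$ edge-cut.

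For the second assertion, I would assume Condition (C) and combine two pieces. First, from \Cref{obs: flow in $J$}, $(L,S,R)$ is a valid $x$-$y$ vertex-cut in $J$ (with vertex weights $\hat w$) and $\val(f_0)\geq \hat w(S)-48n$. Second, since $J'$ is the standard split graph of $J$, I would apply \Cref{obs: existential from regular to split} with graph $J$ and vertex-cut $(L,S,R)$: this immediately yields that the partition $(X,Y)=(L^*\cup S^{\inn}, S^{\out}\cup R^*)$ is an $x^{\out}$-$y^{\inn}$ edge-cut in $J'$ of value exactly $\hat w(S)$, i.e., $\sum_{e\in E_{J'}(X,Y)}c_{J'}(e)=\hat w(S)$. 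Combining this equality with $\val(f'_0)=\val(f_0)\geq \hat w(S)-48n$ gives the desired bound.

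I do not anticipate any real obstacle: the first assertion is a direct set-theoretic verification, and the second follows by plugging \Cref{obs: flow in $J$} into \Cref{obs: existential from regular to split}. The only minor subtlety is to confirm that \Cref{obs: existential from regular to split} is applicable to $J$ (which contains the shortcut edges $\tilde E$ in addition to $\hat G$), but since $J'$ is defined as the standard split graph of $J$ and the observation holds for arbitrary directed vertex-weighted graphs, this is immediate.
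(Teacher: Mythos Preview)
Your proposal is correct and follows essentially the same approach as the paper. The only difference is cosmetic: the paper verifies by hand that $E_{J'}(X,Y)=\{(v^{\inn},v^{\out})\mid v\in S\}$ with $c_{J'}(v^{\inn},v^{\out})=\hat w(v)$, whereas you obtain the same conclusion by invoking \Cref{obs: existential from regular to split} directly (which is perfectly valid since $J'$ is indeed the split graph of $J$).
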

\begin{proof}
	Since $(L,S,R)$ is an $x$-$y$ cut in $G$, it is immediate to verify that $x^{\out}\in X$ and $y^{\inn}\in Y$ must hold, so 
	$(X,Y)$ is a valid $x^{\out}$-$y^{\inn}$ edge-cut in graph $J'$.
	
	Assume now that Condition (C) holds.
	Then, from \Cref{obs: flow in $J$}, 
	$(L,S,R)$ is a valid $x$-$y$ vertex-cut in graph $J$.
	It is then easy to verify that $E_{J'}(X,Y)=\set{(v^{\inn},v^{\out}\mid v\in S}$, and that, for every vertex $v\in S$, $c_{J'}(v^{\inn},v^{\out})=\hat w(v)$.   Since, from \Cref{obs: flow in $J$},
 $\val(f_0)\geq \hat w(S)-48n$, and since $\val(f'_0)=\val(f_0)$, we get that $\val(f'_0)\geq \hat w(S)-48n= \sum_{e\in E_{J'}(X,Y)}c_{J'}(e)-48n$.
\end{proof}

\paragraph{Graph $Z$.}
We let $Z$ be the residual flow network of the graph $J'$ with respect to the flow $f'_0$. We modify data structure $\dset$ so that it can be used in order to provide the algorithm with access to the adjacency-list representation of the graph $Z$, as follows. For every edge $e=(u,v)\in E^{f'}$, we set the capacity of the edge $e$ in $\dset$ to $c_{J'}(e)-f'_0(e)$, and we insert the edge $(v,u)$ of capacity $f'_0(e)$ into the lists $\OUT(u)$ and $\IN(v)$. We mark both these edges to indicate that their capacities are stored in $\dset$ explicitly. It is now immediate to verify that $\dset$ can be used in order to simulate access to the adjacency-list representation of the graph $Z$. 
For every edge $e\in E(Z)$, we denote its capacity in $Z$ by $c_{Z}(e)$.
We will use the following simple corollary of \Cref{cor: flow in $J'$}.

\begin{corollary}\label{cor: flow in $Z$}
	The partition $(X,Y)$ of $V(J')=V(Z)$ is a valid $x^{\out}$-$y^{\inn}$ edge-cut in graph $Z$.
	Moreover, if Condition (C) holds, then $\sum_{e\in E_{Z}(X,Y)}c_{Z}(e)\leq 48n$.
\end{corollary}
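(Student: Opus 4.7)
The first assertion is essentially immediate from what has been established already. Since $Z$ is defined as the residual flow network of $J'$ with respect to $f'_0$, we have $V(Z) = V(J')$, so the partition $(X,Y)$ of $V(J')$ is also a partition of $V(Z)$. The fact that it is an $x^{\out}$-$y^{\inn}$ edge-cut, namely $x^{\out}\in X$ and $y^{\inn}\in Y$, has already been verified in the first assertion of \Cref{cor: flow in $J'$}, and so the same conclusion transfers to $Z$ without additional work.

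For the second assertion, the plan is to combine \Cref{fact : residual-capacity-of-edge-cut} (which relates the residual capacity of any $s$-$t$ edge-cut to its capacity in the original graph and the value of the flow) with the lower bound on $\val(f'_0)$ provided by the second assertion of \Cref{cor: flow in $J'$}. Specifically, applying \Cref{fact : residual-capacity-of-edge-cut} with the graph $J'$, the flow $f'_0$, and the cut $(X,Y)$, we obtain $c_Z(X,Y) = c_{J'}(X,Y) - \val(f'_0)$. Under Condition (C), \Cref{cor: flow in $J'$} provides $\val(f'_0) \geq c_{J'}(X,Y) - 48n$, and substituting this into the previous identity yields the desired bound $c_Z(X,Y) \leq 48n$.

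Thus the entire argument is a short, direct combination of the two results that were just established, together with the basic identity about residual capacities of edge-cuts. There is no real obstacle here; the only point that warrants care is checking that the hypotheses of \Cref{fact : residual-capacity-of-edge-cut} are satisfied, namely that $f'_0$ is a valid $x^{\out}$-$y^{\inn}$ flow in the edge-capacitated graph $J'$ with respect to the capacities $c_{J'}$, which has already been noted during the construction of $f'_0$ from $f_0$ via the split-graph.
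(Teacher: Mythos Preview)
Your proposal is correct and follows essentially the same approach as the paper's proof: both invoke \Cref{cor: flow in $J'$} for the first assertion, and for the second combine the identity $c_Z(X,Y) = c_{J'}(X,Y) - \val(f'_0)$ from \Cref{fact : residual-capacity-of-edge-cut} with the bound $\val(f'_0) \geq c_{J'}(X,Y) - 48n$ from \Cref{cor: flow in $J'$}.
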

\begin{proof}
	From \Cref{cor: flow in $J'$}, $(X,Y)$ is a valid $x^{\out}$-$y^{\inn}$ edge-cut in graph $J'$. It is then immediate to verify that it is also a valid  $x^{\out}$-$y^{\inn}$ edge-cut in graph $Z$.
	
	Assume now that Condition (C) holds. Then,  from \Cref{cor: flow in $J'$}, $\sum_{e\in E_{J'}(X,Y)}c_{J'}(e)\leq \val(f'_0)+48n$, and from \Cref{fact : residual-capacity-of-edge-cut}, $c_Z(X,Y)=c_{J'}(X,Y)-\val(f'_0)\leq 48n$.
\end{proof}

We will also use the following simple obervation.

\begin{observation}\label{obs: bounding volume of in-vertices}
	$\sum_{v^{\inn}\in V^{\inn}}\deg^+_Z(v^{\inn})\leq 5n$.
\end{observation}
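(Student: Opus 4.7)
The plan is to bound $\deg^+_Z(v^{\inn})$ by splitting its outgoing edges in the residual network $Z$ into two categories: forward residual edges and backward residual edges. First I would observe that in the split-graph $J'$, the only edge leaving any in-vertex $v^{\inn}$ is the special edge $e_v=(v^{\inn},v^{\out})$, since by construction every other edge of $J'$ either originates at an out-vertex of $V^{\out}$ (regular edges and shortcut edges from $\tilde E$) or has no outgoing endpoint equal to an in-vertex. Therefore the number of forward edges leaving $v^{\inn}$ in $Z$ is at most $1$, contributing a total of at most $n$ to $\sum_{v^{\inn}\in V^{\inn}}\deg^+_Z(v^{\inn})$.

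Next I would account for the backward edges leaving $v^{\inn}$ in $Z$. By definition of the residual flow network, every backward edge $(v^{\inn},u)\in E(Z)$ corresponds to an edge $(u,v^{\inn})\in E(J')$ with $f'_0(u,v^{\inn})>0$. All edges of $J'$ entering an in-vertex $v^{\inn}$ are regular edges of the form $(u^{\out},v^{\inn})$, arising from an edge $(u,v)\in E(J)$. Crucially, by the construction of $f'_0$ from $f_0$, for such a regular edge we have $f'_0(u^{\out},v^{\inn})>0$ if and only if $f_0(u,v)>0$, i.e.\ if and only if $(u,v)\in E^f$. Hence the total number of backward edges leaving all in-vertices is exactly $|E^f|$, which is at most $4n$ by the guarantee of \Cref{thm : computing-flow-with-few-edges}.

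Combining the two contributions yields
\[
\sum_{v^{\inn}\in V^{\inn}}\deg^+_Z(v^{\inn}) \;\leq\; n + |E^f| \;\leq\; n + 4n \;=\; 5n,
\]
as required. The argument is essentially bookkeeping; the only subtle point to verify carefully is the one-to-one correspondence between flow-carrying regular edges of $J'$ and the edges of $E^f$ (so that one does not double-count the special edges $(v^{\inn},v^{\out})$ that also appear in $E^{f'}$ but do not contribute backward edges \emph{leaving} in-vertices, only backward edges leaving out-vertices). I expect this to be the only place where one has to be slightly careful; otherwise the proof is a direct consequence of the structure of the split graph and the bound $|E^f|\leq 4n$.
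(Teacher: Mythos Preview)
Your proof is correct and takes essentially the same approach as the paper's: both split the outgoing edges of $v^{\inn}$ in $Z$ into the single forward special edge $(v^{\inn},v^{\out})$ and the backward edges arising from flow-carrying regular edges entering $v^{\inn}$, then bound the latter by $|E^f|\leq 4n$. The paper's argument is slightly terser but identical in substance.
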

\begin{proof}
	Consider any vertex $v^{\inn}\in V^{\inn}$. Recall that, in graph $G'$, there is exactly one edge leaving $v^{\inn}$ -- the special edge $(v^{\inn},v^{\out})$. If $e=(v^{\inn},a^{\out})$ is an edge of $Z$ that leaves $v^{\inn}$ and is different from $(v^{\inn},v^{\out})$, then it must be the case that $f'_0(a^{\out},v^{\inn})>0$, and so $f_0(a,v)\in E^f$. Therefore, if $n_v$ is the total number of edges in $E^f$ that enter $v$, then $\deg^+_Z(v)\leq 1+n_v$. Altogether, we get that 
	$\sum_{v^{\inn}\in V^{\inn}}\deg^+_Z(v^{\inn})\leq n+|E^f|\leq 5n$.
\end{proof}

\paragraph{Graph $\hat Z$.}
We let $\hat Z$ be a graph that is identical to the graph $Z$, except that, every edge $e\in E(Z)$, its capacity in graph $\hat Z$ is set to $c_{\hat Z}(e)=\ceil{\frac{n}{\gamma}}\cdot c_Z(e)$.  

Note that data structure $\dset$ can be used to simulate access to the adjacency-list representation of the graph $\hat Z$ in a straightforward manner: we do not update the capacities of the edges of $E(Z)$; instead, when such an edge $e$ is accessed, and its capacity $c_Z(e)$ is recovered, we report its new scaled capacity  $c_{\hat Z}(e)=\ceil{\frac{n}{\gamma}}\cdot c_Z(e)$. 

\paragraph{Graph $\hat Z'$.}
The last graph that we define is graph $\hat Z'$. This graph is obtained from graph $\hat Z$ by adding,  for every vertex $v\in V(\hat Z)\setminus\set{y^{\inn}}$ with $\deg^+_{\hat Z}(v)>0$, an edge $(v,y^{\inn})$, of capacity $\deg^+_{\hat Z}(v)$.
For every edge $e\in E(\hat Z')$, we denote its capacity in $\hat Z'$ by $c_{\hat Z'}(e)$.

%
Let $\Delta=\frac{50n^2}{\gamma}$. We will use the following easy corollary of \Cref{cor: flow in $Z$}.

\begin{corollary}\label{cor: flow in hat Z}
	The partition $(X,Y)$ of $V(\hat Z')$ is a valid $x^{\out}$-$y^{\inn}$ edge-cut in $\hat Z'$.
	Moreover, if Condition (C) holds, then $\sum_{e\in E_{\hat Z'}(X,Y)}c_{\hat Z'}(e)< \Delta$.
\end{corollary}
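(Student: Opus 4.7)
The plan is to verify the two assertions of the corollary in turn, where the first is essentially immediate and the second requires combining \Cref{cor: flow in $Z$} with a careful accounting of the shortcut edges of $\hat Z'$. For the validity of the cut, note that $V(\hat Z')=V(Z)\cup\emptyset=V(Z)$, and the shortcut edges inserted in the construction of $\hat Z'$ all terminate at $y^{\inn}\in Y$. Since by \Cref{cor: flow in $Z$} we have $x^{\out}\in X$ and $y^{\inn}\in Y$, the partition $(X,Y)$ is indeed an $x^{\out}$--$y^{\inn}$ edge-cut in $\hat Z'$.

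For the capacity bound (assuming Condition (C)), I would decompose
\[
\sum_{e\in E_{\hat Z'}(X,Y)}c_{\hat Z'}(e)=\underbrace{\sum_{e\in E_{\hat Z}(X,Y)}c_{\hat Z}(e)}_{(\mathrm{I})}+\underbrace{\sum_{\substack{v\in X\setminus\{y^{\inn}\}\\ \deg^+_{\hat Z}(v)>0}}\deg^+_{\hat Z}(v)}_{(\mathrm{II})},
\]
where $(\mathrm{II})$ accounts for the shortcut edges $(v,y^{\inn})$, each of capacity $\deg^+_{\hat Z}(v)=\deg^+_Z(v)$. For $(\mathrm{I})$, since every edge capacity was scaled by $\ceil{n/\gamma}$, \Cref{cor: flow in $Z$} gives $(\mathrm{I})\leq \ceil{n/\gamma}\cdot 48n\leq 48n^2/\gamma + 48n$.

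The main obstacle is bounding $(\mathrm{II})$, because a priori $\sum_{v\in X}\deg^+_Z(v)$ could be as large as $\Theta(n^2)$ due to out-copies of vertices in $L$ having arbitrarily many out-edges in $G$. To overcome this, I would split $X=L^{\inn}\cup L^{\out}\cup S^{\inn}$. The contribution from $L^{\inn}\cup S^{\inn}\subseteq V^{\inn}$ is at most $5n$ by \Cref{obs: bounding volume of in-vertices}. For the contribution from $L^{\out}$, I would use the fact that under Condition (C) the sparsifier $(\tilde G,\tilde E)$ is perfect (\Cref{claim: perfect sparsifier}), so no edge of $\tilde E$ has an endpoint in $L$; hence, for $v\in L$, the only edges leaving $v^{\out}$ in $J'$ are the regular edges representing the at most $\deg^+_G(v)$ out-edges of $v$ in $G$, and in the residual network $Z$ there is additionally at most one backward edge $(v^{\out},v^{\inn})$. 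This gives $\deg^+_Z(v^{\out})\leq \deg^+_G(v)+1$, and summing yields $\sum_{v\in L}\deg^+_Z(v^{\out})\leq \vol^+_G(L)+|L|\leq n|L|+|L|\leq 2n\lambda$, using only the trivial bound $\vol^+_G(L)\leq n|L|$ and $|L|\leq \lambda$.

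Combining everything, the total cut capacity is at most $48n^2/\gamma+48n+5n+2n\lambda=48n^2/\gamma+53n+2n\lambda$. It remains to check that the additive slack $53n+2n\lambda$ is strictly less than $2n^2/\gamma$, so that the sum is strictly below $\Delta=50n^2/\gamma$; this follows from the hypothesis $\lambda\gamma\leq n/100$, which implies both $53\gamma\leq 53n/100$ (since $\lambda\geq 1$) and $2\lambda\gamma\leq n/50$, so $\gamma(53+2\lambda)\leq 55n/100<2n$ as required.
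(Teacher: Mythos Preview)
Your proof is correct and follows essentially the same decomposition and argument as the paper. The only difference is that for the $L^{\out}$ contribution to $(\mathrm{II})$, the paper uses the simpler trivial bound $\deg^+_Z(v^{\out})\leq |V^{\inn}|=n$ (yielding $\sum_{v\in L}\deg^+_Z(v^{\out})\leq \lambda n$ directly) rather than invoking the perfect-sparsifier property.
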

\begin{proof}
	From \Cref{cor: flow in $Z$}, $(X,Y)$ is a valid $x^{\out}$-$y^{\inn}$ edge-cut in graph $Z$, so it is immediate to verify that it is also a valid $x^{\out}$-$y^{\inn}$ edge-cut in $\hat Z'$.
	
	Assume now that Condition (C) holds. Then, from \Cref{cor: flow in $Z$}, $\sum_{e\in E_{Z}(X,Y)}c_{Z}(e)\leq 48n$. 
		We partition the edges of  $E_{\hat Z'}(X,Y)$ into two subsets: set $E_1=E_{\hat Z'}(X,Y)\cap E_Z(X,Y)$ and set $E_2=E_{\hat Z'}(X,Y)\setminus E_1$. Since, for every edge $e\in E(Z)$, its capacity in graph $\hat Z'$ is $c_{\hat Z'}(e)=\ceil{\frac{n}{\gamma}}\cdot c_Z(e)$, we get that:

\[ \sum_{e\in E_1}c_{\hat Z'}(e)=\sum_{e\in E_Z(X,Y)} \ceil{\frac{n}{\gamma}}\cdot c_Z(e)\leq \ceil{\frac{n}{\gamma}}\cdot 48n. \]

Recall that, if Condition (C) holds, then the distinguished cut $(L,S,R)$ is good, and, in particular, $|L|\le \lambda$ must hold. Since $X=L^*\cup S$, we get that $|X\cap V^{\out}|\leq \lambda$ in this case. Clearly, for every vertex $v^{\out}\in V^{\out}$, $\deg^+_Z(v^{\out})\leq |V^{\inn}|\leq n$. From \Cref{obs: bounding volume of in-vertices}, $\sum_{v^{\inn}\in X\cap V^{\inn}}\deg^+_Z(v^{\inn})\leq \sum_{v^{\inn}\in  V^{\inn}}\deg^+_Z(v^{\inn})\leq 5n$. Altogether, we get that:

\[\sum_{e\in E_2}c_{\hat Z'}(e)\leq \sum_{v\in X}\deg^+_Z(v)\leq \lambda\cdot n+5n.\]

Finally, recall that $\lambda\cdot\gamma\leq \frac{n}{100}$. Therefore, altogether:

\[\begin{split}
\sum_{e\in E_{\hat Z'}(X,Y)}c_{\hat Z'}(e)&\leq \sum_{e\in E_1}c_{\hat Z'}(e)+\sum_{e\in E_2}c_{\hat Z'}(e)\\
&\leq  \ceil{\frac{n}{\gamma}}\cdot 48n+\lambda\cdot n+5n\\
&\leq \frac{48n^2}{\gamma}+53n+\frac{n^2}{10\gamma}
\\& < \frac{50n^2}{\gamma}\\
&=\Delta.
\end{split}\]
\end{proof}

Note that the running time of the algorithm for Step 2 is dominated by the time required to compute the initial flow $f_0$ in graph $\tilde G\cup \tilde E$, which is bounded by  $O\left(\left(\frac{n^{2+o(1)}}{\gamma}+n^{1+o(1)}\cdot \lambda\gamma\right) \cdot \log^{3+o(1)}(\wmax(G))\right)$.

\subsubsection{Step 3: Computing the Vertex Set $A$}
\label{subsec: step 3: quasi promising}

In this step we compute a set $A\subseteq V(\hat G)$ of vertices with $x\in A$ and $y\not\in A\cup N^+_G(A)$. We will ensure that, if Condition (C) holds, then vertex set $A$ is quasi-promising for $\hat{G}$ (see \Cref{def : quasi-promising}). From \Cref{claim: from quasi to promising}, it will then follow that, if Condition (C) holds, then $A$ is also a promising vertex set for graph $G$ and scale $M'$.

As our first step, we apply the algorithm for local flow from \Cref{thm : unit-flow} to graph $\hat Z$, with the pair $s=x^{\out}$ and $t=y^{\inn}$ of its vertices, parameter $\Delta$ that remains unchanged, and parameter $h=2\lambda+4$. We use data structure $\dset$ in order to simulate access to the adjacency-list representation of graph $\hat Z$ for the algorithm. 
Since $\Delta=\frac{50n^2}{\gamma}>4n$ (as $\gamma\leq \frac{n}{100}$ must hold), it is immediate to verify that $\deg^+_{\hat Z}(x^{\out})\leq \Delta$, as required.
Recall that the running time of the algorithm is $\tilde O(h\cdot \Delta)\leq \tilde O\left(\frac{\lambda\cdot n^2}{\gamma}\right )$, since  $\Delta=\frac{50n^2}{\gamma}$.
Recall also that the algorithm returns an $x^{\out}$-$y^{\inn}$ flow $f$ in $\hat Z'$, by explicitly listing the flow values $f(e)$ for all edges $e\in E(\hat Z')$ with $f(e)>0$, and a subset $A'\subseteq V(\hat Z')\setminus\set{y^{\inn}}$ of vertices with $x^{\out}\in A'$ and $\vol^+_{\hat Z}(A')\leq \Delta$. If $\val(f)\geq \Delta$, then we return the vertex set $A=\set{x}$. Notice that, from \Cref{cor: flow in hat Z}, if Condition (C) holds, then there is an $x^{\out}$-$y^{\inn}$ edge-cut in $\hat Z'$, whose value is less than $\Delta$. Therefore, if $\val(f)\ge \Delta$, then Condition (C) does not hold, so either the distinguished cut is bad, or the subgraph oracle erred in its response to the query. Since the edge $(x,y)$ does not lie in $G$, we get that $y\not\in A\cup N^+_G(A)$, so vertex set $A$ is a valid outcome for the algorithm from \Cref{thm : algorithm-for-step-1} in this case.

We assume from now on that $\val(f)<\Delta$. Let $H=\hat Z'_{f}$ be the residual flow network of the graph $\hat Z'$ with respect to the flow $f$. Then the algorithm from \Cref{thm : unit-flow} guarantees that every path in $H$ from $x^{\out}$ to $V(H)\setminus A'$ has length at least $h$.

We compute the set $A\subseteq V(G)$ of vertices as follows: we include in $A$ every vertex $v\in V(G)\setminus \left(\set{y}\cup N^-_G(y)\right )$ with $v^{\out}\in A'$. We then return the vertex set $A$ as the outcome of the algorithm.

Note that the total running time of Step 3 is $\tilde O\left(\frac{\lambda\cdot n^2}{\gamma}+n\right )\leq \tilde O\left(\frac{\lambda\cdot n^2}{\gamma}\right )$, while the running times of Steps 1 and 2 of the algorithm are bounded by: 

$$\Tilde{O}\left(\left(\frac{n^2}{\gamma}+n\lambda\gamma\right) \cdot \log^2(\wmax(G))\right)$$ 

and  

$$O\left(\left(\frac{n^{2+o(1)}}{\gamma}+n^{1+o(1)}\cdot \lambda\gamma\right) \cdot \log^{3+o(1)}(\wmax(G))\right),$$ 

respectively. Therefore, the total running time of the algorithm is bounded by: 

$$O\left(\left(\frac{n^{2+o(1)}\cdot\lambda}{\gamma}+n^{1+o(1)}\cdot \lambda\gamma\right) \cdot \log^4(\wmax(G))\right).$$

From the construction of the vertex set $A$, and since $x^{\out}\in A'$ and $(x,y)\not\in E(G)$, it is immediate to verify that  $x \in A$ and $y \notin A \cup N^+_G(A)$. 
Next, we prove that, if Condition (C) holds, then vertex set $A$ is quasi-promising in $\hat G$ (see \Cref{def : quasi-promising}).
We start with the following simple observation bounding  $\vol^+_{\hat G}(A)$.

\begin{observation}\label{obs: volume of A}
	 $\vol^+_{\hat G}(A) \leq \frac{100n^2}{\gamma}$.
\end{observation}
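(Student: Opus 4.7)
The plan is to bound $\vol^+_{\hat G}(A)$ by comparing it, vertex by vertex, to the out-degrees of the corresponding out-copies in $\hat Z$, and then invoke the guarantee $\vol^+_{\hat Z}(A')\leq \Delta$ provided by the local-flow algorithm from \Cref{thm : unit-flow}. Since $\hat G$ and $G$ share the same edge set, this reduces to showing $\sum_{v\in A}\deg^+_G(v)\leq \sum_{v\in A}\deg^+_{\hat Z}(v^{\out})$, which will then be combined with the inclusion $\{v^{\out}:v\in A\}\subseteq A'$ that is built into the definition of $A$.

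First, I would fix any $v \in A$ and any edge $(v,u)\in\delta^+_G(v)$. Because the construction of $A$ excludes $\{y\}\cup N^-_G(y)$, we have $u\neq y$, so the regular edge $(v^{\out},u^{\inn})$ belongs to the split graph $G'$ and therefore also to $J'$ (which is obtained from $G'$ by rescaling capacities and inserting shortcut edges). Its capacity in $J'$ equals $\wmax(J)=\wmax(\hat G)$. Next, I would verify that this regular edge has strictly positive residual capacity in $Z$, and thus appears as a forward edge in $Z$ and in $\hat Z$. The key point is that $\val(f'_0)=\val(f_0)\leq \hat w(V(G))\leq n\cdot \wmax'(\hat G)<\wmax(\hat G)$ (using the definition of $\wmax(\hat G)\geq 8n\cdot \wmax'(\hat G)$ from \Cref{subsec: split graph}), so the flow on any single regular edge is strictly smaller than its capacity. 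This yields the pointwise bound $\deg^+_G(v)\leq \deg^+_{\hat Z}(v^{\out})$ for every $v\in A$.

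Summing over $v\in A$ and using that $\{v^{\out}:v\in A\}\subseteq A'$ together with the guarantee $\vol^+_{\hat Z}(A')\leq \Delta=\tfrac{50n^2}{\gamma}$ from \Cref{thm : unit-flow}, I obtain
\[
\vol^+_{\hat G}(A)=\sum_{v\in A}\deg^+_G(v)\leq \sum_{v\in A}\deg^+_{\hat Z}(v^{\out})\leq \vol^+_{\hat Z}(A')\leq \frac{50n^2}{\gamma}\leq \frac{100n^2}{\gamma},
\]
as required. There is essentially no obstacle here; the only subtlety worth flagging is the verification that regular edges of $J'$ are never saturated by the initial flow, which is immediate from the gap between $\wmax(\hat G)$ and the total weight $\hat w(V(G))$ bounding $\val(f_0)$.
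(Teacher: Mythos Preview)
Your approach is correct and takes a different route from the paper. The paper does not argue that regular edges are never saturated; instead it observes that any edge $(v^{\out},u^{\inn})$ of $J'$ that fails to appear in $\hat Z$ must carry positive flow, hence corresponds to an edge in $E^f$, and bounds the number of such ``missing'' edges by $|E^f|\leq 4n$. This yields $\vol^+_{\hat G}(A)\leq \vol^+_{\hat Z}(A')+4n\leq \tfrac{50n^2}{\gamma}+4n$. Your argument, by contrast, exploits the exclusion $A\cap N^-_G(y)=\emptyset$ to rule out $u=y$, and then shows directly that the relevant regular edges are \emph{never} saturated, giving the tighter $\vol^+_{\hat G}(A)\leq \tfrac{50n^2}{\gamma}$ with no additive correction. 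This is a genuinely cleaner route.

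One small wrinkle in your justification: deducing ``the flow on any single regular edge is strictly smaller than its capacity'' from $\val(f'_0)<\wmax(\hat G)$ implicitly assumes $f'_0(e)\leq \val(f'_0)$, which holds only for acyclic flows. That is true here (the underlying $f_0$ comes from a min-cost computation with positive costs, so it has no circulations), but it is not stated. A more self-contained justification is to bound the edge flow directly by the vertex capacity at the head: for $u\notin\{x,y\}$, $f_0(v,u)\leq \hat w(u)<\wmax'(\hat G)<\wmax(\hat G)$, and for $u=x$, $f_0(v,x)=0$ since no flow enters the source. Either way, your conclusion stands.
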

\begin{proof}
Recall that the algorithm from \Cref{thm : unit-flow} guarantees that $\vol^+_{\hat Z}(A')\leq \Delta=\frac{50n^2}{\gamma}$.
Consider now any vertex $v\in A$, and recall that $v^{\out}\in A'$ must hold. Let $e=(v,u)\in \delta^+_G(v)$ be any edge. Then edge $e'=(v^{\out},u^{\inn})$ must lie in the split graph $G'$, and hence in $J'$. Since graph $\hat Z$ is obtained from the residual flow network $Z$ of graph $J'$ with respect to the flow $f'_0$ by only making modifications to the capacities, either $e'\in E(\hat Z)$, or $f'_0(e')>0$ must hold. We conclude that, if $e=(v,u)$ is an edge in $\delta^+_G(v)$, and $(v^{\out},u^{\inn})$ is not an edge in $E(\hat Z)$, then $f'_0(v^{\out},u^{\inn})>0$ must hold. Therefore:

\[\vol^+_{\hat G}(A)\leq \vol^+_{\hat Z}(A')+|E^f|\leq \frac{50n^2}{\gamma}+4n\leq \frac{100n^2}{\gamma},\]

since $\gamma\leq\frac{n}{100}$, so $4n\leq \frac{4n^2}{\gamma}$.
\end{proof}

Next, we prove that, if Condition (C) holds, then there is a set $\Tilde{L} \subseteq A$ of vertices, such that, if we denote by $(\Tilde{L},\Tilde{S},\Tilde{R})$ the tripartition of vertices of $\hat G$ induced by $\Tilde{L}$, then  $\hat w(\Tilde{S}) \leq \hat w(S) + 6\lambda\gamma$; $\Tilde{L} \subseteq L$; and
 $(\Tilde{L},\Tilde{S},\Tilde{R})$ is an $x$-$y$ vertex-cut in $\hat G$.	
 Recall that we defined an $x^{\out}$-$y^{\inn}$ edge-cut $(X,Y)$ in $\hat Z'$, with $X=L^*\cup S^{\inn}$ and $Y=S^{\out}\cup R^*$.
 The following claim is key to completing the analysis of our algorithm.

 \begin{claim}\label{claim: good cut in $H$}
 Assume that Condition (C) holds. Then there is an $x^{\out}$-$y^{\inn}$ edge-cut $(\Tilde{X},\Tilde{Y})$ in $Z$ with  $\Tilde{X} \subseteq X \cap A'$ and
 		 $c_{Z}(\Tilde{X},\Tilde{Y}) \leq c_{Z}(X,Y) + 6\lambda\gamma$.
 \end{claim}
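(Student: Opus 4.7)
My plan is to exhibit the cut by a ball-growing argument, where the ball is grown in the residual network $H$ but restricted to vertices of $X$. Concretely, I will define
\[
\tilde X \;=\; \bigl\{\, v \in X \;:\; \exists\ \text{a path from } s=x^{\out} \text{ to } v \text{ in } H \text{ using only vertices of } X \,\bigr\},
\]
and set $\tilde Y = V(Z)\setminus \tilde X$. Since $s=x^{\out}\in L^{\out}\subseteq X$, we trivially have $s\in \tilde X$; since $y^{\inn}\in R^*\subseteq Y$ is disjoint from $X$, we have $y^{\inn}\notin \tilde X$. So $(\tilde X,\tilde Y)$ is a valid $s$-$t$ edge-cut in $Z$.

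\textbf{Step 1: $\tilde X \subseteq A'$ via the alternating-parity argument.} Every edge of $\hat Z'$ that is not a shortcut edge goes between $V^{\inn}$ and $V^{\out}$ (this is true in $J'$ by construction of the split-graph, and residual edges inherit the same bipartite structure); the shortcut edges all end at $y^{\inn}\in Y$, so they cannot appear on any path that stays in $X$. Thus any simple path in $H$ from $s$ that remains in $X$ alternates between $V^{\inn}$ and $V^{\out}$; since $X \cap V^{\out}=L^{\out}$, the number of $V^{\out}$-vertices on such a simple path is at most $|L|\le\lambda$, so the path has length at most $2|L|\le 2\lambda < 2\lambda+4 = h$. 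Under Condition (C) we have $\val(f)<\Delta$ (by \Cref{cor: flow in hat Z}), so the guarantee of \Cref{thm : unit-flow} gives $d_H(s,v)\ge h$ for all $v\notin A'$. Combining, every $v\in\tilde X$ satisfies $d_H(s,v)<h$, so $v\in A'$.

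\textbf{Step 2: bounding the cut.} By definition of reachability, there are no edges of $H$ from $\tilde X$ to $X\setminus \tilde X$ (otherwise we could extend the path), so
$E_H(\tilde X, V(H)\setminus \tilde X) = E_H(\tilde X, Y) \subseteq E_H(X,Y)$,
which yields $c_H(\tilde X, V(H)\setminus\tilde X) \le c_H(X,Y)$. Applying \Cref{fact : residual-capacity-of-edge-cut} to both cuts in the graph $\hat Z'$ with flow $f$ and cancelling $\val(f)$ gives
\[
c_{\hat Z'}(\tilde X, V(\hat Z')\setminus \tilde X) \;\le\; c_{\hat Z'}(X,Y).
\]

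\textbf{Step 3: descaling and handling shortcuts.} The graph $\hat Z'$ consists of the edges of $\hat Z$ (whose capacities are $\alpha:=\lceil n/\gamma\rceil$ times those of $Z$) together with shortcut edges all ending at $y^{\inn}\in Y$. Since $y^{\inn}\notin\tilde X$, dropping the non-negative shortcut contribution on the left and expanding on the right gives
\[
\alpha\, c_Z(\tilde X,\tilde Y)\;=\;c_{\hat Z}(\tilde X,\tilde Y)\;\le\;c_{\hat Z'}(\tilde X, V(\hat Z')\setminus\tilde X)\;\le\;\alpha\, c_Z(X,Y) + c_{\mathrm{shortcut}}(X,Y).
\]
The shortcut contribution $c_{\mathrm{shortcut}}(X,Y)$ equals $\sum_{v\in X}\deg^+_{\hat Z}(v)=\vol^+_{\hat Z}(X)$, and the same volume-splitting argument used in the proof of \Cref{cor: flow in hat Z} (bounding $|X\cap V^{\out}|\le\lambda$ and $\sum_{v^{\inn}\in V^{\inn}}\deg^+_Z(v^{\inn})\le 5n$ via \Cref{obs: bounding volume of in-vertices}) gives $\vol^+_{\hat Z}(X)\le (\lambda+5)n$. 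Dividing by $\alpha\ge n/\gamma$ yields
\[
c_Z(\tilde X,\tilde Y)\;\le\; c_Z(X,Y)+(\lambda+5)\gamma \;\le\; c_Z(X,Y)+6\lambda\gamma,
\]
where the last inequality uses $\lambda\ge 1$.

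\textbf{Main obstacle.} Conceptually, the delicate point is Step 1: a naive choice $\tilde X=X\cap A'$ gives $\tilde X\subseteq X\cap A'$ for free but loses the "no exiting edges to $X\setminus\tilde X$" property needed for Step 2, while a naive $\tilde X=$ all vertices reachable in $H$ gives the cut property but may not lie in $A'$. The bipartite parity of the split graph together with the bound $|L|\le\lambda$ is precisely what lets us simultaneously enforce containment in $X$ (so that the ball is small enough to live inside $A'$) and containment in $A'$ (via the length bound $h=2\lambda+4$). The choice of $h$ in the call to \alglocal is tuned exactly for this argument to go through.
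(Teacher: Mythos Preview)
Your proof is correct and follows essentially the same approach as the paper: you define $\tilde X$ as the set of vertices reachable from $s$ in $H$ via paths staying in $X$, use the bipartite structure of the split graph together with $|X\cap V^{\out}|\le\lambda$ to bound the path lengths below $h$ (giving $\tilde X\subseteq A'$), observe that edges leaving $\tilde X$ in $H$ must enter $Y$ so $c_H(\tilde X,\tilde Y)\le c_H(X,Y)$, transfer this to $\hat Z'$ via \Cref{fact : residual-capacity-of-edge-cut}, and then descale while bounding the shortcut contribution by $\vol^+_Z(X)\le(\lambda+5)n$. The paper's proof is organized into two observations but the argument is the same.
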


We prove Claim \ref{claim: good cut in $H$} below, after we complete the proof of \Cref{thm : algorithm-for-step-1} using it.
We obtain the following corollary of \Cref{claim: good cut in $H$}.

\begin{corollary}\label{cor: quasi-promising}
	Assume that Condition (C) holds. Then there is a set $\Tilde{L} \subseteq A$ of vertices, such that, if we denote by $(\Tilde{L},\Tilde{S},\Tilde{R})$ the tripartition of vertices of $\hat G$ induced by $\Tilde{L}$, then  $\hat w(\Tilde{S}) \leq \hat w(S) + 6\lambda\gamma$; $\Tilde{L} \subseteq L$; and
	$(\Tilde{L},\Tilde{S},\Tilde{R})$ is an $x$-$y$ vertex-cut in $\hat G$.
\end{corollary}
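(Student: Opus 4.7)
\begin{proofof}{Corollary \ref{cor: quasi-promising} (Proposal)}
Assume throughout that Condition (C) holds. The plan is to transform the edge-cut $(\tilde X,\tilde Y)$ in $Z$ guaranteed by Claim \ref{claim: good cut in $H$} into an $x$-$y$ vertex-cut in $\hat G$ induced by a vertex set $\tilde L \subseteq A \cap L$, and to bound $\hat w(\tilde S)$ via the capacity of $(\tilde X,\tilde Y)$ lifted back to $J'$.

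First, using Observation \ref{fact : residual-capacity-of-edge-cut} applied to the $x^{\out}$-$y^{\inn}$ flow $f'_0$ in $J'$, we have $c_Z(\tilde X,\tilde Y) = c_{J'}(\tilde X,\tilde Y) - \val(f'_0)$ and $c_Z(X,Y) = c_{J'}(X,Y) - \val(f'_0)$, so the bound $c_Z(\tilde X,\tilde Y) \leq c_Z(X,Y) + 6\lambda\gamma$ from Claim \ref{claim: good cut in $H$} lifts to $c_{J'}(\tilde X,\tilde Y) \leq c_{J'}(X,Y) + 6\lambda\gamma$. I then plan to argue that $c_{J'}(X,Y) = \hat w(S)$: the only edges crossing from $X=L^*\cup S^{\inn}$ to $Y=S^{\out}\cup R^*$ in $J'$ are the special edges $(v^{\inn},v^{\out})$ with $v\in S$, because regular edges of $G'$ between $L^{\out}$ and $R^{\inn}$ do not exist (as $(L,S,R)$ is a vertex-cut in $G$), and shortcut edges in $\tilde E$ all originate at vertices outside $L$ (by the perfect sparsifier property of \Cref{claim: perfect sparsifier}, which is where Condition (C) enters) and therefore have their tail in $Y$. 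Hence $c_{J'}(\tilde X,\tilde Y) \leq \hat w(S) + 6\lambda\gamma$.

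Next, I will define $\tilde L = \{v \in V(\hat G) \setminus \{y\} \mid v^{\out} \in \tilde X\}$. Since $\tilde X \subseteq X = L^*\cup S^{\inn}$, only vertices $v\in L$ can have $v^{\out}\in \tilde X$, so $\tilde L \subseteq L$. Since $\tilde X \subseteq A'$ and the construction of $A$ adds to $A$ every $v \in V(G)\setminus(\{y\}\cup N^-_G(y))$ with $v^{\out}\in A'$, and since every $v\in L$ automatically satisfies $v\notin N^-_G(y)$ (as $y\in R$), we get $\tilde L\subseteq A$. To obtain the vertex-cut itself, I plan to apply Observation \ref{obs: existential from split to regular} to the edge-cut $(\tilde X,\tilde Y)$ in the split graph $J'$ of $J=\hat G\cup\tilde E$; this requires $c_{J'}(\tilde X,\tilde Y) < \wmax(J)$, which holds since $\wmax(J)\geq 8n\cdot\wmax'(\hat G)$ while $\hat w(S)+6\lambda\gamma \leq n\cdot\wmax'(\hat G) + n$. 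The final ``moreover'' clause of that observation gives that the tripartition of $V(J)$ induced by $\tilde L$ is a valid $x$-$y$ vertex-cut in $J$, whose middle part has weight at most $c_{J'}(\tilde X,\tilde Y) \leq \hat w(S) + 6\lambda\gamma$.

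Finally, I will transfer this from $J$ to $\hat G$. Since $\tilde L \subseteq L$, and since no vertex of $L$ is an endpoint of any shortcut edge in $\tilde E$, we have $N^+_J(\tilde L) = N^+_{\hat G}(\tilde L)$. Therefore the tripartition $(\tilde L,\tilde S,\tilde R)$ of $V(\hat G)$ induced by $\tilde L$ is exactly the restriction of the corresponding tripartition of $V(J)$ to $V(\hat G)=V(J)$, and its validity as an $x$-$y$ vertex-cut in $\hat G$ (together with the bound $\hat w(\tilde S)\leq \hat w(S)+6\lambda\gamma$) is inherited from its validity in $J$. Combined with $\tilde L\subseteq A\cap L$, this yields properties \ref{prop : Q2}--\ref{prop : Q4}. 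The main technical subtlety I anticipate is verifying cleanly that shortcut edges contribute nothing to $c_{J'}(X,Y)$ and that $N^+_J(\tilde L)=N^+_{\hat G}(\tilde L)$, since both rely on the ``no shortcut edge touches $L$'' property of the perfect sparsifier and this is the place where Condition (C) is essential.
\end{proofof}
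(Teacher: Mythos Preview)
Your proposal is correct and follows essentially the same route as the paper: lift the bound on the cut from $Z$ to $J'$ via \Cref{fact : residual-capacity-of-edge-cut}, identify $c_{J'}(X,Y)=\hat w(S)$ using that no shortcut edge originates in $L$, define $\tilde L$ from the out-copies in $\tilde X$, and invoke \Cref{obs: existential from split to regular}. Your treatment of the transfer from $J$ to $\hat G$ (via $N^+_J(\tilde L)=N^+_{\hat G}(\tilde L)$) and of $\tilde L\subseteq A$ (via $\tilde L\subseteq L$ and $y\in R$) is slightly more explicit than the paper's, but the argument is the same in substance.
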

\begin{proof}
	We assume that Condition (C) holds. Let $(\tilde X,\tilde Y)$ be the $x^{\out}$-$y^{\inn}$ edge-cut in $Z$ that is guaranteed to exist from \Cref{claim: good cut in $H$}. Recall that 
		$\Tilde{X} \subseteq X \cap A'$ and
		$c_{Z}(\Tilde{X},\Tilde{Y}) \leq c_{Z}(X,Y) + 6\lambda\gamma$ must hold. Note that $(\Tilde{X},\Tilde{Y})$ is also a valid $x^{\out}$-$y^{\inn}$ edge-cut in graph $J'$. In the following observation, we bound the value of this cut in $J'$ in terms of $\hat w(S)$.
		
\begin{observation}\label{cl : properties-of-existential-cut-in-split-graph}
Assume that Condition (C) holds. Then $c_{J'}(\Tilde{X},\Tilde{Y}) \leq \hat w(S) +6\lambda\gamma$, and in particular, $c_{J'}(\Tilde{X},\Tilde{Y})<\wmax(\hat G)$.
\end{observation}

\begin{proof}
	Recall that the graph $Z$ is the residual flow network of the graph $J'$ with respect to the flow $f'_0$. 
	From \Cref{fact : residual-capacity-of-edge-cut}, $c_Z(\tilde X,\tilde Y)=c_{J'}(\tilde X,\tilde Y)-\val(f'_0)$, and similarly, $c_Z(X,Y)=c_{J'}(X,Y)-\val(f'_0)$. Recall that, if Condition (C) holds, then, from \Cref{obs: flow in $J$}, 
	$(L,S,R)$ is a valid $x$-$y$ vertex-cut in graph $J$. From the definition of the edge-cut $(X,Y)$ in $J'$, it is immediate to verify that $c_{J'}(X,Y)=\hat w(S)$. Altogether, we get that:
	
	\[
	\begin{split}
	c_{J'}(\Tilde{X},\Tilde{Y})&=	c_Z(\tilde X,\tilde Y)+\val(f'_0)\\
	&\leq c_Z(X, Y)+6\lambda\gamma+\val(f'_0)\\
	&=c_{J'}(X,Y)+6\lambda\gamma\\
	&=\hat w(S)+6\lambda\gamma.
	\end{split}
	\]
	
	Moreover, since $\lambda\gamma\leq\frac{n}{100}$, and since $\wmax(\hat G)\geq 2n\cdot \max_{v\in V(\hat G)}\set{\hat w(v)}$, we get that $c_{J'}(\Tilde{X},\Tilde{Y})\leq \hat w(S)+6\lambda\gamma<\wmax(\hat G)$.
\end{proof}

We let $\tilde L=\set{v\in V(G)\setminus \set{y}\mid v^{\out}\in \tilde X}$. Consider the tripartition $(\tilde L,\tilde S,\tilde R)$  of $V(\hat G)$ induced by $\tilde L$. Recall that $J'$ is the split-graph of the graph $J=\hat G\cup \tilde E$.
By \Cref{obs: existential from split to regular} $(\tilde L,\tilde S,\tilde R)$ is a valid $x$-$y$ vertex-cut in $J$ (and hence in $\hat G$), and, moreover: 

\[\hat w(\tilde S)\leq c_{J'}(\tilde X,\tilde Y)\leq \hat w(S)+6\lambda \gamma.\]

It now remains to show that $\tilde L\subseteq L$ and that $\tilde L\subseteq A$.
Indeed, consider a vertex $v\in \tilde L$ and recall that $v^{\out}\in \tilde X$ must hold, by the definition of the set $\tilde L$.
Recall that, from \Cref{claim: good cut in $H$},  $\Tilde{X} \subseteq X \cap A'$, so $v^{\out}\in X\cap A'$ must hold. Since $X=L^*\cap S^{\inn}$, we get that $X\cap V^{\out}\subseteq L^{\out}$, so $v\in L$. Moreover, vertex set $A$ contains every vertex $u\in V(G)\setminus\left(\set{y}\cup N_G^-(y)\right )$ with $u^{\out}\in A'$. Since we have established that $(\tilde L,\tilde S,\tilde R)$ is a valid $x$-$y$ cut in $\hat G$, and since $v\in \tilde L$, vertex $v$ may not lie in $\set{y}\cup N_G^-(y)$. Since $v^{\out}\in A'$, we conclude that $v\in A$ must hold.
\end{proof}

To summarize, we have shown that, if Condition (C) holds,  then vertex set $A$ is quasi-promising in $\hat G$. From \Cref{claim: from quasi to promising} it then follows that, if Condition (C) holds, then $A$  is a promising vertex set for graph $G$ and scale $M'$. Therefore, if the set $A$ that our algorithm returned is not 
a promising vertex set for graph $G$ and scale $M'$, then Condition (C) does not hold, so either the distinguished cut is bad, or the subgraph oracle erred in its response to the query.
In order to complete the proof of \Cref{thm : algorithm-for-step-1} it now remains to prove \Cref{claim: good cut in $H$}, which we do next.

\begin{proofof}{\Cref{claim: good cut in $H$}}
	Assume that Condition (C) holds.
We let $\tilde X$ be the collection of all vertices $v\in V(\hat Z')$, such that there is a path $P_v$ connecting $x^{\out}$ to $v$ in $H$ that avoids the vertices of $Y$, and we let $\tilde Y=V(\hat Z')\setminus \tilde X$. Since $x^{\out}\in X$ and $y^{\inn}\in Y$, it is immediate to verify that $x^{\out}\in \tilde X$ and $y^{\inn}\in \tilde Y$ must hold, so $(\tilde X,\tilde Y)$ is a valid $x^{\out}$-$y^{\inn}$ cut in $\hat Z'$, and hence in $Z$.
Since $(X,Y)$ is a partition of $V(Z)$, and since, from the definition of the vertex set $\tilde X$ it may not contain vertices of $Y$, it is immediate to verify that $\tilde X\subseteq X$ must hold. It now remains to show that $\tilde X\subseteq A'$ and that $c_{Z}(\Tilde{X},\Tilde{Y}) \leq c_{Z}(X,Y) + 6\lambda\gamma$ hold.
We establish this in the following two observations.

\begin{observation}\label{obs: in A'}
If Condition (C) holds, then	$\tilde X\subseteq A'$.
\end{observation}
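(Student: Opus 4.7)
The plan is to exploit the bipartite (in-copy/out-copy) structure of the split-graph $J'$ in order to bound the length of the path $P_v$ that witnesses $v \in \tilde X$, and then to invoke the distance guarantee of the local-flow algorithm from \Cref{thm : unit-flow} to conclude $v \in A'$. Under Condition (C) the distinguished cut is good, so $|L| \leq \lambda$; combined with $X = L^* \cup S^{\inn}$, this already bounds the number of \emph{out-copies} inside $X$ by $|L^{\out}| = |L| \leq \lambda$.

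First, I would observe that although $P_v$ is a path in the residual network $H = \hat Z'_f$, and $\hat Z'$ contains the extra ``dummy'' edges $(u, y^{\inn})$ added to $\hat Z$, none of these dummy edges (nor their backward residual images) can appear on $P_v$: the forward copy points into $y^{\inn} \in Y$, while the backward copy, if present, originates at $y^{\inn} \in Y$, and $P_v$ is required to avoid $Y$ entirely. Thus every edge of $P_v$ is a forward or backward edge coming from $\hat Z$, which is just a capacity rescaling of the residual network $Z$ of the split-graph $J'$. Since every edge of $J'$ connects $V^{\inn}$ to $V^{\out}$ (or vice versa), every edge of $Z$ does the same, and hence the edges of $P_v$ alternate strictly between $V^{\inn}$ and $V^{\out}$.

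Second, combining the alternation with $V(P_v) \subseteq X$ and with $X \cap V^{\out} = L^{\out}$, I would conclude that $P_v$ contains at most $|L| \leq \lambda$ vertices of $V^{\out}$. Since $P_v$ starts at $x^{\out} \in V^{\out}$ and strictly alternates, this bounds the number of vertices of $P_v$ by $2\lambda$, and hence $|E(P_v)| \leq 2\lambda - 1 < 2\lambda + 4 = h$.

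Finally, recall that we are in the case $\val(f) < \Delta$, so the guarantee of \Cref{thm : unit-flow} applies: every path in $H$ from $s = x^{\out}$ to any vertex of $V(\hat Z') \setminus A'$ has length at least $h$. Since $P_v$ connects $x^{\out}$ to $v$ in $H$ and has length strictly less than $h$, we must have $v \in A'$, as required. The only real subtlety is the careful handling of the shortcut edges of $\hat Z'$ and their backward residual counterparts — making sure they are excluded from $P_v$ so that the alternation argument goes through cleanly; the rest is a direct counting argument driven by the bound $|L| \leq \lambda$.
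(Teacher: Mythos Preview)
Your proposal is correct and follows essentially the same approach as the paper: use the bipartite structure of $J'$ (hence of $Z$, $\hat Z$, and the non-$y^{\inn}$-incident edges of $H$) together with $|X \cap V^{\out}| = |L^{\out}| \leq \lambda$ to bound the length of $P_v$, then invoke the distance guarantee of \Cref{thm : unit-flow}. Your handling of the extra edges of $\hat Z'$ is slightly more explicit than the paper's (which simply notes that any edge of $H$ not incident to $y^{\inn}$ alternates), and your length bound $2\lambda-1$ is a bit tighter than the paper's $2\lambda+1$, but both are comfortably below $h=2\lambda+4$.
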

\begin{proof}
	Recall that, if Condition (C) holds, then the algorithm from 
	\Cref{thm : unit-flow} guarantees that every path in $H$ that connects  $x^{\out}$ to a vertex of $V(H)\setminus A'$ has length at least $h=2\lambda+4$. Therefore, if, for some vertex $v\in V(\hat Z')$, there is a path $P'_v$ of length at most $2\lambda+3$ connecting $x^{\out}$ to $v$  in $H$, then $v\in A'$ must hold. It is now enough to prove that, for every vertex $v\in \tilde X$, such a path $P'_v$ exists in $H$.
	
	Consider now any vertex $v\in \tilde X$. Recall that, from the definition of the set $\tilde X$ of vertices, there is a path $P_v$ connecting $x^{\out}$ to $v$ in $H$, that avoids the vertices of $Y$. It is now enough to prove that the length of the path $P_v$ is bounded by $2\lambda+3$. 
	
	Recall that we have established that every edge of graph $J'$ connects a vertex of $V^{\out}$ to a vertex of $V^{\inn}$ or the other way around. Since graph $Z$ is the residual flow network of the graph $J'$ with respect to the flow $f'_0$, it is easy to verify that it has the same property: namely, every edge of $Z$ connects a vertex of $V^{\out}$ to a vertex of $V^{\inn}$ or the other way around. Recall also that graph $\hat Z'$ is obtained from $Z$ by scaling the capacities of its edges, and adding edges that are incident to the vertex $y^{\inn}$. Lastly, graph $H$ is a residual flow network of the graph $\hat Z'$ with respect to the flow $f$. It is then immediate to verify that, if $e$ is an edge of $H$ that is not incident to $y^{\inn}$, then it connects a vertex  of $V^{\out}$ to a vertex of $V^{\inn}$ or the other way around.
	
	Recall that $X=L^*\cup S^{\inn}$, and that, if Condition (C) holds, then $|L|\leq \lambda$, so $|X\cap V^{\out}|\leq \lambda$ holds. Therefore, path $P_v$ may contain at most $\lambda$ vertices of $V^{\out}$, and, since it may not contain the vertex $y^{\inn}\in Y$, the vertices of $V^{\inn}$ and $V^{\out}$ must alternate on it. We conclude that the length of the path is bounded by $2\lambda+1$, and so $v\in A'$ must hold.
\end{proof}
 	
\begin{observation}\label{obs: bounding cut value}
	If Condition (C) holds, then $c_{Z}(\Tilde{X},\Tilde{Y}) \leq c_{Z}(X,Y) + 6\lambda\gamma$.
\end{observation}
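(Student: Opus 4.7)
The plan is to reduce the desired bound on $c_Z(\tilde X,\tilde Y)$ to a comparison in $\hat Z'$ (via the residual-flow identity), then carefully translate back to $c_Z$ using the capacity scaling factor $C=\ceil{n/\gamma}$ between $\hat Z$ and $Z$.

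First I will establish $E_H(\tilde X,\tilde Y)\subseteq E_H(X,Y)$. Take any $(u,v)\in E_H$ with $u\in\tilde X$ and $v\in\tilde Y$. Since $\tilde X\subseteq X$ already holds (by definition $\tilde X$ avoids $Y$), we have $u\in X$. If $v\in X\setminus\tilde X$, then prepending a witness $s\leadsto u$ path in $H$ that avoids $Y$ with the edge $(u,v)$ gives an $s\leadsto v$ path in $H$ avoiding $Y$, forcing $v\in\tilde X$, a contradiction. Hence $v\in Y$, giving $(u,v)\in E_H(X,Y)$. Summing residual capacities yields $c_H(\tilde X,\tilde Y)\le c_H(X,Y)$, and applying \Cref{fact : residual-capacity-of-edge-cut} to both cuts (both separate $s$ from $t$ in $\hat Z'$, and the same flow $f$ appears on each side) produces
\[
c_{\hat Z'}(\tilde X,\tilde Y)\ \le\ c_{\hat Z'}(X,Y).
\]

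Next I will decompose each side. Since $y^{\inn}\in Y\subseteq\tilde Y$ and neither $X$ nor $\tilde X$ contains $y^{\inn}$, every shortcut edge incident to $X$ (respectively $\tilde X$) crosses the corresponding cut, while the remaining crossing edges are exactly the $\hat Z$-edges crossing the cut. Writing $\mathrm{sc}(A):=\sum_{v\in A,\ \deg^+_Z(v)>0}\deg^+_Z(v)$ for the total capacity of shortcut edges out of $A$ (note the capacity of each shortcut is $\deg^+_{\hat Z}(v)=\deg^+_Z(v)$, \emph{not} scaled by $C$), this gives $c_{\hat Z'}(A,\cdot)=c_{\hat Z}(A,\cdot)+\mathrm{sc}(A)=C\cdot c_Z(A,\cdot)+\mathrm{sc}(A)$ for $A\in\{\tilde X,X\}$. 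Substituting and rearranging:
\[
C\bigl(c_Z(\tilde X,\tilde Y)-c_Z(X,Y)\bigr)\ \le\ \mathrm{sc}(X)-\mathrm{sc}(\tilde X)\ =\ \mathrm{sc}(X\setminus\tilde X),
\]
using $\tilde X\subseteq X$ at the end.

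Finally I will bound the right-hand side crudely and exploit the scaling. By definition $\mathrm{sc}(X\setminus\tilde X)\le\vol^+_Z(X\setminus\tilde X)\le\vol^+_Z(X)$, and the computation already carried out in the proof of \Cref{cor: flow in hat Z} gives $\vol^+_Z(X)\le |X\cap V^{\out}|\cdot n+\sum_{v^{\inn}\in X\cap V^{\inn}}\deg^+_Z(v^{\inn})\le \lambda n+5n$ (using $|L|\le\lambda$ under Condition (C) and \Cref{obs: bounding volume of in-vertices}). Dividing by $C\ge n/\gamma$:
\[
c_Z(\tilde X,\tilde Y)-c_Z(X,Y)\ \le\ \frac{\lambda n+5n}{C}\ \le\ \frac{(\lambda+5)n\gamma}{n}\ =\ (\lambda+5)\gamma\ \le\ 6\lambda\gamma,
\]
where the last inequality uses $\lambda\ge 1$ (which holds since $x\in L\ne\emptyset$).

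The one point that initially looks like the main obstacle is controlling $\vol^+_Z(X\setminus\tilde X)$: a first attempt to show this is as small as $6\lambda\gamma$ fails, and the local-flow guarantee $\vol^+_{\hat Z}(A')\le\Delta$ is too weak to pin down $X\setminus\tilde X$. The key observation that resolves this is that shortcut capacities in $\hat Z'$ are \emph{unscaled} degrees of $Z$ (not scaled by $C$), so after the inequality is stated in $\hat Z'$ and then divided by $C$, even the loose bound $\vol^+_Z(X)\le\lambda n+5n$ contracts by the factor $\gamma/n$ into the desired additive slack $6\lambda\gamma$.
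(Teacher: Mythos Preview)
The proposal is correct and follows essentially the same approach as the paper: establish $c_H(\tilde X,\tilde Y)\le c_H(X,Y)$ by the path-extension argument, lift to $c_{\hat Z'}$ via \Cref{fact : residual-capacity-of-edge-cut}, decompose $c_{\hat Z'}$ into the scaled $Z$-contribution plus shortcut capacities, bound $\vol^+_Z(X)\le \lambda n+5n$ exactly as in the proof of \Cref{cor: flow in hat Z}, and divide by $\ceil{n/\gamma}$. The only cosmetic difference is that you retain $\mathrm{sc}(\tilde X)$ and cancel it against $\mathrm{sc}(X)$ to obtain $\mathrm{sc}(X\setminus\tilde X)$, whereas the paper simply drops the nonnegative term $\mathrm{sc}(\tilde X)$ on the left; both yield the same final bound.
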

\begin{proof}
	Assume that  Condition (C) holds.
	Recall that we denoted by $H$ the residual flow network of the graph $\hat Z'$ with respect to the flow $f$ that the algorithm from \Cref{thm : unit-flow} computed.
	For every edge $e\in E(H)$, we denote its capacity in graph $H$ by $c_H(e)$. Notice that, if $e=(u,v)$ is an edge of graph $H$ with $u\in \tilde X$ and $v\in \tilde Y$, then $v\in Y$ must hold. Indeed, assume otherwise. Then, from the definition of the vertex set $\tilde X$, there is a path $P_u$ connecting $x^{\out}$ to $u$ in graph $H$, that avoids the vertices of $Y$. If $v\not\in Y$, then this path can be extended by appending the edge $(u,v)$ at the end of it, to obtain a path $P_v$ connecting
	$x^{\out}$ to $u$ in graph $H$, that avoids the vertices of $Y$, contradicting the fact that $u\not\in \tilde X$. Since, as we have shown already, $\tilde X\subseteq X$ must hold, we conclude that, for every edge $e=(u,v)\in E(H)$ with $u\in \tilde X$ and $v\in \tilde Y$, $(u,v)\in E_H(X,Y)$ must hold. Therefore, $c_H(\tilde X,\tilde Y)\leq c_H(X,Y)$ must hold.

Since, from \Cref{fact : residual-capacity-of-edge-cut},  $c_{H}(X,Y) = c_{\hat Z'}(X,Y) - \val(f)$, and similarly, $c_{H}(\tilde X,\tilde Y) = c_{\hat Z'}(\tilde X,\tilde Y) - \val(f)$, we conclude that:

\begin{equation}\label{eq: compare 2 cuts}
c_{\hat Z'}(\tilde X,\tilde Y)=c_{H}(\tilde X,\tilde Y)+\val(f)\leq c_H(X,Y)+\val(f)=c_{\hat Z'}(X,Y).
\end{equation}

Recall that graph $\hat Z'$ is obtained from graph $Z$ by first setting, for every edge $e\in E(Z)$, its capacity to $c_{\hat Z'}(e)=\ceil{\frac{n}{\gamma}}\cdot c_Z(e)$, and then adding, for every vertex $v\in V(Z)\setminus\set{y^{\inn}}$ with $\deg^+_Z(v)>0$, an edge $(v,y^{\inn})$ of capacity $c_{\hat Z'}(v,y^{\inn})=\deg^+_Z(v)$.

Therefore, on the one hand, $c_{\hat Z'}(\tilde X,\tilde Y)\geq \ceil{\frac{n}{\gamma}}\cdot c_Z(\tilde X,\tilde Y)$, while, on the other hand, 
$c_{\hat Z'}(X,Y)=\ceil{\frac{n}{\gamma}}\cdot c_Z(X,Y)+\vol^+_Z(X)$.

Recall that, from \Cref{obs: bounding volume of in-vertices},
$\sum_{v^{\inn}\in V^{\inn}}\deg^+_Z(v^{\inn})\leq 5n$. Moreover, if Condition (C) holds, then $|L|\leq \lambda$, and so $X=L^*\cup S^{\inn}$ contains at most $\lambda$ vertices of $V^{\out}$. Therefore, $\vol^+_Z(X)\leq 5n+\lambda n\leq 6\lambda n$ must hold.
We conclude that $c_{\hat Z'}(X,Y)\leq \ceil{\frac{n}{\gamma}}\cdot c_Z(X,Y)+6\lambda n$.

Altogether, we get that:

\[
\begin{split}
c_Z(\tilde X,\tilde Y)&\leq \frac{c_{\hat Z'}(\tilde X,\tilde Y)}{\ceil{n/\gamma}}\\
&\leq \frac{c_{\hat Z'}( X, Y)}{\ceil{n/\gamma}}\\
&\leq c_Z(X,Y)+\frac{6\lambda n}{\ceil{n/\gamma}}\\
&\leq c_Z(X,Y)+6\lambda\gamma.
\end{split}
\]
\end{proof}
\end{proofof}

\newpage

\appendix
\section{Proofs Omitted from Section \ref{sec: prelims}}
\label{sec: appx-proofs from prelims}

\subsection{Proof of \Cref{obs: existential from regular to split}}\label{subsec : proof of obs existential from regualr to split}
It is immediate to verify that $s^{\out} \in X$ and $t^{\inn} \in Y$, so, in particular, $X,Y\neq \emptyset$, and $(X,Y)$ is an $s^{\out}$-$t^{\inn}$ edge-cut in $G'$. It now remains to show that $c(X,Y)=w(S)$. 

Consider any edge $e\in E_{G'}(X,Y)$. We claim that $e$ must be a special edge representing a vertex of $S$. Indeed, if $e$ is a regular edge, then it must connect a vertex $v^{\out}\in X\cap V^{\out}$ to a vertex $u^{\inn}\in Y\cap V^{\inn}$. But then, by the definition of the sets $X$ and $Y$, $v\in L$ and $u\in R$ must hold, and, by the definition of the split graph, edge $(v,u)$ must lie in $G$, a contradiction. We conclude that $e$ must be a special edge. Since, for every vertex $v\in L$, both copies of $v$ lie in $X$, while for every vertex $u\in R$, both copies of $u$ lie in $Y$, it must be the case that $e$ represents a vertex of $S$. We conclude that all edges in $E_{G'}(X,Y)$ are special edges representing  vertices of $S$. Moreover, from the construction of the cut $(X,Y)$, it is immediate to verify that, for every vertex $v\in S$, the corresponding special edge $(v^{\inn},v^{\out})\in E_{G'}(X,Y)$. Therefore, $c(E_{G'}(X,Y))=w(S)$.

\subsection{Proof of \Cref{obs: existential from split to regular}}
\label{subsec: proof of obs existential from split to regular}

It is immediate to verify that every vertex of $G$ belongs to exactly one of the sets $L,S$ or $R$, and, moreover, since $s^{\out}\in X$ and $t^{\inn}\in Y$, we get that $s\in L$ and $t\in R$ must hold. In order to prove that $(L,S,R)$ is a valid $s$-$t$ vertex-cut in $G$, it remains to prove that no edge connects a vertex of $L$ to a vertex of $R$. Assume otherwise, and let $e=(u,v)\in E(G)$ be an edge with $u\in L$ and $v\in R$. Then, from the definition of the set $L$, $u^{\out}\in X$, and, from the definition of the set $R$, $v^{\inn}\in Y$ must hold. Therefore, the regular edge $(u^{\out},v^{\inn})$, whose capacity is $\wmax(G)$ lies in $E_{G'}(X,Y)$, contradicting the fact that  
$c(X,Y)<\wmax(G)$. We conclude that $(L,S,R)$ is a valid $s$-$t$ vertex-cut in $G$.
Note that, for every vertex $v\in S$, the corresponding special edge $(v^{\inn},v^{\out})$, whose capacity is $w(v)$, lies in $E_{G'}(X,Y)$, and so $w(S) \leq c(X,Y)$ must hold.

Consider now the tripartition $(L,S',R')$ of $V(G)$ induced by $L$, and recall that $S'=N_G^+(L)$ and $R'=V(G)\setminus (L\cup S')$. Since $(L,S,R)$ is a valid vertex-cut, it is immediate to verify that $S'\subseteq S$, and therefore $R\subseteq R'$ must hold. In particular, $t\in R'$. Therefore, $(L,S',R')$ is a vertex-cut in $G$ that is induced by the set $L$ of vertices, and it is a valid $s$-$t$ vertex-cut. As observed already, $S'\subseteq S$ must hold, and therefore $w(S')\le w(S) \leq c(X,Y)$.

\subsection{Proof of \Cref{cor : minimum cut in split graph}}\label{subsec: proof of cor minimum cut in split graph}
Let $\opt_{G}$ denote the value of the minimum $s$-$t$ vertex cut in $G$, and let $\opt_{G'}$ denote the value of the minimum $s^{\out}$-$t^{\inn}$ edge cut in $G'$.
We start by proving that $\opt_{G} < \frac{\wmax(G)}{2}$.
Indeed, since $(s,t) \notin E(G)$, the tripartition  $(L,S,R)$ of vertices of $G$, where $L=\{s\}$, $R=\{t\}$ and $S = V(G) \setminus \{s,t\}$ defines a valid $s$-$t$ vertex-cut in $G$.
The value of this cut is: 

$$w(S)\leq (n-2) \cdot \max_{v \in V(G)}\{w(v)\}\leq \frac{\wmax}{2},$$

from the definition of the parameter $\wmax(G)$. Therefore,  $\opt_{G} \leq w(S) < \frac{\wmax(G)}{2}$ must hold.

It is easy to verify that $\opt_{G'} \leq \opt_{G}$ must hold. Indeed, consider any minimum $s$-$t$ vertex cut $(L^*,S^*,R^*)$ in $G$. From \Cref{obs: existential from regular to split}, there exists an $s^{\out}$-$t^{\inn}$ edge-cut in $G'$ whose value is at most $w(S^*)=\opt_{G}$, so $\opt_{G'} \leq \opt_{G}$ must hold.

Lastly, we prove that $\opt_{G} \leq \opt_{G'}$. Let $(X,Y)$ be a minimum $s^{\out}$-$t^{\inn}$ edge-cut in $G'$. As we have shown, $\opt_{G'} \leq \opt_{G} < \frac{\wmax(G)}{2}$, so in particular $c(X,Y)<\wmax(G)$.
From \Cref{obs: existential from split to regular}, there exists an $s$-$t$ vertex-cut in $G$ whose value is at most $c(X,Y)=\opt_{G'}$. We conclude that $\opt_{G} \leq \opt_{G'}$ and $\opt_G=\opt_{G'}$ must hold.

\subsection{Proof of \Cref{thm : unit-flow}}\label{sec: proof-of-unit-flow}

Our proof closely follows the analysis from Section 4 of \cite{OZ14}, with only minor modifications. Like the algorithm of \cite{OZ14}, we use a local version of Dinic's Blocking Flow algorithm \cite{dinic1970algorithm}.
We note that another commonly used approach in local-flow algorithms is to rely instead on the Push-Relabel framework of Goldberg and Tarjan \cite{GT88}, see e.g. \cite{HRW17,NSWN17,SW19}. One can equivalently obtain the proof of \Cref{thm : unit-flow} via this latter approach, but we choose to follow the approach of \cite{OZ14}, since we find it conceptually simpler.

We start with some notation and definitions that our algorithm uses. Suppose we are given a directed (possibly non-simple) graph $G$ with capacities $c(e)\geq 0$ on its edges $e \in E(G)$, and a flow $f$ in $G$. For a pair $x,y\in V(G)$ of vertices, we denote by $c(x,y)$ the total capacity of all edges $(x,y)$ in $G$, and by $f(x,y)$ the total amount of flow on all such edges; if no such edges lie in $G$ then $c(x,y)=f(x,y)=0$. Next, we define the notion of distancing flows.

\begin{definition}[A Distancing Flow]\label{def: distnacing flow}
	Consider a directed edge-capacitated graph $G$, a pair $s,t \in V(G)$ of its vertices, an $s$-$t$ flow $f$ in $G$, and an integer $h$. Let $G_f$ be the residual network of $G$ with respect to $f$.
	We say that $f$ is an \emph{$h$-distancing $s$-$t$ flow in $G$}, if every $s$-$t$ path in $G_f$ contains at least $h$ edges.
\end{definition}

A central tool in the proof of  \Cref{thm : unit-flow} is an algorithm that, given a simple directed edge-capacitated graph $G$, a pair $s,t \in V(G)$ of its vertices, and a parameter $h$, computes an $h$-distancing $s$-$t$ flow in $G$. 
For technical reasons, we also require that $(s,t) \notin E(G)$, and that the flow that the algorithm outputs does not use the edges of $\delta^+_G(t)$. The algorithm is summarized in the following theorem.

\begin{theorem}\label{thm: simplified local flow}
	There is a deterministic algorithm, whose input consists of a simple directed graph $G$ with integral capacities $c(e) \geq 0$ on its edges $e \in E(G)$ given in the adjacency-list representation, a pair $s,t \in V(G)$ of vertices with $(s,t) \notin E(G)$, and an integer $h> 0$.
	The algorithm returns an $h$-distancing $s$-$t$ flow $f$ in $G$, by providing the list $E^f$ of all edges $e\in E(G)$ with $f(e)>0$ and the flow value $f(e)$ for each such edge, and it ensures that, for every edge $e\in \delta^+_G(t)$, $f(e)=0$.
Let $G_f$ be the residual flow network of $G$ with respect to the flow $f$, and let $\hat{A} = V(G) \setminus \left(\{t\} \cup N^-_{G_f}(t)\right)$ be the collection of all vertices of $G$, excluding $t$ and its in-neighbors in $G_f$.
	The running time of the algorithm is $\Tilde{O}\left(h \cdot \left(\vol^+_G(\hat{A}) +\val(f)\right)\right)$, and  $|E^f|\leq O\left(h \cdot \val(f)\right)$.
\end{theorem}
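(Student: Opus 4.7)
My plan is to implement a local version of Dinic's blocking-flow algorithm that runs at most $h$ phases. At the start of each phase, I perform a truncated BFS from $s$ in the current residual network $G_f$, exploring only vertices at distance at most $h-1$. The BFS accesses residual neighbors by scanning the out-edges of $G$ in the adjacency-list representation together with the edges currently carrying nonzero flow into each vertex (for backward residual edges), so its cost is bounded by the out-volume in $G$ of the vertices scanned. Outgoing edges of $t$ are ignored throughout, which enforces the requirement $f(e) = 0$ for $e \in \delta^+_G(t)$. If $t$ is not reached, the algorithm terminates and the current $f$ is already $h$-distancing. Otherwise, I construct the layered subgraph consisting of residual edges that advance the BFS distance and compute a blocking flow in it via the standard advance/retreat DFS, updating $f$ and the list $E^f$ of its support accordingly.

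Correctness follows from the classical Dinic analysis: each blocking-flow phase strictly increases the $s$-$t$ distance in the residual graph by at least one, and since $(s,t)\notin E(G)$ gives an initial distance of at least $2$, within $h-1$ phases the distance reaches $h$. For the bound $|E^f|\le O(h\cdot \val(f))$, I will observe that every edge in $E^f$ was added by some augmenting path of length at most $h$, each such path carries at least one unit of flow (the initial flow is zero and all capacities and augmentations are integral), and the total flow shipped equals $\val(f)$.

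The main technical work is the running-time bound. The blocking-flow phase admits the standard amortization: its cost is $\tilde O(h)$ times the number of augmenting paths plus $\tilde O(h)$ times the number of vertices that become "dead" in the phase, so the total blocking-flow work over all phases is $\tilde O(h\cdot \val(f))$ plus $\tilde O(h)$ times the out-volume in $G$ of the set of vertices ever explored. The key obstacle is charging the BFS and dead-vertex work to $\vol^+_G(\hat A)$. I will argue that every vertex $v \ne t$ that is scanned by the algorithm at some point must either (i) still lie in $\hat A$ at the end, in which case its contribution is absorbed into $\vol^+_G(\hat A)$, or (ii) be an in-neighbor of $t$ in the final $G_f$ whose scanning cost can be charged to the flow: if $v \in N^-_{G_f}(t)$ then either $(v,t)\in E(G)$ and some unit of $f$ traversed $(v,t)$, or the residual edge $(v,t)$ arose because $f(t,v)>0$, and in both cases the event is paid for by the $\tilde O(h \cdot \val(f))$ budget. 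Carefully formalizing this charging, and ensuring we never touch a vertex of $N^-_{G_f}(t)\setminus(\hat A\cup\{t\})$ in a BFS except as a terminal of an edge being traversed, will be the main subtlety. Combined with the BFS cost bound, this yields the overall runtime $\tilde O(h\cdot(\vol^+_G(\hat A)+\val(f)))$ claimed in the theorem.
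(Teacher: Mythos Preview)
Your high-level approach matches the paper's: a bounded-depth Dinic algorithm with at most $h$ phases, each computing a blocking flow in the layered residual graph. Your correctness argument and the bound on $|E^f|$ are fine. The gap is in the running-time analysis.

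The charging argument you propose for case~(ii) does not work. First, the premise is false: $v\in N^-_{G_f}(t)$ can hold simply because $(v,t)\in E(G)$ with $f(v,t)<c(v,t)$, with no flow ever having touched $(v,t)$. Second, and more fundamentally, even when some flow did traverse $(v,t)$, the cost you need to pay for scanning $v$ is $\deg^+_G(v)$, and this quantity bears no relation to $h$ or to the amount of flow through $v$; it cannot be absorbed into the $\tilde O(h\cdot\val(f))$ budget. A concrete counterexample: take $s\to v$ with capacity~$1$, $v\to t$ with capacity~$2$, and $v\to u_1,\dots,u_k$ with each $u_i$ a dead end, and set $h=3$. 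In phase~$2$ your BFS scans all of $v$'s out-edges, costing $\Theta(k)$. After the single unit of blocking flow, the edge $(v,t)$ still has residual capacity~$1$, so $v\notin \hat A$; hence $\hat A=\{s,u_1,\dots,u_k\}$, giving $\vol^+_G(\hat A)=1$ and $\val(f)=1$, but your work was $\Theta(k)$.

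You do flag this as ``the main subtlety,'' but you do not say how to resolve it, and the charging you sketch cannot be rescued. The paper's resolution is structural rather than accounting-based: in phase~$j$ it builds the layered graph $Z_j$ with the convention $L_j=\{t\}$, and the BFS only scans out-edges of vertices in layers $0,\dots,j-2$. Those layers are provably contained in $\Lambda_j := V(G)\setminus(\{t\}\cup N^-_H(t))$ at the start of the phase (any vertex at distance at most $j-2$ with a residual edge to $t$ would give $\dist_H(s,t)\le j-1$, a contradiction), and $\Lambda_j\subseteq\hat A$ by monotonicity of $\Lambda$. Layer-$(j-1)$ vertices are discovered but \emph{not} scanned; instead one uses the adjacency-list lookup structure to check, for each $v\in L_{j-1}$, whether $(v,t)$ is present. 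This is precisely how the cost $\deg^+_G(v)$ for the vertex $v$ in the counterexample is avoided entirely. A maintained invariant (every edge carrying positive flow either originates in the current $\Lambda$ or terminates at $t$) then bounds $\vol^+_H(\Lambda_j)$ by $O(\vol^+_G(\Lambda_j))$, completing the argument.
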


We prove \Cref{thm: simplified local flow} in Section \ref{subsec: proof of simplified local flow}, after we complete the proof of  \Cref{thm : unit-flow} using it.
Recall that our algorithm receives as input a simple directed graph $G$ with integral capacities $c(e) \geq 0$ on its edges $e \in E(G)$ in the adjacency-list representation, integral
 parameters $\Delta,h \geq 1$, and a pair $s,t \in V(G)$ of distinct vertices with $\deg^+_{G}(s) \leq \Delta$. We let $\hat G$ be the graph that is obtained from $G$ by adding, for every vertex $v\in V(G)\setminus\set{t}$ with $\deg_G^+(v)>0$, the edge $(v,t)$ of capacity $\deg^+_G(v)$; if such an edge already lies in $G$, then we increase its capacity by $\deg^+_G(v)$ units. 
 
 Let $G^*$ be a directed graph that is obtained from $\hat{G}$ by adding a new source vertex $s'$, and an edge $(s',s)$ of capacity $\Delta$. Note that the input adjacency-list representation of the graph $G$ can be used to simulate access to the adjacency-list representation of $G^*$ in a straightforward way.  We then apply the algorithm from \Cref{thm: simplified local flow} to graph $G^*$, the pair $s',t$ of its vertices, and the parameter $h' = h+2$. 
Let $f$ be the $(h+2)$-distancing $s'$-$t$ flow in $G^*$ that the algorithm from  \Cref{thm: simplified local flow} returns and let $E^f$ be the set of all edges $e\in E(G^*)$ with $f(e)>0$.
Note that, since the only edge leaving the vertex $s'$ in $G^*$ is the edge  $(s,s')$ of capacity $\Delta$, $\val(f)\leq \Delta$ must hold.

Let $G^*_f$ be the residual flow network of the graph $G^*$ with respect to the flow $f$, and let $\hat{A} = V(G^*) \setminus \left(\{t\} \cup N^-_{G^*_f}(t)\right)$. In order to bound the running time of the algorithm from \Cref{thm: simplified local flow}, we use the following claim.

\begin{claim}\label{cl: volume of hat A not larger than Delta}
	$\vol^+_{G}(\hat{A} \setminus \set{s'}) \leq \Delta$ and $\vol^+_{G^*}(\hat{A}) \leq 2\Delta+1$.
\end{claim}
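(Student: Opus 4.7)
The central idea is that every vertex of $\hat{A}$ must be ``cut off'' from $t$ in the residual network, and, because of the edges of capacity $\deg^+_G(v)$ added in the construction of $\hat G$, this translates directly into a bound on $\vol^+_G$ via the flow value $\val(f) \leq \Delta$. The second bound then follows by comparing $\deg^+_{G^*}(v)$ to $\deg^+_G(v)$ for $v\in V(G)\setminus\set{t}$.

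I will first argue that, for every vertex $v\in \hat{A}\setminus\set{s',t}$ (note that $t\not\in\hat A$ by definition, so $\hat A\setminus\set{s'}\subseteq V(G)\setminus\set{t}$) with $\deg^+_G(v)>0$, the edge $(v,t)$ in $\hat G$ must be saturated by $f$; that is, $f(v,t)\ge \deg^+_G(v)$. Indeed, by the construction of $\hat G$, whenever $\deg^+_G(v)>0$ the (unique) edge $(v,t)$ in $\hat G$ has capacity at least $\deg^+_G(v)$. Since $v\in \hat A$, $v\notin N^-_{G^*_f}(t)$, so $G^*_f$ contains no edge from $v$ to $t$. The only way to create a residual edge from $v$ to $t$ is either a forward edge $(v,t)$ with $f(v,t)<c_{\hat G}(v,t)$, or a backward edge coming from a forward edge $(t,v)$ carrying positive flow; the latter possibility is ruled out by the guarantee in \Cref{thm: simplified local flow} that no flow leaves $t$. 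Hence $f(v,t)=c_{\hat G}(v,t)\geq \deg^+_G(v)$ as claimed.

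Next, I will combine this with the fact that the total flow entering $t$ equals $\val(f)$ (again using that no flow leaves $t$). Since the only edge out of $s'$ in $G^*$ has capacity $\Delta$, we have $\val(f)\leq \Delta$. Therefore
\[
\vol^+_G(\hat A\setminus\set{s'}) \;=\; \sum_{v\in \hat A\setminus\set{s',t}} \deg^+_G(v) \;\leq\; \sum_{v\in \hat A\setminus\set{s',t}} f(v,t) \;\leq\; \val(f) \;\leq\; \Delta,
\]
establishing the first bound.

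For the second bound, I will observe the following structural comparison between $G^*$ and $G$: for every vertex $v\in V(G)\setminus\set{t}$, the construction of $\hat G$ (and hence $G^*$) either leaves the outgoing edges of $v$ unchanged (if $(v,t)$ was already an edge of $G$ or $\deg^+_G(v)=0$), or adds a single new outgoing edge $(v,t)$. In either case $\deg^+_{G^*}(v)\leq \deg^+_G(v)+1$, and when $\deg^+_G(v)\geq 1$ this gives $\deg^+_{G^*}(v)\leq 2\deg^+_G(v)$; when $\deg^+_G(v)=0$, no edge was added and $\deg^+_{G^*}(v)=0$. Thus $\deg^+_{G^*}(v)\leq 2\deg^+_G(v)$ for all $v\in V(G)\setminus\set{t}$. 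Since $\deg^+_{G^*}(s')=1$ and $t\notin \hat A$, we conclude
\[
\vol^+_{G^*}(\hat A) \;\leq\; \deg^+_{G^*}(s') + \sum_{v\in \hat A\setminus\set{s'}} \deg^+_{G^*}(v) \;\leq\; 1 + 2\vol^+_G(\hat A\setminus\set{s'}) \;\leq\; 2\Delta+1,
\]
which is the second bound. No step presents a genuine obstacle; the only subtle point is making sure the backward residual edges at $t$ do not contribute to $N^-_{G^*_f}(t)$, which is handled by the ``no flow leaves $t$'' guarantee of \Cref{thm: simplified local flow}.
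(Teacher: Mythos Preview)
Your proof is correct and follows essentially the same approach as the paper's own proof: saturation of the added edges $(v,t)$ for vertices in $\hat A$, then bounding the sum of $\deg^+_G(v)$ by $\val(f)\le\Delta$, and finally observing $\deg^+_{G^*}(v)\le 2\deg^+_G(v)$ together with $\deg^+_{G^*}(s')=1$. You even add a small clarifying remark about why backward residual edges into $t$ cannot arise, which the paper leaves implicit.
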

\begin{proof}
Consider any vertex 
$v \in \hat{A} \setminus \set{s'}$ with $\deg^+_{G}(v) \neq 0$, and recall that an edge $(v,t)$ of capacity at least $\deg^+_{ G}(v)$ lies in $\hat G$. Since $v\in \hat A$, edge $(v,t)$ does not lie in $G^*_f$, and so it must be saturated. We conclude that, for every vertex $v \in \hat{A} \setminus \set{s'}$, $f(v,t)=c(v,t) \geq \deg^+_{G}(v)$ must hold. Since the algorithm from \Cref{thm: simplified local flow} guarantees that, for every edge $e\in \delta^+_G(t)$, $f(e)=0$, and since, as we have established, $\val(f)\leq \Delta$, we get that:

\[ \vol^+_{G}(\hat{A} \setminus \set{s'})=\sum_{v\in\hat{A} \setminus \set{s'}}\deg^+_G(v)\leq \sum_{e\in \delta^-_G(t)}f(e)= \val(f)\leq \Delta. \]

Lastly, it is immediate to verify that, for every vertex $v\in V(G^*)\setminus\set{s'}$, $\deg^+_{G^*}(v)\leq 2\deg^+_G(v)$, and that $\deg^+_{G^*}(s')=1$. Therefore, $\vol^+_{G^*}(\hat{A}) \leq 
2\vol^+_{G}(\hat{A} \setminus \set{s'})+1\leq
 2\Delta+1$.
\end{proof}

We conclude that the running time of the algorithm from \Cref{thm: simplified local flow} is bounded by:

\[\Tilde{O}\left(h' \cdot \left(\vol^+_{G^*}(\hat{A}) +\val(f)\right)\right) \leq \Tilde{O}\left(h \cdot \Delta\right),\]

and that $|E^f|\leq O\left(h' \cdot \val(f)\right)\leq O\left(h \cdot \Delta\right)$.
Note that the $s'$-$t$ flow $f$ in graph $G^*$ naturally defines an $s$-$t$ flow $f'$ in graph $\hat G$ of the same value as follows: for every edge $e\in E(\hat G)$, we set $f'(e)=f(e)$.
If $\val(f')\geq \Delta$, then we output the flow $f'$ and the set $A=\set{s}$ of vertices. In the remainder of the algorithm, we assume that $\val(f')<\Delta$. Let $H$ be the residual flow network of the graph $\hat G$ with respect to the flow $f'$. 
Next, we show that $f'$ must be an $(h+1)$-distancing $s$-$t$ flow in $\hat{G}$.

\begin{claim}\label{cl : f' is distancing flow in hat G}
	If $\val(f') < \Delta$, then $f'$ is an $(h+1)$-distancing $s$-$t$ flow in $\hat{G}$.
\end{claim}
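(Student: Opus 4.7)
\medskip

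\noindent\textbf{Proof proposal for Claim \ref{cl : f' is distancing flow in hat G}.}
The plan is to compare the residual flow network $H=\hat{G}_{f'}$ with the residual flow network $G^*_f$ that the algorithm from \Cref{thm: simplified local flow} was applied to, and to transfer the $(h+2)$-distancing guarantee on $f$ (in $G^*$) into the desired $(h+1)$-distancing guarantee on $f'$ (in $\hat{G}$). The extra $+1$ slack will come precisely from the single auxiliary edge $(s',s)$ of capacity $\Delta$ that turns $\hat G$ into $G^*$, and this is where the hypothesis $\val(f') < \Delta$ will be used.

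First I would lay out the structural relationship between the two residual networks. Since $V(G^*)=V(\hat G)\cup\{s'\}$ and $E(G^*)=E(\hat G)\cup\{(s',s)\}$, and since $f'$ is defined so that $f'(e)=f(e)$ for every $e\in E(\hat G)$, with $f(s',s)=\val(f)=\val(f')$, it follows that the edges of $G^*_f$ are exactly the edges of $H$, together with (i) the forward edge $(s',s)$, which is present iff $\val(f')<\Delta$, and (ii) the backward edge $(s,s')$, which is present iff $\val(f')>0$. Since $s'$ has no incoming edges in $G^*$ other than the backward residual edge $(s,s')$, no simple $s$-$t$ path in $G^*_f$ can pass through $s'$, so every $s$-$t$ path in $H$ is also an $s$-$t$ path in $G^*_f$ of the same length.

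Next I would carry out the main step: fix any $s$-$t$ path $P$ in $H$ and show $|E(P)|\ge h+1$. By the hypothesis $\val(f')<\Delta$, the edge $(s',s)$ is a forward edge of $G^*_f$, so prepending it to $P$ produces an $s'$-$t$ path $(s',s)\circ P$ in $G^*_f$ of length $|E(P)|+1$. Because $f$ is $(h+2)$-distancing in $G^*$, every $s'$-$t$ path in $G^*_f$ has length at least $h+2$, so $|E(P)|+1\ge h+2$, yielding $|E(P)|\ge h+1$. Since $P$ was arbitrary, $f'$ is an $(h+1)$-distancing $s$-$t$ flow in $\hat G$, as required.

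There is essentially no real obstacle here beyond the bookkeeping of the two residual networks; the only delicate point worth double-checking is that the backward residual edge $(s,s')$, which may also appear in $G^*_f$, cannot shorten an $s$-$t$ path in $G^*_f$ because $s'$ is a dead end for any path not starting at $s'$. That observation is exactly what allows us to identify $s$-$t$ paths in $H$ with $s$-$t$ paths in $G^*_f$ and to append the edge $(s',s)$ safely.
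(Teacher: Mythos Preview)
Your proof is correct and follows essentially the same approach as the paper: both arguments observe that $H$ is contained in $G^*_f$ and that, under the hypothesis $\val(f')<\Delta$, the residual edge $(s',s)$ survives in $G^*_f$, so any $s$-$t$ path in $H$ can be prepended with $(s',s)$ to yield an $s'$-$t$ path in $G^*_f$ one edge longer, forcing $|E(P)|\ge h+1$ from the $(h+2)$-distancing property of $f$. Your additional discussion of the backward residual edge $(s,s')$ is harmless but unnecessary---the argument only needs the containment $E(H)\subseteq E(G^*_f)$, which is immediate, not any statement about $s$-$t$ paths in $G^*_f$ avoiding $s'$.
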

\begin{proof}
	Assume that $\val(f')<\Delta$, and assume for contradiction that $f'$ is not an $(h+1)$-distancing flow in $\hat{G}$; in other words, there is an $s$-$t$ path $P$ in $H$ that contains at most $h$ edges.
	
	By the construction of the graph $G^*$, and by the construction of the flow $f'$ from the flow $f$, it is easy to see that the residual flow network $G^*_f$ must contain all edges of $H$, and, additionally, the edge $(s',s)$ (since $\val(f')<\Delta=c(s',s)$). Therefore, if we let $P'$ be the path obtained by appending the edge $(s',s)$ at the beginning of the path $P$, then $P'$ is an $s'$-$t$ path in $G^*_f$, and it contains fewer than $h+2$ edges, contradicting the fact that $f$ is an $(h+2)$-distancing $s'$-$t$ flow in $G$.
\end{proof}

It now remains to describe the algorithm for computing the set $ A\subseteq V(\hat G)$ of vertices for the case where $\val(f') < \Delta$. We perform a BFS in $H$ from the vertex $s$ up to distance $h-1$, and we then let $A$ be the set of all vertices discovered in this BFS search; in other words, we let $A=\set{v\in V(\hat G)\mid \dist_H(s,v)\leq h-1}$; we discuss the efficient implementation of the BFS search below. We then output the flow $f'$ by listing all edges $e\in E^f\cap E(\hat G)$, together with the corresponding flow value $f'(e)$, and the set $A$ of vertices. 
It is immediate to verify that, for every vertex $v\in V(\hat G)\setminus A$, every path connecting $s$ to $v$ in $H$ must contain at least $h$ edges.
It is also easy to see that $s \in A$.
We use the following claim in order to complete the correctness proof of the algorithm.

\begin{claim}\label{cl : A contains A hat in proof of local flow}
	If $\val(f') < \Delta$, then $A \subseteq \hat{A} \setminus \set{s'}$.
\end{claim}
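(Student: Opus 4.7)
\medskip

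\textbf{Proof proposal.} The containment $A \subseteq V(\hat G) = V(G^*) \setminus \{s'\}$ is immediate since the BFS defining $A$ is carried out in $H$, whose vertex set is $V(\hat G)$, and $s' \notin V(\hat G)$. So the task reduces to showing $A \subseteq \hat A$, i.e., that for every $v \in A$ one has $v \neq t$ and $v \notin N^-_{G^*_f}(t)$. Both facts will be derived from \Cref{cl : f' is distancing flow in hat G}, which under the hypothesis $\val(f') < \Delta$ tells us that $f'$ is an $(h+1)$-distancing $s$-$t$ flow in $\hat G$, i.e., every $s$-$t$ path in $H$ has length at least $h+1$.

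First I would rule out $v = t$: any vertex $v \in A$ satisfies $\dist_H(s,v) \leq h-1$ by definition of $A$, so if $v=t$ then $H$ contains an $s$-$t$ path of length at most $h-1$, directly contradicting the $(h+1)$-distancing property. Next I would rule out $v \in N^-_{G^*_f}(t)$. The key step is to translate the edge $(v,t)$ of $G^*_f$ into an edge of the residual network $H$ of $\hat G$. Since $v \in V(\hat G)$, we have $v \neq s'$; consequently the edge of $G^*$ (or its reverse) witnessing $(v,t) \in E(G^*_f)$ actually lies in $\hat G$ and has the same capacity and the same flow under $f'$ as under $f$. So whether $(v,t)$ is a forward residual edge (meaning $(v,t) \in E(G^*)$ with unsaturated capacity) or a backward residual edge (meaning $(t,v) \in E(G^*)$ with positive flow), the analogous edge $(v,t)$ appears in $H$.

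Concatenating the length-$\leq h-1$ $s$-$v$ path in $H$ guaranteed by $v \in A$ with this edge $(v,t)$ would then yield an $s$-$t$ walk of length at most $h$ in $H$, whose underlying simple path has length at most $h$, again contradicting the $(h+1)$-distancing property. Combining the two cases gives $v \notin \{t\} \cup N^-_{G^*_f}(t)$, hence $v \in \hat A$, completing the proof.

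I do not expect any real obstacle here: the only subtle point is making sure the bookkeeping between $G^*$, $\hat G$, and their residual networks is tight, in particular verifying that the exclusion $v = s'$ makes the three possible types of residual edges $(v,t)$ behave identically in $H$ as in $G^*_f$. This is essentially a definitional check once one observes that the only edge of $G^*$ not in $\hat G$ is $(s',s)$, which is irrelevant here.
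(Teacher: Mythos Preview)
Your proposal is correct and follows essentially the same approach as the paper's proof: both argue that $A \subseteq V(\hat G)$ excludes $s'$ automatically, then use \Cref{cl : f' is distancing flow in hat G} to derive a contradiction from any $v \in A$ with $v = t$ or $(v,t) \in E(G^*_f)$, by producing an $s$-$t$ path in $H$ of length at most $h$. Your treatment is slightly more detailed in justifying why $(v,t) \in E(G^*_f)$ implies $(v,t) \in E(H)$ (the paper simply asserts this is ``easy to see''); your case analysis of forward versus backward residual edges is correct, though the backward case is in fact vacuous here since the algorithm from \Cref{thm: simplified local flow} guarantees $f(e) = 0$ on all edges leaving $t$.
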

\begin{proof}
	Consider any vertex $v\in A$. Since $A\subseteq V(\hat G)$ and $s'\not\in V(\hat G)$, $v\neq s'$ must hold. Therefore, it is enough to prove that $v \in \hat A$. Assume for contradiction that $v\not\in \hat A$. Then either $v= t$, or $(v,t)\in E(G^*_f)$ must hold. It is easy to see that, in the latter case, $(v,t)\in E(H)$.
	
	Recall that, from the definition of $A$, there is a path $P$ in graph $H$ connecting $s$ to $v$, that contains at most $h-1$ edges. Therefore, if $v\not\in \hat A$, then there is a path connecting $s$ to $t$ in $H$, that contains at most $h$ edges (either the path $P$ itself if $v=t$, or the path obtained from $P$ by concatenating the edge $(v,t)$ to the end of $P$ if $(v,t)\in E(H)$). But this contradicts the fact that, from \Cref{cl : f' is distancing flow in hat G},  $f'$ is an $(h+1)$-distancing $s$-$t$ flow in $\hat{G}$.
\end{proof}

From \Cref{cl : A contains A hat in proof of local flow}, it is immediate to verify that $t\not\in A$, and, by combining \Cref{cl : A contains A hat in proof of local flow} with \Cref{cl: volume of hat A not larger than Delta}, we get that $\vol_{G}^+(A)\leq \vol^+_{G}(\hat{A} \setminus \set{s'}) \leq \Delta$.

Lastly, we describe an efficient implementation of the BFS search and bound the total running time of the algorithm. Recall that our algorithm is given as input the adjacency-list representation of the graph $G$, which, in turn, can be used in order to simulate the adjacency-list representation of $\hat G$ and of $G^*$. Given the list $E^f$ of the edges of $E(G^*)$ with non-zero flow values $f(e)>0$, and the flow value $f(e)$ for each such edge, we can modify this data structure so that it can be used to simulate the adjacency-list representation of $H$, in a straightforward manner: for every edge $e=(x,y)\in E^f\cap E(\hat G)$, if $f'(e)<c(e)$, then we modify the capacity of the edge $e$ in lists $\OUT(x)$ and $\IN(y)$ and mark the edge to signify that its capacity is stored explicitly. We also insert the backward edge $(y,x)$ of capacity $f'(e)$ into $\OUT(y)$ and $\IN(x)$, and mark it as well. It is easy to verify that the resulting data structure can be used to simulate access to the adjacency-list representation of the graph $H$, and that this modification can be performed in time $O(|E^f|)\leq  O(h\cdot \Delta)$. We can then use this data structure to perform the  BFS search in graph $H$ up to depth $h-1$ in time $O\left (\vol^+_{\hat G}(A)\right )\leq  O(\Delta)$. Overall, the total running time of the algorithm is $\tilde O(h\cdot \Delta)$. In order to complete the proof of  \Cref{thm : unit-flow} it now remains to prove \Cref{thm: simplified local flow}, which we do next.

\subsubsection{Proof of \Cref{thm: simplified local flow}}\label{subsec: proof of simplified local flow}

Recall that our algorithm is given as input a simple directed graph $G$ with integral capacities $c(e) \geq 0$ on its edges $e \in E(G)$, in the adjacency-list representation. We are also given a pair $s,t \in V(G)$ of vertices with $(s,t) \notin E(G)$, and an integer $h> 0$.

At a high level, we follow the standard strategy of \cite{edmonds1972theoretical, hopcroft1973n} and its efficient implementation via blocking flows by \cite{dinic1970algorithm}.
Specifically, our algorithm maintains a valid $s$-$t$ flow $f$ in the graph $G$, starting with $f=0$, and then gradually augmenting it. In order to do so efficiently, we only explicitly maintain a list $E^f$ of all edges $e\in E(G)$ with $f(e)>0$, together with the flow value $f(e)$ for each edge $e\in E^f$. At all times, we denote by $H=G_f$ the residual flow network of the input graph $G$ with respect to the current flow $f$. The algorithm consists of a number of iterations, where in the $i$th iteration we compute a shortest $s$-$t$ path $P_i$ in $H$, and augment the flow $f$ by $1$ unit along this path. It is well known that, following this procedure, the length of the shortest $s$-$t$ path in $H$ may not decrease from iteration to iteration (and in fact this also follows from some of the arguments that we provide below). The algorithm terminates once every $s$-$t$ path in $H$ contains more than $h$ edges. 
Since the length of the shortest $s$-$t$ path in $H$ may not decrease from iteration to iteration, we can partition this algorithm into $h$ phases, where, for all $1\leq j\leq h$, the $j$th phase encompasses all iterations in which the length of the augmenting path is exactly $j$ (note that some phases may contain no iterations). The key to the algorithm of Dinic \cite{dinic1970algorithm} is an efficient implementation of each phase. We follow his high level approach with some modifications.
In addition to the flow $f$, our algorithm maintains the adjacency-list representation of the graph $H=G_f$. At the beginning of the algorithm, $E^f=\emptyset$, $f=0$, and $H=G$, so we can use the adjacency-list representation of the graph $G$ that the algorithm receives as input, that we denote by $\dset$, in order to access the adjacency-list representation of $H$. We will gradually modify $\dset$, so that it remains a valid adjacency-list representation of $H$ throughout the algorithm.

The following standard observation shows that, for every edge $e=(x,y)$ that is inserted into the residual flow network $H$ following an augmentation of the flow $f$ via an augmenting path $P\subseteq H$, the antiparallel edge $(y,x)$ must lie on $P$. The proof immediately follows from the definition of the residual flow network and is omitted here.

\begin{observation}\label{obs: residual network change}
	Let $G$ be a simple directed graph with capacities $c(e)\geq 0$ on its edges $e\in E(G)$ and two special vertices $s$ and $t$, and let $f'$ be an $s$-$t$ flow in $G$. Let $H'$ be the residual flow network of $G$ with respect to $f'$, and let $P$ be a shortest $s$-$t$ path in $H'$. Finally, let $f''$ be the $s$-$t$ flow in $G$ that is obtained by augmenting the flow $f'$ via the augmenting path $P$ by $1$ unit, and let $H''$ be the residual flow network of $G$ with respect to the flow $f''$. Then for every edge $e=(x,y)\in E(H'')\setminus E(H')$, edge $(y,x)$ must lie on the path $P$.
\end{observation}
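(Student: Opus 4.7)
The plan is to prove this by a direct case analysis on the two possible reasons an edge $(x,y)$ can appear in a residual flow network (as a forward edge, or as a backward edge), tracing exactly what change in the flow $f'$ is responsible for the edge appearing in $H''$ while being absent from $H'$.

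First, I would recall the definition: an edge $(x,y)$ lies in the residual network of $G$ with respect to a flow $g$ if and only if either (i) $(x,y)\in E(G)$ and $g(x,y)<c(x,y)$, in which case $(x,y)$ is a \emph{forward} residual edge, or (ii) $(y,x)\in E(G)$ and $g(y,x)>0$, in which case $(x,y)$ is a \emph{backward} residual edge. Since $G$ is simple, these two alternatives are mutually exclusive. Thus for any edge $e=(x,y)\in E(H'')\setminus E(H')$ there are exactly two cases to analyze, depending on which of (i) or (ii) causes $e$ to appear in $H''$.

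Case 1: $(x,y)$ is a forward edge in $H''$. Then $(x,y)\in E(G)$ and $f''(x,y)<c(x,y)$. The fact that $(x,y)\not\in E(H')$ together with $(x,y)\in E(G)$ forces $f'(x,y)=c(x,y)$. Hence $f''(x,y)<f'(x,y)$, so the augmentation along $P$ strictly decreased the flow on the arc $(x,y)\in E(G)$. By the definition of flow augmentation along a path in the residual network, this can only happen if $P$ uses the backward residual edge corresponding to $(x,y)\in E(G)$, which is precisely the edge $(y,x)\in E(H')$. Therefore $(y,x)\in E(P)$.

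Case 2: $(x,y)$ is a backward edge in $H''$. Then $(y,x)\in E(G)$ and $f''(y,x)>0$. Since $(x,y)\not\in E(H')$, the backward residual edge $(x,y)$ was absent from $H'$, which forces $f'(y,x)=0$. Hence $f''(y,x)>f'(y,x)$, so augmenting along $P$ strictly increased the flow on the arc $(y,x)\in E(G)$. This can only happen if $P$ uses the forward residual edge $(y,x)\in E(H')$, so again $(y,x)\in E(P)$.

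In both cases $(y,x)\in E(P)$, which is exactly what we needed to show. I do not anticipate a main obstacle: the argument is purely definitional, and the only subtlety is being careful about which direction (forward vs.\ backward) the relevant residual edge takes in $H'$ versus $H''$ in each case. The simplicity of $G$ is used implicitly to ensure that at most one of the two residual edge types is present for any ordered pair $(x,y)$.
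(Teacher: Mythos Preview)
Your proof is correct and matches what the paper intends: the paper explicitly omits the proof, saying it ``immediately follows from the definition of the residual flow network,'' and your two-case analysis is exactly the natural way to spell this out.

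One small imprecision: you claim that simplicity of $G$ makes the forward/backward alternatives mutually exclusive for a given ordered pair $(x,y)$. This is not quite right---a simple directed graph may contain both $(x,y)$ and $(y,x)$, so the residual network can have two parallel edges from $x$ to $y$ (one forward, one backward). But this does not affect your argument: you are analyzing a specific edge instance $e\in E(H'')\setminus E(H')$, and each such instance is by definition either a forward or a backward residual edge, so your case split is valid regardless. Also note that the hypothesis that $P$ is a \emph{shortest} path is not used in your argument (nor is it needed for the conclusion); it appears in the statement only because that is the context in which the observation is invoked.
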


Throughout the algorithm, we let $\Lambda\subseteq V(G)$ be a vertex set that contains all vertices $v\in V(G)$, such that either (i) $(v,t)\not\in E(G)$; or (ii) $(v,t)\in E(G)$ and $f(v,t)=c(v,t)$. We do not maintain the vertex set $\Lambda$ explicily, and only use it for analysis. Recall that, in each iteration $i$, we augment the current flow $f$ via an augmenting path $P_i$, which is a shortest $s$-$t$ path in $H$; in particular, $P_i$ may not contain edges that leave the vertex $t$. It then follows that, throughout the algorithm, for every edge $e\in \delta^+_G(t)$, $f(e)=0$, and, moreover, the vertex set $\Lambda$ is incremental, so, as the algorithm progresses, new vertices may join $\Lambda$, but no vertex may ever leave $\Lambda$. Note also that vertex set $\Lambda$ may be equivalently defined as $\Lambda= V(G) \setminus \left(\{t\} \cup N^-_{H}(t)\right)$. For all $1\le j\leq h$, we denote by $\Lambda_j$ the vertex set $\Lambda$ at the beginning of the $j$th phase, and we denote by $\hat A$ the set $\Lambda$ at the end of the algorithm. We will ensure that, throughout the algorithm, the following invariant holds:

\begin{properties}{I}
	\item \label{inv: flow} for every edge $e=(x,y)\in E(G)$ with $f(e)>0$, either $x\in \Lambda$ or $y=t$ holds.
\end{properties}

 We now fix an integer $1\leq j\leq h$, and describe the algorithm for the $j$th phase. We assume that, at the beginning of the phase, $\dset$ is a valid adjacency-list representation of the current residual flow network $H$, and that Invariant \ref{inv: flow} holds for $f$. Additionally, we assume that  the length of the shortest $s$-$t$ path in $H$ is at least $j$.
The algorithm for the $j$th phase consists of three procedures: initialization, augmentation and cleanup, that we now describe in turn.

\paragraph{Initialization.}
At the beginning of the $j$th phase, we compute a subgraph $Z_j\subseteq H$, that is defined as follows. For all $1\leq a<j$, let $L_a=\set{v\in V(H)\mid \dist_H(s,v)=a}$, and let $L_0=\set{s}$ and $L_j=\set{t}$. We let $V(Z_j)=L_0\cup\cdots\cup L_j$, and we let $E(Z_j)$ contain all edges $e=(x,y)\in E(H)$ where, for some  index $0\leq a<j$, $x\in L_a$ and $y\in L_{a+1}$.
Since, at the beginning of the $j$th phase, $\dist_H(s,t)\geq j$, it is immediate to verify that $L_0\cup\ldots\cup L_{j-2}\subseteq \Lambda_j$ and $L_{j-1}\subseteq N^+_H(\Lambda_j)$. Since every vertex of $L_{j-1}$ has out-degree at most $1$ in $Z_j$, we get that $|E(Z_j)|\leq O\left(\vol^+_H(\Lambda_j)\right )$. 
Moreover, by performing a BFS in $H$ using the adjacency-list representation $\dset$ of the graph $H$, graph $Z_j$ can be constructed in time $O\left(\vol^+_H(\Lambda_j)\right )$. 

Consider now any edge $e=(x,y)\in E(H)$ where $x\in \Lambda_j$. From Invariant \ref{inv: flow}, if $e\not\in E(G)$, then $y\in \Lambda_j$ must hold. It is then easy to verify that $\vol^+_H(\Lambda_j)\leq O(\vol^+_G(\Lambda_j))$. We conclude that 
$|E(Z_j)|\leq O\left(\vol^+_G(\Lambda_j)\right )$, and that the running time of the initialization procedure is bounded by $ O\left(\vol^+_G(\Lambda_j)\right )$.

\paragraph{Cleanup.}
Over the course of the phase, graph $Z_j$ may be updated as follows. Whenever, for any vertex $v\in V(Z_j)\setminus\set{t}$,  $\delta^+_{Z_j}(v)=\emptyset$ holds, we delete $v$  from $Z_j$. Additionally, if some edge $e\in E(Z_j)$ is deleted from $H$ (due to the changes in the flow $f$), then this edge is deleted from graph $Z_j$ as well. We never insert any edges or vertices into $Z_j$. For convenience, we use $Z_j$ to denote the graph that is computed at the beginning of the phase, and by $Z'_j$ the corresponding decremental graph that is updated with edge- and vertex-deletions over the course of the phase. We use the cleanup procedure in order to update the graph $Z'_j$. The cleanup procedure is performed once at the beginning of the phase after the initialization, and then after each iteration in the phase. Over the course of the cleanup procedure we iteratively delete from $Z'_j$ every vertiex $v$ whose out-degree in the current graph $Z'_j$ is $0$. Immediately after the graph $Z_j$ is initialized, we compute the out-degree of every vertex in $Z_j$ and initialize a list $\tau$ of all vertices whose out-degree in $Z_j$ is $0$. Then as long as $\tau\neq \emptyset$, we select an arbitrary vertex $v\in \tau$, delete it from $Z'_j$ and from $\tau$, and then inspect every vertex $u\in V(Z'_j)$ that was an in-neighbor of $v$ in $Z'_j$. For each such vertex $u$, we decrease its out-degree in $Z'_j$ by $1$, and, if it reaches $0$, we add $u$ to $\tau$. Once $\tau=\emptyset$, the cleanup procedure terminates. Similarly, after each iteration of the phase, for every edge $e=(u,v)\in E(Z'_j)$ that was deleted from $H$ (and hence from $Z'_j$), we decrease the out-degree of $u$ in $Z'_j$ by $1$, and, if it reaches $0$, we add $u$ to $\tau$. We then process the list $\tau$ exactly as before. It is immediate to verify that the total running time of the cleanup procedure, over the course of the entire phase, is bounded by $O(|E(Z_j)|)\leq O\left(\vol^+_G(\Lambda_j)\right )$.

\paragraph{Augmentations.}
As long as $s\in V(Z'_j)$, we perform iterative augmentations of the flow $f$. Recall that we ensure that, for every vertex $v\in V(Z'_j)\setminus\set{t}$, $\delta^+_{Z'_j}(v)\neq\emptyset$. Moreover, graph $Z'_j$ is a DAG with $j+1$ layers $L_0,\ldots,L_j$, where $s\in L_0$ and $t\in L_j$. In order to perform a single iteration, we simply perform a DFS in the graph $Z'_j$ starting from the vertex $s$, until we reach vertex $t$. From the structure of the graph $Z'_j$, it is immediate to verify that, after exploring $j$ edges, the DFS will compute an $s$-$t$ path $P$ in $Z'_j$, whose length is $j$. Since we ensure that $Z'_j\subseteq H$, $P\subseteq H$ holds, and, moreover, from our discussion, $P$ must be a shortest $s$-$t$ path in $H$. We augment the flow $f$ by sending one flow unit along the path $P$, update the residual flow network $H$, and, for every edge $e\in E(Z'_j)$ that is deleted from $H$ as the result of this update, we delete $e$ from $Z'_j$ as well. This completes the algorithm for performing a single augmentation. Note that the running time of the augmentation procedure is $O(j)\leq O(h)$. Note also that, for every edge $e=(u,v)\in E(P)$, if $e$ is a forward edge, then either $u\in \Lambda_j$ or $v=t$ holds, so Invariant \ref{inv: flow} continues to hold after the augmentation.

\paragraph{Phase termination.}
Phase $j$ terminates once $s$ no longer lies in the graph $Z'_j$. We claim that, at this time, $H$ contains no $s$-$t$ path of length at most $j$. Indeed, assume for contradiction that $H$ contains an $s$-$t$ path $P$, whose length is at most $j$. We partition the edges of $P$ into two types: an edge $e\in E(P)$ is of \emph{type 1} if $e$ lied in graph $H$ at the beginning of the current phase and it was never deleted from $H$ over the course of the phase; otherwise, $e$ is of \emph{type 2}. Note that, if $e=(x,y)$ is of type 2, then it was inserted into the graph $H$ over the course of the current phase. Consider any type-2 edge $e=(x,y)$. Since $e$ must have been inserted into $H$ at some time during the current phase, from  \Cref{obs: residual network change}, some augmenting path $P'$ that we used in the current phase contained the anti-parallel edge $(y,x)$, so $(y,x)\in E(Z_j)$ must hold. Therefore, there is some index $0\leq a<j$ with $y\in L_a$ and $x\in L_{a+1}$. To conclude, a type-2 edge $e=(x,y)$ must connect a vertex from a higher-indexed layer of $Z_j$ to a vertex from a lower-indexed layer. Since $s\in L_0$ and $t\in L_j$, and since every type-2 edge connects a pair of vertices of $Z_j$, while every type-1 edge lied in $H$ at the beginning of the current phase, the only way that the path $P$ has length at most $j$ is if $P$ has length exactly $j$ and it visits the vertices from layers $L_0,L_1,\ldots,L_j$ in this order. In particular, every edge $e\in E(P)$ must be of type 1 and it must lie in graph $H$ throughout the entire phase. But then all vertices of $P$ must lie in $Z'_j$ throughout the entire phase. This contradicts the fact that $s$ no longer lies in $Z'_j$ at the end of the phase. We conclude that, when the phase terminates, the length of the shortest $s$-$t$ path in $H$ is at least $j+1$.

\paragraph{Running time analysis.}
As discussed already, the total time spent in the $j$th phase on initialization and cleanup is bounded by $O\left(\vol^+_G(\Lambda_j)\right )$. The time spent on each augmentation is $O(j)\leq O(h)$, and each augmentation increases the value of the flow $f$ by $1$. Since the total number of phases in the algorithm is $h$, and since, for all $1\le j\leq h$, $\Lambda_j\subseteq \hat A$, we get that the running time of the entire algorithm is bounded by:

\[\sum_{j=1}^hO\left(\vol^+_G(\Lambda_j)\right )+O(h\cdot\val(f))\leq O\left(h \cdot \left(\vol^+_G(\hat{A}) +\val(f)\right)\right);\] 

here, $f$ refers to the flow $f$ at the end of the algorithm.
Since the length of every augmenting path is $O(h)$, it is immediate to verify that, at the end of the algorithm, $|E^f|\leq O(h\cdot \val(f))$ holds.

\subsection{Proof of \Cref{lem: oracle in bulk}}
\label{subsec: implementing subgraph oracle}

We first consider the case in which $\epsilon+\delta\geq 1$. In this case, we use a simple deterministic procedure to compute the response to each query in time $O(n^2)$, resulting in a total running time of $O(n^{3-\epsilon})\leq O(n^{2+\delta})$ for responding to all $q=n^{1-\epsilon}$ queries.
This procedure works as follows: to respond to a query $Z_i$, we scan the set of out-neighbors of each vertex $v \in V(G)$; we add to $Y^\ell_i$ all vertices $v$ with $|Z_i \cap N^+_G(v)|<n^{1-\delta}$, and we add to $E'_i$ all edges connecting these vertices to the vertices of $Z_i$; lastly, we set $Y^h_i=V(G) \setminus Y^\ell_i$. It is immediate to verify that this procedure correctly computes the response to each query in time $O(|E(G)|) \leq O(n^2)$ per query.
In the remainder of the proof we assume that $\epsilon+\delta<1$.

We use the following algorithm, that is a slightly simplified version  of the algorithm from Claim 2.13 of \cite{CT24}.

\begin{claim}[Restatement of Claim 2.13 in \cite{CT24}]
	\label{claim: heavy weight est}
	There is a randomized algorithm, whose input is a 
	directed $n$-vertex graph $G=(V,E)$ in the adjacency-list representation, a  subset $Z\subseteq V$ of its vertices, and two parameters $\gamma\geq 1$ and $\tau$ that is greater than a large enough constant. The algorithm computes a set $A\subseteq V$ of vertices, such that, with probability at least $\frac{1}{n^{10}\log^4\tau}$, both of the following hold:
	\begin{itemize}
		\item for every vertex $u\in A$, $|N^+_G(u)\cap Z|\geq \gamma$; and
		\item for every vertex $u'\in V\setminus A$, $|N^+_G(u')\cap Z|\leq 1000\gamma\log n\cdot\log \tau$.
	\end{itemize} 
	The running time of the algorithm is $O\left(\frac{n\cdot |Z|\cdot \log n\cdot \log \tau}{\gamma}\right )$.
\end{claim}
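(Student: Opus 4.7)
The plan is to compute $A$ via a simple random sampling procedure. I would sample a subset $Z' \subseteq Z$ by including each vertex $v \in Z$ independently with probability $p = C \log n \cdot \log \tau / \gamma$ for a sufficiently large constant $C$ (if $p \geq 1$ we simply take $Z'=Z$). Then, for every vertex $u \in V$, I would compute a counter $c_u = |N_G^+(u) \cap Z'|$, and output $A = \{u \in V : c_u \geq T\}$ for a threshold $T = \Theta(\log n \cdot \log \tau)$ chosen between the two concentration levels worked out below.

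The central observation is that $c_u$ is a sum of independent $\{0,1\}$ random variables with expectation $p \cdot |N_G^+(u) \cap Z|$. Applying the Chernoff bound from \Cref{lem: Chernoff}:
\begin{itemize}
\item if $|N_G^+(u)\cap Z|<\gamma$, then $\E[c_u] < C\log n\cdot \log\tau$, so the upper-tail Chernoff bound gives $c_u < T$ with probability at least $1 - 1/(n^{11}\log^{4}\tau)$;
\item if $|N_G^+(u)\cap Z|> 1000\gamma\log n\cdot \log\tau$, then $\E[c_u] > 1000C\log^{2} n\cdot \log^{2}\tau$, so the lower-tail Chernoff bound gives $c_u \geq T$ with the same probability.
\end{itemize}
Taking a union bound over the $n$ vertices yields the desired overall failure probability of at most $1/(n^{10}\log^{4}\tau)$, and guarantees both required properties of $A$ simultaneously. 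Vertices whose count lies between $\gamma$ and $1000\gamma\log n\cdot \log \tau$ can be placed on either side of the threshold without violating the claim.

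For the implementation, the naive approach of scanning $N_G^+(u)$ for each $u \in V$ is too expensive. Instead, I would maintain an array of counters indexed by $V$, and, for each sampled vertex $v \in Z'$, walk over the in-neighbor list $N_G^-(v)$ available in the adjacency-list representation, incrementing $c_u$ for every $u \in N_G^-(v)$. The expected running time of this step is
\[
  O\!\left(p \cdot \sum_{v \in Z}\deg_G^-(v)\right) \leq O(p \cdot n \cdot |Z|) = O\!\left(\frac{n\cdot |Z|\cdot \log n \cdot \log \tau}{\gamma}\right),
\]
since $G$ is simple and hence $\deg_G^-(v)\leq n$ for all $v$.

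The main obstacle, which I expect to be a minor one, is converting this expected running-time bound into a worst-case bound of the same form. I would handle it by aborting the algorithm (and outputting $A = V$) if the cumulative work exceeds a constant multiple of the target bound; by Markov's inequality this early-termination event happens with probability at most a small constant, which is absorbed by repeating the entire procedure $O(\log n \cdot \log \tau)$ times and taking the union of the outputs, or simply by noting that the failure event can be merged into the inverse-polynomial slack in the overall success probability.
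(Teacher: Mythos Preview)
The paper does not prove this claim itself; it just notes that it follows from Claim~2.13 in \cite{CT24} by specializing parameters. Your proposal supplies an explicit argument, and the sampling-plus-threshold scheme together with the Chernoff analysis is correct and is essentially the standard proof.

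The one place that does not work as written is the conversion from expected to worst-case running time. Your second suggestion---folding the abort event into the inverse-polynomial slack---fails because Markov only gives a constant abort probability, far too large to be absorbed into a $1/(n^{10}\log^4\tau)$ budget. Your first suggestion also fails as stated: you abort with output $A=V$ and then take a union over repetitions, but a single aborted run forces the union to equal $V$, violating the first guarantee. If you instead output $A=\emptyset$ on abort, the union argument does go through, but the $O(\log n\log\tau)$ repetitions then overshoot the stated running-time bound by exactly that factor. A clean fix that avoids all of this is to sample a \emph{deterministic} number $k=\Theta(|Z|\log n\log\tau/\gamma)$ of vertices from $Z$ with replacement and scan each sampled vertex's in-neighbor list: the running time becomes worst-case $O(kn)=O(n|Z|\log n\log\tau/\gamma)$, and the Chernoff analysis carries over unchanged.
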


\Cref{claim: heavy weight est} can be obtained from Claim 2.13 in \cite{CT24} by setting all edge capacities and the parameter $c^*$ to $0$, using the vertex set $V$ instead of the set $Z$ in the statement of Claim 2.13 and the set $Z$ instead of $Z'$. 

We are now ready to complete the proof of \Cref{lem: oracle in bulk}.
We start by computing, for all $1\leq i\leq q$, the partition $(Y^h_i,Y^{\ell}_i)$ of $V(G)$, as follows. Set $\gamma=\frac{n^{1-\delta}}{1000\log n}$. For every index $1\leq i\leq q$, we apply the algorithm from \Cref{claim: heavy weight est}  to graph $G$, the set $Z=Z_i$ of its vertices, and parameters $\tau$ and $\gamma$. Let $A_i\subseteq V(G)$ be the subset of vertices that the algorithm returns. We then set $Y_i^h=A_i$ and $Y_i^{\ell}=V(G)\setminus A_i$. Let $\event_i$ be the bad event that the algorithm from \Cref{claim: heavy weight est} erred; in other words, either there is a vertex $u\in A_i$ with $|N^+_G(u)\cap Z_i|< \gamma$; or there is a vertex $u'\in V(G)\setminus A_i$, $|N^+_G(u')\cap Z_i|> 1000\gamma\log n\cdot\log \tau$. From \Cref{claim: heavy weight est}, $\prob{\event_i}\le \frac{1}{n^{10}\log^4\tau}$. Note that, if the event $\event_i$ did not happen, then we are guaranteed that (i) for every vertex $v\in Y_i^{h}$, $|N^+_G(v)\cap Z_i|\geq \gamma=\frac{n^{1-\delta}}{1000\log n}$; and (ii) for every vertex $v'\in Y_i^{\ell}$, $|N^+_G(v')\cap Z_i|\leq 1000\gamma\log n\log \tau=n^{1-\delta}\cdot \log \tau$.
Recall that the running time of the algorithm from \Cref{claim: heavy weight est} is:

\[O\left(\frac{n\cdot |Z_i|\cdot \log n\cdot\log\tau}{\gamma}\right )\leq O\left(n^{1+\delta}\cdot \log^2n \cdot\log\tau \right ),
\]

so, the time required to execute this step for all $1\leq i\leq q$ is $O\left(n^{2+\delta}\cdot \log^2n\cdot\log\tau\right )$.
We let $\event$ be the bad event that $\event_i$ happened for any $1\leq i\leq q$. By using the Union Bound, and since $q\leq n$, we get that $\prob{\event}\leq \frac{1}{n^9\log^4\tau}$.

Next, we construct a tripartite undirected graph $H=(A\cup B\cup C,E')$, as follows. We let $A=\set{a_v\mid v\in V(G)}$, $B=\set{b_v\mid v\in V(G)}$, and $C=\set{c_1,\ldots,c_{q}}$. For every (directed) edge $e=(u,v)\in E(G)$, we add the (undirected) edge $(a_u,b_v)$ to $H$. Additionally, for every index $1\leq i\leq q$, for every vertex $v\in Z_i$, we add the edge $(b_v,c_i)$, and for every vertex $u\in Y^{\ell}_i$, we add the edge $(c_i,a_u)$ to $H$.

For a triple $\Pi=(a_v,b_u,c_i)$ of vertices of $H$ with $a_v\in A$, $b_u\in B$, and $c_i\in C$, we say that $\Pi$ is a \emph{good triple}, if $(v,u)\in E_G(Y^{\ell}_i,Z_i)$. Notice that, if Event $\event$ did not happen, then $|E_G(Z_i,Y^{\ell}_i)|\leq n^{2-\delta}\cdot\log\tau$ must hold. Therefore, if Event $\event$ did not happen, then the total number of good triples in $H$ is bounded by $n^{2-\delta}\cdot q\cdot\log\tau= n^{3-(\delta+\epsilon)}\cdot\log\tau$. Moreover, every good triple defines a triangle in $H$ and vice versa. Therefore, in order to compute, for every index $1\leq i\leq q$, the set $E_G(Y^{\ell}_i,Z_i)$ of edges, it is sufficient to compute all triangles in $H$. Like in \cite{CT24}, we use the following algorithm of 
\cite{bjorklund2014listing} in order to do so.

\begin{theorem}[Theorem 1 in \cite{bjorklund2014listing}]\label{thm: triangle counting}
	There is a deterministic algorithm that lists all triangles in an $n$-vertex graph in time:  
	
	\[O\left(n^{\omega+o(1)}+n^{3(\omega-1)/(5-\omega)+o(1)}\cdot t^{2(3-\omega)/(5-\omega)}\right ),\] 
	
	where $\omega$ is the matrix multiplication exponent, and $t$ is the total number of triangles in $G$.
\end{theorem}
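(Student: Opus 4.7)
The plan is to follow the approach of Björklund, Pagh, Williams, and Zwick, which combines degree-based vertex partitioning with rectangular matrix multiplication and an output-sensitive recursion. First, I would fix a degree threshold $\Delta$ (to be tuned later) and classify each vertex of the input graph $G$ as \emph{heavy} if its degree is at least $\Delta$ and \emph{light} otherwise. Since $\sum_v \deg(v) = 2|E(G)| \leq n^2$, the number of heavy vertices is at most $n^2/\Delta$. I would preprocess the edge set into a hash table supporting $O(1)$ adjacency queries.

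Next, I would handle every triangle that contains at least one light vertex by iterating over each light vertex $v$, enumerating every ordered pair of its neighbors $(u,w)$, and querying whether $(u,w) \in E(G)$; this lists each such triangle a bounded number of times and runs in total time $O(n \Delta^2)$. The remaining triangles have all three vertices heavy, so they lie in the induced subgraph $G[H]$ on the heavy set $H$ with $|H| = O(n^2/\Delta)$. To list these triangles efficiently in an output-sensitive manner, I would use the following recursive scheme: partition $H$ into balanced blocks and, for each triple of blocks $(B_i,B_j,B_k)$, use rectangular matrix multiplication on the corresponding block-adjacency submatrices to obtain, for each pair of vertices $(u,v) \in B_i \times B_j$ with $(u,v) \in E(G)$, the number of common neighbors in $B_k$. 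If this count is zero or small, the triangles through $(u,v,\cdot)$ with third vertex in $B_k$ are enumerated directly; if it is large, I would recurse on the block $B_k$ with finer sub-blocks, charging the additional work to the triangles that are ultimately listed.

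The last step is to tune the block sizes and the threshold $\Delta$ to balance the three cost sources: the $O(n\Delta^2)$ cost for light triangles, the $(n^2/\Delta)^{\omega+o(1)}$ cost of the top-level rectangular multiplication, and the total charged listing cost, which scales with $t$ raised to a power determined by the recursion depth and the rectangular-matrix-multiplication exponent. Writing the exponents symbolically and setting $\Delta = n^{\alpha} t^{\beta}$ for appropriate $\alpha,\beta$ depending on $\omega$, the optimum balances the dense cost $n^{\omega+o(1)}$ against the output-sensitive cost $n^{3(\omega-1)/(5-\omega)+o(1)} \cdot t^{2(3-\omega)/(5-\omega)}$, matching the claimed bound.

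The main obstacle will be the output-sensitive listing of heavy triangles. A single matrix multiplication only reveals the \emph{number} of common neighbors per edge, not their identities, so naively recovering triangles from a dense product can cost up to $\Omega(n)$ per triangle and destroy the bound. The resolution is the block-refinement recursion combined with the charging argument: any block-triple that produces few triangles is enumerated immediately with cost proportional to output, and any block-triple producing many triangles funds the additional matrix multiplication it triggers. Setting up the recursion so that neither the matrix-multiplication nor the enumeration cost dominates, and verifying that the final recurrence solves to the stated bound with only an $n^{o(1)}$ overhead from the recursion depth, is the crux of the argument.
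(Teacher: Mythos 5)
The paper does not actually prove this theorem: it is cited verbatim from Bj\"orklund, Pagh, Williams, and Zwick (ICALP~2014), so the only basis for comparison is the source paper, not anything in the present one. Measured against that source, your outline has two concrete problems.

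First, the heavy/light degree split you open with is machinery for BPWZ's \emph{edge-parameterized} triangle-listing bound, not for the $n$-vertex bound stated here, and in the $n$-vertex setting it is vacuous. Since $|H|\le n$ always, the estimate $|H|=O(n^2/\Delta)$ improves on $n$ only when $\Delta>n$, which forces $H=\emptyset$; and raising $\Delta$ to bring $(n^2/\Delta)^{\omega}$ down below $n^{\omega}$ makes the light-triangle cost $n\Delta^2$ overshoot $n^{\omega}$. The three quantities you propose to balance cannot all be brought to $n^{\omega}$ simultaneously, so the tuning step as described does not go through. If you take $\Delta\le n^{(\omega-1)/2}$ so that the light cost is controlled, then $n^2/\Delta\ge n$ and your estimate of the top-level multiplication cost is just an overestimate of $n^{\omega}$ -- the split buys you nothing.

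Second, the block-refinement recursion is correctly identified as the crux, but it is left so unspecified that nothing can be checked, and the natural instantiations fall short of the claimed exponents. Halving each block, charging at most $\min(8^{\ell},O(t))$ active triples at depth $\ell$, and paying $(n/2^{\ell})^{\omega}$ per triple gives $\tilde O\bigl(n^{\omega}\cdot t^{(3-\omega)/3}\bigr)$; stopping at a tuned depth and enumerating common neighbours groupwise with rectangular products gives $\tilde O\bigl(n^{3/(4-\omega)}\cdot t^{(3-\omega)/(4-\omega)}\bigr)$. Both agree with the theorem only in the extremal regime $t=\Theta(n^3)$ and are strictly weaker for smaller $t$. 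Reaching $n^{3(\omega-1)/(5-\omega)}\cdot t^{2(3-\omega)/(5-\omega)}$ requires a more careful coupling of the rectangular-multiplication shapes with the recursion granularity than ``partition, multiply, recurse, and charge'' suggests; this is exactly where BPWZ's proof does its work, and your plan does not say how to recover it.
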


By substituting $t\leq n^{3-(\delta+\epsilon)}\cdot\log\tau$, we get that, if Event $\event$ did not happen, then the running time of the algorithm from \Cref{thm: triangle counting} on graph $H$ is bounded by:

\[
\begin{split}
&O\left(\left(n^{\omega}+n^{3(\omega-1)/(5-\omega)}\cdot n^{2(3-\omega)\cdot (3-(\delta+\epsilon))/(5-\omega)}\right )\cdot n^{o(1)}\cdot \poly\log\tau\right )\\
&\quad\quad\quad\quad\quad\quad\quad\quad\quad\leq 
O\left (\left (n^{\omega}+n^{3-(\delta+\epsilon)\cdot 2(3-\omega)/(5-\omega)}\right )\cdot n^{o(1)}\cdot \poly\log\tau\right ).
\end{split}
\] 

Since we assumed that $\epsilon+\delta<1$, we get that: 

\[n^{3-(\delta+\epsilon)\cdot 2(3-\omega)/(5-\omega)} \geq n^{3-2(3-\omega)/(5-\omega)} = n^{(9-\omega)/(5-\omega)} = n^{\omega+(3-\omega)^2/(5-\omega)} \geq n^{\omega}.\]

Therefore, if Event $\event$ does not happen, then the running time of the algorithm from \Cref{thm: triangle counting} on graph $H$ is bounded by:

\[
 O\left (\left (n^{3-(\delta+\epsilon)\cdot 2(3-\omega)/(5-\omega)}\right )\cdot n^{o(1)}\cdot \poly\log\tau\right ).
\]

If the algorithm from \Cref{thm: triangle counting} does not terminate within this running time, then we terminate it and return, for all $1\le i\leq q$, an arbitrary partition $(Y^{\ell}_i,Y^h_i)$ of $V(G)$, together with the edge set $E'_i=\emptyset$; notice that in this case, event $\event$ must have happened.

Finally, recall that the running time of the first step, that performed $q\leq n$ applications of the algorithm from \Cref{claim: heavy weight est} is $O(n^{2+\delta}\cdot\log^2n\cdot\log\tau)$
, and that the time required to compute the graph $H$, given the graph $G$ and the sets $\set{Z_i,Y_i^{\ell}}_{1\leq i\leq q}$ of vertices is bounded by $O(n^2)$. Therefore, the total running time of the algorithm is 
$O\left ( \left(n^{3-(\delta+\epsilon)\cdot 2(3-\omega)/(5-\omega)} + n^{2+\delta}\right)\cdot n^{o(1)}\cdot \poly\log\tau\right )$, or $O\left (\left(n^{3-0.478(\delta+\epsilon)+o(1)} + n^{2+\delta+o(1)}\right)\cdot \poly\log\tau\right )$ using the current bound $\omega\leq 2.371552$~\cite{WXXZ24}.
\section{Proofs Omitted from \Cref{sec: alg: unweigthed}}
\label{appx: proofs unweighted}

\subsection{Proof of \Cref{obs: preprocessing successful}}
\label{subsec: proof of good terminal selection}
Recall that we denoted by $\event_1$ the good event that  $\frac{\lambda} 2\leq \vol^+_G(L)\leq \lambda$ holds, and we showed that $\prob{\event_1}\geq\frac{1}{4\log n}$. We also denoted by $\event_2$ the good event that $t^*\in R$, and showed that, if $k\leq \hat k$, then  $\prob{\event_2}\geq \frac{1}{4}$. Lastly, we denoted by $\event_3$ the good event that either (i) $b=0$ and $\vol^+(L)\geq \frac{m}{4}$; or (ii) $b=1$ and $\vol^+(L)< \frac{m}{4}$, and we  showed that $\prob{\event_3}=\half$. Since we assumed that  $\vol^+_G(L)<\frac{m}{4}$, if the good event $\event_3$ happens, then $b=1$ must hold.

Consider now a single trial, where we select a vertex of $V(G)$ at random, and the probability for selecting a vertex $v$ is $\frac{\deg^+_G(v)}{m}$. Note that the probability that the selected vertex $v$ lies in $L$ is precisely $\frac{\vol^+_G(L)}{m}$.

Let $\event_4$ be the good event that $T_0\cap L\neq \emptyset$. From the above discussion:

\[\begin{split}
\prob{\neg \event_4\mid \event_1\band \event_3}&=\left(1-\frac{\vol^+_G(L)}{m}\right )^r\\
&=\left(1-\frac{\vol^+_G(L)}{m}\right )^{\ceil{m/\lambda}}\\
&\leq \left(1-\frac{\vol^+_G(L)}{m}\right )^{m/(2\vol^+_G(L))}\\
&\le \frac{1}{\sqrt {e}},
\end{split}\]

since, if Event $\event_1$ happened, $\lambda\leq 2\vol^+_G(L)$.
Lastly, we let $\event_5$ be the good event that $T'\cap L=\emptyset$. As before:

\[\begin{split}
\prob{\event_5\mid \event_1\band\event_3}&=\left(1-\frac{\vol^+_G(L)}{m}\right )^r\\
&=\left(1-\frac{\vol^+_G(L)}{m}\right )^{\ceil{m/\lambda}}\\
&\geq \left(1-\frac{\vol^+_G(L)}{m}\right )^{2m/\vol^+_G(L)}\\
&\geq \frac{1}{e^{8}}.
\end{split}\]

We have used the fact that, if Event $\event_1$ happened, then $\lambda\geq \vol^+_G(L)$. We also used the assumption that $\vol^+_G(L)< \frac{m}{4}$, and that, for all $x\geq 2$, $\left(1-\frac{1}{x}\right )^{x-1}\geq \frac{1}{e^2}$ holds (see e.g. Claim 2.9 in \cite{CT24}).

Overall, we get that, if $\vol^+_G(L)< \frac{m}{4}$ and $k\leq \hat k$, then:

\[\begin{split}
\prob{\event}&=\prob{\event_1\band\event_2\band\event_3\band\event_4\band\event_5}\\
&\geq \prob{\event_1}\cdot \prob{\event_2}\cdot\prob{\event_3} \prob{\event_4\mid \event_1\band\event_3}\cdot \prob{\event_5\mid \event_1\band\event_3}\\
&\geq \frac{1}{4\log n}\cdot \frac{1}{4}\cdot \frac{1}{2}\cdot \left(1-\frac 1 {\sqrt{e}}\right )\cdot \frac{1}{e^{8}}\\
&\geq \frac{1}{2^{64}\cdot \log n}.
\end{split}\]

\subsection{Proof of \Cref{obs:submodularity}}
\label{subsec: proof of submodularity of cuts}

The proof follows standard arguments and is almost identical to the proof of Observation A.4 in \cite{CT24} for undirected graphs.
We consider the contribution of every vertex $v\in V(G)$ to the expressions: 

\begin{equation}\label{eq 1}
w(N^+_G(A))+w(N^+_G(B)) 
\end{equation}

and

\begin{equation}\label{eq: 2}
w(N^+_G(A \cup B)) + w(N^+_G(A \cap B)).
\end{equation}

Notice that a vertex $v \notin N^+_G(A) \cup N^+_G(B)$ contributes $0$
to expression \ref{eq 1}. We now show that it also contributes $0$ to expression \ref{eq: 2}. Indeed, if no in-neighbor of $v$ lies in $A\cup B$, then $v$ may not lie in either of the sets $N^+_G(A \cup B)$ or $N^+_G(A \cap B)$, so its contribution to expression \ref{eq: 2} is $0$. Assume now that some in-neighbor $u\in N^-_G(v)$ of $v$ lies in $A$, but no in-neighbor of $v$ lies in $B$. Since $v\not\in N^+_G(A)$, $v\in A$ must hold. But then $v\in  A\cup B$, and no in-neighbor of $v$ may lie in $A\cap B$. Therefore, $v\not\in N^+_G(A \cup B)$ and $v\not \in N^+_G(A \cap B)$ must hold. The case where some in-neighbor of $v$ lies in $B$ but no in-neighbor of $v$ lies in $A$ is symmetric. Finally, assume that some in-neighbor of $v$ lies in $A$, and the same holds for $B$. Since $v\not \in  N^+_G(A) \cup N^+_G(B)$, it must be the case that $v\in A$ and $v\in B$. But then $v\in A\cap B$, so it may not lie in either of the sets $N^+_G(A \cup B)$ or in $N^+_G(A \cap B)$. To conclude, if $v \notin N^+_G(A) \cup N^+_G(B)$, its contribution to both expressions is $0$.

Assume now that $v\in N^+_G(A)\cup N^+_G(B)$. 
Observe that, if $v \in N^+_G(A) \cap N^+_G(B)$, then it contributes $2w(v)$ to the first expression, and its contribution to the second expression  is at most $2w(v)$. 

Assume next that $v\in N^+_G(A) \setminus N^+_G(B)$. 
Then $v$ contributes $w(v)$ to expression \ref{eq 1}. We show that in this case, either $v\not \in N^+_G(A\cap B)$, or $v\not \in N^+_G(A\cup B)$ must hold, so $v$ contributes at most $w(v)$ to expression \ref{eq: 2}. In order to do so, we consider two cases. The first case happens if  $v\not\in B$. In this case, no vertex of $N^-_G(v)$ may lie in $B$, so $v\not \in N^+_G(A\cap B)$. Assume now that the second case happens, so $v\in B$. In this case, $v\not\in N^+_G(A\cup B)$ must hold, since, from the definition, $N^+_G(A\cup B)$ does not contain vertices of $A\cup B$.

The only remaining case is where  $v \in N^+_G(B) \setminus N^+_G(A)$, and it follows a symmetric argument.

\section{Proofs Omitted from \Cref{sec: weighted non-dense}}
\label{sec: appx weighted non-dense}

\subsection{Proof of \Cref{claim: compute pairs2}}
\label{appx: subsec: compute pairs 2}

	The algorithm starts by selecting two integers $i\in \set{1,\ldots,\ceil{\log(n^{\eps})}}$ and $j\in \set{1,\ldots,\log(\wmax')}$  independently uniformly at random. It then sets $\lambda=2^i$ and $\tau=2^j$. Additionally, it performs $\ceil{\frac{100n\log n}{\lambda}}$ independent trials, as follows. 
	
	In every trial, a low-degree vertex $x\in \Vld$ is selected uniformly at random. 
	Let $A_x=V(G)\setminus \left (\set{x}\cup N^+_G(x)\right )$. Then a vertex $y_x\in A_x$
	is selected at random, where the probability of a vertex $v\in A_x$ to be selected is $\frac{w(v)}{w(A_x)}$. The resulting pair $(x,y_x)$ is then added to $\Gamma$; notice that $x\neq y_x$ and $(x,y_x)\not\in E(G)$ must hold. At the end, the algorithm returns $\lambda$, $\tau$ and $\Gamma$. 
	This completes the description of the algorithm. We now discuss its efficient implementation, and then analyze its success probability.

	\paragraph{Efficient implementation.}
	In order to execute the algorithm  efficiently, the main challenge is the efficient implementation of each trial: given a vertex $x\in V(G)$, we need to compute a random vertex $y_x\in A_x$, where the probability of a vertex $v\in A_x$ to be selected is $\frac{w(v)}{w(A_x)}$. In order to execute each such trial efficiently, we start with the following preprocessing step.
	
	We compute the value $\hat W=\sum_{v\in V(G)}w(v)$, and, for convenience, we denote $V(G)=\set{v_1,\ldots,v_n}$.
	We refer to a contiguous collection of integers from $\set{1,\ldots,n}$ as an \emph{interval}. We compute a hierarchical partition of the set $\set{1,\ldots,n}$ into collections of intervals. Let $k=\ceil{2\log n}$. For $0\leq k'\leq k$, we define a collection $\iset_{k'}$ of disjoint intervals, called \emph{level-$k'$ intervals}, that partition $\set{1,\ldots,n}$, as follows. We let the collection $\iset^0$ of level-$0$ intervals contain a single interval $\set{1,\ldots,n}$. Assume now that, for some $0\leq k'<k$, we computed the collection $\iset^{k'}$ of level-$k'$ intervals. For each level-$k'$ interval $I\in \iset^{k'}$, we create two level-$(k'+1)$ intervals $I',I''$, where $I'$ contains the first $\floor{\frac{|I|}{2}}$ elements of $I$ and $I''$ contains the remaining elements. If $|I|\leq 1$, then we let $I'=\emptyset$ and $I''=I$. It is immediate to verify that every level-$k$ interval contains at most one integer. For every level $0\leq k'\leq k$, for every level-$k'$ interval $I$, we compute the value $W(I)=\sum_{i\in I}w(v_i)$. Note that for all $0< k'\leq k$, we can compute the collection $\iset^{k'}$ of level-$k'$ intervals, given the collection $\iset^{k'-1}$ of level-$(k'-1)$ intervals, and the values $W(I)$ for every interval $I\in \iset^{k'}$, in time $O(n)$. Therefore, the preprocessing step can be implemented in time $O(n\log n)$. 
	
	Consider now a single trial of the algorithm, where a low-degree vertex $x\in V(G)$ is selected. Since $x$ is a low-degree vertex, $|N^+_G(x)|\leq d\cdot n^{\eps}$.
	We compute the value $W_x=\hat W-\sum_{v\in N^+_G(x)\cup\set{x}}w(v)$. For every level $0\leq k'\leq k$ and level-$k'$ interval $I\in \iset_{k'}$, if $I\cap \left ( N^+_G(x)\cup\set{x}\right )\neq\emptyset$, then we update the value $W(I)$ to ensure that it is equal to $\sum_{\stackrel{i\in I:}{v_i\in A_x}}w(v_i)$. In order to do so, we consider every vertex $v_j\in N^+_G(x)\cup\set{x}$ one by one. For each such vertex we identify, for every level $0\leq k'\leq k$, the unique level-$k'$ interval $I\in \iset_{k'}$ that contains the integer $j$, in time $O(\log n)$, by traversing the hierarchy of intervals from level $0$ to higher levels. For each such interval $I$, we decrease $W(I)$ by $w(v_j)$. In order to select a vertex $y_x$ from $A_x$, we select a value $0<\rho\leq W_x$ uniformly at random. We then let $i$ be the largest index for which $\sum_{\stackrel{v_j\in A_x:}{j<i}}w(v_j)<\rho$, and set $y_x=v_i$. It is easy to verify that, for every vertex $v_j\in A_x$, the probability to be selected as $y_x$ is exactly $\frac{w(v_j)}{W_x}=\frac{w(v_j)}{W(A_x)}$. Once the value $\rho$ is selected, the 
	corresponding vertex $y_x$ can be computed in time $O(\log n)$ in a straightforward manner, by traversing the hierarchy of intervals from lower to higher levels and exploiting the values $W(I)$ for all such intervals $I$ that we have computed. Recall that the preprocessing step takes time $\tilde O(n)$, and every trial can be implemented in time $\tilde O(dn^{\eps})$, since the selected vertex $x$ must be a low-degree vertex. Since the total number of trials is bounded by $\tilde O(n)$, the total running time of the algorithm is $\tilde O(n^{1+\eps}d)\leq \tilde O(mn^{\eps})$.

	\paragraph{Probability of success.}
	Next, we assume that there is a global minimum vertex-cut $(L,S,R)$ for which Properties \ref{prop: small L}--\ref{prop: small volume of L} hold. Let $\event_1$ be the good event that $\frac{\lambda}{2}\leq |L|\leq \lambda$ and $\tau=\tau^*$. It is easy to verify that the probability that the algorithm selects the value $\lambda$ with 
	$\frac{\lambda}{2}\leq |L|\leq \lambda$ is at least $\frac{1}{\ceil{2\log n}}$, and the probability that $\tau=\tau^*$ is selected is at least $\frac{1}{\log(\wmax')}$. Therefore, $\prob{\event_1}\geq \frac{1}{4\log n\cdot \log(\wmax')}$. 
	
	Next, we let $\event_2$ be the good event that in some trial, a vertex $x\in L$ was selected. Note that, since all vertices in $L$ are low-degree, if Event $\event_1$ happened, then the probability that in a single trial a vertex of $L$ is selected is at least $\frac{|L|}{n}\geq \frac{\lambda}{2n}$. Therefore:

	\[\prob{\neg \event_2\mid \event_1}\leq \left(1-\frac{\lambda}{2n}\right )^{\ceil{(100n\log n)/\lambda}}\leq e^{-25\log n},\]
	
	and $\prob{\event_2\mid \event_1}\geq \half$.
	
	Lastly, assume that both events $\event_1$ and $\event_2$ happened, and let $x$ be a vertex that was selected by any of the trials that lies in $L$. We define the good event $\event_3$ that the corresponding vertex $y_x$ lies in $R$. 
	
	Recall that vertex $y_x$ is selected from among the vertices of $A_x=V(G)\setminus \left(\set{x}\cup N^+_G(x)\right )$, where 
	the probability of a vertex $v\in A_x$ to be selected is $\frac{w(v)}{\sum_{u\in A_x}w(u)}$. 
	We partition the set $A_x$ into three subsets: set $A^{L}_x=A_x\cap L$; set $A^S_x=A_x\cap S$; and set $A^R_x=A_x\cap R$. Note that, since $x\in L$, $A^R_x=R\setminus\left(\set{x}\cup N^+_G(x)\right )=R$.
	
	Clearly, $w(A^L_x)\leq w(L)$.
	Moreover, since $A^S_x=S\setminus N^+_G(x)$, from \Cref{claim: neighbors in S}, $w(A^S_x)\leq w(L)$, and, from our assumption, $w(R)\geq w(L)$. Therefore, $w(A^R_x)=w(R)\geq w(L)\geq \frac{w(A_x)}{3}$, and the probability that $y_x\in R$ is at least $\frac{1}{3}$. We conclude that $\prob{\event_3\mid\event_1\band \event_2}\geq \frac{1}{3}$. 
	
	Overall, if all of the events $\event_1,\event_2$ and $\event_3$ happen, then the algorithm is successful. Therefore, the probability that the algorithm is successful with respect to a global minimum vertex-cut $(L,S,R)$ for which Properties \ref{prop: small L}--\ref{prop: small volume of L} hold is:
	
	\[\prob{\event_1\band\event_2\band\event_3}\geq \prob{\event_1}\cdot \prob{\event_2\mid \event_1}\cdot \prob{\event_3\mid \event\band\event_2}\geq \frac{1}{32\log n\cdot \log(\wmax)}. \]	

\subsection{Proof of \Cref{obs: min cut in residual}}
\label{sec: proof of min cut in residual}



Let $E'=\set{(v^{\inn},v^{\out})\mid v\in S}$. Since we assumed that Event $\event''$ did not happen, our algorithm never inserted an edge connecting a vertex of $X$ to $t$ into $G''$, and so $E_{G''}(X,Y)=E'$ holds at the end of the algorithm. Moreover $(X,Y)$ is the minimum $s$-$t$ edge-cut in $G''$.
Then, it follows from \Cref{fact : residual-capacity-of-edge-cut} that $(X,Y)$ is the minimum $s$-$t$ edge-cut in the residual network $\tilde H$.

\section{Proofs Omitted from \Cref{sec: weighted dense}}

\subsection{Proof of \Cref{cl: selecting pairs}}
\label{subsec: proof of select pairs 2}
We use Algorithm \algterm from \Cref{claim: compute terminals} to compute, in time $O(n)$, a value $\lambda$ and a collection  $T$ of at most $\ceil{\frac{100n\log n}{\lambda}}$ vertices of $G$. 

Next, we process every vertex $x\in T$ one by one. For each such vertex $x$, we compute the set of vertices $Z_x = V(G) \setminus \left(\set{x} \cup N_G^+(x)\right)$, and we select a vertex $y_x\in Z_x$ at random, where the probability to select each vertex $v \in Z_x$ is $\frac{w(v)}{w(Z_x)}$. The resulting pair $(x,y_x)$ is then added to $\Gamma$.
This completes the description of the algorithm.
It is immediate to verify that $|\Gamma|\leq \ceil{\frac{100n\log n}{\lambda}}$ as required. Note that in order to construct the set $\Gamma$, we perform $\tilde O(n/\lambda)\leq \tilde O(n)$ iterations, and each iteration can be executed in time $\tilde{O}(n)$. The total running time of the algorithm is then bounded by 
$\tilde O(n^2)$.

Next, we consider a fixed global minimum vertex-cut $(L,S,R)$.
Let $\event_1$ be the good event that Algorithm \algterm is successful with respect to $(L,S,R)$, and recall that, from \Cref{claim: compute terminals},
$\prob{\event_1}\geq \frac{1}{4\log n}$. Assume now that Event $\event_1$ happened
and let $x$ be an arbitrary vertex in $T\cap L$. Let $\event_2$ be the good event that the corresponding vertex $y_x$ lies in $R$.

Let $S'=S\setminus N^+_G(x)$. Then from \Cref{claim: neighbors in S},  $w(S')\leq w(L)$ must hold.
We conclude that $w(Z_x)\leq w(L)+w(R)+w(S')\leq 2w(L)+w(R)\leq 3w(R)$, since we always assume by default that, in a global minimum vertex-cut $(L,S,R)$, $w(L)\leq w(R)$ holds. Therefore, the probability that the algorithm chooses $y_x\in R$ is $\frac{w(R)}{w(Z)}\geq \frac{1}{3}$. We conclude that $\prob{\event_2\mid\event_1}\geq \frac{1}{3}$. 

Overall, if both of the events $\event_1$ and $\event_2$ happen, then Algorithm $\algpairs'$ is successful. Therefore, the probability that $\algpairs'$ is successful is at least:

\[\prob{\event_1\band\event_2}\geq \prob{\event_1}\cdot \prob{\event_2\mid \event_1}\geq \frac{1}{12\log n}. \]

\subsection{Proof of \Cref{claim: bound time of oracle}}
\label{subsec: runtime bound}
Recall that  $|\Gamma| \leq \ceil{\frac{100n\log n}{\lambda}} \leq \frac{200n\log n}{\lambda}$. Let $\epsilon = 1-\frac{\log \left(\frac{200n\log n}{\lambda}\right)}{\log n}$ so that $n^{1-\epsilon} = \frac{200n\log n}{\lambda}$. Notice that in every phase, the number of queries in the input to \Cref{lem: oracle in bulk} is bounded by $|\Gamma|\leq n^{1-\eps}$. Therefore, the running time of the algorithm from \Cref{lem: oracle in bulk} in a single phase is bounded by:

\[
\begin{split}
&O\left (\left(n^{3-(\delta+\epsilon)\cdot 2(3-\omega)/(5-\omega)+o(1)} + n^{2+\delta+o(1)}\right)\cdot \poly\log \tau\right )\\
&\quad\quad\quad\quad\leq  O\left (\left(n^{3-(\delta+\epsilon)\cdot 2(3-\omega)/(5-\omega)} + n^{2+\delta}\right)\cdot n^{o(1)}\cdot \poly\log W\right ).
\end{split}
\]

Recall that the parameters $\delta$ and $\epsilon$ were selected so that $n^{1-\delta} = 30000\lambda\gamma \cdot \log n$ and $n^{1-\epsilon} = \frac{200n\log n}{\lambda}$. Therefore:

\[n^{2-\delta - \epsilon} =cn\gamma\log^2n
\]

for some constant $c$, and so:

\[n^{\delta+\eps}=\frac{n}{c\gamma\log^2n}.\]

Therefore, we can bound:

\begin{equation}\label{eq: run time 1}
n^{2+\delta}\leq n^{2+\delta+\eps}\leq \tilde O\left (\frac{n^3}{\gamma}\right ).
\end{equation}

Additionally,
since $\gamma=\Theta\left(n^{(6-2\omega)/(11-3\omega)}\right)$,
it is easy to verify that $\left(\frac{n}{\gamma}\right)^{2(3-\omega)/(5-\omega)} = \Theta(\gamma)$ must hold. 
We can now lower-bound:

\[n^{(\delta+\epsilon)\cdot 2(3-\omega)/(5-\omega)}
= \left(\frac{n}{c\gamma\log^2n}\right )^{ 2(3-\omega)/(5-\omega)}\geq \frac{\gamma}{\poly\log n}.
\]

The total running time of the algorithm from \Cref{lem: oracle in bulk} over all $z\leq O(\log n\log W)$ phases can now be bounded by:

\[
\begin{split}
&O\left (\left(n^{3-(\delta+\epsilon)\cdot 2(3-\omega)/(5-\omega)} + n^{2+\delta}\right)\cdot z\cdot n^{o(1)}\cdot (\log W)^{O(1)}\right )\\
&\quad\quad\quad \leq O\left (\frac{n^{3+o(1)}\cdot (\log W)^{O(1)}}{\gamma}\right )\\
&\quad\quad\quad \leq O\left(n^{2+(5-\omega)/(11-3\omega)+o(1)}\cdot (\log W)^{O(1)}\right ), 
\end{split}
\]

since
 $\gamma = \Theta\left(n^{(6-2\omega)/(11-3\omega)}\right) = \Theta\left(n^{1-(5-\omega)/(11-3\omega)}\right)$.

\subsection{Proof of \Cref{claim: from quasi to promising}}
\label{subsec: appx: quasi to promising}
	Assume that the distinguished cut is good, and	let $A\subseteq V(\hat G)$ be a vertex set that is quasi-promising for $\hat G$.
	Recall that graph $\hat G$ is identical to graph $G$, except for the weights of its vertices. Therefore, $A$ is also a subset of vertices of $G$. From Property \ref{prop : Q1} of a quasi-promising set, $\vol^+_G(A)=\vol^+_{\hat G}(A) \leq \frac{100n^2}{\gamma}$, establishing Property \ref{prop : P'1} of a promising vertex set for $A$ in $G$. 
	
	Consider now the vertex set  $\Tilde{L} \subseteq A$  that is guaranteed to exist by the definition of a quasi-promising vertex set and the tripartition $(\Tilde{L},\Tilde{S},\Tilde{R})$ of $V(\hat G)$ induced by $\tilde L$. Clearly, $\tilde L\subseteq V(G)$, and $(\Tilde{L},\Tilde{S},\Tilde{R})$  is also a tripartition of $V(G)$ induced by $\tilde L$ in $G$. Moreover, since, from Property \ref{prop : Q4} of a quasi-promising vertex set, $(\Tilde{L},\Tilde{S},\Tilde{R})$ is an $x$-$y$ vertex-cut in $\hat G$, it must also be a valid $x$-$y$ vertex-cut in $G$, establishing Property \ref{prop : P'4} of a promising vertex set for $A$ in $G$.
	
	It now remains to establish Properties \ref{prop : P'2} and \ref{prop : P'3} from the definition of a promising vertex set for $A$ in $G$. 
	The key idea in showing both these properties is to observe that $w(\tilde S)-w(S)=w(\tilde S\setminus S)-w(S\setminus \tilde S)$, and, similarly, 
	$\hat w(\tilde S)-\hat w(S)=\hat w(\tilde S\setminus S)-\hat w(S\setminus \tilde S)$. Property \ref{prop : Q2} of a quasi-promising vertex set allows us to bound the latter expression by $6\lambda\cdot \gamma$. We will then use the following observation, that 
	bounds $w(\Tilde{S} \setminus S)$ in terms of $\hat{w}(\Tilde{S} \setminus S)$, together with an appropriate lower bound on $w(S\setminus \tilde S)$ in terms of $\hat{w}(S \setminus \tilde S)$, in order to bound $w(\tilde S)-w(S)$ in a way that allows us to establish each of these properties. We start with the following simple observation that bounds $w(\Tilde{S} \setminus S)$ in terms of $\hat{w}(\Tilde{S} \setminus S)$.

\begin{observation}\label{obs : tilde-S-contains-few-vertices-outside-of-S}
	$w(\Tilde{S} \setminus S)\leq M'\cdot \hat{w}(\Tilde{S} \setminus S)+2M'\cdot \lambda\gamma $.
\end{observation}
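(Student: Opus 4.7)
\textbf{Proof plan for \Cref{obs : tilde-S-contains-few-vertices-outside-of-S}.}
The plan is to split the bound into two parts: a per-vertex inequality comparing $w(v)$ to $\hat w(v)$, summed over $\tilde S \setminus S$, and a cardinality bound on $\tilde S \setminus S$.

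First, I would recall the definition of $\hat w$, which immediately yields \Cref{eq: bound on new weight}, namely $\hat w(v) \geq \frac{w(v)}{M'} - 2$ for every $v \in V(G)$ (for vertices with $w(v) < 2M'$, one has $\hat w(v) = 0$ and the inequality still holds trivially since $w(v) < 2M'$). Rearranging this gives the per-vertex bound $w(v) \leq M' \cdot \hat w(v) + 2M'$. Summing over all $v \in \tilde S \setminus S$ then yields
\[
w(\tilde S \setminus S) \;\leq\; M' \cdot \hat w(\tilde S \setminus S) \;+\; 2M' \cdot |\tilde S \setminus S|.
\]

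To conclude, it suffices to show $|\tilde S \setminus S| \leq \lambda\gamma$, and in fact the stronger bound $|\tilde S \setminus S| \leq \lambda$ will be used. This is where I rely on the structural properties of a quasi-promising vertex set, specifically Property \ref{prop : Q3} ($\tilde L \subseteq L$) and the fact that $(\tilde L, \tilde S, \tilde R)$ is the tripartition of $V(\hat G) = V(G)$ induced by $\tilde L$ (so $\tilde S = N^+_G(\tilde L)$). Since $\tilde L \subseteq L$ and $(L,S,R)$ is a valid $x$-$y$ vertex-cut in $G$, any vertex of $N^+_G(\tilde L)$ lies in $L \cup S$; excluding vertices of $\tilde L$ itself, we obtain $\tilde S \subseteq (L \setminus \tilde L) \cup S$, and hence $\tilde S \setminus S \subseteq L \setminus \tilde L \subseteq L$. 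Finally, since the distinguished cut $(L,S,R)$ is good, $|L| \leq \lambda$, and since $\gamma \geq 1$, we conclude $|\tilde S \setminus S| \leq \lambda \leq \lambda\gamma$. Combining this with the displayed inequality above completes the proof.

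I do not expect any significant obstacle here: the content of the observation is essentially a bookkeeping argument combining the definition of $\hat w$ with the containment $\tilde L \subseteq L$ guaranteed by the quasi-promising definition. The only point requiring mild care is the case $w(v) < 2M'$ in the per-vertex inequality, which is handled trivially.
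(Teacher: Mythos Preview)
Your proposal is correct and follows essentially the same approach as the paper: both arguments combine the per-vertex bound $w(v) \leq M'\cdot\hat w(v) + 2M'$ from \Cref{eq: bound on new weight} with the containment $\tilde S \setminus S \subseteq L$ (derived from $\tilde L \subseteq L$ via Property~\ref{prop : Q3}) and the bound $|L| \leq \lambda \leq \lambda\gamma$. The only cosmetic difference is the order of presentation --- the paper proves the cardinality bound first and then applies the per-vertex inequality, while you do the reverse.
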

\begin{proof}
	We start by showing that $|\Tilde{S} \setminus S| \leq \lambda\cdot \gamma$ must hold. Indeed, 
	from the definition of a tri-partition induced by $\tilde L$, we get that $\Tilde{S} = N^+_{\hat{G}}(\Tilde{L}) = N^+_{G}(\Tilde{L})$. 
	Since, from Property \ref{prop : Q3} of a quasi-promising vertex set, $\Tilde{L} \subseteq L$, we get that $\tilde S=N^+_{G}(\Tilde{L})\subseteq L\cup N^+_G(L)= L\cup S$. But then $\tilde S\setminus S\subseteq L$, and, since we assumed that the distinguished cut is good, $|L|\leq \lambda$ must hold. We conclude that $|\Tilde{S} \setminus S| \leq \lambda\leq \lambda\cdot \gamma$ since $\gamma\geq 1$.
	
	Notice that, from Inequality \ref{eq: bound on new weight}, for every vertex $v\in V(G)$, $w(v)\leq M'\cdot \hat w(v)+ 2M'$. We conclude that:
	\[w(\Tilde{S} \setminus S)\leq M'\cdot \hat{w}(\Tilde{S} \setminus S)+ |\tilde S\setminus S|\cdot 2M'\leq  M'\cdot \hat{w}(\Tilde{S} \setminus S)+2\lambda\gamma M'.\]
\end{proof}

We are now ready to establish Property \ref{prop : P'2} of 	a promising vertex set for $A$ in $G$, namely that $w(\Tilde{S}) \leq w(S) + 8\lambda\gamma M'$. Equivalently, it is enough to prove that $w(\tilde S)-w(S)\leq 8\lambda\gamma M'$. Note that we can equivalently write 
$w(\tilde S)-w(S)=w(\tilde S\setminus S)-w(S\setminus \tilde S)$, and that, from \Cref{obs : tilde-S-contains-few-vertices-outside-of-S}, 
$w(\Tilde{S} \setminus S)\leq M'\cdot \hat{w}(\Tilde{S} \setminus S)+2M'\cdot \lambda\gamma $.
Moreover, from Inequality \ref{eq: bound on new weight}, for every vertex $v\in V(G)$, $w(v)\geq M'\cdot \hat w(v)$, and so $w(S\setminus \tilde S)\geq M'\cdot \hat w(S\setminus \tilde S)$. Therefore, altogether, we get that:

\[
\begin{split}
w(\tilde S)-w(S)&=w(\tilde S\setminus S)-w(S\setminus \tilde S)	\\
&\leq 2M'\cdot \lambda\gamma + M'\cdot \hat{w}(\Tilde{S} \setminus S)-M'\cdot \hat w(S\setminus \tilde S)\\
&=2M'\cdot \lambda\gamma+M'\cdot \left(\hat w(\tilde S)-\hat w( S)\right )\\
&\leq 8M'\cdot \lambda \gamma,
\end{split}
\]

where the last inequality follows from Property \ref{prop : Q2} of a quasi-promising vertex set. This establishes Property \ref{prop : P'2} of 	the promising vertex set for $A$ in $G$.

Lastly, we establish Property \ref{prop : P'3} of a promising vertex set for $A$ in $G$, namely, that set $(L\cup S)\setminus (\Tilde{L} \cup \Tilde{S})$ contains at most $12\lambda\gamma$ vertices $v$ with $w(v)\geq M'$.
Assume for contradiction that this is not the case, so 
$(L\cup S)\setminus (\Tilde{L} \cup \Tilde{S})$ contains more than $12\lambda \gamma$ vertices $v$ with $w(v)\geq M'$. Since we have assumed that the distinguished cut is good, $|L|\leq \lambda$ must hold, and so there must be more than $12\lambda \gamma-\lambda\geq 11\lambda \gamma$ vertices $v\in S\setminus \tilde S$ with $w(v)\geq M'$.

Recall that, for every vertex $v\in V(G)$, if $w(v)<2M'$ then  $\hat w(v)=0$, and otherwise, $\hat w(v)=\floor{\frac{w(v)-M'}{M'}}$. Therefore, in particular, if  $w(v)\geq M'$, then $M'\cdot \hat w(v)\leq w(v)-M'$. Since there are more than $11\lambda \gamma$ vertices $v\in S\setminus \tilde S$ with $w(v)\geq M'$, we get that: 

\[w(S\setminus \tilde S)\geq M'\cdot \hat w(S\setminus \tilde S)+ 11\lambda\gamma M'.\]

On the other hand, from \Cref{obs : tilde-S-contains-few-vertices-outside-of-S}:

\[w(\Tilde{S} \setminus S)\leq M'\cdot \hat{w}(\Tilde{S} \setminus S)+2M'\cdot \lambda\gamma.\]

By combining these inequalities, we get that:

\[\begin{split}
w(\tilde S)-w(S)&=w(\Tilde{S} \setminus S)-w(S\setminus \tilde S)
\\&\leq M'\cdot \left(\hat w(\tilde S\setminus S)-\hat w( S\setminus \tilde S)\right )-9M'\cdot \lambda \gamma\\
&=M'\cdot \left(\hat w(\tilde S)-\hat w( S)\right )-9M'\cdot \lambda \gamma\\
&<0,
\end{split} \]

where the last inequality follows from Property \ref{prop : Q2} of a quasi-promising vertex set. Altogether, we get that $w(\tilde S)<w(S)$ must hold. However, since $(\tilde L,\tilde S,\tilde R)$ is a valid $x$-$y$ vertex-cut in $G$, this contradicts the fact that $(L,S,R)$ is a minimum $x$-$y$ vertex-cut in $G$. We conclude that Property \ref{prop : P'3} of a promising vertex holds for $A$ in graph $G$.

\newpage
\bibliographystyle{alpha}
\bibliography{global-min-cut-p3}

@string{STOC = "Proceedings of the Symposium on Theory of Computing (STOC)"}

@string{FOCS = "Proceedings of the Symposium on Foundations of Computer Science (FOCS)"}

@string{SODA = "Proceedings of the ACM-SIAM Symposium on Discrete Algorithms (SODA)"}

@string{ICALP = "Proceedings of International Colloquium on Automata, Languages and Programming (ICALP)"}

@article{edmonds1972theoretical,
	author    = {Jack Edmonds and Richard M. Karp},
	title     = {Theoretical Improvements in Algorithmic Efficiency for Network Flow Problems},
	journal   = {Journal of the ACM},
	volume    = {19},
	number    = {2},
	pages     = {248--264},
	year      = {1972},
	doi       = {10.1145/321694.321699}
}

@article{hopcroft1973n,
	author    = {John E. Hopcroft and Richard M. Karp},
	title     = {An \(n^{5/2}\) Algorithm for Maximum Matchings in Bipartite Graphs},
	journal   = {SIAM Journal on Computing},
	volume    = {2},
	number    = {4},
	pages     = {225--231},
	year      = {1973},
	doi       = {10.1137/0202019}
}

@article{dinic1970algorithm,
	author    = {E. A. Dinic},
	title     = {Algorithm for Solution of a Problem of Maximum Flow in Networks with Power Estimation},
	journal   = {Soviet Mathematics Doklady},
	volume    = {11},
	pages     = {1277--1280},
	year      = {1970},
	note      = {Originally published in Russian in Doklady Akademii Nauk SSSR, 194:4 (1970), pp. 754--757}
}

@inproceedings{CT24,
	title={Breaking the ${O}(mn)$-Time Barrier for Vertex-Weighted Global Minimum Cut},
	author={Chuzhoy, Julia and Trabelsi, Ohad},
	booktitle={Proceedings of the 57th Annual ACM Symposium on Theory of Computing},
	pages={144--155},
	year={2025}
}

@inproceedings{bjorklund2014listing,
	title={Listing triangles},
	author={Bj{\"o}rklund, Andreas and Pagh, Rasmus and Williams, Virginia Vassilevska and Zwick, Uri},
	booktitle={International Colloquium on Automata, Languages, and Programming},
	pages={223--234},
	year={2014},
	organization={Springer}
}

@article{Gabow95,
  title={A matroid approach to finding edge connectivity and packing arborescences},
  author={Gabow, Harold N.},
  journal={J. Comput. Syst. Sci.},
  volume={50},
  number={2},
  pages={259--273},
  year={1995}
}

@INPROCEEDINGS{CKLP22,
  author={Chen, Li and Kyng, Rasmus and Liu, Yang P. and Peng, Richard and Gutenberg, Maximilian Probst and Sachdeva, Sushant},
  booktitle={2022 IEEE 63rd Annual Symposium on Foundations of Computer Science (FOCS)}, 
  title={Maximum Flow and Minimum-Cost Flow in Almost-Linear Time}, 
  year={2022},
  pages={612-623},
  doi={10.1109/FOCS54457.2022.00064}
}

@inproceedings{LP20,
  author    = {Jason Li and
               Debmalya Panigrahi},
  title     = {Deterministic Min-cut in Poly-logarithmic Max-flows},
  booktitle = {61st {IEEE} Annual Symposium on Foundations of Computer Science, {FOCS} 2020},
  pages     = {85--92},
  year      = {2020},
  doi       = {10.1109/FOCS46700.2020.00017}
}

@inproceedings{AKT21_stoc,
  title={Subcubic algorithms for {Gomory-Hu} tree in unweighted graphs},
  author={Abboud, Amir and Krauthgamer, Robert and Trabelsi, Ohad},
  booktitle={Proceedings of the 53rd Annual ACM SIGACT Symposium on Theory of Computing},
  pages={1725--1737},
  year={2021},
  doi = {10.1145/3406325.3451073},
}

@book{dubhashi2009concentration,
  author       = {Devdatt P. Dubhashi and
                  Alessandro Panconesi},
  title        = {Concentration of Measure for the Analysis of Randomized Algorithms},
  publisher    = {Cambridge University Press},
  year         = {2009},
  url          = {http://www.cambridge.org/gb/knowledge/isbn/item2327542/},
  isbn         = {978-0-521-88427-3}
}

@inproceedings{LNP21,
  author       = {Jason Li and
                  Danupon Nanongkai and
                  Debmalya Panigrahi and
                  Thatchaphol Saranurak and
                  Sorrachai Yingchareonthawornchai},
  title        = {Vertex connectivity in poly-logarithmic max-flows},
  booktitle    = {{STOC} '21: 53rd Annual {ACM} {SIGACT} Symposium on Theory of Computing},
  pages        = {317--329},
  year         = {2021},
  doi          = {10.1145/3406325.3451088}
}

@inproceedings{BCP23,
  author       = {Jan van den Brand and
                  Li Chen and
                  Richard Peng and
                  Rasmus Kyng and
                  Yang P. Liu and
                  Maximilian Probst Gutenberg and
                  Sushant Sachdeva and
                  Aaron Sidford},
  title        = {A Deterministic Almost-Linear Time Algorithm for Minimum-Cost Flow},
  booktitle    = {64th {IEEE} Annual Symposium on Foundations of Computer Science, {FOCS} 2023},
  pages        = {503--514},
  year         = {2023},
  doi          = {10.1109/FOCS57990.2023.00037}
}

@inproceedings{WXXZ24,
  author       = {Virginia Vassilevska Williams and
                  Yinzhan Xu and
                  Zixuan Xu and
                  Renfei Zhou},
  editor       = {David P. Woodruff},
  title        = {New Bounds for Matrix Multiplication: from Alpha to Omega},
  booktitle    = {Proceedings of the 2024 {ACM-SIAM} Symposium on Discrete Algorithms, {SODA} 2024},
  pages        = {3792--3835},
  year         = {2024},
  doi          = {10.1137/1.9781611977912.134}
}

@article{KP15,
  title={Flow Rounding},
  author={Kang, Donggu and Payor, James},
  journal={CoRR},
  year={2015},
  url={https://arxiv.org/pdf/1507.08139},
}

@article{ST83,
  author       = {Daniel Dominic Sleator and
                  Robert Endre Tarjan},
  title        = {A Data Structure for Dynamic Trees},
  journal      = {J. Comput. Syst. Sci.},
  volume       = {26},
  number       = {3},
  pages        = {362--391},
  year         = {1983},
  doi          = {10.1016/0022-0000(83)90006-5}
}

@inproceedings{CHILP17,
  author       = {Shiri Chechik and
                  Thomas Dueholm Hansen and
                  Giuseppe F. Italiano and
                  Veronika Loitzenbauer and
                  Nikos Parotsidis},
  title        = {Faster Algorithms for Computing Maximal 2-Connected Subgraphs in Sparse Directed Graphs},
  booktitle    = {Proceedings of the Twenty-Eighth Annual {ACM-SIAM} Symposium on Discrete Algorithms, {SODA} 2017},
  pages        = {1900--1918},
  year         = {2017},
  doi          = {10.1137/1.9781611974782.124}
}

@inproceedings{FNY20,
  author       = {Sebastian Forster and
                  Danupon Nanongkai and
                  Liu Yang and
                  Thatchaphol Saranurak and
                  Sorrachai Yingchareonthawornchai},
  editor       = {Shuchi Chawla},
  title        = {Computing and Testing Small Connectivity in Near-Linear Time and Queries
                  via Fast Local Cut Algorithms},
  booktitle    = {Proceedings of the 2020 {ACM-SIAM} Symposium on Discrete Algorithms, {SODA} 2020},
  pages        = {2046--2065},
  year         = {2020},
  doi          = {10.1137/1.9781611975994.126},
}

@inproceedings{CLN21,
  author       = {Ruoxu Cen and
                  Jason Li and
                  Danupon Nanongkai and
                  Debmalya Panigrahi and
                  Thatchaphol Saranurak and
                  Kent Quanrud},
  title        = {Minimum Cuts in Directed Graphs via Partial Sparsification},
  booktitle    = {62nd {IEEE} Annual Symposium on Foundations of Computer Science, {FOCS} 2021},
  pages        = {1147--1158},
  year         = {2021},
  doi          = {10.1109/FOCS52979.2021.00113},
}

@inproceedings{CQ21,
  author       = {Chandra Chekuri and
                  Kent Quanrud},
  editor       = {Nikhil Bansal and
                  Emanuela Merelli and
                  James Worrell},
  title        = {Faster Algorithms for Rooted Connectivity in Directed Graphs},
  booktitle    = {48th International Colloquium on Automata, Languages, and Programming, {ICALP} 2021},
  volume       = {198},
  pages        = {49:1--49:16},
  year         = {2021},
  doi          = {10.4230/LIPICS.ICALP.2021.49},
}

@article{GT88,
  author       = {Andrew V. Goldberg and
                  Robert Endre Tarjan},
  title        = {A new approach to the maximum-flow problem},
  journal      = {J. {ACM}},
  volume       = {35},
  number       = {4},
  pages        = {921--940},
  year         = {1988},
  doi          = {10.1145/48014.61051}
}

@inproceedings{NSY19,
  author       = {Danupon Nanongkai and
                  Thatchaphol Saranurak and
                  Sorrachai Yingchareonthawornchai},
  title        = {Breaking quadratic time for small vertex connectivity and an approximation
                  scheme},
  booktitle    = {Proceedings of the 51st Annual {ACM} {SIGACT} Symposium on Theory of Computing, {STOC} 2019},
  pages        = {241--252},
  year         = {2019},
  doi          = {10.1145/3313276.3316394}
}

@article{HRG00,
  author       = {Monika Rauch Henzinger and
                  Satish Rao and
                  Harold N. Gabow},
  title        = {Computing Vertex Connectivity: New Bounds from Old Techniques},
  journal      = {J. Algorithms},
  volume       = {34},
  number       = {2},
  pages        = {222--250},
  year         = {2000},
  doi          = {10.1006/JAGM.1999.1055},
  note         = {Announced at {FOCS} 1996.}
}

@article{HO94,
  author       = {Jianxiu Hao and
                  James B. Orlin},
  title        = {A Faster Algorithm for Finding the Minimum Cut in a Directed Graph},
  journal      = {J. Algorithms},
  volume       = {17},
  number       = {3},
  pages        = {424--446},
  year         = {1994},
  doi          = {10.1006/JAGM.1994.1043},
  note         = {Announced at {SODA} 1992.}
}

@inproceedings{KS93,
  author       = {David R. Karger and
                  Clifford Stein},
  editor       = {S. Rao Kosaraju and
                  David S. Johnson and
                  Alok Aggarwal},
  title        = {An ${{\tilde{O}}}(n^2)$ algorithm for minimum cuts},
  booktitle    = {Proceedings of the Twenty-Fifth Annual {ACM} Symposium on Theory of Computing, {STOC} 1993},
  pages        = {757--765},
  year         = {1993},
  doi          = {10.1145/167088.167281}
}

@article{Kar00,
  author       = {David R. Karger},
  title        = {Minimum cuts in near-linear time},
  journal      = {J. {ACM}},
  volume       = {47},
  number       = {1},
  pages        = {46--76},
  year         = {2000},
  doi          = {10.1145/331605.331608},
  note         = {Announced at {STOC} 1996.}
}

@ARTICLE{Kle69,
  author		=	{Kleitman, Daniel J.},
  journal	=	{IEEE Transactions on Circuit Theory}, 
  title		=	{Methods for Investigating Connectivity of Large Graphs}, 
  year		=	{1969},
  volume		=	{16},
  number		=	{2},
  pages		=	{232-233},
  doi		=	{10.1109/TCT.1969.1082941},
  }

@misc{Pod73,
author		=	{Podderyugin, V. D.},
title		=	{An algorithm for finding the edge connectity of graphs},
journal		=	{Voprosy Kibernetiki},
year			=	{1973},
volume		=	{2},
pages		=	{136},
URL			=	{https://cir.nii.ac.jp/crid/1371694371159352077},
}

@article{ET75,
  author       = {Shimon Even and
                  Robert Endre Tarjan},
  title        = {Network Flow and Testing Graph Connectivity},
  journal      = {{SIAM} J. Comput.},
  volume       = {4},
  number       = {4},
  pages        = {507--518},
  year         = {1975},
  doi          = {10.1137/0204043}
}

@article{LLW88,
  author       = {Nathan Linial and
                  L{\'{a}}szl{\'{o}} Lov{\'{a}}sz and
                  Avi Wigderson},
  title        = {Rubber bands, convex embeddings and graph connectivity},
  journal      = {Combinatorica},
  volume       = {8},
  number       = {1},
  pages        = {91--102},
  year         = {1988},
  doi          = {10.1007/BF02122557}
}

@article{Nash61,
    author 	= {Nash-Williams, C. St.J. A.},
    title 	= {Edge-Disjoint Spanning Trees of Finite Graphs},
    journal 	= {Journal of the London Mathematical Society},
    volume 	= {s1-36},
    number 	= {1},
    pages 	= {445-450},
    year 	= {1961},
    month 	= {01},
    issn 	= {0024-6107},
    doi 		= {10.1112/jlms/s1-36.1.445},
}

@article{Tut61,
    author 	= {Tutte, W. T.},
    title 	= {On the Problem of Decomposing a Graph into n Connected Factors},
    journal 	= {Journal of the London Mathematical Society},
    volume 	= {s1-36},
    number 	= {1},
    pages 	= {221-230},
    year 	= {1961},
    month 	= {01},
    issn 	= {0024-6107},
    doi 		= {10.1112/jlms/s1-36.1.221},
}

@inproceedings{HLRW24,
	author       = {Monika Henzinger and
	Jason Li and
	Satish Rao and
	Di Wang},
	title        = {Deterministic Near-Linear Time Minimum Cut in Weighted Graphs},
	booktitle    = {Proceedings of the 2024 {ACM-SIAM} Symposium on Discrete Algorithms, {SODA} 2024},
	pages        = {3089--3139},
	year         = {2024},
	doi          = {10.1137/1.9781611977912.111}
}

@inproceedings{Li21,
  author       = {Jason Li},
  editor       = {Samir Khuller and
                  Virginia Vassilevska Williams},
  title        = {Deterministic mincut in almost-linear time},
  booktitle    = {Proceedings of the 53rd Annual {ACM} {SIGACT} Symposium on Theory of Computing, {STOC} 2021},
  pages        = {384--395},
  year         = {2021},
  doi          = {10.1145/3406325.3451114},
}

@article{KawT19,
  author       = {Ken{-}ichi Kawarabayashi and
                  Mikkel Thorup},
  title        = {Deterministic Edge Connectivity in Near-Linear Time},
  journal      = {J. {ACM}},
  volume       = {66},
  number       = {1},
  pages        = {4:1--4:50},
  year         = {2019},
  doi          = {10.1145/3274663}
}

@inproceedings{HRW17,
  author       = {Monika Henzinger and
                  Satish Rao and
                  Di Wang},
  editor       = {Philip N. Klein},
  title        = {Local Flow Partitioning for Faster Edge Connectivity},
  booktitle    = {Proceedings of the Twenty-Eighth Annual {ACM-SIAM} Symposium on Discrete Algorithms, {SODA} 2017},
  pages        = {1919--1938},
  year         = {2017},
  doi          = {10.1137/1.9781611974782.125}
}

@article{MS89,
  author       = {Yishay Mansour and
                  Baruch Schieber},
  title        = {Finding the Edge Connectivity of Directed Graphs},
  journal      = {J. Algorithms},
  volume       = {10},
  number       = {1},
  pages        = {76--85},
  year         = {1989},
  doi          = {10.1016/0196-6774(89)90024-2}
}

@inproceedings{JNS25,
	title={Deterministic Vertex Connectivity via Common-Neighborhood Clustering and Pseudorandomness},
	author={Jiang, Yonggang and Nalam, Chaitanya and Saranurak, Thatchaphol and Yingchareonthawornchai, Sorrachai},
	booktitle={Proceedings of the 57th Annual ACM Symposium on Theory of Computing},
	pages={2293--2304},
	year={2025}
}

@article{Sch79,
  author       = {Claus{-}Peter Schnorr},
  title        = {Bottlenecks and Edge Connectivity in Unsymmetrical Networks},
  journal      = {{SIAM} J. Comput.},
  volume       = {8},
  number       = {2},
  pages        = {265--274},
  year         = {1979},
  doi          = {10.1137/0208019}
}

@inproceedings{Tra25,
	title={(Almost) Ruling Out SETH Lower Bounds for All-Pairs Max-Flow},
	author={Trabeisi, Ohad},
	booktitle={Proceedings of the 2025 Annual ACM-SIAM Symposium on Discrete Algorithms (SODA)},
	pages={2132--2156},
	year={2025},
	organization={SIAM}
}

@inproceedings{ASY25,
	title={Deterministic Edge Connectivity and Max Flow using Subquadratic Cut Queries},
	author={Anand, Aditya and Saranurak, Thatchaphol and Wang, Yunfan},
	booktitle={Proceedings of the 2025 Annual ACM-SIAM Symposium on Discrete Algorithms (SODA)},
	pages={124--142},
	year={2025},
	organization={SIAM}
}

@inproceedings{SW19,
	title={Expander decomposition and pruning: Faster, stronger, and simpler},
	author={Saranurak, Thatchaphol and Wang, Di},
	booktitle={Proceedings of the Thirtieth Annual ACM-SIAM Symposium on Discrete Algorithms},
	pages={2616--2635},
	year={2019},
	organization={SIAM}
}

@inproceedings{OZ14,
	title={Flow-based algorithms for local graph clustering},
	author={Orecchia, Lorenzo and Zhu, Zeyuan Allen},
	booktitle={Proceedings of the twenty-fifth annual ACM-SIAM symposium on Discrete algorithms},
	pages={1267--1286},
	year={2014},
	organization={SIAM}
}

@inproceedings{NSWN17,
	title={Dynamic minimum spanning forest with subpolynomial worst-case update time},
	author={Nanongkai, Danupon and Saranurak, Thatchaphol and Wulff-Nilsen, Christian},
	booktitle={2017 IEEE 58th Annual Symposium on Foundations of Computer Science (FOCS)},
	pages={950--961},
	year={2017},
	organization={IEEE}
}

@article{BBR25,
	title={Faster Algorithm for Second (s, t)-mincut and Breaking Quadratic barrier for Dual Edge Sensitivity for (s, t)-mincut},
	author={Baswana, Surender and Bhanja, Koustav and Roy, Anupam},
	journal={arXiv preprint arXiv:2507.01366},
	year={2025}
}

@manuscript{FJMY,
	title={Directed and Undirected Vertex Connectivity Problems are Equivalent for Dense Graphs},
	author={Fischer, Olivier and Jiang, Yonggang and Mukhopadhyay, Sagnik and Yingchareonthawornchai, Sorrachai},
	howpublished={SOSA 2026, to appear},
	note={A preliminary version of the paper can be found at https://arxiv.org/abs/2508.20305}
}

\end{document}